\documentclass[a4paper,11pt]{article}
\usepackage[margin=1in]{geometry}
\usepackage[utf8]{inputenc}

\usepackage{lineno}

\usepackage{soul}
\usepackage{fontaxes}
\usepackage{trimspaces}
\usepackage{nccfoots}
\usepackage{setspace}
\usepackage{inconsolata}
\usepackage{libertine}

\usepackage{booktabs}
\usepackage{tabularx}
\usepackage{multirow}
\usepackage{makecell}
\usepackage{hhline}

\usepackage[T1]{fontenc}
\usepackage{amsmath, amssymb}
\usepackage{xargs}
\usepackage{mathtools}
\usepackage{amsthm}
\usepackage{stmaryrd}
\usepackage[sort&compress,numbers,square,comma,longnamesfirst]{natbib}
\usepackage{graphicx}
\usepackage[hidelinks]{hyperref}
\usepackage{xcolor}
\usepackage{comment}
\usepackage[disable]{todonotes}
\usepackage[linesnumbered,lined,commentsnumbered,noend,boxed]{algorithm2e}
\usepackage{enumitem}
\usepackage{framed}
\usepackage[noabbrev,capitalise]{cleveref}
\usepackage[framemethod=TikZ]{mdframed}
\usepackage{tikz}
\usetikzlibrary{positioning,decorations.markings,decorations,decorations.pathmorphing,arrows,arrows.meta,calligraphy}
\usetikzlibrary{shapes.geometric}
\usetikzlibrary{backgrounds}
\usetikzlibrary{ipe}
\usepackage{pgfplots}
\usepgflibrary{decorations.pathreplacing}
\usepackage{xspace}
\usepackage{subcaption}
\usepackage{thm-restate}

\usepackage[libertine]{newtxmath}

\numberwithin{equation}{section}
\numberwithin{figure}{section}

\makeatletter
\newtheoremstyle{linkedrestate}
  {}%
  {}%
  {\itshape}%
  {}%
  {\bfseries}%
  {.%
  \ifthmt@thisistheone{%
  \@ifundefined{r@\thisthm}{}{\textnormal{ 
  [\hyperlink{restated:\thisthm}{See
   page~\pageref{\thisthm}.}]}}%
  }
  \fi
}%
  { }%
  {%
    \ifthmt@thisistheone%
    \hyperlink{restated:\thisthm}{\thmname{#1} %
    \thmnumber{#2}\unskip}%
    \thmnote{ (#3)}%
    \myhypertarget{orig:\thisthm}{}%
    \else%
    \hyperlink{orig:\thisthm}{\thmname{#1} %
    \thmnumber{#2}\unskip}%
    \thmnote{ (#3)}%
    \myhypertarget{restated:\thisthm}{}%
    \fi
    }%

\NewDocumentCommand{\NewLinkedTheorem}{s m o m o}{
    \let\current@thmstyle\thmt@outerstyle %
    \IfBooleanTF{#1}{
        \newtheorem*{#2}{#4}
        \theoremstyle{linkedrestate}
        \newtheorem*{linked-#2}{#4}
        \ExpandArgs{e}\theoremstyle{\current@thmstyle} %
        }{
        \IfNoValueTF{#3}{
            \IfNoValueTF{#5}{
                \newtheorem{#2}{#4}
                }{
                \newtheorem{#2}{#4}[#5]
                }
            }{
            \newtheorem{#2}[#3]{#4}
            }
        \theoremstyle{linkedrestate}
        \newtheorem{linked-#2}[#2]{#4}
        \ExpandArgs{e}\theoremstyle{\current@thmstyle} %
        }
    }
\makeatother

\NewCommandCopy{\oldrestatable}{\restatable}
\NewCommandCopy{\endoldrestatable}{\endrestatable}

\NewDocumentEnvironment{linkedrestatable}{O{} m m +b}{%
    \def\thisthm{#3}%
    \begin{oldrestatable}[#1]{linked-#2}{#3}%
    #4
    \end{oldrestatable}%
	\ignorespacesafterend%
    \AddToHook{cmd/#3/before}{\def\thisthm{#3}}%
    }{}

\renewenvironment{restatable}{%
    \begin{linkedrestatable}%
}{%
    \end{linkedrestatable}%
    \ignorespacesafterend%
}

\NewLinkedTheorem{theorem}{Theorem}[section]
\NewLinkedTheorem{definition}[theorem]{Definition}
\NewLinkedTheorem{lemma}[theorem]{Lemma}
\NewLinkedTheorem{corollary}[theorem]{Corollary}
\NewLinkedTheorem{claim}[theorem]{Claim}
\NewLinkedTheorem{proposition}[theorem]{Proposition}
\NewLinkedTheorem{observation}[theorem]{Observation}
\NewLinkedTheorem{conjecture}[theorem]{Conjecture}
\NewLinkedTheorem{fact}[theorem]{Fact}
\NewLinkedTheorem{remark}[theorem]{Remark}
\NewLinkedTheorem{problem}{Open Problem}
\NewLinkedTheorem{reduction}[theorem]{Reduction Rule}
\NewLinkedTheorem{property}[theorem]{Property}
\NewLinkedTheorem{hypothesis}[theorem]{Hypothesis}

\crefname{linked-theorem}{Theorem}{Theorems}
\Crefname{linked-theorem}{Theorem}{Theorems}
\crefname{lemma-theorem}{Lemma}{Lemmas}
\Crefname{lemma-theorem}{Lemma}{Lemmas}

\makeatletter 
\newcommand{\myhypertarget}[1]{\Hy@raisedlink{\hypertarget{#1}{}}} 
\makeatother

\newenvironment{claimproof}[1][\unskip]
  {\begin{proof}[Proof of Claim #1.]}
  {\end{proof}}
\newcommand{\claimqedhere}{\qedhere}

\crefname{obs}{Observation}{Observations}
\Crefname{obs}{Observation}{Observations}
\crefname{fact}{Fact}{Facts}
\Crefname{fact}{Fact}{Facts}
\crefname{problem}{Problem}{Problems}
\Crefname{problem}{Problem}{Problems}
\crefname{conjecture}{Conjecture}{Conjectures}
\Crefname{conjecture}{Conjecture}{Conjectures}
\crefname{claim}{Claim}{Claims}
\Crefname{claim}{Claim}{Claims}

\newcommand\blfootnote[1]{%
	\begingroup
	\renewcommand\thefootnote{}\footnote{#1}%
	\addtocounter{footnote}{-1}%
	\endgroup
}

\newcommand{\Oh}{\mathcal{O}}

\newcommand{\Ii}{\mathcal{I}}
\newcommand{\Cc}{\mathcal{C}}
\newcommand{\Pp}{\mathcal{P}}

\newcommand{\Tt}{\mathcal{T}}

\newcommand{\Ll}{\mathcal{L}}
\newcommand{\nat}{\mathbb{N}}
\newcommand{\real}{\mathbb{R}_{+}}

\DeclareMathOperator{\poly}{poly}

\newcommandx{\unsure}[2][1=]{\todo[linecolor=green,backgroundcolor=green!25,bordercolor=green,#1]{\normalsize #2}}
\newcommandx{\improvement}[2][1=]{\todo[inline,linecolor=blue,backgroundcolor=blue!05,bordercolor=blue,#1]{\normalsize #2}}
\newcommandx{\info}[2][1=]{\todo[linecolor=yellow,backgroundcolor=yellow!25,bordercolor=yellow,#1]{#2}}
\newcommandx{\floatmodel}[2][1=]{\todo[inline,linecolor=red,backgroundcolor=yellow!25,bordercolor=yellow,#1]{#2}}
\newcommandx{\thiswillnotshow}[2][1=]{\todo[disable,#1]{#2}}

\newcommand{\ifab}[1]{\todo[inline,color=blue!40]{\textbf{Fabian:} 
#1}}

\newcommand{\SigTwoP}{\ensuremath{\Sigma_2^\mathsf{P}}\xspace}

\newcommand{\NP}{\ensuremath{\textsf{NP}}\xspace} %
\newcommand{\sharpP}{\ensuremath{\textsf{\#P}}\xspace} %

\newcommand{\yes}{\textsf{Yes}\xspace}
\newcommand{\no}{\textsf{No}\xspace}

\newcommand{\OurTriangle}{\triangle}
\newcommand{\OurEdge}{K_2}
\newcommand{\OurSquare}{\square}
\newcommand{\OurqClique}{K_q}

\renewcommand{\OurTriangle}{\text{Triangle}}
\renewcommand{\OurEdge}{\text{Edge}}
\renewcommand{\OurSquare}{\text{Square}}
\renewcommand{\OurqClique}{q\text{-Clique}}

\newcommand{\OurCycle}{\text{Cycle}}

\newcommand{\Pack}[1]{\textsc{\ensuremath{#1}-Packing}\xspace}
\newcommand{\EdgePack}{\Pack{\OurEdge}}
\newcommand{\TriPack}{\Pack{\OurTriangle}}

\newcommand{\Hit}[1]{\textsc{\ensuremath{#1}-Hitting}\xspace}
\newcommand{\EdgeHit}{\Hit{\OurEdge}}
\newcommand{\TriHit}{\Hit{\OurTriangle}}

\newcommand{\UndelHitPack}[1]{\textsc{\ensuremath{#1}-HitPack}\xspace}
\newcommand{\EdgeUndelHitPack}{\UndelHitPack{\OurEdge}}
\newcommand{\TriUndelHitPack}{\UndelHitPack{\OurTriangle}}
\newcommand{\SquareUndelHitPack}{\UndelHitPack{\OurSquare}}
\newcommand{\qCliqueUndelHitPack}{\UndelHitPack{\OurqClique}}
\newcommand{\CycleUndelHitPack}{\UndelHitPack{\OurCycle}}

\newcommand{\TriHitPart}{\textsc{\ensuremath{\OurTriangle}-HitPart}\xspace}

\newcommand{\QthreeDNFtwo}{\textsc{Q3DNF\ensuremath{{}_2}}\xspace}
\newcommand{\QthreeCNFtwo}{\textsc{Q3CNF\ensuremath{{}_2}}\xspace}
\newcommand{\textscup}[1]{\textnormal{\textsc{#1}}}

\newcommand{\SUS}{\textscup{SUS}\xspace}
\newcommand{\ThreeCNFSUS}{\textscup{3CNF-SUS}\xspace}

\DeclarePairedDelimiter{\abs}{\lvert}{\rvert}

\newcommand{\deff}{\coloneqq}
\newcommand{\from}{\colon}

\newcommand{\dimond}{\diamondsuit}
\newcommand{\nopack}{\bot}%
\newcommand{\seopack}{\bot\hspace{-0.629em}\raisebox{.21em}{\ensuremath{\looparrowright}}}%
\renewcommand{\seopack}{{\curlyveeuparrow}}

\def\emptyset{\varnothing}
\def\epsilon{\varepsilon}
\def\phi{\varphi} %

\def\lneg{\overline} %
\def\true{\mathsf{true}}
\def\false{\mathsf{false}}

\def\tilde{\widetilde}
\def\hat{\widehat}
\def\bar{\overline}

\newcommand{\Codomain}{\numbZ{n}} %
\newcommand{\CodomCyc}{\numbZ{2\tw+2} \cup \{\bot,\seopack\}} %
\newcommand{\TIntro}{\mathsf{Intro}} %
\newcommand{\TIntroV}{\mathsf{IntroVtx}} %
\newcommand{\TIntroE}{\mathsf{IntroEdge}} %
\newcommand{\TForget}{\mathsf{Forget}} %
\newcommand{\TJoin}{\mathsf{Join}} %
\newcommand{\dltd}{\mathsf{del}} %
\newcommand{\notdltd}{\mathsf{not{\text-}del}} %
\newcommand{\IntroVtxNodeDel}    {\ensuremath{\TIntroV_\dltd}\xspace}
\newcommand{\IntroVtxNodeNotDel} {\ensuremath{\TIntroV_\notdltd}\xspace}
\newcommand{\IntroEdgeNodeDel}   {\ensuremath{\TIntroE_\dltd}\xspace}
\newcommand{\IntroEdgeNodeNotDel}{\ensuremath{\TIntroE_\notdltd}\xspace}
\newcommand{\ForgetNodeDel}      {\ensuremath{\TForget_\dltd}\xspace}
\newcommand{\ForgetNodeNotDel}   {\ensuremath{\TForget_\notdltd}\xspace}
\newcommand{\JoinNode}           {\ensuremath{\TJoin}\xspace}

\newcommand{\List}{\mathcal L}%
\newcommand{\clq}{K_q} %

\newcommand{\Types}{\mathcal T}
\DeclareMathOperator{\prtn}{\mathsf{part}}
\newcommand{\extend}[3]{\textsf{extend}(#1,#2\mapsto #3)}
\newcommand{\remove}[2]{\textsf{remove}(#1,#2)}
\newcommand{\combine}[2]{#1 \oplus #2}
\newcommand{\reduce}{\mathsf{reduce}}
\newcommand{\packs}{\mathcal P} %
\newcommand{\SelGadget}{\mathsf{Sel}} %
\newcommand{\LitGadget}{\mathsf{Lit}} %

\newcommand{\AuxGadget}{\ensuremath{\mathsf{TriCyc}}}
\newcommand{\UniGadget}{\ensuremath{\mathsf{UniVar}}}

\newcommand{\Ptrue}{P_\mathsf{true}}
\newcommand{\Pfalse}{P_\mathsf{false}}

\newcommand{\pos}[1]{[#1]} %
\newcommand{\numb}[1]{[#1]} %
\newcommand{\numbZ}[2][0]{[#1..#2]} %

\newcommand{\KPack}{\nu_{\clq}} %
\newcommand{\CyComp}{\nu^\circ} %

\newcommand{\tw}{\textup{\textsf{tw}}} %
\newcommand{\pw}{\textup{\textsf{pw}}} %

\newcommand{\bin}[1]{\textup{bin}(#1)}

\newcommand{\dd}{\Upsilon}
\newcommand{\avail}{\mathbf{avail}}
\newcommand{\hit}{\mathbf{hit}}
\newcommand{\branch}{\textnormal{\textsf{Get-Candidates}}\xspace}
\newcommand{\good}{+}
\newcommand{\bad}{-}
\newcommand{\threedist}{\operatorname{dist}_{\ge 3}}

\newcommand{\lca}{\textnormal{\textsf{lca}}}

\newcommand{\degree}{\Oh(|F|^3+k|F|^2)}
\newcommand{\dist}{\textnormal{\textsf{dist}}}

\newcommand{\clit}{\mathsf{CLit}}
\newcommand{\csel}{\mathsf{CSel}}
\newcommand{\CID}{\mathsf{ClauseID}}

\graphicspath{{img/}}
\newcommand{\executeiffilenewer}[3]{%
\ifnum\pdfstrcmp{\pdffilemoddate{#1}}%
{\pdffilemoddate{#2}}>0%
{\immediate\write18{#3}}\fi%
}

\title{Hitting Meets Packing: How Hard Can it Be?}

\usepackage{authblk}
\author[1]{Jacob Focke}
\author[1]{Fabian Frei}
\author[1]{Shaohua Li}
\author[1]{D\'aniel Marx}
\author[1]{\\Philipp Schepper}
\author[2]{Roohani Sharma}
\author[2,3]{Karol W\k{e}grzycki}
\affil[1]{CISPA Helmholtz-Center for Information Security}
\affil[2]{Max Planck Institute for Informatics, SIC}
\affil[3]{Saarland University}

\date{}

\begin{document}

\maketitle

\begin{abstract}
We study a general family of problems
that form a common generalization of classic hitting
(also referred to as covering or transversal)
and packing problems.
An instance of \UndelHitPack{\mathcal{X}} asks:
Can removing $k$ (deletable) vertices of a graph $G$
prevent us from packing $\ell$ vertex-disjoint objects of type $\mathcal{X}$?
This problem captures a spectrum of problems
with standard hitting and packing on opposite ends.
Our main motivating question is
whether the combination \UndelHitPack{\mathcal X}
can be significantly harder than these two base problems.
Already for one particular choice of $\mathcal X$,
this question can be posed for many different complexity notions,
leading to a large, so-far unexplored domain
at the intersection of the areas of hitting
and packing problems.

At a high level, we present two case studies:
(1) $\mathcal{X}$ being all cycles,
and (2) $\mathcal{X}$ being all copies of a fixed graph $H$.
In each, we explore the classical complexity
as well as the parameterized complexity
with the natural parameters $k+\ell$ and treewidth.
We observe that the combined problem can be drastically harder
than the base problems:
for cycles or for $H$ being a connected graph on at least 3 vertices,
the problem is \SigTwoP-complete and requires double-exponential dependence
on the treewidth of the graph (assuming the Exponential-Time Hypothesis).
In contrast, the combined problem admits qualitatively similar running times
as the base problems in some cases,
although significant novel ideas are required.
For $\mathcal{X}$ being all cycles,
we establish a $2^{\poly(k+\ell)}\cdot n^{\Oh(1)}$ algorithm
using an involved branching method, for example.
Also, for $\mathcal X$ being all edges
(i.e., $H = K_2$;
this combines \textsc{Vertex Cover} and \textsc{Maximum Matching})
the problem can be solved in time
$2^{\poly(\tw)}\cdot n^{\Oh(1)}$ on graphs of treewidth $\tw$.
The key step enabling this running time relies on a combinatorial bound
obtained from an algebraic (linear delta-matroid)
representation of possible matchings.
\end{abstract}

\section{Introduction}\label{sec:intro}\blfootnote{A table of
contents is provided on page~\pageref{toc} at the \hyperref[toc]{end
of this work}.}%
In the combinatorial optimization literature, many algorithmic
problems can be classified into one of two dual classes: either as a
packing problem or as a hitting problem.
In \emph{packing problems}, the goal is to find a large pairwise
independent collection of objects of certain type.
For example, one of the most-studied
problems,
\textsc{Maximum Matching}, can be described as finding a pairwise
vertex-disjoint collection of at least $\ell$ edges.
Network flow problems require finding a large collection of
edge-disjoint paths from $s$ to $t$.
More generally, one can define the {\Pack{\mathcal{X}}} problem for
any type $\mathcal{X}$ of objects.
In \emph{hitting problems} (sometimes referred to as
\textit{transversal} or \textit{covering} problems), the task is to
find
a small set of
elements that
\emph{hits} (i.e., intersects) every object of a certain type
$\mathcal{X}$.
For example, the \textsc{Vertex Cover} problem can be described as
finding a set of at most $k$ vertices that intersect every edge
(i.e., contains at least one endpoint of each edge).
The minimum $s$-$t$ cut problem can be interpreted as finding a set
of edges that intersects every $s$-$t$ path.
A quick note on terminology is in order here.
The name of a covering problem typically refers to the type of
objects used to cover, rather than the type of objects being
covered.
For example, \textsc{Cycle Cover} usually refers to the problem of
covering the vertices of the graph with few (not necessarily
disjoint) cycles, and \emph{not} the
problem of hitting every cycle with a small set of vertices (which is
usually called \textsc{Feedback Vertex Set}).
For this reason, we prefer to use \Hit{\mathcal{X}} for the
problem where the task is to find a set of elements or vertices that
intersect every object of type $\mathcal{X}$.

There is a well-known duality phenomenon connecting hitting and
packing  problems.
If there are $\ell$ disjoint objects of type $\mathcal{X}$, then
clearly we need at least $\ell$ elements to hit every such object.
In other words, the optimum of \Hit{\mathcal{X}} is at least the
optimum of \Pack{\mathcal{X}}.
For some type of objects (such as edges in bipartite graphs and
$s$-$t$ paths),
celebrated duality theorems demonstrate that there is always equality
between the two optimum values.
These duality results and their variants underlie many of the
polynomial-time exact algorithms in combinatorial optimization.
For problems where the two optimum values do not coincide, it is
natural to ask how large the gap
can be.
Erd\H os and P\'osa~\cite{MR0175810} showed that
if $\ell$ is the maximum number of vertex-disjoint cycles,
then all the cycles can be hit by a set of $k=\Oh(\ell\log \ell)$
vertices.
More generally, we say that a type $\mathcal{X}$ of objects
has the \emph{Erd\H os--P\'osa property} if the hitting optimum
can be bounded by a function of the packing optimum.
For example, it is known that the Erd\H os--P\'osa property
holds for undirected cycles passing through a set $S$
\cite{DBLP:journals/jct/KakimuraKM11,DBLP:journals/jct/PontecorviW12}
or directed cycles~\cite{DBLP:journals/combinatorica/ReedRST96},
but it does not hold for cycles of odd
length~\cite{DBLP:journals/combinatorica/Reed99}.

In this paper, we study a different, algorithmic, question that
connects hitting and packing problems. Let $\mathcal{X}$ be a type of
objects in graphs, and consider the following problem.
Given a graph $G$
and integers $k$ and $\ell$, the task is to find a set $S$ of at most
$k$ vertices such that $G-S$ does not contain $\ell$ disjoint copies
of objects of type $\mathcal{X}$. This unified formulation
captures both \Hit{\mathcal{X}} and \Pack{\mathcal{X}} problems:
for $\ell=1$, it asks if every object can be hit with $k$ vertices;
for $k=0$, it asks whether it is impossible to find $\ell$ disjoint
objects. Therefore, \UndelHitPack{\mathcal{X}} is at least as hard as
\Hit{\mathcal{X}} and the complement of \Pack{\mathcal{X}}.
Note that deterministic algorithms, on which we focus in this paper,
always work for the complement of a problem as well.
The main meta-question that we explore is how hard such a combination
of two %
problems may become:
\medskip

\mdfdefinestyle{highlight}{frametitlebackgroundcolor=gray!40,backgroundcolor=gray!20}

\begin{mdframed}[style=highlight,userdefinedwidth=0.7\linewidth,align=center]
If some type of algorithm exists for both \Hit{\mathcal{X}} and
\Pack{\mathcal{X}}, then is there such an algorithm for
\UndelHitPack{\mathcal{X}} as well?
\end{mdframed}

The main message of this paper is that the formulation of this
question leads to a whole new unexplored continent of interesting and
challenging questions. As we shall see, in some settings the combined
problem is indeed strictly harder, while in other settings a
qualitatively similar algorithm can be obtained for the combined
problem, albeit only after developing significantly more involved
techniques.

To make the problem statement more robust,
we extend the problem by assuming that the input graph contains a set
$U$
of undeletable vertices and the solution $S$ has to be disjoint from
$U$. Such undeletable vertices may be needed to express problems
where the objects $\mathcal{X}$ are, say, $s$-$t$ paths or paths
between terminals, and we do not want to allow the deletion of
terminals.
Note that this generalization makes our algorithmic results slightly
stronger,
while it makes the lower bound results sightly weaker.
Formally, for a type $\mathcal{X}$ of objects, the problem is defined
as follows.

\begin{center}
  \fbox{\parbox{0.7\linewidth}{
      \UndelHitPack{\mathcal{X}}
      \smallskip

      \begin{tabularx}{\linewidth}{rX}
          \textbf{Input:} & Graph $G$, set $U\subseteq V(G)$,
          integers $k$ and $\ell$\\
          \textbf{Question:} & Is there a set $S\subseteq
          V(G)\setminus U$ of size $\le k$ such that $G-S$ does not
          contain $\ell$ vertex-disjoint objects of type
          $\mathcal{X}$?
        \end{tabularx}
      }}
\end{center}

There are two ways of looking at the \UndelHitPack{\mathcal{X}}
problem. It can
be considered as a \emph{weaker} version of hitting: the solution
does not have to destroy all sets in $\mathcal{X}$, but up to
$\ell$ disjoint sets are allowed to survive in $G-S$. An alternative
view is to
interpret it as a more stable version of packing. We have to
decide not only whether $\ell$ disjoint objects exist but whether the
graph may lose this property even if up to $k$ arbitrary vertices are
removed. Such a robust version of packability is clearly desirable in
many situations, and the problem of detecting this property is
precisely the complement of \UndelHitPack{\mathcal{X}}.
Frei et al.~\cite{FHR22} recently initiated a systematic complexity
study
of a related stability notion, where a graph
is \emph{vertex-stable} if some parameter cannot change upon deletion
of a single vertex.
Our \UndelHitPack{\mathcal{X}} models an even more robust notion of
stability, however, since multiple vertices may be deleted depending
on the input.

Returning to the overarching question how hard
\UndelHitPack{\mathcal{X}} may become compared to \Hit{\mathcal{X}},
and \Pack{\mathcal{X}}, let us start with examples where all of them
are polynomial-time solvable. If $\mathcal{X}$ is simply the set of
edges
of a bipartite graph, then the hitting problem (\textsc{Vertex
Cover}) and the packing problem (\textsc{Maximum Matching}) are both
known to be polynomial-time solvable and the size of the minimum
vertex cover and the maximum matching are known to be the same. Let
$d$ be this value. A set of $k$ vertices can decrease the size of the
maximum matching only by at most $k$, thus the answer is no if
$d-k\ge \ell$. Otherwise, if $d-k<\ell$, then deleting any $k$
vertices of a minimum vertex cover decreases the size of a maximum
matching by $k$ (as deleting $d-k$ further vertices of the vertex
cover decreases this size to 0), showing that the answer is yes. This
argument works for other objects where exact duality theorems are
known, for example for internally vertex-disjoint $s$-$t$ paths
(since the maximum number of disjoint paths is equal to the minimum
$s$-$t$ separator).

In general, however, no such exact duality theorem is available. In
such a case, \Hit{\mathcal{X}} and \Pack{\mathcal{X}} are two very
different problems that may require different techniques. Then,
solving \UndelHitPack{\mathcal{X}} would require combining the two
solution techniques in a nontrivial way, and it very well may be the case that  \UndelHitPack{\mathcal{X}} is a qualitatively harder problem that both
\Hit{\mathcal{X}} and \Pack{\mathcal{X}}. Let us point out that one can explore different aspects of hardness (NP-hardness, exact running times, parameterized complexity, approximation, etc.), thus already for one particular choice of $\mathcal{X}$, one can ask many different question. This means that understanding the \UndelHitPack{\mathcal{X}} problem is a two-dimensional question: one dimension is the choice of $\mathcal{X}$ and the other dimension is the notion of complexity.
We present two case studies
(the objects $\mathcal{X}$ being cycles of arbitrary size or
subgraphs isomorphic to a fixed graph $H$) and explore different
aspects of the complexity of \UndelHitPack{\mathcal{X}}.
See \cref{tab:overview} for an overview of our results.

\begin{figure}

\begin{center}\def\svgwidth{0.54\linewidth}
\begingroup%
  \makeatletter%
  \providecommand\color[2][]{%
    \errmessage{(Inkscape) Color is used for the text in Inkscape, but the package 'color.sty' is not loaded}%
    \renewcommand\color[2][]{}%
  }%
  \providecommand\transparent[1]{%
    \errmessage{(Inkscape) Transparency is used (non-zero) for the text in Inkscape, but the package 'transparent.sty' is not loaded}%
    \renewcommand\transparent[1]{}%
  }%
  \providecommand\rotatebox[2]{#2}%
  \newcommand*\fsize{\dimexpr\f@size pt\relax}%
  \newcommand*\lineheight[1]{\fontsize{\fsize}{#1\fsize}\selectfont}%
  \ifx\svgwidth\undefined%
    \setlength{\unitlength}{247.99995704bp}%
    \ifx\svgscale\undefined%
      \relax%
    \else%
      \setlength{\unitlength}{\unitlength * \real{\svgscale}}%
    \fi%
  \else%
    \setlength{\unitlength}{\svgwidth}%
  \fi%
  \global\let\svgwidth\undefined%
  \global\let\svgscale\undefined%
  \makeatother%
  \begin{picture}(1,0.69758074)%
    \lineheight{1}%
    \setlength\tabcolsep{0pt}%
    \put(0,0){\includegraphics[width=\unitlength,page=1]{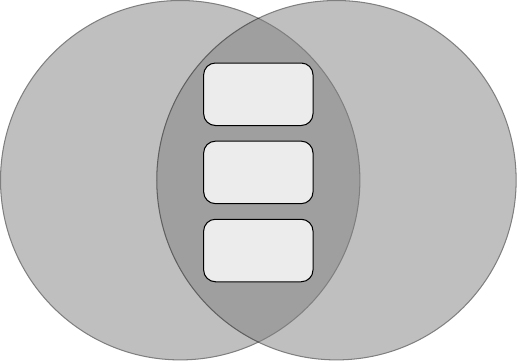}}%
    \put(0.50000013,0.5272176){\color[rgb]{0,0,0}\makebox(0,0)[t]{\lineheight{1.25}\smash{\begin{tabular}[t]{c}duality\\results\end{tabular}}}}%
    \put(0.50000013,0.37600784){\color[rgb]{0,0,0}\makebox(0,0)[t]{\lineheight{1.25}\smash{\begin{tabular}[t]{c}Erd\H os-P\'osa\\property\end{tabular}}}}%
    \put(0.50000013,0.22479825){\color[rgb]{0,0,0}\makebox(0,0)[t]{\lineheight{1.25}\smash{\begin{tabular}[t]{c}\UndelHitPack{\mathcal{X}}\\problem\end{tabular}}}}%
    \put(0.16733881,0.34879037){\color[rgb]{0,0,0}\makebox(0,0)[t]{\lineheight{1.25}\smash{\begin{tabular}[t]{c}\LARGE Hitting\\\LARGE Problems\end{tabular}}}}%
    \put(0.83266128,0.34879037){\color[rgb]{0,0,0}\makebox(0,0)[t]{\lineheight{1.25}\smash{\begin{tabular}[t]{c}\LARGE Packing\\\LARGE Problems\end{tabular}}}}%
    \put(0,0){\includegraphics[width=\unitlength,page=2]{venn.pdf}}%
  \end{picture}%
\endgroup%
\end{center}
\caption{The areas of hitting and packing problems intersect in
combinatorial duality and Erd\H os--P\'osa property results, and in
the algorithmic study of combined hitting and packing problems.}
\end{figure}

\paragraph*{Case Study 1: Cycles.}
Let us first consider the case when the type $\mathcal{X}$ of objects
are the cycles in the given graph $G$.
Thus, \Hit{\mathcal{X}} is exactly \textsc{Feedback Vertex Set}
(the problem of deciding whether there is a set $S$ of up to $k$
vertices such that $G-S$ has no cycle),
\Pack{\mathcal{X}} is \textsc{Cycle Packing} (asking whether are
there $\ell$ vertex-disjoint cycles),
and \UndelHitPack{\mathcal{X}} is the problem of asking if it is
possible to
remove a set $S$ of $k$ (deletable) vertices such that
the remaining graph does not contain $\ell$ vertex-disjoint cycles.

\textsc{Feedback Vertex Set} and \textsc{Cycle Packing} are both
known to be \NP-complete.
As we have seen, \CycleUndelHitPack generalizes both \textsc{Feedback
Vertex Set} and the \emph{complement} of \textsc{Cycle Packing}, thus
it is unlikely to be in \NP.
Indeed, a solution $S$ of size $k$ is \emph{not} a good certificate,
as it is hard to verify due to the \NP-hardness of \textsc{Cycle
Packing}.
We show that \CycleUndelHitPack is in fact located further up in the
polynomial hierarchy.

\begin{restatable}{theorem}{thmcyclepwLBSigP}
  \label{thm:sig2p:completeness:cycles}
\CycleUndelHitPack is \SigTwoP-complete.
\end{restatable}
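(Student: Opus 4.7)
My plan is to prove the two directions separately.

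For membership in \SigTwoP, I would rewrite the yes-condition as ``$\exists S \subseteq V(G)\setminus U$ with $|S|\le k$ such that for every length-$\ell$ tuple $(C_1,\dots,C_\ell)$ of vertex subsets of $G$, the tuple $(C_1,\dots,C_\ell)$ is not a packing of $\ell$ vertex-disjoint cycles of $G-S$''. Both quantified objects are of polynomial size and the innermost predicate is polynomial-time checkable, placing \CycleUndelHitPack in \SigTwoP.

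For \SigTwoP-hardness I would reduce from \ExForThreeDNF. Given $\Phi=\exists\vec x\forall\vec y\,\bigvee_{i=1}^t T_i$, with each $T_i$ a conjunction of three literals over $x_1,\dots,x_n,y_1,\dots,y_m$, the plan is to build an instance $(G,U,k,\ell)$ with $k=n$ and $\ell=m+t$ from three kinds of gadgets so that $S$ encodes an assignment to $\vec x$, any size-$\ell$ packing encodes an assignment to $\vec y$ together with one ``term cycle'' per $T_i$, and the term cycle for $T_i$ is packable iff $T_i$ is falsified at $(\vec x,\vec y)$.

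Concretely, for each existential variable $x_i$ I would introduce a short cycle through a pair of deletable vertices $v_i^\true,v_i^\false$ (with enough undeletable padding that no other cycle contains either of them), so that the budget $k=n$ forces $S$ to contain exactly one of $v_i^\true, v_i^\false$ and this choice is read as the value of $x_i$. For each universal variable $y_j$ I would use a figure-eight of two undeletable cycles $C_j^\true$ and $C_j^\false$ sharing a single vertex, so that any vertex-disjoint packing contains at most one of them and this choice encodes $y_j$. For each term $T_i$ I would introduce a term cycle realisable only through one of three ``literal wires'', one per literal of $T_i$: the wire for a literal $x_j$ (resp.\ $\lneg x_j$) passes through $v_j^\false$ (resp.\ $v_j^\true$), so that the wire is cut by $S$ exactly when the encoded $\vec x$-assignment makes the literal true; the wire for $y_j$ (resp.\ $\lneg y_j$) passes through a dedicated vertex of $C_j^\true$ (resp.\ $C_j^\false$), so that the wire is blocked by the packed variable cycle exactly when the encoded $\vec y$-assignment makes the literal true. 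Multiple occurrences of the same literal are handled by attaching independent copies of the wire to the variable gadget, a standard bookkeeping step.

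Assuming the construction behaves as intended, a packing of $m+t$ vertex-disjoint cycles in $G-S$ must consist of one $y$-gadget cycle per $y_j$, encoding some $\vec y$, together with one term cycle per $T_i$, whose selected literal wire certifies that $T_i$ is falsified at $(\vec x,\vec y)$. Thus such a packing exists iff some $\vec y$ falsifies every $T_i$, and $(G,U,k,\ell)$ is a yes-instance iff for the $\vec x$ encoded by $S$ every $\vec y$ satisfies some $T_i$, i.e.\ iff $\Phi$ is true. The main obstacle will be making this accounting rigid: one must rule out ``spurious'' cycles, both additional short cycles inside a single gadget and cycles traversing several gadgets that sidestep a blocked literal wire. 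This is typically enforced by threading every gadget through sufficiently many undeletable vertices and by choosing literal wires long enough that fragments of different wires cannot close into an unintended cycle; carrying out this rigidity analysis gadget by gadget and interaction by interaction is where the bulk of the work will lie.
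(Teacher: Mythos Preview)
Your membership argument is correct and essentially the same as the paper's (which phrases it via an $\NP$ oracle rather than the $\exists\forall$ quantifier form).

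For hardness you take a genuinely different route. The paper does not reduce from $\exists\forall$~3-DNF to \CycleUndelHitPack directly. It reduces from \SUS (Smallest Unsatisfiable Subformula) to \TriUndelHitPack, with deletions selecting \emph{clauses} and the triangle packing encoding a variable assignment; crucially, each variable gadget is a long ``triangle cycle'' with one slot per clause, so multiple occurrences never collide. It then augments the instance so that the intended triangle packing becomes a triangle \emph{partition}; since a triangle partition is automatically a maximum cycle packing, hardness for \CycleUndelHitPack follows with no separate analysis of long spurious cycles. What the paper's detour buys is exactly the two things your sketch leaves open: occurrence-disjointness and cycle rigidity come for free.

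Your sketch has a design gap that is not just the acknowledged ``rigidity'' issue. The existential gadget is described inconsistently: you say ``no other cycle contains either of $v_i^\true,v_i^\false$'', yet one line later the literal wire for $x_j$ is a term cycle passing through $v_j^\false$. More importantly, routing every occurrence of the literal $x_j$ through the single vertex $v_j^\false$ means two term cycles that both need to witness ``$x_j$ is false'' cannot be packed simultaneously. This is not fixed by ``independent copies of the wire'': any copy that is severed by deleting $v_j^\false$ must pass through $v_j^\false$, so the copies are not vertex-disjoint. The asymmetry with the $y$-gadget is the point---there, packing a long cycle blocks many ports at once, whereas deleting a single vertex blocks only that vertex. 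A repair is possible (e.g., make the $x$-gadget a figure-eight too, with one deletable vertex per lobe so that deletion \emph{forces} which lobe the adversary packs, and place one port per occurrence on each lobe; then $\ell$ becomes $n+m+t$ and one must argue the adversary cannot profitably skip $x$- or $y$-cycles), but this is a substantive redesign, not bookkeeping.
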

Another jump in complexity can be
observed if we consider how the
problems behave on graphs of bounded treewidth.
The study of parameterized algorithms and complexity on such graphs
has been a fruitful area of research,
as many \NP-hard problems become tractable when restricted to graphs
of small treewidth.
In many cases, the problems are fixed-parameter tractable (FPT)
parameterized by treewidth:
there is an algorithm solving the problem in time $f(\tw)\cdot
n^{\Oh(1)}$ for some function $f$.
By now, there is a strong understanding on how the complexity of
different problems depend on the treewidth of the graph,
with a number of nontrivial algorithmic techniques and tight
conditional lower bounds appearing in the literature
\cite{DBLP:journals/siamcomp/OkrasaR21,DBLP:conf/esa/OkrasaPR20,DBLP:conf/stacs/EgriMR18,DBLP:conf/soda/CurticapeanLN18,
DBLP:conf/iwpec/BorradaileL16,DBLP:journals/dam/KatsikarelisLP19,
DBLP:conf/icalp/MarxSS21,
DBLP:conf/soda/CurticapeanM16,DBLP:conf/soda/FockeMR22,DBLP:conf/iwpec/MarxSS22,DBLP:conf/soda/FockeMINSSW23,cut-and-count,DBLP:journals/iandc/BodlaenderCKN15,DBLP:conf/icalp/MarxM16,DBLP:journals/siamcomp/LokshtanovMS18,DBLP:conf/sat/LampisMM18,DBLP:conf/iwpec/LampisM17}.
The function $f$ is typically of the form $2^{\poly(\tw)}$,
and we know of only a handful of problems where a double- or even
triple-exponential dependence on treewidth is necessary
assuming the Exponential-Time Hypothesis (ETH)
\cite{DBLP:conf/icalp/MarxM16,DBLP:conf/iwpec/LampisM17,DBLP:journals/corr/abs-1302-4266,DBLP:conf/lics/PanV06,galby-talk}.
One may see a pattern that problems at higher level of the polynomial
hierarchy
may need more than exponential dependence on treewidth,
but note that there are problems in \NP that need double-exponential
dependence
(metric dimension~\cite{galby-talk}) and there are \sharpP-hard
counting problems%
\footnote{By Toda's theorem~\cite{Toda1991}, it is known that \sharpP
contains the entire polynomial hierarchy.}
that can be solved with single-exponential
dependence~\cite{DBLP:conf/soda/FockeMINSSW23,DBLP:conf/soda/FockeMR22,DBLP:conf/soda/CurticapeanM16,DBLP:conf/icalp/MarxSS21}.
Both \textsc{Feedback Vertex Set} and \textsc{Cycle Packing} can be
solved in time $2^{\poly(\tw)}\cdot n^{\Oh(1)}$
(it is known that, assuming ETH, the optimal dependence on treewidth
is $2^{\Oh(\tw)}$ and $2^{\Oh(\tw\log\tw)}$ for the two problems,
respectively).
Does \CycleUndelHitPack, the common generalization of the two
problems, also admit an algorithm of this running time?
We answer this question in the negative: the dependence becomes
double-exponential on treewidth.
\begin{restatable}{theorem}{thmcycletwUB}
  \label{thm:twUpper:cycle}
  \CycleUndelHitPack can be solved in time
  $2^{2^{\Oh(\tw \log \tw)}}\cdot n^{\Oh(1)}$,
  where \tw\ is the treewidth of the input graph.
\end{restatable}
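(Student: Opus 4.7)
The plan is to design a nested dynamic programming along a nice tree decomposition $(T, \{B_t\}_{t \in V(T)})$ of $G$ of width $\tw$. At the inner layer we run the standard treewidth DP for maximum cycle packing on $G - S$, viewing $S$ as fixed; at the outer layer we simultaneously enumerate the possible choices of $S$ by tracking, for each bag, the restriction $S \cap B_t$ together with the \emph{entire} inner DP table on $B_t \setminus S$. It is crucial not to remember only a single ``best'' inner state because the outer quantifier is existential over $S$ and must prevent packing $\ell$ disjoint cycles across every possible completion; one must keep the whole function from inner cycle patterns to attained cycle counts.

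For the inner DP, on a bag $B$ the state is a \emph{cycle pattern} $\pi$ consisting of the subset of $B$ lying in a partial cycle fragment, the degree ($1$ or $2$) of each such vertex within the processed part, and a pairing of the degree-$1$ vertices along their common fragments; the number of closed cycles (truncated at $\ell$) is attached as a second coordinate. The number of patterns is $|\Sigma| = 2^{\Oh(\tw \log \tw)}$. The outer DP table at $t$ then has entries indexed by pairs $(S_B, \phi)$ with $S_B \subseteq B_t \setminus U$ and $\phi \colon \Sigma(B_t \setminus S_B) \to \{0, 1, \dots, \ell\}$; the stored value $D_t[S_B, \phi]$ is the minimum $|S \cap V_t|$ over all $S \subseteq V_t \setminus U$ with $S \cap B_t = S_B$ such that the inner DP on $G[V_t] - S$ realises exactly the profile $\phi$ on the patterns of $B_t \setminus S_B$. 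The instance is a \yes-instance iff the root (empty) bag contains a profile $\phi$ with $\phi(\pi_\emptyset) < \ell$ attained with value at most $k$.

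The transitions mirror the inner DP operations, lifted row-wise to profiles. Introduce-vertex and introduce-edge nodes branch on whether the relevant vertex is placed into $S$ (only if it lies outside $U$), update $S_B$ and the budget accordingly, and apply the inner transition to each $\pi$. Forget nodes collapse a coordinate of the profile by taking a maximum over the inner states of the forgotten vertex, again branching on whether that vertex is in $S$. Join nodes convolve: two entries $(S_B, \phi_1)$ and $(S_B, \phi_2)$ with the same bag-restriction of $S$ are combined into $(S_B, \phi)$ via
\[
\phi(\pi) \;=\; \min\!\Bigl(\ell,\; \max\{\phi_1(\pi_1) + \phi_2(\pi_2) : \pi_1 \oplus \pi_2 = \pi\}\Bigr),
\]
where $\oplus$ is the inner DP's gluing rule on fragments (merging paired endpoints while discarding states that would close a subcycle through $B_t$ in a conflicting way); budgets are summed. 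Correctness reduces to the correctness of the inner cycle-packing DP applied for every $S$ simultaneously, using that two subtrees realising identical $(S_B, \phi)$ are interchangeable in every completion.

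For the complexity, the number of profiles per $(B_t, S_B)$ is at most $(\ell+1)^{|\Sigma|}$; together with the $2^{\tw+1}$ choices of $S_B$ and the $k+1$ budgets this yields an outer table of size $2^{2^{\Oh(\tw \log \tw)}} \cdot n^{\Oh(1)}$ per bag, after a case split: when $|\Sigma|$ is constant (i.e., $\tw = \Oh(1)$) the bound $n^{|\Sigma|}$ fits into $n^{\Oh(1)}$, while for super-constant $|\Sigma|$ the $\log n$ arising from $(\ell+1)^{|\Sigma|} \le n^{|\Sigma|}$ is absorbed into the constant hidden in $\Oh(\tw \log \tw)$. Every transition, including the quadratic join convolution, runs within this bound. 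The main obstacle is arguing that the \emph{whole} profile (rather than a single optimum) is both a necessary and a sufficient summary of a subtree for the outer existential over $S$: the inner DP produces a family of values indexed by all patterns, and the full family must be retained because $\ell$-cycle infeasibility depends on every row of the table simultaneously across subtree compositions.
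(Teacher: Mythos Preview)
Your overall architecture (an outer DP over deletion sets nested around the standard cycle-packing DP, storing the full profile $\phi$ from inner patterns to attained cycle counts) is essentially the paper's approach. The inner pattern space $|\Sigma|=2^{\Oh(\tw\log\tw)}$ is also what the paper uses. The gap is in the running-time analysis, and it is a real one.

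Your table size per bag is governed by $(\ell+1)^{|\Sigma|}$. You argue this is $\le 2^{2^{\Oh(\tw\log\tw)}}\cdot n^{\Oh(1)}$ via a case split on whether $\tw$ is ``constant'' or ``super-constant''. That split is not a valid argument: the claimed bound must hold with \emph{universal} constants in both $\Oh(\cdot)$'s, simultaneously for all pairs $(n,\tw)$. Concretely, fix any $\tw$ with $|\Sigma|>c''$ (which happens already for moderate $\tw$) and let $n\to\infty$; then
\[
(\ell+1)^{|\Sigma|}\;\ge\; n^{|\Sigma|}/2^{|\Sigma|}\quad\text{(take $\ell=n-1$)}
\]
eventually exceeds $2^{2^{c'\tw\log\tw}}\cdot n^{c''}$ for every fixed pair $(c',c'')$, since the left side is a degree-$|\Sigma|$ polynomial in $n$ while the right side is degree $c''<|\Sigma|$. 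In other words, your algorithm as described runs in time $n^{2^{\Oh(\tw\log\tw)}}$, an XP bound, not the FPT bound $2^{2^{\Oh(\tw\log\tw)}}\cdot n^{\Oh(1)}$.

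The missing idea is that the codomain of $\phi$ can be shrunk from $\{0,\dots,\ell\}$ to a set of size $\Oh(\tw)$. The paper records, instead of the absolute $T$-completion number for each type $T$, only its offset from the current maximum $\ell_0$; any type whose completion number lags behind the bag-wise optimum by more than $2(\tw+1)$ is marked as ``low'' (a special symbol) and never pursued further. The justification is that from such a type one can close at most $\tw+1$ pending paths into cycles, so even optimistically completing it cannot beat the packing that simply ignores the bag; hence no maximum packing ever passes through a low type. With this observation the profile $\phi$ takes values in a set of size $\Oh(\tw)$ (plus two symbols for ``no packing of this type exists'' and ``low''), giving $(\Oh(\tw))^{|\Sigma|}=2^{2^{\Oh(\tw\log\tw)}}$ profiles and only a single factor of $n$ from the offset $\ell_0$. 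Once you add this truncation (and argue it is safe at forget, introduce-edge, and join nodes), your proof goes through.
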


We complement this upper bound by a matching lower bound
which not only proves the double-exponential dependence
for the larger parameter pathwidth
but also that the additional logarithmic factor cannot be avoided.
\begin{restatable}{theorem}{thmcyclepwLB}
\label{thm:tw:lower:cycles}
  Assuming ETH, \CycleUndelHitPack has no
  $2^{2^{o(\pw \log \pw)}}\cdot n^{\Oh(1)}$ time algorithm,
  where \pw\ is the pathwidth of the input graph.
\end{restatable}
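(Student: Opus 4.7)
The plan is a reduction from a \SigTwoP-complete source whose ETH-based lower bound is already double-exponential in a natural size measure. A convenient choice is \ExForThreeDNF (or the variant \QthreeDNFtwo): given a 3-DNF $\varphi(x,y)$ with $N$ variables partitioned into existentials $x$ and universals $y$, decide whether $\exists x\,\forall y\,\varphi(x,y)$ holds. A standard sparsification argument applied to $\lnot \varphi$ combined with the ETH rules out any $2^{2^{o(N)}}$-time algorithm for \ExForThreeDNF. From such a formula the goal is to produce, in polynomial time, an equivalent instance $(G,U,k,\ell)$ of \CycleUndelHitPack together with a path decomposition of $G$ of width $\pw = \Oh(N/\log N)$, so that any algorithm running in $2^{2^{o(\pw \log \pw)}}\cdot n^{\Oh(1)}$ would solve \ExForThreeDNF in time $2^{2^{o(N)}}$, contradicting the ETH.

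The reduction places gadgets along a long backbone in three layers. An existential layer of \ExiGadget gadgets uses the deletable set $S$ to encode the assignment to the $x$-variables: the budget $k$ is calibrated so that any feasible $S$ is forced to pick exactly one of two prescribed configurations \Ptrue and \Pfalse inside each such gadget. A universal layer of \UniGadget gadgets then lets ``the adversary'' declare a candidate $y$ by choosing which of two potential cycles participates in the vertex-disjoint packing. Finally, \TermGadget gadgets check each term of $\lnot \varphi$ so that a packing of $\ell$ disjoint cycles in $G-S$ can be completed if and only if there is some $y$ with $\lnot \varphi(x,y)$, for the $x$-assignment encoded by $S$. Hence $(G,U,k,\ell)$ is a \yes-instance of \CycleUndelHitPack iff $\varphi$ is a \yes-instance of the source problem.

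To drive the width down to $\Oh(N/\log N)$ we bundle $\Theta(\log \pw)$ universal variables into each bag using the auxiliary construction \AuxGadget. Rather than storing a single bit per $y_j$ at the bag boundary, each \AuxGadget gadget locally encodes a function from a $\Theta(\pw)$-element set to itself by which of its $\pw^{\Theta(\pw)}$ local cycle patterns participates in the packing, contributing $2^{\Theta(\pw \log \pw)}$ distinguishable boundary states per bag. Synchronization between neighbouring \AuxGadget gadgets is enforced by placing the shared interface vertices into $U$ so that across a bag join only globally consistent function-encodings extend to a full size-$\ell$ vertex-disjoint packing. The backbone is a path with $\Theta(N/(\pw \log \pw))$ such bags, and the pathwidth is read off bag-by-bag as $\Oh(\pw)$ directly from the construction.

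The main technical obstacle is to prove that this heavily compressed encoding is faithful in both directions under the unusual \UndelHitPack semantics. On the packing side, we must rule out an adversary who mixes partial choices from different \AuxGadget gadgets into a ``Frankenstein'' size-$\ell$ packing that corresponds to no coherent assignment~$y$; the geometry of the \AuxGadget is designed precisely so that only cycles consistent with a common function on the overlap can be simultaneously packed. On the deletion side, we must ensure that $S$ cannot spend its budget on universal cycles to falsify the formula indirectly; the universal cycles are built long and pairwise share only vertices placed into $U$, so that any budget-efficient $S$ must confine itself to the existential layer and therefore represents a legitimate $x$-assignment. Making this dichotomy precise---every feasible $S$ encodes a clean $x$ and every size-$\ell$ packing encodes a clean $y$---is the central lemma; once it is in place, the equivalence with \ExForThreeDNF and the $\Oh(\pw)$ pathwidth bound combine to give the claimed ETH-based lower bound.
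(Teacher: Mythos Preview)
Your proposal has a fatal error at the very first step. You claim that ``a standard sparsification argument applied to $\lnot \varphi$ combined with the ETH rules out any $2^{2^{o(N)}}$-time algorithm for \ExForThreeDNF.'' This is false: \ExForThreeDNF on $N$ variables is trivially solvable in time $2^{\Oh(N)}$ by enumerating all assignments to both blocks, so no hypothesis whatsoever can rule out $2^{2^{o(N)}}$-time algorithms for it. ETH gives at best a $2^{o(N)}$ lower bound for such problems, which is single-exponential. Consequently, even if the rest of your construction worked perfectly and produced pathwidth $\Oh(N/\log N)$, the conclusion you could draw is only that \CycleUndelHitPack has no $2^{o(\pw\log\pw)}\cdot n^{\Oh(1)}$ algorithm---a single-exponential bound, not the double-exponential one claimed.

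The paper takes a completely different route that avoids this trap. It reduces from plain $3$-SAT (an \NP\ problem, not a \SigTwoP\ one), where ETH gives a $2^{o(m)}$ lower bound in the number of clauses $m$. The double exponent arises because the reduction achieves pathwidth $\pw=\Oh(\log m/\log\log m)$, so that $2^{\Oh(\pw\log\pw)}=m^{\Oh(1)}$ and a $2^{2^{o(\pw\log\pw)}}$ algorithm would yield a $2^{o(m)}$ algorithm for $3$-SAT. The compression trick is to encode a clause index not in binary but as a permutation of $[t]$ with $t!\approx m$, realised by matchings between two groups of $t$ middle vertices; this is carried out first for \SquareUndelHitPack. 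The passage to \CycleUndelHitPack is then a separate repacking lemma showing that in the constructed graph every cycle packing can be converted, without loss in size, into a $C_4$-packing---so the same instance and the same parameters work for cycles of arbitrary length. Your existential/universal gadget layers and the \SigTwoP\ source problem are not part of the argument at all.
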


\textsc{Feedback Vertex Set} is known
to be fixed-parameter tractable
(FPT) parameterized by $k$,
in fact, it can be solved in time $2^{\Oh(k)}\cdot n^{\Oh(1)}$
\cite{DBLP:conf/soda/Cao18,DBLP:journals/ipl/KociumakaP14,DBLP:journals/mst/DehneFLRS07,DBLP:journals/jcss/GuoGHNW06}.
\textsc{Cycle Packing} is FPT parameterized by $\ell$, but it is an
open question whether there is a $2^{\Oh(\ell)}\cdot n^{\Oh(1)}$ time
algorithm:
the current best algorithm has running time $2^{\Oh(\ell \log ^2
\ell/\log\log \ell)}$ \cite{DBLP:journals/siamdm/LokshtanovMSZ19}.
As \CycleUndelHitPack is \NP-hard for $k=0$ and also for $\ell=1$,
the natural parameter for the problem is $p \deff k+\ell$.

We can try to use the following approach to show that
\CycleUndelHitPack is FPT parameterized by $p$.
Suppose that the input graph $G$ has $p=k+\ell$ disjoint cycles.
Then for every set $S$ of $k$ vertices,
the graph $G-S$ has $\ell$ disjoint cycles,
implying that $G$ is a no-instance of \CycleUndelHitPack.
Therefore, we can assume that $G$ has at most $p$ disjoint cycles.
Then the Erd\H os--P\'osa Theorem implies that $G$ has a feedback
vertex set of size $\Oh(p\log p)$, which also implies that $G$ has
treewidth $\Oh(p\log p)$.
Now we can try to use an algorithm for \CycleUndelHitPack
parameterized by treewidth. However, in light of
\cref{thm:tw:lower:cycles},
any such algorithm would give a running time with double-exponential
dependence on $p$.
Can we improve this running time to $2^{\poly(p)}\cdot n^{\Oh(1)}$,
to qualitatively match the running times of the \textsc{Feedback
Vertex Set} and \textsc{Cycle Packing} algorithms?
We show that this is indeed possible, and hence we do not see such a
drastic jump in complexity similar to the \SigTwoP-completeness of
the problem and the double-exponential dependence on treewidth.
\begin{restatable}{theorem}{cyclekl}\label{thm:cycle-k-l}
  \CycleUndelHitPack can be solved in time $2^{\poly(p)}\cdot
  n^{\Oh(1)}$
  (where $p\deff k+\ell$).
\end{restatable}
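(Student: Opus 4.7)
The plan is to combine the Erd\H os--P\'osa theorem with a careful branching procedure that avoids the double-exponential dependence on treewidth of \cref{thm:twUpper:cycle}. First I would greedily compute a maximal collection of pairwise vertex-disjoint cycles in $G$. If at least $p := k+\ell$ such cycles are found, I return \no immediately: any set $S$ with $|S|\le k$ hits at most $k$ of them, leaving at least $\ell$ disjoint cycles intact in $G - S$. Otherwise, the cycle packing number of $G$ is strictly less than $p$, so by the algorithmic version of Erd\H os--P\'osa we obtain, in polynomial time, a feedback vertex set $F \subseteq V(G)$ with $|F| = \Oh(p \log p)$; in particular, $\tw(G) = \Oh(p\log p)$.

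A direct invocation of \cref{thm:twUpper:cycle} with this bound yields only $2^{2^{\poly(p)}}\cdot n^{\Oh(1)}$, which is too weak. Instead, I would branch on $S_F := S \cap F$. Since $|F|=\Oh(p\log p)$, the number of admissible candidates for $S_F$ is at most $2^{|F|} = 2^{\Oh(p\log p)} = 2^{\poly(p)}$. After fixing $S_F$, the remaining subproblem is to decide whether some $S' \subseteq V(G)\setminus (F\cup U)$ with $|S'|\le k' := k - |S_F|$ satisfies the property that $G - S_F - S'$ contains no $\ell$ pairwise vertex-disjoint cycles. Crucially, $F' := F\setminus S_F$ is a feedback vertex set of $G - S_F$, so every cycle that we might wish to destroy in the sequel must pass through $F'$, and the deletions in $S'$ are restricted to the forest $G - F$ (together with its attachment edges to $F'$).

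The technical heart of the argument is a branching algorithm for this restricted subproblem. I would develop a branching rule that, at each step, either verifies that $G - S_F$ already contains fewer than $\ell$ disjoint cycles (in which case $S' = \emptyset$ works) or exploits the forest structure of $G - F$ together with the $\Oh(p\log p)$ anchors in $F'$ to expose a $\poly(p)$-size family of candidate vertices $\{v_1,\dots,v_t\}\subseteq V(G)\setminus(F\cup U)$ such that every valid $S'$ must contain at least one $v_i$. Branching on this family produces a search tree of depth at most $k'\le p$ and degree $\poly(p)$, with $p^{\Oh(p)} = 2^{\Oh(p \log p)}$ leaves. Combined with the $2^{\Oh(p\log p)}$ outer guesses for $S_F$, the total running time is $2^{\poly(p)}\cdot n^{\Oh(1)}$.

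The main obstacle, and where I expect the argument to be most delicate, is to design and prove correctness of this branching rule. One needs to identify, inside any graph of the restricted form, a canonical structure---morally, a ``reduced'' witnessing packing obtained by rerouting cycles through $F'$---from which the bounded-size candidate set can be extracted while guaranteeing that \emph{every} solution $S'$ meets at least one candidate. This is where the combinatorial interplay between the hitting and packing aspects must be analyzed at a fine-grained level, and where the full strength of $|F| = \Oh(p\log p)$ (rather than merely bounded treewidth) is used.
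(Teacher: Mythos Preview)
Your high-level outline---apply Erd\H os--P\'osa to obtain a feedback vertex set $F$ with $|F|=\Oh(p\log p)$, then guess $S\cap F$ and treat $F$ as undeletable---matches the paper. The gap is in the branching step, which you correctly flag as the ``main obstacle'' but then mis-specify.

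You propose to branch on a $\poly(p)$-size family of candidate \emph{vertices}, with recursion depth $k'\le k$. It is not clear such a family exists: a witnessing packing of $\ell$ cycles may contain cycles that are arbitrarily long inside the forest $G-F$, and different feasible solutions $S'$ can hit the same cycle at entirely different places along its forest segment, so there is no evident small vertex set that \emph{every} solution must meet. The paper does not branch on vertices at all. After (i) a degree-reduction preprocessing that bounds all degrees in $G-F$ by $\poly(|F|+k)$, it (ii) restricts attention to ``usable'' cycle packings and branches on which $G'$-\emph{path} of a current usable packing the solution must intersect (any packing has at most $|F|$ such paths), accumulating a set $\hit$ of forbidden paths and looking for a new usable packing avoiding all of $\hit$. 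The recursion depth is $\poly(|F|+k)$, not $k$: the bound comes from a key structural lemma showing that, for pairs $(i,j)\in F\times F$ that are not ``rich'', a single vertex can intersect at most $\Gamma=\poly(|F|+k)$ usable $(N_i,N_j)$-paths, so a size-$k$ solution can hit at most $k|F|^2\Gamma$ such paths in total. At the leaves one solves a path-hitting problem on a forest in time $2^k\cdot n^{\Oh(1)}$. Thus the branching actually required operates at a different granularity (paths, not vertices), with a different depth analysis, and rests on the degree-reduction and the usable-path counting lemma---neither of which your outline anticipates.
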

The proof exploits that $G$ has a feedback vertex set $F$ of size
$\Oh(p\log p)$.
By a simple branching argument, we assume that the solution $S$
is disjoint from $F$.
Then we interpret a packing of cycles as a collection of paths
connecting some neighbors of $F$ in the forest $G-F$.
Our goal is to hit every such collection of paths that would lead to
a collection of $\ell$ cycles.
With a branching algorithm, we collect paths that have to be hit,
until we can conclude that our collection of paths cannot be hit by
$k$ vertices.

\paragraph*{\boldmath Case Study 2: $H$-subgraphs.}
Next, let us consider the setting where the type $\mathcal{X}$ of
objects
are the (not necessarily induced) subgraphs isomorphic to a fixed
graph $H$.
Thus, \Hit{H} is the problem of removing a set $S$ of $k$ vertices
such that no subgraph isomorphic to $H$ remains,
\Pack{H} is finding $\ell$ disjoint copies of $H$ as subgraphs,
and \UndelHitPack{H} is the problem of removing a set $S$ of $k$
vertices
such that the remaining graph does not contain $\ell$ disjoint copies
of $H$.

For every fixed connected graph $H$ with at least 3 vertices,
\Hit{H} \cite{DBLP:journals/siamcomp/KirkpatrickH83}
and \Pack{H} \cite{DBLP:journals/jcss/LewisY80} are \NP-complete.
Similarly to the case of \CycleUndelHitPack, the \UndelHitPack{H}
problem lies on the second level of the polynomial hierarchy.

\begin{restatable}{theorem}{thmHSigTwoPLower}
  \label{thm:sig2p:completeness:H}\label{lem:sig2p:hardness:H}
  For any fixed connected graph $H$ with at least three vertices,
  \UndelHitPack{H} is \SigTwoP-complete.
\end{restatable}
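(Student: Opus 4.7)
Membership in \SigTwoP is immediate: existentially guess a set $S \subseteq V(G) \setminus U$ with $|S| \le k$, then universally verify that $G - S$ contains no $\ell$ vertex-disjoint copies of $H$ (the inner check is in \coNP, since a counterexample is a packing of $\ell$ disjoint $H$-copies, verifiable in polynomial time). So the task is to establish \SigTwoP-hardness.

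The plan is to reduce from \ExForThreeDNF, i.e., from formulas $\phi(X,Y) = T_1 \vee \cdots \vee T_m$ with each $T_r$ a conjunction of at most three literals. Given such an instance, we build $(G, U, k, \ell)$ so that a chosen deletion set $S$ encodes an assignment to $X$, and a packing of $\ell$ vertex-disjoint $H$-copies in $G - S$ encodes an assignment to $Y$ together with a certificate that every term $T_r$ is falsified. Thus $\exists X \forall Y \phi$ holds iff, for some $S$, no such packing exists.

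The construction uses three families of gadgets, all built from copies of $H$ sharing a small number of distinguished vertices:
\textbf{(i)} For each $x_i \in X$, an \emph{existential-variable gadget} $\ExiGadget_i$ containing two deletable vertices $x_i^{\true}, x_i^{\false}$, so constructed that any feasible $S$ must contain exactly one of the two (forced by a fixed copy of $H$ passing through both, whose only deletable vertices are these two, and by padding the rest of the gadget with vertices of $U$). Deleting $x_i^{\true}$ encodes $x_i = \true$.
\textbf{(ii)} For each $y_j \in Y$, a \emph{universal-variable gadget} $\UniGadget_j$ consisting entirely of undeletable vertices and offering the packer two disjoint candidate $H$-copies $\Ptrue, \Pfalse$. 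Exactly one can be used in any packing (they share or block each other through a carefully chosen cut vertex of $H$), and this choice encodes $y_j$.
\textbf{(iii)} For each term $T_r$, a \emph{term gadget} $\TermGadget_r$ whose packing contribution is one extra $H$-copy iff all three literals of $T_r$ are falsified under the encoded $(X,Y)$. Concretely, we pick an edge $uv$ of $H$ with $\deg_H(u) \geq 2$ (which exists since $H$ is connected with $|V(H)|\geq 3$), take a rooted copy of $H - \{v\}$ attached at a term vertex $c_r$, and provide three "plug" vertices, one per literal of $T_r$, shared with the corresponding variable gadget so that exactly one can close the missing copy; cross-linking with the variable gadgets then ensures a literal-plug is available to complete the $H$-copy iff the literal is false under the current assignment.

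We set $k = |X|$ and $\ell$ equal to the total "maximum possible" number of disjoint $H$-copies across all universal and term gadgets assuming every term is falsified. With the gadgets tuned so that each existential and universal gadget contributes a fixed number of copies regardless of choices, only the term gadgets' contributions depend on the encoded assignment, and together the three families saturate the budget $\ell$ iff every $T_r$ is falsified. Correctness follows by a direct translation between assignments and objects on either side. The main obstacle is the term gadget for arbitrary $H$: we need a rooted copy of $H$ with three disjoint "plug" attachment points such that exactly one plug can be used simultaneously, while the rest of the gadget still contributes the same number of forced copies independently of which plug is chosen; establishing this modularly for every connected $H$ with $|V(H)|\geq 3$ is the technical heart of the proof, handled by using a vertex of degree at least two in $H$ as the splice point and undeletable padding copies of $H$ to calibrate the count.
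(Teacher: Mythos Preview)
Your membership argument is fine and matches the paper's. The hardness part, however, has a genuine gap.

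You attempt a direct reduction from \ExForThreeDNF, building existential, universal, and term gadgets out of copies of $H$. But you never actually construct the gadgets; you only specify their intended behaviour. The central difficulty for arbitrary $H$ is that when you glue copies of $H$ together at shared vertices, new unintended copies of $H$ may appear, and then the packing can achieve its target in ways that do not correspond to any assignment. You do not address this at all. Concretely: your universal gadget relies on ``a carefully chosen cut vertex of $H$'', but $H$ need not have a cut vertex (take $H=K_q$). Your term gadget attaches a copy of $H-\{v\}$ at a vertex $c_r$ and adds three ``plug'' vertices, but you give no reason why the only $H$-copies through $c_r$ are the intended ones, nor why at most one plug can close the copy. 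Saying that this ``is the technical heart of the proof, handled by using a vertex of degree at least two'' is a promissory note, not an argument; a degree-$2$ vertex alone does not prevent unintended $H$-copies from arising across gadget boundaries.

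The paper avoids this difficulty by a two-step route. First, it proves \SigTwoP-hardness for \TriUndelHitPack via a reduction from \textsc{3CNF-SUS}, where the gadgets are explicit triangle cycles and clause triangles and one can enumerate all triangles by hand. Second, it gives a generic polynomial-time reduction from \TriUndelHitPack to \UndelHitPack{H} for any connected $H$ with $\lvert V(H)\rvert\ge 3$, using \emph{diamond gadgets}: take vertices $u,u'$ at maximum distance in $H$, add a twin $\bar u$ of $u$, and chain many such diamonds between each triangle vertex and a fresh copy of $H$. The diameter choice is exactly what controls unintended copies: any $H$-copy that uses both connecting vertices of a diamond would require a $u$--$u'$ path longer than $\operatorname{diam}(H)$, which is impossible. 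This lets one prove that every near-maximum packing is, without loss of generality, a ``forward'' or ``backward'' packing on each chain, recovering a triangle packing in the original instance. That structural control is precisely what your proposal is missing.
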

Similarly to \CycleUndelHitPack, we again see a jump to
double-exponential dependency on treewidth and pathwidth.
For the case of general graphs $H$, we provide an algorithm
whose running time asymptotically matches the one
of the algorithm for \CycleUndelHitPack.
\begin{restatable}{theorem}{thmUpperTWGeneral}
  \label{thm:twUpper:arbitrary}
  For any fixed connected graph $H$, \UndelHitPack{H} can be solved
  in time $2^{2^{\Oh(\tw\log \tw)}}\cdot n^{\Oh(1)}$, where \tw\ is
  the treewidth of the input graph.
\end{restatable}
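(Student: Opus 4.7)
The plan is to perform a dynamic program on a nice tree decomposition of width $\Oh(\tw)$, where the state at each bag is rich enough to record, simultaneously, the choice of the deleted set $S$ restricted to the bag and the complete behaviour of the maximum $H$-packing that one can still achieve in the subtree below, viewed as a function of how partial $H$-copies protrude through the bag. The pure \Pack{H} base problem on treewidth admits a standard DP with single-exponential dependence $2^{\Oh(\tw \log \tw)}$: at each bag $X_t$ one encodes a \emph{packing profile} $\pi$ which specifies, for every vertex $v \in X_t$, whether $v$ is free or is part of some yet-uncompleted $H$-copy, together with the partition of bag vertices into partial copies, the role each plays (i.e.~which vertex of $H$ it maps to), and, for each partial copy, which vertices of $H$ have already been realised by non-bag vertices below $X_t$. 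Since $|V(H)|$ is a constant, the number of profiles is $\tw^{\Oh(\tw)} = 2^{\Oh(\tw \log \tw)}$. The combined problem \UndelHitPack{H} forces us to reason not only about the maximum packing for a fixed $S$ but to select $S$ so that the \emph{adversarial} maximum packing is small, and this is precisely what drives the state space up to double-exponential.

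Concretely, for each node $t$ I would define the DP entries as follows. For every subset $A \subseteq X_t \setminus U$ (the intended value of $S \cap X_t$) and every function $f$ sending each packing profile $\pi$ compatible with $A$ to a value in $\{0, 1, \ldots, \ell\}$ (with $\ell$ meaning ``at least $\ell$''), store the minimum $m_t(A,f)$ of $|S \cap V_t|$ taken over all $S \subseteq V_t \setminus U$ with $S \cap X_t = A$ such that for every $\pi$ the maximum number of completed $H$-copies that can be packed in $G[V_t] - S$ while exhibiting $\pi$ on $X_t$ equals $f(\pi)$. The final answer is \yes if and only if the root, whose bag is empty and whose only compatible profile $\pi_\emptyset$ is the empty one, admits an entry $m(\emptyset, f)$ with $f(\pi_\emptyset) < \ell$ and $m(\emptyset, f) \le k$. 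Introduce, introduce-edge, and forget nodes are handled in the standard manner: at a forget node for a vertex $v$ we branch on whether $v$ goes into $S$ and update $f$ accordingly, either killing every profile in which $v$ is active, or declaring the partial copy containing $v$ completed when all roles of $H$ have been accounted for, or discarding $\pi$ otherwise.

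The crux, as expected, is the join node, where both children share the bag $X_t$ and we must combine $f_L$ and $f_R$ into a single $f$: for every target profile $\pi$ on $X_t$ one ranges over all compatible pairs $(\pi_L, \pi_R)$ of child profiles that, when glued along the bag, yield $\pi$ together with some extra completed copies arising when a partial copy present in $\pi_L$ matches a complementary partial copy in $\pi_R$ to fully realise an isomorphic copy of $H$. Defining this compatibility requires carefully verifying that every edge of $H$ is witnessed either inside one of the two subtrees or across the bag; one then sets $f(\pi) = \max_{(\pi_L, \pi_R) \to \pi} \bigl(f_L(\pi_L) + f_R(\pi_R) + \text{completions}\bigr)$ and $m_t(A,f) = \min\bigl(m_{t_L}(A,f_L) + m_{t_R}(A,f_R) - |A|\bigr)$ over all pairs $(f_L, f_R)$ that produce $f$ via this rule. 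The reason we must keep the whole function $f$ rather than a single value is precisely that this combination rule depends on the value of $f_L$ and $f_R$ at \emph{every} profile, not just at one.

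For the running time, the number of packing profiles per bag is $2^{\Oh(\tw \log \tw)}$, so the number of candidate functions $f$ is $(\ell+1)^{2^{\Oh(\tw \log \tw)}} = 2^{2^{\Oh(\tw \log \tw)}}$; multiplied by the $2^{\tw}$ choices of $A$, the number of DP entries at each bag remains $2^{2^{\Oh(\tw \log \tw)}}$. Each transition, including the join (which iterates over pairs of $f$'s), can be implemented within this bound, giving the claimed total running time of $2^{2^{\Oh(\tw \log \tw)}} \cdot n^{\Oh(1)}$.
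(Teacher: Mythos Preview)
Your overall plan is essentially the paper's: define packing profiles (the paper calls them \emph{types}) at each bag, record for every choice of $S\cap X_t$ the entire function ``profile $\mapsto$ maximum number of completed $H$-copies achievable below $X_t$'', and do a bottom-up DP on a nice tree decomposition with the obvious join rule. That part is fine.

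There is, however, a real gap in your running-time analysis. You assert
\[
(\ell+1)^{2^{\Oh(\tw\log\tw)}}=2^{2^{\Oh(\tw\log\tw)}},
\]
but this is false: $\ell$ is part of the input and can be as large as $n$, so the left-hand side is only $n^{2^{\Oh(\tw\log\tw)}}$, which is not of the required FPT form $2^{2^{\Oh(\tw\log\tw)}}\cdot n^{\Oh(1)}$. Capping the values of $f$ at $\ell$ does not help, since the codomain of $f$ still has size $\ell+1$.

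What is missing is the observation that, for a fixed deletion set $S$ and a fixed node $t$, the maximum number of completed copies achievable across \emph{all} profiles on $X_t$ varies by at most $|X_t|\cdot|V(H)|=\Oh(\tw)$: starting from a maximum packing for one profile, removing every (partial) copy that touches $X_t$ and then imposing any other profile on the bag costs at most $|X_t|$ completed copies, since each bag vertex is in at most one copy. (This is Lemma~\ref{lem:twUpper:packingNumbersAreNotTooDifferent} in the paper.) Consequently one can store $f$ \emph{relative to an anchor value} $\ell_0\in\{0,\dots,\ell\}$: the DP state becomes $(A,\ell_0,f_0)$ with $f_0\colon\text{profiles}\to\{0,\dots,R\}\cup\{\bot\}$ for $R=(\tw+1)|V(H)|$, the interpretation being that the maximum under profile $\pi$ is exactly $\ell_0-f_0(\pi)$ (or no packing exists if $f_0(\pi)=\bot$). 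Now the number of functions $f_0$ is $(R+2)^{\tw^{\Oh(\tw)}}=2^{2^{\Oh(\tw\log\tw)}}$, the single anchor $\ell_0$ contributes only a polynomial factor, and the claimed bound follows. Without this relative encoding, your analysis does not yield the theorem.
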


In the case when $H$ is a clique,
we exploit that we are only packing complete graphs
which leads to an improvement
where we remove the logarithmic factor from the exponent.
\begin{restatable}{theorem}{thmUpperTWClique}
  \label{thm:twUpper:clique}
  For any fixed integer $q \ge 2$,
  \qCliqueUndelHitPack can be solved in time
  $2^{2^{\Oh(\tw)}} \cdot n^{\Oh(1)}$,
  where $\tw$ is the treewidth of the input graph.
\end{restatable}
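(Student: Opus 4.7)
The plan is to prove \cref{thm:twUpper:clique} by dynamic programming on a nice tree decomposition of width $\tw$, specializing the algorithm of \cref{thm:twUpper:arbitrary} to the case $H = K_q$. Two features of cliques drive the improvement. First, by the clique-containment property of tree decompositions, every $q$-clique of $G$ lies inside a single bag (in particular, if $q > \tw + 1$ no $q$-clique exists and the instance is trivial), so the DP never has to reason about partial $K_q$'s that straddle two different bags. Second, $K_q$ is unlabeled: the only information about the part of a packing incident to $X_t$ that the DP must remember is \emph{which} bag vertices are used, with no assignment of $V(H)$-roles to bag vertices. These role assignments are exactly what produces the $\log\tw$ factor in the exponent of the general-$H$ algorithm.

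The DP at a node $t$ maintains a profile recording, for each admissible pair $(S_t, Y_t)$ with $S_t \subseteq X_t \setminus U$ and $Y_t \subseteq X_t \setminus S_t$, a compact summary of which extensions of $S_t$ into $V(G_t)$ leave $G_t - S$ with a clique-packing whose bag-footprint is exactly $Y_t$ and that is still compatible with the global $\ell$-packing threshold. Since both $S_t$ and $Y_t$ range over subsets of $X_t$, the profile fits in $2^{\Oh(\tw)}$ bits and hence takes at most $2^{2^{\Oh(\tw)}}$ values per bag. The transitions at the standard nice-decomposition node types follow the template of \cref{thm:twUpper:arbitrary}: introduce-vertex and introduce-edge nodes may complete new $q$-cliques inside the bag (which can be chosen in or out of the packing), forget nodes project out a vertex, and join nodes combine two children's profiles by a disjoint-union convolution on the $Y_t$-coordinates together with a consistency match on the $S_t$-coordinate.

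The hard part will be ensuring that the profile bits per $(S_t, Y_t)$ do not carry a dependence on $\log \ell$: the straightforward choice of storing the maximum achievable packing count at each entry would inflate the state to $2^{\Oh(\tw)} \log n$ bits and yield $n^{2^{\Oh(\tw)}}$ states, which is strictly worse than the claimed bound in the range $\omega(1) \le \tw \le o(\log\log n)$. The plan to avoid this is to lift the count out of the per-bag state and into an outer parameter: iterate over a target packing size $c \in \{0, 1, \ldots, \ell\}$, and for each $c$ run a Boolean DP that decides whether a clique-packing of size exactly $c$ with the prescribed bag-footprint is achievable, stitching the per-$c$ answers together via ``max packing $\ge \ell$'' $=$ ``$\bigvee_{c \ge \ell}$ (packing of size $c$ exists)'' and combining at joins by a convolution over $c_L + c_R = c$. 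Since $\ell \le n$, this outer iteration contributes only a polynomial factor, and combined with $2^{2^{\Oh(\tw)}}$ states per bag, $\Oh(n)$ bags, and $2^{\Oh(\tw)}$-time transitions, it yields the target $2^{2^{\Oh(\tw)}} \cdot n^{\Oh(1)}$ running time.
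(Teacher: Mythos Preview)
Your observation that every $K_q$ lives in a single bag, and hence the per-bag ``type'' collapses to a subset $Y_t\subseteq X_t$ with no role assignment, is correct and is exactly the reason the $\log\tw$ factor can be dropped. The gap is in how you handle the packing counts.

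The DP must track, for each partial deletion set $D\subseteq V_t$, the \emph{entire function} $f_D\colon A\mapsto \nu_{K_q}(G_t-(D\cup A))$ over avoidance sets $A\subseteq X_t\setminus D_0$; the equivalence class of $D$ is this function together with $(|D\setminus X_t|,\,D\cap X_t)$. Your proposed state---a Boolean per $(S_t,Y_t)$---does not encode this, and your plan to ``lift $c$ out'' by iterating over target packing sizes breaks the $\exists S\,\forall P$ structure: a Boolean table recording ``$\exists D$ such that some packing with footprint $Y_t$ has size $c$'' conflates different partial $D$'s. At a join you need, for the \emph{same} $D$, the values $f_D(A)$ for all $A$ simultaneously (since $f_D(A)=\max_{A_1\uplus A_2=X_t\setminus(D_0\cup A)}f_{D'}(A\cup A_1)+f_{D''}(A\cup A_2)$); if each $(Y_t,c)$ entry may come from a different $D$, you can no longer certify that any single $D$ keeps the global maximum below $\ell$. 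Similarly, at the root the table tells you ``for every $c\ge\ell$ there is \emph{some} $D$ admitting a packing of size $c$'', but not whether some $D$ admits \emph{no} packing of size $\ge\ell$.

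Your concern that naively storing $f_D(A)\in[0,n]$ for each $A$ yields $n^{2^{\Oh(\tw)}}$ states is well taken, but the fix is different from what you propose. The paper observes that removing one additional vertex lowers the packing number by at most one, so $0\le f_D(\emptyset)-f_D(A)\le |A|\le \tw+1$. Hence $f_D$ is determined by the single value $f_D(\emptyset)\in[0,n]$ together with $2^{\tw+1}$ offsets in $\{0,\dots,\tw+1\}$, giving at most $n\cdot(\tw+2)^{2^{\tw+1}}=n\cdot 2^{2^{\Oh(\tw)}}$ possible functions. With this offset encoding, the number of classes per bag is $2^{2^{\Oh(\tw)}}\cdot n^{\Oh(1)}$ and the algorithm runs in the claimed time without any outer iteration over $c$.
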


By designing the matching lower bound for \CycleUndelHitPack in such
a way
that the construction already works for \SquareUndelHitPack,
we obtain a matching lower bound for \SquareUndelHitPack.
\begin{restatable}{theorem}{thmTwLowerCFour}
  \label{thm:tw:lower:C4}
  Assuming ETH, \SquareUndelHitPack has no
  $2^{2^{o(\pw \log \pw)}}\cdot n^{\Oh(1)}$ time algorithm,
  where $\pw$ is the pathwidth of the input graph.
\end{restatable}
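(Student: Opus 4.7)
The plan is to reuse the same reduction that establishes \cref{thm:tw:lower:cycles}, and observe that by design its output graph has the property that every vertex-disjoint cycle packing can be replaced by a vertex-disjoint packing of $4$-cycles (and vice versa), so that packing of arbitrary cycles and packing of squares coincide on the produced instances. If this holds, then the same reduction is simultaneously a reduction to \SquareUndelHitPack, and because the pathwidth bound of the construction is already $\Oh(\pw(I))$ with $\pw(I)$ the pathwidth of the source instance, the ETH lower bound transfers verbatim.

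Concretely, I would proceed as follows. First, recall the reduction from the ETH-hard source problem (a two-alternation tiling/CSP variant on $2^{\Oh(\pw)} \times 2^{\Oh(\pw)}$ grids, typical for $\SigTwoP$ lower bounds with double-exponential treewidth dependence) that yields \cref{thm:tw:lower:cycles}. Inspect each gadget — selection gadgets for the universal vertex choice, existential packing gadgets, and the connecting ``wires'' — and verify two structural properties of the resulting graph $G$: (i) it contains $\OurSquare$'s arranged exactly where the cycles of interest live in the cycle-packing argument, and (ii) any vertex-disjoint collection of cycles in $G-S$ of size $\ell$ can be locally re-routed to a vertex-disjoint collection of $\ell$ squares using the same vertex set, and conversely every square-packing is also a cycle-packing. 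Property (ii) means that the \yes/\no{} status of the instance as a \CycleUndelHitPack instance and as a \SquareUndelHitPack instance is identical. If the gadgets are built entirely out of $C_4$'s glued along shared vertices or short paths, this equivalence is immediate; otherwise a small exchange argument is needed to swap any longer cycle in a hypothetical packing for a $4$-cycle on a subset of its vertices.

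Finally, I would combine this with the source problem's ETH lower bound of $2^{2^{o(\pw \log \pw)}}\cdot n^{\Oh(1)}$ (used already for \cref{thm:tw:lower:cycles}) to conclude that any algorithm solving \SquareUndelHitPack in time $2^{2^{o(\pw \log \pw)}}\cdot n^{\Oh(1)}$ would yield one for the source problem, contradicting ETH. The pathwidth bookkeeping carries over unchanged from the proof of \cref{thm:tw:lower:cycles}, since no gadget is modified.

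The main obstacle is step (ii): guaranteeing that the \emph{only} relevant cycles for packing are $4$-cycles, even when $S$ is chosen adversarially. In lower bound constructions, long cycles often appear incidentally — for instance, two squares sharing a vertex may admit a longer cycle going around both — and one must certify that such incidental long cycles can never help the packer pack ``more'' than $\ell$ $C_4$'s would. I would handle this by choosing the gluing between gadgets so that each maximal ``block'' of the construction is either edge-disjoint from the others or meets them in a cutvertex, ensuring every cycle lies inside a single block and every block is a collection of squares. If the \cref{thm:tw:lower:cycles} construction is already of this shape (as the excerpt suggests it was designed to be), the verification is purely mechanical.
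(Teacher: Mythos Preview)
Your high-level intuition that the \CycleUndelHitPack and \SquareUndelHitPack lower bounds share a single construction is correct, but you have the logical dependence backwards and are missing the actual technical content.

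In the paper, \cref{thm:tw:lower:C4} is proven \emph{first}, by a direct reduction from plain 3-SAT (not a two-alternation tiling or CSP problem as you guess). Only afterwards is \cref{thm:tw:lower:cycles} derived from it, via exactly the exchange argument you sketch in step~(ii): one checks (\cref{lem:tw:lower:propertyCycles}) that in the constructed graph any cycle packing can be replaced by a $C_4$-packing of at least the same size, so the two problems coincide on these instances. Thus your proposal to ``reuse the reduction that establishes \cref{thm:tw:lower:cycles}'' is circular within this paper.

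More importantly, your proposal contains none of the idea that actually produces the $\log\pw$ factor. A naive 3-SAT reduction (as in \cref{lem:tw:lower:reduction} for triangles) encodes the clause index in binary using $\Oh(\log m)$ ``middle'' vertices, yielding pathwidth $\Oh(\log m)$ and hence only a $2^{2^{o(\pw)}}$ lower bound. The point of the $C_4$ construction is a \emph{permutation encoding}: the middle consists of two groups of $t$ vertices each, with $t$ chosen so that $t!\ge m$, and each clause is encoded by a perfect matching between the two groups. Because a $C_4$ can span both halves of the graph through two middle vertices, gadgets on the left and right can jointly realize such a matching. This gives pathwidth $\Oh(t)=\Oh(\log m/\log\log m)$, i.e.\ $m=2^{\Oh(\pw\log\pw)}$, and a hypothetical $2^{2^{o(\pw\log\pw)}}\cdot n^{\Oh(1)}$ algorithm would then solve 3-SAT in $2^{o(m)}$ time, contradicting ETH. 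Your proposal neither identifies this encoding nor explains why the bound carries the extra logarithmic factor, so as written it does not constitute a proof of the stated theorem.
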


For the case of general $H$ we provide a separate reduction.
In contrast to the matching lower bound for the case of
\SquareUndelHitPack,
we present a lower bound which is only matching for the case of
cliques
as for cliques we provide an improved algorithm.
For the case when $H$ is neither a clique nor a $C_4$,
it remains open to remove the logarithmic factor from the running time
or to improve the lower bound accordingly.
\begin{restatable}{theorem}{thmHpwLB}
\label{th:H-pw-lower-bound}\label{thm:tw:lower:H}\label{thm:tw:lower}
  Assuming ETH, for any fixed connected graph $H$ with at least three
  vertices,
  \UndelHitPack{H} has no $2^{2^{o(\pw)}}\cdot n^{\Oh(1)}$ time
  algorithm,
  where \pw\ is the pathwidth of the input graph.
\end{restatable}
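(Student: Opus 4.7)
The plan is to prove Theorem \ref{th:H-pw-lower-bound} by a polynomial-time reduction from a \SigTwoP-hard quantified SAT variant (such as \ExForThreeDNF) to \UndelHitPack{H}. Given a formula $\phi$ with existential variables $\mathbf{x}$ and universal variables $\mathbf{y}$ of total size $N$, I would build an instance $(G,U,k,\ell)$ whose pathwidth satisfies $\pw(G)=\Oh(\log N)$ and whose size is polynomial in $N$. Since ETH precludes solving such a quantified 3DNF instance in time $2^{o(N)}$, any hypothetical $2^{2^{o(\pw)}}\cdot n^{\Oh(1)}$ algorithm for \UndelHitPack{H} would, through this reduction, yield a $2^{N^{o(1)}}$ algorithm for the source problem, contradicting ETH. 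The deletion set $S$ of size at most $k$ would encode the existential assignment to $\mathbf{x}$, and the non-existence of a packing of $\ell$ disjoint $H$-copies in $G-S$ would encode the universal statement ``$\phi(\mathbf{x},\mathbf{y})$ holds for every $\mathbf{y}$.''

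Concretely, I would design three families of $H$-based gadgets: (i) a selector block \SelGadget per existential variable that forces a valid deletion of size $k$ to pick one of two rails representing true/false values of $\mathbf{x}$; (ii) a literal block \LitGadget per universal variable in which a partial packing must commit to one of two $H$-copies, corresponding to a value of $\mathbf{y}$; and (iii) a term block \TermGadget per term of $\phi$ that admits one additional $H$-copy if and only if the term is satisfied under the chosen $(\mathbf{x},\mathbf{y})$. The threshold $\ell$ would be set so that a packing of that size exists in $G-S$ if and only if some $\mathbf{y}$ falsifies $\phi$, matching the complement of the universal quantifier. Since $H$ is an arbitrary connected graph with $|V(H)|\ge 3$, I would first fix a distinguished edge or vertex of $H$ (based on its structural features such as a cut vertex, a bridge, or a pendant edge) and build every gadget from ``broken'' copies of $H$ that can only be completed into a true $H$-copy through the intended interface, which prevents unintended copies from forming across gadget boundaries.

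The crucial step is the pathwidth compression: laying out the existential, universal, and term gadgets naively would yield $\pw(G)=\Omega(N)$. Instead, I would thread them along a backbone that carries an $\Oh(\log N)$-wide binary address bus; each selector, literal, and term block would be reachable from the bus via a small decoding subgraph, so that at any position the bag of the path decomposition contains only the bus, the currently processed gadget, and a constant number of interface vertices. I expect the main obstacle to be making the gadget families uniform over every connected $H$ with $|V(H)|\ge 3$: the matching reductions for \SquareUndelHitPack and \CycleUndelHitPack exploit cycle-specific combinatorics, whereas an $H$-agnostic construction must simultaneously guarantee, for every such $H$, that each gadget admits exactly the intended local $H$-copies and that no ``crossing'' $H$-copy is accidentally introduced by the bus wiring. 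A case analysis on the structure of $H$ (tree versus non-tree, presence of a cut vertex or a bridge) is likely to be required. Once this rigidity is established, verifying the equivalence with the source problem and bounding the pathwidth by the width of the bus plus a constant become routine.
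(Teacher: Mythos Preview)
Your proposal has a genuine structural gap: reducing from \ExForThreeDNF cannot yield pathwidth $\Oh(\log N)$. In your setup the packing must encode a full assignment to the universal variables $\mathbf{y}$ (your literal blocks ``commit to one of two $H$-copies'' per universal variable), but a path decomposition of width $w$ can carry only $\poly(w)$ bits of packing state across any separator. With $w=\Oh(\log N)$ the number of distinguishable partial-packing types at a bag is $2^{\poly(\log N)}$, which is polynomial in $N$, whereas you need the packing to range over $2^{|\mathbf{y}|}=2^{\Omega(N)}$ choices so that every $\mathbf{y}$ can be tested. No amount of bus wiring or addressing fixes this: the information-theoretic bottleneck is the universal side, not the layout of gadgets.

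The paper sidesteps this by reducing from plain $3$-SAT, not from a \SigTwoP source. The key observation is that $3$-SAT already has an $\exists\forall$ shape, namely $\exists\,\pi\ \forall j\in[m]:\ C_j(\pi)$ holds, where the universal quantifier ranges over the $m$ \emph{clauses}, not over $2^{|\mathbf{y}|}$ assignments. The deletion set encodes the assignment $\pi$, and a maximum packing encodes merely a choice of clause index $j$; this takes only $\Oh(\log m)$ bits, which is exactly what a set $Z$ of $\Oh(\log m)$ shared vertices can support. Selector gadgets on one side pick $j$ bit by bit, literal gadgets on the other side are large precisely when their literal is unsatisfied, and $\ell$ is tuned so that a packing of size $\ell$ exists iff some clause has all three literals false. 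A hypothetical $2^{2^{o(\pw)}}n^{\Oh(1)}$ algorithm then solves $3$-SAT in $2^{o(m)}$ time, contradicting ETH directly. For general $H$, the paper does not redo the gadgetry; it first proves the bound for \TriUndelHitPack and then applies a generic pathwidth-preserving reduction to \UndelHitPack{H} built from Kirkpatrick--Hell diamond chains, which avoids the $H$-by-$H$ case analysis you anticipate.
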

It is known that \Hit{H} parameterized by $k$ and \Pack{H}
parameterized by $\ell$ are both fixed-parameter tractable (FPT),
in fact, for fixed $H$, they can be solved in time $2^{\Oh(k)}\cdot
n^{\Oh(1)}$
and $2^{\Oh(\ell)}\cdot n^{\Oh(1)}$, respectively.
For \Hit{H}, this follows from a simple bounded-depth search tree
algorithm, while color coding~\cite{DBLP:journals/jacm/AlonYZ95} or
representative set techniques~\cite{DBLP:journals/jacm/FominLPS16}
can be used for \Pack{H}.
There is an easy bounded-tree search tree algorithm
showing that \UndelHitPack{H} is FPT parameterized by $p \deff
k+\ell$.
\begin{theorem}
  \label{thm:general:k-plus-ell}
  For any fixed graph $H$ that might be unconnected,
  \UndelHitPack{H} can be solved in time $2^{\Oh(p\log p)}\cdot
  n^{\Oh(1)}$
  (where $p \deff k + \ell$).
\end{theorem}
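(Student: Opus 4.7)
The plan is to employ a simple bounded-depth branching algorithm on sub-instances of the form $(G,U,k,\ell)$. At every recursive call, we first invoke the known FPT algorithm for \Pack{H}, which decides in time $2^{\Oh(\ell)}\cdot n^{\Oh(1)}$ whether $G$ contains $\ell$ vertex-disjoint copies of $H$ (by color coding or representative sets, and this works also for unconnected $H$). If no such packing exists, then the empty set already witnesses that the current sub-instance is a \yes-instance, and we return \yes.

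Otherwise, the subroutine yields an explicit collection $C_1,\dots,C_\ell$ of vertex-disjoint copies of $H$, and we rely on the following branching observation: every deletion set $S\subseteq V(G)\setminus U$ of size at most $k$ that witnesses a \yes-answer must contain at least one vertex of $\bigcup_{i=1}^{\ell} V(C_i)$; otherwise all $\ell$ copies remain intact in $G-S$, contradicting that $G-S$ contains fewer than $\ell$ disjoint copies of $H$. We therefore branch over every deletable vertex $v\in \bigl(\bigcup_i V(C_i)\bigr)\setminus U$, recursing on $(G-v,\,U,\,k-1,\,\ell)$ in each branch and returning \yes whenever any branch does. If $k=0$, or if all vertices of the located packing lie in $U$, we return \no, since no valid $S$ can exist in that sub-instance.

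The recursion depth is at most $k$, and the branching factor is at most $\ell h$, where $h\deff |V(H)|$ is a fixed constant. Hence the recursion tree has at most $(\ell h)^k = 2^{\Oh(k\log \ell)}$ leaves. Each node spends $2^{\Oh(\ell)}\cdot n^{\Oh(1)}$ on the \Pack{H} check, so the overall running time is $2^{\Oh(k\log \ell + \ell)}\cdot n^{\Oh(1)} = 2^{\Oh(p\log p)}\cdot n^{\Oh(1)}$. Correctness follows by straightforward induction on $k$ using the branching observation.

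The one mildly technical point is that the base-case \Pack{H} subroutine must be available for potentially unconnected $H$; this is not a real obstacle, as standard color coding with $\ell h$ colors, derandomized by a perfect hash family and combined with a polynomial-time enumeration of color-class partitions into $\ell$ groups of size $h$, handles the general case within the required time bound.
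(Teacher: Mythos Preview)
Your proposal is correct and takes essentially the same approach as the paper: both run the $2^{\Oh(\ell)}\cdot n^{\Oh(1)}$ \Pack{H} algorithm, branch over the $\Oh(\ell)$ vertices of the found packing, and bound the search tree by $\ell^{\Oh(k)}=2^{\Oh(p\log p)}$. Your version is slightly more explicit about restricting the branching to deletable vertices and about handling unconnected $H$, but the core argument is identical.
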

\begin{proof}
First, we use the $2^{\Oh(\ell)}\cdot n^{\Oh(1)}$ \Pack{H}
algorithm
to find $\ell$ copies of $H$ if they exist
\cite{DBLP:journals/jacm/AlonYZ95,DBLP:journals/jacm/FominLPS16}.
The solution $S$ has to contain at least one of the
$|V(H)|\cdot \ell=\Oh(\ell)$ vertices of this packing.
Thus, we can branch on choosing one of these vertices,
delete that vertex from the graph,
and decrease our quota $k$ of deletions by one.
We repeat this process until either $k=0$
or there are no $\ell$ disjoint copies of $H$ in the graph.

This results in a search tree of size $\ell^{\Oh(k)}=2^{\Oh(p\log p)}$
and thus, concludes the proof.
\end{proof}
We leave open the potentially challenging question of
whether this running time of the algorithm
can be improved to $2^{\Oh(p)}\cdot n^{\Oh(1)}$:
none of the techniques used for \Hit{H} and \Pack{H}
seem directly relevant for such improvement.
In other words, it is easy to show that \UndelHitPack{H} is also FPT,
but whether quantitatively same FPT algorithms can be obtained
for this more general problem is a far from trivial question.

\paragraph*{\boldmath The Curious Case of \EdgeUndelHitPack.}
Let us consider now the case of \UndelHitPack{H} where $H$ consists
of only a single edge, i.e., $H=K_2$.
Then \EdgeHit is the \NP-complete problem \textsc{Vertex Cover},
while \EdgePack is the polynomial-time solvable \textsc{Maximum
Matching} problem.
This implies that, unlike in the cases where $H$ has at least 3
vertices, \EdgeUndelHitPack is in \NP:
given a solution $S$, we can verify in polynomial time that $G-S$ has
no matching of size $\ell$.

While we do not have a $2^{\Oh(k+\ell)}\cdot n^{\Oh(1)}$ time
algorithm for \UndelHitPack{H} for general $H$, we present a very
simple algorithm solving \EdgeUndelHitPack in time $3^{k+\ell}\cdot
n^{\Oh(1)}$.
The algorithm essentially relies on an augmenting-path argument,
hence it gives no indication on how other \UndelHitPack{H} problems
could be solved with a similar running time.

\begin{restatable}{theorem}{edgeparaalg}
  \label{th:k2-param-alg}
  \EdgeUndelHitPack can be solved in time $3^{k+\ell}\cdot
  n^{\Oh(1)}$.
\end{restatable}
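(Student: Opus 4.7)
The plan is to design a bounded-depth search tree algorithm based on iterative augmentation, in the spirit of the classical maximum-matching algorithm. We maintain a tentative deletion set $S \subseteq V(G) \setminus U$ and a matching $M$ in $G - S$, both initially empty, and recurse as follows. If $|M| \ge \ell$ or $|S| > k$, the current branch is pruned and returns \no. Otherwise, we search for an $M$-augmenting path $P$ in $G - S$; if none exists, then $M$ is a maximum matching of $G - S$ and $|M| \le \ell - 1$, so we return \yes. If such a $P$ does exist, with unmatched endpoints $u$ and $v$, we branch into at most three recursive calls: (1)~replace $M$ by $M \triangle P$ (augment), (2)~if $u \notin U$, replace $S$ by $S \cup \{u\}$, and (3)~if $v \notin U$, replace $S$ by $S \cup \{v\}$.

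For correctness, we argue that whenever a valid solution $S^\star$ exists, at least one leaf of the search tree returns \yes. The key invariant maintained along the ``correct'' root-to-leaf path is that $S \subseteq S^\star$ and that $M$ is a matching in $G - S^\star$. Given an augmenting path $P$ in $G - S$, its internal vertices are precisely the vertices saturated by $M$, which all lie in $V(G) \setminus S^\star$ by the invariant, so the only vertices of $P$ that can possibly belong to $S^\star$ are its endpoints $u$ and $v$. We therefore follow branch~(2) when $u \in S^\star$, branch~(3) when $u \notin S^\star$ but $v \in S^\star$, and branch~(1) when neither endpoint lies in $S^\star$; in the last case $P$ is entirely contained in $G - S^\star$, so $M \triangle P$ remains a matching in $G - S^\star$, preserving the invariant. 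Under the invariant we always have $|M| \le \mu(G - S^\star) \le \ell - 1$ and $|S| \le |S^\star| \le k$, so the correct branch is never pruned and must eventually terminate with ``no augmenting path'', returning \yes. Conversely, any \yes-leaf yields, by construction, a set $S \subseteq V(G) \setminus U$ with $|S| \le k$ and $\mu(G - S) = |M| \le \ell - 1$, which is a valid solution.

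For the running time, observe that each recursive step strictly increases either $|M|$ or $|S|$ by one, and both quantities are bounded by $\ell$ and $k$ respectively before pruning, so the recursion depth is at most $k + \ell$. With branching factor at most $3$, the tree has at most $3^{k+\ell}$ leaves; since each node performs only polynomial work to compute an augmenting path (e.g., via Edmonds' blossom algorithm for general graphs), the total running time is $3^{k+\ell} \cdot n^{\Oh(1)}$. The conceptual crux of the argument — and the main obstacle to overcome in the analysis — is the structural observation that internal vertices of an augmenting path are necessarily matched by $M$ and hence, under the invariant, never candidates for inclusion in $S^\star$. This is precisely what reduces the branching factor from $|V(P)|$, which can be as large as $\Theta(n)$, to the constant~$3$.
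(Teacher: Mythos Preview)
Your proof is correct and follows essentially the same approach as the paper: a depth-$(k+\ell)$ search tree that at each node finds an $M$-augmenting path and branches three ways on its two unmatched endpoints (delete $u$, delete $v$, or commit to keeping both). The only cosmetic difference is bookkeeping---the paper enlarges the undeletable set $U$ in the ``commit'' branch and tracks progress via the measure $k+\ell-1-\nu(G[U])$, whereas you maintain $M$ explicitly and augment it, tracking $|S|+|M|$; your invariant $S\subseteq S^\star$ and $M\subseteq E(G-S^\star)$ makes explicit what the paper condenses into a one-line appeal to Berge's theorem.
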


\Cref{th:H-pw-lower-bound} showed that double-exponential dependence
on
treewidth is needed to solve the \UndelHitPack{H} problem when $H$ is connected
and has at least 3
vertices, that is, already for the \TriUndelHitPack problem (we
will also denote this problem as \TriUndelHitPack).
However, \EdgeUndelHitPack can be solved with only exponential
dependence on treewidth.
\begin{restatable}{theorem}{edgetwalg}\label{th:k2-tw-alg}
   \EdgeUndelHitPack can be solved in time $2^{\poly(\tw)}\cdot
   n^{\Oh(1)}$,
  where $\tw$ is the treewidth of the graph.
\end{restatable}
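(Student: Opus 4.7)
The plan is to design a bottom-up dynamic programming algorithm on a nice tree decomposition of $G$. We reformulate the problem as: find $S \subseteq V(G) \setminus U$ with $|S| \le k$ such that the matching number satisfies $\nu(G-S) \le \ell - 1$. A standard treewidth-DP for matching would, at each bag $B_t$, track a profile $X \subseteq B_t$ of matched bag-vertices together with the partial matching size. The new difficulty over a pure matching DP is the adversarial deletion: two $S$-choices on forgotten vertices that reach the same bag profile may extend into matchings of very different sizes, so naive bundling of $S$-choices cannot correctly recover $\min_S \nu(G-S)$ at the root.

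To overcome this, for each partial deletion $S$ on the processed vertex set $V_t$, we bundle $S$-choices by the entire \emph{boundary-matching profile} they induce on the bag, encoded as the family
\[
  \mathcal{F}_{V_t, S} \;=\; \bigl\{\, X \subseteq B_t \setminus S : (G-S)[V_t] \text{ has a matching saturating exactly } (V_t \setminus B_t \setminus S) \cup X \,\bigr\}.
\]
The key point is that $\nu((G-S)[V_t])$ is fully determined by $\mathcal{F}_{V_t, S}$ together with the (known) number of non-deleted forgotten vertices, so any two $S$-choices inducing the same family are interchangeable for every possible extension. The DP state at $B_t$ is therefore a triple $(S_B, \mathcal{F}, k')$ with $S_B = S \cap B_t$ and $k' = |S \cap V_t|$, and we only need to know whether each such triple is realizable.

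The central technical contribution, and the main obstacle, is to prove a combinatorial bound of $2^{\poly(\tw)}$ on the number of distinct families $\mathcal{F}$ that can arise on the boundary of any single bag. The key observation is that $\mathcal{F}_{V_t, S}$ is a \emph{linear delta-matroid}: it is represented by the skew-symmetric matrix obtained from the Tutte matrix of $G[V_t] - S$ by Schur complementation (pivoting) on the forgotten coordinates, which yields a matrix of order at most $|B_t \setminus S_B| \le \tw + 1$. Working over a finite extension of polynomial size (so that the Tutte-matrix identity $\nu = \mathrm{rank}/2$ holds with high probability) and then canonicalizing these matrices modulo the change-of-basis freedom on the pivoted coordinates, we bound the number of distinct realizable families by $2^{\poly(\tw)}$.

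With the combinatorial bound in hand, the DP transitions are carried out by standard delta-matroid operations: an introduce-vertex node appends an isolated coordinate to the matrix and branches on whether the new vertex enters $S$, respecting $U$ and incrementing $k'$ accordingly; an introduce-edge node performs a rank-two skew-symmetric update; a forget-vertex node pivots on and then removes its coordinate; and a join node combines two boundary representations by a linear-algebraic gluing. Each operation runs in time polynomial in the representation size, and after each transition we canonicalize and merge duplicate states so the table stays within the $2^{\poly(\tw)}$ bound. The answer at the root (empty bag) is YES iff some reachable state has $k' \le k$ and the maximum matching size extractable from $\mathcal{F}$ is at most $\ell - 1$, giving the overall running time of $2^{\poly(\tw)} \cdot n^{\Oh(1)}$.
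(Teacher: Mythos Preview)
Your invariant $\mathcal{F}_{V_t,S}$ is too coarse. As you define it, $X\in\mathcal{F}_{V_t,S}$ requires a matching of $(G-S)[V_t]$ saturating \emph{all} forgotten non-deleted vertices together with exactly the set $X$ in the bag; this is the contraction of the matching delta-matroid onto the bag. That contraction does not determine $\nu((G-S)[V_t])$: at the root the bag is empty, so $\mathcal{F}$ collapses to the single bit ``does $G-S$ have a perfect matching?'', and when the answer is no you learn nothing about $\nu(G-S)$. Concretely, three isolated vertices versus one edge plus one isolated vertex both give $\mathcal{F}=\emptyset$ at the root (and the same forgotten-vertex count) yet have $\nu=0$ and $\nu=1$. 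So two $S$-choices with the same $\mathcal{F}$ are \emph{not} interchangeable, and your final readout ``maximum matching size extractable from $\mathcal{F}$'' is ill-defined precisely when it matters.

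The paper's invariant is instead the full function $f_0(A)=\nu(G_t-(D\cup A))$ over all $A\subseteq X_t\setminus D_0$; this is complete and is exactly what the generic $K_q$-\textsc{HitPack} DP already computes and propagates. The new work is a purely non-constructive count: first the perfect-matching indicator $S\mapsto[G-S\text{ has a PM}]$ is encoded by a $k\times k$ skew-symmetric matrix over a field of size $2^{k+2}$ (not polynomial size --- the paper's footnote flags that a smaller field does not survive the union bound over $2^{k}$ principal minors), giving $2^{\Oh(k^3)}$ possibilities; then the deficiency $S\mapsto\nu(G)-\nu(G-S)$ is reduced to the perfect-matching indicator by a graph-theoretic trick (adjoin enough universal vertices so that matchings of a prescribed deficiency become perfect matchings after deleting the right number of universals). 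This reduction is the missing bridge between the delta-matroid you wrote down and the quantity $\nu$ you actually need. Because the DP only stores \emph{realized} classes, the combinatorial bound alone caps the table size; no matrix canonicalization, no randomized substitution, and no explicit algebraic transition rules are required.
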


Let us give an intuitive explanation for this difference in running
time between
\EdgeUndelHitPack and \TriUndelHitPack.
There is a well-understood methodology for designing algorithms on
tree
decompositions: for each rooted subtree of the tree decomposition, we
define a
certain number of subproblems, each asking for the existence of a
certain class
of partial solutions. The running time typically depends on how many
equivalence
classes of partial solutions we need to consider. For example, in the
\TriPack problem, the class of partial packings is described by the
subset of the bag that is covered by the packing, so there are
$2^{\poly(\tw)}$
different classes of partial solutions. For the \TriHit problem, a
partial solution is a set of vertices that destroys every triangle in
a rooted subtree of the tree decomposition, and its class is
described by its intersection with the bag.

For the combination, \TriUndelHitPack, a partial solution is a set of
vertices that does not necessarily destroy every triangle in the
subtree of the tree decomposition, but may still leave some triangle
packings of size $<\ell$ in the graph.
Therefore, a partial solution $S$ can be described by what kind of
triangle packings survive after deleting $S$, that is, by describing
which subsets of the bag can be covered/avoided by triangle packings
of a certain size. This means that the class of a partial solution is
described by a set system over a bag of the decomposition.
As the set systems arising this way can be fairly arbitrary in the
\TriUndelHitPack problem, there are up to $2^{2^{\poly(\tw)}}$ such
set systems and hence up to that many different classes of partial
solutions.
This is the intuitive reason why a double-exponential dependence on
treewidth is
needed for the problem \TriUndelHitPack.

In the case of \EdgeUndelHitPack, the set systems describing a
partial solution show how the bag can be covered by matchings of a
certain size.
Such set systems have lots of structure and cannot be completely
arbitrary; in particular, they are related to (delta)-matroids.
Inspired by an argument of Wahlstr\"
om~\cite{DBLP:journals/corr/abs-2306-03605},
we give a combinatorial bound showing that such set systems can be
represented algebraically with $\Oh(\tw^3)$ bits,
hence there are at most $2^{\Oh(\tw^3)}$ different set systems that
can arise.
Interestingly, our proof is not algorithmic, but it is sufficient to
bound the running time of our algorithm. In fact, we have to make no
adjustment to the algorithm of \cref{thm:twUpper:arbitrary}
solving \UndelHitPack{H} when $H$ is a clique: the algorithm was
designed in a way that a combinatorial proof on the relevant set
systems immediately bounds the running time of the algorithm.

\begin{table}[t]
  \newcommand{\polyfac}{\ensuremath{\cdot n^{\Oh(1)}}}
\newcolumntype{Y}{>{\centering\arraybackslash}X}
\newcolumntype{Z}{>{\raggedright\arraybackslash}X}
\newcolumntype{W}{>{\raggedright\arraybackslash}X}
\newcommand\T{\rule{0pt}{5ex}}       %
\newcommand\B{\rule[-1.2ex]{0pt}{0pt}} %
\noindent

\aboverulesep = 0pt
\belowrulesep = 0pt
\renewcommand{\arraystretch}{1.4}
\crefformat{theorem}{\hfill\footnotesize{Th.~#2#1#3}}
\crefformat{linked-theorem}{\hfill\footnotesize{Th.~#2#1#3}}

\begin{tabularx}{\textwidth}
	{
		>{\hsize=.22\hsize}X||
		>{\hsize=.21\hsize}Z|
		>{\hsize=.215\hsize}Z|
		>{\hsize=.26\hsize}Z|
		>{\hsize=.19\hsize}W}
	{{Object} $\mathcal X$}
    & \multicolumn{1}{c|}{{UB  $p=k+\ell$}}
    & \multicolumn{1}{c|}{{UB Treewidth}}
    & \multicolumn{1}{c|}{{LB Treewidth}}
    & \multicolumn{1}{c }{{Completeness}}
    \\
    \hhline{=::====}
		{Edge}
      & $3^p$ {\cref{th:k2-param-alg}}
      & $2^{\poly(\tw)}${\cref{th:k2-tw-alg}}
      & no $2^{o(\tw)}${\hfill\footnotesize\cite[Th.~1]{LokshtanovMS18}}
      &\NP{\hfill\footnotesize\cite[Th.~3.3]{garey-johnson-guide}}
      \\\hhline{-||----}
		{Triangle}
      & %
      &
      \multirow{2}{=}[-0.0ex]{$2^{2^{\Oh(\tw)}}${\cref{thm:twUpper:clique}}}
      & %
      & $\SigTwoP${\cref{thm:sig2p:hardness:tripart}}
      \\\hhline{-||~|~|~|-}
		\raisebox{.2ex}{{$q$-Clique}}
      & $2^{\Oh(p \log
      p)}${\cref{thm:general:k-plus-ell}}
      & %
      & no $2^{2^{o(\tw)}}${\cref{th:H-pw-lower-bound}}
      & %
      \\\hhline{-||~|~|~|-}
		{Conn. $H$, 3+ vert.}
      & %
      & \multirow{2}{=}[-0.0ex]{$2^{2^{\Oh(\tw \log
      \tw)}}${\cref{thm:twUpper:arbitrary}}}
      & %
      & $\SigTwoP${\cref{lem:sig2p:hardness:H}}
      \\\hhline{-||-|~|-|~}
		\raisebox{.1ex}{{Square}}
      & $2^{\Oh(p \log p)}${\cref{thm:cycle-k-l}}
      & %
      & no $2^{2^{o(\tw \log \tw)}}${\cref{thm:tw:lower:C4}}
      & %
      \\\hhline{-||-|-|-|-}
		{Cycles}
      & $2^{\poly(p)}${\cref{thm:cycle-k-l}}
      & $2^{2^{\Oh(\tw \log
      \tw)}}${\cref{thm:twUpper:cycle}}
      & no $2^{2^{o(\tw \log
      \tw)}}${\cref{th:H-pw-lower-bound}}
      & $\SigTwoP${\cref{thm:sig2p:completeness:cycles}}
      \\
\end{tabularx}

  \caption{%
  An overview of the main results for \UndelHitPack{\mathcal X}.
  For ease of comparability, we omit the common factor
  $n^{\mathcal{O}(1)}$ from the FPT running times.
  }
  \label{tab:overview}
\end{table}
\belowrulesep = 0.984mm

\crefformat{theorem}{Theorem~#2#1#3}
\crefformat{linked-theorem}{Theorem~#2#1#3}

\paragraph*{Discussion and Open Problems.}
We have initiated the study of a natural common generalization of
hitting problems and packing problems. Certain basic techniques for
hitting and packing problems can be lifted to this generalization,
but we have seen that the generalization can be significantly harder
and more challenging, requiring us to revisit classic problems from a
new perspective. The familiar landscape of hitting and packing
problems with their known properties and well-established techniques
is replaced by a strange world where many of the known techniques are
inapplicable, new techniques have to be brought in, and the problem
has to be approached with a completely different mindset that takes
into account the more complicated quantifier structure of the problem
definition.

We have presented a selection of algorithmic results and lower bounds
for \UndelHitPack{\mathcal{X}} problems, but they probably just
scratch the surface of a rich family of unexplored challenging
problems. We list a few open questions and potential research
directions to stimulate further work in this area.

\begin{itemize}
\item Is there a $2^{\Oh(k+\ell)}\cdot n^{\Oh(1)}$ time algorithm for
\UndelHitPack{H} for every fixed (connected) graph $H$?
\item Is the \UndelHitPack{\mathcal{X}} problem FPT in $k$ and $\ell$
where $\mathcal{X}$ are the odd cycles in the graph?
Note that the corresponding hitting problem \textsc{Odd Cycle
Transversal}
is well-known to be FPT by different techniques
\cite{DBLP:journals/orl/ReedSV04,DBLP:journals/talg/LokshtanovNRRS14,DBLP:journals/talg/RamanujanS17},
and \textsc{Odd Cycle Packing} is also FPT using an extension of
graph minor algorithms with parity conditions
\cite{DBLP:conf/focs/KawarabayashiRW11,DBLP:conf/stoc/KawarabayashiR10}.
\item Is the \UndelHitPack{\mathcal{X}} problem FPT in $k$ and $\ell$
where $\mathcal{X}$ are the induced cycles of length at least 4 in
the graph?
The hitting problem \textsc{Chordal Deletion}
\cite{DBLP:journals/talg/AgrawalLMSZ19,DBLP:journals/siamdm/JansenP18,DBLP:journals/algorithmica/CaoM16,DBLP:journals/algorithmica/Marx10}
and the packing problem \textsc{Chordless Cycle Packing}
\cite{DBLP:conf/esa/Marx20} are both FPT.
\item In general, one could explore if induced versions of the
\UndelHitPack{H} problems are different compared to the case when we
are considering not necessarily induced subgraphs isomorphic to $H$.
  \item We defined our framework in terms of removing vertices and
  vertex-disjoint packings, but one could analogously study a problem
  defined by removing edges and edge-disjoint packings. This setting
  may pose very different challenges compared to the problems studied
  in this paper.
  \item A natural generalization of \CycleUndelHitPack is to consider
  the
  \UndelHitPack{\mathcal{X}} problem where $\mathcal{X}$ is the set
  of all
  minor models of a fixed graph $H$ (\CycleUndelHitPack is equivalent
  to
  the case when $H$ is $K_3$). Observe that if we denote by
  $\ell\cdot H$ the graph consisting of $\ell$ disjoint copies of
  $H$, then the \UndelHitPack{\mathcal{X}} problem defined for minor
  models of
  $H$ is equivalent to removing $k$ vertices such that the resulting
  graph does not contain $\ell\cdot H$ as a minor. Problems of this form
  where intensively studied
  \cite{DBLP:conf/soda/AdlerGK08,DBLP:conf/stoc/FellowsL89,DBLP:conf/stoc/FominLP0Z20}.
   Thus this gives a way of solving the problem for fixed $k$,
  $\ell$, and $H$, but understanding the optimal form of the running
  time could be an interesting question. The same problem can be
  studied also in the context of topological minors.
  \item One could ask how the Erd\H os-P\'osa Property relates to the
  complexity of the \UndelHitPack{\mathcal{X}} problem, but it is not obvious how to formulate this question in a way that leads to meaningful results. Note first
  that for
  problems involving copies of a fixed graph $H$, the Erd\H os-P\'osa
  Property trivially holds (in some sense, this was implicitly used
  by the simple algorithm of \cref{thm:general:k-plus-ell}). The
  algorithm of Theorem~\ref{thm:cycle-k-l} explicitly used the Erd\H
  os-P\'osa Property for cycles as a starting step. This might be a
  useful starting step in other cases where $H$-minor models satisfy
  this property (which is known to be the case exactly when $H$ is
  planar). However, note that the argument sketched in the previous
  item works irrespective of whether $H$-minor models satisfy
  the Erd\H os-P\'osa Property, although it may affect the running
  time.

  \item Tournaments (i.e., directed graphs with exactly one directed
  edge between any pair of vertices)
form a well-studied class of directed graphs where many hitting and
packing problems are more tractable compared to general directed
graphs
\cite{DBLP:journals/algorithmica/BessyBKSSTZ21,DBLP:journals/talg/LokshtanovMMPPS21,DBLP:journals/algorithmica/Bang-JensenMS16,DBLP:conf/stacs/KumarL16,DBLP:journals/algorithmica/XiaoG15,DBLP:journals/mst/MisraRRS13,DBLP:journals/jcss/BessyFGPPST11,DBLP:conf/aaai/FominLRS10,DBLP:journals/jda/DomGHNT10,DBLP:journals/tcs/RamanS06,DBLP:journals/jct/ChudnovskyFS12,DBLP:journals/jct/ChudnovskySS19}.
Which of these results generalize to the combined hitting and packing
problem?
\item Investigating the approximability of \UndelHitPack{\mathcal{X}}
problems is another completely unexplored area. The proper notion of
approximation for these kind of problems seems to be the following:
if there is a solution $S$ of size $k$ such that $G-S$ has no $\ell$
disjoint objects of type $\mathcal{X}$, can we find a set $S'$ of
size at most $c\cdot k$ such that $G-S'$ has no $c\cdot\ell$ disjoint
objects of type $\mathcal{X}$. That is, it this approximate sense, it is ok to find a somewhat larger set $S'$ that has the somewhat weaker property that it prevents only packings of $c\cdot\ell$ disjoint objects.
\end{itemize}

\section{Technical Overview}\label{sec:technical-overview}
In this section, we give a brief overview of our results,
highlighting the main technical ideas and putting them in context.
The remaining sections of the paper prove these results in the order
presented below. Note that, besides these individual technical
contributions, it can be considered an equally important conceptual
contribution that we demonstrate that the combination of hitting and
packing can lead to a wide range of interesting and challenging
problems.

\subsection{Algorithmic Results}

\subparagraph*{\boldmath $2^{\poly(k+\ell)}\cdot n^{\Oh(1)}$ Time
Algorithm for \CycleUndelHitPack.}
As noted earlier, we may assume in this problem that the graph $G$
has a
feedback vertex set $F$ of size $\Oh((k+\ell)\poly(k+\ell))$,
otherwise the
Erd\H os--P\' osa Theorem implies that the answer is no. Instead of
using the
fact that this gives a bound on the treewidth and trying to use a
general algorithm
parameterized by treewidth, we present an algorithm with running time
$2^{\poly(k+|F|)}\cdot n^{\Oh(1)}$ where $F$ is a feedback vertex set.

With a standard branching step, we can guess which vertices of $F$
are in the solution, remove these vertices from $G$, adjust $k$
appropriately, and then assume that the feedback vertex set $F$ is
undeletable.
To sketch the main ideas of the proof,
let us assume that $G-F$ is not only a forest,
but every component of $G-F$ is a path.
If $C$ is a cycle in $G$, then it contains at least one vertex of $F$,
and $C-F$ consists of one or more (sub-)paths in $G-F$.
If a graph contains a packing of $\ell$ cycles,
then we may assume that each cycle is an induced cycle
(this can be achieved by possibly shortening some cycles).
If $C$ is an induced cycle,
then every path $P$ of $C-F$ is of the following form:
$P$ goes from a neighbor of some $f_1\in F$
to a neighbor of some $f_2\in F$ (possibly $f_1=f_2$) such that
the internal vertices of $P$ are adjacent to neither $f_1$ nor $f_2$.
Let us call this a \textit{usable path}.

Suppose that we have a packing of $\ell$ (induced) cycles in $G$.
The solution has to contain a vertex of a cycle $C$ of this packing,
that is, a vertex of one of the paths $C-F$ (as $F$ is undeletable).
We branch on choosing a cycle $C$ of the packing
and choosing a path $P$ of $C-F$ that is broken by the solution,
but we \textit{do not} choose a vertex of $P$.
Instead, we put $P$ into a collection $\mathcal{P}$ of forbidden
paths that need to be broken by the solution.
Then we find a packing of $\ell$ cycles that does not use any of the
forbidden paths.
Such a collection can be found by branching on the number and type of
paths in the packing and then by a dynamic programming algorithm that
scans paths in $G-F$ in a left-to-right order and tries to find
disjoint paths of these types that are not on the forbidden list
$\mathcal{P}$.
Once we have such a collection, we once again branch on a path that
has to be broken by the solution and put it into the collection
$\mathcal{P}$.
We repeat this procedure as long as we are able to find an
appropriate packing.

If the algorithm is not able to find a packing of $\ell$ cycles that
does not use any forbidden path, then we need to check if there is a
set of $k$ vertices that can break every forbidden path in
$\mathcal{P}$. This can be done by a simple polynomial-time algorithm
(find a minimum number of points covering a set of intervals). If
there is such a set $S$, then it forms a solution; if there is no
such set, then this is an incorrect branch of the algorithm.

To bound the running time, we need to bound the depth of the search
tree, that is, the number of paths we put into the solution. The key
observation is that a vertex $v$ can cover at most $|F|^2$ different
usable paths. If $v$ covers a useful path $P$ from $u_1$ to $u_2$,
then $u_1$ should be the last vertex before $v$ that is the neighbor
of some $f_1\in F$ and $u_2$ is the first vertex after $v$ that is
the neighbor of some $f_2\in F$. Therefore, if
$|\mathcal{P}|>k|F|^2$, then surely there is no set $S$ of $k$
vertices intersecting all these paths. This observation gives a
$\poly(k+|F|)$ bound on the height of the search tree. As we branch
into $\poly(|F|)$ cases in each step, the claimed running time
follows.

With additional work, this algorithmic idea can be extended to the
case when $G-F$ is not a collection of paths, but a general forest.
The situation becomes significantly more complicated due to high
degree vertices in the forest, paths with many branch nodes, and
other issues, but the difficulties can be overcome by additional
layers of arguments (see Section~\ref{sec:cycle-fvs} for details).

\subparagraph*{\boldmath $3^{k+\ell}\cdot n^{\Oh(1)}$ Time Algorithm
for \EdgeUndelHitPack.}
Let us sketch a very simple branching algorithm. We measure our
progress by $k+\ell+1-\nu(G[U])$, where $\nu(G[U])$ is the size of
the maximum matching in the graph induced by the undeletable
vertices. Let us find a maximum matching $M$ in $G[U]$. We have
$\nu(G[U])<\ell \le \nu(G)$, hence there is an augmenting path
increasing the size of $M$. Let $u$ and $v$ be the two endpoints of
the augmenting path. We branch into three directions:
\begin{itemize}
\item $u$ is in the solution: remove $u$, decrease $k$ by one.
\item $v$ is in the solution: remove $v$, decrease $k$ by one.
\item neither $u$ nor $v$ is in the solution: put $u$ and $v$ into
$U$.
\end{itemize}
As the last branch strictly increases the size of the maximum
matching in $U$, we can conclude that the measure
$k+\ell+1-\nu(G[U])$ strictly decreases in each branch, giving a
bound of $k+\ell$ on the depth of the search tree.

  \subparagraph*{\boldmath $2^{2^{\Oh(\tw)}}\cdot n^{\Oh(1)}$ Time
  Algorithm for \UndelHitPack{H} When $H$ Is a Clique.}
  A typical way of designing algorithms for bounded-treewidth graphs
  is the following. Let us recall the definition of tree
  decompositions.
A {\em{tree decomposition}} of graph $G$ is a rooted tree $T$ with a
collection $\{X_t\subseteq V(G) \mid t\in V(T)\}$ of sets called
{\em{bags}}. The conditions for a
tree decomposition are:
\begin{itemize}
    \item For any vertex $u$ in $G$, the nodes in $T$ with bags
    containing $u$
        form a connected subtree of $T$.
    \item For any edge $uv$ in $G$, there exists a node in $T$ with a
    bag
        containing both $u$ and $v$.
\end{itemize}
The {\em{width}} of a tree decomposition $(T,B)$ is $\max_{t\in
V(T)}|X_t|-1$.
The {\em{treewidth}} $\tw$ of $G$ is the minimum possible width of a
tree
decomposition.
\emph{Pathwidth} is defined similarly, with $T$ restricted to be a
path.
We denote by $V_t$ the set of vertices
appearing in the bags of the nodes in the subtree rooted at some node
$t$.
Similarly, we define $G_t$ as the graph induced by the vertices
or rather given by the edges introduced in the subtree rooted at node $t$.

For \UndelHitPack{H},
the solution $S$ has a part $S\cap V_t$ that somehow influences
packings of cliques that intersect $V_t$.
A key observation is that a clique $K$ is either fully contained in
$V_t$, or intersects only the root bag $X_t$, i.e., $K\cap V_t=K\cap
X_t$.
Based on this observation, we can argue that the effect of $S\cap
V_t$ can be described by the following information:
  \begin{itemize}
  \item The intersection $S\cap X_t$.
  \item For every $D\subseteq X_t$, the maximum size of a packing in
  $G[V_t\setminus S]-D$.
  \end{itemize}
  Note that if the maximum packing size in $G[V_t\setminus S]$ is
  $m$, then the
  maximum packing size in $G[V_t\setminus S]-D$ is between $m-|D|$
  and $m$. Thus
  all the relevant information about $S\cap V_t$ can be described by
  the set
  $S\cap X_t$ ($2^{\tw+1}$ possibilities) and by a sequence of
  $2^{\tw+1}$
  integers between 0 and $\tw+1$ ($(\tw+2)^{2^{\tw+1}}$
  possibilities), leading
  to a bound of $2^{2^{\Oh(\tw)}}\cdot n$ different ways $S\cap X_t$
  can
  behave. With this bound at hand, we can follow the standard
  methodology of
  designing algorithms on tree decompositions: we define subproblems
  at each node
  $t$ corresponding to the different behaviors of $S$ and solve these
  subproblems in a bottom-up manner. The dominating factor of the
  running time
  is the number of subproblems at each node of the tree
  decomposition, leading
  to a $2^{2^{\Oh(\tw)}}\cdot n^{\Oh(1)}$ time algorithm.

\subparagraph*{\boldmath $2^{\poly(\tw)}\cdot n^{\Oh(1)}$ Time
Algorithm for \EdgeUndelHitPack.}
When $H$ is a single edge, we can improve the running time the
following way. As
noted above, the behavior of the set $S\cap V_t$ can be described by
the set
$S\cap X_t$ and by a function showing how the size of the maximum
matching
decreases if we remove a subset $D\subseteq X_t$, that is, by the
function
$g(D)=\nu(G[V_t\setminus S])-\nu(G[V_t\setminus S]-D)$. Because of
the highly
structured nature of the matching problem, this function cannot be
arbitrary and
can be compactly described, hence the number of possibilities is much
smaller
than $2^{2^{\Oh(\tw)}}$. Let us first consider the related function
$f(D)$,
which is 0 or 1 depending on whether $G[V_t\setminus S]-D$ has a
perfect
matching or not. This function describes a so-called
\emph{delta-matroid} and
has an algebraic representation as a skew-symmetric matrix. Following
a proof
sketch\footnote{The given proof sketch does not treat the question of
what field to choose and the obvious way of handling this issue does
not lead to the claimed bound (as confirmed by the author). Our proof
needs additional arguments to ensure the existence of a
representation over a suitable field.} of Wahlstr\"om
\cite{DBLP:journals/corr/abs-2306-03605}, this matrix can
be turned into a representation with $\Oh(\tw^3)$ bits. Formally, in~\cref{sec:matroid} we prove the following lemma.

\begin{restatable}{lemma}{Lemmatchfunction}\label{lem:matchfunctionbound}
Let $G$ be an $n$-vertex graph over a  vertex set $V\supseteq [k]$ for some integer $k$.
 Let
$f_{G,k} \from 2^{[k]}\to \mathbb{Z}^+$  be the function defined by
$f_{G,k}(S)=\nu(G-S)$. For each $k$ and $n$, there are $n\cdot 2^{\Oh(k^3)}$ functions $f_{G,k}$ that can arise this way.
\end{restatable}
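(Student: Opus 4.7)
The plan is to encode each function $f_{G,k}$ by a short algebraic certificate consisting of an integer $r \in \{0, 1, \ldots, n\}$ together with a skew-symmetric $k \times k$ matrix $M$ over a field of size $2^{\Oh(k)}$; counting such certificates will then yield the claimed $n \cdot 2^{\Oh(k^3)}$ bound. To obtain the matrix, I would invoke the matrix form of Tutte's theorem: for a sufficiently large field $\mathbb{F}$ there is a skew-symmetric matrix $A \in \mathbb{F}^{V \times V}$, with $A_{uv} = 0$ whenever $uv \notin E(G)$, such that $\mathrm{rank}(A[V \setminus S, V \setminus S]) = 2 f_{G,k}(S)$ holds for every $S \subseteq [k]$ simultaneously. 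A vanilla Schwartz--Zippel union bound delivers such an $A$ only over a field of size $\mathrm{poly}(n) \cdot 2^k$; the sharper statement needed here is that a field of size $2^{\Oh(k)}$ suffices. This follows along the lines of Wahlstr\"om's sketch~\cite{DBLP:journals/corr/abs-2306-03605} by first writing $A$ over an infinite field, performing the compression of the next step symbolically, and only at the end specialising to a finite field chosen so that all relevant principal Pfaffians remain non-vanishing.

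Next I would compress $A$ into a canonical $k \times k$ object via a Schur complement. Let $W = V \setminus [k]$ and $r = \nu(G[W])$, and fix $W_0 \subseteq W$ of size $2r$ with $A[W_0, W_0]$ non-singular; such a $W_0$ exists because $G[W_0]$ admits a perfect matching and generic entries make the corresponding $2r \times 2r$ Tutte submatrix invertible. The skew-symmetric Schur-complement formula with pivot block $A[W_0, W_0]$ produces a skew-symmetric matrix $B$ on $(W \setminus W_0) \cup [k]$ with $B[W \setminus W_0, W \setminus W_0] = 0$ satisfying
\[
\mathrm{rank}\bigl(A[W \cup T, W \cup T]\bigr) \;=\; 2r + \mathrm{rank}\bigl(B[(W \setminus W_0) \cup T, (W \setminus W_0) \cup T]\bigr)
\]
for every $T \subseteq [k]$. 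Because the principal block of $B$ on $W \setminus W_0$ vanishes, the rows and columns indexed by $W \setminus W_0$ can be folded into the $[k]$-block to yield a single skew-symmetric matrix $M \in \mathbb{F}^{k \times k}$ together with a canonical offset set, such that the pair $(r, M)$ alone determines $f_{G,k}$ via the identity $2 f_{G,k}(S) = 2r + \mathrm{rank}(M[[k] \setminus S, [k] \setminus S])$ after accounting for the offset.

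For the count, the integer $r$ contributes at most $n + 1$ options, and with $\mathbb{F}$ of size $2^{\Oh(k)}$ there are $|\mathbb{F}|^{\binom{k}{2}} = 2^{\Oh(k^3)}$ skew-symmetric matrices $M$. Multiplying gives at most $n \cdot 2^{\Oh(k^3)}$ distinct certificates, each determining a unique $f_{G,k}$. The main obstacle is the small-field step: a naive Schwartz--Zippel union bound inflates $|\mathbb{F}|$ to $\mathrm{poly}(n) \cdot 2^k$, which would blow the final count up by an unacceptable $n^{\Oh(k^2)}$ factor. Overcoming this requires the delta-matroid specific field-choice argument alluded to in the footnote of the excerpt -- one cannot simply pick entries generically over a fixed prime field but must work over a carefully constructed extension that simultaneously certifies all $2^k$ required minor non-vanishings, and this is where the proof's real work lies.
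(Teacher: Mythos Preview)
Your route—working directly with the rank of the Tutte matrix and compressing by a Schur complement—is a legitimate alternative to the paper's argument, but it differs in structure. The paper does not compress ranks directly; it first reduces $f_{G,k}$ to a perfect-matching \emph{indicator} $h_{G^*,k_0}(S)=[G^*-S\text{ has a perfect matching}]$ by a purely graph-theoretic construction (attach $n-2\nu(G)+2k$ universal vertices to $G$ and enlarge $[k]$ to $[k_0]$ with $k_0\le 4k$), and only then applies delta-matroid contraction to obtain a $k_0\times k_0$ matrix. Your approach skips the graph detour, which is conceptually cleaner, but the compression step is more delicate than you state.

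Two concrete issues. First, the claim that folding yields a $k\times k$ matrix $M$ with $2f_{G,k}(S)=2r+\mathrm{rank}(M[[k]\setminus S,[k]\setminus S])$ is not justified. After pivoting on $W_0$ you obtain $B$ on $(W\setminus W_0)\cup[k]$ with $B[W\setminus W_0,W\setminus W_0]=0$; row-reducing the cross block $B[W\setminus W_0,[k]]$ only shrinks the always-included part to at most $\mathrm{rank}\,B[W\setminus W_0,[k]]\le k$ rows, not to nothing. You therefore end up with a skew-symmetric matrix of size at most $2k\times 2k$ together with a fixed/variable partition of its index set. This is still $2^{\Oh(k^3)}$ options over a field of size $2^{\Oh(k)}$, so the final count survives, but the certificate is not the pair $(r,M)$ you describe.

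Second, you correctly flag the small-field step as the crux but do not carry it out, and you overstate its difficulty. The fix is neither delta-matroid specific nor tied to Wahlstr\"om's sketch: every principal minor of the symbolic Tutte matrix has degree at most~$2$ \emph{in each individual variable} $x_{ij}$, so the product of the at most $2^k+\Oh(1)$ minors you need nonzero (one $2\nu(G-S)\times 2\nu(G-S)$ minor of $A[V\setminus S]$ per $S\subseteq[k]$, plus $\det A[W_0,W_0]$) has per-variable degree at most $2\cdot 2^k+\Oh(1)$. Over any field with $|\mathbb{F}|>2^{k+2}$, a nonzero polynomial whose degree in each variable is below $|\mathbb{F}|$ cannot vanish identically, so a single substitution over such $\mathbb{F}$ certifies all conditions at once. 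This per-variable (rather than total-degree) argument is exactly what the paper uses and is what replaces your ``vanilla Schwartz--Zippel'' bound.
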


Then a simple graph-theoretic
construction can be used to compactly describe the function $g$ using
the compact representation of the function $f$. Interestingly, our
proof of obtaining the algebraic representation is not algorithmic,
but it is sufficient for our purposes: the dynamic-programming
algorithm on the tree decomposition can be designed in a way that it
needs only a combinatorial bound on the number of different
subproblems that has a solution in the current graph, but does not
need to be able to compute which subproblems have no solution in any
graph. Therefore, the algorithm for \UndelHitPack{H} with $H$ being a
clique does not need any modification at all to achieve this running
time.

The crucial combinatorial insight that allows us to achieve $\Oh(\tw^3)$-bits
and prove~\cref{lem:matchfunctionbound} is the following statement.

\begin{linkedrestatable}{lemma}{Lemperfmatchbound}
    \label{lem:perfmatchbound}
    Let $G$ be a graph over a vertex set $V\supseteq [k]$ for some
    integer $k$.
  Let us define the function $h_{G,k}\from 2^{[k]}\to \{0,1\}$ the
  following way:
  \[
      h_{G,k}(S)=\begin{cases}
      1 & \text{if $G-S$ has a perfect matching,}\\
      0 & \text{otherwise.}
    \end{cases}
  \]
  For each integer $k$, the number of distinct functions $h_{G,k}$ is
  $2^{\Oh(k^3)}$.
\end{linkedrestatable}
Note, that naively the number of such functions is double-exponential
in $k$.
Now, we
present the proof of~\cref{lem:perfmatchbound} and use algebraic tools, in
particular,
representation of linear delta-matroids and multivariate polynomials
over finite fields.
For a field $\mathbb{F}$, we denote by $\mathbb{F}[x_1,\dots, x_n]$
the ring of $n$-variable polynomials with coefficients from
$\mathbb{F}$.
We have to be careful to make a distinction between the \emph{zero
polynomial}
of $\mathbb{F}[x_1,\dots, x_n]$,
which is the polynomial where every coefficient is zero,
and a \emph{vanishing polynomial} over $\mathbb{F}$,
which is a polynomial that is $0$ for every substitution of values
from $\mathbb{F}$ to the variables.
For example, $x^{|\mathbb{F}|}-x$ is a nonzero vanishing polynomial
over the field $\mathbb{F}$.
The following observation can be used to argue that some nonzero
polynomial is not vanishing.

\begin{observation}\label{obs:poly}
If $P\in \mathbb{F}[x_1,\dots,x_n]$ is an $n$-variable polynomial
over the field $\mathbb{F}$ and every variable $x_i$ has degree less
than $|\mathbb{F}|$ in $P$, then $P$ is a vanishing polynomial if and
only if it is the zero polynomial.
\end{observation}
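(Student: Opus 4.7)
The plan is a straightforward induction on the number of variables $n$, using as the base case the classical fact that a nonzero single-variable polynomial of degree $d$ over a field has at most $d$ roots. The backward direction (the zero polynomial vanishes) is trivial, so everything reduces to showing that a vanishing polynomial with each variable of degree less than $|\mathbb{F}|$ must be the zero polynomial.

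For the base case $n = 1$, I would argue as follows. Let $P \in \mathbb{F}[x_1]$ have degree $d < |\mathbb{F}|$. If $P$ is nonzero, then it has at most $d < |\mathbb{F}|$ roots in $\mathbb{F}$, so there exists some $a \in \mathbb{F}$ with $P(a) \neq 0$, contradicting the assumption that $P$ is vanishing. (If $\mathbb{F}$ is infinite, the condition on the degree is automatic and the same argument applies.)

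For the inductive step, suppose the statement holds for polynomials in $n-1$ variables and let $P \in \mathbb{F}[x_1,\dots,x_n]$ satisfy the degree hypothesis and be vanishing. I would write
\[
P(x_1,\dots,x_n) \;=\; \sum_{i=0}^{d} c_i(x_1,\dots,x_{n-1})\, x_n^{i},
\]
where $d < |\mathbb{F}|$ and each coefficient $c_i \in \mathbb{F}[x_1,\dots,x_{n-1}]$ inherits the property that every variable appears with degree less than $|\mathbb{F}|$. For any fixed tuple $(a_1,\dots,a_{n-1}) \in \mathbb{F}^{n-1}$, the univariate polynomial $P(a_1,\dots,a_{n-1},x_n)$ has degree less than $|\mathbb{F}|$ and vanishes on all of $\mathbb{F}$, so by the base case each of its coefficients $c_i(a_1,\dots,a_{n-1})$ equals $0$. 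Since this holds for every choice of $(a_1,\dots,a_{n-1})$, each $c_i$ is itself a vanishing polynomial in $n-1$ variables with the required degree bound, and the inductive hypothesis forces $c_i$ to be the zero polynomial. Hence $P$ itself is the zero polynomial.

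There is no real obstacle here: the only subtlety worth flagging is the necessity of the per-variable degree bound, which is exactly what enables the reduction to the single-variable case at each step (and distinguishes the statement from the failure exhibited by $x^{|\mathbb{F}|}-x$, whose degree equals $|\mathbb{F}|$).
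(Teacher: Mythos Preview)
Your proof is correct. The paper, however, justifies the observation via a different (and shorter) argument: it counts. The set of polynomials in $\mathbb{F}[x_1,\dots,x_n]$ with every variable of degree less than $|\mathbb{F}|$ has size exactly $|\mathbb{F}|^{|\mathbb{F}|^n}$, which coincides with the number of functions $\mathbb{F}^n\to\mathbb{F}$; Lagrange interpolation shows the evaluation map from such polynomials to functions is surjective, hence bijective, so only the zero polynomial evaluates to the zero function. Your inductive argument is more hands-on and arguably more elementary, since it only needs the root bound for univariate polynomials rather than multivariate Lagrange interpolation; it also transparently covers the infinite-field case (where the degree hypothesis is vacuous), whereas the paper's counting argument is tailored to finite $\mathbb{F}$. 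The paper's route, on the other hand, is a one-line pigeonhole once you accept interpolation, and it makes the ``exactly one polynomial per function'' statement explicit, which is a slightly stronger conclusion than what you prove.
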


The proof of~\cref{lem:perfmatchbound} follows from the fact that
polynomials $P\in\mathbb{F}[x_1,\dots,x_n]$
where every variable has degree less than $\mathbb{F}$
are in one-to-one correspondence with functions
$f\from \mathbb{F}^n\to \mathbb{F}$.
Indeed, both sets have size exactly $|\mathbb{F}|^{|\mathbb{F}|^n}$
and Lagrange interpolation shows that for any function
$f\from \mathbb{F}^n\to \mathbb{F}$,
there is a corresponding polynomial where the degree of every
variable is less than $|\mathbb{F}|$.
\begin{proof}[Proof of~\cref{lem:perfmatchbound}]
  For notational convenience, let us assume that $V=[n]$.
 Let $\mathbb{F}$ be a field of size $2^{k+2}$. Let $A$ be the Tutte
 matrix
 corresponding to $G$, that is, an element $a_{i,j}$ is defined as
  \[
    a_{i,j}\coloneqq\begin{cases}
      x_{i,j} &\text{if $i$ and $j$ are adjacent and $i<j$,}\\
      -x_{i,j} &\text{if $i$ and $j$ are adjacent and $i>j$,}\\
      0 &\text{if $i$ and $j$ are not adjacent.}
    \end{cases}
  \]
 We consider each entry of $A$ as an $n^2$-variate polynomial from
 $\mathbb{F}[x_{1,1},\dots,x_{n,n}]$. For $X\subseteq [n]$, we denote
 by $A[X]$
 the principal submatrix of $A$ corresponding to the rows and columns
 described
 by $X$. It is well known that $G[X]$ has a perfect matching if and
 only if
 $\det(A[X])$ is a nonzero polynomial.
 Let us denote by $S^c$ the complement of $S$,
 i.e., the set of rows/columns \emph{not} indexed by $S$.
 Thus for every $S\subseteq [k]$, we have
 \[
     \text{$G-S$ has a perfect matching } \Leftrightarrow \text{
     $\det(A[S^c])$
     is nonzero} \tag{$\star$}
 \]

 We would like to obtain a matrix $A'$ with the same property
 ($\star$), but over $\mathbb{F}$ (so the elements of $A'$ are
 \textit{not} polynomials).
 Let $S_1$, $\dots$, $S_t$ be the subsets of $[k]$ such that $G-S$
 has a
 perfect matching. For $\ell\in [t]$, the polynomial
 $P_\ell=\det(A[S^c_\ell])$ is a nonzero polynomial where every
 variable
 $x_{i,j}$ has degree at most 2 (as every variable appears at most
 twice in the
 Tutte matrix). The product $P=\prod_{\ell=1}^t P_\ell$ is also a
 nonzero
 polynomial (as $\mathbb{F}$ is a field,
 $\mathbb{F}[x_{1,1},\dots,x_{n,n}]$ is an
 integral domain) where every variable has degree at most $2t\le
 2\cdot
 2^k<|\mathbb{F}|$. Therefore, \cref{obs:poly} implies that $P$ is
 not vanishing. This means that we can substitute values to the
 variables such
 that $P$ evaluates to a nonzero value, which also means that every
 $P_\ell$ is
 nonzero under this substitution. Let $A'$ be obtained from $A$ by
 this
 substitution. It is easy to see that this matrix $A'$ over
 $\mathbb{F}$ has the
 desired property ($\star$): if $S$ has no perfect matchings, then
 $\det(A'[S^c])=0$ (as this already holds for $A$),
 while if $S$ has a perfect matching,
 then the determinant is one of the polynomials $P_\ell$,
 and hence it evaluates to a nonzero value under the substitution.

 Now we argue that we can obtain a $k\times k$ matrix $A''$ that also
 has the
 property ($\star$) Matrix $A'$ represents the set system
 $\mathcal{X}=\{X \mid
 \det(A'[X])\neq 0\}$ over $[n]$, which is known to be a
 delta-matroid. The set system
 $\mathcal{X}'=\{X\in [k] \mid X\cup ([n]\setminus [k])\in
 \mathcal{X}\}$ is a
 set system over $[k]$, which is called the \textit{contraction} of
 $\mathcal{X}$. It is known (see, e.g., Wahlstr\"om
 \cite{DBLP:journals/corr/abs-2306-03605}) that given some $[n]\times
 [n]$
 matrix $A'$ over $\mathbb{F}$ representing $\mathcal{X}$, we can use
 algebraic
 operations to compute a $k\times k$ matrix $A''$ representing
 $\mathcal{X}'$.
 Now we can verify that this matrix $A''$ indeed satisfies property
 ($\star$)

 As $A''$ has property ($\star$) we can deduce from $A''$ the value
 of the function
$f$ for any $S\subseteq [k]$. As $A''$ is an $k \times k$ matrix over
a field
of size $2^{k+2}$, it can be described by $k^2(k+2)$ bits. Therefore,
$\Oh(k^3)$ bits are sufficient to describe the function $h_{G,k}$,
that
is, there are at most $2^{\Oh(k^3)}$ such functions.
\end{proof}

\subparagraph*{\boldmath $2^{2^{\Oh(\tw\log \tw)}}\cdot n^{\Oh(1)}$
Time Algorithm for \UndelHitPack{H} for Arbitrary Connected $H$.}
The dynamic programming approach becomes significantly more
complicated if we
generalize it to $\UndelHitPack{H}$ where $H$ is not a clique. The
main issue is that now
it is no longer true that in every packing every copy of $H$
intersecting $V_t$
is either fully contained in $V_t$ or intersects $V_t$ only in $X_t$.
As $H$ is
not a clique, it can be the case that a copy is split by $X_t$ (see
Figure~\ref{fig:partialpack}).
Therefore, we need to argue about \emph{partial packings} in $V_t$
that may
contain some partial copies of $H$ split by $X_t$. When reasoning
about such a
partial packing, we need to describe not only
which vertices of $X_t$ the partial packing covers,
but also how the partial copies partition $X_t$.
We formalize this intuitive idea by the notion of \emph{types}.
The type of a (partial) packing contains the following information:
\begin{itemize}
  \item
  A set of those vertices in $X_t$ that are not covered by the packing
  including the vertices which are deleted.
  \item
  A partition of the vertices in $X_t$
  describing which vertices contribute to the same copy of $H$.
  \item
  For each part of the partition,
  a mapping between the vertices of $X_t$
  and the vertices of the partial copy of $H$,
  so we can determine which vertices of $H$ have been packed
  and which vertices of $H$ still need to be packed.
\end{itemize}
Clearly there are at most $2^{\tw+1}$ choices
for the set of uncovered vertices.
Since we consider a fixed graph $H$,
the precise mapping for each of the at most $\tw+1$ parts
can be described by $\abs{H}^{\tw+1}$ possible functions
although involving a significant notational overhead in the formal
description.
However, the partition of the vertices into the different parts
dominates the number of possible types
for which there are $\tw^{\Oh(\tw)}$ possibilities for each node.
Due to this larger number of types,
the running time of the algorithm involves an additional logarithmic
factor
which we avoid for the case when $H$ is a clique.

Similar to the algorithm when packing cliques,
we also have to remember the size of the largest packing for each
type.
As this number could potentially range from $0$ to $\ell$,
a naive bound for the number of possible states is
$\ell^{\tw^{\Oh(\tw)}}$
which does not depend on the treewidth only.
However,
since each graph $H$ has only a fixed size,
the packing number for two different types cannot differ by too much.
Indeed, the size of two optimal partial packings with different types
can differ by at most $\Oh(\abs{H} \cdot \tw)$
as each partial packing of $H$ can ``block'' at most $\abs{H}$
vertices
from being packed in the other packing.
This observation drastically reduces the number of states for each
node
to $\tw^{\tw^{\Oh(\tw)}}$
which then determines the running time of the algorithm.

\begin{figure}
\begin{center}\def\svgwidth{0.5\linewidth}
\begingroup%
  \makeatletter%
  \providecommand\color[2][]{%
    \errmessage{(Inkscape) Color is used for the text in Inkscape, but the package 'color.sty' is not loaded}%
    \renewcommand\color[2][]{}%
  }%
  \providecommand\transparent[1]{%
    \errmessage{(Inkscape) Transparency is used (non-zero) for the text in Inkscape, but the package 'transparent.sty' is not loaded}%
    \renewcommand\transparent[1]{}%
  }%
  \providecommand\rotatebox[2]{#2}%
  \newcommand*\fsize{\dimexpr\f@size pt\relax}%
  \newcommand*\lineheight[1]{\fontsize{\fsize}{#1\fsize}\selectfont}%
  \ifx\svgwidth\undefined%
    \setlength{\unitlength}{250.3559035bp}%
    \ifx\svgscale\undefined%
      \relax%
    \else%
      \setlength{\unitlength}{\unitlength * \real{\svgscale}}%
    \fi%
  \else%
    \setlength{\unitlength}{\svgwidth}%
  \fi%
  \global\let\svgwidth\undefined%
  \global\let\svgscale\undefined%
  \makeatother%
  \begin{picture}(1,0.59164543)%
    \lineheight{1}%
    \setlength\tabcolsep{0pt}%
    \put(0,0){\includegraphics[width=\unitlength,page=1]{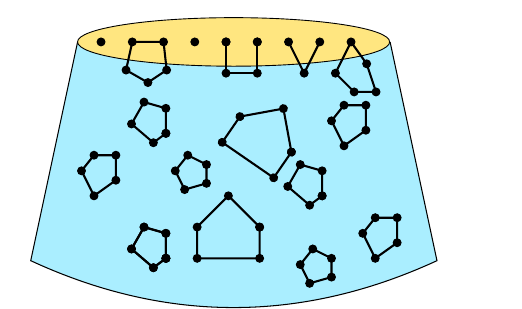}}%
    \put(0.0889015,0.4961356){\color[rgb]{0,0,0}\makebox(0,0)[rt]{\lineheight{1.25}\smash{\begin{tabular}[t]{r}$X_t$\end{tabular}}}}%
    \put(0,0){\includegraphics[width=\unitlength,page=2]{partialpack.pdf}}%
    \put(0.91037955,0.27982645){\color[rgb]{0,0,0}\makebox(0,0)[lt]{\lineheight{1.25}\smash{\begin{tabular}[t]{l}$V_t$\end{tabular}}}}%
    \put(0,0){\includegraphics[width=\unitlength,page=3]{partialpack.pdf}}%
  \end{picture}%
\endgroup%
\end{center}
  \caption{If $H$ is not a clique (e.g., $H$ is a cycle on 5
  vertices), then restricting a packing to $V_t$ may result in
  partial copies of $H$ that contain vertices from $V_t\setminus
  X_t$. Therefore, the description of a partial packing needs to
  include how these partial copies interact with $X_t$.}
  \label{fig:partialpack}
\end{figure}

\subparagraph*{\boldmath $2^{2^{\Oh(\tw\log \tw)}}\cdot n^{\Oh(1)}$
Time Algorithm for \CycleUndelHitPack.}
For the case of \CycleUndelHitPack the situation is different from
the cases before.
Since we are not dealing with a fixed graph $H$
but an infinitely sized family of graphs,
we cannot decide a priori which graphs are packed where.
However, as we can exploit that all possible graphs must be cycles,
we can still characterize the interaction with a bag $X_t$ in a
compact way.
The main observation is that it does not matter in which cycle
(i.e., of which length) the vertices appear in a partial packing
as we allow cycles of all possible lengths.
Instead, it suffices to classify the vertices of a bag $X_t$
into three groups which are ``uncovered'', ``covered with one
incident edge'',
and ``covered with two incident edges''.
Despite this partition of the vertices is possible,
it is not sufficient to fully describe
the behavior of a partial packing with a bag.
We also have to be able to recognize when adding a new edge
to the packing would close a cycle.
For this we need to remember
which vertices (from the bag) appear in the same cycle.
However, we do not have to store this information for all vertices in
the bag
but only for those vertices that are incident to exactly one edge
as only these vertices are eligible to obtain another edge.
This information between the endpoints of the partial cycles, i.e.,
paths,
can be efficiently described by a perfect matching between these
vertices.
Hence, the type of a partial cycle-packing
is a four-tuple with the following information:
\begin{itemize}
  \item
  A set of vertices that are not covered by the packing
  which includes the deleted vertices.
  \item
  A set of vertices that have one incident edge in the packing.
  \item
  A set with the remaining vertices
  which have two incident edges in the packing,
  that is, these vertices are already fully covered.
  \item
  A perfect matching for the vertices with one incident edge
  abstracting the paths between the endpoints.
\end{itemize}
When treating the perfect matching as a special form of a partition,
the connection to the algorithm for the general case become apparent;
for general $H$, the type also contains a partition of the vertices
describing which vertices appear in the same copy of $H$.
From this perspective it comes as no surprise that
for \CycleUndelHitPack the number of types is at most
$\tw^{\Oh(\tw)}$ as well.

Similar to the algorithm for \UndelHitPack{H} when $H$ is one fixed
graph,
the interaction of a solution with bag
further includes the information of how many cycles can be packed
for each possible type of partial packing.
Interestingly, when considering two different types,
there is no obvious bound for the difference in the number of
packable cycles.
However, if for some type the number of cycles that can be packed is
too low
then there is no benefit in proceeding with this type.
More formally, consider a type $T$ where we can pack at least $\tw$
cycles less
than for the optimal packing where no vertex of the bag is covered.
Even if all partial cycles for a packing of type $T$ could be closed,
the total number of cycles is still lower than
for the packing where none of the vertices from the bag are covered.
Hence, proceeding with this type $T$ never results in a maximal
packing
and therefore, all packings that have this specific type $T$ at this
node
can be discarded immediately.

With this observation it then follows that even for \CycleUndelHitPack
the total number of subproblems at each node
can be bounded by $2^{2^{\Oh(\tw \log \tw)}} \cdot n$.

\subsection{Lower Bounds}

\subparagraph*{\boldmath\SigTwoP-completeness of \TriUndelHitPack.}
The containment in $\SigTwoP$
for \TriUndelHitPack is clear, as we can guess the set of deleted
vertices and use
an \NP-oracle to determine the size of the maximum packing.
To establish the \SigTwoP-hardness,
we reduce from a special \SigTwoP-complete satisfiability problem,
the \textsc{Smallest Unsatisfiable Subformula} problem
(\SUS)~\cite{sigact-column-37,Umans99}.
For \SUS the input is a CNF-formula $\phi$ together with a parameter
$k$
and the task is to decide
if there \emph{exists} a collection of at most $k$ clauses of $\phi$
(we refer to this as subformula)
such that, \emph{for all} assignments, the subformula is not
satisfied,
i.e., the subformula is unsatisfiable.
This ``exists, for all'' formulation of \SUS already gives
a hint for the reduction to \TriUndelHitPack.

We construct a graph consisting of clause gadgets,
variable gadgets, and literal edges.
Each of the clause gadgets consist of a single triangle
with a distinguished deletable vertex.
The interpretation of this vertex is as follows:
if the vertex is deleted, then the clause is activated/selected,
and otherwise, the clause remains inactive
(which we consider as the default state).
Each variable gadget consists of a cycle of triangles
such that there are exactly two maximum triangle packings,
one corresponding to setting the variable to $\true$
and the other packing corresponding to setting the variable to
$\false$.
As a last component the graph contains literal edges
which connect the variable gadgets to the clause gadgets
depending on the occurrence of the literals in the clauses.
Intuitively the idea is that once the distinguished vertex
of a clause gadget is deleted,
which corresponds to a selected clause,
it should not be possible to find a large packing
as otherwise we could transform this packing (for the variable
gadgets)
into an assignment which satisfies all selected clauses.

To strengthen the result, we prove the hardness for tripartite graphs
by reducing from \ThreeCNFSUS (the restriction of \SUS to
3CNF-formulas),
for which we also provide the \SigTwoP-completeness
as, to our knowledge, no proof appeared in the literature,
although the completeness was claimed
(see \cref{sec:sig2p:hardness:sat} for further discussion).

\subparagraph*{\boldmath\SigTwoP-completeness of \UndelHitPack{H}.}
For connected graphs $H$ other than the triangle, the
\SigTwoP-completeness result can be obtained from \TriUndelHitPack by
a clean reduction. Kirkpatrick and
Hell~\cite{DBLP:journals/siamcomp/KirkpatrickH83} showed how to
reduce a triangle packing problem to arbitrary $H$-packing problems
for connected $H$ with at least 3 vertices. However, they considered
the problem of finding a packing that covers every vertex of the
graph and the arguments do not readily work for problems where not
every vertex needs to be covered. Nevertheless, we show that with
additional arguments and by extending their construction, a reduction
can be obtained from \TriUndelHitPack to \UndelHitPack{H}, showing
the \SigTwoP-completeness of the latter problem.

\subparagraph*{\boldmath\SigTwoP-completeness of \CycleUndelHitPack.}
This hardness proof is obtained by observing that in the
\SigTwoP-completeness proof of \TriUndelHitPack, every cycle relevant
for a packing is a triangle.

\begin{figure}[t]
  \centering
  \includegraphics[width=.7\textwidth]{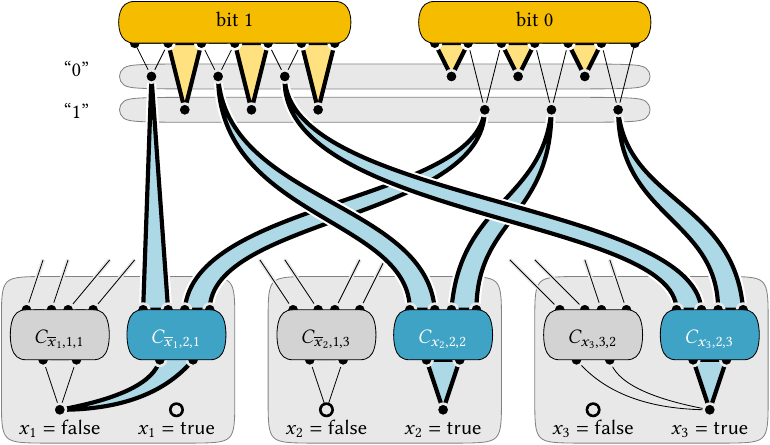}
  \caption{%
  An illustration of the construction
  for the double exponential lower bound for \UndelHitPack{H}.
  \\
  The two topmost gadgets determine the clause number we consider.
  Each gadget $C_{\lambda,j,p}$ at the bottom
  encodes that literal $\lambda$ appears in the $j$th clause at
  position $p$.
  The hollow vertices indicate deleted vertices
  representing an assignment.
  \\
  The packing represented by the colored triangles
  represents the evaluation of the second clause that is
  $C_2 = (\bar x_1 \lor x_2 \lor x_3)$,
  which is not satisfied by the chosen assignment of
  $x_1=\mathsf{true}$ and $x_2=x_3=\mathsf{false}$.
  }
\end{figure}
\subparagraph*{\boldmath Double-exponential Lower Bounds for
\UndelHitPack{H} Parameterized by Treewidth.}
We reduce from an instance of $3$-SAT with $n$ variables
and $m$ clauses to
achieve a double-exponential lower bound
by constructing a graph with pathwidth $\Oh(\log m)$.
Although the starting point is again a 3CNF-formula
as for the \SigTwoP-hardness,
the interpretation and the basic ideas differ
because of the change in the quantification.
For \SUS the task is to check
if there \emph{exists} a set of \emph{clauses}
such that, \emph{for all assignments} to the variables,
the formula is not satisfiable.
For $3$-SAT the task is to check
if there \emph{exists an assignment} for the variables
such that \emph{for all clauses} at least one literal of the clause
is satisfied,
i.e., for each clause not all three literals are false.
While as before, we use gadgets consisting of a long cycle
with attached triangles,
the overall construction of how these gadgets interact
is substantially different.
The critical difference to the $\SigTwoP$-hardness proof
lies in the way we verify the satisfying assignment.
Here, we do not explicitly construct a different gadget for every
clause
as this would result in a construction
with treewidth linear in the number of clauses.

Our constructed instance comprises three parts, which we refer to as
left, middle, and right. The deleted vertices on the
right correspond to a choice of the satisfying assignment.  The left
part
consists of the so-called \emph{selector gadgets}. Each selector
gadget models one bit of the binary encoding of some clause. So, if
there are $m$ clauses, we have roughly $\log m$ selector gadgets. For
each selector gadget we introduce three pairs of vertices in the
middle. Intuitively, there are three pairs since each clause contains
three literals, and there are pairs to encode whether the bit
corresponding to this selector gadget is $0$ or $1$. Then a packing
of the vertices
on the left, i.e., a packing for the selector gadgets, can interact
in $m$ different ways with the $\Oh(\log(m))$ vertices in the middle.
Each of these possibilities corresponds to a different selection of
(the encoding of) a clause.

We verify the satisfying assignment by ensuring that no matter how we
might choose a
maximum packing on the left (i.e., no matter which clause we look
at), the maximum packing on the right is small. This
corresponds to verifying that each clause is satisfiable. A crucial
difference
to the $\SigTwoP$-hardness proof lies in the fact that, here, the
small packing is ensured by the variable gadgets (if a solution
exists)
and not by the clause gadget.

The treewidth of the construction is $\Oh(\log(m))$ since this is the
number of
vertices with which the gadgets interact (and each gadget separately
has constant
treewidth).

\subparagraph*{\boldmath Double-exponential Lower Bound for
\SquareUndelHitPack and \CycleUndelHitPack Parameterized by
Treewidth.}
We first prove the lower bound for \SquareUndelHitPack
and then extend it to \CycleUndelHitPack as follows.
We define the reduction for \SquareUndelHitPack in a way
such that whenever a cycle packing $\packs$
contains a \emph{large} cycle of length at least five,
then we can repack $\packs$ to obtain a packing $\packs'$
by replacing this large cycle with a $C_4$.
This keeps the number of cycles the same (or might actually increase
it)
while reducing the number of large cycles.
Hence, the maximum possible cycle packing only contains cycles of
length four.

The basic high-level idea for the lower bound for \SquareUndelHitPack
is similar to the previous one for the general case.
However, instead of having a middle part with $\Oh(\log m)$ vertices
(which dominates the pathwidth),
we have to find a different way of encoding the clause numbers
by only using $\Oh(\log m / \log\log m)$ vertices.

To make this possible, we have to change the way the gadgets
on the left and right side interact with the vertices in the middle.
For the previous construction
the gadgets only cover a single vertex
while the remaining part of the triangle
is entirely contained either in the left or the right half.
When packing four-cycles we can change this
and allow that those cycles which contain vertices in the middle
have exactly one vertex on the left side
and exactly one vertex on the right side.
With this method,
the position of the cycle can be described by a matching
for the vertices in the middle.
The idea is to introduce two groups of~$t$ vertices each where~$t!
\approx m$.
Then there are~$t!$~possible perfect matchings
between the vertices from the first group
and the vertices from the second group.
With this it is possible to associate with each clause
a unique perfect matching on these vertices.
The gadgets on the left and right side are then adjusted
such that they connect each pair of vertices from the matching
by a path of length two.

Since choosing $t \approx \log m/\log \log m$ satisfies the above
property,
there are $\Oh(\log m/\log\log m)$ vertices in the middle part
which then determines the pathwidth of the graph.

\section{\texorpdfstring{\boldmath \CycleUndelHitPack\ Parameterized by $k+\ell$}
{Cycle-HitPack Parameterized by k+l}}
\label{sec:cycle-fvs}

In this section we prove the following \cref{thm:cycle-k-l}.

\cyclekl*\label\thisthm

Recall that an instance $\Ii = (G,U,k,\ell)$ of \CycleUndelHitPack
consists of a graph $G$,
a set $U \subseteq V(G)$ of undeletable vertices,
a budget $k$ for the number of vertices that can be deleted,
and an integer $\ell$ as a strict upper bound on the size of the cycle packing in the graph after removing the solution,
that is, the size of the cycle packing in the resulting graph can be only strictly smaller than $\ell$.

We will make use of a known close relationship between the parameter $k+\ell$ and the feedback vertex set size of $G$.

\paragraph*{\boldmath Step 0: From $k+\ell$ to Feedback Vertex Sets.}
\addcontentsline{toc}{subsection}{Step 0: From $k+\ell$ to Feedback Vertex Sets}

The main work of this section is to show the following result. 
\begin{restatable}{theorem}{fvsAlgo}
	\label{thm:fvs:algo}
	An instance $(G,U,k,\ell)$ of \CycleUndelHitPack given with a feedback vertex set $F$ of the $n$-vertex graph $G$ can be solved in time
	$2^{\poly(|F|+k)} \cdot n^{\Oh(1)}$.
\end{restatable}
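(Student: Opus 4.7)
The plan follows the outline sketched in \cref{sec:technical-overview}. I first perform a standard branching step on the intersection $S \cap F$: for each of the $2^{|F|}$ subsets $S_F \subseteq F$, I delete $S_F$, decrement $k$ by $|S_F|$, and reduce to the subproblem in which $F$ is declared undeletable. Henceforth I assume $S \subseteq V(G) \setminus F$. The key structural observation is that one may restrict attention to packings of induced cycles; since $G - F$ is a forest, any such cycle $C$ meets $F$, and $C - F$ decomposes into paths in $G - F$. Each such path $P$ is \emph{usable}: it runs from a neighbor $u_1$ of some $f_1 \in F$ to a neighbor $u_2$ of some $f_2 \in F$ (possibly $f_1 = f_2$), and no internal vertex of $P$ is adjacent to $f_1$ or $f_2$ (this condition is forced by $C$ being induced).

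The algorithm then maintains a collection $\mathcal{P}$ of usable paths that any valid solution $S$ must hit. Each round proceeds as follows. We invoke a packing subroutine that either returns $\ell$ vertex-disjoint induced cycles in $G$ using no path from $\mathcal{P}$, or reports that no such packing exists. In the first case, we branch: for each cycle $C$ in the returned packing and each of its at most $\Oh(|F|)$ usable subpaths $P$, we add $P$ to $\mathcal{P}$ and recurse on the resulting instances. In the second case, we check in polynomial time whether there is a set $S \subseteq V(G) \setminus F$ of size at most $k$ hitting every path in $\mathcal{P}$; if yes, $S$ is a valid solution, and otherwise this branch is closed.

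The crucial counting bound is that any fixed vertex $v \in V(G) \setminus F$ lies on at most $|F|^2$ distinct usable paths, because the two endpoints of such a path through $v$ are uniquely determined by the pair $(f_1, f_2) \in F \times F$ governing its ends. Hence once $|\mathcal{P}| > k |F|^2$, no set of $k$ vertices can hit $\mathcal{P}$ and the branch closes, capping the recursion depth at $k |F|^2 = \poly(k + |F|)$. Combined with the $\poly(\ell \cdot |F|)$ branching factor per step (and with $\ell$ itself bounded by $\Oh(k|F|^2)$ in any live branch, for otherwise $G$ trivially contains too many disjoint cycles to kill with $k$ deletions), this yields the target running time $2^{\poly(k + |F|)} \cdot n^{\Oh(1)}$.

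The main obstacle is implementing the packing subroutine and justifying the branching step when $G - F$ is an arbitrary forest, not a union of paths. In the paths case a usable path is an interval, the subroutine becomes a left-to-right dynamic program that enumerates the ``shapes'' of cycles on $F$ and greedily assembles compatible intervals avoiding $\mathcal{P}$, and the final hitting check reduces to point-covering of intervals. In the general forest case a usable path may traverse branch vertices of $G - F$, a single induced cycle may use three or more usable paths glued together at $F$-vertices, and the packing subroutine must be a bottom-up tree dynamic program whose states track the partial usable paths emerging from a subtree together with information about which shapes can still be closed and which forbidden paths in $\mathcal{P}$ have been respected. I expect the principal technical work to lie in designing this tree DP, and in particular in showing that the per-node state space and the per-step branching factor both remain $\poly(k + |F|)$, so that the counting argument on $|\mathcal{P}|$ continues to yield the claimed running time.
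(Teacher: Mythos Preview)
Your high-level scheme matches the paper's: guess $S\cap F$, maintain a growing collection $\mathcal P$ of forbidden usable paths by repeatedly finding a packing that avoids $\mathcal P$ and branching on which of its $G'$-paths to forbid next, and at each leaf solve a path-hitting instance on the forest $G-F$. The packing subroutine and the final hitting check are indeed routine and are handled in the paper much as you anticipate (\cref{lem:step-one,lem:hitting-problem}).

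The genuine gap is the counting bound that caps the depth of the branching tree. Your claim that a vertex $v$ lies on at most $|F|^2$ usable paths---because ``the two endpoints of such a path through $v$ are uniquely determined by the pair $(f_1,f_2)$''---is the path-case argument from the overview and is simply false in general forests: if $v$ has large degree in $G-F$, each subtree hanging off $v$ can contribute a distinct endpoint in $N_i$ or $N_j$, so a single vertex can lie on unboundedly many usable $(N_i,N_j)$-paths even for one fixed pair $(i,j)$. The paper repairs this with two pieces of machinery that your proposal does not have. First, a degree-reduction preprocessing (\cref{lem:fvs:reduceDegree}) that, via a marking-and-deleting argument, produces an equivalent instance in which every vertex of $G-F$ has degree at most a polynomial $\dd$ in $|F|$ and $k$. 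Second, a more elaborate definition of usable path (involving a bound $\threedist^{i,j}\le |F|+1$) together with the notion of a \emph{rich pair} $(i,j)\in F\times F$: a pair that always admits $k+1$ disjoint replacement paths, so that paths of rich type never need to enter $\mathcal P$. Only with both ingredients can one prove (\cref{lem:intersect}) that a single vertex hits at most $\Gamma=\poly(|F|,k)$ usable paths of non-rich type, yielding the depth bound $|F|^2 k\Gamma$. You correctly flag the forest case as the obstacle, but you locate the difficulty in the packing DP rather than in re-establishing the counting bound, which is where the substantive new ideas are actually needed.
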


With \cref{thm:fvs:algo} in hand, it is then straight-forward to prove \cref{thm:cycle-k-l}.

\begin{proof}[Proof of \cref{thm:cycle-k-l}]
From the Erd\H os--P\'osa
Theorem~\cite{MR0175810}, either $G$ has $k + \ell$ vertex-disjoint
cycles, or it has a feedback vertex set of size $\Oh((k+\ell) \log
(k+\ell))$.  In particular, from~\cite[Theorem~2.3.2]{diestel2005graph},
if the size of a minimum feedback vertex set in $G$ is larger than
$
	k^*= 4(k+\ell)(\log (k+\ell)+ \log \log (k+\ell) +4) + k+\ell -1,
$
then $G$
has at least $k +\ell$ vertex disjoint cycles.
Using the $2^{\Oh(k^*)} \cdot n^{\Oh(1)}$ algorithm for finding a feedback
vertex set of size at most
$k^*$~\cite{DBLP:journals/ipl/KociumakaP14}, we can compute such a feedback vertex set, or otherwise conclude that $G$ has at least $(k+\ell)$ vertex-disjoint cycles.

If the algorithm computes a feedback vertex set $F$ of size at most $k^*$ then, by \cref{thm:fvs:algo}, we can obtain a solution for the instance $(G,U,k,\ell)$ in total time
\[
  2^{\poly(k^*+k)} \cdot n^{\Oh(1)}= 2^{\poly(k+\ell)} \cdot n^{\Oh(1)}.
\]
Otherwise, there are $(k+\ell)$ disjoint cycles in $G$. Then
$(G,U,k,\ell)$ is a \no-instance (as deleting any set of at most $k$ vertices results in a
graph which contains at least $\ell$ vertex-disjoint cycles).
\end{proof}

So it remains to prove \cref{thm:fvs:algo}. To shorten notation, from now on, whenever we consider an instance $(G,U,k,\ell)$ of \CycleUndelHitPack that is given with a feedback vertex set $F$ of $G$, we will use $(G,U,F,k,\ell)$ to denote this instance.
Suppose we are given an instance $\Ii = (G,U,F,k,\ell)$ of \CycleUndelHitPack.
The idea behind the algorithm is as follows.
\begin{enumerate}
	\item
	We guess the intersection of the solution with the feedback vertex set $F$.
	The intersection is then removed and the remaining vertices
	from the feedback vertex set are made undeletable.

	\item
	We reduce the degree of the instance such that
	it is bounded in terms of $k$ and $\abs{F}$.
	For this we first mark parts of the graph that should not be removed
	and then iteratively delete parts that are not marked.
	This marking procedure ensures that for all possible solutions
	there is a way to tweak the cycle packing to avoid the deleted vertices
	by only using vertices from the parts that have been marked.

	\item
	We introduce the notion of usable paths and cycles
	and show that it suffices to rule out large cycle packings
	that consists of usable cycles only.

	\item
	We use a branching procedure to generate instances of a simpler problem
	where we only have to hit paths from a given candidate set by the solution.
	We design the candidate sets in a way such that
	the equivalence to the original instance is preserved.

	\item
	For each such candidate set of paths we check whether there is a set
	of deletable vertices such that every path that should be hit is indeed hit.
	That is, we solve the easier problem
	and therefore, solve the original problem.
\end{enumerate}

Based on this high-level overview of the algorithm
we now formalize the individual steps of the algorithm.

\paragraph*{Step 1: Making the Feedback Vertex Set Undeletable.}
\addcontentsline{toc}{subsection}{Step 1: Making the Feedback Vertex Set Undeletable}

In the first step, the algorithm guesses the intersection of an optimal solution with the set $F$.
For each such guess $F' \subseteq F$, the algorithm creates a new instance
$(G-F',F\setminus F',U \cup (F\setminus F'),k-|F'|,\ell)$.
The following result immediately follows.
\begin{lemma}
	\label{lem:fvs:fvsIsUndeletable}
	The instance $(G,U,F,k,\ell)$ is a \yes-instance of \CycleUndelHitPack,
	if and only if
	there is some $F' \subseteq F$ such that
	$(G-F', U \cup (F\setminus F'), F \setminus F', k-|F'|, \ell)$
	is a \yes-instance.
\end{lemma}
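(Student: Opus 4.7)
The plan is to prove this standard branching equivalence by a direct solution translation between the original and reduced instances; the whole content of the lemma is the bookkeeping identity $S = (S \cap F) \cup (S \setminus F)$, together with the observation that removing $F'$ up-front and forbidding $F \setminus F'$ exactly emulates the choice of which vertices of $F$ the solution includes.

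For the forward direction, I would start from a witness $S \subseteq V(G) \setminus U$ of size at most $k$ such that $G - S$ does not contain $\ell$ vertex-disjoint cycles, and set $F' := S \cap F$ together with $S' := S \setminus F$. The crucial checks are that $F' \cap U = \emptyset$ (inherited from $S \cap U = \emptyset$, so removing $F'$ does not illegally touch $U$), that $S'$ lies in $V(G - F') \setminus (U \cup (F \setminus F'))$ (it avoids $U$ since $S$ does, and it avoids $F$ entirely by construction, hence in particular $F \setminus F'$), that $|S'| \le k - |F'|$, and that $(G - F') - S' = G - S$, which carries the packing obstruction over verbatim.

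For the backward direction, I would take some $F' \subseteq F$ together with a witness $S'$ for the instance $(G - F', U \cup (F \setminus F'), F \setminus F', k - |F'|, \ell)$ and set $S := F' \cup S'$. The checks are dual: $|S| \le |F'| + (k - |F'|) = k$, the two parts are disjoint (since $S' \subseteq V(G - F')$), $S \cap U = \emptyset$ (using $S' \cap U = \emptyset$ and $F' \cap U = \emptyset$; see below), and $G - S = (G - F') - S'$, transferring the absence of an $\ell$-cycle packing back to the original graph.

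The only mild subtlety is how to treat guesses with $F' \cap U \neq \emptyset$. In the forward direction this never arises, because $F' \subseteq S$ and $S$ is disjoint from $U$. In the backward direction such guesses cannot yield a valid deletion set for the original instance, so the natural reading is to restrict the enumeration in \cref{lem:fvs:fvsIsUndeletable} to $F' \subseteq F \setminus U$ (alternatively, bad guesses are harmless in the overall algorithm since they simply fail to produce a lifted solution). Beyond this accounting point I do not anticipate any real obstacle: the lemma is a clean decompose-the-solution argument with no combinatorial content.
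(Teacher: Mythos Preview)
Your argument is correct and is precisely the standard decompose-the-solution reasoning the paper has in mind; the paper itself does not give a proof, stating only that the lemma ``immediately follows.''

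You are also right to flag the $F'\cap U$ subtlety, and in fact this is a small oversight in the paper's statement rather than in your proof: as written, the backward direction can fail (take a graph whose only size-$1$ feedback vertex set is a single undeletable vertex $u$, with $k=1$, $\ell=1$; the guess $F'=\{u\}$ yields a \yes\ sub-instance while the original is a \no-instance). The fix is exactly your first suggestion, to range over $F'\subseteq F\setminus U$. Your second suggestion---that bad guesses are ``harmless'' because they fail to lift---does not quite work for the algorithm as described in the paper, since it only reports \yes/\no\ on the sub-instances rather than explicitly lifting and re-validating a solution.
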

The rest of the algorithm solves each of these $2^{|F|}$
instances and reports \yes\ if at least one of these instances reports \yes.
Therefore, for the remaining section we assume
that the instance $(G,U,F,k,\ell)$ satisfies $F \subseteq U$,
that is, the vertices in $F$ are undeletable.

\paragraph*{Some Useful Notation.}
Before we proceed, we first introduce some notations.
For each $i \in F$, we denote by $N_i$ the set of neighbors of $i$ in
$G-F$.
Note that every cycle of $G$ intersects $F$.
We
denote the forest $G-F$ by $G'$.

Let $\Cc$ be a cycle packing in $G$.
For a cycle $C\in \Cc$ and a sub-path $P_1$ of $C$
such that $V(P_1)\subsetneq V(C)$,
by $\Cc\setminus P_1$, we denote the set of subgraphs obtained from $\Cc$
by replacing the cycle $C$ with the path $C-P_1$.
By $V(\Cc \setminus P_1)$ we denote the vertex set of $\Cc\setminus P_1$.
Let $u_0, v_0$ be the endpoints of $C-P_1$
and let $u_1,v_1$ be the endpoints of $P_1$.
Let $P_2$ be a path with endpoints $u_2,v_2$
such that
$V(P_2) \cap V(\Cc\setminus P_1) =\emptyset$ and %
$u_0u_2$ and $v_0v_2$ are edges of $G$.
By $(\Cc\setminus P_1) \cup P_2$ we
denote the cycle packing obtained from $\Cc\setminus P_1$
by replacing the path $C-P_1$ with the cycle formed by $C-P_1$
together with $P_2$ and the edges $u_0u_2$ and $v_0v_2$.

For any cycle packing $\Cc$ of $G$, the \emph{$G'$-paths of} $\Cc$ are
the paths obtained by deleting $F$ from the cycles of $\Cc$.
Since $F$ is a feedback vertex set of $G$, we have $|\Cc|\le |F|$.
As deleting one vertex of $F$ increases the number of paths (obtained from $\Cc$) by at most one, we get the following observation, which we use frequently.
\begin{observation} \label{obs:number-of-G'-paths}
	Let $\Cc$ be a cycle packing of $G$. Then the number of $G'$-paths of
	$\Cc$ is at most $|F|$.
\end{observation}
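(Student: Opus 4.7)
My plan is a direct counting argument based on how many vertices of $F$ each cycle of $\Cc$ uses. First I would observe that, because $F$ is a feedback vertex set of $G$, every cycle $C \in \Cc$ must contain at least one vertex of $F$; otherwise $C$ would be a cycle entirely inside the forest $G' = G - F$, which is impossible. Writing $t_C \deff |V(C) \cap F| \ge 1$, the next step is to bound the number of $G'$-paths contributed by the single cycle $C$.

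Viewing $C$ as a cyclic sequence of vertices and deleting exactly the $t_C$ vertices in $V(C) \cap F$ breaks the cycle into $t_C$ arcs, one per cyclic gap between two consecutive removed vertices. Some of these arcs may be empty (whenever two vertices of $V(C) \cap F$ happen to be adjacent along $C$), but in any case the number of nonempty path components obtained from $C$ after removing $F$ is at most $t_C$.

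Finally, summing over the cycles and exploiting that the cycles in $\Cc$ are pairwise vertex-disjoint, the total number of $G'$-paths of $\Cc$ is at most
\[
  \sum_{C \in \Cc} t_C \;=\; \bigl|V(\Cc) \cap F\bigr| \;\le\; |F|,
\]
which is the claimed bound. There is essentially no obstacle; the only point to be careful about is to count only nonempty components from each cycle and to use vertex-disjointness so that the charges of distinct cycles to $F$ do not overlap, which is exactly the role of the equality $\sum_C t_C = |V(\Cc) \cap F|$ above.
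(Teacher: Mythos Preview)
Your proposal is correct and is essentially the same argument as the paper's: the paper phrases it incrementally (``deleting one vertex of $F$ increases the number of paths by at most one''), while you phrase it per cycle (removing the $t_C$ vertices of $F$ from a cycle $C$ yields at most $t_C$ arcs), but both amount to the same charging of $G'$-paths to vertices of $V(\Cc)\cap F$.
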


For a graph $G$ and two vertex sets $X,Y \subseteq V(G)$,
an $(X,Y)$-path in $G$ is a path from some $x \in X$ to some $y \in
Y$;
that is, one endpoint belongs to $X$ and the other endpoint belongs to $Y$.
To simplify notation we also say that a $(\{u\},\{v\})$-path in $G$
is a $(u,v)$-path in $G$,
that is, a path with the two endpoints $u$ and $v$.
In the following, when referring to an $(N_i,N_j)$-path $P$
we also implicitly require that $V(P) \cap F = \emptyset$.

We say that a path $P$ is contained in a tree $T$ if $V(P)\subseteq V(T)$.
Similarly, we say that a cycle $C$ contains a path $P$ if $V(P)\subseteq V(C)$.

\paragraph*{\boldmath Step 2: Bounding the Degree of $G'$.}
\addcontentsline{toc}{subsection}{Step 2: Bounding the Degree of $G'$}

As a next step we prove a bound on the degree of the vertices.
To this end, we design a procedure to modify the given instance
until the resulting instance has low degree,
that is, the largest degree is at most $\degree$.
Formally, we set
\begin{equation}
	\dd \deff |F|^3 + (k+3) |F|^2 + (k+2)|F|
\end{equation}
and then remove vertices from the instance until the maximum degree of $G'$
it at most $\dd$.
We say that an instance \emph{is of low degree with respect to $\dd$}
if, for all $v \in V(G')$, it holds that $\deg_{G'}(v) \le \dd$.

\begin{lemma}
	\label{lem:fvs:reduceDegree}
	Let $\Ii = (G, U, F, k, \ell)$ be an instance of \CycleUndelHitPack
	with $F \subseteq U$.
	In time $\Oh(\poly(\abs{G})$
	we can construct an equivalent instance $\Ii_1 = (G_1, U_1, F, k, \ell)$
	of low degree with respect to $\dd$,
	i.e., for all $v \in V(G_1) \setminus F$,
	the new graph $G_1$ satisfies $\deg_{G_1-F}(v) \le \dd$,
	and $E(G_1) \subseteq E(G)$ and $F \subseteq U_1 \subseteq U$.
\end{lemma}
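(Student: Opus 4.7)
The plan is to iteratively locate vertices of $G'=G-F$ whose $G'$-degree exceeds $\dd$ and prune ``redundant'' subtrees attached to them, keeping enough structure that every cycle packing of $G$ can be re-routed into the reduced graph. Since $F$ is a feedback vertex set, $G'$ is a forest and every cycle of $G$ meets $F$, so every cycle, after removing $F$, decomposes into $G'$-paths whose endpoints lie in some $N_i$. These paths are precisely the objects we need to preserve.

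For a vertex $v\in V(G')$ with $\deg_{G'}(v)>\dd$, let $B_1,\dots,B_d$ be the subtrees of $G'$ obtained by deleting $v$ that contain a neighbor of $v$; by hypothesis $d>\dd$. I will mark a bounded family of subtrees as follows. For every ordered pair $(i,j)\in F\times F$ (including $i=j$), mark up to $|F|$ subtrees $B_\alpha$ that internally contain an $(N_i,N_j)$-path (these realize cycles that do not need to use $v$), and up to $k+3$ further subtrees containing a vertex of $N_i$ (candidate endpoints of a cycle that crosses $v$ and is closed via $j$). Additionally, for every $i\in F$, mark up to $k+2$ extra subtrees containing a vertex of $N_i$ as a reserve pool. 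A direct count gives at most $|F|^3+(k+3)|F|^2+(k+2)|F|=\dd$ marked subtrees. All unmarked subtrees are removed from $G$, producing $G_1$, with $U_1\deff U\cap V(G_1)\supseteq F$. Since each iteration strictly decreases $|V(G)|$, the procedure terminates in polynomial time.

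For the easy direction of the equivalence, $V(G_1)\subseteq V(G)$ and $E(G_1)\subseteq E(G)$ imply that any cycle packing of $G_1-S$ is also a cycle packing of $G-S$, so any \yes-certificate of the original instance is inherited by $\Ii_1$. The hard direction is where the marking scheme must earn its keep: suppose $S\subseteq V(G_1)\setminus U_1$ with $|S|\le k$ certifies $\Ii_1$ as \yes, but assume for contradiction that $G-S$ admits a packing $\Cc$ of $\ell$ vertex-disjoint cycles. The plan is a swap argument that turns $\Cc$ into a packing $\Cc'$ of $\ell$ cycles in $G_1-S$ by rerouting each $G'$-path of $\Cc$ that enters an unmarked subtree at some processed vertex $v$. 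By \cref{obs:number-of-G'-paths}, $\Cc$ has at most $|F|$ $G'$-paths, so $\Cc$ visits at most $|F|$ branches at $v$. Each offending $G'$-path either lives entirely inside an unmarked branch $B$ at $v$ (of \emph{type} $(i,j)$, where $i,j\in F$ are the closing vertices of the containing cycle) or crosses $v$ from one branch to another (with types $i,j\in F$ at its two ends). The number of marked subtrees of the required type that are blocked---by $S$ or by already-fixed portions of $\Cc'$---is at most $k+|F|$, while we marked strictly more than $k+|F|$ subtrees per type, so an unused valid alternative is always available.

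The hard part will be making this rerouting argument airtight: at every step the replacement marked branch must be disjoint from $S$ and from the portion of $\Cc'$ already placed, and when $\Cc$ has several cycles interacting at the same high-degree vertex $v$ the swaps must be performed without conflicts. These constraints are precisely what the slack terms $k+3$ and $k+2$ in the definition of $\dd$ are designed to absorb; processing offending $G'$-paths one at a time (using their $|F|$-bound) keeps the bookkeeping tractable. Once the swap argument succeeds, we obtain a packing of $\ell$ vertex-disjoint cycles in $G_1-S$, contradicting the choice of $S$ and completing the proof.
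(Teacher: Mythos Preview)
Your overall strategy---mark a bounded family of subtrees at a high-degree vertex and argue that the unmarked ones are redundant via a rerouting argument---is the same as the paper's. However, there is a genuine arithmetic inconsistency that breaks your swap argument, and a structural choice that makes the ``hard part'' harder than necessary.

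The budget you allocate for subtrees that internally contain an $(N_i,N_j)$-path is only $|F|$ per ordered pair, yet two paragraphs later you assert that ``we marked strictly more than $k+|F|$ subtrees per type.'' For the type ``contains an $(N_i,N_j)$-path'' this is simply false: if $k\ge |F|$ the deletion set $S$ alone can hit every such marked subtree, and Case~1 of your rerouting (an offending $G'$-path living entirely in an unmarked branch) has no available target. The paper spends $k+|F|+2$ subtrees on \emph{each} marking round---$|F|^2$ pairs for the $(N_i,N_j)$-path round and $|F|$ singletons for the $N_i$-vertex round---which gives the same total $|F|^2(k+|F|+2)+|F|(k+|F|+2)=\dd$ but the correct per-type slack. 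Your $|F|^3+(k+3)|F|^2+(k+2)|F|$ decomposition hits the right total by accident while distributing the budget in a way that cannot support the swap.

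Separately, the paper avoids your ``hard part'' entirely by deleting a \emph{single leaf} $w$ of one unmarked subtree per step rather than all unmarked subtrees at once. With one leaf gone, at most one cycle of a hypothetical packing $\Cc$ passes through $w$, so only one $G'$-path needs to be rerouted; the $k+|F|+2$ marked alternatives then comfortably survive $S$ (at most $k$ hits) and the other $G'$-paths of $\Cc$ (at most $|F|+1$ subtrees touched, since one path through $v$ can occupy two branches). Your batch deletion forces you to reroute up to $|F|$ paths simultaneously without collision, which is what you flag as needing ``airtight'' bookkeeping but never actually carry out. Switching to the one-leaf-at-a-time reduction both fixes the budget issue (with the paper's allocation) and eliminates the simultaneous-rerouting difficulty.
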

\begin{proof}
	Consider an arbitrary vertex $v \in V(G')$
	such that $\deg_{G'}(v) \geq \dd+1$.
	Let $T$ be the tree of $G'$ containing $v$ and root $T$ at $v$.
	For all neighbors $u \in N_{G'}(v)$ of $v$,
	we denote by $T_u$ the subtree of $T$ rooted at the vertex $u$.
	We perform the following \textsc{Marking Procedure} to mark
	certain subtrees rooted at the children of $v$.

	\begin{enumerate}
		\item
		Initially, all trees in $\{T_u \mid u \in N(v)\}$ are unmarked.

		\item For every pair
		$(i,j) \in F \times F$, repeat the following. If there are
		at least $k+|F|+2$ unmarked trees in $\{T_u \mid u \in N_{G'}(v)\}$
		each of which contains an $(N_i,N_j)$-path,
		then mark an arbitrary set of $k+|F|+2$ of them;
		otherwise, mark all of them.

		Once this marking step is performed for every pair $(i,j) \in F \times F$
		(or if all trees rooted at the children of $v$ are already marked),
		we continue marking the remaining unmarked trees in the following step.

		\item
		For every $i \in F$,
		if there are at least $k+|F|+2$ unmarked trees
		in $\{T_u \mid u \in N_{G'}(v) \}$
		that contain some
		vertex of $N_i$, then mark an arbitrary set of $k+|F|+2$ of them; otherwise,
		mark all of them.
	\end{enumerate}

	Observe that the number of marked trees is at most
	\[
		|F|^2 (k+|F|+2) + |F|	(k+|F|+2) = |F|^3 + (k+3) |F|^2 + (k+2)|F|
		=\dd
		.
	\]

	\begin{claim}%
		\label{clm:fvs:reduceDegree}
		Let $T_x$ be an unmarked tree in the set $\{T_u \mid u \in N_{G'}(v)\}$.
		Pick an arbitrary leaf $w$ of $T_x$.
		Then the original instance $\Ii$ is equivalent to
		the reduced instance $\Ii' = (G-\{w\}, U\setminus\{w\}, F, k,\ell)$.

	\end{claim}

	\begin{claimproof}
		If $S \subseteq V(G) \setminus U$ is a solution
		to the input instance $(G, U, F, k,\ell)$,
		then $S\setminus \{w\}$ is a solution to
		the instance $(G-\{w\}, U\setminus\{w\}, F, k,\ell)$.
		This proves the forward direction of the equivalence.

		We now prove the backward direction. Let
		$S\subseteq V(G)\setminus \{w\}$ be a solution of the reduced instance $\Ii'$.
		For the sake of a contradiction,
		assume there is a cycle packing $\Cc$ in $G-S$ of size at least
		$\ell$.
		This implies that there must be a cycle $C \in \Cc$ which contains $w$.
		Since $F$ is a feedback vertex set, it intersects all cycles
		and hence, there
		exist $i,j \in F$ (where $i$ could be the same as $j$) such that $C$
		contains an $(N_i,N_j)$-path.
		Moreover, there is an $(N_i, N_j)$-path $P$ in $C$
		that is also contained in $T$
		and goes through $w$ and therefore, intersects $T_x$
		(as $T$ is rooted at $v$ and $x$ is a neighbor of $v$).

		In the following we prove that (due to the marking procedure),
		we can replace the path $P$ by another path $Q$ that does not use $w$.
		For this we argue that there is at least one marked tree $T_u$
		for some $u \in N_{G'}(v)$
		that does not intersect with the given cycle cover of the set $S$.
		We distinguish two cases depending on whether
		the path $P$ is entirely contained in $T_x$ or also goes through $v$.

		To simplify notation we set $\Tt \deff \{T_u \mid u \in N_{G'}(v)\}$
		in the following.

		\begin{description}
			\item[Case 1.]
			The path $P$ is contained in $T_x$, that is, $V(P) \subseteq V(T_x)$.
			By the \textsc{Marking Procedure} and the fact that $T_x$ is unmarked,
			there are $k + \abs{F} + 2$ trees in $\Tt$,
			denote this set by $\Tt_0$,
			that have been marked for the pair $(i,j) \in F \times F$.

			Note that at most $k$ trees in $\Tt_0$ are hit by $S$.
			Let $\Tt_1$ denote the set of the trees in $\Tt_0$
			that are not hit by $S$.
			From the size bound on $S$,
			we directly get that $\abs{\Tt_1} \ge \abs{\Tt_0} - k \ge \abs{F} + 2$.

			For the next step, let $\Pp$ be the set of $G'$-paths of $\Cc$.
			Let $\Tt_2$ be the set of trees from $\Tt_1$
			that do not intersect a path from $\Pp$.
			Suppose that a path $Q$ of $\Pp$ goes through $v$.
			In this case $Q$ can intersect up to two trees in $\Tt_1$.
			Since the paths of $\Pp$ are vertex-disjoint,
			each path in $\Pp \setminus \{Q\}$
			can intersect at most one tree in $\Tt_1$.
			By \cref{obs:number-of-G'-paths}, we know that
			the number of $G'$-paths of $\Cc$ is at most $\abs{F}$.
			Hence, the size of $\Tt_2$ is at least $\abs{\Tt_1} - (\abs{F}+1) \ge 1$.
			So, let $T_y$ be an arbitrary tree in $\Tt_2$.

			Since $T_y$ was marked for the pair $(i,j)$
			by the above \textsc{Marking Procedure},
			it contains an $(N_i,N_j)$-path $P'$
			which is disjoint from $\Cc \setminus P$ and the solution $S$.
			Therefore, $(\Cc \setminus P) \cup P'$ forms a cycle packing
			of size at least $\ell$ in $(G-\{w\})-S$.
			But this immediately contradicts our assumption that $S$
			is a solution for the instance $\Ii'$.

			\item[Case 2.]
			The path $P$ passes through both $v$ and $w$.
			Without loss of generality, we can assume that $P$ ends in $T_x$
			at some vertex of $N_i$.

			By the \textsc{Marking Procedure} there are at least
			$k+\abs{F}+2$ marked trees in $\Tt$ that contain some vertex of $N_i$.
			Denote this set of trees by $\Tt_0$.

			At most $k$ trees in $\Tt_0$ are intersected by the solution $S$.
			Moreover, at most $|F|$ trees in $\Tt_0$ are intersected by $\Cc$
			(recall that $P$ also intersects $T_x$ which is not marked).
			Let $\Tt_1$ be the set of the remaining trees in $\Tt_0$.
			We directly get that $\abs{\Tt_1} \ge \abs{\Tt_0} - k - \abs{F} \ge 2$.
			From the size bound on $\Tt_1$,
			we get that there is at least one tree in $\Tt_1$, say $T_y$,
			that is disjoint from $\Cc$ and $S$
			and contains some vertex of $N_i$.

			Denote the unique $(N_i,y)$-path from a neighbor of $i$ to $y$ by $P_1$.
			Let $P_0$ be the subpath of $P$ from $w$ to $x$ in the tree $T_x$.
			Since $P_1$ is disjoint from $S$ and $\Cc$,
			we can use $P_1$ to replace the path $P_0$ in the cycle packing.
			Formally, $(\Cc \setminus P_0) \cup P_1$
			is a cycle packing of size at least $\ell$
			in the reduced graph after deleting the solution $S$.
			But this immediately contradicts our assumption that $S$ is a solution.
			\qedhere
		\end{description}

	\end{claimproof}
	After applying the above \textsc{Marking Procedure}
	and \cref{clm:fvs:reduceDegree} for all vertices of high degree exhaustively,
	the instance satisfies $\deg_{G'}(v)\leq \dd$ for every $v\in V(G')$.

	It remains to prove the running time of the above procedure.
	As for each vertex the procedure runs in time polynomial
	in the size of the graph, the stated runtime directly follows.
\end{proof}

\paragraph*{Step 3: Usable Paths and Cycle Packings.}
\addcontentsline{toc}{subsection}{Step 3: Usable Paths and Cycle Packings}

As a next step we define the notion of \emph{usable paths}
and \emph{usable cycle packings}.
We then show that it is enough to restrict our attention
to usable cycle packings (and not care about other cycle packings).
This notion is crucial to bound the depth of the branching tree
that we describe a little later.

For any $i,j \in F$ (where $i$ could be the same as $j$),
we denote by $G^{i,j}$
the sub-forest of $G'$ obtained as follows: repeatedly delete leaves that do not belong to $N_i \cup N_j$ from $G'$
until all leaves are contained in $N_i \cup N_j$.

For any two vertices $u,v \in N_i \cup N_j$,
we define $\threedist^{i,j}(u,v)$ as the number of internal vertices
on the unique $(u,v)$-path in $G^{i,j}$
whose degrees are
at least $3$ in $G^{i,j}$.
If there is no $(u,v)$-path in $G^{i,j}$,
we define $\threedist^{i,j}(u,v)=+\infty$.

\begin{definition}[Usable Paths and Cycle Packings]
	For any two vertices $i,j\in F$ ($i = j$ is allowed),
	an $(N_i,N_j)$-path $P$ from $u$ to $v$ is called \emph{usable}
	if the following three conditions hold:
	\begin{enumerate}[label=(\roman*)]
		\item
		path $P$ is contained in $G^{i,j}$,
		\item
		path $P$ is a minimal $(N_i,N_j)$-path,
		meaning no subpath of $P$ of length less than $|P|$
		is an $(N_i,N_j)$-path, an $(N_i,N_i)$-path, or an $(N_j,N_j)$-path,
		and
		\item
		$\threedist^{i,j}(u,v) \le |F|+1$.

	\end{enumerate}

	We say that a cycle packing $\Cc$ is \emph{usable}
	if all its $G'$-paths are usable.
\end{definition}

With this definition we prove as a next step
that it suffices to focus on usable cycle packings only.
Hence, all other cycle packings can be ignored.

\begin{lemma}\label{lem:usable-cp}
	If a graph $G$ with a feedback vertex set $F$
	has a cycle packing of size $\ell$,
	then $G$ has a \emph{usable cycle packing} of size $\ell$.
\end{lemma}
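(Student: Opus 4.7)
The plan is to choose a cycle packing of size $\ell$ that is extremal under a carefully designed measure, and then verify each of the three usability conditions by exchange arguments on this extremal packing. Concretely, among all cycle packings of $G$ of size $\ell$ I would take $\Cc$ minimizing the total vertex count $\sum_{C \in \Cc} \abs{V(C)}$, possibly refined lexicographically by a secondary term $\sum_{C,P} \threedist^{i(P),j(P)}(u(P),v(P))$ over all $G'$-paths $P$. Fix $C \in \Cc$ and a $G'$-path $P$ of $C$ from $u$ to $v$, and let $i, j \in F$ be the $F$-vertices of $C$ bracketing $P$, so that $u \in N_i$ and $v \in N_j$.

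Condition (i) is automatic: each internal vertex of $P$ has two $P$-neighbors that survive every stage of the trimming defining $G^{i,j}$ (by induction on the removal order), so it is never a leaf of the current subgraph, while $u, v \in N_i \cup N_j$ are never trimmed by construction. Condition (ii) follows by a direct exchange: a strictly shorter $(N_i,N_j)$-subpath $P'$ of $P$ can be substituted for $P$ inside $C$ to produce a cycle with strictly fewer vertices, while a strictly shorter $(N_i,N_i)$- or $(N_j,N_j)$-subpath $P'$ can be closed through $i$ (resp.\ $j$) alone to give an even smaller replacement cycle; either modification strictly decreases the primary minimality term and contradicts the choice of $\Cc$.

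For condition (iii), suppose $\threedist^{i,j}(u,v) \ge \abs{F}+2$. Then $P$ carries at least $\abs{F}+2$ pairwise vertex-disjoint branch subtrees $B_1, B_2, \ldots$ in $G^{i,j}$, one hanging off each degree-$\ge 3$ vertex of $P$, and each containing a leaf in $N_i \cup N_j$. Call a branch \emph{blocked} if it meets $V(\Cc) \setminus V(P)$. By \cref{obs:number-of-G'-paths} the total number of $G'$-paths across $\Cc$ is at most $\abs{F}$, and any such path distinct from $P$ that enters the tree of $G^{i,j}$ containing $P$ must lie inside a single branch $B_s$, since $G'$ is a forest and the components of that tree minus $V(P)$ are precisely the branches. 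Hence at most $\abs{F}-1$ branches are blocked, so at least three remain free. Pigeonhole over $\{i,j\}$ then supplies two free branches $B_s, B_t$ (with $s < t$) whose chosen leaves $\ell_s, \ell_t$ lie in a common $N_x$, and combining them with $x$ and the subpath of $P$ between $w_s$ and $w_t$ produces a new cycle $C^*$ that is vertex-disjoint from $\Cc \setminus \{C\}$.

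The hard part will be concluding that replacing $C$ by $C^*$ strictly improves the minimality measure and thereby yields the desired contradiction. Since the branch paths can themselves be long, $\abs{V(C^*)}$ need not be smaller than $\abs{V(C)}$, so the plain vertex-count measure may fail to decrease. I would address this in one of two ways: either (a) refine the minimality lexicographically so that after iteratively applying the $(ii)$-shortening to $C^*$ its unique $G'$-path becomes a minimal $(N_x,N_x)$-path with strictly fewer degree-$\ge 3$ vertices than $P$ had (so the secondary term drops), or (b) choose more carefully among the at least three free branches so that some reroute, possibly using a single branch rather than two and trimming only one side of $P$, does strictly reduce the vertex count. Pinning down this last step — establishing that the abundance of free branches really does force some strictly improving move — is the main technical obstacle of the proof.
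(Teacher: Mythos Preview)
Your handling of conditions (i) and (ii) is fine and essentially matches the paper. The gap you flag in (iii) is genuine, and neither of your proposed fixes is easy to complete: the reroute $C^*$ can be arbitrarily longer than $C$ (the branch paths can be long), and the secondary $\threedist$-sum need not drop either, because the new $G'$-path of $C^*$ is an $(N_x,N_x)$-path and hence must be evaluated in $G^{x,x}$ rather than $G^{i,j}$, where the branches may well carry their own degree-$\ge 3$ vertices.

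The paper sidesteps this entirely by changing the extremal measure: instead of total vertex count, take $\Cc$ minimizing the \emph{number of $G'$-paths that are not usable}. With this measure one does not need a \emph{shorter} replacement in (iii), only a replacement whose new $G'$-path is \emph{usable}, and this is arranged by construction. Among the first $\abs{F}+1$ high-degree internal vertices $x_1,\dots,x_{\abs{F}+1}$ along $P$, some hanging subtree $T(q_0)$ of $G^{i,j}-E(P)$ avoids $\Cc$ (your pigeonhole, but needing only one free branch). If $T(q_0)$ meets $N_i\cup N_j$ in a single vertex $w$, then $T(q_0)$ is itself a path and the reroute from $u$ along $P$ to $x_{q_0}$ and then to $w$ has $\threedist^{i,j}(u,w)\le q_0\le \abs{F}+1$; if it meets $N_i\cup N_j$ in two or more vertices, pick $u',v'$ inside $T(q_0)$ with $\threedist^{i,j}(u',v')\le 1$. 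Either way the replacement path is usable, so the count of non-usable $G'$-paths strictly drops, yielding the contradiction without any comparison of cycle lengths.
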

\begin{proof}
	Let $\Cc$ be a cycle packing in $G$ of size $\ell$ such that the
	number of $G'$-paths of $\Cc$ that are not usable is minimized.
	Let $P$ be a $G'$-path of $\Cc$ which is not a usable path
	and let $u$ and $v$ be the endpoints of $P$.
	Let $i,j \in F$ be the neighbors of $u,v$
	such that the cycle of $\Cc$ that contains $P$
	also contains the edges $ui$ and $vj$.
	This especially implies that $P$ is an $(N_i,N_j)$-path in $G'$.

	If $P$ is not a minimal $(N_i,N_j)$-path, then let
	$P'$ be a subpath of $P$ that is a minimal $(N_i,N_j)$-path (minimal $(N_i,N_i)$-path or $(N_j,N_j)$-path).
	Hence, $\Cc'=(\Cc\setminus P) \cup P'$ is also a cycle packing of
	size $\ell$ in $G$.
	Therefore, we can assume that all $G'$-paths of $\Cc$ are minimal.

	Next we assume that $P$ is a minimal $(N_i,N_j)$-path
	and that $\threedist^{i,j}(u,v)\ge |F|+2$.
	Let $x_1 , \dots, x_{|F|+1}$ be 
	the internal vertices of $P$ traversed from $u$ in order
	such that each of them is a vertex of degree at least $3$ in $G^{i,j}$.
	For all $q \in \numb{\abs{F}+1}$,
	we define the unique tree $T(q)$
	as the tree in $G^{i,j}-E(P)$ containing $x_q$.
	From \cref{obs:number-of-G'-paths}, we know that $\Cc$
	contains at most $|F|$ distinct $G'$-paths.
	Hence, there exists some $q_0 \in \numb{\abs{F}+1}$
	such that $T(q_0) \setminus \{x_{q_0}\}$ does not intersect $\Cc$.
	Moreover, since $x_{q_0}$ is a vertex of degree at least $3$ in $G^{i,j}$,
	tree $T(q_0)$ contains at least one vertex from $N_i \cup N_j$.

	Now we distinguish two cases
	depending on how many vertices from $N_i \cup N_j$ appear in $T(q_0)$.

	\begin{description}
		\item[\boldmath One vertex from $N_i\cup N_j$ appears in $T(q_0)$.]
		Formally, we have that $|V(T(q_0)) \cap (N_i \cup N_j)| =1$.
		Let $w$ be this vertex, that is, $w$ is the leaf of $T(q)$.

		In this case $T(q_0)$ is a path in $G^{i,j}$
		and let $P'$ be the unique $(u,w)$-path in $G^{i,j}$ (and thus in $G'$).
		By assumption about $u$ and $v$,
		we get that $\threedist^{i,j}(u,w) \leq |F|+1$
		and $P'$ is a usable path.

		In conclusion we get that $\Cc'=(\Cc\setminus P) \cup P'$
		is also a cycle packing of size $\ell$ in $G$.

		\item[\boldmath Multiple vertices from $N_i\cup N_j$ appear in $T(q_0)$.]
		In this case we formally get that
		$|V(T(q_0)) \cap (N_i \cup N_j)| \geq 2$.
		Let $u'$ and $v'$ be two distinct vertices in $T(q_0)$
		such that they are also contained in $N_i \cup N_j$.
		Moreover, we require that the unique path $P'$ from $u'$ to $v'$
		does not contain any other vertex from $N_i \cup N_j$.
		Since $T(q_0)$ is either a path or a tree with at least two leaves
		and, by definition of $G^{i,j}$, all leaves belong to $N_i \cup N_j$,
		we can choose $u'$ and $v'$ such that $\threedist^{i,j}(u',v') \le 1$.

		We get that $\Cc'=(\Cc\setminus P) \cup P'$ is also a cycle packing
		of size $\ell$ in $G$
		or $\Cc'$ contains a cycle packing of size $\ell$
		(after removing the path from $i$ to $j$)
		in case $u',v' \in N_i$ or $u',v' \in N_j$
		(note that a path starting from $i$ and ending with $i$ is also a cycle).
	\end{description}

	In both of the above cases,
	we get a cycle packing of size $\ell$
	with strictly larger number of usable $G'$-paths.
	This contradicts the choice of $\Cc$
	and thus, completes the proof for the lemma.
\end{proof}

\subparagraph*{Rich Pairs and a Bound for the Number of Usable Paths.}
For all $i,j \in F$, we bound the number of usable $(N_i,N_j)$-paths
that are hit by a fixed vertex $v \in G'$ by some value $\Gamma$
or prove that this pair is rich.
We say that a pair $(i,j) \in F \times F$ is a \emph{rich pair}
if, for any cycle packing $\Cc$ of $G$,
there are at least $k+1$ vertex-disjoint $(N_i \cup N_j,N_i \cup N_j)$-paths
in $G'$ that are disjoint from $\Cc$.

We show that, if a pair $(i,j)$ is not rich,
then $\Gamma$ is the maximum number of usable $(N_i,N_j)$-paths
that can be hit by any fixed vertex, 
where we define $\Gamma$ as
\begin{equation}
	\Gamma \deff \left(\dd (2|F|+k)(|F|+3)\right)^2
	.
\end{equation}

This bound is the key property to obtain a small depth of the branching tree
of the branching procedure that we describe later. See \cref{fig:usable-paths} for an illustration of \cref{lem:intersect}.

\begin{figure}[tp]
	\centering

\begin{tikzpicture}[%
  scale=.85,
  node distance = 1cm and 1cm,
  on grid,
  ]
  \definecolor{selColor}{RGB}{164,210,225}
\definecolor{freeColor}{RGB}{243,180,48}

\tikzset{%
vertex/.style={
  draw=white,
  fill=black,
  label={below:{\large\ensuremath{#1}}},
  circle,
  line width = 1.5pt,
  inner sep = 2.25pt,
  outer sep = 0pt,
},
vertex/.default=\null,
vertexDist/.style={
  vertex = #1,
  diamond,
},
vertexDist/.default=\null,
edge/.style={
  line width=0.5pt,
  draw=white,
  double distance = 1.5pt,
  double = black,
  rounded corners,
},
selVtx/.style={
  draw = #1,
},
selVtx/.default=selColor,
freeVtx/.style={
  selVtx = freeColor,
},
selEdg/.style={
  line width = 2pt,
  double = black,
  double distance = 2pt,
  draw = #1,
  rounded corners,
},
selEdg/.default=selColor,
freeEdg/.style={
  selEdg = freeColor,
},
}

  \node[vertexDist,label={above:{\large$v$}}] (v0) at (9.5,9.5) {};

  \node[vertexDist] (v1) at (7.5,7.5) {};
    \node[vertexDist] (v11) at (4.5,4.5) {};
      \node[vertexDist,selVtx] (v111) at (2,2) {};
        \node[vertex=i,selVtx] (l1) at (1,1) {};
        \node[vertex=j,selVtx] (l2) at (3,1) {};
      \node[vertexDist] (v112) at (6,3) {};
        \node[vertexDist,selVtx] (v1121) at (5,2) {};
          \node[vertex=i,selVtx] (l3) at (4,1) {};
          \node[vertex=i,selVtx] (l4) at (6,1) {};
        \node[vertex=i,vertex] (l5) at (8,1) {};
    \node[vertexDist] (v12) at (9.5,5.5) {};
      \node[vertex=i] (l11) at (7.5,3.5) {};
      \node[vertexDist,freeVtx] (v122) at (10.5,4.5) {};
        \node[vertex=j,freeVtx] (l12) at (9.5,3.5) {};
        \node[vertex=i,freeVtx] (l13) at (11.5,3.5) {};
  \node[vertexDist] (v2) at (14.5,4.5) {};
    \node[vertexDist] (v21) at (13,3) {};
      \node[vertexDist,selVtx] (v211) at (12,2) {};
        \node[vertex=j,selVtx] (l6) at (11,1) {};
        \node[vertex=j,selVtx] (l7) at (13,1) {};
      \node[vertex=i,vertex] (l8) at (15,1) {};
    \node[vertexDist,selVtx] (v22) at (17,2) {};
      \node[vertex=j,selVtx] (l9) at (16,1) {};
      \node[vertex=i,selVtx] (l10) at (18,1) {};

  \draw[edge] (v0) --
    (v1) -- (v11) -- (v111)
            (v11) -- (v112) -- (v1121)
                     (v112) -- (l5)
    (v1) -- (v12) -- (l11)
            (v12) -- (v122)
    (v0) -- (v2) -- (v21) -- (v211)
                    (v21) -- (l8)
            (v2) -- (v22)
    ;

    \draw[edge,selEdg] (l1) -- (v111) -- (l2);
    \draw[edge,selEdg] (l3) -- (v1121) -- (l4);
    \draw[edge,selEdg] (l6) -- (v211) -- (l7);
    \draw[edge,selEdg] (l9) -- (v22) -- (l10);
    \draw[edge,freeEdg] (l12) -- (v122) -- (l13);

    \draw[line width = 1.5pt,double=freeColor,white,double distance=2pt,rounded corners=5pt]
      (l11)++(-0.5,-1) -- ++(5,0) -- ++(0,3.5) -- ++(-5,0) -- cycle;

\end{tikzpicture}
	\caption{An illustration from the proof of \cref{lem:intersect}.
		The four highlighted paths at the bottom
		belong to the assumed maximal collection $\Pp$.
		For convenience, we use $i$ and $j$ to denote the neighbors of vertex $i,j\in F$ respectively. 
		The nodes with diamond shape belong to $A=\bigcup_{P \in \Pp} A_P$,
		and after removing these nodes,
		there is one component which has more than one leaf (appearing in the box).
		Then, we can add the highlighted path in this box to $\Pp$
		and thus, strictly increase the size of $\Pp$.
	}
	\label{fig:usable-paths}
\end{figure}

\begin{lemma}\label{lem:intersect}
	Let $G$ be a graph
	of low degree with regard to $\dd$.
	For all $v \in V(G')$ and $i,j \in F$, either $(i,j)$ is a rich pair,
	or the number of usable $(N_i,N_j)$-paths that are hit by $v$,
	i.e., intersected by vertex $v$, is at most $\Gamma$.
\end{lemma}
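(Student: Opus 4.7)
The approach is a proof by contraposition: I will assume that $(i,j)$ is not a rich pair and bound the number of usable $(N_i,N_j)$-paths hit by $v$ by $\Gamma$. By the negation of rich, there is a cycle packing $\Cc$ in $G$ such that every collection of vertex-disjoint $(N_i\cup N_j, N_i\cup N_j)$-paths in $G' - V(\Cc)$ has size at most $k$. Fix such a maximum collection $\Pp$, and replace each $P \in \Pp$ by a minimal $(N_i\cup N_j, N_i\cup N_j)$-subpath if necessary, so that the internal vertices of every $P\in\Pp$ lie outside $N_i\cup N_j$.

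Set $A := V(\Pp)\cup (V(\Cc)\cap V(G^{i,j}))$. The key structural consequence of the maximality of $\Pp$ is that every connected component of $G^{i,j}-A$ contains at most one vertex of $N_i\cup N_j$: otherwise the unique path in $G^{i,j}$ between two such vertices lies entirely in that component, is therefore disjoint from $A$ (and from $V(\Cc)$), and would be a valid extension of $\Pp$ in $G'-V(\Cc)$, contradicting maximality. This is exactly the structural consequence illustrated in the accompanying figure.

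Since $G^{i,j}$ is a forest, any usable $(N_i,N_j)$-path through $v$ is uniquely determined by its two endpoints $(u,w)\in N_i\times N_j$. Hence the number of such paths is at most $|L_i|\cdot |L_j|$, where $L_\alpha$ is the set of $u\in N_\alpha$ such that the unique $(v,u)$-path in $G^{i,j}$ has no intermediate $N_i\cup N_j$ vertex and at most $|F|+1$ branch vertices (vertices of degree $\ge 3$ in $G^{i,j}$). It therefore suffices to prove $|L_\alpha|\le M:=\dd(2|F|+k)(|F|+3)$, which yields $|L_i|\cdot |L_j|\le M^2=\Gamma$.

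To bound $|L_\alpha|$, I partition $L_\alpha$ into endpoints lying in $A$ and endpoints lying in $G^{i,j}-A$. The former are bounded by the number of $N_i\cup N_j$ vertices among the endpoints of paths of $\Pp$ (at most $2k$ by minimality) and among the endpoints of the $G'$-paths of $\Cc$ (at most $2|F|$ by \cref{obs:number-of-G'-paths}), plus the interior vertices of the $\le |F|+k$ subpaths comprising $A$. For the latter, each endpoint $u\in L_\alpha\setminus A$ must be the unique $N_i\cup N_j$ vertex of its component of $G^{i,j}-A$, so it is enough to count how many such components are reachable from $v$ by a path satisfying the usability constraints. The crucial point is that at every branch vertex of the exploration lying outside $A$, maximality of $\Pp$ forces all but one of its outgoing subtrees to contain no further $N_i\cup N_j$ vertex (else we could extend $\Pp$ using an internal $(N_i\cup N_j, N_i\cup N_j)$-path avoiding $A$). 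Therefore effective branching can occur only at branch vertices belonging to $A$, of which there are at most roughly $(2|F|+k)(|F|+3)$ reachable within the $|F|+1$ branch budget; combining this with the degree bound $\dd$ yields the factor $\dd(2|F|+k)(|F|+3)$.

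The main obstacle is the last counting step: precisely tracking how the budget of $|F|+1$ branch vertices is consumed as the exploration enters and leaves the $|F|+k$ subpaths making up $A$, and how the degree bound $\dd$ limits the successor choices at each crossing, so as to obtain exactly the factor $\dd(2|F|+k)(|F|+3)$ rather than an exponential blow-up in $|F|$ that a naive traversal would produce. This requires a careful bookkeeping argument on the structure of the sub-paths of $A$ inside $G^{i,j}$ and on how the "dead branches" dictated by maximality of $\Pp$ collapse the branching off-$A$.
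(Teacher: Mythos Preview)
Your approach differs meaningfully from the paper's, and the obstacle you flag in the last step is real rather than mere bookkeeping.

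The paper does not start from the negation of richness. It works inside a local subtree $T$ of $G^{i,j}$ rooted at $v$ (maximal so that every $(N_i,N_j)$-path in $T$ is usable and all leaves lie in $N_i\cup N_j$), and takes $\Pp$ to be a maximal family of disjoint $(N_i\cup N_j,N_i\cup N_j)$-paths \emph{inside $T$} subject to two lca constraints: the lca's of distinct members are pairwise incomparable, and no member admits another $(N_i\cup N_j,N_i\cup N_j)$-path with a strictly deeper lca. Because any $G'$-path meets at most two members of such a $\Pp$, having $|\Pp|\ge 2|F|+k+1$ directly certifies that $(i,j)$ is rich; otherwise $|\Pp|\le 2|F|+k$. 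Crucially, the paper's set $A$ is \emph{not} $V(\Pp)$: it is only the degree-$\ge 3$ vertices on the root-to-$\lca(P)$ paths, and usability bounds each $A_P$ by $|F|+3$. Hence $|A|\le (2|F|+k)(|F|+3)$, the degree bound yields at most $\dd\cdot|A|$ components of $T\setminus A$, and the lca conditions guarantee each component contains at most one $N_i\cup N_j$ vertex. The leaf bound $\sqrt{\Gamma}$ then falls out immediately.

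In your setup $A=V(\Pp)\cup(V(\Cc)\cap V(G^{i,j}))$ is a union of at most $|F|+k$ paths of unbounded length, so removing $A$ produces an uncontrolled number of components. Your attempted fix---that at a branch vertex $b\notin A$ all but one outgoing subtree is ``dead''---is not correct as stated: what follows from maximality of $\Pp$ is that the component of $G^{i,j}-A$ containing $b$ has at most one $N_i\cup N_j$ vertex, but the subtrees of $b$ in $G^{i,j}$ may enter $A$, travel along one of its long paths, and exit into other components containing further $N_i\cup N_j$ vertices. So off-$A$ branching is not dead, and on-$A$ branch vertices are not bounded by $(2|F|+k)(|F|+3)$ without further argument. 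The missing idea is precisely the paper's: define $\Pp$ locally in $T$ with the lca conditions so that $A$ can be taken as a small set of branch vertices, after which the count is a one-line application of the degree bound rather than an exploration argument.
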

\begin{proof}
	Fix some vertex $v \in V(G')$, some $i,j\in F$
	and let $T_v$ be the tree of $G^{i,j}$ that contains $v$.
	We define $T$ as a maximal subtree of $T_v$ containing $v$
	such that every $(N_i,N_j)$-path contained in $T$ is usable
	and every leaf of $T$ is in $N_i \cup N_j$.
	We root $T$ at $v$.
	Observe that the vertices of $N_i \cup N_j$ appear only at the leaves of $T$,
	as otherwise some $(N_i,N_j)$-paths in $T$ would not be minimal (and thus not usable).

	From the definition of usable paths,
	it follows that every usable $(N_i,N_j)$-path in $G'$ that is hit by $v$
	has its endpoints in the set $(N_i \cup N_j) \cap V(T)$,
	which is precisely the set of leaves of $T$.
	Hence, the number of usable $(N_i,N_j)$-paths hit by $v$
	is at most $|(N_i \cup N_j) \cap V(T)|^2$.

	Based on this, our new goal is to prove that either $(i,j)$ is a rich pair,
	or the number of leaves of $T$ is at most
	$\sqrt \Gamma = \dd(2\abs{F}+k)(\abs{F}+3)$.

	We say that two vertices of $T$ are \emph{incomparable} if neither vertex is a descendant of the other one in $T$. 
	For a path $P$ in $T$, we define $\lca(P)$ as the lowest common ancestor
	of the two endpoints of $P$ in $T$.
	Consider a maximal collection $\Pp$
	of pairwise disjoint $((N_i \cup N_j), (N_i \cup N_j))$-paths in $T$
	such that the following two conditions hold:
	\begin{itemize}
		\item
		For two distinct paths $P_1, P_2 \in \Pp$,
		$\lca(P_1)$ is incomparable with $\lca(P_2)$.

		\item
		For every path $P\in \Pp$,
		there is no other $((N_i \cup N_j), (N_i \cup N_j))$-path $P'$ in $T$
		such that $\lca(P')$
		is a descendant of $\lca(P)$.
	\end{itemize}

	Observe that any path in {$G'$} can
	intersect at most two paths in $\Pp$.
	In particular,
	since any cycle	packing $\Cc$ of $G$ has at most $|F|$ different $G'$-paths,
	at most $2|F|$ paths in $\Pp$ can intersect $\Cc$.
	Therefore, if $|\Pp| \ge 2|F|+k+1$,
	then $(i,j)$ is a rich pair.
	For the remainder of the proof we assume that $|\Pp| \leq 2|F|+k$.
	Recall that we aim to bound the number of leaves of $T$.

	For each $P \in \Pp$,
	we additionally define the set $A_P$
	as the set of vertices on the unique path from $\lca(P)$ to $v$
	that have degree at least three in $G^{i,j}$.
	We directly get that $|A_P| \le \threedist^{i,j}(v,\lca(P))+2 \le |F|+3$,
	since all $(N_i,N_j)$-paths in $T$ are usable.

	We set $A = \bigcup_{P \in \Pp} A_P$
	and get $\abs{A} \le (2|F|+k)(|F|+3)$	as $|\Pp| \leq 2|F|+k$.
	As a next step we define the graph $T' = T \setminus A$.
	Since $T$ is connected and $\deg_{G'}(u)\leq \dd$ for every $u\in V(G')$,
	the graph $T'$ has at most $\dd (2|F|+k)(|F|+3)$ components.
	In particular, every connected component of $T'$
	contains at most one vertex of $N_i \cup N_j$.
	Suppose for contradiction that one component of $T'$
	has at least two vertices of $N_i \cup N_j$.
	Then in this component we can find an $((N_i \cup N_j), (N_i \cup N_j))$-path $\hat{P}$ satisfying that (i) $\lca(\hat{P})$ does not belong to $A$;
	(ii) there is no other $((N_i \cup N_j), (N_i \cup N_j))$-path $P'$ in $T$
	such that $\lca(P')$ is descendant of $\lca(\hat{P})$.
	Condition (i) implies that for any $P\in \Pp$, $\lca(P)$ is incomparable with $\lca(\hat{P})$.
	Then we could strictly increase the size of $\Pp$,
	contradicting the maximality of $\Pp$.

	Therefore, the number of vertices of $(N_i \cup N_j) \cap V(T)$,
	i.e., the number of leaves of $T$, is at most $\dd (2|F|+k)(|F|+3)$.
\end{proof}

\paragraph*{Step 4: The Branching Procedure.}
\addcontentsline{toc}{subsection}{Step 4: The Branching Procedure}

For a given instance
$(G,U,F,k,\ell)$, we now design a branching procedure, which we call \branch. This procedure
generates $2^{\poly(|F|+k)}$ candidate instances of a different (relatively easier) problem.
We design these instances such that
(i) it is enough to look for a solution for these candidate instances, and
(ii) each of these candidate instances can be solved in time $2^k n^{\Oh(1)}$.

We first present an informal description of the branching procedure.
First,
observe that if the input instance is a \yes-instance
and $\Cc$ is a cycle-packing of size $\ell$,
then every solution, i.e., set of vertices that are deleted,
hits at least one of the $G'$-paths of $\Cc$.
The branching procedure stores two complementary sets $\avail$ and $\hit$
of $G'$-paths
such that it first finds a usable cycle packing $\Cc$ of size $\ell$
in $G$ which only uses paths in $\avail$ and then branches on
\emph{which} particular $G'$-path of $\Cc$ is hit by the solution,
that is, which path is added to $\hit$ and removed from $\avail$.
Note that unlike typical branching algorithms,
this procedure does not branch on
the choice of a fixed vertex into the solution.
Instead, in one branch, it
considers a set of vertices (namely a $G'$-path) that is guaranteed to be hit
by the solution.

We first describe the procedure to find a usable cycle packing
which uses paths from a specified set only.
For this we make use of the subroutine in \cref{lem:find-usable-cp}.

\begin{restatable}{lemma}{findUsableCp}
	\label{lem:find-usable-cp}
	Let $H$ be a forest $H$ on $n$ vertices
	and $\mathcal{P}_1, \dots, \mathcal{P}_f$ be $f$ collections of paths of $H$.
	In time $2^{\Oh(f)} \cdot n^{\Oh(1)}$,
	we can find $f$ pairwise vertex-disjoint paths $P_1, \dots, P_f$
	such that, for each $i \in [f]$,
	it holds that $P_i \in \mathcal{P}_i$
	whenever such paths exist or correctly answer \no otherwise.
\end{restatable}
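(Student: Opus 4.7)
The plan is to run a dynamic programming algorithm on the forest $H$, rooting each tree of $H$ arbitrarily and processing the vertices in post-order. The structural fact driving the DP is that any path in a rooted tree has a unique topmost vertex, namely its LCA, and consists of at most two descending arms rooted at this LCA. Together with vertex-disjointness, this means that every vertex $v \in V(H)$ lies on at most one of the selected paths $P_1,\dots,P_f$.

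For every vertex $v$, I would maintain a boolean table indexed by pairs $(S,\tau)$, where $S \subseteq [f]$ is the set of collections whose selected path has been entirely placed within the subtree rooted at $v$, and $\tau$ is either the symbol $\bot$ (no selected path crosses the edge from $v$ to its parent) or a pair $(j,u)$ with $j \in [f]\setminus S$ and $u$ a descendant of $v$, meaning that a still-incomplete path from collection $j$ with one endpoint at $u$ passes through $v$ heading upward. The number of such entries at a single vertex is bounded by $2^f \cdot (1+fn) = 2^{\Oh(f)} \cdot n^{\Oh(1)}$, so storing the tables is affordable.

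Transitions at $v$ merge its children's tables via a standard ``knapsack-over-children'' subroutine that processes one child at a time and enforces that the committed-collection sets are pairwise disjoint. Since $v$ lies on at most one selected path, exactly one of the following cases arises at $v$: (i) no partial arrives from any child and $v$ does not start a new path, in which case $\tau=\bot$; (ii) exactly one partial arrives from a child, or $v$ itself starts a fresh path going upward, and this partial either terminates at $v$ (so $v$ is its other endpoint and the corresponding collection joins $S$) or keeps going up (so the new $\tau$ inherits the arriving data); (iii) two partials arrive from two distinct children and, to be compatible, must belong to the same collection $j$, together with $v$ forming a complete path of LCA $v$ that commits $j$ to $S$. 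Whenever a path is committed, the algorithm verifies in polynomial time that the unique $H$-path between the two known endpoints is indeed a member of the corresponding $\mathcal{P}_j$, rejecting the transition otherwise.

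The final answer is assembled at the roots: I would return \yes{} iff there is a partition $[f] = S_{r_1} \dotcup \cdots \dotcup S_{r_m}$ across the roots $r_1,\dots,r_m$ of $H$ such that the entry $(S_{r_i},\bot)$ is reachable at each $r_i$. Per-vertex work is polynomial in $n\cdot 2^f$ and each vertex is processed once, so the overall running time is $2^{\Oh(f)}\cdot n^{\Oh(1)}$ as claimed. The main obstacle that requires careful treatment is the case analysis for combining the partials arriving at $v$ from its children: one must correctly enforce that the two arms (in case (iii)) come from the same collection, that a path committed at $v$ is validated against the correct $\mathcal{P}_j$, and that collections are never double-counted when amalgamating the tables from different children.
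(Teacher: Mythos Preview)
Your approach is correct and achieves the claimed running time, but it differs from the paper's proof in a genuine way. The paper does not track partial arms propagating upward. Instead, at each vertex $v$ it branches on whether some selected path passes through $v$; if so, it guesses that full path $P\in\mathcal{P}_i$ (with $v\in V(P)$), removes $V(P)$ from $H_v$, and recurses on the resulting forest via an auxiliary table $T_{\textsf{path}}[v,P,I]$ together with the forest-to-tree reduction (adding a dummy root). So the paper commits an entire path at once at every vertex it touches, whereas you commit a path only at its LCA and carry a single-arm descriptor $(j,u)$ upward until then. Your formulation keeps the state more local and avoids the extra table indexed by explicit paths $P$; the paper's formulation avoids the knapsack-over-children bookkeeping (and the need for an intermediate state holding up to two pending arms during that merge). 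Both are standard tree-DP patterns and both land at $2^{\Oh(f)}\cdot n^{\Oh(1)}$; just be aware that your knapsack intermediate state must allow two pending partials before they are fused in case~(iii), which you allude to but do not spell out.
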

Note that the notation used in \cref{lem:find-usable-cp}
is independent of the notation of the algorithm that we have built so far.
As the proof of this lemma
uses standard bottom-up dynamic programming over rooted trees,
we defer the proof to \cref{sec:appendix}.

With this subroutine,
we can design the first ingredient of the branching procedure.

\begin{lemma}\label{lem:step-one}
	Let $(G, U, F, k, \ell)$ be an instance of \CycleUndelHitPack
	and let $\avail$ be some set of $G'$-paths.
	In time $2^{\Oh(|F| \log |F|)} \cdot n^{\Oh(1)}$
	we can find a usable cycle packing $\Cc$ of size $\ell$ in $G$
	whose $G'$-paths are contained in $\avail$.
\end{lemma}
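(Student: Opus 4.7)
The plan is to enumerate all combinatorial ``skeletons'' that a usable cycle packing of size $\ell$ could exhibit on the vertex set $F$, and to test each skeleton for realizability by invoking \cref{lem:find-usable-cp} on the forest $G' = G - F$.

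First I would observe that, since $F$ is a feedback vertex set, every cycle of the packing meets $F$, and the vertex-disjointness of the cycles forces the total number of $F$-vertices used across the packing to be at most $|F|$. By \cref{obs:number-of-G'-paths}, the total number of $G'$-paths in the packing is likewise at most $|F|$. I would define the \emph{skeleton} of a usable cycle packing $\Cc$ of size $\ell$ as the following data: a subset $F^{\ast} \subseteq F$ together with a partition of $F^{\ast}$ into at most $|F|$ nonempty groups (one per cycle), a cyclic order of the $F$-vertices inside each group, and, for every cyclically consecutive pair of $F$-vertices, a flag indicating whether they are joined by a direct edge of $G[F]$ or by a nonempty $G'$-path. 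Since $|F^{\ast}| \le |F|$ and the number of flags is at most $|F|$, a standard count gives at most $2^{|F|} \cdot |F|^{|F|} \cdot |F|! \cdot 2^{|F|} = 2^{\Oh(|F|\log|F|)}$ skeletons, which I can enumerate within the claimed time budget.

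For each enumerated skeleton, I would first discard it if a ``direct edge'' flag is not supported by an actual edge of $G[F]$, or if the skeleton is degenerate in a trivial way (e.g., a single-$F$-vertex group with a ``direct edge'' flag, or a two-$F$-vertex group in which both flags are ``direct edge''). Otherwise, let $f \le |F|$ be the number of ``$G'$-path'' flags; the $q$-th such flag prescribes a pair $(i_q, j_q) \in F \times F$ and requires a usable $(N_{i_q}, N_{j_q})$-path contained in $\avail$. I would set $\mathcal{P}_q$ to be the set of all such usable paths (obtained in polynomial time by filtering $\avail$) and then invoke \cref{lem:find-usable-cp} on $G'$ with the collections $\mathcal{P}_1, \dots, \mathcal{P}_f$ at cost $2^{\Oh(|F|)} \cdot n^{\Oh(1)}$. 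Because every $G'$-path is disjoint from $F$, combining the returned pairwise vertex-disjoint paths with the $F$-vertices and $G[F]$-edges specified by the skeleton reassembles a usable cycle packing of size $\ell$ whose $G'$-paths all lie in $\avail$.

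For correctness, any usable cycle packing of size $\ell$ with $G'$-paths in $\avail$ gives rise to a skeleton that the enumeration considers, and on that iteration \cref{lem:find-usable-cp} is guaranteed to return a realization. Multiplying the $2^{\Oh(|F|\log|F|)}$ skeletons by the $2^{\Oh(|F|)} \cdot n^{\Oh(1)}$ cost per skeleton yields the claimed running time. The only place that requires care is in making the skeleton description cover small or degenerate cycles uniformly (cycles with a single $F$-vertex, length-three cycles with two $F$-vertices, or cycles entirely contained in $F$); the flag mechanism handles all these cases once the few invalid combinations are filtered out.
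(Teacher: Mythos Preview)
Your proposal is correct and takes essentially the same approach as the paper: enumerate all possible ``skeletons'' of how a cycle packing intersects $F$ (there are $2^{\Oh(|F|\log|F|)}$ of them) and for each one invoke \cref{lem:find-usable-cp} on the forest $G'$ with the appropriate path collections drawn from $\avail$. The paper encodes each skeleton as a permutation of $F$ interleaved with four symbol types (forward-edge, forward-path, close-by-edge, close-by-path), whereas you describe it explicitly as a subset, a partition into cycle-groups, cyclic orders, and edge/path flags; these are just two presentations of the same combinatorial object, and the counting and per-skeleton work match.
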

\begin{proof}
	\newcommand{\fedge}{\to_{\mathsf{edge}}}
	\newcommand{\fpath}{\to_{\mathsf{path}}}
	\newcommand{\bedge}{\curvearrowleft_{\mathsf{edge}}}
	\newcommand{\bpath}{\curvearrowleft_{\mathsf{path}}}
	Our goal is to use \cref{lem:find-usable-cp} to find disjoint paths
	in the graph $G'$ that can be extended
	to a usable cycle packing of size $\ell$ in $G$.
	As input of the algorithm in \cref{lem:find-usable-cp}
	we use $f$ collections of $(N_{i_1},N_{j_1})$-paths,
	\dots, $(N_{i_f},N_{j_f})$-paths
	where $(i_1,j_1), \dots ,(i_f,j_f)$ are distinct pairs of $F\times F$.
	However, these pairs cannot be chosen arbitrarily
	as we must be able to extend the paths to a cycle packing.
	To find these pairs, we start with some preprocessing.

	Observe that a cycle packing $\Cc$
	naturally induces a collection of such pairs from $F\times F$
	based on the vertices from $F$ that appear in each cycle.
	But it can also happen that vertices in $F$ appear in only one pair
	because they are connected to some other vertex in $F$ by an edge.
	Based on these observations we show how to compute all possible sets
	of pairs that have to be considered.

	We first consider all possible permutations $\pi$ of $F$.
	Note that this does not provide any information about the cycles
	and how the vertices in $F$ are connected.
	To resolve this, we introduce four special symbols
	$\fedge$, $\fpath$, $\bedge$, and $\bpath$
	and insert one of them after each position in the permutation $\pi$.
	The permutation is then a string $\pi'$ of length $2\abs{F}$.
	The first two symbols, i.e., $\fedge$ and $\fpath$,
	indicate how two consecutive vertices in the permutation are connected,
	either by an edge or by a path in $G'$.
	The other two symbols, i.e., $\bedge$ and $\bpath$,
	indicate that the cycle is closed by an edge or a path in $G'$
	to the first vertex of the cycle,
	the previous vertex in $\pi'$ following a symbol $\bedge$ or $\bpath$.

	For example the string $1\fpath 3 \fedge 4\bpath 5 \fedge 2 \bpath$
	represents two cycles;
	the first cycle starts at $1$ and is then connected to $3$ by a path in $G'$,
	vertex $3$ has an edge to $4$
	and from $4$ there is a path in $G'$ back to $1$;
	the second cycle starts at $5$ and has an edge to $2$
	and from there a path in $G'$ back to $5$.

	By first choosing the permutation of $F$ and then inserting the symbols,
	we get that there are at most
	$\abs{F}! \cdot 4^{\abs{F}} \in 2^{\Oh(\abs{F} \log \abs{F})}$
	possible strings to consider.
	For each generated string,
	we check for each occurrence of $\fedge$ and $\bedge$
	whether the corresponding edge exists
	(for ease of notation we assume in this step
	that the vertices in $F$ have self-loops).
	For each occurrence of $\fpath$ and $\bpath$,
	we add the corresponding pair $(i,j)$ to a list $\Ll_{\pi'}$.
	We additionally discard the string if it represents less than $\ell$ cycles.

	It remains to define the sets $\Pp_{i,j}$ for all $(i,j) \in \Ll_{\pi'}$
	that serve as input for the algorithm from \cref{lem:find-usable-cp}.
	For each $(i,j) \in \Ll_{\pi'}$,
	we pick the $(N_i,N_j)$-paths in the set $\avail$
	which is given as input.

	Now for each collection of sets of candidates
	we invoke the algorithm from \cref{lem:find-usable-cp}.
	If the algorithm outputs a set of disjoint paths,
	we complete it to a usable cycle packing
	according to the above preprocessing.
	If no set of disjoint paths was found,
	we proceed to the next possible set of pairs,
	i.e., string $\pi'$, and repeat the procedure.

	Eventually we find a usable cycle packing of size $\ell$ for $G$
	whose $G'$-paths are contained in $\avail$
	or can conclude that no such packing exists.
	\qedhere

	\end{proof}

Now we have everything ready to state the branching procedure \branch.
Recall that we are working with the instance $(G,U,F,k,\ell)$, where
set $U$ contains the undeletable vertices,
set $F$ is a feedback vertex set of $G$,
and $F \subseteq U$.
\emph{Throughout the following branching procedure,
the instance $(G,U,F,k,\ell)$ is fixed}.
We can imagine that each node of the branching tree for \branch
has a pair $(\avail, \hit)$ associated with it,
where $\avail$
and $\hit$ are sub-collections of usable $(N_i,N_j)$-paths for some $i,j \in F$.
The set $\hit$ corresponds to the collection of $G'$-paths
that have been guessed so far to be hit by the solution
and the set $\avail$ contains all $G'$-paths that can still be used
for a cycle packing.
Therefore we also get $\avail \cap \hit = \emptyset$.

The idea of the branching procedure is that at each node of the branching tree,
one finds a usable cycle packing of size $\ell$
with a set of $r$ different $G'$-paths that are contained in $\avail$.
Then we branch into $r$ directions (nodes)
where each of the $r$ nodes includes one of the $r$ $G'$-paths
into the set $\hit$ in the associated pair $(\avail, \hit)$.

Now, we present the branching procedure formally.

\begin{lemma}[\branch]
	\label{lem:fvs:branching}
	Let $(G, U, F, k, \ell)$ be an instance of \CycleUndelHitPack of low degree.
	In time $2^{\Oh((|F|^2 k\Gamma) \log |F|)} \cdot n^{\Oh(1)}$
	we can construct a list of candidates  $\{\Ii_1, \ldots, {\Ii_p}\}$ (for some $p \in 2^{\Oh((|F|^2 k\Gamma) \log |F|)}$), where for each {$q \in [p]$}, $\Ii_q=(\avail_q,\hit_q)$, 
	 such that	the instance $(G,U,F,k,\ell)$ is a \yes-instance if and only if
		there exists an index $q_0 \in [p]$ and a set $X \subseteq V(G)\setminus U$
		of size at most $k$ such that $X$ hits all paths in $\hit_{q_0}$.

\end{lemma}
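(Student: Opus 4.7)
The plan is to implement \branch as a recursive routine that maintains at each node a pair $(\avail,\hit)$ of disjoint collections of usable $G'$-paths, initialized with $\avail_0$ equal to the set of all usable $G'$-paths and $\hit_0=\emptyset$. At each node I would invoke the subroutine from \cref{lem:step-one} to search for a usable $\ell$-cycle packing $\Cc$ whose $G'$-paths all lie in $\avail$. If none exists, emit $(\avail,\hit)$ as a candidate and terminate this branch of the recursion. Otherwise, by \cref{obs:number-of-G'-paths}, the packing $\Cc$ contains at most $|F|$ distinct $G'$-paths, and I branch into at most $|F|$ children, each guessing a different one of these paths to move from $\avail$ into $\hit$, intuitively corresponding to the guess that this particular path is the one broken by the sought solution.

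To establish the equivalence, the reverse direction is the cleaner one: suppose some $X\subseteq V(G)\setminus U$ with $|X|\le k$ hits every path in $\hit_{q_0}$ at some candidate. If $G-X$ contained a cycle packing of size $\ell$, then by \cref{lem:usable-cp} we may assume it is usable; every $G'$-path of this packing avoids $X$ and hence lies either in $\hit_{q_0}$ (impossible, as $X$ hits those paths) or in $\avail_{q_0}$. But then the packing would itself be a usable $\ell$-packing with $G'$-paths inside $\avail_{q_0}$, contradicting the leaf-termination condition certified by \cref{lem:step-one}. For the forward direction, given a genuine solution $X^\star$, I follow the branch where at each node the chosen child corresponds to a $G'$-path of $\Cc$ hit by $X^\star$; such a $G'$-path must exist because $F\subseteq U$ forces $X^\star$ to break some cycle of $\Cc$ outside of $F$, which means breaking one of its $G'$-paths. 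The candidate $(\avail_{q_0},\hit_{q_0})$ reached at this leaf then satisfies that $X^\star$ hits all of $\hit_{q_0}$.

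For the running time, the crucial step is bounding the depth of the search tree along any surviving branch. Using \cref{lem:intersect}, whenever no rich pair is involved in $\hit$, each vertex of a candidate solution $X$ hits at most $\Gamma$ usable $(N_i,N_j)$-paths per pair $(i,j)\in F\times F$, so summing over the $|F|^2$ pairs and the at most $k$ vertices of $X$ yields $|\hit|\le k\cdot|F|^2\cdot\Gamma$. A branch on which $|\hit|$ exceeds this threshold is pruned as infeasible. With branching factor $\le|F|$, depth $\Oh(|F|^2 k\Gamma)$, and work $2^{\Oh(|F|\log|F|)}\cdot n^{\Oh(1)}$ per node (dominated by \cref{lem:step-one}), the total cost is $|F|^{\Oh(|F|^2 k\Gamma)}\cdot 2^{\Oh(|F|\log|F|)}\cdot n^{\Oh(1)}=2^{\Oh(|F|^2 k\Gamma\log|F|)}\cdot n^{\Oh(1)}$, as required.

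The main obstacle is the handling of rich pairs, for which \cref{lem:intersect} gives no per-vertex hitting bound and the depth argument as stated fails. I expect this to be resolved by exploiting richness directly: once the algorithm is forced to insert a $G'$-path of a rich type $(i,j)$ into $\hit$, the defining property of a rich pair, namely the presence of $k+1$ vertex-disjoint $(N_i\cup N_j,N_i\cup N_j)$-paths disjoint from any cycle packing, can be used to certify that no set of at most $k$ deletable vertices can simultaneously destroy the corresponding family of alternative cycles, so the branch admits no valid solution and can be safely pruned. Making this detection precise, while preserving both directions of the equivalence and maintaining the depth bound $\Oh(|F|^2 k\Gamma)$ along all surviving branches, is the main technical challenge of the proof.
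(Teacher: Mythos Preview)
Your approach is essentially the paper's: the same branching tree, the same correctness arguments in both directions, and the same running-time skeleton. You also correctly isolate the handling of rich pairs as the crux.

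Where your sketch goes astray is the proposed resolution. You suggest that once a rich-type path enters $\hit$, the branch becomes infeasible because $k$ vertices cannot destroy all $k+1$ alternative paths. But nothing about this branch requires $X$ to destroy those alternatives: the branch only requires $X$ to hit the single path just placed in $\hit$, which it may do while leaving all alternatives untouched. So the presence of a rich-type path in $\hit$ does not by itself certify the branch dead, and your pruning criterion is not justified by the argument you give.

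The paper's resolution lives in the forward-direction analysis rather than in the branching rule itself. The procedure makes no distinction between rich and non-rich paths when branching; it simply imposes a hard depth cap of $|F|^2 k\Gamma$ and discards any leaf at which a usable packing still exists (keeping only the leaves where \cref{lem:step-one} reports no packing). The point is that, when tracing the branch determined by a genuine solution $X$, at every node one can always choose a $G'$-path of the current packing $\Cc$ that is both hit by $X$ \emph{and} of non-rich type. Suppose instead that every $G'$-path of $\Cc$ hit by $X$ has rich type. Then each such path can be swapped for one of its $k+1$ vertex-disjoint alternatives disjoint from the current packing; since $|X|\le k$, at least one alternative also avoids $X$. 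Iterating these swaps produces a size-$\ell$ cycle packing in $G$ that is not hit by $X$ at all, contradicting that $X$ is a solution. Hence along the branch traced by $X$ every path added to $\hit$ is of non-rich type, \cref{lem:intersect} bounds $|\hit|\le |F|^2 k\Gamma$, and this branch terminates via the first stopping criterion before the depth cap, so it is retained among the output candidates. The depth cap is there only to bound the size of the tree; the discarded leaves are irrelevant to the forward direction.
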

\begin{proof}
	We describe the branching procedure
	by constructing the labeled branching tree.
	For the root node of the branching tree
	initialize $\avail$ as the set of all usable $(N_i,N_j)$-paths
	in $G'$ for every $i,j \in F$, and set $\hit=\emptyset$.

	At a node labelled $(\avail, \hit)$,
	the \branch\ procedure performs the following
	two steps:
	\begin{description}

		\item[Step 1.]
		Use \cref{lem:step-one} to find a usable cycle packing $\Cc$
		of size $\ell$ in $G$ whose $G'$-paths are contained in $\avail$.

		\item[Step 2.]
		Let $\Pp$
		be the $G'$-paths of $\Cc$.
		Branch into $r \deff \abs{\Pp}$ directions,
		by creating new child nodes
		associated with the label $(\avail \setminus \{P\}, \hit \cup \{P\})$
		for all $P \in \Pp$.

		\item[First Stopping Criterion.]
		The {branching procedure} stops when $G$ has no usable cycle packing of size $\ell$ whose $G'$-paths are totally contained in $\avail$.

		\item[Second Stopping Criterion.]
The branching procedure also stops
when the depth of the current branch in the branching tree
is strictly more than $|F|^2 \cdot k\Gamma$.
	\end{description}

At the end of the branching procedure, let the
set of \emph{candidates} be $\{\Ii_1, \ldots, {\Ii_p}\}$ ,
where for each {$q \in [p]$}, $\Ii_q=(\avail_q,\hit_q)$. Let
$\Ll_{\good}$ be the list of \emph{promising
candidates}, in which there are no usable cycle packing of size $\ell$ whose $G'$-paths are totally contained in $\avail$,
and let
$\Ll_{\bad}$, called the \emph{discarded} candidates, be the
remaining candidates. 
The branching procedure returns $\Ll_{\good}$ as the output.

	As a first step we analyze the running time of the branching procedure.

	\begin{claim}[Running time of \branch]\label{lem:branching-time}
		The running time of the branching procedure \branch\ is
		$2^{\Oh((|F|^2 k\Gamma) \log |F|)} \cdot n^{\Oh(1)}$.
	\end{claim}
	\begin{claimproof}
		First, observe that for any pair $(\avail,\hit)$ that is associated with some node of the branching tree of \branch,
		the size of the sets
		$\avail$ and $\hit$ is at most $n^2$, where $n$ is the number of vertices of
		the input graph $G$. This is because the sets $\avail$ and $\hit$ contain
		paths of $G'$, which is a forest. Since there is a unique path between any
		two vertices in a tree, the total number of paths in $G'$ and hence in
		$\avail$ and $\hit$ is at most quadratic in the input size.

		By \cref{lem:step-one},
		Step~1 of \branch\ can be executed in time
		$2^{\Oh(|F| \log |F|)} \cdot n^{\Oh(1)}$.
		Further from the second stopping condition of \branch,
		the depth of the branching tree is $|F|^2 k \Gamma$.
		Moreover, the width is at most $\abs{F}$
		as the algorithm branches into $\abs{\Pp}$ directions in Step~2
		and, by \cref{obs:number-of-G'-paths}, it holds that
		$\abs{\Pp} \le \abs{F}$.

		We conclude that the number of nodes of the branching tree of \branch
		is $2^{\Oh(|F|^2 k \Gamma \log |F|)}$.
		The running time follows directly.
	\end{claimproof}

	As a next step we show that the output of the branching procedure
	can be used to recover the solution for the original instance.

	\begin{claim}%

		If there exists $(\avail,\hit) \in \Ll_{\good}$
		and a set $X \subseteq V(G)\setminus U$ of size at most $k$
		such that $X$ hits all paths in $\hit$,
		then the instance $(G,U,F,k,\ell)$ is a \yes-instance.
	\end{claim}
	\begin{claimproof}
		By the design of the branching procedure \branch,
		at least one path of the $G'$-paths of every usable cycle packing
		of size $\ell$ are hit by $X$
		as otherwise $\hit$ would not have been added to $\Ll_{\good}$
		in the first stopping criterion.
		Hence, in $G\setminus X$, there are no usable cycle packings of size $\ell$.
		From \cref{lem:usable-cp},
		we know that if $G$ has a cycle packing of size $\ell$,
		then $G$ also has a \emph{usable} cycle packing of size $\ell$.
		It follows that in $G\setminus X$ there are no cycle packings of size $\ell$.
		Since $X \subseteq V(G)\setminus U$ and $|X|\leq k$,
		set $X$ is a solution to $(G,U,F,k,\ell)$.
	\end{claimproof}

	It remains to show that a solution for the original instance
	propagates to the new instances.
	\begin{claim}
		If the instance $(G,U,F,k,\ell)$ is a \yes-instance,
		then there exists $(\avail,\hit) \in \Ll_{\good}$
		and a set $X \subseteq V(G)\setminus U$ of size at most $k$
		such that $X$ hits all paths in $\hit$.
	\end{claim}
	\begin{claimproof}
		Suppose that $X$ is a solution to the instance $(G,U,F,k,\ell)$,
		that is, the vertex set $X
		\subseteq V(G) \setminus U$ is of size at most $k$ such that $X$ hits every
		cycle packing of size $\ell$ in $G-X$.

		To construct the set $\hit$, we analyze the branching procedure.
		In every step of \branch
		where we can find a usable cycle packing $\Cc$ of size $\ell$ in $G$,
		there is at least one path $P$ of the $G'$-paths that is hit by $X$.
		Assume that this path is an $(N_i,N_j)$-path
		for some $i,j \in F$.
		We claim that we can choose $P$ such that
		the corresponding $(i,j)$ pair is not rich.

		Suppose for contradiction that all of the $G'$-paths hit by $X$
		correspond to rich pairs of $F\times F$.
		Then by the definition of rich pairs, there are for every cycle packing
		at least $k+1$ vertex-disjoint $(N_i,N_j)$-paths
		that are disjoint from the cycle packing.
		Hence, we can pick one of these paths that is not hit by $X$
		and replace the original path by it
		to get a new cycle packing $\Cc'$.
		Thus $X$ cannot hit all cycle packings of size $\ell$,
		contradicting that $X$ is a solution to the instance $(G,U,F,k,\ell)$.
		Therefore, our claim above holds and we can choose $P$
		such that the corresponding pair is not rich.

		Then following the branching procedure \branch,
		we can find a path in the branching tree
		which leads to a leaf node $t$ of the branching tree,
		that is, we go to the child of the current node
		where $P$ is included in corresponding set $\hit$.
		By our choice of the paths in the set $\hit$,
		all paths in the set $\hit$ associated with the leaf node $t$
		are hit by $X$ by our claim above.

		It remains to show that $t$ is a promising candidate.
		Suppose that the depth of $t$ in the branching tree is less than
		$|F|^2 \cdot k\Gamma$.
		In this case it is true that we cannot find a usable cycle packing $\bar{\Cc}$
		at node $t$ whose $G'$-paths are contained in $\avail$.
		Hence, candidate $t$ is promising.

	Suppose that the depth of $t$ in the branching tree is equal to $|F|^2 \cdot k\Gamma$ and we can still find a usable cycle packing $\bar{\Cc}$ at node $t$ whose $G'$-paths are contained in $\avail$.
This implies that $\bar{\Cc}$ is not hit by $X$.
By \cref{lem:intersect},
$X$ can hit at most $|F|^2 \cdot k\Gamma$ usable paths
(every path in $\hit$ does not correspond to a rich pair by our claim).
This contradicts that $X$ is a solution to the instance $(G,U,F,k,\ell)$.
Thus $t$ corresponds to a promising candidate, say $(\avail,\hit)$.
By the way we branch among all children of a node in the branching tree, the paths of $\hit$ are all hit by $X$.
This completes the proof for this claim.		\qedhere

	\end{claimproof}

	This concludes the proof of the branching procedure \branch.
\end{proof}

\paragraph*{Step 5: Solving Instances with Promising Sets.}
\addcontentsline{toc}{subsection}{Step 5: Solving Instances with Promising Sets}

According to \cref{lem:fvs:branching}, to complete the algorithm,
it suffices to consider an instance together with a promising candidate $(\avail,\hit)$
and find a set of at most $k$ vertices
that intersect all paths in $\hit$.
Recall that the set $\hit$ contains paths of the forest $G'$.

If we do not have any undeletable vertices, i.e., if $U =\emptyset$, then this
problem is equivalent to \textsc{Vertex Multicut on Trees}, which is
polynomial-time solvable~\cite[Folklore]{book1}.
If there are undeletable vertices,
i.e., when $U\neq \emptyset$, then this problem resembles
\textsc{Edge Multicut on Trees} which can be solved in $\Oh(2^k \cdot n)$
time~\cite{DBLP:journals/networks/GuoN05}.
We give this algorithm dealing with promising sets
in \cref{lem:hitting-problem}.
If one can find a solution for any of the promising sets, then the
algorithm reports that $(G,U,F,k,\ell)$ is a \yes-instance; otherwise,
it reports a \no.

\begin{lemma}[Path-Hitting on Forests]\label{lem:hitting-problem}
	\renewcommand{\hit}{\mathcal P}
	Given a forest $G'$, a set $U \subseteq V(G')$ of undeletable vertices,
	a collection of paths $\hit$ of
	this forest and an integer $k$.
	In time $2^k \cdot n^{\Oh(1)}$,
	one can find a set $X\subseteq V(G') \setminus
	U$ of size at most $k$, if it exists, such that $X$ {hits} all paths in $\hit$.
\end{lemma}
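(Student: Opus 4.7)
The plan is to design a depth-bounded branching algorithm analogous to the classical Guo--Niedermeier algorithm for \textsc{Edge Multicut on Trees}, adapted to vertex hitting with undeletable vertices. First I would root every tree of the forest $G'$ arbitrarily, so that every path $P\in\mathcal{P}$ has a well-defined lowest common ancestor $c(P)$ of its two endpoints. At each recursive call the algorithm picks a path $P^{*}\in\mathcal{P}$ whose LCA $c^{*}$ has maximum depth among all paths currently in $\mathcal{P}$, and branches depending on whether $c^{*}$ is deletable.

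If $c^{*}\notin U$, I would greedily add $c^{*}$ to the solution, decrement $k$, remove from $\mathcal{P}$ all paths passing through $c^{*}$, and recurse. The correctness of this greedy step rests on an exchange argument: if an optimal solution hits $P^{*}$ through a vertex $x\neq c^{*}$, then $x$ is a strict descendant of $c^{*}$ on one arm of $P^{*}$, and every path $P\in\mathcal{P}$ containing $x$ must also contain $c^{*}$. Otherwise both endpoints of $P$ would lie in the subtree $T_{x}$, placing $c(P)$ strictly deeper than $c^{*}$ and contradicting the maximality of $\mathrm{depth}(c^{*})$. Hence swapping $x$ for $c^{*}$ preserves all hits.

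If $c^{*}\in U$, the solution must hit $P^{*}$ through a deletable vertex strictly below $c^{*}$. Let $a^{*}$ (respectively $b^{*}$) denote the shallowest deletable vertex on the arm from $c^{*}$ to the endpoint $u$ (respectively $v$), excluding $c^{*}$, if such a vertex exists. An exchange argument analogous to the one above, again exploiting the maximality of $\mathrm{depth}(c^{*})$, shows that any deeper deletable vertex on the $u$-arm can be replaced by $a^{*}$: a path $P$ through such a vertex $x$ must exit $T_{x}$ (else $c(P)$ would be too deep) and therefore traverse $a^{*}$ (else both endpoints of $P$ would lie in $T_{a^{*}}$). We thus branch into the two options of including $a^{*}$ or including $b^{*}$; if only one exists we take it, and if neither exists we return \no. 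Each branch reduces $k$ by one, so the search tree has depth at most $k$, branching factor at most $2$, and only polynomial bookkeeping per node, yielding the claimed $2^{k}\cdot n^{\Oh(1)}$ time.

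The main subtlety lies in the exchange arguments, especially in the undeletable case where one must simultaneously account for both arms and for degenerate configurations such as $u=c^{*}$ or paths $P$ with $c(P)=c^{*}$. All these cases, however, reduce uniformly to the fact that the deepest-LCA choice of $P^{*}$ forbids any other path in $\mathcal{P}$ from having its LCA strictly below $c^{*}$, which is precisely what drives the replacement arguments.
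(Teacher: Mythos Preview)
Your proposal is correct and essentially identical to the paper's proof: both pick a path whose LCA is deepest, identify the topmost deletable vertex on each of the two arms, and branch on these two choices, with the same exchange argument justifying the restriction. The only cosmetic difference is that the paper treats the two cases uniformly (its vertices $u',v'$ coincide with $\lca(u,v)$ when the LCA is deletable, collapsing the branch), whereas you separate out the deletable-LCA case as a greedy step.
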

\begin{proof}
	\renewcommand{\hit}{\mathcal P}
	Without loss of generality we can assume that $G'$ is a tree
	that is rooted at some arbitrary vertex
	as otherwise we consider the trees of the forest sequentially
	and root each tree arbitrarily.

	Let $P_{u,v}$ be a path from $u$ to $v$ in $\hit$
	such that the
	lowest common ancestor (lca) of two endpoints of any other path in $\hit$ is not a descendant of the lca of $u$ and $v$ in $G'$.
	Let the lca of $u$ and $v$ be denoted as $\lca(u,v)$.%
	footnote{Observe that $\lca(u,v) \in \{u,v\}$ is possible.}
	We define two distinguished vertices on the path $P_{u,v}$ as follows.
	Let $u'$, $v'$ be the first vertices on the path
	from $\lca(u,v)$ to $u$, $v$ such that $u',v' \notin U$, respectively.
	We claim that there exists a minimum-sized set $X$ disjoint from $U$ that hits
	all paths in $\hit$ such that $X$ contains either $u'$ or $v'$.
	Once the claim is
	proven, it yields a straightforward branching algorithm
	with running time $2^k \cdot n^{\Oh(1)}$.

	We now show that our claim is true. Let $\bar{X}$ be
	a minimum-sized set that is disjoint from $U$ and intersects all paths
	in $\hit$. In particular, $\bar{X}$ intersects the $(u,\lca(u,v))$-path or
	the $(\lca(u,v),v)$-path of $P_{u,v}$. Without loss of
	generality, assume that $\bar{X}$ intersects the $(u,\lca(u,v))$-path of
	$P_{u,v}$ (the other case is symmetric)
	and let $\bar{u} \in \bar{X}$ be a vertex on
	the $(u, \lca(u,v))$-path. We claim that
	$X=\bar{X} \setminus \{\bar{u}\} \cup \{u'\}$ is also a set
	that hits all paths in $\hit$ of minimum size.
	Suppose not, then there exists another path $P_1$ in $\hit$,
	say with endpoints $u_1$ and $v_1$,
	which is not hit by $X$ and goes through $\bar{u}$.
	By the choice of $P_{u,v}$ as the path
	where the lca of the two endpoints is lowest in the tree,
	the path $P_1$ also goes through $\lca(u,v)$, and hence $u'$.
	Thus, $P_1$ is hit by $X$ which proves the claim.
	This completes the proof for the lemma.
\end{proof}

Using \cref{lem:hitting-problem}, for each promising candidate $(\avail,\hit)$,
a set of size at most $k$ that hits all paths in
$\hit$ can be found in $2^k \cdot n^{\Oh(1)}$ time, if it exists. We
are now ready to complete the proof of \cref{thm:fvs:algo}
which we restate here for convenience.

\fvsAlgo*

\begin{proof}
	Let $\Ii = (G,U,F,k,\ell)$ be the input instance of \CycleUndelHitPack.
	The overall algorithm works as follows.
	\begin{enumerate}
		\item
		For each subset $D \subseteq F$ where $\abs{D} \le k$,
		we construct a new instance
		$\Ii_D = (G-D, U \cup (F \setminus D), F\setminus D, k-\abs{D}, \ell)$,
		that is, we guess the subset of the feedback vertex set that is deleted.
		where the feedback vertex set is undeletable.

		\item
		For each instance $\Ii_1$ from the first step
		we use the algorithm from \cref{lem:fvs:reduceDegree}
		to create a new instance $\Ii_2(\Ii_1)$ of low degree with regard to $\dd$.

		\item
		For each instance $\Ii_2$ from the second step,
		we use the branching procedure \branch from \cref{lem:fvs:branching}
		to generate a list $\Ll_{\good}(\Ii_2)$ of promising candidates.

		\item
		For each instance $\Ii_2$ from the second step
		and each candidate $(\avail,\hit)$ in the corresponding list $\Ll_{\good}(\Ii_2)$
		from the third step,
		we check if there is a set $X$ of deletable vertices
		that hits every path in $\hit$ by using \cref{lem:hitting-problem}.
		If this is the case, then output \yes.
		Otherwise we proceed with the next set in $\Ll_{\good}(\Ii_2)$
		or with the next instance from the first step.

		\item
		If no branch outputs \yes, then the algorithm outputs \no.
	\end{enumerate}

	We first prove the correctness of the algorithm.
	The correctness of the branching in the first step follows from
	\cref{lem:fvs:fvsIsUndeletable}.
	For the second step the correctness follows directly from the
	properties of the algorithm in \cref{lem:fvs:reduceDegree}.
	The correctness of the remaining steps follows
	from the design of the procedure \branch in \cref{lem:fvs:branching}
	and by \cref{lem:hitting-problem}.

	It remains to analyze the running time of the entire algorithm.
	The first step takes time $2^{\abs{F}} \cdot n^{\Oh(1)}$.
	By \cref{lem:fvs:reduceDegree},
	the second step can be applied in time polynomial in the size of the instance.
	According to \cref{lem:fvs:branching},
	the branching procedure \branch runs in time
	$2^{\Oh((|F|^2 k\Gamma)\log |F|)} \cdot n^{\Oh(1)}$,
	and the number of promising sets
	it generates is $2^{\Oh((|F|^2 k\Gamma) \log |F|)}$.
	The last step takes time $2^k \cdot n^{\Oh(1)}$.
	As by assumption we have $\Gamma=\poly(|F|+k)$,
	the entire procedure runs in time
	\[
		2^{\abs{F}} \cdot 2^{\Oh((|F|^2 k\Gamma)\log |F|)} \cdot 2^k \cdot n^{\Oh(1)}
		= 2^{\poly(\abs{F}+k)} \cdot n^{\Oh(1)}
	\]
	which concludes the proof.
	\qedhere

\end{proof}

\subsection{Proof of 
Lemma~\ref{lem:find-usable-cp}:
Finding Vertex-Disjoint Paths}
\label{sec:appendix}

We now prove 
\cref{lem:find-usable-cp}, 
which we restate here for convenience.

\findUsableCp*

\begin{proof}
\newcommand{\UndelHitPackath}{\mathsf{CoverPath}}
\newcommand{\child}{\mathsf{child}}
\newcommand{\ppath}{\mathsf{path}}

We first show that it suffices to consider the case when $H$ is a tree.

\subparagraph*{\boldmath The Graph $H$ is a Forest.}
Assume that $H$ is a forest
and let $H_1, \dots, H_r$ be the $r$ different trees of $H$.
For a subforest $H'$ of $H$ and a set of indices $I \subseteq \numb{f}$,
we define $\UndelHitPackath(H',I)$ to be $\true$ if and only if
$H'$ contains a pairwise vertex-disjoint collection of paths
$\{P_i \in {\mathcal{P}_i} \mid i\in I\}$.

Suppose that we can answer $\UndelHitPackath(H',I)$
for any $I \subseteq \numb{f}$ when $H'$ is tree.
For ease of notation, for all $j\in \numb{r}$,
we denote by $\mathcal H_j$ the union of the first $j$ trees of $H$,
formally, we define $\mathcal{H}_j \deff H_1 \uplus \dots \uplus H_j$.
It vacuously follows that $\UndelHitPackath(H',\emptyset)=\true$
for any forest $H'$.
With these assumptions,
for all $j \in \numb{r}$ and all $I \subseteq \numb{f}$,
we can use the following dynamic program
to compute $\UndelHitPackath(H_j, I)$:
\begin{align*}
  \UndelHitPackath(\mathcal{H}_j,I)
  \deff \bigvee_{I' \subseteq I}
    \UndelHitPackath(H_j,I') \land \UndelHitPackath(\mathcal{H}_{j-1}, I \setminus I')
  .
\end{align*}
Then the final output of the algorithm is $\UndelHitPackath(H, \numb{f})$.
The correctness of this step follows because all $H_j$'s are disjoint trees
and $H=\mathcal{H}_r$.

If we assume that we can compute $\UndelHitPackath(H_j, I)$
for each $j \in \numb{r}$ and $I \subseteq \numb{f}$
in time $2^{\Oh(f)} \cdot n^{\Oh(1)}$,
then the above dynamic program solves the problem on forests in time
$2^{\Oh(f)} \cdot n^{\Oh(1)}$.
As a next step we design the claimed algorithm for trees.

\subparagraph*{\boldmath The Graph $H$ is a Tree.}
Based on the above reasoning,
we now focus on the case when $H$ is a tree.
We prove this lemma using bottom-up dynamic programming on $H$.
To begin, we arbitrarily root the tree $H$. For each $v \in V(H)$, let $H_v$ be
the subtree rooted at $v$ in $H$.
For any vertex set $W \subseteq V(H)$
(of vertices that are incomparable,
i.e., no vertex in $W$ is a descendant of any other vertex in $W$),
let $H_W$ be the disjoint union of the subtrees
rooted at each vertex $v \in W$.
For a vertex $v$ in $H$,
we denote by $\child(v)$ be the set of children of $v$ in $H$.

Our goal is to compute three tables $T$, $T_\child$, and $T_\ppath$
such that, for all $I \subseteq [f]$, vertices $v \in V(H)$,
and paths $P$ in collection with $v \in V(P)$,
\begin{align*}
  T[v, I] &= \true
    \iff \text{there are vertex-disjoint paths }
    \{P_i \in \mathcal{P}_i \mid i \in I\} \text{ in } H_v
    ,
    \\
  T_\child[v, I] &= \true
    \iff \text{there are vertex-disjoint paths }
    \{P_i \in \mathcal{P}_i \mid i \in I\} \text{ in } H_{\child(v)}
    , \text{ and}
    \\
  T_\ppath[v,P,I] &= \true
    \iff \text{there are vertex-disjoint paths }
    \{P_i \in \mathcal{P}_i \mid i \in I\} \text{ in } H_v \setminus V(P).
\end{align*}
We compute the table entries $T[v,I], T_\child[v,I]$ and $T_\ppath[v,P,I]$
using a bottom-up dynamic programming on the rooted tree $H$.

We first we define the base cases.
If $I = \emptyset$, then we set $T[v,\emptyset] = \true$
for every $v \in V(H)$.
Similarly, if $v$ is a leaf and $I=\{i\}$,
then we set $\true$ if and only if $\{v\} \in \mathcal{P}_i$
(otherwise we define the table entry to be $\false$).
Finally, if $v$ is a leaf and $|I| \ge 2$,
then we can directly set $T[v,I]$ to $\false$.

For the remaining case, when $v \in V(H)$ is not a leaf and $\abs{I} \ge 2$,
then we compute the table entry by setting
\begin{align*}
  T[v,I] \deff T_\child[v,I] \vee \bigvee_{i \in I} \bigvee_{\substack{P \in
  \mathcal{P}_i, \\ v \in V(P)}} T_\ppath[v,P, I\setminus \{i\}].
\end{align*}

Observe that $T_\child[v,I]$ and $T_\ppath[v,P,I]$ can be computed with the
analogous dynamic programming to $\UndelHitPackath$. Note that in $T_\ppath[v,P,I]$ table we again end in the
disjoint case. To address that we can add a dummy-vertex adjacent to every root
of the trees in the forest.

Next we prove the correctness of the table entries.
\begin{claim}
  For all vertices $v$ of $H$,
  all sets $I\subseteq \numb{f}$,
  all paths $P$ in collection with $v \in V(P)$,
  the table entries $T[v,I]$, $T_\child[v,I]$, and $T_\ppath[v,P,I]$
  are computed correctly.
\end{claim}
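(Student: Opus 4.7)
The plan is to establish the three equivalences simultaneously by strong induction on the height of $v$ in the rooted tree $H$, where leaves have height~$0$. The base cases are immediate from the definition: for $I = \emptyset$ the empty collection of paths vacuously fits in every subgraph; when $v$ is a leaf, the only possible witness is $\{v\}$ itself, so $T[v, \{i\}]$ is correct by the explicit check against $\mathcal P_i$, and $T[v, I] = \false$ for $\abs I \ge 2$ matches the fact that a single vertex cannot host two vertex-disjoint nonempty paths.

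For the inductive step, fix an internal vertex $v$ and assume that $T[v', J]$, $T_\child[v', J]$, and $T_\ppath[v', P', J]$ are correct for every proper descendant $v'$ of $v$ and every $J \subseteq \numb{f}$. I would first verify $T_\child[v, I]$: the graph $H_{\child(v)}$ is the disjoint union of the subtrees $H_c$ for $c \in \child(v)$, each of strictly smaller height, so the forest-to-tree reduction $\UndelHitPackath$ established at the beginning of this proof computes $T_\child[v, I]$ correctly from the already-correct entries $T[c, \cdot]$. The argument for $T_\ppath[v, P, I]$ is analogous: removing $V(P)$ from $H_v$ leaves a forest whose components are rooted subtrees at proper descendants of $v$, so the same forest reduction, applied to this induced forest, yields $T_\ppath[v, P, I]$ from entries at strictly smaller height.

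It then remains to verify the recurrence for $T[v, I]$ by a case split on whether $v$ is covered by one of the chosen paths in a hypothetical witness. If no chosen path contains $v$, then the vertex-disjoint paths $\{P_i \in \mathcal P_i : i \in I\}$ all lie in $H_v \setminus \{v\} = H_{\child(v)}$, which is exactly what $T_\child[v, I]$ captures. If instead some chosen $P_i \in \mathcal P_i$ contains $v$, then the remaining paths $\{P_j : j \in I \setminus \{i\}\}$ must be vertex-disjoint from $V(P_i)$ and contained in $H_v$, which is exactly what $T_\ppath[v, P_i, I \setminus \{i\}]$ captures. Conversely, any positive value on either side of the recurrence produces a valid witness by concatenation. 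This gives the displayed disjunction in the definition of $T[v, I]$ and completes the induction.

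The main subtlety I expect is the inductive reduction for $T_\ppath$: removing a path $V(P)$ that passes through $v$ can split a single subtree $H_c$ into several fragments, and one has to check that each fragment is still a rooted subtree at a proper descendant of $v$ (hence of strictly smaller height) so that the induction hypothesis applies to it after the forest reduction. Once this bookkeeping is in place, the rest is a direct verification of the base cases and of the two-way implication in the case split above.
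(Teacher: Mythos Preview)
Your proposal is correct and follows essentially the same approach as the paper: induction over the tree structure, the same case split on whether $v$ is covered by one of the chosen paths, and the same reduction of $T_\child$ and $T_\ppath$ to the forest case via the $\UndelHitPackath$ recursion. You are in fact more explicit than the paper, which dispatches $T_\child$ and $T_\ppath$ with a one-line ``analogously''; your observation that the components of $H_v \setminus V(P)$ are precisely the full subtrees $H_w$ for children $w$ (of vertices on $P$) not lying on $P$ is exactly the bookkeeping needed, and it is worth stating since it is what makes the induction hypothesis applicable.
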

\begin{claimproof}
  We give a proof by induction on the depth of $v$ in $H$.
  We focus on the proof of correctness for the table $T$.
  The correctness proofs for $T_\child$ and $T_\ppath$ follow analogously.

  Suppose there exists a collection of pairwise vertex-disjoint paths $\{P_i \in
  \mathcal{P}_i \mid i \in I \}$ in $H_v$.
  If $v$ does not appear in any of these paths,
  i.e., $v \not \in \bigcup_{i \in I} V(P_i)$,
  then the paths in $\{P_i \in \mathcal{P}_i \mid i \in I \}$ are also
  contained in $H_{\child(v)}$ and by induction $T_\child[v,I]=\true$.
  Moreover, if there is an index $i \in I$ such that
  path $P_i \in \mathcal{P}_{i}$ contains $v$, i.e., $v \in V(P_{i})$,
  then $\{P_{i'} \mid i' \in I \setminus \{i\}\}$ are paths in
  $H_v - V(P_i)$.
  By induction this implies $T_\ppath[v,P,I\setminus \{i\}] = \true$,
  which finishes this direction of the proof.

  For the other direction, suppose that $T[v,I]=\true$.
  We show that in this case there exists a collection
  of pairwise vertex-disjoint paths $\{P_i \in \mathcal{P}_i \mid  i \in I\}$
  in $H_v$.
  It is straightforward to verify the bases cases.
  Therefore, we focus on the case when $v$ is not a leaf and $|I| \ge 2$.

  From the definition of the table entry it follows that if $T[v,I]=\true$,
  then (1) table entry $T_\child[v,I]$ is $\true$ or%
  \footnote{Case (1) and (2) are not necessarily disjoint.}
  (2) there is some index $\hat i \in I$
  and some path $P \in \mathcal P_{\hat i}$ with $v \in V(P)$
  such that $T_\ppath[v,P, I\setminus \{\hat i\}]$ is $\true$.
  In case (1), by induction,
  $H_{\child(v)}$ contains the desired collection of paths.
  For case (2),
  the induction hypothesis implies that $H_v - V(P)$
  contains paths $\{P_i \mid i \in I \setminus \{\hat i\}\}$
  that are pairwise vertex-disjoint.
  From the fact that $P \in \mathcal P_{\hat i}$,
  it directly follows that
  $\{P_i \mid i \in I \setminus \{\hat i\}\} \cup \{P\}$
  is the desired collection of paths in $H_v$.
\end{claimproof}

It remains to analyze the running time of the algorithm.
Observe that the number of states in the tables $T$, $T_\child$, and
$T_\ppath$ is $2^{\Oh(f)} \cdot n^{\Oh(1)}$. Moreover, to compute a single
table entry $2^{\Oh(f)} \cdot n^{\Oh(1)}$ operations are required.
\end{proof}

\section{\texorpdfstring%
{\boldmath \EdgeUndelHitPack Parameterized by $k+\ell$: Single-Exponential Algorithm}
{Edge-HitPack Parameterized by k+l: Single-Exponential Algorithm}}
\label{sec:algo:k2}

In this section, we prove \cref{th:k2-param-alg}.

\edgeparaalg*\label\thisthm

\begin{proof}\label{pf:edgeparaalg}
For a graph $G$, we denote by $\nu(G)$ the matching number of $H$,
that is, the size of the maximum matching in $H$.
It is well-known that a maximum matching (and thus the matching number)
for a graph $H$ can be computed in polynomial time
using the algorithm by Edmonds~\cite{edmonds1965paths}
or by Micali and Vazirani~\cite{DBLP:conf/focs/MicaliV80} for example.

Let $I=(G,U,k,\ell)$ be the given instance of \EdgeUndelHitPack.
The algorithm starts by some preprocessing of the instance
to identify trivial \yes and \no instances.
\begin{itemize}
  \item
  Compute the matching number $\nu(G)$ of $G$.
  If $\nu(G)$ is smaller than $\ell$, then directly answer \yes.

  \item
  Otherwise, compute the matching number $\nu(G[U])$ of $G[U]$,
  that is, the subgraph induced by the undeletable vertices.
  If $\nu(G[U])$ is at least $\ell$, then directly output \no.

  \item
  Additionally, the algorithm outputs \no if $k < 0$.
\end{itemize}
After this preprocessing, the algorithm proceeds as follows.
\begin{enumerate}
  \item
  Compute a maximum matching $M$ of $G[U]$.

  \item
  Invoke Edmonds' blossom algorithm \cite{edmonds1965paths}
  to compute an augmenting path $P$ for $M$ in $G$, that is,
  a path that starts and ends with vertices not covered by $M$
  and alternates between edges not in $M$ and edges in $M$.

  If no such augmenting path exists, then output \yes.

  \item
  Let $u$ and $v$ be the two endpoints of $P$.
  Note that a single edge with both endpoints not in $M$
  is also an augmenting path
  and that at least one of $u$ and $v$ is deletable
  as otherwise $M$ would not be the largest matching in $G[U]$.

  \item
  Construct three instances
  $I_1 = (G\setminus \{u\}, U, k-1, \ell)$ if $u \notin U$,
  $I_2 = (G\setminus \{v\}, U, k-1, \ell)$ if $v \notin U$,
  and $(G, U\cup\{u,v\}, k, \ell)$,
  corresponding to the case that $u$ is deleted, that $v$ is deleted
  and that $u$ and $v$ are made undeletable.
  Solve these three instances recursively
  and output \yes if the algorithm return \yes for at least one instance.
\end{enumerate}

To analyze the running time of the algorithm,
we define a branching measure $\mu$ with $\mu(I) \deff k+\ell-1-\nu(G[U])$.
By the preprocessing, $\mu(I)$ is always non-negative
and from the definition of $\mu$ we get $\mu(I) \le k + \ell$.

If the algorithm deletes $u$ or $v$,
then the measure $\mu(I_1)$ and $\mu(I_2)$ decreases by one
as $k$ decreases by one.
If the algorithm puts $\{u,v\}$ into $U$,
then the measure $\mu(I_3)$ decreases by one as $\nu(G[U])$ increases by one
because $u$ and $v$ are the endpoints of the augmenting path
for the previously largest matching.

Observe that all steps of the algorithm can be computed in polynomial time.
Since $\mu(I)$ decreases in each step
and the algorithm branches into three cases,
the bound on the runtime follows.

The correctness of the algorithm follows from the definition of our problem
and the well-known Berge's theorem \cite{berge1957two},
which says that a matching $M$ in a graph $G$ is maximum
if and only if there is no augmenting path with $M$.
\end{proof}

\section{\texorpdfstring%
{\boldmath \qCliqueUndelHitPack Parameterized by 
Treewidth:\\Double-Exponential Algorithm}
{Clique-HitPack Parameterized by Treewidth: Double-Exponential Algorithm}}
\label{sec:twUpper:clique}
In this \lcnamecref{sec:twUpper:clique} we consider the \UndelHitPack{H} problem
when parameterizing by treewidth and provide an algorithmic upper bound when $H$ is a
complete graph. The algorithm in this section serves two purposes. First, it is
a smooth introduction to the more involved algorithm for \UndelHitPack{H} for
general $H$ in~\cref{sec:twUpper} with the additional advantage
that the algorithm in this section has a slightly better dependence on $\tw$.
The second purpose is that it can
be used as a black-box in \cref{sec:matroid} to get a single-exponential
dependence on $\tw$ in the case when $H = K_2$.

Formally, in this section we prove the following theorem.
\thmUpperTWClique*\label\thisthm

Although the running time is double-exponential in treewidth, we provide a
standard dynamic program based on the tree decomposition of the graph. Note that
the claimed runtime allows us to omit the requirement that such a tree
decomposition has to be given with the input graph because we can compute an
optimal tree decomposition as the first step of the algorithm.

\subparagraph*{Intuition.}
For each bag, we consider the different ways in which a partial solution $D
\subseteq V_t \setminus U$ could interact with the bag $X_t$. We describe this
interaction by a triple $(k_0, D_0, f_0)$ with the following interpretation.
\begin{itemize}
  \item
      The set $D$ contains exactly $k_0$ vertices not in $X_t$. Moreover,
      inside a bag $X_t$ the set $D$ deletes all the vertices in $D_0$.
  \item
  For every set $A \subseteq X_t \setminus D_0$
  the function $f_0(A)$ stores the following integer.
  We interpret $A$ as the set of vertices we want to avoid,
  and $f(A)$ is the size of the largest packing for $G_t-D$
  where additionally no vertex from $A$ is covered.
\end{itemize}

The runtime is essentially bounded by the number of functions $f_0$ we have to
consider. As there are $2^{\tw+1}$ vertex subsets and for each such subset we
want to store a value between $0$ and $n$, the number of functions is at most
$n^{2^{\tw+1}}$. We later show that we can get a significantly better bound on
the number of functions, which then leads to the claimed running time.

To formally exploit these properties, we have to design a dynamic program whose
style deviates from the classical approaches for such algorithms. In most cases,
we first aggregate the information from the appropriate states for the child
nodes and then compute the entry for the parent node. However, we proceed in a
different direction. When considering a state for a child node, we check for
which states of the parent node this information is relevant. By this approach,
it is not necessary to consider all possible table entries but only those where
we store a non-zero entry (i.e., an entry corresponding to a valid partial
solution).

\subparagraph*{Classes.}
Before we state the algorithm, we first introduce some notation.
For any graph $G'$ we let $\KPack(G')$ be the $\clq$-packing number of $G'$
(i.e., the number of vertex-disjoint complete copies of $\clq$).
Consider a node $t_0$ of the tree decomposition,
an integer $k_0 \in \numbZ{k}$,
a subset $D_0 \subseteq X_t$ of vertices,
and a function $f_0 \from 2^{X_t \setminus D_0} \to \Codomain$.
We say that a set $D \subseteq V_{t_0} \setminus U$
\emph{is of class $(k_0, D_0, f_0)$ for $t_0$}
if the following conditions are satisfied:
\begin{enumerate}%
  \item $\abs{D \setminus D_0} = k_0$ and $D \cap X_{t_0} = D_0$.
  \item
  for all $A \subseteq X_{t_0} \setminus D_0$
  we have $f_0(A) = \KPack(G_{t_0}-(D \cup A))$.

\end{enumerate}
For each node $t_0$ of the tree decomposition,
we compute a list $\List(t_0)$ of classes for $t_0$
such that $(k_0, D_0, f_0) \in \List(t_0)$
if and only if
there is a set $D \subseteq V_{t_0} \setminus U$ that is of class $(k_0, D_0, f_0)$.

Before we state the algorithm,
we first define five procedures to extend the classes for a child of node
to a class for the parent node.
The first two procedure correspond to the introduce node,
the third and fourth to the forget node,
and the last one to the join node.

\begin{lemma}[Introduce 1]
  \label{lem:twUpper:clique:introDel}
  Let $t_0$ be an introduce node with the unique child $t_1$
  and let $v \notin U$ be the vertex introduced.

  There is a procedure $\TIntro_1$
  that, for a given class $c_1 = (k_1, D_1, f_1)$ for $t_0$,
  computes in time $\Oh(2^\tw)$
  a class $c_0 = (k_0, D_0, f_0)$ for $t_0$
  such that the following holds:
  For all sets $D \subseteq V_{t_1} \setminus U$,
  if $D$ is of class $c_1$ for $t_1$,
  then $D \cup \{v\}$ is of class $c_0$ for $t_0$.
\end{lemma}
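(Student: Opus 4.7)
The plan is to exhibit $c_0$ as an essentially trivial transformation of $c_1$ and then verify the four conditions encoded in the class definition. First I would invoke the tree-decomposition convention in force here: this is an \IntroVtxNodeDel-style node, so $V_{t_0} = V_{t_1} \cup \{v\}$ and $X_{t_0} = X_{t_1} \cup \{v\}$ with $v \notin V_{t_1}$, and $v$ carries no incident edges in $G_{t_0}$ (edges are introduced only at \IntroEdgeNodeDel/\IntroEdgeNodeNotDel\ nodes). In particular, $G_{t_0}$ and $G_{t_1}$ have the same edge set, and $v$ is isolated in $G_{t_0}$.

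Next I would define $c_0 \deff (k_0, D_0, f_0)$ by setting $k_0 \deff k_1$, $D_0 \deff D_1 \cup \{v\}$, and $f_0 \deff f_1$. The last identification is type-correct because $X_{t_0} \setminus D_0 = X_{t_1} \setminus D_1$, so $f_0$ and $f_1$ share the same domain. Now fix any $D \subseteq V_{t_1} \setminus U$ of class $c_1$ for $t_1$ and let $D' \deff D \cup \{v\}$. I would then check the four conditions in turn: (i) $D' \subseteq V_{t_0} \setminus U$ since $D \subseteq V_{t_1} \setminus U$ and $v \notin U$; (ii) using $v \notin D$ (because $D \subseteq V_{t_1}$) and $v \in D_0$, we have $|D' \setminus D_0| = |D \setminus D_1| = k_1 = k_0$; (iii) $D' \cap X_{t_0} = (D \cap X_{t_1}) \cup \{v\} = D_1 \cup \{v\} = D_0$; and (iv) for each $A \subseteq X_{t_0} \setminus D_0$,
\[
  \KPack(G_{t_0} - (D' \cup A)) \;=\; \KPack(G_{t_1} - (D \cup A)) \;=\; f_1(A) \;=\; f_0(A),
\]
where the first equality uses that $v$ is isolated in $G_{t_0}$ and is being removed, so deleting $v$ leaves the remaining graph literally equal to $G_{t_1} - (D \cup A)$.

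The procedure $\TIntro_1$ therefore just reads $c_1$ and writes $c_0$ with the three substitutions above. The dominant cost is producing the table for $f_0$, which has $2^{|X_{t_0} \setminus D_0|} \le 2^{\tw+1}$ entries, matching the claimed $\Oh(2^{\tw})$ bound. I do not anticipate any real obstacle: the entire content of the proof is the observation that an isolated deleted vertex contributes nothing to any $\clq$-packing, so the $f$-function transfers verbatim from the child. The only subtlety worth highlighting explicitly in the write-up is the nice-tree-decomposition convention separating \IntroVtxNodeDel\ from \IntroEdgeNodeDel\ nodes, since without it $v$ could already carry edges at $t_0$ and $f_0$ would in general differ from $f_1$.
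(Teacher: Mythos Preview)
Your construction of $c_0 = (k_1, D_1 \cup \{v\}, f_1)$ and the verification of conditions (i)--(iv) match the paper's proof exactly. However, your justification for step (iv) invokes a convention that is not in force in this section: the algorithm of \cref{sec:twUpper:clique} uses an ordinary nice tree decomposition with only introduce, forget, and join nodes---there are no separate introduce-edge nodes (those appear only later, in \cref{sec:twUpper:cycle}). So $v$ may well have neighbours in $X_{t_1}$ already present in $G_{t_0}$.

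This does not break your argument, because the operative fact is that $v \in D'$ is \emph{deleted}, not that it is isolated: $G_{t_0} - (D' \cup A) = G[V_{t_0} \setminus (D' \cup A)] = G[V_{t_1} \setminus (D \cup A)] = G_{t_1} - (D \cup A)$ holds regardless of what edges $v$ carries. Your closing remark that ``without [the edge-introduce convention] $f_0$ would in general differ from $f_1$'' is therefore incorrect; drop the isolation claim and the final caveat, and the proof is exactly the paper's.
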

\begin{proof}
  We set $c_0 = (k_1, D_1 \cup \{v\}, f_1)$ as the class for $t_0$.
  Since $v$ not contained in $G_{t_1}$
  but contained in $D_0 = D_1 \cup \{v\}$,
  the two graph $G_{t_1}-D_1$ and $G_{t_0}-(D_1 \cup \{v\})$ are identical.
  Hence, it trivially follows that $D \cup \{v\}$ is of class $c_0$ for $t_0$.
\end{proof}

Next we consider the case when the vertex introduced is not deleted.

\begin{lemma}[Introduce 2]
  \label{lem:twUpper:clique:introNotDel}
  Let $t_0$ be an introduce node with the unique child $t_1$
  and let $v$ be the vertex introduced.

  There is a procedure $\TIntro_2$
  that, for a given class $c_1 = (k_1, D_1, f_1)$,
  computes in time $\Oh(2^\tw)$
  a class $c_0 = (k_1, D_1, f_0)$ for $t_0$
  such that the following holds:
  For all sets $D \subseteq V_{t_1} \setminus U$,
  if $D$ is of class $c_1$ for $t_1$,
  then $D$ is of class $c_0$ for $t_0$.
\end{lemma}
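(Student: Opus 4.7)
The plan is to set $c_0 \deff (k_1, D_1, f_0)$ with $f_0(A) \deff f_1(A \setminus \{v\})$ for every $A \subseteq X_{t_0} \setminus D_1$, and then verify the two class conditions for $c_0$ directly.

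First, since the introduced vertex $v$ does not belong to $V_{t_1}$, it cannot lie in $D$; together with $X_{t_0} = X_{t_1} \cup \{v\}$ this immediately yields $D \cap X_{t_0} = D \cap X_{t_1} = D_1$ and $|D \setminus D_1| = k_1$, matching the first class condition with $(k_0, D_0) = (k_1, D_1)$.

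The main work is the second condition: $f_0(A) = \KPack(G_{t_0} - (D \cup A))$ for every $A \subseteq X_{t_0} \setminus D_1$. The key observation is that at an introduce vertex node of a nice tree decomposition no edges are introduced, so $G_{t_0}$ equals $G_{t_1}$ with $v$ added as an isolated vertex. Hence deleting $D \cup A$ from $G_{t_0}$ removes $v$ exactly when $v \in A$, and otherwise leaves $v$ as an isolated vertex; since $q \ge 2$, isolated vertices contribute nothing to a $\clq$-packing. In both cases the packing number therefore equals $\KPack(G_{t_1} - (D \cup (A \setminus \{v\})))$, which is $f_1(A \setminus \{v\})$ by the assumption that $D$ is of class $c_1$ for $t_1$. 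Note that $A \setminus \{v\} \subseteq X_{t_1} \setminus D_1$, so $f_1$ is defined on this argument.

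For the running time, observe that $|X_{t_0} \setminus D_1| \le \tw + 1$, so $f_0$ has at most $2^{\tw+1}$ entries, and each is computed by a single lookup into $f_1$, giving the claimed $\Oh(2^\tw)$ bound. I do not anticipate any real obstacle here; the only subtlety compared with $\TIntro_1$ is that $v$ now lives in the bag $X_{t_0}$ and can appear in the argument $A$, which is cleanly handled by the ``$A \setminus \{v\}$'' restriction and the fact that the ``$v \notin A$'' case collapses to the ``$v \in A$'' case thanks to isolated vertices being irrelevant to $\KPack$ when $q \ge 2$.
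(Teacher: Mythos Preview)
Your argument rests on the claim that at an introduce node ``no edges are introduced, so $G_{t_0}$ equals $G_{t_1}$ with $v$ added as an isolated vertex.'' That is true for nice tree decompositions \emph{with introduce-edge nodes}, but the algorithm in this section uses a standard nice tree decomposition with only Introduce, Forget, and Join nodes, and $G_t = G[V_t]$. Consequently, when $v$ is introduced at $t_0$, all edges between $v$ and the vertices of $X_{t_1}$ are already present in $G_{t_0}$. (Indeed, the paper's own proof of this lemma explicitly ranges over $q$-cliques $C \subseteq X_{t_0}$ containing $v$.)

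This breaks your formula $f_0(A) = f_1(A \setminus \{v\})$ in the case $v \notin A$: the vertex $v$ can now be covered by a copy of $K_q$ lying entirely inside $X_{t_0}$, so $\KPack(G_{t_0} - (D \cup A))$ may strictly exceed $\KPack(G_{t_1} - (D \cup A))$. A tiny example already witnesses this: take $q=2$, $X_{t_1}=\{u\}$, $G_{t_1}$ edgeless, and $uv \in E(G)$; then for $D=A=\emptyset$ your formula gives $f_0(\emptyset)=f_1(\emptyset)=0$, whereas $\KPack(G_{t_0})=1$. The correct definition must, for $v \notin A$, take the maximum of $f_1(A)$ (the case $v$ is left uncovered) and $1 + f_1(A \cup C \setminus \{v\})$ over all $q$-cliques $C \subseteq X_{t_0} \setminus (D_1 \cup A)$ containing $v$ (the case $v$ is covered by $C$), exactly as in the paper.
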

\begin{proof}
  We define the function
  $f_0 \from 2^{X_{t_0}\setminus D_1} \to \Codomain$
  where, for all $A \subseteq X_{t_0} \setminus D_1$, we set
  \begin{align}
    f_0(A) \deff \begin{dcases}
        f_1(A \setminus \{v\}) & \text{if } v \in A, \\
      \max\left(
        f_1(A), 1+
        \max_{\substack{C \subseteq X_t \setminus (D_0 \cup A)\\C \text{ is }
        K_q \text{ containing } v}}
          f_1(A \cup C \setminus \{v\})
      \right) & \text{otherwise.} \\
    \end{dcases}
  \end{align}
  Then the new class is defined as $c_0 = (k_1, D_1, f_0)$.
  Consider some set $A \subseteq X_{t_0} \setminus D_1$.
  If $v \in A$,
  then $G_{t_1}-(D \cup A\setminus \{v\})$ is identical to $G_{t_0}-(D \cup A)$
  since $v$ is introduced at $t_0$.
  Thus, $\KPack(G_{t_0}-(D\cup A))
  = \KPack(G_{t_1}-(D \cup A\setminus \{v\})) = f_1(A\setminus\{v\}) = f_0(A)$.

  For the case when $v \notin A$ we split the proof into two parts.
  \begin{description}
      \item[\boldmath$f_0(A) \le \KPack(G_{t_0}-(D \cup A))$.]
    First, consider the case when $f_0(A) = f_1(A)$.
    Since every packing for $G_{t-1}-(D \cup A)$
    is also a packing for $G_{t_0}-(D \cup A)$,
    we get $f_1(A) = \KPack(G_{t_1}-(D \cup A)) \le \KPack(G_{t_0}-(D \cup A))$.

    Let $C \subseteq X_{t_0} \setminus (D_0 \cup A)$
    be a $q$-clique containing $v$.
    Every packing for $G_{t_0}-(D \cup A \cup C)$ can be extended
    to a packing for $G_{t_0}-(D \cup A)$ by covering the vertices in $C$
    with the same copy of $\clq$.
    Hence, $\KPack(G_{t_0}-(D \cup A)) \ge \KPack(G_{t_0}-(D \cup A \cup C))+1$.
    Since $G_{t_0}-(D \cup A \cup C)$ does not contain $v$,
    the graphs $G_{t_0}-(D \cup A \cup C)$ and
    $G_{t_1}-(D \cup A \cup C \setminus \{v\})$ are identical,
    which implies that the packing number is the same.
    Hence, $\KPack(G_{t_0}-(D \cup A \cup C)) \ge f_1(A \cup C \setminus \{v\})$
    follows
    by the assumption about $f_1$.

\item[\boldmath$f_0(A) \ge \KPack(G_{t_0}-(D \cup A))$.]
    Let $P$ be the packing for $G_{t_0}-(D \cup A)$
    containing exactly $\KPack(G_{t_0}-(D \cup A))$ cliques.
    If $v$ is not covered by $P$,
    then $P$ is also a packing for $G_{t_1}-(D \cup A)$
    and contains at most $\KPack(G_{t_1}-(D \cup A))$ cliques,
    which is exactly $f_1(A)$ by assumption.
    Hence, $P$ contains at most $f_0(A)$ cliques by the definition of $f_0$.

    In the case when $v$ is covered,
    there must be a clique $C \subseteq X_{t_0} \setminus (D_0\cup A)$
    containing $v$ such that $C$ is covered by the same copy of $\clq$.
    Hence, the number of cliques in $P$
    is at most $\KPack(G_{t_0}-(D \cup A \cup C))+1$.
    Since $v$ must be avoided in this case, $P$ contains at most
    $\KPack(G_{t_1}-(D \cup A\cup C \setminus \{v\}))+1$ cliques
    and this is equal to $f_1(A \cup C \setminus \{v\})+1$
    which is bounded by $f_0(A)$ by the definition of $f_1$.
  \end{description}
  This concludes the proof as the function $f_0$
  can be constructed in time $\Oh(2^{\abs{X_t \setminus D_1}})$.
\end{proof}

The following two cases correspond to the forget nodes.
The first procedure deals with the setting
when the forgotten vertex is deleted.

\begin{lemma}[Forget 1]
  \label{lem:twUpper:clique:forgetDel}
  Let $t_1$ be a forget node with the unique child $t_0$
  and let $v$ be the vertex forgotten.

  There is a procedure $\TForget_1$
  that, for a given class $c_1 = (k_1, D_1, f_1)$,
  computes in time $\Oh(2^\tw)$
  a class $c_0 = (k_1+1, D_1 \setminus \{v\}, f_1)$
  such that the following holds:
  For all sets $D \subseteq V_{t_1} \setminus U$ with $v \in D$,
  if $D$ is of class $c_1$ for $t_1$,
  then $D$ is of class $c_0$ for $t_0$.
\end{lemma}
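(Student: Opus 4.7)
The plan is for the procedure to simply output $c_0 \deff (k_1+1, D_1 \setminus \{v\}, f_1)$, reusing $f_1$ as $f_0$ without modification. The runtime $\Oh(2^{\tw})$ is immediate since this amounts to copying a representation of $f_1$ indexed by at most $2^{\tw+1}$ subsets. Correctness then reduces to verifying the three defining equations of the class relation at $t_0$ for every witness $D \subseteq V_{t_1} \setminus U$ with $v \in D$ that is of class $c_1$ at $t_1$. Throughout I will use the structural facts $v \in X_{t_1} \setminus X_{t_0}$, $V_{t_0} = V_{t_1}$, and $G_{t_0} = G_{t_1}$, all inherent to a forget-node operation.

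For the bag-intersection coordinate, combining $v \in D$ with $D \cap X_{t_1} = D_1$ immediately gives $v \in D_1$, so $D_1 \setminus \{v\}$ is a proper subset of $D_1$. A direct computation then shows $D \cap X_{t_0} = D \cap (X_{t_1} \setminus \{v\}) = D_1 \setminus \{v\}$, matching the second coordinate of $c_0$.

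For the counting coordinate, I will decompose $D \setminus (D_1 \setminus \{v\})$ as the disjoint union $(D \setminus D_1) \cup \{v\}$, valid because $v \in D_1 \subseteq X_{t_1}$ while $D \setminus D_1 \subseteq V_{t_1} \setminus X_{t_1}$ is disjoint from $X_{t_1}$. Consequently $|D \setminus (D_1 \setminus \{v\})| = k_1 + 1$, matching the first coordinate of $c_0$.

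Finally, for the function coordinate, a short set-algebra computation using $v \in D_1$ yields $X_{t_0} \setminus (D_1 \setminus \{v\}) = X_{t_1} \setminus D_1$, so the domain of $f_0$ equals the domain of $f_1$. For each $A$ in this common set, the identity $G_{t_0} = G_{t_1}$ combined with the class equation for $c_1$ gives $f_1(A) = \KPack(G_{t_1}-(D \cup A)) = \KPack(G_{t_0}-(D \cup A))$, so $f_0 \deff f_1$ realizes the desired identity. The entire argument is bookkeeping: the only conceptual point is tracking the re-indexing of the deletion count and bag intersection as $v$ leaves the bag, and no genuine algorithmic obstacle arises.
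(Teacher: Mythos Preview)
Your proposal is correct and follows essentially the same approach as the paper: output $c_0=(k_1+1, D_1\setminus\{v\}, f_1)$ unchanged and verify the class conditions by bookkeeping. Your version is actually more careful than the paper's---you explicitly check that the domains of $f_0$ and $f_1$ coincide via $X_{t_0}\setminus(D_1\setminus\{v\})=X_{t_1}\setminus D_1$, and you justify the disjointness in the counting step, whereas the paper simply asserts $|D\setminus(D_1\setminus\{v\})|=k_1+1$ and that $G_{t_0}-D=G_{t_1}-D$ in two sentences.
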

\begin{proof}
  From $\abs{D \setminus D_1} = k_1$
  and the assumption that $v$ is forgotten at $t_0$,
  we get that
  $\abs{D \setminus (D_1 \setminus \{v\})} = \abs{D\setminus D_1}+1 = k_1+1$.
  Since $v$ is deleted, we get that $G_{t_0}-D = G_{t_1}-D$.
  Thus, the bound on the packing number follows trivially,
  as we consider the same graph in both settings.
\end{proof}

In contrast to the previous case,
we now assume that the vertex forgotten is not deleted.

\begin{lemma}[Forget 2]
  \label{lem:twUpper:clique:forgetNotDel}
  Let $t_1$ be a forget node with the unique child $t_0$
  and let $v$ be the vertex forgotten.

  There is a procedure $\TForget_2$
  that, for a given class $c_1 = (k_1, D_1, f_1)$,
  computes in time $\Oh(2^\tw)$
  a class $c_0 = (k_1, D_1, f_1)$
  such that for following holds:
  For all sets $D \subseteq V_{t_1} \setminus U$ with $v \notin D$,
  if $D$ is of class $c_1$ for $t_1$,
  then $D$ is of class $c_0$ for $t_0$.
\end{lemma}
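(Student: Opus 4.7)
The plan is to show that at the forget node $t_1$, any deletion set $D$ with $v \notin D$ realising the class $c_1 = (k_1, D_1, f_1)$ also realises the ``same'' class at the child $t_0$ without any modification of the data. Two structural facts about tree decompositions drive the argument: a forget node shares its subtree vertex set and induced subgraph with its unique child (so $V_{t_1} = V_{t_0}$ and $G_{t_1} = G_{t_0}$), and the bags differ only by the forgotten vertex ($X_{t_0} = X_{t_1} \cup \{v\}$ with $v \notin X_{t_1}$).

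First I would check the combinatorial data of the triple. Because $v \notin D$, we have $D \cap X_{t_0} = D \cap (X_{t_1} \cup \{v\}) = D \cap X_{t_1} = D_1$, so the bag-intersection of $D$ at $t_0$ is again $D_1$, and the count $|D \setminus D_1| = k_1$ is unaffected by the change of node. This gives the first two coordinates $k_0 = k_1$ and $D_0 = D_1$ of the claimed output triple for free.

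Second I would verify the packing-profile condition. At $t_0$ one needs $f_1(A) = \KPack(G_{t_0} - (D \cup A))$ for every $A \subseteq X_{t_0} \setminus D_1$, which splits into the cases $v \notin A$ and $v \in A$. Using $G_{t_0} = G_{t_1}$, the case $v \notin A$ reduces to $\KPack(G_{t_1} - (D \cup A)) = f_1(A)$, which holds by the class-$c_1$ assumption on $D$. For $v \in A$ the required value is $\KPack(G_{t_1} - (D \cup A))$, which depends only on $D$ (and on $A$ itself), not on the particular bag we are describing; hence it is consistent with the natural extension of $f_1$ to the enlarged domain $2^{X_{t_0}\setminus D_1}$.

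The main subtlety is exactly this enlargement of domain: the bag at $t_0$ carries one additional vertex $v$, so a priori a class-function at $t_0$ must record one extra bit of information (whether $v$ is forbidden by $A$). The reason the data $(k_1,D_1,f_1)$ suffices in this lemma is that in the case $v \notin D$ and with $G_{t_0} = G_{t_1}$, the additional entries are forced by $D$ alone and agree with the defining packing formula. Thus no new knowledge from $c_1$ is needed, and the procedure simply copies the triple and, if desired, populates the $2^{|X_{t_0}\setminus D_1|} \leq 2^{\tw+1}$ table entries of $f_0$, giving the claimed $\Oh(2^\tw)$ running time.
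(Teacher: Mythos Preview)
You have taken the lemma literally, with $t_1$ the forget node and $t_0$ its child, so that $X_{t_0}=X_{t_1}\cup\{v\}$ is the \emph{larger} bag. This is a typo in the paper: both the paper's own proof (which evaluates $f_1(A)$ for every $A\subseteq X_{t_0}\setminus D_1$, hence tacitly needs $X_{t_0}\subseteq X_{t_1}$) and the surrounding algorithm (which, at a forget node $t_0$ with child $t_1$, applies $\TForget_2$ to a class from $\List(t_1)$ to produce one for $\List(t_0)$) show that the roles are swapped. The intended reading is that $t_0$ is the forget node with child $t_1$, so $X_{t_0}=X_{t_1}\setminus\{v\}$.

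Under the intended reading your ``main subtlety'' evaporates: the domain $2^{X_{t_0}\setminus D_1}$ is \emph{contained} in $2^{X_{t_1}\setminus D_1}$, the case $v\in A$ cannot occur, and $f_0$ is simply the restriction of $f_1$. That is exactly the paper's one-line proof, and your argument for the case $v\notin A$ already covers it.

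Under your literal reading, however, the handling of $v\in A$ has a genuine gap. The value $\KPack(G_{t_0}-(D\cup A))$ for $A\ni v$ is \emph{not} determined by $c_1$: two sets $D,D'$ of the same class $c_1$ for $t_1$ may yield different packing numbers once $v$ is additionally removed (whether $v$ participates in a maximum $\clq$-packing of $G_{t_1}-D$ is information not recorded anywhere in $f_1$). Hence no single function $f_0$ computable from $c_1$ alone can extend $f_1$ to subsets containing $v$, and in particular the stated output triple $(k_1,D_1,f_1)$ cannot serve as a class at the enlarged bag.
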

\begin{proof}
  Let $A \subseteq X_{t_0} \setminus D_1$.
  Observe that every packing for $G_{t_0}-(D \cup A)$
  is also a packing for $G_{t_1}-(D \cup A)$ as both graphs are identical.
  As the packing number for $G_{t_1}-D-A$ is equal to $f_1(A)$,
  this is also the packing number for $G_{t_0}-D-A$.
\end{proof}

The last procedure covers the case of the join node.

\begin{lemma}[Join Node]
  \label{lem:twUpper:clique:join}
  Let $t_0$ be a join node
  and let $t_1$ and $t_2$ be the two children of $t_0$.

  There is a procedure $\TJoin$,
  that, for a given class $c_1 = (k_1, D_1, f_1)$ for $t_1$
  and a given class $c_2 = (k_2, D_1, f_2)$ for $t_2$,
  computes in time $2^{\Oh(\tw)}$
  a class $c_0 = (k_1+k_2, D_1, f_0)$
  such that the following holds:
  For all sets $D' \subseteq V_{t_1} \setminus U$
  and for all sets $D''\subseteq V_{t_2} \setminus U$,
  if $D'$ is of class $c_1$ for $t_1$
  and $D''$ is of class $c_2$ for $t_2$,
  then $D' \cup D''$ is of class $c_0$.
\end{lemma}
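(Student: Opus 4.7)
The plan is to exploit the key structural fact that, since we are packing complete graphs $\clq$, every copy of $\clq$ in $G_{t_0}$ sits inside a single bag of the tree decomposition (a clique in a chordal completion must be contained in a bag by the Helly property). As a consequence, any such copy is entirely contained either in $V_{t_1}$ or in $V_{t_2}$: if a copy had vertices in both $V_{t_1}\setminus X_{t_0}$ and $V_{t_2}\setminus X_{t_0}$, those vertices would be adjacent, contradicting the fact that the two sides share only the bag $X_{t_0}$. This observation lets me canonically split any $\clq$-packing of $G_{t_0}-(D'\cup D''\cup A)$ as $\mathcal{P}=\mathcal{P}_1\cup \mathcal{P}_2$ with $\mathcal{P}_i\subseteq V_{t_i}$. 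Note moreover that the class parameters $(k_1+k_2, D_1)$ are forced: since $V_{t_1}\cap V_{t_2}=X_{t_0}$ and both $D',D''$ agree with $D_1$ on the bag, the sets $D'\setminus D_1$ and $D''\setminus D_1$ are disjoint, giving $\abs{(D'\cup D'')\setminus D_1}=k_1+k_2$ and $(D'\cup D'')\cap X_{t_0}=D_1$.

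The central step is to define $f_0$ by maximising over all ways of splitting which bag vertices are covered by which side:
\begin{equation*}
    f_0(A) \deff \max_{\substack{B_1,B_2\subseteq X_{t_0}\setminus (D_1\cup A)\\B_1\cap B_2=\emptyset}} f_1\bigl((X_{t_0}\setminus D_1)\setminus B_1\bigr) + f_2\bigl((X_{t_0}\setminus D_1)\setminus B_2\bigr),
\end{equation*}
where $B_i$ represents the bag vertices covered by $\mathcal{P}_i$. For the upper bound direction, I would take an optimal packing $\mathcal{P}$ in $G_{t_0}-(D'\cup D''\cup A)$, decompose it via the observation above, and set $B_i\deff V(\mathcal{P}_i)\cap X_{t_0}$. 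Then $B_1\cap B_2=\emptyset$ by vertex-disjointness, $B_i\cap A=\emptyset$ because $A$ is avoided, and every bag vertex outside $B_i$ (in particular those in $A$) is avoided by $\mathcal{P}_i$, so $\abs{\mathcal{P}_i}\le f_i((X_{t_0}\setminus D_1)\setminus B_i)$. For the matching lower bound, I would take optimal disjoint $B_1,B_2$ witnessing $f_0(A)$ together with packings $\mathcal{P}_1,\mathcal{P}_2$ realising $f_1,f_2$ at these arguments in $G_{t_1}-D'$ and $G_{t_2}-D''$, and verify that $\mathcal{P}_1\cup\mathcal{P}_2$ is vertex-disjoint (their bag contributions lie in $B_1,B_2$, and they cannot share any vertex outside $X_{t_0}$) and avoids $A$ in $G_{t_0}$.

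For the running time, there are $2^{\tw+1}$ choices of $A$, and for each $A$ the disjoint pairs $(B_1,B_2)\subseteq X_{t_0}\setminus (D_1\cup A)$ can be enumerated in $3^{\tw+1}$ steps by assigning each bag vertex outside $D_1\cup A$ to $B_1$, to $B_2$, or to neither. Each summand is a constant-time lookup into $f_1$ and $f_2$, so the total cost is $2^{\Oh(\tw)}$. I do not anticipate a real obstacle here: once the clique-within-a-bag structural fact is in hand, the rest is routine bookkeeping. The only subtlety worth double-checking is that the arguments to $f_1$ and $f_2$ are the \emph{complements} $(X_{t_0}\setminus D_1)\setminus B_i$, so that $A$ together with every bag vertex outside $B_i$ is automatically avoided by the corresponding partial packing.
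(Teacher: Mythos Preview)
Your proposal is correct and essentially the same as the paper's proof: both exploit the separator property of $X_{t_0}$ to conclude that no $K_q$ straddles $V_{t_1}\setminus X_{t_0}$ and $V_{t_2}\setminus X_{t_0}$, then maximise over all splits of the usable bag vertices between the two sides. Your parameterisation via the ``allowed'' sets $B_1,B_2$ (not required to cover all of $X_{t_0}\setminus(D_1\cup A)$) is the complement of the paper's ``forbidden'' partition $A_1\uplus A_2$; under the substitution $B_1\leftrightarrow A_2$, $B_2\leftrightarrow A_1$ the formulas coincide (the extra slack from allowing $B_1\cup B_2\subsetneq X_{t_0}\setminus(D_1\cup A)$ is absorbed by monotonicity of $f_i$), and the two-direction correctness argument and $2^{\Oh(\tw)}$ running-time analysis match the paper's.
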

\begin{proof}
  We define a function $f_0 \from 2^{X_{t_0} \setminus D_1} \to \Codomain$
  where, for all $A \subseteq X_{t_0} \setminus D_0$, we set
  \begin{align}\label{eq:def-of-f0}
    f_0(A) \deff \max_{A_1 \uplus A_2 = X_{t_0} \setminus (D_0 \cup A)}
      f_1(A \cup A_1) + f_2(A \cup A_2)
    .
  \end{align}
  To simplify notation, we set $D = D' \cup D''$ in the following.
  By the classes of $D'$ and $D''$ we get $D' \cap D'' = D_1$,
  which implies that $\abs{D \setminus D_1}
  = \abs{D'\setminus D_1} + \abs{D'' \setminus D_1} = k_1+k_2$

  Consider some set $A \subseteq X_{t_0}\setminus D_1$
  and let $P$ be a maximum packing for $G_{t_0}-(D \cup A)$.
  By the definition of a tree decomposition,
  the packing $P$ cannot contain a clique that
  covers vertices in $G_{t_1}-X_{t_1}$ and $G_{t_2}-X_{t_2}$ simultaneously.
  Hence, there exists partition $\tilde A_1 \uplus \tilde A_2 = X_{t_0} \setminus (D_1 \cup A)$
  such that $\KPack(G_{t_0}-(D \cup A))
  = \KPack(G_{t_1}-(D \cup A \cup \tilde A_1))
  + \KPack(G_{t_2}-(D \cup A \cup \tilde A_2))$.

  From the definition of $f_0$ and the properties of $f_1$ and $f_2$,
  it follows that
  \begin{align*}
    f_0(A) &\ge f_1(A \cup \tilde A_1) + f_2(A \cup \tilde A_2) \\
    &=  \KPack(G_{t_1}-(D' \cup A \cup \tilde A_1))
      + \KPack(G_{t_2}-(D''\cup A \cup \tilde A_2)) \\
    &=  \KPack(G_{t_0}-(D \cup A))
    .
  \end{align*}
  Now let $\bar A_1$ and $\bar A_2$ be the sets
  maximizing the sum in \eqref{eq:def-of-f0}.
  Then it directly follows that
  \begin{align*}
    f_0(A)
    &=  f_1(A \cup \bar A_1) + f_2(A \cup \bar A_2) \\
    &=  \KPack(G_{t_1}-(D  \cup A \cup \bar A_1))
      + \KPack(G_{t_2}-(D  \cup A \cup \bar A_2)) \\
    &\le\KPack(G_{t_1}-(D' \cup A \cup \tilde A_1))
      + \KPack(G_{t_2}-(D''\cup A \cup \tilde A_2)) \\
    &=  \KPack(G_{t_0}-(D \cup A)
    .
  \end{align*}

  This concludes the proof that~\eqref{eq:def-of-f0} is a correct definition.
  To construct the function $f$ for each $A$,
  we just need to iterate over at most $2^\tw$ possible subsets.
  Hence, the output can be computed in time $2^{\Oh(\tw)}$.
\end{proof}

Now we have everything ready to state the algorithm solving 
\qCliqueUndelHitPack.

\begin{proof}[Proof of \Cref{thm:twUpper:clique}]
Given the instance $I = (G, U, k, \ell)$,
we first compute an optimal tree decomposition of $G$
and then transform the decomposition into a nice tree decomposition
where the root and leaf nodes have an empty bag.
For all leaf nodes $t_0$ of the tree decomposition,
we set $\List(t_0) = \{(0, \emptyset, \emptyset \mapsto 0)\}$.

We traverse the nodes of the tree decomposition in post-order
and for each node $t_0$ with at least one child
we perform the following actions (depending on the type of node $t_0$).
\begin{description}
  \item[Introduce Node.]
  Let $t_1$ be the unique child of $t_0$
  and let $v$ be the vertex introduced at $t_0$,
  that is, $X_{t_0} = X_{t_1} \cup \{v\}$.
  Repeat the following for all classes $c_1 \in \List(t_1)$:
  If $v \in V(G) \setminus U$, then use $\TIntro_1$ and $\TIntro_2$ on $c_1$
  to compute two classes $c_0$ and $c_0'$
  and add both to the list $\List(t_0)$.
  Otherwise, we have $v \notin V(G) \setminus U$ and apply $\TIntro_2$ on $c_1$ to compute the class $c_0''$ and add it to the list $\List(t_0)$.

  \item[Forget Node.]
  Let $t_1$ be the unique child of $t_0$
  and let $v$ be the vertex forgotten,
  that is, $X_{t_0} = X_{t_1} \setminus \{v\}$.
  Repeat the following for all classes $c_1 \in \List(t_1)$:
  If $v \in D_1$, then use $\TForget_1$ on $c_1$ to compute the class $c_0$ and add it to the list $\List(t_0)$.
  Otherwise, we have $v \notin D_1$ and use $\TForget_2$ on $c_1$ to compute the class $c_0'$ and add it to the list $\List(t_0)$.

  \item[Join Node.]
  Let $t_0$ be the unique parent of the nodes $t_1$ and $t_2$.
  Repeat the following for all pairs of classes
  $(c_1,c_2) \in \List(t_1) \times \List(t_2)$
  where $c_1=(k_1,D_1,f_1)$ and $c_2=(k_2,D_2,f_2)$:
  Check that $D_1 = D_2$ and if so apply $\TJoin$ on $(c_1,c_2)$
  to get a class $c_0$ for $t_0$
  and add it to the list $\List(t_0)$.
\end{description}
It remains to define the output of the procedure.
For this let $r$ be the root of the tree decomposition.
Then, the algorithm outputs \yes
if there is a class $(k_0, \emptyset, \emptyset \mapsto \ell_0) \in \List(r)$
for some $k_0 \in \numbZ{k}$ and $\ell_0 \in \numbZ{\ell-1}$.

\subparagraph*{Correctness.}
Next we prove that this dynamic program is correct.
\begin{claim}[Correctness]
  \label{clm:twUpper:clique:dpCorrectness}
  For all nodes $t_0$,
  integers $0 \le k_0 \le k$,
  vertex sets $D_0 \subseteq X_t \setminus U$,
  and
  functions $f_0\from 2^{X_t \setminus D_0} \to \Codomain$,
  the following two statements are equivalent:
  \begin{itemize}%
    \item
    There is a set $D \subseteq V_t\setminus U$
    of class $(k_0, D_0, f_0)$ for $t_0$.

    \item
    $(k_0, D_0, f_0) \in \List(t_0)$.
  \end{itemize}
\end{claim}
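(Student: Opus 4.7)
The plan is to prove the claim by bottom-up induction on the nice tree decomposition. At a leaf node $t_0$ we have $X_{t_0} = V_{t_0} = \emptyset$, so the unique candidate set is $D = \emptyset$ with class $(0, \emptyset, \emptyset \mapsto 0)$, exactly matching the single entry seeded into $\List(t_0)$. The inductive step handles introduce, forget, and join nodes separately, splitting each case into a soundness and a completeness direction.

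Soundness --- that every $(k_0, D_0, f_0) \in \List(t_0)$ admits a witness $D$ --- is read off the algorithm forward. Each class placed into $\List(t_0)$ was produced by applying one of the procedures from \cref{lem:twUpper:clique:introDel,lem:twUpper:clique:introNotDel,lem:twUpper:clique:forgetDel,lem:twUpper:clique:forgetNotDel,lem:twUpper:clique:join} to some child class $c_1 \in \List(t_1)$ (and $c_2 \in \List(t_2)$ for join nodes). By the induction hypothesis these child classes admit witnesses $D'$ (and $D''$), and the invoked lemma certifies that the naturally constructed extension --- $D' \cup \{v\}$, $D'$, or $D' \cup D''$ depending on the node type --- is a witness for $(k_0, D_0, f_0)$ at $t_0$. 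The only side condition to verify is that the extension avoids $U$; this holds because $\TIntro_1$ is invoked only when the introduced vertex is deletable, and all child witnesses already respect $U$ by induction.

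For completeness, fix $D \subseteq V_{t_0} \setminus U$ of class $(k_0, D_0, f_0)$ and exhibit the matching entry in $\List(t_0)$ by reducing to the children. At an introduce node with introduced vertex $v$, the set $D \setminus \{v\} \subseteq V_{t_1} \setminus U$ has a uniquely determined child class $c_1$ which, by the induction hypothesis, lies in $\List(t_1)$. The algorithm invokes $\TIntro_1$ on $c_1$ when $v \in D$ and $\TIntro_2$ when $v \notin D$, and the conclusions of \cref{lem:twUpper:clique:introDel,lem:twUpper:clique:introNotDel} guarantee that the resulting class equals $(k_0, D_0, f_0)$. Forget nodes are handled identically, branching on whether the forgotten vertex lies in $D$. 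For a join node, decompose $D = D' \cup D''$ with $D' = D \cap V_{t_1}$ and $D'' = D \cap V_{t_2}$; then $D' \cap D'' = D_0$, and both parts have well-defined child-classes sharing the bag-intersection $D_0$. By the induction hypothesis both classes lie in the respective child lists, so the algorithm pairs them and calls $\TJoin$, which --- by \cref{lem:twUpper:clique:join} --- emits the class of $D = D' \cup D''$.

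The main delicate point, which I view as the principal obstacle, is the well-definedness of the class emitted by each procedure: the lemmas describe $c_0$ purely as a function of the input classes $c_1$ (and $c_2$) and not of the underlying witnesses. We must therefore be sure that every witness $D'$ for a given $c_1$ gives rise to the same $c_0$, and symmetrically that the witness $D$ of class $(k_0,D_0,f_0)$ at $t_0$ determines child classes from which the algorithm's procedures reconstruct exactly $(k_0,D_0,f_0)$. This is guaranteed by the closed-form expressions for $f_0$ in \cref{lem:twUpper:clique:introNotDel,lem:twUpper:clique:join}: since $f_0$ is computed only from $f_1$ (and $f_2$) and not from any packing-level information beyond the class data, the reduction-and-reconstruction commutes with the induction, and completeness propagates cleanly from the child level to the parent level.
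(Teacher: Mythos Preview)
Your proposal is correct and follows essentially the same approach as the paper: bottom-up induction on the tree decomposition, invoking the five procedure lemmas at each node type, with the key observation being uniqueness of the class of a given set $D$ (which you phrase as well-definedness of the emitted class). The only difference is cosmetic: the paper dispatches the soundness direction in one line (``by the definition of the dynamic program'') since the procedure lemmas already certify it, whereas you spell it out explicitly.
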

\begin{claimproof}
  By the definition of the dynamic program, it suffices to show that the first
  statement implies the second statement.  We prove the correctness inductively
  based on the type of the node in the tree decomposition.

\begin{description}
    \item[Leaf Node.]
  Since the bags of the leaf nodes do not contain any vertices,
  no vertices can be deleted.
  By assumption, the leaf nodes do not have children
  and hence, there is only the empty packing
  which contains no copy of $\clq$.

  \item[Introduce Node.]
  If the set $D$ is of class $c_0$ and $v \in D$,
  then $D\setminus \{v\}$ is of \emph{some} class $c_1$ for $t_1$.
  By induction, we get $c_1 \in \List(t_1)$.
  From the algorithm and the properties of $\TIntro_1$,
  we obtain a class $c$ such that $D$ is of class $c$ for $t_0$.
  Moreover, the algorithm adds $c$ to $\List(t_0)$.
  Since each set $D$ has exactly one class for each node,
  we have $c_0 = c$ and thus, $c_0 \in \List(t_0)$.

  Note, that if $v \notin D$, then we similarly get that $c_0 \in \List(t_0)$ by using $\TIntro_2$.
  \item[Forget Node.] The result follows analogously to the introduce node.
  \item[Join Node.]
  Assume that $D$ is of class $c_0$.
  If we consider $D' = D \cap V_{t_1}$ and $D'' = D \cap V_{t_2}$,
  we have that $D'$ and $D''$ are of classes $c_1$ and $c_2$
  for $t_1$ and $t_2$, respectively.
  By the induction hypothesis it follows that
  $c_1 \in \List(t_1)$ and $c_2 \in \List(t_2)$.
  From the definition of the algorithm and the properties of $\TJoin$,
  it follows that $D' \cup D'' = D$ is of some class $c$ for $t_0$.
  By the uniqueness of classes, we conclude $c = c_0$
  and therefore, $c_0 \in \List(t_0)$.
  \claimqedhere
\end{description}
\end{claimproof}

As the last step we prove the running time of the algorithm.
\begin{claim}\label{claim:k2twrunningtime}
  Let $L$ denote the maximum length of a list $\List(t)$ for all nodes $t$.
  Then, the algorithm terminates in time
  $L^2 \cdot 2^{\poly(\tw)} \cdot \poly(n)$.
\end{claim}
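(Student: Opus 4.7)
The plan is to analyze the running time node by node, multiply by the number of nodes in the nice tree decomposition (which is polynomial in $n$), and observe that the join nodes dominate with a quadratic-in-$L$ factor. Since the construction of the nice tree decomposition itself requires time $2^{\Oh(\tw)} \cdot n^{\Oh(1)}$ and yields $\Oh(\tw \cdot n)$ nodes, this preprocessing cost is subsumed by the overall bound.

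For a leaf node, the list is initialized in constant time. For an introduce node with child $t_1$, the algorithm iterates over $\List(t_1)$, which has size at most $L$, and for each class invokes $\TIntro_1$ and/or $\TIntro_2$. By \cref{lem:twUpper:clique:introDel,lem:twUpper:clique:introNotDel}, each such invocation runs in time $2^{\Oh(\tw)}$, and the resulting class (essentially a function on $2^{X_t \setminus D_0}$) can be represented in $2^{\Oh(\tw)}$ bits. Inserting the new class into $\List(t_0)$, possibly after deduplication against the at most $L$ classes already present, adds a factor of $L \cdot 2^{\Oh(\tw)}$. The total work per introduce node is therefore $L^2 \cdot 2^{\Oh(\tw)}$. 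Forget nodes are handled analogously using \cref{lem:twUpper:clique:forgetDel,lem:twUpper:clique:forgetNotDel}, giving the same bound.

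For a join node $t_0$ with children $t_1$ and $t_2$, the algorithm enumerates all pairs $(c_1,c_2) \in \List(t_1) \times \List(t_2)$, of which there are at most $L^2$. For each pair it checks $D_1 = D_2$ (a cheap test) and, if so, invokes $\TJoin$ in time $2^{\Oh(\tw)}$ by \cref{lem:twUpper:clique:join}. Again each insertion into $\List(t_0)$ costs at most $L \cdot 2^{\Oh(\tw)}$ for deduplication, so the total work at a join node is at most $L^2 \cdot 2^{\Oh(\tw)}$, which dominates the per-node bound for introduce and forget nodes.

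Summing across all $\poly(n)$ nodes of the decomposition yields a running time of $L^2 \cdot 2^{\poly(\tw)} \cdot \poly(n)$, as claimed. No step is genuinely a main obstacle; the only subtlety is noticing that the cost of avoiding duplicates in each $\List(t)$ is already absorbed into the $L^2$ factor contributed by join nodes, so we obtain a uniform bound irrespective of node type.
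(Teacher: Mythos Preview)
Your approach is essentially the same as the paper's: observe that join nodes dominate with $L^2$ pairs, each processed in $2^{\Oh(\tw)}$ time, and sum over the $\poly(n)$ nodes of the nice decomposition. One small arithmetic slip to fix: at a join node you have up to $L^2$ candidate insertions, and if each deduplication check costs $L\cdot 2^{\Oh(\tw)}$ as you write, the total is $L^3\cdot 2^{\Oh(\tw)}$, not $L^2\cdot 2^{\Oh(\tw)}$; to stay within the claimed bound, do deduplication by sorting or hashing the at most $L^2$ computed classes (each a $2^{\Oh(\tw)}$-bit string), which costs $L^2\cdot 2^{\Oh(\tw)}\cdot\polylog(L)$ and is absorbed into $L^2\cdot 2^{\poly(\tw)}\cdot\poly(n)$. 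The paper's own proof simply omits this detail.
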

\begin{claimproof}
  Computing a nice tree decomposition is possible in time
  $2^{\poly(\tw)} \cdot \poly(n)$ \cite[Chapter~7]{book1}.
  Observe that handling the join nodes dominates the running time
  as we have to consider up to $L^2$ different pairs of classes.
  Because each new class can be computed in time $2^{\poly(\tw)}$,
  the claim follows.
\end{claimproof}

We conclude the proof by bounding the number of classes $(k_0, D_0, f_0)$ that
can appear for each node $t_0$.  There are $k$ choices for $k_0$ and at most
$2^{\tw+1}$ choices for $D_0$.  By a naive bound, the number of choices of
function $f_0$ is $n^{2^{\tw+1}}$.  We claim that the number of functions
appearing in the algorithm is actually $2^{2^{\Oh(\tw)}}$.  This would conclude
the proof, as it means that the running time of the algorithm is $2^{2^{\Oh(\tw)}}
\poly(n)$.

Recall, that $f_0(\emptyset)$ is the maximum integer
in the image of $f_0$.  Moreover, the smallest integer is at most $\tw+1$
smaller because avoiding single additional vertex can decrease the packing
number by at most one. Therefore, the maximum and the minimum in the image differ
by at most $\tw+1$.  In conclusion, the number of functions is bounded by $n
\cdot (\tw+1)^{2^{\tw+1}}$ as there are $n$ choices for the maximum (obtained at
$f_0(\emptyset)$) and $\tw+1$ choices for the values of the remaining
$2^{\tw+1}$ subsets.
\end{proof}

\section{\texorpdfstring%
{\boldmath\EdgeUndelHitPack Parameterized by Treewidth:\\ 
Single-Exponential Algorithm}
{Edge-HitPack Parameterized by Treewidth: Single-Exponential Algorithm}}
\label{sec:matroid}

In this section, we show that for $q=2$,
the algorithm of \cref{sec:twUpper:clique} for \qCliqueUndelHitPack
runs in time $2^{\poly(\tw)}\cdot n^{\Oh(1)}$
without \textit{any changes whatsoever} to the algorithm.

\edgetwalg*\label\thisthm

In light of \cref{claim:k2twrunningtime},
it is sufficient to give an upper bound on the number of different classes
that subsets $D\subseteq V_t\setminus D$ can have
at each node $t$ of the tree decomposition.
As $|D\setminus D_0|$ can take at most $n$ different values
and $D\cap X_t$ can take at most $2^{\tw+1}$ different values,
it boils down to bounding the number of different functions
$f(A)=\nu(G_t-(D\cup A))$ that can arise for a fixed graph $G_t$
and different sets $D$
(note that function $\nu_{K_2}=\nu$ denotes the size of the maximum matching).
The main combinatorial result of this section is precisely such a bound.
This immediately shows that the algorithm of \cref{thm:twUpper:clique}
for $q=2$ runs in time $2^{\poly(\tw)}\cdot n^{\Oh(1)}$,
proving \cref{th:k2-tw-alg}.

\Lemmatchfunction*

Before, we prove~\cref{lem:matchfunctionbound}
we restate~\cref{lem:perfmatchbound}.
Note, that we have already proven \cref{lem:perfmatchbound}
 in~\cref{sec:technical-overview}.

\Lemperfmatchbound*

The following lemma proves a stronger statement, giving a bound on the number of possibilities for a more expressive function that describes how that size of the maximum matching changes when removing a set $S$. We prove this generalization using a simple purely graph-theoretical construction. Let $\nu(G)$ be the size of the maximum matching in $G$. Observe that removing a vertex cannot increase this value and can decrease it only by at most 1. Thus $\nu(G)-\nu(G-S)$ is always between 0 and $|S|$.
\begin{lemma}\label{lem:deffunctionbound}
Let $G$ be a graph over a  vertex set $V\supseteq [k]$ for some integer $k$. Let
$g_{G,k}\from 2^{[k]}\to \{0,1,\dots,k\}$ be the function defined by
$g_{G,k}(S)=\nu(G)-\nu(G-S)$. For each $k$, there are $2^{\Oh(k^3)}$ functions $g_{G,k}$ that can arise this way.
\end{lemma}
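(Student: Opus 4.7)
The plan is to reduce Lemma~\ref{lem:deffunctionbound} to Lemma~\ref{lem:perfmatchbound} via a graph-theoretic construction. From any $G$ I will build a constant number of auxiliary graphs $\tilde G_c$, indexed by $c\in\{0,k,2k\}$, on vertex sets containing $[2k]$, so that the triple of perfect-matching indicator functions $(h_{\tilde G_c,2k})_c$ determines $g_{G,k}$. Applying Lemma~\ref{lem:perfmatchbound} with parameter $2k$ then bounds each $h_{\tilde G_c,2k}$ by $2^{\Oh((2k)^3)}=2^{\Oh(k^3)}$, so the triple --- and therefore $g_{G,k}$ --- lies in a set of size $(2^{\Oh(k^3)})^3=2^{\Oh(k^3)}$.

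The construction has two stages. First, I pad $G$ to $G^\diamond$ by adjoining a matching of $2k$ new disjoint edges (on $4k$ fresh vertices outside of $[k]$) together with $4k$ isolated vertices; this preserves $g_{G,k}$ since $\nu(G^\diamond-S)=\nu(G-S)+2k$ for every $S\subseteq[k]$, and it guarantees $\nu(G^\diamond)\ge 2k$, which will be crucial below. Second, writing $n=\abs{V(G)}$, for each $c\in\{0,k,2k\}$ I form $\tilde G_c$ from $G^\diamond$ by adjoining a set $C=\{k{+}1,\dots,2k\}$ of $k$ \emph{control} vertices and a set $T_c$ of $\abs{T_c}=n+3k-2\nu(G)+c$ \emph{filler} vertices, declaring every vertex of $C\cup T_c$ universal to $V(G^\diamond)$ while keeping $C\cup T_c$ independent; we identify $[2k]$ with $[k]\cup C$.

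The key step is the following identity: by the standard fact that a graph of the form $H\cup U$, with $U$ an independent set of vertices all universal to $V(H)$, admits a perfect matching iff $\nu(H)\ge(\abs{V(H)}-\abs{U})/2$, applied to $H=G^\diamond-S$ and $U=(C\setminus C')\cup T_c$, straightforward algebra (using the shift $\nu(G^\diamond-S)=\nu(G-S)+2k$) yields that $\tilde G_c-(S\cup C')$ has a perfect matching iff
\[
  \nu(G-S)\ge\nu(G)-j,\qquad j=\frac{c+\abs{S}-\abs{C'}}{2}.
\]
Crucially, $j$ does \emph{not} depend on $\nu(G)$: the $\nu(G)$-dependence has been absorbed into $\abs{T_c}$, and the padding to $G^\diamond$ is calibrated so that the universal-vertex absorption condition $\abs{U}\le\abs{V(H)}$ holds uniformly for all $(S,C')$, which gives the clean iff. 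As $c$ ranges over $\{0,k,2k\}$ and $\abs{C'}$ over $\{0,\dots,k\}$ (subject to the obvious parity constraints), the attainable thresholds $j$ exhaust $\{0,1,\dots,k\}$ for every $\abs{S}$, so the triple $(h_{\tilde G_0,2k},h_{\tilde G_k,2k},h_{\tilde G_{2k},2k})$ encodes $g_{G,k}(S)=\min\{j:h_{\tilde G_c,2k}(S\cup C')=1\}$ via a witnessing pair $(c,C')$ realizing threshold $j$.

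The main obstacle is arranging that the \emph{same} formula for $j$ governs every $G$, so that the triple is a function of $g_{G,k}$ alone; otherwise two graphs with the same $g$ but different $\nu$ would yield different triples and the counting would collapse. This is precisely what the choice $\abs{T_c}=n+3k-2\nu(G)+c$ achieves (it cancels the $\nu(G)$-dependence in the threshold), and the padding of $G$ to $G^\diamond$ is tuned so that the universal-absorption precondition for the PM lemma holds for every relevant $(c,S,C')$ (the $2k$-edge matching supplies the $+2k$ to $\nu$, and the $4k$ isolated vertices provide the slack in $\abs{V(G_0)}$). Once these invariants are in place, distinct functions $g_{G,k}$ yield distinct triples of PM-indicators, and three applications of Lemma~\ref{lem:perfmatchbound} give the desired $2^{\Oh(k^3)}$ bound.
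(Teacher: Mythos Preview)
Your proof is correct and follows essentially the same approach as the paper: add a tuned number of universal independent vertices so that perfect-matching existence in the augmented graph encodes the threshold $g_{G,k}(S)\le j$, with the number of universal vertices chosen to cancel the $\nu(G)$-dependence in the threshold formula, then invoke Lemma~\ref{lem:perfmatchbound}. The paper does this with a single auxiliary graph $G^*$ on parameter $k_0\le 4k$ (using up to $3k$ universal ``control'' vertices to realize all thresholds $j\in\{0,\dots,k\}$ at once), whereas you split into three graphs $\tilde G_c$ with parameter $2k$ each and a preliminary padding step; both routes yield $2^{\Oh(k^3)}$.

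One expository point: your sentence ``so that the triple is a function of $g_{G,k}$ alone; otherwise two graphs with the same $g$ but different $\nu$ would yield different triples and the counting would collapse'' has the direction reversed. What you actually need (and what your recovery formula $g_{G,k}(S)=\min\{j:\dots\}$ correctly establishes) is that $g_{G,k}$ is determined by the triple, i.e., same triple $\Rightarrow$ same $g$. Two graphs with the same $g$ yielding different triples is harmless for the count. The real reason you need $j$ independent of $\nu(G)$ is so that the recovery formula can be evaluated from the triple alone, without access to $\nu(G)$.
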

\begin{proof}
  Let $d=\nu(G)$. Let us construct the graph $G^*$ from $G$ by introducing
  $n-2d+2k$ independent vertices that are adjacent to every original vertex of
  $G$. Graph $G^*$ has as set $V^*$ of $n^*=2n-2d+2k$ vertices. Let
  $t=\min\{3k,n-2d+2k\}\le 3k$. For notational convenience, let us rename the
  vertices of $G^*$ such that $t$ of the newly introduced vertices form the set
  $\{k+1,\dots,k+t\}$. Let $k_0:=k+t$, we have $[k_0]\subseteq V^*$. Let
  $h_{G^*,k_0}\from 2^{[k_0]}\to \{0,1\}$ be the function defined for $G^*$ and $k_0$ as in
  \cref{lem:perfmatchbound}. We claim that $g_{G,k}(S)$ for any $S\subseteq [k]$
  can be deduced from the function $h_{G^*,k_0}$. As \cref{lem:perfmatchbound}
  asserts that there are at most $2^{\Oh(k_0^3)}=2^{\Oh(k^3)}$ possible functions $h_{G^*,k_0}$, the same bound also holds for the number of possible functions $g_{G,k}$.

  Let $S\subseteq [k]$ be an arbitrary subset and let $s=|S|$.
  Then $G-S$ has a matching of size at least $d-s$, which leaves at most
  $n-s-2(d-s)=n+s-2d$ vertices uncovered. As every matching in $G-S$ has size at
  most $d$, every matching leaves at least $n-s-2d$ vertices uncovered. Let us
  show how the existence of a matching in $G-S$ that leaves at most a certain number of vertices uncovered can be deduced from the function $f$. Let $x\in \{0,1,\dots,2k\}$ such that $n-s-2d+x\ge 0$ (hence $s-x\le n-2d$) and let $c=2k+s-x$. Note that $s\le k$ implies that $c\le 3k$ and
$s-x\le n-2d$ implies $c\le n-2d+2k$. Thus $c\le t$ and  let
$S^*=S\cup\{k+1,\dots, k+c\}$. We claim that $G-S$ has a matching leaving at
most $n-s-2d+x$ vertices uncovered if and only if $G^*-S^*$ has a perfect
matching. Consider a matching $M$ of $G-S$ that leaves $n-s-2d+x$ vertices
uncovered. In $G^*-S^*$, the number of newly introduced vertices is exactly
$n-2d+2k-c=n-2d+x-s$, thus $M$ can be completed to a perfect matching $M^*$ of
$G^*-S^*$. Similarly, if $G-S$ has a perfect matching, then $n-2d+2k-c=n+x-s-2d$
newly introduced vertices are covered in this matching. If we remove the edges
of $M^*$ incident to these edges, then we obtain a matching of $G-S$ avoiding
exactly this number $n+x-s-2d$ of vertices of $G-S$. Thus $f_{G,k}(S)\le i$ for any
$S\subseteq [k]$ and any $0\le i \le k$ can be deduced given the function $h_{G^*,k_0}$.
\end{proof}

Now the proof of \cref{lem:matchfunctionbound} follows immediately.

\begin{proof}[Proof of \cref{lem:matchfunctionbound}]
  In the $n$ vertex graph $G$,
  there are $n$ possibilities for the value of $\nu(G)$.
  Let us consider those functions $f_{G,k}$ that arise from some $n$-vertex graph $G$
  with $\nu(G)=d$ for some fixed integer $d$.
  Then by \cref{lem:deffunctionbound},
  there are $2^{\Oh(k^3)}$ possibilities for the function $g_{G,k}=d-\nu(G-S)$,
  which implies that there are only that many possibilities for the function $f_{G,k}$.
  Considering every possible $d$,
  this proves the bound of $n\cdot 2^{\Oh(k^3)}$.
\end{proof}

\section{\texorpdfstring%
{\boldmath\UndelHitPack{H} Parameterized by Treewidth:\\ Double-Exponential Algorithm}
{H-HitPack Parameterized by Treewidth: Double-Exponential Algorithm}}
\label{sec:twUpper}

In \cref{sec:twUpper:clique} we have seen the algorithm for \UndelHitPack{H}
parameterized by treewidth when $H$ is a clique.
In the following we consider the general case
when $H$ is an arbitrary connected graph with at least three vertices.
Note that if $H$ has only two vertices,
the problem is precisely \UndelHitPack{K_2}.
In this case the result follows by \cref{thm:twUpper:clique}
or rather by \cref{th:k2-tw-alg} proving the improved running time.
Formally, we prove \cref{thm:twUpper:arbitrary}.

\thmUpperTWGeneral*\label\thisthm

Recall that for the case when $H$ is a clique
we heavily exploited that, for each clique appearing in the final packing,
there is one bag containing all vertices covered by this clique.
By this property we could describe the states for each node
by a function based on the subsets of the bag.

When now considering the more general case
we do not have this assumption anymore.
Instead, a copy of the graph $H$ in the packing
might cover vertices from many different bags
(think of a long path for example).
Hence, the algorithm does not ``see'' all vertices of the copy at the same time
and therefore, also has to deal with \emph{partial packings}.
Such a partial packing is some variant of a packing where we also allow
that \emph{subgraphs} of $H$ appear in the packing (in a controlled way).
For each such partial packing we define a \emph{type}
which describes how the packing interacts with the bag.
See \cref{fig:twUpper:general:partial} for an illustration of these concepts.

Then the idea of the algorithm is as follows.
For each node $t$ we consider all possible types
that a packing could have with respect to this node.
For each such type $T$,
we store a bound on the maximum number of copies of $H$
that can appear in any partial packing of type $T$.

We start by introducing the notation and concepts needed
to formally state the dynamic program.
The algorithm is then presented in \cref{sec:twUpper:general:algo},
its correctness is proven in \cref{sec:twUpper:general:correctness},
and the runtime is analyzed in \cref{sec:twUpper:general:runtime}.

\begin{figure}[t]
  \centering
  \begin{subfigure}[b]{.16\textwidth}
    \centering
    \includegraphics[page=1,width=.5\textwidth]{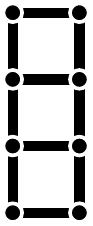}

    \vspace*{1cm}

    \caption{The graph $H$.}
  \end{subfigure}%
  \hfill
  \begin{subfigure}[b]{.40\textwidth}
    \centering
    \includegraphics[page=2,width=\textwidth]{img/partial}
    \caption{%
      An example for a (partial) $H$-packing for $G_t$.
      Note that in the upper left only a subgraph of $H$ is packed.
    }
  \end{subfigure}%
  \hfill
  \begin{subfigure}[b]{.40\textwidth}
    \centering
    \includegraphics[page=4,width=\textwidth]{img/partial}
    \caption{%
      A graphical representation of the type
      $(\prtn, H_1, h_1, H_2, h_2)$
      for the partial packing.
      \\
      The three left-most vertices are not covered and form part $0$,
      the highlighted five-vertex graph in the middle forms $H_1$,
      and the highlighted four-vertex graph on the right corresponds to $H_2$.
      The functions $h_1$ and $h_2$
      map the vertices in $H_1$ and $H_2$ as depicted.
    }
  \end{subfigure}
  \caption{
    An illustration for a node $t$
    of how a partial packing and its type relate to each other.
    \\
    The black vertices and edges correspond to the edges and vertices of $G$.
    The deleted vertices are indicated by hollow dots.
    The highlighted vertices and edges show how the copies of $H$
    are packed to the vertices of $G$.
    A highlighted vertex or edge with a white filling,
    indicates that we do not know
    to which vertices and edges of $G$ the ones of $H$ correspond.
  }
  \label{fig:twUpper:general:partial}
\end{figure}

\subparagraph*{\boldmath Partial $H$-packings.}
Let $I=(G,U,k,\ell)$ be an instance of \UndelHitPack{H}
and let $\tw$ be the treewidth of $G$.
As a first step the algorithm computes an optimal tree decomposition of $G$
Now consider a node $t$ of the computed tree decomposition of width $\tw$.
It is well-known that this can be done in time $2^{\poly(\tw)} \poly(n)$
\cite{KorhonenL23,Bodlaender96}.
We additionally introduce some notation when working with tree decompositions.
When $t$ denotes some node of a tree decomposition,
then we define $X_t$ as the associated bag.
We denote by $V_t$ the vertex set of the subtree of the tree decomposition
that is rooted in $t$, and define $G_t=G[V_t]$ as the corresponding graph.

	For some integers $p,q \ge 0$,
	a tuple
	$P=(h_1, \dots, h_p, \bar H_1, \bar h_1, \bar H_2, \bar h_2,
	\dots, \bar H_q, \bar h_q)$
	is a \emph{partial $H$-packing for $G_t$ (or just $t$ for short)} if
  the following holds:
	\begin{itemize}
		\item
		Each $h_i$ is an injective homomorphism from $H$ to $G_t\setminus X_t$. We refer to these as the \emph{complete
		copies} of $H$ (in $P$).
		\item
		Each $\bar H_j$ is an induced subgraph of $H$, and
		each $\bar h_j$ is an injective homomorphism from $\bar H_j$ to $G_t$ whose image has a non-empty intersection with $X_t$.
		We refer to these as the \emph{partial copies} of $H$ (in $P$).%
    \footnote{Note that $\bar H_j=H$ is possible and this counts as a partial copy as long as the image intersects the bag $X_T$.}
		\item
		The images of all $h_i, \bar h_j$ are pairwise vertex disjoint.
		\item
		For each $j \in \numb{q}$, we refer to a vertex of $\bar H_j$
    that has a neighbor in $H$ outside of $\bar H_j$
    as a \emph{border vertex} of $\bar H_j$.
		All such border vertices are mapped to a vertex from $X_t$ by $\bar h_j$.
	\end{itemize}

\subparagraph*{Types.}
Next, we define so-called \emph{types} for $t$.
Intuitively, such a type describes how a partial $H$-packing for $G_t$
interacts with the vertices from the bag $X_t$.
For this we first consider a function $\prtn$, which partitions the vertices in the bag $X_t$. Here, we also introduce a part $0$, which we interpret to include those vertices that are not covered by any copy of $H$.
All other parts of the partition are associated with some induced subgraph $H_i$ of $H$ together with a function $h_i$
that specifies which vertex from the bag belongs to which vertex of $H_i$,
that is, $h_i$ points out $H_i$ as a (not necessarily induced) subgraph
on the vertices in $X_t$.
Formally, we represent these (labeled) subgraphs
by injective homomorphisms from $H_i$ to $G[X_t]$.
In addition, $h_i$ includes information about the vertices of $H$
that are not in $H_i$.
These vertices are labeled using two special symbols
$\uparrow$ and $\downarrow$ to indicate
that they lie above $X_t$, i.e., not in $V_t$,
or below $X_t$, i.e., in $V_t\setminus X_t$, respectively.

\begin{definition}[Type for $t$, $D$-avoiding]
For some $0 \le w \le \abs{X_t}$,
a \emph{type} $T$
for $t$
is defined as a tuple $T=(\prtn, H_1, h_1, \dots, H_{w}, h_{w})$
such that
\begin{itemize}
  \item
  $\prtn$ is a function $X_t \to \numbZ{w}$
  and, for each $i\in [w]$, we set $X_t(i)\coloneqq \prtn^{-1}(i)$,
  \item
  for all $i \in \numb{w}$,
  $H_i$ is a non-empty induced subgraph of $H$ with $|X_t(i)|$ vertices,
  and
  \item
  for all $i \in \numb{w}$,
  $h_i$ is a function from $V(H)$
  to $X_t(i) \cup \{\downarrow,\uparrow\}$
  such that
  $h_i\vert_{V(H_i)}$ is an injective homomorphism from $H_i$ to $G[X_t(i)]$.
\end{itemize}
If for some set $D \subseteq X_t$
it holds that $\prtn(v) = 0$ for all $v \in D$,
then we say that $T$ is \emph{$D$-avoiding}.
\end{definition}

In the following definition we define the type of a partial $H$-packing.

\begin{definition}[Type of a partial $H$-packing]
  Let $P = (h_1, \dots, h_p, \bar H_1, \bar h_1, \dots, \bar H_q, \bar h_q)$
  be a partial $H$-packing for $G_t$ with regard to $X_t$.

  \begin{itemize}
    \item
    Let $\prtn \from X_t \to \numbZ{q}$ be the function
    with $\prtn(v) = j \in \numb{q}$
    if $v\in X_t$ is in the image of $\bar h_j$;
    and $\prtn(v) = 0$ otherwise.
    \item
    For all $j \in \numb{q}$,
    we set $\hat H_j \deff \bar H_j\left[\bar h_j^{-1}(X_t)\right]$.
    \item
    For all $j \in \numb{q}$,
    we define $\hat h_j \from V(H) \to X_t \cup \{\uparrow,\downarrow\}$
    with
    \[
      \hat h_j(a) \deff \begin{cases}
        \uparrow,    & a\notin \bar h_j^{-1}(V_t),\\ %
        \bar h_i(a), & a\in \bar h_j^{-1}(X_t),\\ %
        \downarrow,  & a\in \bar h_j^{-1}(V_t\setminus X_t). %
      \end{cases}
    \]
  \end{itemize}
Then $T=(\prtn, \hat H_1, \hat h_1, \dots, \hat H_q, \hat h_q)$ is the \emph{type of $P$}.
\end{definition}

\begin{remark}
	It is straightforward to check that the type of a partial $H$-packing for $G_t$ with regard to $X_t$ is a type for $t$ with $w=q$.
\end{remark}

With the definition of partial $H$-packings of a certain type,
we define the $T$-completion number.

\begin{definition}[$T$-completion number]
	Let $T$ be a type for $t$. Consider the partial $H$-packings for $G_t$ with respect to $X_t$ that have type $T$. The \emph{$T$-completion number} of $G_t$ (with respect to $X_t$) is the maximum number of complete copies of $H$ obtained for such a partial $H$-packing of type $T$.
\end{definition}

We prove a bound on the number of types
in terms of $\tw$ and the size of $H$.

\begin{lemma}
  \label{lem:tw:upper:numberOfStates}
  Let $H$ be a fixed graph.
  For all nodes $t$ and all sets $D \subseteq X_t$,
  let $\Types_t(D)$ denote the set of all possible $D$-avoiding types for $t$.

  Then we get that
  $\abs{\Types_t(D)} \le \tw^{\Oh(\tw \cdot \poly(\abs{H}))}$.

\end{lemma}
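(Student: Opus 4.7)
The plan is to establish the bound by separately counting the choices for each component of the tuple $T = (\prtn, H_1, h_1, \dots, H_w, h_w)$ and then multiplying these counts. Since $|X_t| \le \tw+1$ and $w \le |X_t|$, each component is drawn from a small universe.

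First, I would count the number of admissible partition functions $\prtn \from X_t \to \numbZ{w}$. Ignoring the $D$-avoiding restriction (which only cuts down the count), the number of such functions is at most $(\tw+2)^{\tw+1} = \tw^{\Oh(\tw)}$. For the $D$-avoiding case we simply fix $\prtn(v) = 0$ for $v \in D$ and let the remaining at most $\tw+1$ vertices be assigned freely, which yields the same asymptotic bound. This also fixes the value of $w$ (or gives at most $\tw+2$ choices for it, absorbed in the bound).

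Next, for each nonempty part $i \in \numb{w}$, I would count the choices of the pair $(H_i, h_i)$. Since $H$ is a fixed graph, the number of induced subgraphs $H_i$ of $H$ is at most $2^{|V(H)|}$, which is a constant depending only on $|H|$. For $h_i$, observe that it is a function from $V(H)$ (of size $|V(H)|$) into a set of size $|X_t(i)| + 2 \le \tw + 3$, so the total number of such functions is at most $(\tw+3)^{|V(H)|} = \tw^{\Oh(|V(H)|)}$. The additional homomorphism/injectivity constraint on $h_i\vert_{V(H_i)}$ only restricts this collection further. Thus the number of pairs $(H_i, h_i)$ for a single part is at most $2^{|V(H)|} \cdot \tw^{\Oh(|V(H)|)} = \tw^{\Oh(\poly(|H|))}$.

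Multiplying over the at most $w \le \tw+1$ parts, the total number of choices for $(H_1, h_1, \dots, H_w, h_w)$ is at most $\bigl(\tw^{\Oh(\poly(|H|))}\bigr)^{\tw+1} = \tw^{\Oh(\tw \cdot \poly(|H|))}$. Combining with the $\tw^{\Oh(\tw)}$ choices for $\prtn$ gives the claimed bound
\[
  |\Types_t(D)| \le \tw^{\Oh(\tw)} \cdot \tw^{\Oh(\tw \cdot \poly(|H|))} = \tw^{\Oh(\tw \cdot \poly(|H|))}.
\]
There is no real obstacle here; the argument is a direct counting one. The only minor care needed is to notice that the $D$-avoiding condition strictly decreases the number of valid partitions (and does not affect the other components), so it is safe to use the unconstrained upper bound throughout.
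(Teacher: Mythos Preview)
Your proposal is correct and follows essentially the same approach as the paper: both count the choices for $\prtn$ by at most $(\tw+2)^{\tw+1}$, the choices for each $H_i$ by $2^{|V(H)|}$, the choices for each $h_i$ by $(\tw+3)^{|V(H)|}$, and then take the product over the at most $\tw+1$ parts. Your additional remarks about the $D$-avoiding and homomorphism constraints only restricting the count are sound but not needed beyond what the paper does.
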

\begin{proof}
  As $w\le |X_t|\le \tw+1$ there are at most $(\tw+2)^{\tw+1}$ choices for the function $\prtn$.
  For each $H_i$ there are at most $2^{\abs{V(H)}}$ choices,
  and for each $h_i$ there are at most $(\tw+3)^{\abs{V(H)}}$ choices.
  Hence, the total number of types is bounded by
  \[
  (\tw+2)^{\tw+1} \cdot \left(
    2^{\abs{V(H)}} \cdot (\tw+3)^{\abs{V(H)}}
  \right)^{\!\tw+1}
  \]
  which is upper bounded by $\tw^{\Oh(\tw \cdot \poly(\abs{H}))}$
  and thus proves the claim.
\end{proof}

As a next step we define some notation
which we use to modify types. We will use the following notation for neighborhoods:
For a vertex $v$ of $G$, we use $N_G(v)$ to denote the neighborhood of $v$ in $G$, that is, the set of all vertices that share an edge with $v$. For a subset $U$ of the vertices of $G$, we use $N_U(v)$ to denote the set of neighbors of $v$ in $U$.

\subparagraph*{Extending Types by a Vertex.}
Let $T=(\prtn, H_1, h_1, \dots, H_w, h_w)$ be some type for $t$
and let $v$ be a vertex not in $G_t$ and thus, also not in $X_t$.
For all $i \in \numbZ{w}$,
we define $\extend Tvi$ as the set of all possible types for $t'$ with $X_{t'}=X_t\cup \{v\}$ that extend $T$
in the sense that now the vertex $v$ is contained in the $i$th part
(which might be a new one).
These extensions also consider all possible ways in which $v$ could appear in the corresponding (partial) copy of $H$.

Formally, we define $\extend{T}{v}{i}$ as follows:

\begin{itemize}
	\item
	If $v$ is not covered by a (partial) copy of $H$ then it is included in part $0$:
	\begin{equation}
	  \extend Tv0 \deff \{
	    (\prtn\vert_{v\mapsto 0}, H_1, h_1, \dots, H_w, h_w) \}.
	    \label{eq:twUpper:typesExtend:zero}
	\end{equation}
	\item
  Suppose $v$ is covered by a vertex $a$ of a copy of $H$
  for which the corresponding induced subgraph $H_i$ of $H$
  already covers other vertices of $X_t$
  (this implies that $a$ is not in $H_i$),
  i.e., the corresponding (partial) copy of $H$
  is already considered as part of the type $T$.
  Then $v$ is included in part $i$,
  let $\prtn'$ be the corresponding extension of $\prtn$.
  Moreover, $H_i$ is adjusted to include the vertex $a$,
  so let $H_{i,a}\deff H[V(H_i) \cup \{a\}]$.
  Finally, $h_{i,a}$ is a mapping from $V(H)$ to
  $X_{t'}\cup \{\uparrow,\downarrow\}$ that is identical to $h_i$,
  with the exception that $h_{i,a}(a)=v$.
  Ultimately, as we care about subgraph copies of $H$ in $G$,
  edges of $H$ should be preserved,
  and so we need only consider the set $\mathcal{A}$
  containing those vertices $a$ of $H$ with $N_{H_i}(a)\in N_{X_t}(v)$.
  So, for each $i\in [w]$, we define
	\begin{equation}
		\extend Tvi \deff \bigcup_{a\in \mathcal{A}}
		\{
		(\prtn', H_1, h_1, \dots,
		H_{i,a}, h_{i,a}, \dots, H_w, h_w)
		\}.
		\label{eq:twUpper:typesExtend:mid}
	\end{equation}
	\item Finally, suppose that $v$ is covered by some copy of $H$
  that has not been considered yet.
  Then a new part $w+1$ is introduced, and $v$ is mapped to $w+1$ by $\prtn''$,
  the corresponding extension of $\prtn$.
  The corresponding partial copy of $H$ then only holds the single vertex $a$,
  and a corresponding mapping $h_a$ maps every vertex of $H$ to $\uparrow$,
  except for $a$, which is mapped to $v$.
  We set
\end{itemize}
\begin{equation}
  \extend Tv{w+1} \deff \bigcup_{a \in V(H)} \{
    (\prtn'', H_1, h_1, \dots, H_w, h_w, H[\{a\}],
      h_a)
    \}.
    \label{eq:twUpper:typesExtend:last}
\end{equation}
For the case when $i=0$, we know that there is only one type $T' \in \extend{T}{v}{0}$.
Hence, we abuse notation and directly write $\extend Tv0$
whenever we refer to $T'$.

\subparagraph*{Removing Vertices from Types.}
Now we define the converse of the previous operation,
that is, the modification of the types when we remove a vertex from a bag $X_t$.
Let $T=(\prtn, H_1, h_1, \dots, H_w, h_w)$ be some type for $t$,
and let $v$ be a vertex of the considered bag. Let $X_{t'}=X_t\setminus \{t\}$.
We define a set of types $\remove{T}{v}$ for $t'$ as follows:

\begin{itemize}
	\item If $v$ was not covered by any copy of $H$, i.e., if $\prtn(v)=0$, then we can simply remove it by restricting $\prtn$.
  Formally we set
	\begin{equation} \label{eq:twUpper:typesRemove:zero}
		\remove{T}{v} =\{(\prtn\vert_{X_{t'}}, H_1, h_1, \dots, H_w, h_w)\}.
	\end{equation}
	\item
	Suppose $v$ is the last vertex in its part of $X_t$,
  that is, there is an $i\in [w]$ with $\prtn(v)=i$ and $|V(H_i)|=1$.
  Then remove this part from the type.
  It turns out that in this situation we can assume that the corresponding copy of $H$ is completely below $t'$,
  i.e., if $h_i$ maps every vertex of $H$ to $\downarrow$,
  except for $v$.
  The reason for this is that otherwise some vertices
  covered by the respective copy of $H$ would supposedly be not in $G_{t'}$
  (that is, ``above'' $t'$), and some vertices covered by $H$
  would be in $G_{t'}$ but none of them would be in $X_{t'}$,
  which is a contradiction to the fact that $H$ is connected
  and $X_{t'}$ a separator.
	In this case we set
	\begin{equation} \label{eq:twUpper:typesRemove:oneVtx}
		\remove{T}{v} =\{(\prtn\vert_{X_{t'}}, H_1, h_1, \dots, H_{i-1}, h_{i-1},
		H_{i+1}, h_{i+1}, \dots, H_w, h_w)\}.
	\end{equation}
	\item
	Finally, suppose that $\prtn(v)=i$ for some $i\in [w]$,
  and that $|V(H_i)|>1$.
  In this case $v$ is removed from $\prtn$,
  and its preimage $h_i^{-1}(v)$ is removed from $H_i$,
  while the updated homomorphism $h'_i$ is identical to $h_i$,
  with the exception that $h_i^{-1}(v)$ is now mapped to $\downarrow$
  (since it covers a vertex ``below'' the bag $X_{t'}$).
	By the same argument as in the previous case,
  we can assume that none of the vertices in $N_H(h_i(v))$
  are mapped to $\uparrow$ by $h_i$.
	We set
	\begin{equation} \label{eq:twUpper:typesRemove:moreVtcs}
    \remove{t}{v} = \{
		(\prtn\vert_{X_{t'}}, H_1, h_1, \dots, %
		H_i - h_i^{-1}(v), h'_i,
		\dots, H_w, h_w)
    \}.
	\end{equation}
\end{itemize}

\subparagraph*{Combining Types.}
The last operation is used to combine two types.
We use this later for the join nodes.
Let $T_1=(\prtn, H_1, h_1, \dots, H_w, h_w)$
and $T_2=(\prtn', H_1', h_1', \dots, H_w', h_w')$ be two types
for the same node $t$.
We define the addition of $T_1$ and $T_2$ as $\combine{T_1}{T_2}$
if the following requirements are met (otherwise, $\combine{T_1}{T_2}$ is undefined):
\begin{itemize}
	\item $\prtn=\prtn'$,
	\item $H_i=H_i'$ (for each $i\in [w]$),
	\item $h_i\vert_{H_i} = h'_i\vert_{H'_i}$ (for each $i\in [w]$),
and
	\item $\{h_i(a),h_i'(a)\}\neq \{\downarrow\}$ for all $a\in V(H)$.
\end{itemize}

Then we define, for all $i \in \numb{w}$,
the combined function $(\combine{h_i}{h_i'})$ with
\begin{align}
  \label{eq:twUpper:typesCombine:first}
  (\combine{h_i}{h_i'})(a) \deff
  \begin{cases}
    \uparrow,
       & h_i(a) = h_i'(a) = \uparrow, \\
    v, & h_i(a) = h_i'(a) = v, \\
    \downarrow,
      & (h_i(a),h_i'(a)) \in \{(\uparrow,\downarrow), (\downarrow,\uparrow)\},
  \end{cases}
  \qquad
  \forall a \in V(H).
\end{align}
Here the idea is that the way the vertices in $X_t$ are covered does not change.
If for one type some vertex appears below and for the other above,
then this vertex appears below for the combined type.
If both types require that the vertex appears above,
then this also holds for the combined type.

Formally, we define the final type as
\begin{align}
  \combine{T_1}{T_2} \deff
    (\prtn, H_1, \combine{h_1}{h_1'}, \dots, H_w, \combine{h_w}{h_w'})
  \label{eq:twUpper:typesCombine:second}
\end{align}

Before we give the algorithm we prove one structural result that allows us to get an improved running time.

\begin{lemma}
  \label{lem:twUpper:packingNumbersAreNotTooDifferent}
  Let $G$ be a graph and $t$ be a node of its tree decomposition.
  Moreover, consider $D \subseteq V_t$ and $D_0 = D\cap X_t$.
  Let $L$ be the maximum $T$-completion number taken over all types $T$ of some partial $H$-packing for $G_t - D$ with respect to $X_t\setminus D$.
  Then, for all $T \in \Types_t(D_0)$,
  if there is a partial $H$-packing of type $T$,
  then there is a partial $H$-packing of type $T$ with at least
  $\max\{L - \abs{X_t} \cdot \abs{V(H)}, 0\}$ complete copies of $H$.
\end{lemma}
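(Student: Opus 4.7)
The plan is to prove the statement by a swapping argument: I will start with a partial $H$-packing $P^*$ of some type $T^*$ that realizes the completion number $L$, then take a partial $H$-packing $P$ of the target type $T$ (which exists by assumption), and combine the partial copies from $P$ with as many complete copies of $P^*$ as are compatible. The resulting packing $P'$ will inherit the type $T$ from $P$ (because complete copies live in $G_t\setminus X_t$ and therefore do not influence the type) and inherit most of the complete copies from $P^*$.

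The central quantitative step is a counting bound on how many complete copies of $P^*$ can be blocked by the partial copies of $P$. By the definition of partial $H$-packing, each partial copy uses at most $|V(H)|$ vertices of $G_t$, and the partial copies are pairwise vertex-disjoint with each one having a nonempty intersection with $X_t\setminus D$. Hence the number of partial copies of $P$ is at most $|X_t\setminus D|\le |X_t|$, and the set $V_{\text{partial}}$ of vertices covered by partial copies of $P$ satisfies $|V_{\text{partial}}|\le |X_t|\cdot |V(H)|$. Since complete copies in $P^*$ are pairwise disjoint, each vertex of $V_{\text{partial}}$ hits at most one complete copy of $P^*$, so at most $|X_t|\cdot |V(H)|$ of the complete copies of $P^*$ intersect $V_{\text{partial}}$. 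Let $C^*$ denote the remaining complete copies; then $|C^*|\ge L-|X_t|\cdot |V(H)|$.

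I then define $P'$ to consist of the partial copies of $P$ together with the complete copies in $C^*$. To verify that $P'$ is a valid partial $H$-packing for $G_t-D$ with respect to $X_t\setminus D$, I note: (i) all images lie in $V_t\setminus D$ because $P$ and $P^*$ are both packings for $G_t-D$; (ii) complete copies in $C^*$ remain in $G_t\setminus X_t$ since they are inherited from $P^*$; (iii) the border condition for the partial copies is preserved since these are exactly the partial copies of $P$; and (iv) pairwise disjointness holds because complete copies in $C^*$ are disjoint among themselves (inherited from $P^*$) and were chosen to avoid $V_{\text{partial}}$.

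Finally, since the type of a partial $H$-packing depends only on how its images meet $X_t$, and the partial copies of $P'$ coincide with those of $P$ while the complete copies of $P'$ do not meet $X_t$, the type of $P'$ equals the type of $P$, which is $T$. The number of complete copies in $P'$ equals $|C^*|\ge L-|X_t|\cdot |V(H)|$. If this quantity is negative, we can simply take $P'=P$, which has at least $0$ complete copies and type $T$, giving the desired lower bound of $\max\{L-|X_t|\cdot |V(H)|,0\}$. The only subtle point to check is the bookkeeping that $T$-equivalence is indeed insensitive to the set of complete copies, which follows directly from how $\prtn$, $H_i$, and $h_i$ are derived solely from partial copies and from vertices not covered at all.
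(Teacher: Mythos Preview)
Your proof is correct and follows essentially the same swapping argument as the paper: take the partial copies from a packing of type $T$, discard their complete copies, and splice in the complete copies of a packing realizing $L$ that avoid the at most $|X_t|\cdot|V(H)|$ vertices used by those partial copies. Your version is in fact slightly cleaner than the paper's, since you only carry over the \emph{complete} copies from $P^*$, which makes it immediate that the resulting packing has exactly the partial copies of $P$ and hence type $T$; the paper removes from $P$ all copies (partial or complete) conflicting with $Q'$, which in principle could leave behind stray partial copies of $P$ that would disturb the type, though of course one can simply drop those as well.
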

\begin{proof}
  Let $P$ be some partial $H$-packing for $G_t-D$ with respect to $X_t\setminus D$
  containing $L$ complete copies of $H$.

  Fix some arbitrary type $T\in \Types_t(D_0)$
  and some partial $H$-packing $Q$ of type $T$ for $G_t-D$. Note that the complete copies of $H$ contained in some partial packing do not affect its type. So, by removing all complete copies of $H$ in $Q$,
  we obtain another partial $H$-packing $Q'$ of type $T$.
  Since each vertex in $X_t$ can be covered
  by at most one vertex from a copy of $H$ in $Q'$,
  the partial packing $Q'$ contains (partial) copies of $H$ with a total of at most
  $R \deff \abs{X_t} \cdot \abs{V(H)}$ vertices.

  Now let $P'$ be the packing obtained from $P$
  by removing all (partial) copies of $H$ that cover some vertex of $G_t$ that is also covered by a partial copy in $Q'$.
  Since there are at most $R$ vertices covered by partial copies of $Q'$,
  the partial packing $P'$ contains at least $L - R$ complete copies of $H$.
  By the construction of $P'$, the packings $P'$ and $Q'$ are vertex-disjoint
  and therefore, $P' \cup Q'$ is a partial $H$-packing of type $T$
  which contains at least $L - R$ complete copies of $H$.
  This concludes the proof.
\end{proof}

Now we are ready to state the algorithm for \UndelHitPack{H}
parameterized by treewidth.

\subsection{Dynamic Program}
\label{sec:twUpper:general:algo}

Now, we present an algorithm behind the proof of \cref{thm:twUpper:arbitrary}.
  The algorithm is a dynamic program
  based on the tree decomposition of the input graph
  and fills a table entry $A[t_0, k_0, D_0, \ell_0, f_0]$ for all
  \begin{itemize}
    \item
    nodes $t_0$, integers $k_0 \in \numbZ{k}$ and $\ell_0 \in \numbZ{\ell}$,
    \item
    subsets $D_0 \subseteq X_{t_0} \setminus U$,
    \item
    functions $f_0 \from \Types_t(D_0)
    \to \numbZ{(\tw+1) \cdot \abs{V(H)}} \cup \{\nopack\}$.
  \end{itemize}

  Now, we define featured sets
  which are precisely the sets the algorithm counts.

  \begin{definition}[Featured set]
      We say that a vertex set $D
      \subseteq V_{t_0} \setminus U$ with $D_0 \subseteq D$ and $\abs{D \setminus D_0} = k_0$
      is \emph{featured in $(t_0,k_0,
      D_0,\ell_0,f_0)$} if the following two conditions hold:

      \begin{itemize}
          \item
      For every type $T \in \Types_t(D_0)$ with $f_0(T)
      \neq \nopack$, the $T$-completion number for $G_{t_0} - D$
      (i.e., the maximum number of complete copies of $H$ in a partial
      $H$-packing of type $T$) is exactly $\ell_0 - f_0(T)$.
  \item For every type $T \in \Types_t(D_0)$ with
      $f_0(T) = \nopack$,
      there is no partial $H$-packing for $G_{t_0}-D$ of type $T$.
      \end{itemize}
  \end{definition}

  Henceforth, we define $A[t_0, k_0, D_0, \ell_0, f_0]$ to be the number of sets
  that are \emph{featured} in $(t_0, k_0, D_0, \ell_0, f_0)$.
  This concludes the definition of the dynamic programming table. Note that here
  we solve the counting version.
  For the decision version, it would suffice if table
  $A$ stores Boolean values.

  Let $r$ be the root node of the tree decomposition such that the corresponding
  bag $X_r$ does not contain any vertices. Note that the only valid type for $r$
  is the empty type, which we denote by $(\emptyset)$. Then the algorithm
  returns
  \begin{align*}
    \sum_{k_0 = 0}^{k}
    \sum_{\ell_0 = 0}^{\ell-1}
        A[r, k_0, \emptyset, \ell_0, (\emptyset) \mapsto 0],
  \end{align*}
  as the number of solutions for the given instance of \UndelHitPack{H}.

  As the program deals with each table entry separately,
  fix some node $t$,
  some integer $k_0 \in \numbZ{k}$,
  some set $D_0 \subseteq X_t$,
  some integer $\ell_0 \in \numbZ{\ell}$,
  and some function $f\from \Types_t(D_0) \to
  \numbZ{(\tw+1) \cdot \abs{V(H)}} \cup \{\nopack\}$.
  For ease of notation we define $R \deff (\tw+1) \cdot \abs{V(H)}$
  for the range of the integers in the image of $f$.

   \paragraph*{Leaf Node.}
   If $t$ is a leaf node we set $A[t, 0, \emptyset, 0, (\emptyset) \mapsto 0] \deff 1$.
    For all other combinations we set the table entry to be $0$.

    \paragraph*{Introduce Node.}
    Let $t'$ be the child of $t$ and let $v$ be the vertex introduced at $t$,
    that is, $X_t = X_{t'} \cup \{v\}$.

    \subparagraph*{Case 1: $v \in D_0$.} Then, we first define
    the function
    $f' \from \Types_{t'}(D_0 \setminus \{v\}) \to \numbZ{R} \cup \{\nopack\}$ such as
    \begin{align}
        f'(T') &= f(\extend{T'}{v}{0}).
      \label{eq:twUpper:intro:funcDel}
    \intertext{Using this we set the table entry as}
      A[t, k_0, D_0, \ell_0, f] &\deff
        A[t', k_0, D_0 \setminus \{v\}, \ell_0, f']
        \label{eq:twUpper:intro:entryDel}
      .
    \end{align}
    \subparagraph*{Case 2: $v \notin D_0$.} We first perform two checks for the function $f$.

    First, we iterate over all types $T \in \Types_t(D_0)$. Let $i = \prtn(v)$
    be the partition of $v$ in $T$. If $i = 0$, then the check succeeds directly
    for this type. Otherwise, we let $a \in V(H_i)$ be the vertex such that
    $h_i(a) = v$ in $T$. For all $b \in N_{H_i}(v)$, we check if $h_i(b) \in
    N_G(v)$. If this condition is true, we proceed to the next type. Otherwise,
    we check whether $f(T) = \nopack$. If this test fails for any type, we
    define the table entry of $A$ to be $0$.

    As a second step, for all $T' \in \Types_{t'}(D_0)$, we check if there
    exists $c_{T'} \in \numbZ{R}$ such that, for all $T \in \bigcup_{i =
    0}^{\tw+1} \extend{T'}{v}{i}$ with $f(T) \neq \nopack$, it holds that $f(T)
    = c_{T'}$. If this condition is satisfied, we define the function
    $f' \from \Types_{t'}(D_0) \to \numbZ{R} \cup \{\nopack\}$ with
        $f'(T') = c_{T'}$ and set
    \begin{align}
      A[t, k_0, D_0, \ell_0, f] \deff
        A[t', k_0, D_0, \ell_0, f']
        \label{eq:twUpper:intro:entryNotDel}
    \end{align}
    Otherwise, the table entry is set to $0$.

    \paragraph*{Forget Node.}
    Let $t'$ be the unique child of $t$ and let $v$ be the vertex forgotten,
    that is, $X_t = X_{t'} \setminus \{v\}$.
    First, we define function $f_1 \from \Types_{t'}(D_0 \cup \{v\}) \to \numbZ{R} \cup \{\nopack\}$ as
    \begin{align} \label{eq:twUpper:forget:funcDel}
        f_1(T') &= f(\remove{T'}{v}).
  \intertext{Next, we define function $f_2 \from \Types_{t'}(D_0) \to \numbZ{R+1} \cup \{\nopack\}$ as}
      f_2(T') &=
      \begin{cases}
          \nopack, & \text{if } f(\remove{T'}{v}) = \nopack, \\
          f(\remove{T'}{v})+1, & \text{if }\prtn(v) \neq 0 \text{ and } \abs{H_{\prtn(v)}} = 1, \\
        f(\remove{T'}{v}), & \text{otherwise.}
      \end{cases}
      \label{eq:twUpper:forget:funcNotDel}
    \end{align}
    We define
    \begin{align*}
      r_{\min} = \min_{T' \in \Types_{t'}(D_0): f_2(T') \neq \nopack} f_2(T')
      &&\text{ and } &&
      r_{\max} = \max_{T' \in \Types_{t'}(D_0): f_2(T') \neq \nopack} f_2(T')
    \end{align*}
    as the smallest and largest value in the image of $f_2$ respectively.
    Now, we have two cases. If $v \in U$, then we can assume that
    $r_{\max} - r_{\min} \le R$
    (otherwise the table entry can be set to $0$ directly)
    and we fill out the table
    \begin{align}
      \label{eq:twUpper:forget:undel}
          A[ t, k_0, D_0, \ell_0, f] &\deff A[t', k_0, D_0, \ell_0-r_{\min},
          f_2-r_{\min}]
        .\\
     \intertext{Otherwise, we know that $v \notin U$. If $r_{\max} - r_{\min} \le R$, then we fill the table entry as}
      A[ t, k_0, D_0, \ell_0, f] &\deff
        A[t', k_0-1, D_0 \cup \{v\}, \ell_0, f_1]
        + A[t', k_0, D_0, \ell_0-r_{\min}, f_2-r_{\min}]
        .
      \label{eq:twUpper:forget:entryBoth}
    \intertext{Otherwise we ignore the second part of the definition and set}
      A[t, k_0, D_0, \ell_0, f] &\deff
        A[t', k_0-1, D_0 \cup \{v\}, \ell_0, f_1]
      \label{eq:twUpper:forget:entryOnlyFirst}
      .
    \end{align}

    \paragraph*{Join Node.}
        Now, we describe the join operation.
    Let $t_1,t_2$ be the two children of $t$ with the same bag as~$t$.

    We need to guarantee that $\nopack$ propagates correctly. Therefore, we
    introduce the following definition.
    \begin{definition}[Compatible Tuples]
    We say that three integers
    $\ell_0,\ell_1,\ell_2 \in \numbZ{\ell}$ and
    three functions
    $f,f_1,f_2$ from types to $\numbZ{R} \cup \{\nopack\}$
    are \emph{compatible}
    if the following conditions are satisfied:
    \begin{itemize}[nosep]
      \item
          exist $T_1$ and $T_2$ with $f_1(T_1) = f_2(T_2) = 0$, and
      \item
          $f(\combine{T_1}{T_2}) = \nopack$ if and only if $f_1(T_1) = \nopack$ or $f_2(T_2) =
          \nopack$ for any types $T_1,T_2$, and
      \item
          $\ell_0 - f(\combine{T_1}{T_2}) = \ell_1 - f_1(T_1) + \ell_2 -
          f_2(T_2)$ for all types $T_1,T_2$ (if none of the terms is $\nopack$).
    \end{itemize}
    \end{definition}

    Now, we let $A[t, k_0, D_0, \ell_0, f]$ be
    \begin{align}
        \sum_{
          \substack{
          k_1, k_2 \in \numbZ{k}:
          \\
          k_0 = k_1 + k_2
          }
        }
        \sum_{
          \substack{
              \ell_1,\ell_2,f_1,f_2\\
              \text{compatible with }
              \ell_0,f_0
          }
        }
          A[t_1, k_1, D_0, \ell_1, f_1] \cdot A[t_2, k_2, D_0, \ell_2, f_2]
          .
      \label{eq:twUpper:join:entry}
    \end{align}

    This concludes the description of the dynamic programming recursion.

  \subsection{Correctness}
  \label{sec:twUpper:general:correctness}
  It remains to prove the correctness of this dynamic program.
  \begin{lemma}[Correctness]
    \label{clm:tw:upper:dpCorrectness}
    For all nodes $t$,
    integers $k_0 \in \numbZ{k}$,
    vertex sets $D_0 \subseteq X_t$,
    integers $\ell_0 \in \numbZ{\ell}$,
    functions $f\from \Types_t(D_0) \to \numbZ{R} \cup\{\nopack\}$,
    and integers $c \ge 0$,
    the following two statements are equivalent:
    \begin{enumerate}[label=(P.\arabic*)]
      \item
      \label{prop:tw:upper:dpCorrectness:first}
      There are exactly $c$ pairwise different vertex sets $D \subseteq V_t\setminus U$
      with $D_0 \subseteq D$ and $\abs{D \setminus D_0} = k_0$
      such that, for all types $T \in \Types_t(D_0)$,
      the $T$-completion number for $G_t - D$
      is exactly $\ell_0 - f(T)$ if $f(T) \neq \nopack$
      or there is no partial $H$-packing of type $T$.
      \item
      \label{prop:tw:upper:dpCorrectness:second}
      $A[t, k_0, D_0, \ell_0, f] = c$.
    \end{enumerate}
  \end{lemma}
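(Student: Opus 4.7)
The plan is to prove \cref{clm:tw:upper:dpCorrectness} by structural induction on the nice tree decomposition, showing for each node type that the recurrence defining $A[t,k_0,D_0,\ell_0,f]$ counts exactly the featured sets. The backbone in every case will be a bijection (or counted correspondence) between featured sets for $(t,\ldots)$ and featured sets for the child(ren), combined with a verification that the type-modification operations $\extend{\cdot}{\cdot}{\cdot}$, $\remove{\cdot}{\cdot}$, and $\oplus$ correctly translate completion numbers between bags. The induction hypothesis gives the equivalence \ref{prop:tw:upper:dpCorrectness:first}$\iff$\ref{prop:tw:upper:dpCorrectness:second} for the children, so the real content in each case is the combinatorial bijection together with the bookkeeping of the numerical parameters $k_0$ and $\ell_0$ and of the function $f$.

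The base case is immediate: at a leaf node, $V_t=\emptyset$ forces $D=\emptyset$, the only type is the empty type, and its completion number is $0$. For an introduce node with new vertex $v$, we use that when $v\in D_0$ we have $G_t-D=G_{t'}-(D\setminus\{v\})$, so featured sets biject via $D\mapsto D\setminus\{v\}$, and each $D_0$-avoiding type at $t$ uniquely extends a $(D_0\setminus\{v\})$-avoiding type at $t'$ through $\extend{\cdot}{v}{0}$ since $v\in D_0$ forces $v$ into part $0$; this gives \eqref{eq:twUpper:intro:entryDel}. When $v\notin D_0$ the same set $D$ is featured at both nodes and the consistency checks verify two things: that types whose partial-copy structure on $v$ demands non-existent edges must satisfy $f(T)=\nopack$, and that all types in a single fibre $\bigcup_i\extend{T'}{v}{i}$ share a common completion number (equal to that of $T'$ at $t'$), because packings of these types differ only in how the in-bag vertex $v$ extends an already-present partial copy without creating any new complete copy. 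For the join node, the separator property of $X_t$ implies that any partial packing of $G_t-D$ decomposes uniquely into pieces on $V_{t_1}$ and $V_{t_2}$, giving additive complete-copy counts and types that combine via $\oplus$; the compatibility predicate encodes precisely this additivity together with the correct propagation of $\nopack$ and the normalisation $\min f_i=0$ that matches $\ell_i=\max_{T_i}\mathrm{compl}_{t_i}(T_i)$.

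The hardest case, and the main obstacle, will be the forget node when $v\notin U$ is not deleted. Here $V_t=V_{t'}$ and $D$ is unchanged, but a single $D_0$-avoiding type $T$ at $t$ can be the $\remove$-image of several types $T'$ at $t'$. One has to verify two facts: first, that a partial packing of type $T'$ at $t'$ becomes a packing of type $\remove{T'}{v}$ at $t$ and acquires exactly one extra complete copy precisely when $\prtn(v)\neq 0$ and $|H_{\prtn(v)}|=1$, where connectivity of $H$ together with the separator property forces any partial copy anchored solely at $v$ to lie entirely below $X_{t'}$ (justifying the ``$\downarrow$-only'' convention of \eqref{eq:twUpper:typesRemove:oneVtx}) and hence complete itself once $v$ is forgotten; second, that with this bonus folded in, the function $f_2-r_{\min}$ on $\Types_{t'}(D_0)$ is exactly the completion-number function of the same featured set $D$ viewed at $t'$, after renormalising so that $\min f'=0$. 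The admissibility constraint $r_{\max}-r_{\min}\le R$ imposed by the algorithm is automatic from \cref{lem:twUpper:packingNumbersAreNotTooDifferent} whenever a featured set exists, so violating it forces the entry to $0$. Combining this with the parallel $v\in D$ analysis, which contributes $A[t',k_0-1,D_0\cup\{v\},\ell_0,f_1]$ via the obvious restriction bijection, yields the two-term recurrence \eqref{eq:twUpper:forget:entryBoth} and closes the induction; the $v\in U$ sub-case \eqref{eq:twUpper:forget:undel} is the same argument with the $v\in D$ branch suppressed.
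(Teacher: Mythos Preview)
Your proposal is correct and follows essentially the same approach as the paper: structural induction over the nice tree decomposition, with the same bijections between featured sets at $t$ and at its children, driven by the operations $\extend{\cdot}{\cdot}{\cdot}$, $\remove{\cdot}{\cdot}$, and $\oplus$, and with the same identification of the forget-not-deleted case (with its ``$+1$'' bonus when $|H_{\prtn(v)}|=1$ and its range check via \cref{lem:twUpper:packingNumbersAreNotTooDifferent}) as the crux. The paper's proof is longer only because it spells out both directions of each bijection and the completion-number bookkeeping in full detail, but the architecture and all key observations are exactly what you outline.
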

  \begin{proof}
    We prove the statement inductively by handling each possible type
    of a node $t$ of the tree decomposition individually.
    \paragraph*{Leaf Node.}
      There is only one way to select no vertices from the empty set
      and hence, the statement is trivially true.

    \paragraph*{Introduce Node.}
    We distinguish two subcases based on whether
    $v$ is in $D_0$.

      \subparagraph*{Case 1: $v \in D_0$.}
      We first consider the case when $v \in D_0$.
      Recall that $v$ is not part of any (partial) covering in this case.
      Recall the definition of function $f'$
      from \cref{eq:twUpper:intro:funcDel}:
      \begin{align*}
        f' \from \Types_{t'}(D_0 \setminus \{v\} ) \to \numbZ{R} \cup
        \{\nopack\} \text{ where } f'(T') = f(\extend{T'}{v}{0}).
      \end{align*}
      Our goal is to show that \ref{prop:tw:upper:dpCorrectness:first} for $(t,
      k_0, D_0, \ell_0, f, c)$ is equivalent to
      \ref{prop:tw:upper:dpCorrectness:first} for $(t', k_0, D_0 \setminus \{v\},
      \ell_0, f', c)$. Then, by the induction hypothesis, this is equivalent to
      \ref{prop:tw:upper:dpCorrectness:second} for $(t', k_0, D_0 \setminus \{v\},
      \ell_0, f', c)$ and, by the definition of the table entry from
      \cref{eq:twUpper:intro:entryDel}, this is equivalent to
      \ref{prop:tw:upper:dpCorrectness:second} for $(t, k_0, D_0, \ell_0, f,
      c)$.

      To prove the above equivalence, we define a bijection $\psi$ from the
      partial solutions for $(t,k_0,D_0,\ell_0,f)$ to the partial solutions for
      $(t',k_0,D_0 \setminus\{v\},\ell_0,f')$. For a partial solution $D$ for $t$
      with the claimed properties from \ref{prop:tw:upper:dpCorrectness:first},
      $\psi$ maps $D$ to $D' \deff D\setminus\{v\}$ as the corresponding partial
      solution for $t'$.

      We first show that $\psi$ is well-defined. For some partial solution $D$
      for $t$ and for some type $T' \in \Types_{t'}(D_0 \setminus \{v\})$, let $P$
      be a partial $H$-packing for $G_{t'}-(D\setminus\{v\})$
      of type $T'$ with regard to
      $X_{t'}$ with the maximum number of complete copies of $H$. Observe that
      $G_{t'}-(D\setminus\{v\}) = G_{t}-D$ as vertex $v$ is introduced at $t$.
      Moreover, the type of $P$ with regard to $X_t$ is $\extend{T'}{v}{0}$.
      Hence, there are exactly $\ell_0 - f(\extend{T'}{v}{0})$ copies of $H$ in
      $P$, which is equal to $\ell_0 - f'(T')$ by the definition of $f'$.

      As injectivity follows from the definition of $\psi$, it remains to show
      that $\psi$ is surjective. Consider some partial solution $D'$ for $t'$
      with the properties from \ref{prop:tw:upper:dpCorrectness:first}. Fix some
      type $T \in \Types_{t}(D_0)$ and a partial $H$-packing $P$ for
      $G_{t}-(D'\cup \{v\})$ of type $T$
      with the maximum number of complete copies of $H$. Since $P$
      does not contain $v$, packing $P$ has type $\remove{T}{v}$ with regard to
      $X_{t'}$. Hence, it contains at most $\ell_0 - f'(\remove{T}{v}) = \ell_0
      - f(\extend{\remove{T}{v}}{v}{0})$ copies of $H$. Moreover, as it is a
      packing with the maximum number of complete copies of $H$, it contains exactly that
      many copies of $H$. From
      \cref{eq:twUpper:typesExtend:zero,eq:twUpper:typesRemove:zero}, we get
      that $\extend{\remove{T}{v}}{v}{0}=T$ because we assumed that $v \in D_0$.
      Hence, the right-hand side is equal to $\ell_0 - f(T)$, and thus, finishes
      the proof of the surjectivity.

      \subparagraph*{Case 2: $v \notin D_0$.} First, consider the case where for
      some type $T' \in \Types_{t'}(D_0)$, there are two types $T_1 \neq T_2 \in
      \bigcup_{i=0}^{\tw+1} \extend{T}{v}{i}$
      with $f_1(T_1),f_2(T_2) \neq \nopack$ such
      that $f(T_1) \neq f(T_2)$. We claim that in this case, there is no
      solution for $t$ with the claimed properties from
      \ref{prop:tw:upper:dpCorrectness:first}. Assume otherwise and let the
      solution be $\tilde D$. Without loss of generality, it suffices to
      consider the case when $f(T_1) > f(T_2)$. Let $P_2$ be a partial
      $H$-packing for $G_t - \tilde D$ of type $T_2$
      with the maximum $T$-completion number.
      After removing $v$ from $P_2$, we get a packing for $G_{t'}-\tilde D$.
      This packing $P_2-v$ can be extended to a partial $H$-packing for
      $G_t - \tilde D$ of type $T_1$
      by choosing the packing for $v$ appropriately. Let $P_2'$
      be this new packing. Since $P_2'$ and $P_2$ contain the same number of
      complete copies of $H$,
      packing $P_2'$ contains $\ell_0 - f(T_2) > \ell_0 - f(T_1)$ copies
      of $H$. But this contradicts the assumption that the $T_1$-completion number for $G_t-\tilde D$ is exactly $\ell_0 - f(T_1)$.
      Thus, in this case, we have correctly set the value of the table entry to
      $0$.

      Hence, for each type $T' \in \Types_{t'}(D_0)$,
      there is a constant $c_{T'}$ such that,
      the function $f'$ is defined as
      \begin{align*}
          f' \from \Types_{t'}(D_0) \to \numbZ{R} \cup \{\nopack\} \text{ where }
          f'(T') = c_{T'}
          .
      \end{align*}

      As for the first case, our goal is to show that
      \ref{prop:tw:upper:dpCorrectness:first} for $(t, k_0, D_0, \ell_0, f, c)$
      is equivalent to \ref{prop:tw:upper:dpCorrectness:first} for $(t', k_0,
      D_0, \ell_0, f', c)$. Then, the induction hypothesis implies that the
      latter result is equivalent to \ref{prop:tw:upper:dpCorrectness:second}
      for $(t', k_0, D_0, \ell_0, f', c)$, and by the definition of the table
      entry in \cref{eq:twUpper:intro:entryNotDel}, we get that this is
      equivalent to \ref{prop:tw:upper:dpCorrectness:second} for $(t, k_0, D_0,
      \ell_0, f, c)$.

      In the following, we show that every partial solution $D$ for $(t, k_0,
      D_0, \ell_0, f)$ is also a partial solution for $(t', k_0, D_0, \ell_0,
      f')$ and vice versa, that is, we prove a bijection between the two sets of
      solutions.

      Let $D$ be a partial solution for $t$. Consider some type $T' \in
      \Types_{t'}(D_0)$.
      Fix some arbitrary type $\tilde T \in \bigcup_{i=0}^{\tw+1}
      \extend{T'}{v}{i} \subseteq \Types_t(D_0)$.
      We can extend all partial $H$-packings for $G_{t'}-D$ of type $T'$
      to a partial $H$-packing for $G_t-D$ of type $\tilde T$
      by extending the packing for $v$ according to $\tilde T$ because
      $v$ is not yet covered by the packing. Hence, the $T'$-completion
      number for $G_{t'}-D$ is at most the $\tilde T$-completion number
      for $G_t-D$, which is precisely $\ell_0 - f(\tilde T) = \ell_0 - f'(T')$.

      Now, let $\tilde P$ be the partial $H$-packing of type $\tilde T$ containing the
      largest number of complete copies of $H$. By our assumptions about $f$, the packing
      $\tilde P$ contains exactly $\ell_0 - f(\tilde T)$ copies of $H$. Removing
      the vertex $v$ from $\tilde P$,
      we get a new partial $H$-packing for $G_{t'}-D$
      of type $\remove{\tilde T}{v}$ with regard to $X_{t'}$. By our
      choice of $\tilde T$, we get $T' = \remove{\tilde T}{v}$, and thus, the
      packing also have type $T'$.
      Therefore, the $T'$-completion number is exactly $\ell_0-f'(T')$.

      Next, we prove the reverse direction. Consider some partial solution $D'$
      for $(t', k_0, D_0, \ell_0, f')$.
      Fix some type $T \in \Types_t(D_0)$
      and let $P$ be a partial $H$-packing for $G_t-D'$ of type $T$
      with the maximum number of complete copies.
      Observe that the number of complete copies of $H$ does not change
      if we remove $v$ from $P$.
      Moreover, $P-v$ has type $\remove{T}{v}$ with regard to $X_{t'}$.
      Hence, the number of complete copies of $H$ in $P$
      is at most $\ell_0 - f'(\remove{T}{v})$.
      From \cref{eq:twUpper:typesExtend:zero,eq:twUpper:typesExtend:mid,%
      eq:twUpper:typesExtend:last,eq:twUpper:typesRemove:zero,%
      eq:twUpper:typesRemove:oneVtx},
      we get that this is at most $\ell_0 - f(T)$.
      We conclude that $D'$ is a partial solution for $(t, k_0, D_0, \ell_0,f)$.

      Let $P'$ be some partial $H$-packing of type $\remove{T}{v}$ containing the
      largest number of complete copies of $H$. By our assumptions about $f'$, the
      packing $P'$ contains exactly $\ell_0 - f'(T')$ copies of $H$. Since $v$
      is not covered in $P'$, we can extend $P'$ to a new partial
      $H$-packing $P$ of type $T$ by choosing the packing for $v$ according to $T$. We see
      that $P$ and $P'$ contain the same number of complete copies of $H$, which is equal
      to $\ell_0 - f'(\remove{T}{v})$, and by the definition of $f'$ it is also
      equal to $\ell_0 - f(T)$.

      \paragraph*{Forget Node.} Assume out of the $c$ partial solutions from
      \ref{prop:tw:upper:dpCorrectness:first}
      $c_1$ additionally satisfy $v \in D$ and
      the remaining $c_2 \deff c - c_1$ satisfy $v \notin D$.
      Recall,
      from \cref{eq:twUpper:forget:funcDel,eq:twUpper:forget:funcNotDel},
      that the function  $f_1 \from \Types_{t'}(D_0 \cup \{v\}) \to \numbZ{R}
      \cup \{\nopack\}$ satisfies
    \begin{align*}
        f_1(T') &= f(\remove{T'}{v}),
  \intertext{and the function $f_2 \from \Types_{t'}(D_0) \to \numbZ{R+1} \cup \{\nopack\}$ satisfies}
      f_2(T') &=
      \begin{cases}
          \nopack, & \text{ if } f(\remove{T'}{v}) = \nopack, \\
          f(\remove{T'}{v})+1, & \text{ if }\prtn(v) \neq 0 \text{ and } \abs{H_{\prtn(v)}} = 1, \\
        f(\remove{T'}{v}), & \text{otherwise.}
      \end{cases}
    \end{align*}
      Also recall that
      $r_{\min}$ is the smallest integer from the image of $f_2$
      and $r_{\max}$ is the largest integer in the image.

      We claim that if $r_{\max} - r_{\min} > R$, then $c_2 = 0$.
      Assume otherwise and let $D$ be such a solution
      satisfying the constraints in \ref{prop:tw:upper:dpCorrectness:first}.
      Let $T_{\min} \in \Types_t(D_0)$ be a type
      such that $f(T_{\min}) = r_{\min}$
      and $T_{\max} \in \Types_{t}(D_0)$ analogously for $r_{\max}$.
      Since there is a partial $H$-packing for $G_t-D$ of type $T_{\max}$
      and a partial $H$-packing for $G_t-D$ of type $T_{\min}$
      with $\ell_0 - f(T_{\min})$ copies of $H$,
      by \cref{lem:twUpper:packingNumbersAreNotTooDifferent},
      the $T_{\max}$-completion number is at least
      $\ell_0 - f(T_{\min}) - R > \ell_0 - r_{\max}$.
      But this contradicts the assumption that the $T_{\max}$-completion number is exactly $\ell_0 - f(T_{\max}) = \ell_0 - r_{\max}$.
      Hence, we can safely assume in the following that
      $r_{\max} - r_{\min} \le R$.
      Moreover, observe that subtracting $r_{\min}$ from $\ell_0$ and $f_2$
      is only needed to adjust the range of the function $f_2$
      to be in $\numbZ{R}$.
      Hence, we omit this from the proof in the following
      to keep notation simple.

      Assume that $v \notin U$ first (the case when $v \in U$ is analogous).
      Our goal is to show that \ref{prop:tw:upper:dpCorrectness:first}
      for $(t, k_0, D_0, \ell_0, f, c)$
      is equivalent to the two combined properties
      \ref{prop:tw:upper:dpCorrectness:first}
      for $(t', k_0, D_0, \ell_0, f_1, c_1)$
      and \ref{prop:tw:upper:dpCorrectness:first}
      for $(t', k_0 -1, D_0 + v, \ell_0, f_2, c_2)$ together.
      Then we can use the induction hypothesis
      to get the equivalence to \ref{prop:tw:upper:dpCorrectness:second}
      for both cases
      and then by the definition of the dynamic programming table $A$ this is equivalent to
      \ref{prop:tw:upper:dpCorrectness:second}
      for $(t, k_0, D_0, \ell_0, f, c)$.

      For both settings we show that the partial solutions for $t$
      precisely coincide with the partial solutions for $t'$,
      that is, there is a bijection between these partial solutions.
      Nevertheless, we handle both cases separately.

        \subparagraph*{Case 1: $v$ is deleted.}
        Fix some partial solution $D$ with $v \in D$
        and the properties from \ref{prop:tw:upper:dpCorrectness:first}.
        Consider some type $T' \in \Types_{t'}(D_0 \cup \{v\})$.
        Observe that $G_{t'}-D = G_t-D$ and hence,
        every partial $H$-packing for $G_{t'}-D$ of type $T'$
        is also a partial $H$-packing for $G_t-D$ of type $\remove{T'}{v}$
        with regard to $X_t$.
        Therefore, the $T'$-completion number for $G_{t'}-D$
        is at most the $\remove{T'}{v}$-completion number for $G_t-D$
        which is by assumption $\ell_0 - f(\remove{T'}{v})$
        and by definition of $f_1$ this is equal to $\ell_0 - f_1(T')$.

        Moreover,
        let $P$ be the partial $H$-packing for $G_t-D$ of type $\remove{T'}{v}$
        with the maximum number of complete copies of $H$.
        Since $P$ is also a partial $H$-packing for $G_{t'}-D$ of type $T'$,
        the $T'$-completion number of $G_{t'}-D$
        is at least $\ell_0 - f(\remove{T'}{v}) = \ell_0 - f_1(T')$.

        Now consider some partial solution $D'$ for
        $(t', k_0-1, D_0 \cup \{v\}, \ell_0, f_1)$.
        Pick some arbitrary type $T \in \Types_t(D_0)$.
        Let $P$ be some partial $H$-packing for $G_t-D'$ of type $T$.
        Then, since $v \in D'$, we get that $P$ is a partial
        $H$-packing for $G_{t'}-D'$ of type $\extend{T}{v}{0}$.
        By assumption the number of complete copies of $H$ in $P$
        is at most $\ell_0 - f_1(\extend{T}{v}{0})$
        which is, by the definition of $f_1$,
        equal to $\ell_0 - f(\remove{\extend{T}{v}{0}}{v})$
        and by
        \cref{eq:twUpper:typesExtend:zero,eq:twUpper:typesRemove:zero}
        equal to $\ell_0 - f(T)$.

        Let $P'$ be some partial $H$-packing for $G_{t'}-D'$
        of type $\extend{T}{v}{0}$
        with the maximum number of complete copies of $H$.
        Since $v$ is not contained in the packing,
        this is also a partial $H$-packing for $G_t-D'$ of type $T$
        containing $\ell_0 - f_1(\extend{T}{v}{0})$ copies of $H$.
        Hence, by the definition of $f_1$,
        the $T$-completion number is exactly $\ell_0 - f(T)$.

        \subparagraph*{Case 2: $v$ is not deleted.}
        Consider some partial solution $D$ with $v \notin D$
        for $(t, k_0, D_0, \ell_0, f)$.
        We show that $D$ is also a partial solution for
        $(t', k_0, D_0, \ell_0, f_2)$.
        Consider some type $T' \in \Types_{t'}(D_0)$
        with $\prtn(v) \neq 0$ and $\abs{H_{\prtn(v)}}=1$.
        Pick some partial $H$-packing $P'$ for $G_{t'}-D$ of type $T'$.
        By our assumption we know that $v$ is covered by a copy of $H$
        that is entirely contained in $P'$.
        Hence, with respect to $X_t$ there is one more copy of $H$ in $P'$
        then with respect to $X_{t'}$.
        Moreover, the packing $P'$ has type $\remove{T'}{v}$
        with regard to $X_t$.
        Hence, the $T'$-completion number for $G_{t'}-D$
        is at most the $\remove{T'}{v}$-completion number
        for $G_t-D$ minus $1$;
        by assumption this number is equal to $\ell_0 - f(\remove{T'}{v})-1$
        which is equal to $\ell_0 - f_2(T')$ by the definition of $f_2$.

        On the other side,
        let $P$ be the partial $H$-packing of type $\remove{T'}{v}$
        containing exactly $\ell_0 - f(\remove{T'}{v})$ copies of $H$.
        We directly get that $P$ is a partial $H$-packing of type $T'$
        containing exactly $\ell_0 - f(\remove{T'}{v})-1$ copies of $H$.
        Hence, by the definition of $f_2$
        from \cref{eq:twUpper:forget:funcNotDel},
        the $T'$-completion number of $G_{t'}-D$
        is exactly $\ell_0 - f_2(T')$.

        Now consider some type $T' \in \Types_{t'}(D_0)$,
        with $\prtn(T') = 0$ or $\abs{H_{\prtn(v)}} > 1$.
        Again pick some partial $H$-packing $P'$ for $G_{t'}-D$ of type $T'$.
        Since $v$ is either not covered by any (partial) copy of $H$
        or part of a (partial) copy covering at least one more vertex from $X_{t'}$,
        the number of complete copies of $H$ in $P'$ with respect to $X_{t'}$
        is equal to the number of complete copies of $H$ in $P'$ with respect to $X_t$.
        Moreover, observe that $P'$ has type $\remove{T'}{v}$
        with regard to $X_t$.
        Hence, the $T'$-completion number for $G_{t'}-D$
        is at most the $\remove{T'}{v}$-completion number
        for $G_t-D$ which is, by assumption, exactly $\ell_0-f(\remove{T'}{v})$
        and this is equal to $\ell_0 - f_2(T')$ by the definition of $f_2$.

        On the other side,
        let $P$ be the partial $H$-packing of type $\remove{T'}{v}$
        containing exactly $\ell_0 - f(\remove{T'}{v})$ copies of $H$.
        Since $v$ is either not covered
        or part of a partial packing containing some other vertex from $X_t$,
        the packing $P$ is a partial $H$-packing of type $T'$
        containing exactly $\ell_0 - f(\remove{T'}{v})-1$ copies of $H$.
        Hence, by the definition of $f_2$
        from \cref{eq:twUpper:forget:funcNotDel},
        the $T'$-completion number of $G_{t'}-D$
        is exactly $\ell_0 - f_2(T')$.

        For the reverse direction consider some partial solution $D'$
        for $(t', k_0, D_0, \ell_0, f_2)$.
        Let $T \in \Types_t(D_0)$ be some type
        and $P$ some partial $H$-packing for $G_t-D'$ of type $T$.
        There is a unique type $T' \in \bigcup_{i=1}^{\tw+1} \extend{T}{v}{i}$
        such that $P$ is a partial $H$-covering for $G_{t'}-D'$ of type $T'$.
        The number of complete copies of $H$ in $P$ with respect to $X_{t'}$
        is at most $\ell_0 - f_2(T')$.
        Observe that when $v$ is not covered
        or covered by a copy of $H$ also covering other vertices from $X_{t'}$,
        then $P$ contains as many copies of $H$ with respect to $X_{t'}$
        as with respect to $X_t$.
        By the definition of $f_2$,
        we get $\ell_0 - f_2(T') = \ell_0 - f(\remove{T'}{v})$.
        Otherwise, we know that $P$ with respect to $X_{t}$
        contains one copy of $H$ more than $P$ with respect to $X_{t'}$.
        By assumption the latter is at most $\ell_0 - f_2(T')-1$
        which is equal to $\ell_0 - (f(\remove{T'}{v})+1)-1$.
        Using \cref{eq:twUpper:typesExtend:mid,eq:twUpper:typesRemove:oneVtx}
        we conclude that in both cases
        the $T$-completion number is at most $\ell_0 - f(T)$.

        On the other side,
        let $P'$ be a partial $H$-packing of type $T'$
        with $\ell_0 - f_2(T')$ complete copies of $H$.
        By an argument similar to the ones before,
        we get that $P'$ contains $\ell_0 - f_2(T)$ complete copies of $H$
        with respect to $X_t$.
        Hence, the $T$-completion number is exactly $\ell_0-f_2(T)$.

      \paragraph*{Join Node}
      Let $D$ be one of the $c$ different partial solutions. We define two new
      sets $D_1$ and $D_2$ as $D_1 \deff D \cap V_{t_1}$ and $D_2 \deff D \cap V_{t_2}$.
      By the structure of the tree decomposition, we have $D_1 \cap D_2 = D_0$.
      We choose $k_1$ and $k_2$ as $k_1 = \abs{D_1\setminus D_0}$ and $k_2 =
      \abs{D_2 \setminus D_0}$.

      Our first goal is to construct two functions, $f_1$ and $f_2$, and define
      two integers $\ell_1$ and $\ell_2$ in $\numbZ{\ell}$ such that
      $\ell_1,\ell_2, f_1, f_2$ are compatible with $\ell_0, f_0$.
      Moreover, we define them
      such that $D_1$ is one of the $c_1$ partial solutions
      for $(t_1, k_1, D_0, \ell_1, f_1)$
      from \ref{prop:tw:upper:dpCorrectness:first}
      and $D_2$ is one of the $c_2$ partial solutions
      for $(t_2, k_2, D_0, \ell_2, f_2)$
      from \ref{prop:tw:upper:dpCorrectness:first}.

      For all $T \in \Types_t(D_0)$,
      let $\packs(T)$ be the set of all possible partial $H$-packings
      for $G_t-D$ of type $T$.
      For all types $T_1 \in \Types_{t_1}(D_0)$ and $T_2 \in \Types_{t_2}(D_0)$,
      we define the following two sets of partial packings:
      \begin{align*}
        \packs(T \to T_1) \deff \{ P' = P \cap V_{t_1} \mid
          P \in \packs(T),
          P' \text{ has type \(T_1\) w.r.t.\ } X_{t_1} \}
          \\
        \packs(T \to T_2) \deff \{ P' = P \cap V_{t_2} \mid
          P \in \packs(T),
          P' \text{ has type \(T_2\) w.r.t.\ } X_{t_2} \}
      \end{align*}
      Based on this definition,
      for all $T_1 \in \Types_{t_1}(D_0)$,
      we consider all packings in $\packs(T \to T_1)$
      and define $\lambda(T \to T_1)$ as the maximum number of complete copies of $H$
      in these packings.
      Likewise, we define the value $\lambda(T \to T_2)$
      based on $\packs(T \to T_2)$ for all $T_2 \in \Types_{t_2}(D_0)$.
      We set
      \[
        \ell_1 \deff
          \max_{T_1 \in \Types_{t_1}(D_0)}
          \max_{T \in \Types_t(D_0)}
            \lambda(T \to T_1)
        \quad
        \text{ and }
        \quad
        \ell_2 \deff
          \max_{T_2 \in \Types_{t_2}(D_0)}
          \max_{T \in \Types_t(D_0)}
            \lambda(T \to T_2)
        .
      \]

      We define the two functions $f_1$ and $f_2$ as
      \begin{align*}
          f_1 \from \Types_{t_1}(D_0) \to \numbZ{\ell} &\text{  with  }
          f_1(T_1) = \ell_1 - \max_{T \in \Types_t(D_0)} \lambda(T \to T_1) \text{
          and}\\
        f_2 \from \Types_{t_2}(D_0) \to \numbZ{\ell} &\text {  with  }
        f_2(T_2) = \ell_2 - \max_{T \in \Types_t(D_0)} \lambda(T \to T_2)
        .
      \end{align*}
      We first argue that the image of $f_1$ and $f_2$ is actually contained in
      $\numbZ{R}$ as otherwise, there would be no solution. Assuming there is a
      solution, we directly deduce from
      \cref{lem:twUpper:packingNumbersAreNotTooDifferent} that the $T$-completion numbers for different types $T$ differ by at most $R$. Hence, if one type
      violated this property, it would lead to a contradiction.

      Next, we prove that $D_1$ is a solution for $(t_1, k_1, D_0, \ell_1, f_1)$
      and $D_2$ is a solution for $(t_2, k_2, D_0, \ell_2, f_2)$. Since the
      proofs for both cases are almost identical, we will focus only on the
      proof for $t_1$.

      Let $T_1$ be some type in $\Types_{t_1}(D_0)$, and let $P_1$ be a
      partial $H$-packing for $G_{t_1}-D_1$ of type $T_1$.
      Choose $T \in \Types_t(D_0)$
      such that $\ell_0 - f_1(T_1) = \ell_1 - \lambda(T \to T_1)$, and let $P$
      be the corresponding partial $H$-packing for $G_t - D$ of type $T$.
      Now, observe
      that $P' \deff (P \cap V_{t_2}) \cup P_1$ is also a partial packing for
      $G_t-D$ and has type $T_1$ with regard to $X_{t_1}$. Therefore, $P' \cap
      V_{t_1}$ is a packing of type $T_1$ and thus is contained in $\packs(T \to
      T_1)$. Moreover, this packing contains at most $\ell_1 - f_1(T_1)$ copies
      of $H$ according to the definition of $f_1$.
      Since
      \[
        P' \cap V_{t_1} = ((P \cap V_{t_2}) \cup P_1) \cap V_{t_1} = P_1
        ,
      \]
      this implies that the $T_1$-completion number is at most $\ell_0
      - f_1(T_1)$.

      By the definition of the function $f_1$, it directly follows that there is
      a partial $H$-packing of type $T_1$
      containing $\ell_0 - f_1(T_1)$ complete copies
      of $H$. Hence, combined with the previous result, the $T_1$-completion number is exactly $\ell_0 - f_1(T_1)$.

      For the reverse direction, fix some $k_1, k_2$ with $k_1+k_2 = k_0$,
      $\ell_1,\ell_2$, and $f_1$, $f_2$ such that, for all $T_1, T_2$ with $\combine{T_1}{T_2}
      = T$,
      it holds that $\ell_1,\ell_2,f_1,f_2$ are compatible with $\ell_0,f_0$.
      We show that if there is a partial solution $D_1$ for $(t_1, k_1,
      D_0, \ell_1, f_1)$ and a partial solution $D_2$ for $(t_2, k_2, D_0,
      \ell_2, f_2)$, then the set $D_1 \cup D_2$ is a partial solution for $(t,
      k_1+k_2, D_0, \ell_0, f)$.

      For ease of notation, we set $D = D_1 \cup D_2$. Now consider some type $T
      \in \Types_t(D_0)$
      and some partial $H$-packing for $G_t - D$ of type $T$.
      As a
      first step, we observe that $P \cap V_{t_1}$ is a partial packing for
      $G_{t_1}-D = G_{t_1} - D_1$. Since the type of $P \cap V_{t_1}$ is $T_1$
      for some $T_1 \in \Types_{t_1}(D_0)$, the number of complete copies of $H$ in $P
      \cap V_{t_1}$ is at most $\ell_1 - f_1(T_1)$. Similarly, the number of
      complete copies in $P \cap V_{t_2}$ is at most $\ell_2 - f_2(T_2)$, where $T_2$ is
      such that $\combine{T_1}{T_2} = T$. Hence, the number of complete copies
      of $H$ in $P$ is at most $\ell_1 - f_1(T_1) + \ell_2 - f_2(T_2) = \ell_0 -
      f(T)$.

      Next, we show that the bound for the $T$-completion number is
      tight. Let $T_1$ and $T_2$ be types such that $\combine{T_1}{T_2} = T$.
      Choose a partial $H$-packing $P_1$ of type $T_1$
      containing $\ell_1 - f_1(T_1)$ copies of $H$.
      Likewise, let $P_2$ be a partial $H$-packing of type $T_2$ with
      $\ell_2-f_2(T_2)$ complete copies of $H$. It follows that $P_1 \cup P_2$
      is a partial $H$-packing of type $T$
      with $\ell_0 - f_1(T_1) + \ell_2 - f_2(T_2)$
      complete copies of $H$. Hence, the $T$-completion number for
      $G_t-D$ is exactly $\ell_0 - f(T)$.

      Now, for all tuples $(t_1, k_1, D_0, f_1)$ and $(t_2, k_2, D_0, f_2)$, we
      can use the induction hypothesis to conclude that
      \ref{prop:tw:upper:dpCorrectness:first} is equivalent to
      \ref{prop:tw:upper:dpCorrectness:second}. By the definition of the table
      entry from \cref{eq:twUpper:join:entry}, this is equivalent to
      \ref{prop:tw:upper:dpCorrectness:second}. This concludes the
      proof of correctness of the dynamic program.
  \end{proof}

  \subsection{Running Time}
  \label{sec:twUpper:general:runtime}
  As the last step we prove the bound on the running time
  of the dynamic program.

  \begin{claim}
    \label{clm:twUpper:dpRuntime}
    For a fixed graph $H$, the above algorithm runs in time
    $2^{2^{\Oh(\tw \log \tw)}} \cdot \poly(n)$
    to fill the table $A$.
  \end{claim}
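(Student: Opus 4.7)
The plan is to separately bound (i) the number of table entries $A[t_0,k_0,D_0,\ell_0,f_0]$ stored per node and (ii) the time spent filling a single entry, and then to sum over the $\Oh(n)$ nodes of the nice tree decomposition. The dominating contribution will come from the join nodes, where we iterate over pairs of compatible children-states; elsewhere, a single lookup plus at most $\tw^{\Oh(\tw)}$ work per type suffices.

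\textbf{Counting table entries at one node.} There are at most $n+1$ choices for $k_0$, at most $n+1$ choices for $\ell_0$, and at most $2^{\tw+1}$ choices for $D_0$. The crucial factor is the number of possible functions $f_0 \colon \Types_t(D_0) \to \numbZ{R}\cup\{\nopack\}$, where $R=(\tw+1)\cdot|V(H)|$. By \cref{lem:tw:upper:numberOfStates}, $|\Types_t(D_0)| \le \tw^{\Oh(\tw)}$ for fixed $H$, and the image has size at most $R+2 = \Oh(\tw)$. Hence the number of functions $f_0$ is at most
\[
(R+2)^{|\Types_t(D_0)|} \le \tw^{\tw^{\Oh(\tw)}} = 2^{\tw^{\Oh(\tw)}\cdot \log \tw} = 2^{2^{\Oh(\tw\log\tw)}}.
\]
Multiplying everything together, the number of entries per node is $2^{2^{\Oh(\tw\log\tw)}}\cdot \poly(n)$. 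Note that the \emph{a priori} range of $f_0$ would only give a bound of $\ell^{\tw^{\Oh(\tw)}}$, which depends on $n$; it is precisely the combinatorial bound of \cref{lem:twUpper:packingNumbersAreNotTooDifferent}, reflected in the definition of the table by restricting $f_0$ to values in $\numbZ{R}\cup\{\nopack\}$, that keeps the base of the outer exponent in $\Oh(\tw)$.

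\textbf{Time per entry.} For leaf, introduce, and forget nodes the recursion involves: building the auxiliary function $f'$ (respectively $f_1,f_2$) type-by-type, checking a consistency condition across the at most $\Oh(\tw^2)$ extensions $\extend{T'}{v}{i}$ of each type, computing the shift $r_{\min}$, and performing a single table lookup. All of this takes $|\Types_t(D_0)|\cdot\poly(\tw,|V(H)|) = 2^{\Oh(\tw\log\tw)}$ time. For the join node the formula in \eqref{eq:twUpper:join:entry} iterates over triples $(k_1,k_2)$ with $k_1+k_2=k_0$ (at most $n$ options) and over pairs $(\ell_1,f_1)$ and $(\ell_2,f_2)$ compatible with $(\ell_0,f_0)$. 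Even using the trivial bound that there are $2^{2^{\Oh(\tw\log\tw)}}$ choices for each of $f_1,f_2$ and $\poly(n)$ for each of $\ell_1,\ell_2$, each join entry can be computed in $(2^{2^{\Oh(\tw\log\tw)}})^2\cdot\poly(n) = 2^{2^{\Oh(\tw\log\tw)}}\cdot\poly(n)$ time; compatibility can be verified in $\tw^{\Oh(\tw)}$ time per pair by scanning over types.

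\textbf{Final accounting and the main obstacle.} Combining both bounds and summing over the $\Oh(n)$ nodes of the nice tree decomposition yields a total running time of
\[
\Oh(n)\cdot \bigl(2^{2^{\Oh(\tw\log\tw)}}\cdot\poly(n)\bigr)\cdot \bigl(2^{2^{\Oh(\tw\log\tw)}}\cdot\poly(n)\bigr) = 2^{2^{\Oh(\tw\log\tw)}}\cdot\poly(n),
\]
and the time to compute a nice tree decomposition of optimal width is absorbed in this bound. The main obstacle, and the reason the bound is not trivial, is that a naive enumeration of functions $f_0$ would yield $\ell^{\,\tw^{\Oh(\tw)}}$ possibilities and thus a running time depending on $n$ in the outer tower; carrying the bound down to $2^{2^{\Oh(\tw\log\tw)}}$ requires the crucial range-restriction granted by \cref{lem:twUpper:packingNumbersAreNotTooDifferent}, which is built into the very definition of the table and lets us truncate the image of $f_0$ at $R=\Oh(\tw)$ without losing any relevant configuration.
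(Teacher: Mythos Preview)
Your proof is correct and follows essentially the same approach as the paper: bound the number of table entries per node via $|\Types_t(D_0)|\le \tw^{\Oh(\tw)}$ and the $\Oh(\tw)$-sized image of $f_0$, observe that the join node dominates, and square the per-node state count for the join recursion. Your added emphasis on why the range restriction of $f_0$ (via \cref{lem:twUpper:packingNumbersAreNotTooDifferent}) is essential is a helpful clarification, but the structure and the calculations match the paper's proof.
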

  \begin{claimproof}
    By \cref{lem:tw:upper:numberOfStates},
    for each node and each subset of the bag there are at most
    $\tau \deff \tw^{\Oh(\tw \poly\abs{H})}$
    different types.
    Hence, the size of the table $A$ is bounded by
    \[
      \Oh(n \cdot \tw) \cdot k \cdot 2^{\tw+1} \cdot (\ell+1) \cdot (R +1)^{\tau}
      \le 2^{\tau \cdot \Oh(\log (R+1))} \cdot \poly(n)
      .
    \]
    Now we analyze the time it takes to compute a single table entry.
    Observe, that is suffices to check the join nodes,
    as this running time dominates the running time for the other node types.
    Using a brute-force approach to compute the entry
    and a naive bound for the running time yields
    \[
      k^2 \cdot ((R+1)^{\tau})^2 \cdot \tau^2
      \le 2^{3 \cdot \tau \cdot \log (R+1)} \cdot \poly(n)
      .
    \]
    Combined with the first observation from above,
    we get that the algorithm has the following runtime:
    \begin{align*}
      \left( 2^{3 \cdot \tau \cdot \log(R+1)} \cdot \poly(n) \right)
        \cdot \left( 2^{\tau \cdot \log(R+1)} \cdot \poly(n) \right)
      &
      \le 2^{4 \cdot \tau \cdot \log(R+1)} \cdot \poly(n)
    \end{align*}
    Note, that this is at most
    $2^{2^{\Oh(\tw \cdot \log\tw \cdot \poly\abs{H})}} \cdot \poly(n)$.
    The claimed running time follows, since $H$ is a fixed graph.
  \end{claimproof}

  The proof of \cref{thm:twUpper:arbitrary}
  now follows from \cref{clm:tw:upper:dpCorrectness,clm:twUpper:dpRuntime}.

\section{\texorpdfstring%
{\boldmath \CycleUndelHitPack Parameterized by Treewidth:\\
Double-Exponential Algorithm}
{Cycle-HitPack Parameterized by Treewidth: Double-Exponential Algorithm}}
\label{sec:twUpper:cycle}

In this section we extend the result from \cref{sec:twUpper}
to the case where we want to pack cycles of arbitrary length
(at least length $3$).
Formally, we prove \cref{thm:twUpper:cycle}.

\thmcycletwUB*\label{\thisthm}

The underlying idea for this algorithm is the same as for the other algorithms
for cliques and finite graphs. However, as we are not packing a fixed, single
finite graph $H$ but rather an infinitely sized family of graphs, we need a
different way to represent the intersection of a partial packing with the
vertices and edges from a bag. For this description, we heavily exploit that we
are packing cycles and not some other family of graphs that could be
unstructured or significantly more complex to describe, for example, the family
of planar graphs.

The crucial observation is that every partial cycle-packing consists of a
collection of cycles and paths. Hence, the vertices in a bag (and actually all
other vertices) can be grouped into three groups, which can intuitively be
described as follows.
\begin{enumerate}
  \item
  The group of vertices that are not part of any cycle or path
  (this includes the deleted vertices).
  \item
  The group of vertices that are the endpoints of a path
  that is later extended to a cycle.
  \item
  The group of vertices that are contained in a cycle
  or contained in a path but not as their endpoints.
\end{enumerate}

As a next step, we first formalize this idea by introducing some notation. Since
the final algorithm operates on a tree decomposition, we define the notation
only in this context.
\begin{definition}[Partial Cycle-Packing]
  \label{def:twUpper:cycle:partialPack}
  Let $G$ be a graph and let $t$ be a node of its tree decomposition.

  A \emph{partial cycle-packing} for $G_t$ (or just $t$ for short)
  is a set $\packs$
  of vertex disjoint paths and cycles in $G_t$
  such that for every path $P$ in $\packs$,
  the endpoints of $P$ are contained in the bag $X_t$.
\end{definition}

Depending on how a partial cycle-packing interacts with the vertices in a bag,
we define different types to classify the different cycle-packings.
See \cref{fig:twUpper:cycle:partial} for an illustration
of an example for a type.

\begin{definition}[Types and $D$-avoidance]
  Let $G$ be a graph and let $t$ be a node of its tree decomposition.
  A \emph{type} $T$ for $t$ is defined as a tuple $T=(Y_0, Y_1, Y_2, M)$
  where $Y_0, Y_1, Y_2$ form a partition of $X_t$
  and $M$ is a perfect matching for the vertices in $Y_1$.

  Let $\packs$ be a partial cycle-packing for $G_t$.
  We set
  \begin{itemize}
    \item
    $Y_0 \deff \{ v \in V(\packs) \cap X_t \mid \deg_{\packs}(v) = 0 \}$
    as the set of all vertices in $X_t$
    that are not part of any path or cycle in $\packs$,
    \item
    $Y_1 \deff \{ v \in V(\packs) \cap X_t \mid \deg_{\packs}(v) = 1 \}$
    as the set of all vertices in $X_t$
    that are endpoints of a path in $\packs$,
    \item
    $Y_2 \deff \{ v \in V(\packs) \cap X_t \mid \deg_{\packs}(v) = 2 \}$
    as the set of all vertices in $X_t$
    that are either in a path (but not as endpoints) or in a cycle,
    and
    \item
    $M \deff \{ \{u,v\} \in Y_1 \times Y_1 \mid u \neq v, \exists P \in \packs: u,v \in P \}$
    as the perfect matching on $Y_1$
    describing which endpoints appear in the same path of the partial packing.
  \end{itemize}
  Then we say that \emph{$\packs$ is of type $T = (Y_0, Y_1, Y_2, M)$ for $t$}.
  If for some set $D \subseteq X_t$ it holds that $D \subseteq Y_0$,
  then we say that \emph{type $T$ is $D$-avoiding}.
  We denote the set of all $D$-avoiding types for $t$ by $\Types_t(D)$.
\end{definition}

When considering partial cycle-packings of a certain type,
we want to formally describe the number of cycles that potentially could be packed
in the remaining graph while preserving this type.
We formalize this by the $T$-completion number.

\begin{definition}[$T$-completion number]
  Let $G$ be a graph and let $t$ be a node of its tree decomposition.

  For a type $T$ for $t$,
  we define the \emph{$T$-completion number of $G_t$},
  denoted by $\CyComp(T)$,
  as the maximum number of \emph{complete} cycles in $\packs$
  over all possible cycle-packings $\packs$ of type $T$ for $G_t$.
  The number of partial cycles, i.e., paths, is not included in this count.
\end{definition}

Intuitively the completion number of different types can vary
and for some types its ``low'' while for others its ``high''.
Our next step is to define what we mean by a ``low'' completion number
which then transfers to the corresponding type.

\begin{definition}[Low Types]
  Let $G$ be a graph and let $t$ be a node of its tree decomposition.
  Consider two $D$-avoiding types $T, T' \in \Types_t(D)$ for $t$
  such that (1) type $\hat T$ maximizes the $\hat T$-completion number
  and (2) type $T$ is such that $\CyComp(T) < \CyComp(\hat T) - 2(\tw+1)$.
  Then we say that \emph{$T$ is low (for $t$)}.
\end{definition}

We first prove that low types are not relevant
for cycle-packings of maximum size.

\begin{lemma}
  Let $G$ be a graph and let $t$ be a node of its tree decomposition.
  Consider a type $T$ for $t$ that is low.
  Then, there is no \emph{maximum} cycle-packing $\packs$ for $G$
  such that $\packs$ has type $T$ restricted to $G_t$.
\end{lemma}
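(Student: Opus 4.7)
The plan is to argue by contradiction: suppose $\packs$ is a maximum cycle-packing of $G$ whose restriction to $G_t$ has the low type $T$. I will construct a strictly larger cycle-packing of $G$, contradicting maximality. The guiding idea is that $X_t$ is a separator, so the inside and outside contributions to $\packs$ decouple up to the at most $\tw+1$ cycles that literally cross $X_t$, and the inside contribution can then be replaced by one whose complete-cycle count matches the much better value $\CyComp(\hat T)$.

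First I will dissect $\packs$ according to how its cycles meet the bag: let $\alpha$ count the cycles of $\packs$ lying entirely in $V_t \setminus X_t$, let $\beta$ count those lying entirely in $V \setminus V_t$, and let $\gamma$ count those touching at least one vertex of $X_t$. Vertex-disjointness forces $\gamma \le \tw+1$. Moreover, each of the $\alpha$ cycles is a complete cycle of the restriction $\packs \cap V_t$, which is a partial cycle-packing of type $T$, so $\alpha \le \CyComp(T)$. Combined, these give $|\packs| = \alpha + \beta + \gamma \le \CyComp(T) + \beta + (\tw+1)$.

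Next I will show that the ``empty'' type $T_\emptyset := (X_t, \emptyset, \emptyset, \emptyset)$, in which no bag vertex is used, satisfies $\CyComp(T_\emptyset) \ge \CyComp(\hat T) - (\tw+1)$. Starting from a partial packing of type $\hat T$ with $\CyComp(\hat T)$ complete cycles, I discard every path (paths do not count as complete cycles anyway) and every cycle that uses a vertex of $X_t$; the latter cycles are vertex-disjoint and each occupies a distinct bag vertex, so at most $\tw+1$ cycles are lost. What remains is a collection of vertex-disjoint cycles inside $V_t \setminus X_t$, which is precisely a partial packing of type $T_\emptyset$. Combining this with the low-type inequality $\CyComp(\hat T) > \CyComp(T) + 2(\tw+1)$ yields $\CyComp(T_\emptyset) \ge \CyComp(T) + (\tw+1) + 1$.

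Finally, I will take a partial packing $\Ss$ of $G_t$ of type $T_\emptyset$ attaining $\CyComp(T_\emptyset)$ complete cycles (since $Y_1 = \emptyset$ in $T_\emptyset$ no paths are allowed, so $\Ss$ consists entirely of vertex-disjoint cycles inside $V_t \setminus X_t$) and glue it together with the $\beta$ cycles of $\packs$ living entirely in $V \setminus V_t$. Since $\Ss$ avoids both $X_t$ and $V \setminus V_t$ while the outside cycles avoid $V_t$, the two families are automatically vertex-disjoint, and their union is a valid cycle-packing of $G$ of size $\CyComp(T_\emptyset) + \beta \ge \CyComp(T) + (\tw+1) + 1 + \beta > |\packs|$, contradicting maximality. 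The main technical point—where the constant $2(\tw+1)$ in the definition of a low type is used in an essentially tight way—is that the slack must absorb two separate $(\tw+1)$-losses: one coming from the $\gamma$ cycles of $\packs$ that cross $X_t$, and one coming from converting a $\hat T$-packing into a $T_\emptyset$-packing in order to make it compatible with an arbitrary outside. With slack only $\tw+1$ the argument would not close.
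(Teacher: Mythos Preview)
Your proof is correct and follows essentially the same approach as the paper: both arguments pass through the empty type $T_\emptyset=(X_t,\emptyset,\emptyset,\emptyset)$, establish $\CyComp(T_\emptyset)\ge\CyComp(\hat T)-(\tw+1)$ by stripping bag-touching cycles from an optimal $\hat T$-packing, and then replace the inside of $\packs$ by a $T_\emptyset$-packing while keeping the cycles of $\packs$ that live entirely outside $V_t$. Your explicit $\alpha,\beta,\gamma$ bookkeeping is in fact a slightly cleaner packaging of the same inequalities the paper chains together.
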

\begin{proof}
  Consider a maximum cycle-packing $\packs$ for the entire graph $G$.
  Assume for contradiction's sake
  that for some node $t$ of the tree decomposition
  the type of $\packs$ restricted to $G_t$ is $T$
  and $T$ is low for $t$.
  We claim that $\CyComp(T) < \CyComp(T_\emptyset) - (\tw+1)$
  where $T_\emptyset = (X_t, \emptyset, \emptyset, \emptyset)$
  is the type for $t$
  where no vertex of the bag is contained in the packing.

  First, let $\hat T$ be a type for $t$ proving that $T$ is low,
  that is, type $\hat T$ maximizes the $\hat T$-completion number
  and let $\hat \packs$ be a corresponding cycle-packing.
  Removing all paths and cycles from $\hat \packs$ that intersect $X_t$,
  yields a (partial) cycle-packing of type $T_\emptyset$.
  Because of the removed paths and cycles
  we get $\CyComp(T_\emptyset) \ge \CyComp(\hat T) - (\tw+1)$.
  Hence, from
  \[
    \CyComp(T) < \CyComp(\hat T) -2(\tw+1) \le \CyComp(T_\emptyset)- (\tw+1)
  \]
  the claim follows.

  Now we move back to proving the statement. 
  We define $\packs_0$ as the restriction of $\packs$ to the graph $G_t$
  and $\bar \packs$ as the restriction of $\packs$ to the graph $G-G_t$
  where we additionally remove all paths.
  For convenience, we denote the type of $\bar \packs$ for $G-G_t$ by $\bar T$.
  Additionally,
  consider a (partial) cycle-packing $\packs_\emptyset$ of type $T_\emptyset$
  containing $\CyComp(T_\emptyset)$ complete cycles.

  With these definitions at hand,
  we define a new packing $\packs'$ as the union
  of $\packs_\emptyset$ and $\bar \packs$.
  Clearly $\packs'$ is a cycle-packing for the graph $G$
  as the packing $\packs_\emptyset$ does not cover vertices from the bag.
  Next we analyze the relation between the size of $\packs$ and $\packs'$.
  From the above constructions and assumptions we get
  \begin{align*}
    \abs{\packs}
      &\le \CyComp(T_0) + \abs{\bar \packs} + (\tw+1) \\
      &< \CyComp(T_\emptyset) - (\tw+1) + \CyComp(\bar T) + (\tw+1) \\
      &= \CyComp(T_\emptyset) + \CyComp(\bar T) \\
      &= \abs{\packs'}
  \end{align*}
  which proves that $\packs'$ contains more cycles than $\packs$.
  This contradicts our assumption that $\packs$ is a maximum cycle-packing,
  no node of the decomposition has a type that is low.
\end{proof}

Since we are actually solving the \CycleUndelHitPack problem,
we are not interested in just packing cycles
but rather in deleting vertices such that
we cannot pack to many cycles in the remaining graph.
Therefore, it is necessary to argue about all partial cycle-packings
for some graph $G_t$ at the same time.
This idea is formalized by the concept of classes.

\begin{definition}[Classes]
  Let $G$ be a graph and let $t$ be a node of its tree decomposition.

  Moreover, let $k_0 \in \numbZ{\abs{V(G)}}$ be a non-negative integer,
  $D_0$ a subset of vertices from $X_t$,
  $\ell_0  \in \numbZ{\abs{V(G)}}$ be a non-negative integer,
  and $f_0 \from \Types_t(D_0) \to \CodomCyc$.

  We say that \emph{a set $D \subseteq V_t \setminus U$
  is of class $c_0 = (k_0, D_0, \ell_0, f_0)$ for $t$}
  if
  \begin{itemize}
    \item
    $\abs{D \setminus D_0} = k_0$ and $D \cap X_t = D_0$,
    \item
    there is a type $T \in \Types_t(D_0)$ with $f_0(T) = 0$,
    and
    \item
    for all types $T \in \Types_t(D_0)$,
    \begin{itemize}
      \item
      if there is no partial cycle-packing of type $T$ for $t$,
      then $f_0(T) = \nopack$,
      \item
      if $T$ is low for $t$
      or, for every partial packing $\packs$ of type $T$ for $t$,
      there is a descendant $t'$ of $t$
      such that the type of $\packs$ for $t'$ is low,
      then $f_0(T) = \seopack$,
      or
      \item
      otherwise if
      the $T$-completion number for $G_t$ is exactly $\ell_0 - a$
      for some $a \in \numbZ{2\tw+2}$,
      then $f_0(T) = a$.
    \end{itemize}
  \end{itemize}
  To simplify notation we assume that $\nopack$ and $\seopack$
  are invariant under arithmetic operations
  where $\nopack$ takes precedence over $\seopack$,
  that is, $\seopack + \nopack = \nopack$.
\end{definition}

\begin{figure}[t]
  \centering
  \begin{subfigure}[b]{.48\textwidth}
    \centering
    \includegraphics[page=1,width=\textwidth]{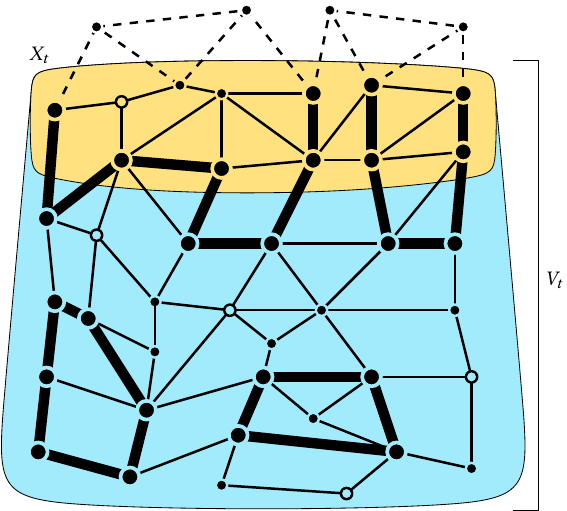}
    \caption{%
      An example for a partial cycle-packing for $G_t$.
      \\
      The bold edges correspond to the edges in the partial packing.
    }
  \end{subfigure}%
  \hfill
  \begin{subfigure}[b]{.48\textwidth}
    \centering
    \includegraphics[page=3,width=\textwidth]{img/partial-cycle}
    \caption{%
      A graphical representation of the type
      $(Y_0, Y_1, Y_2, M)$
      of the depicted partial packing.
      \\
      The small group of the three round vertices at the top middle corresponds to $Y_0$,
      the vertices at the bottom represent $Y_2$,
      and the remaining vertices form $Y_1$.
      The matching $M$ between the vertices in $Y_1$
      is represented by the dashed edges.
    }
  \end{subfigure}
  \caption{
    An illustration for a node $t$
    of how a partial cycle-packing and its type relate to each other.
    \\
    The black vertices and edges correspond to the edges and vertices of $G$.
    The deleted vertices are indicated by hollow dots.
  }
  \label{fig:twUpper:cycle:partial}
\end{figure}

Before stating the algorithm, we provide a bound for the number of $D$-avoiding types and classes for a fixed node $t$.
These bounds follow directly from the definition of the two concepts.
\begin{lemma}
  \label{lem:twUpper:cycle:numberOfTypesAndClasses}
  Let $G$ be a graph
  and let $t$ be a node of its tree decomposition of width $\tw$.

  For a set $D \subseteq X_t$, the number of $D$-avoiding types for $t$,
  i.e., the size of $\Types_t(D)$, is at most $\tau \in 2^{\Oh(\tw \log \tw)}$.

  The number of classes for $t$ is at most
  $2^{2^{\Oh(\tw \log \tw)}} \cdot n^2$.
\end{lemma}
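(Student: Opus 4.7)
My plan is to prove the two bounds separately, in each case by a straightforward counting argument directly from the definitions of type and class.

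For the type count, a $D$-avoiding type is a tuple $(Y_0, Y_1, Y_2, M)$ where $(Y_0, Y_1, Y_2)$ is an ordered partition of $X_t$ (with the extra constraint $D \subseteq Y_0$, which only decreases the count) and $M$ is a perfect matching on $Y_1$. First I would count the ordered tripartitions of $X_t$: since $|X_t| \le \tw+1$, there are at most $3^{\tw+1}$ such partitions. Then, for each fixed $Y_1$ with $|Y_1| \le \tw+1$, the number of perfect matchings on $Y_1$ is at most $(|Y_1|-1)!! \le (\tw+1)^{(\tw+1)/2}$, which is $2^{\Oh(\tw \log \tw)}$. Multiplying the two contributions yields $\tau \le 3^{\tw+1} \cdot 2^{\Oh(\tw \log \tw)} = 2^{\Oh(\tw \log \tw)}$.

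For the class count, a class is a tuple $(k_0, D_0, \ell_0, f_0)$ where $k_0, \ell_0$ are non-negative integers bounded by $n$, $D_0 \subseteq X_t$, and $f_0 \from \Types_t(D_0) \to \CodomCyc$ with $\CodomCyc = \numbZ{2\tw+2} \cup \{\nopack, \seopack\}$. I would bound each component in turn: at most $n+1$ choices for $k_0$, at most $n+1$ for $\ell_0$, at most $2^{\tw+1}$ for $D_0$, and for $f_0$ one combines the codomain size $|\CodomCyc| = 2\tw+4$ with the bound on $|\Types_t(D_0)| \le \tau$ to get at most $(2\tw+4)^\tau$ functions. The key computation is
\[
(2\tw+4)^\tau \;=\; 2^{\tau \cdot \Oh(\log \tw)} \;=\; 2^{2^{\Oh(\tw \log \tw)} \cdot \Oh(\log \tw)} \;=\; 2^{2^{\Oh(\tw \log \tw)}},
\]
where the last equality absorbs the factor $\Oh(\log \tw)$ into the exponent of $\tau$. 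Combining all factors gives $n^2 \cdot 2^{\Oh(\tw)} \cdot 2^{2^{\Oh(\tw \log \tw)}} \le n^2 \cdot 2^{2^{\Oh(\tw \log \tw)}}$, as claimed.

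There is no real obstacle here: both bounds are pure enumeration following the definitions, so the proof is essentially a calculation. The only subtlety worth mentioning explicitly is that the condition ``$f_0(T)=0$ for some $T$'' appearing in the definition of class is a restriction, not an extension, so it can only decrease the count and may be safely dropped for the upper bound. I would also remark that the constraint $D \subseteq Y_0$ imposed by $D$-avoidance is likewise a restriction on $D$-avoiding types, so the bound $\tau \le 2^{\Oh(\tw \log \tw)}$ applies uniformly to $|\Types_t(D_0)|$ for every $D_0$.
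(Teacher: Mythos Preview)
Your proposal is correct and follows essentially the same counting argument as the paper's proof: bound the tripartitions by $3^{\tw+1}$, the matchings by a factorial-type expression that is $2^{\Oh(\tw\log\tw)}$, and then multiply out the four components of a class. One minor arithmetic slip: $|\CodomCyc| = |\numbZ{2\tw+2}| + 2 = (2\tw+3)+2 = 2\tw+5$, not $2\tw+4$, but this of course does not affect the asymptotics.
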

\begin{proof}
  We start by bounding the number of $D$-avoiding types for a fixed set $D$
  and then use this result to bound the number of classes.

  The number of partitions of $X_t$ into the three sets $Y_0$, $Y_1$, and $Y_1$
  is bounded by $3^{\tw+1}$.
  The choices for the set $M$ which contains a perfect matching for $Y_1$,
  is bounded by the number of permutations on $\abs{Y_1}$ elements.
  Hence, the number of types is at most
  $\tau \deff 3^{\tw+1} \cdot (\tw+1)^{\tw+1} = 2^{\Oh(\tw \log \tw)}$.

  To bound the number of classes $(k_0, D_0, \ell_0, f_0)$ for $t$,
  observe that there are at most $n+1$ choices for $k_0$,
  exactly $2^{\tw+1}$ choices for the set $D_0$,
  at most $n+1$ choices for the integer $\ell_0$,
  and at most $(2(\tw+1)+3)^{\tau}$ choices for the function $f_0$.
  Combining this yields that there are at most
  \[
    n \cdot 2^{\tw+1} \cdot n \cdot (2\tw+5)^{2^{\Oh(\tw \log \tw)}}
    = 2^{2^{\Oh(\tw \log \tw)}} \cdot n^2
  \]
  different classes for a node $t$.
\end{proof}

\subsection{Auxiliary procedures}

The algorithm for \CycleUndelHitPack is a dynamic program
based on a given tree decomposition.
We state the program in a bottom-up manner.
For this we define five different procedures
each of which compute the value for the parent node
based on the values for the child nodes.
To simplify the procedures we make use of introduce edge nodes
and assume that each edge is introduced exactly once.

\begin{description}
  \item[\IntroVtxNodeDel.]
  The introduced vertex is deleted and
  thus, included in the part of the partition
  corresponding to uncovered vertices.
  See \cref{lem:twUpper:cycle:introVtxDel}.
  \item[\IntroVtxNodeNotDel.]
  The introduced vertex is not deleted.
  Since we are using introduce edge nodes, the vertex is isolated and
  must be contained in the set corresponding to uncovered vertices
  for all types.
  See \cref{lem:twUpper:cycle:introVtxNotDel}.

  \item[\IntroEdgeNodeDel.]
  The endpoints of the introduced edge are deleted
  and thus, the edge cannot be included in any packing.
  See \cref{lem:twUpper:cycle:introEdgeDel}.
  \item[\IntroEdgeNodeNotDel.]
  Both endpoints of the introduced edge are not deleted.
  Thus, this edge can be included in the possible packings
  but it can also not be included in a packing.
  See \cref{lem:twUpper:cycle:introEdgeNotDel}.

  \item[\ForgetNodeDel.]
  The vertex forgotten at the node is deleted.
  By forgetting this vertex we have to change the budget
  for the number of vertices being deleted.
  See \cref{lem:twUpper:cycle:forgetDel}.
  \item[\ForgetNodeNotDel.]
  The vertex forgotten at the node is not deleted.
  As the new class forgets whether the vertex was used in a packing
  or remained uncovered, two types for the child
  combine into one type for the parent.
  See \cref{lem:twUpper:cycle:forgetNotDel}.

  \item[\JoinNode.]
  The two classes for the two child nodes are combined into a new class
  for the parent where the existing paths might be joined to paths or cycles.
  See \cref{lem:twUpper:cycle:join}.
\end{description}

We start with the introduce vertex nodes where we have two different cases.
The first case corresponds to the choice that $v$ is deleted.

\begin{lemma}[Introduce Vertex Node With a Deleted Vertex]
  \label{lem:twUpper:cycle:introVtxDel}
  Let $t_0$ be an introduce vertex node with child $t_1$
  and let $v \in V_{t_0} \setminus U$ be the vertex introduced.

  There is a procedure \IntroVtxNodeDel that,
  for a given class $c_1=(k_1,D_1,\ell_1,f_1)$ for $t_1$,
  computes in time $\Oh(\tau)$
  a class $c_0 = (k_1, D_1 \cup \{v\}, \ell_1, f_0)$ for $t_0$
  such that the following holds:
  For all sets $D \subseteq V_{t_1}\setminus U$,
  if $D$ is of class $c_1$ for $t_1$,
  then $D\cup\{v\}$ is of class $c_0$ for $t_0$.
\end{lemma}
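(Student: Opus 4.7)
Since the introduced vertex $v$ is deleted, it cannot participate in any partial cycle-packing for $G_{t_0}$. Moreover, as $t_0$ is an introduce vertex node, $v$ is isolated in $G_{t_0}$ (all incident edges will be added later by introduce edge nodes), so $G_{t_0} - (D \cup \{v\}) = G_{t_1} - D$ for every $D \subseteq V_{t_1} \setminus U$ of class $c_1$. The plan is to exploit this identity to transfer all information from $c_1$ to $c_0$ via a natural bijection between the relevant types.

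First, I would establish a canonical bijection $\Phi \from \Types_{t_1}(D_1) \to \Types_{t_0}(D_1 \cup \{v\})$ by sending $(Y_0, Y_1, Y_2, M)$ to $(Y_0 \cup \{v\}, Y_1, Y_2, M)$. This is well defined (every $(D_1 \cup \{v\})$-avoiding type must place $v$ in $Y_0$, since $v$ is isolated in $G_{t_0}$ and hence cannot have degree $1$ or $2$ in any partial packing) and is clearly a bijection. Using $\Phi$, define
\[
f_0(T) \deff f_1(\Phi^{-1}(T)) \qquad \text{for every } T \in \Types_{t_0}(D_1 \cup \{v\}).
\]
Constructing $f_0$ takes time $\Oh(\tau)$ by \cref{lem:twUpper:cycle:numberOfTypesAndClasses}, and $c_0 = (k_1, D_1 \cup \{v\}, \ell_1, f_0)$ has the required form ($D \cup \{v\}$ has the same cardinality difference from $D_1 \cup \{v\}$ as $D$ has from $D_1$, namely $k_1$).

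Next, I would verify that $D \cup \{v\}$ is of class $c_0$ for $t_0$ whenever $D$ is of class $c_1$ for $t_1$. The key observation is that for every $T \in \Types_{t_0}(D_1 \cup \{v\})$, partial cycle-packings of type $T$ for $G_{t_0} - (D \cup \{v\})$ are in bijection with partial cycle-packings of type $\Phi^{-1}(T)$ for $G_{t_1} - D$: the graphs are identical and, because $v$ lies in $Y_0$ in both types, the packings are combinatorially the same object. Consequently: (i) a partial packing of type $T$ exists for $t_0$ iff one of type $\Phi^{-1}(T)$ exists for $t_1$, handling the $\nopack$ case; (ii) the $T$-completion number of $G_{t_0} - (D \cup \{v\})$ equals the $\Phi^{-1}(T)$-completion number of $G_{t_1} - D$, which takes care of the numeric values $a \in \numbZ{2\tw+2}$; (iii) a type $T$ is low for $t_0$ iff $\Phi^{-1}(T)$ is low for $t_1$, since the maximum completion number and all individual completion numbers are preserved by $\Phi$. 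Finally, for the ``descendant of $t_0$'' condition defining $\seopack$, any descendant of $t_0$ other than $t_0$ itself is also a descendant of $t_1$, and the types for $t_0$ and $t_1$ at such descendants coincide (the same partial packing is restricted to the same graph), so the $\seopack$ marker transfers correctly.

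The only subtlety is checking that $c_0$ satisfies the requirement ``there is a type $T \in \Types_{t_0}(D_1 \cup \{v\})$ with $f_0(T) = 0$'': this follows immediately from the analogous property of $c_1$ together with the definition of $f_0$ via $\Phi$. All of the above observations are straightforward once the identity $G_{t_0} - (D \cup \{v\}) = G_{t_1} - D$ is noted, so I do not anticipate any real obstacle; the only care needed is in bookkeeping the $\seopack$ case across descendants, which I would handle by noting that the set of descendant nodes and the induced subgraphs are the same for $t_0$ and $t_1$ strictly below the introduction of $v$.
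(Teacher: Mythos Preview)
Your proposal is correct and follows essentially the same approach as the paper: define $f_0$ via the natural bijection $T=(Y_0,Y_1,Y_2,M)\mapsto(Y_0\cup\{v\},Y_1,Y_2,M)$ and use that $G_{t_0}-(D\cup\{v\})=G_{t_1}-D$. The paper's proof is considerably terser (it does not explicitly verify the $\nopack$/$\seopack$/numeric cases or the descendant condition), so your more careful bookkeeping is if anything an improvement in rigor, but the underlying idea is identical.
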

\begin{proof}
  We define the function $f_0$ as follows.
  For all types $T \in \Types_{t_1}(D_1)$
  with $T = (Y_0, Y_1, Y_2, M)$
  we set
  \[
    f_0(Y_0\cup\{v\}, Y_1, Y_2, M) \deff f_1(T)
    .
  \]

  Since vertex $v$ does not appear in the graph $G_{t_1}$,
  and is deleted from $G_{t_0}$ by considering $D_0=D_1\cup\{v\}$,
  the graphs $G_{t_1}-D_1$ and $G_{t_0}-D_0$ are identical.
  Hence, if the set $D$ is of class $c_1$ for $t_1$,
  then $D\cup\{v\}$ is of class $c_0$ for $t_0$.
\end{proof}

The second case for the introduce vertex node corresponds to the case
when the vertex is not deleted.

\begin{lemma}[Introduce Vertex Node With a Usable Vertex]
  \label{lem:twUpper:cycle:introVtxNotDel}
  Let $t_0$ be an introduce vertex node with child $t_1$
  and let $v \in V_{t_0}$ be the vertex introduced.

  There is a procedure \IntroVtxNodeNotDel that,
  for a given class $c_1=(k_1,D_1,\ell_1,f_1)$ for $t_1$,
  computes in time $\Oh(\tau)$ a class $c_0 = (k_1, D_1, \ell_1, f_0)$ for $t_0$
  such that the following holds:
  For all sets $D \subseteq V_{t_1}\setminus U$,
  if $D$ is of class $c_1$ for $t_1$,
  then $D$ is of class $c_0$ for $t_0$.
\end{lemma}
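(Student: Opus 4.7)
The plan is to define $f_0$ directly from $f_1$ by exploiting that, because we work with introduce edge nodes, the newly introduced vertex $v$ is isolated in $G_{t_0}$ and hence cannot participate in any path or cycle of a partial cycle-packing for $t_0$. Formally, for every type $T_0 = (Y_0, Y_1, Y_2, M) \in \Types_{t_0}(D_1)$, I set
\[
  f_0(T_0) \deff
  \begin{cases}
    f_1(Y_0\setminus\{v\}, Y_1, Y_2, M) & \text{if } v \in Y_0, \\
    \nopack & \text{otherwise.}
  \end{cases}
\]
Note that $v\in X_{t_0}\setminus D_1$, so $v$ must lie in exactly one of $Y_0, Y_1, Y_2$ for any type in $\Types_{t_0}(D_1)$; the second branch encodes the fact that there is no partial cycle-packing placing the isolated vertex $v$ into $Y_1$ or $Y_2$.

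The verification then amounts to a natural bijection between $D_1$-avoiding types for $t_0$ with $v\in Y_0$ and all $D_1$-avoiding types for $t_1$. First I would observe that $G_{t_0}-D_1$ is obtained from $G_{t_1}-D_1$ by adding the isolated vertex $v$, so partial cycle-packings $\packs$ for $t_1$ are in one-to-one correspondence with partial cycle-packings for $t_0$ whose type has $v\in Y_0$ (the correspondence simply leaves the underlying collection of paths and cycles unchanged, and adjusts the $Y_0$-component of the type by including $v$). This immediately yields that the $T_0$-completion number of $G_{t_0}$ equals the corresponding $T_1$-completion number of $G_{t_1}$, justifying the first branch of $f_0$ for the case where $T_1$ is neither missing nor low.

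Next, I would verify that the special statuses $\nopack$ and $\seopack$ propagate correctly. If no partial packing of type $T_1$ exists for $t_1$, then none exists for the corresponding $T_0$ at $t_0$ either, so $f_1(T_1)=\nopack$ forces $f_0(T_0)=\nopack$. For the $\seopack$ case, any descendant of $t_0$ is either $t_1$ or a descendant of $t_1$, so the condition on descendants defining $\seopack$ is preserved verbatim under the bijection above; in particular the low-ness of $T_0$ at $t_0$ coincides with low-ness of $T_1$ at $t_1$, because removing the isolated vertex $v$ changes neither the partition $Y_0,Y_1,Y_2$ of the packed/unpacked structure inside the graph nor the completion numbers. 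Finally, existence of some $T_0$ with $f_0(T_0)=0$ follows from the existence of such a $T_1$ by extending it with $v\in Y_0$, so the output is indeed a valid class.

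The hard part is mostly notational: carefully checking that the three sub-cases in the class definition (missing packing, low type, and numeric completion value) each translate correctly across the bijection $T_1\leftrightarrow T_0$. For the runtime, the procedure iterates once over each of the $\tau = 2^{\Oh(\tw\log\tw)}$ types in $\Types_{t_0}(D_1)$ and performs an $\Oh(1)$ lookup into $f_1$, yielding the claimed $\Oh(\tau)$ bound.
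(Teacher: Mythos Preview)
Your proposal is correct and takes essentially the same approach as the paper: both exploit that $v$ is isolated in $G_{t_0}$ (since edges are introduced separately), and both define $f_0$ via the natural bijection $T_1 \leftrightarrow T_0$ given by adjoining $v$ to $Y_0$. Your version is actually slightly more explicit than the paper's, which defines $f_0$ only on the image of this bijection (for $T\in\Types_{t_1}(D_1)$, set $f_0(Y_0\cup\{v\},Y_1,Y_2,M)=f_1(T)$) and leaves the case $v\notin Y_0$ implicit, whereas you spell out $f_0(T_0)=\nopack$ there and verify the propagation of the $\seopack$ status.
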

\begin{proof}
  Observe that $v$ is an isolated vertex in $G_{t_0}$.
  Hence, it especially does not have any neighbors
  in $G_{t_0}-X_{t_0}$.
  This implies that all partial solutions for $t_0$
  cannot include $v$ in a partial cycle-packing.
  Thus, all types for $t_0$ can be described by all types for $t_1$
  where we add $v$ to the set of uncovered vertices.

  Observe that the intuitive idea agrees with the setting
  from the previous case, where the vertex is deleted.
  However, since in the current case the introduced vertex
  might receive some edge at some point in the future,
  the sets of types are strictly speaking not identical
  and thus, both cases need to be handled separately.

  Based on this observation, we define the new function $f_0$ as follows.
  For all $T \in \Types_{t_1}(D_1)$
  with $T = (Y_0, Y_1, Y_2, M)$,
  we set
  \[
    f_0(Y_0 \cup \{v\}, Y_1, Y_2, M) \deff f_1(T)
    .
  \]

  With this definition it follows directly
  that every set $D\subseteq V_{t_1} \setminus U$ with class $c_1$ for $t_1$
  is also of class $c_0$ for $t_0$.
\end{proof}

Next we consider the introduce edge nodes
where we also provide two different procedures.
The first one corresponds to the case where the edge cannot be selected
in a packing because at least one of its endpoints is deleted.

\begin{lemma}[Introduce Edge Node With a Deleted Endpoint]
  \label{lem:twUpper:cycle:introEdgeDel}
  Let $t_0$ be an introduce edge node with child $t_1$
  and let $uv$ be the edge introduced.

  There is a procedure \IntroEdgeNodeDel that,
  for a given class $c_1 = (k_1, D_1, \ell_1, f_1)$ for $t_1$,
  computes in time $\Oh(\tau)$ a class $c_0 = (k_1, D_1, \ell_1, f_1)$ for $t_0$
  such that the following holds:
  For all sets $D \subseteq V_{t_1}\setminus U$
  with $D \cap \{u,v\} \neq \emptyset$,
  if $D$ is of class $c_1$ for $t_1$,
  then $D$ is of class $c_0$ for $t_0$.
\end{lemma}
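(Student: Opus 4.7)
The key observation is that an introduce edge node does not change the bag, i.e., $X_{t_0} = X_{t_1}$, so the relevant sets of types coincide: $\Types_{t_0}(D_1) = \Types_{t_1}(D_1)$. The only difference between $G_{t_0}$ and $G_{t_1}$ is the presence of the edge $uv$ (which is introduced exactly once in the nice tree decomposition). Hence, for any $D \subseteq V_{t_1} \setminus U$ with $D \cap \{u,v\} \neq \emptyset$, at least one endpoint of $uv$ is deleted, so the edge $uv$ is not present in $G_{t_0} - D$ either. This means the graphs $G_{t_0} - D$ and $G_{t_1} - D$ are literally the same graph.

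The plan is then to set $c_0 \deff c_1 = (k_1, D_1, \ell_1, f_1)$ and argue that $D$ is of class $c_0$ for $t_0$. The first two entries are immediate, as $\abs{D \setminus D_1} = k_1$ and $D \cap X_{t_0} = D \cap X_{t_1} = D_1$ still hold. For the functional entry, since $G_{t_0} - D = G_{t_1} - D$, the set of partial cycle-packings for $G_{t_0} - D$ of any type $T$ coincides with the corresponding set for $G_{t_1} - D$. Consequently, the $T$-completion numbers $\CyComp(T)$ agree, which implies that the ``no packing exists'' ($\nopack$) status, the ``low'' status, and the exact values $\ell_1 - a$ are all inherited unchanged from $f_1$ to $f_0$. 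The one subtle point worth checking is that the ``low at some descendant'' status also transfers: but since $t_1$ is a descendant of $t_0$ and all other descendants coincide, every partial packing witnesses lowness at the same descendant in both trees.

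Finally, the procedure merely returns the tuple $c_1$ unchanged (no computation on $f_1$ is needed), which takes time $\Oh(\tau)$ if one wants to formally copy the function table, or $\Oh(1)$ if the class is represented by reference. I do not anticipate any obstacle in this proof, as it is the easiest of the introduce/forget/join cases; the substantive work is handled by the companion lemma for the case when both endpoints of $uv$ are available, where the edge actually affects the matching $M$ and the partition $(Y_0, Y_1, Y_2)$ of the type.
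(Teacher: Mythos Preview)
Your proposal is correct and follows exactly the same approach as the paper: since at least one endpoint of $uv$ lies in $D$, the new edge cannot participate in any partial cycle-packing, so $G_{t_0}-D=G_{t_1}-D$ and the class is returned unchanged. The paper's proof is in fact just the two-sentence version of your argument; your additional remarks on the bag equality, the transfer of the $\nopack$/$\seopack$ statuses, and the descendant-lowness condition are all sound elaborations of the same idea.
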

\begin{proof}
  If (at least) one of the endpoints of the introduced edge is deleted,
  i.e., contained in the set $D_1$ and thus, $D$,
  the edge cannot be part of any (partial) cycle-packing.
  Hence, the class of any such set $D$ does not change.
\end{proof}

The next procedure handles the case when both endpoints of the introduced edge
are not deleted, meaning that the edge could be included in a packing.

\begin{lemma}[Introduce Edge Node With Usable Endpoints]
  \label{lem:twUpper:cycle:introEdgeNotDel}
  Let $t_0$ be an introduce edge node with child $t_1$
  and let $uv$ be the edge introduced.

  There is a procedure \IntroEdgeNodeNotDel that,
  for a given class $c_1 = (k_1, D_1, \ell_1, f_1)$ for $t_1$
  computes in time $\Oh(\tau)$ a class $c_0 = (k_1, D_1, \ell_0, f_0)$ for $t_0$
  such that the following holds:
  For all sets $D \subseteq V_{t_1}\setminus U$
  with $D \cap \{u,v\} = \emptyset$,
  if $D$ is of class $c_1$ for $t_1$,
  then $D$ is of class $c_0$ for $t_0$.
\end{lemma}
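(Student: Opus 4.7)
The plan is to compute $f_0$ and $\ell_0$ by analyzing how a partial cycle-packing in $G_{t_0}$ decomposes: either it does not use the newly introduced edge $uv$, in which case it is a partial cycle-packing in $G_{t_1}$ of the same type, or it does use the edge, in which case removing $uv$ yields a packing in $G_{t_1}$ whose type is determined by the effect that the edge would have when re-inserted. For each type $T_0 = (Y_0^{(0)}, Y_1^{(0)}, Y_2^{(0)}, M_0) \in \Types_{t_0}(D_1)$ I will collect all candidate predecessor types $T_1 \in \Types_{t_1}(D_1)$ together with a local offset $\delta \in \{0,1\}$ accounting for whether adding $uv$ closes an additional complete cycle, and then define $\CyComp_{t_0}(T_0)$ as the maximum of $\CyComp_{t_1}(T_1) + \delta$ over all such candidates.

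To make this concrete, for each $T_1 = (Y_0, Y_1, Y_2, M)$ in $\Types_{t_1}(D_1)$ I will first include $T_1$ itself as a predecessor of the type $T_0 := T_1$ (corresponding to not using $uv$) with $\delta = 0$. Then I will do a short case analysis on $(u,v)$ relative to $(Y_0, Y_1, Y_2, M)$ to describe, when applicable, the unique type obtained after adding $uv$:
\begin{itemize}
\item $u,v \in Y_0$: move $u,v$ to $Y_1$ and add $\{u,v\}$ to $M$; $\delta=0$.
\item $u \in Y_0$, $v \in Y_1$ (or symmetric): move $u$ to $Y_1$, $v$ to $Y_2$; if $\{v,v'\} \in M$, replace it by $\{u,v'\}$; $\delta=0$.
\item $u,v \in Y_1$ and $\{u,v\} \in M$: move both to $Y_2$, delete $\{u,v\}$ from $M$, and set $\delta = 1$ (a cycle is closed).
\item $u,v \in Y_1$ and $\{u,v\} \notin M$ with partners $u',v'$: move both to $Y_2$, replace $\{u,u'\}$ and $\{v,v'\}$ by $\{u',v'\}$ in $M$; $\delta = 0$.
\item $u \in Y_2$ or $v \in Y_2$: no valid extension (the edge cannot be added).
\end{itemize}
Each $T_1$ contributes to at most two types of $t_0$, so iterating once over $\Types_{t_1}(D_1)$ suffices to accumulate, for every $T_0$, the set of relevant pairs $(T_1,\delta)$ in time $\Oh(\tau)$.

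Next I will convert these raw completion numbers into the encoded values of $f_0$. Let $\ell_0$ be the maximum of the resulting $\CyComp_{t_0}(T_0)$ taken over those $T_0$ for which at least one contributing $T_1$ has $f_1(T_1) \notin \{\nopack, \seopack\}$; this is the type at $t_0$ that will carry the $f_0(T_0) = 0$ value required by the definition of a class. For every $T_0$, if no predecessor $T_1$ yields a partial packing (every contributing value is $\nopack$), I set $f_0(T_0) = \nopack$; if every non-$\nopack$ contribution comes from a $T_1$ with $f_1(T_1) = \seopack$, or if $T_0$ itself falls below $\ell_0 - 2(\tw+1)$ (i.e., $T_0$ is low at $t_0$), I set $f_0(T_0) = \seopack$; otherwise I set $f_0(T_0) = \ell_0 - \CyComp_{t_0}(T_0)$, which lies in $\numbZ{2\tw+2}$ by construction. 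The arithmetic conventions stated in the definition of a class ($\nopack$ absorbing over $\seopack$) make this assignment unambiguous.

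Correctness then reduces to two observations, which I will verify. First, for every set $D \subseteq V_{t_1}\setminus U$ of class $c_1$ with $D \cap \{u,v\} = \emptyset$, every partial cycle-packing in $G_{t_0} - D$ either already exists in $G_{t_1}-D$ or is obtained from one by adding the edge $uv$ (using that $u,v\notin D$), so the maximum over the explicitly enumerated predecessors indeed equals $\CyComp_{t_0}(T_0)$; conversely every packing of type $T_1$ at $t_1$ extends to a packing at $t_0$ either unchanged or with $uv$ added, matching the bookkeeping of $\delta$. Second, the $\seopack$/$\nopack$ propagation correctly mirrors the condition in the class definition, since the set of descendants of $t_0$ (other than $t_0$ itself) equals that of $t_1$, and every packing at $t_0$ restricts to a packing at $t_1$ of some enumerated predecessor type. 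The main conceptual subtlety—and the step I expect to require the most care—is the correct bookkeeping of $\ell_0$ and of $\seopack$ when several predecessors map to the same $T_0$: one has to justify that taking the max completion number is compatible with the low/non-low status inherited from $t_1$ and with the shift from $\ell_1$ to $\ell_0$, so that the invariants of a class are preserved. The $\Oh(\tau)$ running time follows since the whole construction processes each type of $t_1$ and each type of $t_0$ only a constant number of times.
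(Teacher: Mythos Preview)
Your proposal is correct and follows essentially the same approach as the paper: the same case analysis on the positions of $u$ and $v$ in the type (your five bullets plus the ``don't use $uv$'' case correspond exactly to the paper's Cases~0--4), the same $\delta\in\{0,1\}$ offset for closing a cycle, and the same renormalization of $\ell_0$ and handling of overflow into $\seopack$. Your presentation is slightly more explicit about taking the maximum over the multiple predecessor types that can yield the same $T_0$, which the paper leaves implicit.
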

\begin{proof}
  It remains to define $f_0$.
  For this we first define a function $f$ which we derive $f_0$ later from.
  In the definition of $f$ we consider six possible cases
  where two of them are symmetric.
  For all types $T \in \Types_{t_1}(D_1)$ with $T = (Y_0, Y_1, Y_2, M)$,
  we define
  \newcommand{\tif}{\ \text{if }}
  \begin{alignat*}{6}
    &\text{Case 0:}&~&~f(
        Y_0,&&
        Y_1,&~&
        Y_2,&~&
        M&
      ) \deff& f_1(T), \\
    &\text{Case 1:}&&\tif u \in Y_0 \land v \in Y_0&\\
      &&&~f(
        Y_0\setminus\{u,v\},&&
        Y_1\cup\{u,v\},&&
        Y_2,&&
        M\cup\{uv\}&
      ) \deff& f_1(T), \\
    &\text{Case 2a:}&&\tif u \in Y_0 \land yv \in M&\\
      &&&~f(
        Y_0\setminus\{u\},&&
        Y_1\setminus\{v\}\cup\{u\},&&
        Y_2\cup\{v\},&&
        M\setminus\{vy\}\cup\{uy\}&
      ) \deff& f_1(T), \\
    &\text{Case 2b:}&&\tif xu \in M \land v \in Y_0&\\
      &&&~f(
        Y_0\setminus\{v\},&&
        Y_1\setminus\{u\}\cup\{v\},&&
        Y_2\cup\{u\},&&
        M\setminus\{xu\}\cup\{xv\}&
      ) \deff& f_1(T), \\
    &\text{Case 3: }&&\tif xu \in M \land yv \in M&\\
      &&&~f(
        Y_0,&&
        Y_1\setminus\{u,v\},&&
        Y_2\cup\{u,v\},&&
        M\setminus\{xu,yv\}\cup\{xy\}&
      ) \deff& f_1(T),\text{ and} \\
    &\text{Case 4: }&&\tif uv \in M&\\
      &&&~f(
        Y_0,&&
        Y_1\setminus\{u,v\},&&
        Y_2\cup\{u,v\},&&
        M\cup\{uv\}&
      ) \deff& f_1(T)-1
  \end{alignat*}
  where $x, y \in Y_1$ are such that $u, v, x, y$ are all distinct.
  Recall, that $\nopack$ and $\seopack$ are invariant
  under arithmetic operations,
  this especially implies $\nopack - 1 = \nopack$
  and $\seopack - 1 = \seopack$.

  Observe that $f$ is a function from $\Types_{t_0}(D_0)$
  to $\numbZ[-1]{2\tw+2} \cup \{\nopack, \seopack\}$.
  If there is no type $T \in \Types_{t_0}(D_0)$ such that $f(T) = -1$,
  then we define $f_0 = f$.
  Otherwise, we define $f_0$ based on $f$ where,
  for all types $T \in \Types_{t_0}(D_0)$,
  \[
    f_0(T) \deff
      \begin{cases}
        f(T),     & \text{if } f(T) \in \{\seopack, \nopack\}, \\
        \seopack, & \text{if } f(T) = 2(\tw+1), \\
        f(T)+1,   & \text{otherwise}.
      \end{cases}
  \]
  As a last step it remains to define $\ell_0$.
  If $f_0 = f$, then we set $\ell_0 = \ell_1$
  and otherwise we set $\ell_0 = \ell_1 + 1$
  which is complementing the definition of $f_0$.

  Now consider a set $D \subseteq V_{t_1} \setminus U = V_{t_0}\setminus U$
  with $D \cap \{u, v\} = \emptyset$
  that is of class $c_1$ for $t_1$.
  We claim that $D$ is of class $c_0$ for $t_0$.
  To prove this, fix some type $T_0 \in \Types_{t_0}(D_1)$.

  If a packing of type $T_0$ does not contain the edge $uv$,
  then this packing is also of type $T_0$ for $t_1$.
  This is Case~0 in the definition above.

  If the edge is selected,
  then there is a unique type $T_1$ for $t_1$ resulting from deleting the edge.
  Depending on how the inclusion of edge $uv$ changed the type,
  we consider four different cases:
  \begin{description}%
    \item[Case 1:]
    A new path is created.
    \item[Case 2:]
    A path is extended by this edge.
    \item[Case 3:]
    Two paths are connected to each other.
    \item[Case 4:]
    A cycle is closed.
  \end{description}

  For the first three cases, the number of cycles does not change.
  Only for Case~4, we obtain one more cycle in the type for $t_0$
  which was not present in the type for $t_1$.
\end{proof}

As a next step we consider forget nodes.
Depending on whether the forgotten vertex is deleted or not,
we consider two disjoint cases.

\begin{lemma}[Forget Node With a Deleted Vertex]
  \label{lem:twUpper:cycle:forgetDel}
  Let $t_0$ be a forget node with child $t_1$
  and let $v \notin U$ be the vertex forgotten.

  There is a procedure \ForgetNodeDel that,
  for a given class $c_1 = (k_1, D_1, \ell_1, f_1)$ for $t_1$ with $v \in D_1$,
  computes in time $\Oh(\tau)$
  a class $c_0 = (k_1+1, D_1\setminus\{v\}, \ell_1, f_0)$ for $t_0$
  such that the following holds:
  For all sets $D \subseteq V_{t_1}\setminus U$,
  if $D$ is of class $c_1$ for $t_1$,
  then $D$ is of class $c_0$ for $t_0$.
\end{lemma}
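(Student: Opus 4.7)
The plan is to simply re-index the function $f_1$ by removing $v$ from the uncovered part of every type. Since the tree decomposition is nice with introduce edge nodes, forget nodes introduce neither new vertices nor new edges, so $G_{t_0} = G_{t_1}$. Moreover, every $D_1$-avoiding type for $t_1$ has $v \in Y_0$ (because $v \in D_1$), so there is a natural one-to-one correspondence between $\Types_{t_1}(D_1)$ and $\Types_{t_0}(D_1 \setminus \{v\})$ given by removing $v$ from~$Y_0$.

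Concretely, I will define
\[
  f_0(Y_0, Y_1, Y_2, M) \deff f_1(Y_0 \cup \{v\}, Y_1, Y_2, M)
  \qquad \text{for all } (Y_0, Y_1, Y_2, M) \in \Types_{t_0}(D_1 \setminus \{v\}),
\]
which can clearly be computed in time $\Oh(\tau)$. The existence of a type $T_0^*$ with $f_0(T_0^*) = 0$ follows from the corresponding type guaranteed by $c_1$, since removing $v$ from its $Y_0$ yields a valid $(D_1 \setminus \{v\})$-avoiding type for $t_0$.

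Next I will verify the class conditions for $c_0$ on any set $D \subseteq V_{t_1}\setminus U$ of class $c_1$. First, $v \in D_1 \subseteq D$ and $v \notin D_1 \setminus \{v\}$ yield $|D \setminus (D_1\setminus\{v\})| = |D \setminus D_1| + 1 = k_1+1$, and $D \cap X_{t_0} = D \cap (X_{t_1}\setminus\{v\}) = D_1\setminus\{v\}$. For each type $T_0 = (Y_0, Y_1, Y_2, M) \in \Types_{t_0}(D_1\setminus\{v\})$, let $T_1 = (Y_0 \cup \{v\}, Y_1, Y_2, M) \in \Types_{t_1}(D_1)$. Since $v \in D$, the vertex $v$ is not covered by any partial cycle-packing of $G_{t_1}-D = G_{t_0}-D$, so the path-endpoint requirement forces every such packing to have all endpoints in $X_{t_1}\setminus\{v\} = X_{t_0}$. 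Hence partial cycle-packings of $G_{t_0}-D$ of type $T_0$ are exactly the partial cycle-packings of $G_{t_1}-D$ of type $T_1$, and in particular the $T_0$-completion number of $G_{t_0}-D$ equals the $T_1$-completion number of $G_{t_1}-D$. This immediately transfers the cases $f_1(T_1) = \nopack$ and the finite values $f_1(T_1) = a$ to $f_0(T_0)$.

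The slightly delicate step is the $\seopack$ marker: I need to check that ``$T_0$ is low for $t_0$, or every partial packing of type $T_0$ has some descendant of $t_0$ on which its restricted type is low'' is captured exactly when $f_1(T_1) = \seopack$. Here I will use two facts: (i) $t_1$ is itself a descendant of $t_0$, so lowness of $T_0$ at $t_0$ and lowness of $T_1$ at $t_1$ coincide via the bijection of types above (the maximum $\hat T$-completion number is unchanged because the graphs and the set of admissible packings are the same); and (ii) every descendant $t'$ of $t_0$ other than $t_0$ itself is also a descendant of $t_1$, so the ``bad descendant'' condition is identical for the two nodes. This is the main conceptual point to check, but it follows directly from the fact that the forget operation does not alter the graph, the packings, or the tree structure below. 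With the $\seopack$ marker handled, the class conditions for $c_0$ are all satisfied, completing the proof.
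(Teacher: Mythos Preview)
Your proposal is correct and follows exactly the same approach as the paper: define $f_0$ via the natural bijection $\Types_{t_0}(D_1\setminus\{v\})\leftrightarrow\Types_{t_1}(D_1)$ that removes/adds $v$ from~$Y_0$, and observe that since $v\in D$ is deleted, $G_{t_0}-D=G_{t_1}-D$ and all partial cycle-packings and their completion numbers coincide under this bijection. The paper's proof is in fact terser than yours (it does not spell out the $\seopack$ case or the budget check), so your more careful verification of the $\seopack$ marker and of $|D\setminus(D_1\setminus\{v\})|=k_1+1$ is a welcome addition rather than a deviation.
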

\begin{proof}
  The function $f_0$ basically agrees with $f_1$
  except that every type for $t_0$ does not include $v$.
  As every type for $t_1$ includes $v$,
  there is a simple one-to-one correspondence
  between the types for $t_0$ and $t_1$.

  Formally, we define the function
  $f_0\from \Types_{t_0}(D_1\setminus\{v\}) \to \CodomCyc$
  as follows.
  For all types $T \in \Types_{t_1}(D_1)$
  with $T = (Y_0, Y_1, Y_2, M)$,
  we set
  \[
    f_0(Y_0\setminus\{v\}, Y_1, Y_2, M) \deff f_1(Y_0, Y_1, Y_2, M) = f_1(T)
    .
  \]
  When considering a set $D \subseteq V_{t_1} \setminus U$
  of class $c_1$ for $t_1$,
  we immediately get, by our choice of $f_0$,
  that $D$ is of class $c_0$ for $t_0$.
\end{proof}

The next procedure covers the second case for a forget node
where the vertex forgotten is not deleted
and thus, could potentially appear in cycle-packings.

\begin{lemma}[Forget Node With a Usable Vertex]
  \label{lem:twUpper:cycle:forgetNotDel}
  Let $t_0$ be a forget node with child $t_1$
  and let $u$ be the vertex forgotten.

  There is a procedure \ForgetNodeNotDel that,
  for a given class $T_1 = (k_1, D_1, \ell_1, f_1)$
  computes in time $\Oh(\tau)$ a class $c_0 = (k_1, D_1, \ell_1, f_0)$
  such that the following holds:
  For all sets $D \subseteq V_{t_1}\setminus U$
  with $v \notin D$,
  if $D$ is of class $c_1$, then $D$ is of class $c_0$.
\end{lemma}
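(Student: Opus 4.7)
The plan is as follows. Since $G_{t_0} = G_{t_1}$ (a forget node changes only the bag, not the graph), any partial cycle-packing $\packs$ of $G_{t_0}-D$ is a partial cycle-packing of $G_{t_1}-D$ and vice versa, \emph{modulo} the ``endpoints lie in the bag'' requirement. The crucial observation is that the forgotten vertex $u$ can appear in one of three parts of a $t_1$-type: $Y_0$ (uncovered), $Y_1$ (endpoint of a path in $\packs$), or $Y_2$ (internal to a path or lying in a cycle). A packing in which $u \in Y_1$ at $t_1$ is \emph{not} a valid partial packing at $t_0$, because such a packing would contain a path whose endpoint $u$ no longer lies in $X_{t_0}$. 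Hence the valid packings at $t_0$ correspond exactly to those packings at $t_1$ whose $t_1$-type places $u$ in $Y_0$ or $Y_2$.

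Concretely, for each $T_0 = (Y_0, Y_1, Y_2, M) \in \Types_{t_0}(D_1)$ I will define the two preimage types $T_1^{(0)} \deff (Y_0 \cup \{u\}, Y_1, Y_2, M)$ and $T_1^{(2)} \deff (Y_0, Y_1, Y_2 \cup \{u\}, M)$, and set $f_0(T_0) \deff \min\bigl(f_1(T_1^{(0)}), f_1(T_1^{(2)})\bigr)$, where the minimum treats numeric values as dominating $\seopack$, which in turn dominates $\nopack$; if the resulting numeric value exceeds $2(\tw+1)$, I replace it by $\seopack$, since then $T_0$ would be low for $t_0$. This rule correctly captures $\CyComp(T_0) = \max\bigl(\CyComp(T_1^{(0)}), \CyComp(T_1^{(2)})\bigr)$, so $\ell_0 - f_0(T_0)$ equals the $T_0$-completion number whenever $f_0(T_0)$ is numeric. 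The choice $\ell_0 \deff \ell_1$ is justified by the following swap: from any packing at $t_1$ achieving $\ell_1$ complete cycles, either $u \notin Y_1$ (and the packing is directly valid at $t_0$), or we can delete the single path containing $u$ without changing the number of complete cycles, obtaining a valid $t_0$-packing that still achieves $\ell_1$.

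The main subtlety I expect is the correct propagation of $\seopack$. The invariant to maintain is that $f_0(T_0) = \seopack$ exactly when $T_0$ is low for $t_0$ or every partial packing of type $T_0$ passes through a low type at some descendant $t'$. If both $T_1^{(0)}$ and $T_1^{(2)}$ carry $\seopack$ at $t_1$, then every packing of type $T_0$ is either itself low or inherits (by the inductive meaning of $\seopack$) a low descendant below $t_1$, so $T_0$ must be marked $\seopack$; conversely, a single numeric preimage witnesses a ``non-bad'' packing of type $T_0$ and justifies the min-rule's numeric output. Finally, the running time is $\Oh(\tau)$ because there are $\tau \in 2^{\Oh(\tw \log \tw)}$ types to iterate over and each is processed in constant time.
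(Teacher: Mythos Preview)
Your proposal is correct and takes essentially the same approach as the paper: the paper defines $f_0(T) \deff \ell_1 - \max\bigl(\ell_1 - f_1(Y_0\cup\{v\}, Y_1, Y_2, M),\ \ell_1 - f_1(Y_0, Y_1, Y_2\cup\{v\}, M)\bigr)$ with the convention $a > \seopack > \nopack$, which is exactly your $\min$ rule over the two preimage types $T_1^{(0)}$ and $T_1^{(2)}$. Your write-up is in fact more thorough than the paper's, which omits the explanation of why $u\in Y_1$ types at $t_1$ are discarded and why $\ell_0=\ell_1$ is preserved; the paper simply observes that ``when the vertex $v$ is not deleted, it can either be covered or uncovered''.
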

\begin{proof}
  We define $f_0$ as follows.
  For all types $T \in \Types_{t_0}(D_1)$
  with $T = (Y_0, Y_1, Y_2, M)$,
  we set
  \[
    f_0(T) \deff \ell_1 - \max\bigl(
        \ell_1 - f_1(Y_0\cup\{v\}, Y_1, Y_2, M),
        \ell_1 - f_1(Y_0, Y_1, Y_2\cup\{v\}, M)
      \bigr)
  \]
  where we assume that $a > \seopack > \nopack$ for all $a \in \nat$.

  Observe that when the vertex $v$ is not deleted,
  it can either be covered or uncovered.
  In both cases we can get the same type for $t_0$,
  although the packing has different types for $t_1$.
\end{proof}

The last procedure handles the case for join nodes.
Here we have to combine two partial cycle-packings to form a new one.
In order to make this combination formal,
we define a function $\reduce$
which operates on the union of two matchings and combines them
such that partial packings are joined together whenever possible
and cycles are closed as soon as two partial cycle-packings
form together a cycle.

\begin{definition}
  Let $G$ be a graph and let $t$ be a join node of its tree decomposition.
  We define the function
  \[
    \reduce
      \from 2^{X_{t} \times X_{t}} \times \nat
      \to 2^{X_{t} \times X_{t}} \times \nat
  \]
  as
  $\reduce(M, \lambda) \deff (M, \lambda)$ if, for all distinct $u,v,w \in X$,
  the set $M$ satisfies $\abs{M \cap \{ uv, vw \}} \le 1$.
  Otherwise, we define
  \[
    \reduce(M, \lambda) \deff
      \begin{cases}
        (M\setminus\{uv, vw\}\cup\{uw\}, \lambda  ),
          & \text{if } uv, vw \in M \land uw \notin M,\\
        (M\setminus\{uv, vw\},           \lambda+1),
          & \text{if } uv, vw \in M \land uw \in M.
      \end{cases}
  \]
\end{definition}

With this function $\reduce$ in mind,
we now formally define the combination of two types.

\begin{definition}[Combination of Types]
  Let $G$ be a graph and let $t_0$ be a join node of its tree decomposition
  with children $t_1$ and $t_2$.
  Let $T_1 \in \Types_{t_1}(D_1)$ be a type for $t_1$
  with $T_1 = (Y'_0, Y'_1, Y'_2, M')$
  and $T_2 \in \Types_{t_2}(D_1)$ be a type for $t_2$
  with $T_2 = (Y''_0, Y''_1, Y''_2, M'')$.

  The combination of $T_1$ and $T_2$ is undefined
  whenever $Y'_2 \not\subseteq Y''_0$ or $Y''_2 \not\subseteq Y'_0$,
  that is, when a vertex would be incident to at least three selected edges
  in the combined packing.
  Otherwise, we define a new type $T_0 = (Y_0, Y_1, Y_2, M)$
  with
  \begin{align*}
    Y_0 &\deff Y'_0 \cap Y''_0 \\
    Y_1 &\deff (Y'_1 \cap Y''_0) \cup (Y'_0 \cap Y''_1) \\
    Y_2 &\deff Y'_2 \cup Y''_2 \cup (Y'_1 \cap Y''_1)
  \end{align*}
  where the set $M$ is defined as
  the set such that $(M, \lambda)$ is the fix point of the function
  $\reduce$
  on input $((M' \setminus M'') \cup (M'' \setminus M'), \abs{M' \cap M''})$.

  We say that $T_1$ and $t_2$ can be combined to type $T_0$
  by creating $\lambda$ cycles
  and denote this by $\combine{T_1}{T_2} \overset{+\lambda}= T_0$.
\end{definition}

With this definition we can now state the procedure for the join node.

\begin{lemma}[Join]
  \label{lem:twUpper:cycle:join}
  Let $t_0$ be a join node with children $t_1$ and $t_2$.

  There is a procedure \JoinNode that,
  for given classes $c_1 = (k_1, D_1, \ell_1, f_1)$ for $t_1$
  and $c_2 = (k_2, D_2, \ell_2, f_2)$ for $t_2$
  with $D_1 = D_2$,
  computes in time $\Oh(\tau^3)$ a class
  $c_0 = (k_1+k_2, D_1, \ell_0, f_0)$ for $t_0$
  such that the following holds:
  For all sets $D' \subseteq V_{t_1}\setminus U$
  and all sets $D'' \subseteq V_{t_2}\setminus U$
  if $D'$ is of class $c_1$ for $t_1$
  and $D''$ is of class $c_2$ for $t_2$,
  then $D' \cup D''$ is of class $c_0$ for $t_0$.
\end{lemma}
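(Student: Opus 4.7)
The plan is to define the class $c_0=(k_1+k_2, D_1, \ell_0, f_0)$ by a max-plus aggregation over all compatible pairs of types from the children. Concretely, for each target type $T_0 \in \Types_{t_0}(D_1)$, I would compute
\[
M(T_0) \deff \max\bigl\{(\ell_1-f_1(T_1)) + (\ell_2-f_2(T_2)) + \lambda
 \,\bigm|\,
 \combine{T_1}{T_2} \overset{+\lambda}{=} T_0,\ f_1(T_1), f_2(T_2) \in \numbZ{2\tw+2}
\bigr\},
\]
leaving $M(T_0)$ undefined if no such pair exists. I would then set $\ell_0 \deff \max_{T_0} M(T_0)$ so that at least one type realizes $f_0(T_0)=0$, define $f_0(T_0) \deff \ell_0 - M(T_0)$ when $M(T_0)$ is defined, and use $\nopack$ or $\seopack$ otherwise as described below. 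The invariant $|D \setminus D_1| = k_1+k_2$ is automatic from $D \cap X_{t_0}=D_1=D_2$, and $D_0=D_1$ by definition.

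The main combinatorial claim driving correctness is a decomposition lemma: for $D=D'\cup D''$, every partial cycle-packing $\packs$ of type $T_0$ for $G_{t_0}-D$ splits uniquely as $\packs = \packs_1 \cup \packs_2$ with $\packs_i$ the edges of $\packs$ inside $G_{t_i}$, and $\packs_i$ is a partial cycle-packing for $G_{t_i}-(D\cap V_{t_i})$ of some type $T_i$ satisfying $\combine{T_1}{T_2} \overset{+\lambda}{=} T_0$. Here $\lambda$ is precisely the number of cycles of $\packs$ that use vertices of both sides: each concatenation of a path from $\packs_1$ with a path from $\packs_2$ through the common bag corresponds to one reduction step $(uv,vw) \to uw$ of $\reduce$, and each closure of such a concatenation into a cycle corresponds to the $\lambda+1$ branch. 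The converse is easy: any two packings $\packs_1$, $\packs_2$ of types $T_1$, $T_2$ together form a valid partial packing of type $\combine{T_1}{T_2}$ with the promised extra $\lambda$ cycles. Combined with the class assumptions on $c_1$ and $c_2$, this yields
\[
\CyComp_{G_{t_0}-D}(T_0) \;=\; \max_{\combine{T_1}{T_2} \overset{+\lambda}{=} T_0} \bigl(\ell_1-f_1(T_1)\bigr) + \bigl(\ell_2-f_2(T_2)\bigr) + \lambda \;=\; \ell_0-f_0(T_0),
\]
as demanded by the definition of class.

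The delicate part will be the special symbols $\nopack$ and $\seopack$. I would argue that $f_0(T_0)=\nopack$ must be set exactly when no combination with $f_1(T_1), f_2(T_2) \neq \nopack$ yields $T_0$, since the union of partial packings exists iff both summands do. For $\seopack$, observe that if $\packs=\packs_1\cup\packs_2$ is maximum for $T_0$, then any descendant $t'$ of $t_0$ is also a descendant of either $t_1$ or $t_2$, and the restriction of $\packs$ to $G_{t'}$ coincides with the restriction of whichever $\packs_i$ contains $t'$; consequently the $\seopack$ marker propagates monotonically along both child classes, so it is sufficient (and necessary) to exclude $\seopack$-pairs from the max and additionally set $f_0(T_0) = \seopack$ whenever the resulting value $\ell_0 - M(T_0)$ would exceed $2(\tw+1)$ (mirroring the clamping already used in \cref{lem:twUpper:cycle:introEdgeNotDel}). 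One then checks that $f_0$ lands in $\CodomCyc$ and that some type attains the value $0$ by construction.

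For the running time, there are at most $\tau$ candidate types for each of $t_1$, $t_2$ and $t_0$ by \cref{lem:twUpper:cycle:numberOfTypesAndClasses}, so the enumeration has $\Oh(\tau^3)$ triples. Each triple requires only checking whether $\combine{T_1}{T_2}$ equals $T_0$ and computing $\lambda$, which takes $\poly(\tw)$ time since the fixed-point iteration of $\reduce$ terminates in at most $|M'|+|M''|=\Oh(\tw)$ steps; these polynomial factors are absorbed into the $\Oh(\tau^3)$ bound. I expect the hardest part of the formal write-up to be the bookkeeping around $\seopack$: carefully showing that discarding $\seopack$-summands in the maximum never throws away a pair that could have led to a non-$\seopack$ value of $f_0(T_0)$, and conversely that every $T_0$ we mark $\seopack$ is genuinely low or has only low descendants for every partial packing realising it.
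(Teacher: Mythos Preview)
Your proposal is correct and follows essentially the same approach as the paper: the paper also defines an auxiliary function via the max over all pairs $(T_1,T_2)$ with $\combine{T_1}{T_2}\overset{+\lambda}{=}T_0$ of $(\ell_1-f_1(T_1))+(\ell_2-f_2(T_2))+\lambda$ (using the ordering $a>\seopack>\nopack$), sets $\ell_0$ to the overall maximum, and then obtains $f_0$ by taking $\ell_0-f(T)$ with the same clamping to $\seopack$ beyond $2(\tw+1)$. In fact your write-up is more detailed than the paper's, which after stating the definitions simply asserts that the class property ``follows directly''; your decomposition lemma and the discussion of $\seopack$ propagation supply exactly the justification the paper omits.
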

\begin{proof}
  First observe that we assumed that each edge is introduced exactly once
  in the underlying tree decomposition.
  Hence, the edge cannot appear in the partial packing for the left subtree
  \emph{and} the partial packing for the right subtree.
  Moreover, every type for $t_0$ can be split into a component
  coming from $t_1$ and a part coming from $t_2$
  depending on the edges that have been introduced so far.

  We first define a function $f$
  which serves as the basis for defining $f_0$.
  For all types $T_0 \in \Types_{t_0}(D_1)$, we set
  \[
    f(T) \deff \max_{\substack{T_1 \in \Types_{t_1}(D_1) \\ T_2 \in \Types_{t_2}(D_1) \\ \combine{T_1}{T_2} \overset{+\lambda}= T_0}}
    \begin{cases}
      \ell_1 - f_1(T_1) + \ell_2 - f_2(T_2) + \lambda,
        & \text{if } f_1(T_1),f_2(T_2) \in \nat, \\
      \seopack,
        & \text{if } \seopack \in \{ f_1(T_1), f_2(T_2)\} \not\ni \nopack \\
      \nopack,
        & \text{if } \nopack \in \{f_1(T_1),f_2(T_2)\},
    \end{cases}
  \]
  where we assume that $a > \seopack > \nopack$ for all $a \in \nat$.

  Based on this we now define $f_0$ and $\ell_0$.
  For some type $\hat T$ maximizing $f(\hat T)$, we set $\ell_0 \deff f(\hat T)$.
  We construct $f_0$ based on $f$ where,
  for all types $T \in \Types_{t_0}(D_0)$, we set
  \[
    f_0(T) \deff
      \begin{cases}
        f(T),
          & \text{if } f(T) \in \{\seopack, \nopack\}, \\
        \seopack,
          & \text{if } f(T) < \ell_0 - 2(\tw+1), \\
        \ell_0 - f(T),
          & \text{otherwise.}
      \end{cases}
  \]

  With this definition it follows directly
  that for
  two given sets $D' \subseteq V_{t_1}$ of class $c_1$ for $t_1$
  and $D'' \subseteq V_{t_2}$ of class $c_2$ for $t_2$,
  the set $D' \cup D''$ is of class $c_0$ for $t_0$.
  \qedhere

\end{proof}

\subsection{Dynamic program}

Now we have everything ready to state the algorithm solving \CycleUndelHitPack,
that is, prove \cref{thm:twUpper:cycle} which we restate for convenience.

\thmcycletwUB*

\begin{proof}[Proof of \Cref{thm:twUpper:cycle}]
  Given an instance $I = (G, U, k, \ell)$ of \CycleUndelHitPack,
  we first compute an optimal tree decomposition of $G$
  and then transform this decomposition into a nice tree decomposition
  with introduce edge nodes where each edge is introduced once
  and the additional requirement that
  the root and leaf nodes have empty bags.

  The algorithm computes,
  for each node $t$ of the tree decomposition,
  a list $\List(t)$ of classes $c$ for this node
  such that there is a set $D \subseteq V_t \setminus U$ that is of class $c$.

  The dynamic program traverses the tree decomposition in post-order
  and for each node $t_0$ we perform the following actions
  (depending on the type of node $t_0$).

  \begin{description}
    \item[Leaf Node.]
    For each leaf node $t_0$ of the tree decomposition,
    we define the list $\List(t_0)$ explicitly by setting
    $\List(t_0) \deff \{(0, \emptyset, 0,
    (\emptyset,\emptyset,\emptyset,\emptyset) \mapsto 0) \}$.

    \item[Introduce Vertex Node.]
    Let $t_1$ be the unique child of $t_0$
    and let $v$ be the vertex introduced at $t_0$,
    that is, $X_{t_0} = X_{t_1} \cup \{v\}$.
    Repeat the following two steps for all classes $c_1 \in \List(t_1)$:
    \begin{itemize}
      \item
      Apply \IntroVtxNodeNotDel from \cref{lem:twUpper:cycle:introVtxNotDel}
      on $c_1$ to compute the class $c_0$
      and add $c_0$ to the list $\List(t_0)$.
      \item
      If $v \in V(G) \setminus U$, i.e., the vertex $v$ can be deleted,
      then use \IntroVtxNodeDel from \cref{lem:twUpper:cycle:introVtxDel}
      to compute a class $c'_0$
      and add $c'_0$ to the list $\List(t_0)$ as well.
    \end{itemize}

    \item[Introduce Edge Node.]
    Let $t_1$ be the unique child of $t_0$
    and let $uv$ be the edge introduced.
    Repeat the following for all classes $c_1 \in \List(t_1)$:
    \begin{itemize}
      \item
      If $u \in D_1$ or $v \in D_1$, then use \IntroEdgeNodeDel
      from \cref{lem:twUpper:cycle:introEdgeDel} on $c_1$
      to compute the class $c_0$ and add $c_0$ to the list $\List(t_0)$.
      \item
      Otherwise we get $u,v \notin D_1$.
      Then, we use \IntroEdgeNodeNotDel from
      \cref{lem:twUpper:cycle:introEdgeNotDel} on $c_1$
      to compute the class $c_0'$ and add $c'_0$ to the list $\List(t_0)$.
    \end{itemize}

    \item[Forget Node.]
    Let $t_1$ be the unique child of $t_0$
    and let $v$ be the vertex forgotten,
    that is, $X_{t_0} = X_{t_1} \setminus \{v\}$.
    Repeat the following for all classes $c_1 \in \List(t_1)$:
    \begin{itemize}
      \item
      If $v \in D_1$,
      then use \ForgetNodeDel from \cref{lem:twUpper:cycle:forgetDel} on $c_1$
      to compute the class $c_0$ and add $c_0$ to the list $\List(t_0)$.
      \item
      If $v \notin D_1$, then use \ForgetNodeNotDel
      from \cref{lem:twUpper:cycle:forgetNotDel} on $c_1$
      to compute the class $c_0'$ and add $c'_0$ to the list $\List(t_0)$.
    \end{itemize}

    \item[Join Node.]
    Let $t_0$ be the unique parent of the nodes $t_1$ and $t_2$.
    Repeat the following for all pairs of classes
    $(c_1,c_2) \in \List(t_1) \times \List(t_2)$:
    Check that $D_1 = D_2$ and if so apply the procedure \JoinNode
    from \cref{lem:twUpper:cycle:join} on $(c_1,c_2)$
    to get a class $c_0$ for $t_0$
    and add $c_0$ to the list $\List(t_0)$.
  \end{description}
  It remains to define the output of the algorithm.
  For this let $r$ be the root of the tree decomposition.
  Then, the algorithm outputs \yes if the list $\List(r)$ contains a class
  $(k_0, \emptyset,\ell_0, (\emptyset,\emptyset,\emptyset,\emptyset) \mapsto 0)$
  for some $k_0 \in \numbZ{k}$ and $\ell_0 \in \numbZ{\ell-1}$.

  \subparagraph*{Correctness.}
  It remains to prove the correctness and the running time of this algorithm.
  We first show correctness by proving the following claim.
  \begin{claim}[Correctness]
    \label{clm:twUpper:cycle:dpCorrectness}
    For all nodes $t_0$,
    integers $0 \le k_0 \le k$,
    vertex sets $D_0 \subseteq X_t \setminus U$,
    integers $0 \le \ell_0 < \ell$,
    and
    functions $f_0\from \Types_t(D_0) \to \CodomCyc$,
    the following two statements are equivalent:
    \begin{itemize}%
      \item
      There is a set $D \subseteq V_t\setminus U$
      of class $(k_0, D_0, \ell_0, f_0)$ for $t_0$.

      \item
      $(k_0, D_0, \ell_0, f_0) \in \List(t_0)$.
    \end{itemize}
  \end{claim}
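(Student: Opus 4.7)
The plan is to prove both implications simultaneously by bottom-up induction on the tree decomposition, with a case split by node type that mirrors the five procedures (\IntroVtxNodeDel/\IntroVtxNodeNotDel, \IntroEdgeNodeDel/\IntroEdgeNodeNotDel, \ForgetNodeDel/\ForgetNodeNotDel, \JoinNode). A key observation to exploit is that each set $D \subseteq V_{t_0}\setminus U$ has a \emph{unique} class for $t_0$: the integer $\abs{D\setminus D_0}$, the intersection $D\cap X_{t_0}$, the largest $T$-completion number over types $T$ (which fixes $\ell_0$ as an offset), and the entire function $f_0$ are all determined by $D$ (with $\nopack$ and $\seopack$ forced by the definition). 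Thus the two directions of the equivalence can be phrased as: (i) every $D$ featured somewhere produces a class lying in $\List(t_0)$, and (ii) every class in $\List(t_0)$ is featured by some $D$. The base case (leaf nodes, where $X_t=\emptyset$ and $D=\emptyset$ is the only option) is immediate from the initialization.

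For direction (ii) in the inductive step, each entry added to $\List(t_0)$ comes from applying one of the five procedures to classes from the child lists; by the induction hypothesis these child classes are featured, and the guarantees of \cref{lem:twUpper:cycle:introVtxDel,lem:twUpper:cycle:introVtxNotDel,lem:twUpper:cycle:introEdgeDel,lem:twUpper:cycle:introEdgeNotDel,lem:twUpper:cycle:forgetDel,lem:twUpper:cycle:forgetNotDel,lem:twUpper:cycle:join} produce a witness for the parent class. Direction (i) is the substantive part: given a featured set $D$ for $t_0$, I restrict $D$ to the child subtree(s), argue that the restriction is featured by a class belonging to the child list (by induction), and verify that applying the appropriate procedure to that child class reproduces the class of $D$. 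The introduce-vertex and introduce-edge cases reduce to a careful bookkeeping argument that the type-by-type packing numbers transform exactly as the six cases in the definition of $f_0$ in \cref{lem:twUpper:cycle:introEdgeNotDel} prescribe (adding an edge either creates a path, extends a path, merges two paths, or closes a cycle). The forget cases are equally mechanical: when $v$ is deleted, the graph $G_{t_1}-D$ equals $G_{t_0}-D$ and types differ only by removing $v$ from $Y_0$; when $v$ is not deleted, the partial packing at $t_0$ may or may not use $v$ as an internal vertex, which is exactly why $f_0(T)$ takes the maximum over the two candidate child types with $v$ in $Y_0$ or $Y_2$.

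The main obstacle is the join node. Here I need to show that any featured $D$ splits as $D=D'\cup D''$ with $D'=D\cap V_{t_1}$ and $D''=D\cap V_{t_2}$, and that for every type $T_0$ for $t_0$, the maximum $T_0$-completion number equals $\max_{T_1,T_2,\lambda}\ \lambda + (\ell_1-f_1(T_1)) + (\ell_2-f_2(T_2))$ taken over pairs with $\combine{T_1}{T_2}\overset{+\lambda}= T_0$, where $f_1,f_2$ are the induced class functions on the children. The ``$\ge$'' direction amounts to observing that given optimal partial packings realizing $T_1$ on $G_{t_1}-D'$ and $T_2$ on $G_{t_2}-D''$, their union gives a partial packing for $G_{t_0}-D$ whose type is precisely $\combine{T_1}{T_2}$ and whose cycle count gains exactly $\lambda$ cycles (the paths closed by $\reduce$), with disjointness guaranteed because the bags of $t_1$ and $t_2$ coincide with $X_{t_0}$ and no cycle can cross from $V_{t_1}\setminus X_{t_0}$ to $V_{t_2}\setminus X_{t_0}$ without passing through $X_{t_0}$. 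The ``$\le$'' direction is proved by taking an optimal packing $\packs$ of type $T_0$ for $G_{t_0}-D$ and letting $T_1,T_2$ be the respective types of the restrictions $\packs\cap G_{t_1}$, $\packs\cap G_{t_2}$; these are legitimate types because paths at the children whose endpoints lie in $X_{t_0}$ remain paths with endpoints in $X_{t_i}$.

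The remaining delicate point, which runs through every case, is the correct propagation of the two special symbols $\seopack$ and $\nopack$. The value $\nopack$ must be preserved exactly when no partial packing of the corresponding type exists, which follows directly from the piecewise definitions (with the convention that $\nopack$ absorbs under arithmetic). The value $\seopack$ must be recorded as soon as either the current type is low at $t_0$ or \emph{any} partial packing of that type necessarily induces a low type at some descendant. The former is enforced by the cutoff $f(T) < \ell_0 - 2(\tw+1)$ in \IntroEdgeNodeNotDel and \JoinNode together with the bound $\CyComp(\hat T)-\CyComp(T)> 2(\tw+1)$ in the definition of low; the latter is enforced by the fact that once $f_1(T_1)=\seopack$ appears in a child, every procedure propagates $\seopack$ (unless overwritten by $\nopack$) to the relevant parent types. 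Verifying this propagation amounts to a clean case check against the definitions of $f_0$ in each lemma, which is routine once the arithmetic conventions for $\nopack$ and $\seopack$ are in place. Combining all these cases completes the induction and proves the claim.
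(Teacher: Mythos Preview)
Your approach is correct and matches the paper's: both argue by bottom-up induction on the tree decomposition, invoking the uniqueness of the class of each set $D$ together with the guarantees of the procedure lemmas (\cref{lem:twUpper:cycle:introVtxDel}--\cref{lem:twUpper:cycle:join}) to handle every node type. The paper's version is terser, delegating the join-node completion-number identity and the $\nopack$/$\seopack$ bookkeeping entirely to those lemmas rather than re-deriving them in the claim, so your additional elaboration is redundant with the lemma proofs but not incorrect.
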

  \begin{claimproof}
    By the definition of the procedures from
    \cref{lem:twUpper:cycle:introVtxDel,lem:twUpper:cycle:introVtxNotDel,%
    lem:twUpper:cycle:introEdgeDel,lem:twUpper:cycle:introEdgeNotDel,%
    lem:twUpper:cycle:forgetDel,lem:twUpper:cycle:forgetNotDel,%
    lem:twUpper:cycle:join}
    and the definition of the list for the leaf nodes,
    it directly follows that the second statement implies the first one.

    Therefore, we only have to prove
    that the first statement implies the second statement.
    We prove this inductively
    based on the type of the node $t_0$ in the tree decomposition.

    \begin{description}
      \item[Leaf Node.]
      Since the bags of the leaf nodes do not contain any vertices,
      no vertices can be deleted.
      By assumption, the leaf nodes do not have children
      and hence, the only possible class is the empty class,
      which we included in the list.

      \item[Introduce Vertex Node.]
      If the set $D$ is of class $c_0$ for $t_0$ and $v \in D$,
      then $D\setminus \{v\}$ is of \emph{some} class $c_1$ for $t_1$.
      By induction we get $c_1 \in \List(t_1)$.
      From the algorithm and the properties of \IntroVtxNodeDel,
      we obtain a class $c$ for $t_0$ such that $D$ is of class $c$ for $t_0$.
      Moreover, the algorithm adds $c_0$ to $\List(t_0)$.
      Since each set $D$ has exactly one class for each node,
      we have $c_0 = c$ and thus, $c_0 \in \List(t_0)$.

      Note, that if $v \notin D$,
      then we similarly get that $c_0 \in \List(t_0)$
      by using \IntroVtxNodeNotDel.

      \item[Introduce Edge Node.]
      The correctness follows analogously to the introduce vertex node
      by distinguishing the two cases
      whether an endpoint of the introduced edge was deleted
      or both endpoints are not deleted.

      \item[Forget Node.]
      The result follows analogously to the introduce vertex node
      by splitting into the two cases of $v$ being deleted or not.

      \item[Join Node.]
      Assume that $D$ is of class $c_0$.
      If we consider $D' = D \cap V_{t_1}$ and $D'' = D \cap V_{t_2}$,
      we have that $D'$ and $D''$ are of some classes $c_1$ and $c_2$
      for $t_1$ and $t_2$, respectively.
      By the induction hypothesis it follows that
      $c_1 \in \List(t_1)$ and $c_2 \in \List(t_2)$.

      From the definition of the algorithm
      and the properties of the procedure \JoinNode,
      it follows that $D' \cup D'' = D$ is of some class $c$ for $t_0$.
      By the uniqueness of classes, we conclude that $c = c_0$
      and therefore, $c_0 \in \List(t_0)$.
      \claimqedhere
    \end{description}
  \end{claimproof}

  As the last step we prove the running time of the algorithm
  depending on the size of the lists.
  \begin{claim}
    \label{clm:twUpper:cycle:metaRuntime}
    Let $L$ denote the maximum length of a list $\List(t)$ for all nodes $t$.
    Then, the algorithm terminates in time
    $L^2 \cdot 2^{\poly(\tw)} \cdot \poly(n)$.
  \end{claim}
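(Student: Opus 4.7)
The plan is to bound the running time of the dynamic program node by node, summing over the nodes of the nice tree decomposition. First, I would recall that a nice tree decomposition (with introduce-edge nodes and each edge introduced exactly once) of width $\tw$ can be computed in time $2^{\poly(\tw)} \cdot n^{\Oh(1)}$ and has $\Oh(n \cdot \tw)$ nodes. The overall running time is thus the tree-decomposition cost plus the sum over all nodes of the time spent filling in $\List(t)$.

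For leaf nodes, $\List(t)$ is assigned a single fixed class in $\Oh(1)$ time. For the remaining non-join node types (introduce vertex, introduce edge, forget), the algorithm iterates over all at most $L$ classes $c_1 \in \List(t_1)$ of the unique child $t_1$, and for each class invokes the corresponding procedure from \cref{lem:twUpper:cycle:introVtxDel,lem:twUpper:cycle:introVtxNotDel,lem:twUpper:cycle:introEdgeDel,lem:twUpper:cycle:introEdgeNotDel,lem:twUpper:cycle:forgetDel,lem:twUpper:cycle:forgetNotDel}. Each of these procedures runs in time $\Oh(\tau)$, where by \cref{lem:twUpper:cycle:numberOfTypesAndClasses} we have $\tau \in 2^{\Oh(\tw \log \tw)} \subseteq 2^{\poly(\tw)}$. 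Hence, filling $\List(t)$ at such a node takes time $L \cdot 2^{\poly(\tw)}$.

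The dominant case is the join node: here the algorithm iterates over all pairs $(c_1, c_2) \in \List(t_1) \times \List(t_2)$, which gives at most $L^2$ pairs. For each pair that satisfies $D_1 = D_2$, the procedure \JoinNode from \cref{lem:twUpper:cycle:join} runs in time $\Oh(\tau^3)$. Since $\tau^3 \in 2^{\Oh(\tw \log \tw)} \subseteq 2^{\poly(\tw)}$, the total time spent at a join node is bounded by $L^2 \cdot 2^{\poly(\tw)}$. Summing the per-node costs over all $\Oh(n\tw)$ nodes, adding the tree-decomposition cost, and absorbing the $n\tw$ factor and the bookkeeping into $\poly(n)$, we get a total running time of $L^2 \cdot 2^{\poly(\tw)} \cdot \poly(n)$, as claimed.

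There is no real obstacle here; the claim is a direct accounting argument that piggybacks on the per-procedure runtimes already established in the auxiliary lemmas. The only subtlety is to observe that among all node types, the join node is the clear bottleneck both in the quadratic $L^2$ dependence and in the $\tau^3$ dependence, so all other contributions can be absorbed into the same $L^2 \cdot 2^{\poly(\tw)} \cdot \poly(n)$ bound without loss.
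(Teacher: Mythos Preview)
Your proposal is correct and follows essentially the same approach as the paper: bound the cost of computing the nice tree decomposition, observe that the join nodes dominate because they require iterating over up to $L^2$ pairs of classes, and use the $\Oh(\tau^3) \subseteq 2^{\poly(\tw)}$ bound from \cref{lem:twUpper:cycle:join} for each pair. The paper's proof is terser but makes exactly the same argument.
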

  \begin{claimproof}
    Computing a nice tree decomposition is possible in time
    $2^{\poly(\tw)} \cdot \poly(n)$ \cite[Chapter~7.6]{book1}.
    Observe that handling the join nodes dominates the running time
    as we have to consider up to $L^2$ different pairs of classes.
    Since each new class can be computed in time $2^{\poly(\tw)}$,
    the claim follows.
  \end{claimproof}

  From \cref{lem:twUpper:cycle:numberOfTypesAndClasses} we know that
  for each node of the tree decomposition,
  the number of distinct classes is at most
  $2^{2^{\Oh(\tw \log \tw)}} \cdot \poly(n)$.
  Hence, the final running time directly follows from
  \cref{clm:twUpper:cycle:metaRuntime}
  which concludes the proof by \cref{clm:twUpper:cycle:dpCorrectness}.
\end{proof}

\section{\texorpdfstring%
{\boldmath\SigTwoP-Completeness Results}
{Sigma2P-Completeness Results}}
\label{sec:sig2p-complete}

In this section we show the completeness of the \CycleUndelHitPack
problem
and the \UndelHitPack{H} problem for
the second level of the polynomial hierarchy.
Formally, we prove 
\cref{thm:sig2p:completeness:cycles,thm:sig2p:completeness:H}.
\ifab{TODO: Unresolved references.}

Towards proving these results, we first show in
\cref{lem:sig2p:membership}
that the problems are contained in \SigTwoP.
Then, by the proof of
\cref{thm:sig2p:hardness:triangle},
which is given in \cref{sec:sig2p:hardness:triangle},
we show the \SigTwoP-hardness of \TriUndelHitPack
which we extend to \CycleUndelHitPack.
In \cref{sec:sig2p:hardness:H} we lift the hardness of
\TriUndelHitPack
to the general \UndelHitPack{H} problem by proving
\cref{lem:tw:triangleToAnyGraph}.

The \SigTwoP-completeness results for \CycleUndelHitPack and \UndelHitPack{H} from
\cref{thm:sig2p:completeness:cycles,,thm:sig2p:completeness:H}
then follow from \cref{lem:sig2p:membership} together with the
hardness results
from \cref{sec:sig2p:hardness:triangle,,sec:sig2p:hardness:H},
respectively.

\begin{lemma}
    \label{lem:sig2p:membership}
    For any connected graph $H$,
    the problems \UndelHitPack{H} and \CycleUndelHitPack are in \SigTwoP.
\end{lemma}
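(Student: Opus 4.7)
The plan is to establish membership in \SigTwoP by exhibiting the natural quantifier structure of the problem: the outer existential quantifier chooses the deletion set $S$, and the inner universal quantifier (equivalently, a \coNP check) verifies that $G-S$ admits no packing of size $\ell$. Both quantifiers range over objects of polynomial bit-length and the innermost predicate is polynomial-time checkable, which places the problem in $\Sigma_2^{\mathsf{P}} = \exists \cdot \coNP$.

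More concretely, I would argue as follows. For \UndelHitPack{H}, on input $(G, U, k, \ell)$, the \SigTwoP\ machine first existentially guesses a set $S \subseteq V(G) \setminus U$ with $|S| \le k$; this requires only $\Oh(|V(G)|)$ bits. It then needs to check that $G - S$ does \emph{not} contain $\ell$ pairwise vertex-disjoint subgraphs isomorphic to $H$. This check is in \coNP: its complement, ``$G - S$ contains $\ell$ pairwise vertex-disjoint copies of $H$,'' is in \NP, since an \NP-witness is simply a list of $\ell$ vertex subsets $V_1, \dots, V_\ell \subseteq V(G) \setminus S$ together with, for each $i$, a bijection $V_i \to V(H)$; one then verifies in polynomial time that the $V_i$ are pairwise disjoint, that each $|V_i| = |V(H)|$, and that each bijection induces a subgraph homomorphism (noting $H$ is of fixed size). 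Hence the entire procedure matches the $\exists \cdot \coNP$ template, so \UndelHitPack{H} is in \SigTwoP.

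The argument for \CycleUndelHitPack is essentially identical. After existentially guessing $S$, the verifier must check that $G - S$ contains no $\ell$ pairwise vertex-disjoint cycles. The complementary statement is again in \NP: the witness consists of $\ell$ pairwise vertex-disjoint cyclic sequences of vertices in $G - S$, each of length at least $3$; verifying that each sequence is indeed a cycle of $G - S$ and that the sequences are pairwise vertex-disjoint is a straightforward polynomial-time check. Hence \CycleUndelHitPack also falls into $\exists \cdot \coNP = \SigTwoP$.

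No real obstacle is anticipated here: the only point to be careful about is that the inner \NP\ certificate for the packing must be of polynomial size, which is immediate in both cases because a packing consists of at most $|V(G)|$ objects, each described by $\Oh(|V(G)|)$ bits. Consequently, together with the \SigTwoP-hardness results proved in the subsequent subsections, this will yield the \SigTwoP-completeness claims of \cref{thm:sig2p:completeness:cycles} and \cref{thm:sig2p:completeness:H}.
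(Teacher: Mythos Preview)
Your proposal is correct and takes essentially the same approach as the paper: guess the deletion set $S$ and use an \NP\ (equivalently, \coNP) check to verify that no packing of size $\ell$ survives. The only cosmetic difference is that the paper phrases membership via $\NP^{\NP}$ while you use the equivalent $\exists\cdot\coNP$ characterization.
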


\begin{proof}
    We observe that for a given graph $G$, the problem of deciding if
    there is a
    $H$-packing of size at least $\ell$ is in \NP, as the $H$-packing
    itself constitutes a polynomial witness. We assume oracle access
    to the
    problem.

    Next, we can solve \UndelHitPack{H} by guessing the deleted vertices
    non-deterministically and
    subsequently verifying that an $H$-packing of size at least
    $\ell$ does not
    exist in the remaining graph. This can be achieved with an
    \NP-oracle by
    querying it and negating the answer. Therefore, \UndelHitPack{H} can
    be solved
    in non-deterministic polynomial time using an $\NP$ oracle. Hence,
    $\UndelHitPack{H} \in \NP^{\NP} = \SigTwoP$. An analogous argument
    shows that
    $\CycleUndelHitPack \in \SigTwoP$.
\end{proof}

\subsection{\texorpdfstring%
{\boldmath Satisfiability Problems Complete for \SigTwoP}
{Satisfiability Problems Complete for Sigma2P}}
\label{sec:sig2p:hardness:sat}

In this section, we introduce some satisfiability problems
which are complete for the second level of the polynomial hierarchy.

We begin with the formal definition of the prototypical
\SigTwoP-complete problem, which is based on Boolean formulas in
disjoint normal form (DNF).

Note that in this paper, for a variable
$x$, we let $\lneg x$ denote the negation of $x$.
Moreover, we say
that $x$ is a positive literal and $\lneg x$ is a negative
literal.
Finally, note that a maximal group of disjoint literals in a DNF is
commonly referred to as a
\emph{term}.
For example, the $3$-DNF formula
$\psi = (x_1 \land x_2 \land \lneg x_3) \lor (x_4 \land x_5 \land \lneg x_1)$
has 5 variables, 6 literals and 2 terms.
One of the terms of $\psi$ is $(x_1 \land x_2 \land \lneg x_3)$.

\begin{definition}
    Let
    \QthreeDNFtwo denote the problem of deciding
    whether a
    given formula $\phi$ in disjunctive normal form with terms
    of length at most three on the variable set $X\cup Y$, where
    $X=\{x_1,\dots,x_m\}$ and $Y=\{y_1,\dots,y_n\}$ for some natural
    numbers $m$ and $n$, has an assignment to the variables in $X$ such
    that the remaining formula is a tautology.
    That is,
    \[
        \QthreeDNFtwo
        =\{\phi(x_1,\dots,x_m,y_1,\dots,y_n)\in\textup{3DNF}\mid
        \exists x_1,\dots,x_m\,\forall
        y_1,\dots,y_n\colon \phi=1\}
        .
    \]
\end{definition}

In his seminal work,
Stockmeyer~\cite{polynomialhierarchy} proved this problem to be
\SigTwoP-complete.

\begin{theorem}[{\cite[Theorem~4.1.2]{polynomialhierarchy}}]\label{thm:qdnf}
    The problem \QthreeDNFtwo is log-space \SigTwoP-complete.
\end{theorem}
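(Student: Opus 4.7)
The theorem has two halves: membership in \SigTwoP\ and log-space \SigTwoP-hardness. Membership is immediate: an instance of \QthreeDNFtwo\ asks precisely whether $\exists x_1,\dots,x_m\,\forall y_1,\dots,y_n\colon\phi=1$, which is exactly the $\exists\forall$ quantifier pattern defining \SigTwoP, and the inner predicate (evaluating a 3-DNF on a given assignment) is polynomial-time computable. So the interesting direction is the hardness part.

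For hardness, the plan is to reduce from the canonical \PiTwoP-complete problem of deciding truth of $\forall X\,\exists Y\,\psi(X,Y)$ for an arbitrary Boolean formula $\psi$, and then dualize. The first step would be to apply Tseitin's transformation to $\psi$: introduce a fresh auxiliary variable $z_\sigma$ for every subformula $\sigma$ of $\psi$, and add local 3-clauses enforcing that $z_\sigma$ equals the truth value of $\sigma$, culminating in a clause asserting that the variable for the root equals $1$. This produces a 3-CNF formula $\psi'(X,Y,Z)$ with the pointwise property that for every assignment to $X,Y$ the vector $Z$ is uniquely determined and $\psi(X,Y)\equiv\exists Z\,\psi'(X,Y,Z)$. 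Substituting gives $\forall X\,\exists Y\,\exists Z\,\psi'(X,Y,Z)$, which is still \PiTwoP-complete and now has a 3-CNF matrix. Complementing this language swaps the quantifiers and negates the matrix, and by de Morgan the negation of a 3-CNF is a 3-DNF; merging the two universal blocks yields exactly an instance of \QthreeDNFtwo\ with existentially quantified variables $X$ and universally quantified variables $Y\cup Z$. Because Tseitin processes one subformula at a time and the de Morgan step is clause-by-clause, the entire transformation is computable in logarithmic space.

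The only delicate point is moving the auxiliary existential block $\exists Z$ past the inner $\exists Y$: this is legitimate precisely because the Tseitin equivalence $\psi(X,Y)\equiv\exists Z\,\psi'(X,Y,Z)$ is pointwise in $X,Y$, so replacing $\psi$ by $\exists Z\,\psi'$ inside $\forall X\,\exists Y$ preserves the truth value for every fixed $X$. After that, complementing the $\PiTwoP$ statement and reading off the partition of variables is purely syntactic, so no further combinatorial content is needed. The main work left is therefore bookkeeping across the transformation and verifying explicitly that the log-space machine can output the 3-DNF one term at a time without storing intermediate formulas.
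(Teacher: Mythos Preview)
The paper does not prove this theorem: it is stated with a citation to Stockmeyer's original work and used as a black box. So there is no ``paper's proof'' to compare against.

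That said, your proposal is a correct reconstruction of the standard argument. The one stylistic simplification you might consider is to skip the detour through $\Pi_2^{\mathsf P}$ and dualization: starting directly from the canonical $\Sigma_2^{\mathsf P}$-complete problem $\exists X\,\forall Y\,\psi(X,Y)$, apply the Tseitin transformation to $\lnot\psi$ (rather than to $\psi$) to obtain a 3-CNF $\chi(X,Y,Z)$ with $\lnot\psi(X,Y)\equiv\exists Z\,\chi(X,Y,Z)$ pointwise in $X,Y$. Then $\psi\equiv\forall Z\,\lnot\chi$, so $\exists X\,\forall Y\,\psi\equiv\exists X\,\forall(Y,Z)\,\lnot\chi$, and $\lnot\chi$ is already a 3-DNF. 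This is logically equivalent to what you wrote but avoids the extra complementation step and makes the log-space bookkeeping a touch more transparent. Either way, the key point you correctly identify is that the auxiliary $Z$-variables land on the side of the inner quantifier block because the Tseitin equivalence holds pointwise.
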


We define the related problem where the underlying SAT-formula is in CNF.

\begin{definition}
    Denote by
    \QthreeCNFtwo the
    problem of deciding
    whether a
    given formula $\phi$ in conjunctive normal form with clauses of
    length at most three on the variable set $X\cup Y$, where
    $X=\{x_1,\dots,x_m\}$ and $Y=\{y_1,\dots,y_n\}$ for some natural
    numbers $m$ and $n$, has an assignment to the variables in $X$ such
    that remaining formula is unsatisfiable.
    That is,
    \[
        \QthreeCNFtwo=\{\phi(x_1,\dots,x_m,y_1,\dots,y_n)\in\textup{3CNF}\mid
        \exists
        x_1,\dots,x_m\,\nexists
        y_1,\dots,y_n\colon \phi=1\}
        .
    \]
\end{definition}

By using De Morgan's laws and basic properties of quantification
and their negation,
we can extend the \SigTwoP-completeness of \QthreeDNFtwo
from \cref{thm:qdnf} to \QthreeCNFtwo.

\begin{corollary}\label{lem:qcnf}
    The problem
    \QthreeCNFtwo is
    log-space
    \SigTwoP-complete.
\end{corollary}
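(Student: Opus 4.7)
The plan is to derive the log-space \SigTwoP-completeness of \QthreeCNFtwo directly from \cref{thm:qdnf} via the natural dualisation reduction. First, I would define a map $\phi \mapsto \phi'$ as follows: given a 3DNF formula $\phi = \bigvee_{i=1}^{k} (\ell_{i,1} \land \ell_{i,2} \land \ell_{i,3})$ over the variable set $X \cup Y$, output the 3CNF formula $\phi' = \bigwedge_{i=1}^{k} (\lneg\ell_{i,1} \lor \lneg\ell_{i,2} \lor \lneg\ell_{i,3})$ on the same variables, interpreting the negation of a negative literal as the corresponding positive literal. This transformation is purely syntactic: a log-space transducer scans $\phi$ token by token, rewriting each $\land$ that joins terms as $\lor$ inside the image clause, each $\lor$ between terms as $\land$ between clauses, and each literal as its negation. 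Only a constant number of counters are needed to track the position within a term, so the reduction is computable in logarithmic space.

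Next, I would verify correctness by chaining equivalences. By De Morgan's laws, $\phi'$ is logically equivalent to $\lneg \phi$, so for every assignment to $X \cup Y$ we have $\phi' = 1$ iff $\phi = 0$. Consequently,
\[
\exists x_1,\dots,x_m\,\forall y_1,\dots,y_n\colon \phi = 1
\quad \Longleftrightarrow \quad
\exists x_1,\dots,x_m\,\forall y_1,\dots,y_n\colon \phi' = 0,
\]
and the right-hand side is precisely the statement that $\exists x_1,\dots,x_m\,\nexists y_1,\dots,y_n\colon \phi' = 1$. Hence $\phi \in \QthreeDNFtwo$ iff $\phi' \in \QthreeCNFtwo$, establishing the \SigTwoP-hardness of \QthreeCNFtwo via \cref{thm:qdnf}.

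For membership in \SigTwoP, I would reuse the oracle template from \cref{lem:sig2p:membership}: nondeterministically guess an assignment to $X$, then query an \NP-oracle on the residual 3CNF formula over $Y$ to decide satisfiability and negate the answer to certify that no satisfying assignment to $Y$ exists. This places \QthreeCNFtwo in $\NP^{\NP} = \SigTwoP$, completing the proof.

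The argument contains no genuine obstacle: the only thing to be careful about is the quantifier bookkeeping, namely that negating the inner matrix flips $\forall\phi = 1$ into $\forall\phi' = 0$ (equivalently $\nexists\phi' = 1$) while leaving the outer $\exists$ unchanged; once this is written down, both directions of the hardness equivalence fall out immediately.
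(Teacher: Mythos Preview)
Your proposal is correct and takes essentially the same approach as the paper: both use the De~Morgan dualisation $\phi \mapsto \lneg\phi$ (negate each literal and swap $\land/\lor$) as a log-space reduction from \QthreeDNFtwo to \QthreeCNFtwo, and verify correctness via the same chain of quantifier equivalences. The only cosmetic difference is that you spell out the \SigTwoP-membership argument explicitly via an \NP-oracle, whereas the paper leaves this implicit.
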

\begin{proof}
    For any given 3DNF-formula $\phi(x_1,\dots,x_m,y_1,\dots,y_n)$, let
    $\psi$ be the 3CNF-formula obtained from $\phi$ by replacing every
    literal by its negation and swapping every $\vee$ for $\wedge$ and
    vice versa.

    This yields a
    log-space
    reduction from \QthreeDNFtwo to
    \QthreeCNFtwo.
    On the one hand, logarithmic space is clearly sufficient to
    transform $\phi$ into $\psi$.
    On the other hand, $\phi$ is equivalent to
    $\neg\psi(x_1,\dots,x_m,y_1,\dots,y_n)$,
    by the laws of De Morgan and distributivity.
    Therefore, the formula
    \begin{align*}
        \exists x_1,\dots,x_m\,\forall y_1,\dots,y_n\colon
        \phi(x_1,\dots,x_m,y_1,\dots,y_n)=1&
    \intertext{is equivalent to}
        \exists x_1,\dots,x_m\,\forall y_1,\dots,y_n\colon
        \neg\psi(x_1,\dots,x_m,y_1,\dots,y_n)=1&,
    \intertext{which is in turn equivalent to}
        \exists x_1,\dots,x_m\,\nexists y_1,\dots,y_n\colon
        \psi(x_1,\dots,x_m,y_1,\dots,y_n)=1&.
    \qedhere
    \end{align*}
\end{proof}

As a next step we first formally define the size of a formula.

\begin{definition}[Formula Size]
    The \emph{size} $|\phi|$ of a CNF-formula $\phi$ is the number
    of its
    clauses. We say that $\phi$ is smaller than another formula $\phi'$
    if $|\phi|<|\phi'|$.
\end{definition}

The last two problems we introduce
consider subformulas of a given CNF-formula.
For this reason we first formally introduce this intuitive concept.

\begin{definition}[Subformula]
    Given a CNF-formula $\phi=C_1\wedge\dots\wedge C_m$ consisting of
    $m$ clauses, we call any formula $\phi'$ that results from deleting
    an arbitrary subset of these clauses a \emph{subformula} of $\phi$
    and denote this by $\phi'\subseteq \phi$.
    If additionally $\phi'\neq\phi$, that is, the deleted subset is
    nonempty, the resulting subformula is called
    \emph{proper} and we write $\phi'\subsetneq \phi$.
    If $\psi$ is a (proper) subformula of $\psi'$, then $\psi'$ is a
    (proper) superformula of $\psi$.
\end{definition}

With this definition, we define the problems \SUS
and its corresponding version \ThreeCNFSUS for 3CNFs.

\begin{definition}[Smallest Unsatisfiable Subformula]
    \SUS is the problem of deciding, given a formula $\phi$ and
    an integer $k$, whether $\phi$ has an unsatisfiable subformula of
    size at most $k$.
    \ThreeCNFSUS is the problem restricted to CNF formulas
    with
    at most three literals per clause.
\end{definition}

Umans, Fortnow and Killian have reportedly proven
the \SigTwoP-completeness of \SUS
(calling the problem MIN DNF
TAUTOLOGY)~\cite[p.~35,~L7, MIN DNF TAUTOLOGY]{sigact-column-37}, but
only Umans published a proof, and only for the version that does not
limit the number of literals in a clause~\cite[Theorem~2]{Umans99}.
Several years later and apparently without knowledge of Uman's proof,
Liberatore~\cite[Theorem~2]{Liberatore05} reproved this statement by a
more direct reduction, but still without the restriction to
at most three literals per clause.
Although not strictly necessary for our purposes, we extend this
result by providing the missing proof of
the \SigTwoP-hardness for \ThreeCNFSUS.

\begin{restatable}{theorem}{ThreeSUSCompleteness}\label{thm:3-sus-complete}
    \textup{3CNF-SUS} is \SigTwoP-complete.
\end{restatable}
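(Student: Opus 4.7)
The plan is to establish both \SigTwoP-membership and \SigTwoP-hardness of \ThreeCNFSUS. Membership is straightforward: given an instance $(\phi, k)$, an \NP-machine can guess a subformula $\phi' \subseteq \phi$ with $|\phi'| \le k$ and then call a \coNP-oracle (for UNSAT) to verify that $\phi'$ is unsatisfiable. This yields $\ThreeCNFSUS \in \NP^{\coNP} = \SigTwoP$.

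For hardness, I would reduce from \SUS in its unrestricted form, which is \SigTwoP-complete by Umans~\cite{Umans99} and Liberatore~\cite{Liberatore05}. The task is to transform an arbitrary CNF instance into an equivalent 3CNF instance while preserving the sizes of minimal unsatisfiable subformulas up to a fixed scaling factor. Given $(\phi, k)$ with $\phi$ in CNF, I apply the standard Tseitin-style transformation to every long clause: a clause $C_j = \ell_1 \lor \cdots \lor \ell_{m_j}$ with $m_j \ge 4$ is replaced by the chain
\[
(\ell_1 \lor \ell_2 \lor z_j^{(1)}),\ (\lneg z_j^{(1)} \lor \ell_3 \lor z_j^{(2)}),\ \ldots,\ (\lneg z_j^{(m_j-3)} \lor \ell_{m_j-1} \lor \ell_{m_j}),
\]
using $m_j - 3$ fresh auxiliary variables $z_j^{(1)}, \ldots, z_j^{(m_j-3)}$ private to $C_j$. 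To obtain a uniform per-clause parameter cost, I would additionally pad every original clause up to the common length $M \deff \max_j m_j$ using padding literals $\lneg u_j^{(r)}$ with fresh variables $u_j^{(r)}$ and enforce $u_j^{(r)} = \true$ by adding the unit clause $(u_j^{(r)})$. Setting $k' \deff k \cdot (M-2) + q$, where $q$ is the total number of padding unit clauses, should give the desired reduction.

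The heart of the correctness argument is the claim that every minimal unsatisfiable subformula $\psi'$ of the constructed 3CNF corresponds uniquely to one of the original CNF: for each original clause $C_j$, either all clauses of its chain together with all associated padding unit clauses lie in $\psi'$, or none of them do. The proof idea is that if $\psi'$ contains a proper nonempty subset of a chain, then some boundary clause $C_j^{(s)} \in \psi'$ contains an auxiliary variable $z$ occurring in no other clause of $\psi'$; assigning $z$ freely always satisfies $C_j^{(s)}$, so $C_j^{(s)}$ is redundant in $\psi'$, contradicting minimality. The same reasoning forces every padding unit clause $(u_j^{(r)})$ to be in $\psi'$ whenever the associated chain is, because otherwise $u_j^{(r)}$ is free, can be set to falsify the padding literal $\lneg u_j^{(r)}$, and thereby renders the affected chain clause redundant.

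The main obstacle is making the parameter accounting tight and ruling out spurious small unsatisfiable subformulas introduced by the padding mechanism itself. Verifying that padding unit clauses and chain clauses always appear in mandatory bundles within every minimal unsatisfiable subformula --- and that no cheap shortcut unsatisfiable subformula can be assembled purely from padding infrastructure --- is the most delicate step and requires a careful case analysis of which unit clauses $(u_j^{(r)})$ are selected versus omitted. A clean alternative that sidesteps this bookkeeping is to reduce directly from \QthreeCNFtwo (which is \SigTwoP-complete by \cref{lem:qcnf}) via selector gadgets: for each existentially quantified variable $x_i$, introduce a pair of 3-literal selector clauses whose inclusion in a minimal unsatisfiable subformula forces a specific truth value of $x_i$, then show that the small unsatisfiable subformulas of the constructed 3CNF correspond exactly to $X$-assignments making $\phi(X, Y)$ unsatisfiable over $Y$.
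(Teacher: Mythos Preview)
Your membership argument and the all-or-none analysis of Tseitin chains are correct, but the primary reduction fails at precisely the point you flag as delicate, and not in a way that more bookkeeping repairs. The choice $k'=k(M-2)+q$ is too generous because $q$ counts padding units for \emph{all} original clauses, including those whose chains are not selected. Concretely, let $\phi$ consist of the $16$ length-$4$ clauses over $v_1,\dots,v_4$ (one per sign pattern; this set is minimally unsatisfiable) together with two satisfiable length-$3$ clauses on fresh variables. The minimum unsatisfiable subformula of $\phi$ then has size $16$, so $(\phi,15)$ is a \no-instance of \SUS. Yet $M=4$, $q=2$, $k'=15\cdot 2+2=32$, and the chains of the sixteen long clauses already form an unsatisfiable 3CNF subformula of size $16\cdot 2=32=k'$, making $(\psi,32)$ a \yes-instance. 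More generally no single threshold works: to accommodate every possible size-$k$ unsatisfiable set one needs $k'\ge k(M-2)+\sum_{\text{largest }k}p_j$, while to exclude every size-$(k+1)$ set one needs $k'<(k+1)(M-2)+\sum_{\text{smallest }k+1}p_j$, and when clause lengths vary these two ranges need not meet.

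Your alternative---reducing from \QthreeCNFtwo via selector gadgets---is exactly the paper's route, but the sketch underestimates what is required. With a constant-size selector per existential variable, choosing \emph{both} options for a single $x_i$ already yields a tiny unsatisfiable subformula (for instance $(x_i\lor s_i)\wedge(\lneg s_i)\wedge(\lneg x_i\lor t_i)\wedge(\lneg t_i)$), making the reduced instance trivially \yes. The paper's construction defeats this by making each selector an implication chain $X_i$ or $Z_i$ of length $2t+2$ (with $t$ the number of clauses of $\phi$) and funnelling all chains through a binary implication tree $W\wedge V$ that every unsatisfiable subformula must contain in full; the budget is then tuned so that exactly one chain per variable fits together with the $2t$ split-clause gadgets $\Gamma_j$, while taking both chains for any single variable would already overshoot the remaining slack. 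The chain length $2t+2$ is precisely what makes the parameter accounting close---this is the missing idea in your sketch.
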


\begin{proof}
    For membership in \SigTwoP, guess a subformula of size $k$
    and verify its
    unsatisfiability with the \NP-oracle.

    We now
    reduce \QthreeCNFtwo
    to \ThreeCNFSUS.
    Given an instance
    $\phi(x_1,\dots,x_m,y_1,\dots,y_n)
    =C_1\wedge\dots\wedge C_t$
    with $t$ clauses
    for the problem \QthreeCNFtwo,
    that is normalized
    to contain exactly three
    literals per
    clause by duplicating literals if necessary.
    First, replace every
    literal that is a positive occurrence of $x_i$
    by a negative
    occurrence of the new variable $z_i$, and write
    the $t$
    clauses of the resulting formula as
    $(\lambda_1^1\lor\lambda_1^2\lor\lambda_1^3)\land
    \dots\land(\lambda_t^1\lor\lambda_t^2\lor\lambda_t^3)$.
    Note that any of these literals might be a negative or
    positive occurrence of $y_j$ for any $j\in\numb n$ since we
    did not modify these variables at all.
    The instance for \ThreeCNFSUS is
    $(\psi,k)$, where, using the
    definition $s=\lceil\log m\rceil$, we have
    $%
    k=2^{s+1}-(m+1)+(2t+2)m+2t\\%
	$
    and
    $\psi=W\wedge
    V\wedge \bigwedge_{i=1}^m(X_i\wedge Z_i)\wedge \bigwedge_{j=1}^t
    \Gamma_j$ with
    \begin{align*}
        W&=\bigwedge_{i=1}^{2^s-1}(\lneg w_{2i}\vee
            \lneg
            w_{2i+1}\vee w_i),\\
        V&=\bigwedge_{i=2^s}^{2^{s+1}-(m+1)}(w_i),\\
        X_i&=(x_i^{2t})
        \wedge(\lneg x_i^1\vee x_i)
        \wedge(\lneg x_i\vee w_{2^{s+1}-i})
        \wedge\bigwedge_{j=1}^{2t-1}(\lneg x_i^{j+1}\vee x_i^j)
        ,\\
        Z_i&=(z_i^{2t})
        \wedge(\lneg z_i^1\vee z_i)
        \wedge(\lneg z_i\vee w_{2^{s+1}-i})
        \wedge\bigwedge_{j=1}^{2t-1}(\lneg z_i^{j+1}\vee z_i^j)
        ,\text{ and}\\
        \Gamma_i&=
            (\lneg w_1\vee \lambda_i^1\vee u_i)
            \wedge
            (\lneg u_i\vee \lambda_i^2\vee \lambda_i^3).
    \end{align*}
\begin{figure}[th!]
\centering
   \begin{tikzpicture}[baseline={(0,0)},
upperbrace/.style={decoration={calligraphic 
brace,amplitude=5pt,raise=5.5pt,aspect=.5},line 
width=1.2pt,xshift=2pt,decorate},
underbrace/.style={decoration={calligraphic 
brace,amplitude=5pt,mirror,raise=5.5pt,aspect=.5},line 
width=1.2pt,xshift=2pt,decorate},
simpbrace/.style={decoration={brace,amplitude=5pt,mirror,raise=5.5pt,aspect=.5},line
 width=1.2pt,xshift=2pt,decorate},
 implies/.style={-{Straight Barb[scale=.8]},line width=.6pt,},
]
\newcommand{\upperbraced}[3]{
\node[anchor=west] (v#1) at (#3) {#2};
\draw[upperbrace] (#3) -- node(n#1)[above=3.5pt] {}  +(1.17,0);
}
\newcommand{\underbraced}[3]{
\node[anchor=west] (v#1) at (#3) {#2};
\draw[underbrace] (#3) -- node(n#1)[below=3pt] {}  +(1.17,0);
}
\newcommand{\underbracedw}[3]{
\node[anchor=west] (v#1) at (#3) {#2};
\draw[underbrace] (#3) -- node(n#1)[below=3pt] {}  +(1.36,0);
}
\newcommand{\underbracedww}[3]{
\node[anchor=west] (v#1) at (#3) {#2};
\draw[underbrace] (#3) -- node(n#1)[below=3pt] {}  +(1.66,0);
}
\newcommand{\underbracedwww}[3]{
\node[anchor=west] (v#1) at (#3) {#2};
\draw[underbrace] (#3) -- node(n#1)[below=3pt] {}  +(1.9,0);
}
\newcommand{\underbracedwwww}[3]{
\node[anchor=west] (v#1) at (#3) {#2};
\draw[underbrace] (#3) -- node(n#1)[below=3pt] {}  +(2.15,0);
}
\newcommand{\underbracedwwwww}[3]{
\node[anchor=west] (v#1) at (#3) {#2};
\draw[underbrace] (#3) -- node(n#1)[below=3pt] {}  +(2.75,0);
}
\newcommand{\underbracedwwwwww}[3]{
\node[anchor=west] (v#1) at (#3) {#2};
\draw[underbrace] (#3) -- node(n#1)[below=3pt] {}  +(3,0);
}

\newcommand{\fromright}[2]{
\draw[implies]
(n#1)[shift={(2ex,-2.7ex)}]edge[transform 
canvas={xshift=-2.5pt,yshift=1pt}](n#2);}
\newcommand{\fromleft}[2]{
\draw[implies] (n#1)[shift={(-2ex,-2.7ex)}]edge[transform
canvas={xshift=2.5pt,yshift=1pt}](n#2);}
\newcommand{\toright}[2]{
\draw[implies] (n#1)edge[transform
canvas={xshift=2.5pt,yshift=1pt}]([shift={(-2ex,3.3ex)}]n#2);}
\newcommand{\toleft}[2]{
\draw[implies] (n#1)edge[transform
canvas={xshift=-2.5pt,yshift=1pt}]([shift={(2ex,3.3ex)}]n#2);}

\newcommand{\torightw}[2]{
\draw[implies] (n#1)edge[transform
canvas={xshift=2.8pt,yshift=-2pt}]([shift={(2.2ex,3.3ex)}]n#2);}
\newcommand{\toleftw}[2]{
\draw[implies] (n#1)edge[transform
canvas={xshift=-2.8pt,yshift=-2pt}]([shift={(-2.2ex,3.3ex)}]n#2);}

\newcommand{\torightww}[2]{
\draw[implies] (n#1)edge[transform
canvas={xshift=2.8pt,yshift=-2pt}]([shift={(2.2ex,3.3ex)}]n#2);}
\newcommand{\toleftww}[2]{
\draw[implies] (n#1)edge[transform
canvas={xshift=-2.8pt,yshift=-2pt}]([shift={(-2.2ex,3.3ex)}]n#2);}

\upperbraced{1}{$\lambda_1^1\vee u_1$}{-5,-2}
\upperbraced{2}{$\lambda_1^2\vee \lambda_1^3$}{-5,-3.0}
\fromright 12;
\upperbraced{3}{$\lambda_2^1\vee u_2$}{-3,-2}
\upperbraced{4}{$\lambda_2^2\vee \lambda_2^3$}{-3,-3.0}
\fromright 34;
\upperbraced{5}{$\lambda_3^1\vee u_3$}{-1,-2}
\upperbraced{6}{$\lambda_3^2\vee \lambda_3^3$}{-1,-3.0}
\fromright 56;
\upperbraced{7}{$\lambda_4^1\vee u_4$}{1,-2}
\upperbraced{8}{$\lambda_4^2\vee \lambda_4^3$}{1,-3.0}
\fromright 78;
\upperbraced{t1}{$\lambda_t^1\vee u_t$}{5,-2}
\upperbraced{t2}{$\lambda_t^2\vee \lambda_t^3$}{5,-3.0}
\fromright{t1}{t2};

\node[outer sep=0pt] (w1) at (0,0) {$w_1$};

\draw[implies] (w1)edge[out=180,in=90](n1);
\draw[implies] (w1)edge[out=195,in=90](n3);
\draw[implies] (w1)edge[out=250,in=90](n5);
\draw[implies] (w1)edge[out=340,in=90](n7);
\draw[implies] (w1)edge[out=0,in=90](nt1);

\underbracedw{w2}{$w_2\wedge w_3$}{0-.8,1}
\draw[implies] (nw2) edge ([shift={(-.25ex,1ex)}]w1);
\underbracedw{w4}{$w_4\wedge w_5$}{-.8-.8,2}
\underbracedw{w6}{$w_6\wedge w_7$}{.8-.8,2}
\toleftw{w4}{w2};
\torightw{w6}{w2};
\underbracedw{w8}{$w_8\wedge w_9$}{-2.65-.8,3}
\underbracedww{w10}{$w_{10}\wedge w_{11}$}{-1.05-.8,3}
\underbracedww{w12}{$w_{12}\wedge w_{13}$}{.9-.8,3}
\underbracedww{w14}{$w_{14}\wedge w_{15}$}{2.8-.8,3}

\draw[line width=1.5pt, line cap=round, dash pattern=on 
0pt off 2.6\pgflinewidth] (-4.5,3.9)--(-4.2,3.75);
\draw[line width=1.5pt, line cap=round, dash pattern=on 
0pt off 2.6\pgflinewidth] (-2.1,4)--(-1.8,3.75);
\draw[line width=1.5pt, line cap=round, dash pattern=on 
0pt off 2.6\pgflinewidth] (0,4.1)--(0,3.75);
\draw[line width=1.5pt, line cap=round, dash pattern=on 
0pt off 2.6\pgflinewidth] (2.1,4)--(1.8,3.75);
\draw[line width=1.5pt, line cap=round, dash pattern=on 
0pt off 2.6\pgflinewidth] (4.5,3.85)--(4.2,3.75);

\draw[line width=1.5pt, line cap=round, dash pattern=on 
0pt off 2.6\pgflinewidth] (3.4,-1.9)--(3.7,-1.9);

\draw[line width=1.5pt, line cap=round, dash pattern=on 
0pt off 2.6\pgflinewidth] (-3.7,5.55)--(-3.4,5.55);
\draw[line width=1.5pt, line cap=round, dash pattern=on 
0pt off 2.6\pgflinewidth] (-3.7,4.97)--(-3.4,4.97);

\draw[line width=1.5pt, line cap=round, dash pattern=on 
0pt off 2.6\pgflinewidth] (2.7,9.5)--(3.0,9.5);
\draw[line width=1.5pt, line cap=round, dash pattern=on 
0pt off 2.6\pgflinewidth] (2.7,4.97)--(3.0,4.97);

\node () at (-4.9,-.9) {$\Gamma_1$};
\node () at (-2.9,-.9) {$\Gamma_2$};
\node () at (-0.9,-.9) {$\Gamma_3$};
\node () at (2,-.9) {$\Gamma_4$};
\node () at (6,-.9) {$\Gamma_t$};

\node () at (-6,3.5) {$W$};

\node () at (-6.6,6.4) {$V$};

\node () at (0.6,13.4) {$X_i\cup Z_i$};
\node () at (4.6,13.4) {$X_2\cup Z_2$};
\node () at (7.1,13.4) {$X_1\cup Z_1$};

\draw[implies] (-2.6,2.58)--(-1.42,2.16);
\draw[implies] (-0.9,2.58)--(-0.43,2.16);
\draw[implies] (0.95,2.58)--(0.45,2.16);
\draw[implies] (2.8,2.58)--(1.4,2.16);
\underbracedwww{ws1}{$w_{2^s}\wedge w_{2^s+1}$}{-6-.8,5}
\underbracedwwwwww{ws3}{$w_{2^{s+1}-i-1}\wedge 
w_{2^{s+1}-i}$}{-1.2-.8,5}
\underbracedwwwww{ws4}{$w_{2^{s+1}-2}\wedge w_{2^{s+1}-1}$}{5.64-.8,5}

\draw[implies] (-6.5,5.8)--(-6.5,5.3);
\draw[implies] (-5.55,5.8)--(-5.55,5.3);
\draw[implies] (-1.72,5.8)--(-1.72,5.3);

\node[outer sep=10pt,anchor=west] (xi) at (-.5,6) {$x_i$};
\node[outer sep=10pt,anchor=west] (xi1) at (-.5,7) {$x_i^1$};
\node[outer sep=10pt,anchor=west] (xi2) at (-.5,8) {$x_i^2$};
\node[outer sep=10pt,anchor=west] (xi3) at (-.5,9) {$x_i^3$};
\node[outer sep=10pt,anchor=west] (xi2t1) at (-.5,11) {$x_i^{2t-1}$};
\node[outer sep=10pt,anchor=west] (xi2t) at (-.5,12) {$x_i^{2t}$};
\node[outer sep=10pt,anchor=west] (zi) at (.5,6) {$z_i$};
\node[outer sep=10pt,anchor=west] (zi1) at (.5,7) {$z_i^1$};
\node[outer sep=10pt,anchor=west] (zi2) at (.5,8) {$z_i^2$};
\node[outer sep=10pt,anchor=west] (zi3) at (.5,9) {$z_i^3$};
\node[outer sep=10pt,anchor=west] (zi2t1) at (.5,11) {$z_i^{2t-1}$};
\node[outer sep=10pt,anchor=west] (zi2t) at (.5,12) {$z_i^{2t}$};

\draw[implies] (.08,12.75)--(.08,12.25);
\draw[implies] (.08,11.75)--(.08,11.25);
\draw[implies] (.08,10.75)--(.08,10.25);
\draw[line width=1.5pt, line cap=round, dash pattern=on 
0pt off 2.6\pgflinewidth] (.08,9.85)--(.08,9.45);
\draw[implies] (.08,8.75)--(.08,8.25);
\draw[implies] (.08,7.75)--(.08,7.25);
\draw[implies] (.08,6.75)--(.08,6.25);
\draw[implies] (.08,5.75)--(.08,5.25);

\draw[implies] (1.08,12.75)--(1.08,12.25);
\draw[implies] (1.08,11.75)--(1.08,11.25);
\draw[implies] (1.08,10.75)--(1.08,10.25);
\draw[line width=1.5pt, line cap=round, dash pattern=on 
0pt off 2.6\pgflinewidth] (1.08,9.85)--(1.08,9.45);
\draw[implies] (1.08,8.75)--(1.08,8.25);
\draw[implies] (1.08,7.75)--(1.08,7.25);
\draw[implies] (1.08,6.75)--(1.08,6.25);
\draw[implies] (1-.08,5.75+.08)--(0.25,5.25);

\node[outer sep=10pt,anchor=west] (x2) at (3.5,6) {$x_2$};
\node[outer sep=10pt,anchor=west] (x21) at (3.5,7) {$x_2^1$};
\node[outer sep=10pt,anchor=west] (x22) at (3.5,8) {$x_2^2$};
\node[outer sep=10pt,anchor=west] (x23) at (3.5,9) {$x_3^2$};
\node[outer sep=10pt,anchor=west] (x22t1) at (3.5,11) {$x_2^{2t-1}$};
\node[outer sep=10pt,anchor=west] (x22t) at (3.5,12) {$x_2^{2t}$};
\node[outer sep=10pt,anchor=west] (z2) at (4.5,6) {$z_2$};
\node[outer sep=10pt,anchor=west] (z21) at (4.5,7) {$z_2^1$};
\node[outer sep=10pt,anchor=west] (z22) at (4.5,8) {$z_2^2$};
\node[outer sep=10pt,anchor=west] (z23) at (4.5,9) {$z_2^3$};
\node[outer sep=10pt,anchor=west] (z22t1) at (4.5,11) {$z_2^{2t-1}$};
\node[outer sep=10pt,anchor=west] (z22t) at (4.5,12) {$z_2^{2t}$};

\draw[implies] (4.08,12.75)--(4.08,12.25);
\draw[implies] (4.08,11.75)--(4.08,11.25);
\draw[implies] (4.08,10.75)--(4.08,10.25);
\draw[line width=1.5pt, line cap=round, dash pattern=on 
0pt off 2.6\pgflinewidth] (4.08,9.85)--(4.08,9.45);
\draw[implies] (4.08,8.75)--(4.08,8.25);
\draw[implies] (4.08,7.75)--(4.08,7.25);
\draw[implies] (4.08,6.75)--(4.08,6.25);
\draw[implies] (4.08+.08,5.75)--(5-.08,5.25);

\draw[implies] (5.08,12.75)--(5.08,12.25);
\draw[implies] (5.08,11.75)--(5.08,11.25);
\draw[implies] (5.08,10.75)--(5.08,10.25);
\draw[line width=1.5pt, line cap=round, dash pattern=on 
0pt off 2.6\pgflinewidth] (5.08,9.85)--(5.08,9.45);
\draw[implies] (5.08,8.75)--(5.08,8.25);
\draw[implies] (5.08,7.75)--(5.08,7.25);
\draw[implies] (5.08,6.75)--(5.08,6.25);
\draw[implies] (5.08,5.75)--(5.08,5.25);

\node[outer sep=10pt,anchor=west] (x1) at (6,6) {$x_1$};
\node[outer sep=10pt,anchor=west] (x11) at (6,7) {$x_1^1$};
\node[outer sep=10pt,anchor=west] (x12) at (6,8) {$x_1^2$};
\node[outer sep=10pt,anchor=west] (x13) at (6,9) {$x_1^3$};
\node[outer sep=10pt,anchor=west] (x12t1) at (6,11) {$x_1^{2t-1}$};
\node[outer sep=10pt,anchor=west] (x12t) at (6,12) {$x_1^{2t}$};
\node[outer sep=10pt,anchor=west] (z1) at (7,6) {$z_1$};
\node[outer sep=10pt,anchor=west] (z11) at (7,7) {$z_1^1$};
\node[outer sep=10pt,anchor=west] (z12) at (7,8) {$z_1^2$};
\node[outer sep=10pt,anchor=west] (z13) at (7,9) {$z_1^3$};
\node[outer sep=10pt,anchor=west] (z12t1) at (7,11) {$z_1^{2t-1}$};
\node[outer sep=10pt,anchor=west] (z12t) at (7,12) {$z_1^{2t}$};

\draw[implies] (6.58,12.75)--(6.58,12.25);
\draw[implies] (6.58,11.75)--(6.58,11.25);
\draw[implies] (6.58,10.75)--(6.58,10.25);
\draw[line width=1.5pt, line cap=round, dash pattern=on 
0pt off 2.6\pgflinewidth] (6.58,9.85)--(6.58,9.45);
\draw[implies] (6.58,8.75)--(6.58,8.25);
\draw[implies] (6.58,7.75)--(6.58,7.25);
\draw[implies] (6.58,6.75)--(6.58,6.25);
\draw[implies] (6.58,5.75)--(6.58,5.25);

\draw[implies] (7.58,12.75)--(7.58,12.25);
\draw[implies] (7.58,11.75)--(7.58,11.25);
\draw[implies] (7.58,10.75)--(7.58,10.25);
\draw[line width=1.5pt, line cap=round, dash pattern=on 
0pt off 2.6\pgflinewidth] (7.58,9.85)--(7.58,9.45);
\draw[implies] (7.58,8.75)--(7.58,8.25);
\draw[implies] (7.58,7.75)--(7.58,7.25);
\draw[implies] (7.58,6.75)--(7.58,6.25);
\draw[implies] (7.58-.08,5.75+.08)--(6.75,5.25);

\draw[dashed, rounded corners] (-5.2, -1.2) rectangle (-3.5,-3.5) {};
\draw[dashed, rounded corners] (-3.2, -1.2) rectangle (-1.5,-3.5) {};
\draw[dashed, rounded corners] (-1.2, -1.2) rectangle (+0.5,-3.5) {};
\draw[dashed, rounded corners] (+0.8, -1.2) rectangle (+2.5,-3.5) {};
\draw[dashed, rounded corners] (+4.8, -1.2) rectangle (+6.5,-3.5) {};

\draw[dashed, rounded corners] (-0.8, 5.18) rectangle (1.9,13);
\draw[dashed, rounded corners] (3.65, 5.18) rectangle (5.9,13);
\draw[dashed, rounded corners] (6.3, 5.18) rectangle (8.35,13);

\draw[dashed, rounded corners] (-7.0, 5.18) rectangle (-1.2,6.0);

\draw[dashed, rounded corners=20] 
(-6.8,4.6)--(7.7,4.6)--(-.1,-0.08)--cycle;

\end{tikzpicture}
\caption{Illustration of the implications encoded by the different 
sets of clauses used in the reduction of the proof of 
Theorem~\ref{thm:3-sus-complete}. Note that the clauses correspond to 
the arrows in the illustration, not the depicted variables. For
example, the clause $(\lneg w_2\vee \lneg w_3\vee w_1)$ is equivalent 
to the implication $(w_2\wedge w_3)\rightarrow w_1$ represented by 
the bottom-most arrow of $W$ in the illustration, the left-most arrow 
right below it represent implication $w_1\rightarrow (\lambda_1^1\vee
u_1)$ equivalent to the clause $(\lneg w_1\vee\lambda_1^1\vee u_1)$, and
the clause $(z_1^{2t})$ corresponds to arrow at the top right 
starting from nowhere, which is interpreted as an implication with a 
true antecedent, equivalent to the clause $(z_1^{2t})$.}
\label{fig:sus-implication-construction}
\end{figure}

    This construction is clearly possible in logarithmic space. Note
    that $\psi$ contains exactly
    \[
        2^{s+1}-(m+1)
        + 2m(3 + 2t-1)
        + 2t
        =k+(2t+2)m
    \]
    clauses. Each of these clauses can be interpreted as an 
    implication as illustrated in 
    Figure~\ref{fig:sus-implication-construction}.
    We now prove the correctness of the reduction.
    Intuitively, all clauses of $\psi$ are selected
    with the only exception that, for each $i \in \numb{m}$,
    either $X_i$ or $Z_i$ is not included entirely.

    Observe that setting all variables to $1$ satisfies all clauses
    from $W\wedge V\wedge \bigwedge_{i=1}^m(X_i\wedge Z_i)$.
    It follows that any unsatisfiable formula includes a nonempty
    set of clauses from $\bigwedge_{j=1}^t\Gamma_i$.
    Moreover, since assigning $0$ to both $w_1$ and all $u_i$
    satisfies all of $\bigwedge_{j=1}^t\Gamma_i$, and $w_1$ is the
    only variable shared with the remaining clauses $W\wedge V\wedge
    \bigwedge_{i=1}^t(X_i\wedge Z_i)$, an unsatisfiable subformula
    also includes
    a subset of clauses from
    $W\wedge \bigwedge_{i=1}^m(X_i\wedge
    Z_i)$ such that any assignment satisfying this subset is
    forced to assign $1$ to $w_1$.
    It follows that an unsatisfiable subformula
    includes all of $W$. This
    is because $W$ contains clauses that represent a
    binary tree of implications of the form
    $w_{2i}\wedge w_{2i+1}\rightarrow w_i$ that needs to be
    included in full to force any assignment satisfying all selected
    clauses from $W\wedge V\wedge \bigwedge_{i=0}^m(X_i\wedge Z_i)$ to
    set $w_1$ at the root of the implication tree to $1$.
    Furthermore, any unsatisfiable subformula contains some clauses
    that forces any assignment satisfying them to
    set all the leaves of the mentioned binary tree, namely
    $w_{2^s},\dots,w_{2^{s+1}}$, to be set to $1$.
    Including the set $V$ of singleton clauses
    does this for all but the last $m$ leaves. For each remaining
    leaf $w_{2^{s+1}-1}$, either $X_i$ and $Z_i$ must be included.

    We have seen so far that any unsatisfiable
    subformula of $\psi$
    must contain a clause from $\bigwedge_{j=1}^t\Gamma_i$,
    all of $W\wedge V$, and, for each pair of implication
    chains $X_i$ and $Z_i$, one chain in its entirety.
    Including up to $2t$ clauses of the other chain of a pair cannot
    affect the satisfiability because only the last two of the $2t+2$
    clauses contain variables occurring outside of the chain, namely
    $x_i$ and $w_{2^{s+1}-i}$, and including such a clause and
    activating it via uninterrupted implication chain from the only
    unconditional clause at the head of the chain, either
    $(x_i^r)$ or $(z_i^r)$, requires at least $2t+1$ clauses.

    In addition to the $2^{s+1}-(m+1)+(2t+2)m$ clauses
    from $W\wedge V$ and,
    for each $i\in\numb m$, either $X_i$ or
    $Z_i$, we may
    include exactly $2t$ more clauses until the threshold
    of $k$ is reached,
    which exactly suffices to include all of
    $\bigwedge_{j=1}^t\Gamma_t$ but not any pertinent part of an
    additional implication chain, which would require at least
    $2t+1$ clauses.
    Therefore, $\psi$ has an unsatisfiable subformula
    of size at most $k$
    if and only if its subformula containing all
    clauses except, for
    each $i\in\numb m$, either $X_i$ or
    $Z_i$, is unsatisfiable.
    Any assignment that satisfies all clauses from $X_i$ or $Z_i$
    must assign $1$ to $x_i$ or $z_i$, respectively, while the other
    variable can
    be set to $0$ to satisfy as many clauses as possible from
    $\bigwedge_{i=1}^t\Gamma_i$. This corresponds
    to a consistent
    assignment of either $0$ or $1$ to each
    variable from
    $x_1,\dots,x_m$, whereas assignments to
    $y_1,\dots,y_n$ are not
    restricted in any way.

    Any assignment satisfying all included clauses except possibly
    those of $\bigwedge_{i=1}^t\Gamma_i$ assigns $1$ to $w_1$. Such
    an assignment satisfies the
    first clause of $\Gamma_i$ if
    its sets $\lambda_i^1$ or $u_i$ to $1$. If $u_i$ is set to $1$, then
    satisfying the
    second clause of $\Gamma_i$ requires $\lambda_i^2$ or $\lambda_i^3$
    to be set to $1$.
    For any assignment setting $w_1$ to $1$, two clauses of
    $\Gamma_i$ are thus equivalent to $C_i$ for each $i\in\numb m$
    and
    $\bigwedge_{i=1}^t\Gamma_i$ is equivalent to
    $\bigwedge_{i=1}^tC_i=\phi(x_1,\dots,x_m,y_1,\dots,y_n)$.

    This shows that $\phi$ is in
    \QthreeCNFtwo if and only if $\psi$ has an
    unsatisfiable
    subformula of size at most $k$.

    Note that we do not know the exact minimum size
    of an unsatisfiable
    subformula since not all clauses from
    $\bigwedge_{j=1}^t\Gamma_j$
    might be needed. But the threshold $k$ is sufficiently large
    for all of them to be included in any case.
\end{proof}

\subsection{\texorpdfstring%
{\boldmath\SigTwoP-Hardness for \TriUndelHitPack and \CycleUndelHitPack}
{Sigma2P-Hardness for Triangle-HitPack and Cycle-HitPack}}
\label{sec:sig2p:hardness:triangle}

We use the \SigTwoP-hardness of \SUS from \cref{thm:3-sus-complete}
to prove the \SigTwoP-hardness of \TriUndelHitPack.
The completeness for \SigTwoP
then directly follows from \cref{lem:sig2p:membership}.

\begin{theorem}\label{thm:sig2p:hardness:triangle}
    The problem \TriUndelHitPack is
    \SigTwoP-hard on tripartite graphs.
\end{theorem}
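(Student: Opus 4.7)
My approach is to reduce \ThreeCNFSUS to \TriUndelHitPack; combined with \cref{lem:sig2p:membership} this yields \SigTwoP-completeness, since \cref{thm:3-sus-complete} establishes that \ThreeCNFSUS is \SigTwoP-hard. Given a 3CNF instance $\phi = C_1 \wedge \cdots \wedge C_m$ on variables $x_1, \ldots, x_n$ together with a budget $k$, I will construct a tripartite graph $G$ whose triangle packings simultaneously encode a \emph{selected subformula} (the clauses whose deletable ``selector vertex'' has been removed) and a \emph{truth assignment} (chosen by the maximum packing inside each variable gadget). The deletion budget will be $k$, and the packing threshold $\ell$ will be set so that a packing of size $\ge \ell$ survives deletion of $S \subseteq \{c_1,\ldots,c_m\}$ exactly when some assignment satisfies every clause selected by $S$.

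For each clause $C_j$ I introduce a \emph{clause triangle} $\{c_j, a_j, b_j\}$ in which only $c_j$ is deletable. For each variable $x_i$ with $d_i$ total occurrences, I build a \emph{variable gadget} as a $2d_i$-cycle $u^i_1, u^i_2, \ldots, u^i_{2d_i}$ with an apex vertex $w^i_j$ completing a triangle on each edge $u^i_j u^i_{j+1}$ (indices modulo $2d_i$); all $4d_i$ vertices are undeletable. Because two triangles of the gadget share a vertex iff they use adjacent cycle-edges, a standard parity argument shows that the triangle-packing number of the gadget is exactly $d_i$ and that there are precisely two maximum packings: $P^i_\false$ using the odd-indexed apices and $P^i_\true$ using the even-indexed apices. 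I place the positive-literal vertices of $x_i$ among the odd apices and the negative-literal vertices among the even apices, padding each side with dummy apices (possible since each side has $d_i \ge \max(p_i, n_i)$ slots). Finally, for every occurrence of a literal $\lambda$ of $x_i$ in clause $C_j$, I add edges from the corresponding literal apex $L$ to both $a_j$ and $b_j$, creating a \emph{literal triangle} $\{L, a_j, b_j\}$. I set $\ell \deff m + \sum_i d_i$, which is the triangle-packing number of $G$ with no deletion.

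For correctness, every triangle of $G$ is of exactly one of three types (clause, literal, or variable-gadget) by inspection of adjacencies; a literal triangle at $C_j$ precludes the corresponding clause triangle since they share both $a_j$ and $b_j$; and a variable gadget contributes $d_i$ triangles to the packing only when its restriction to the packing equals $P^i_\true$ or $P^i_\false$. Consequently, for any deletion set $S \subseteq \{c_1,\ldots,c_m\}$, a packing of size $\ge \ell$ in $G-S$ exists iff one can choose $P^i_\true$ or $P^i_\false$ at every variable gadget -- inducing a truth assignment $\alpha$ -- and, for every $j$ with $c_j \in S$, plug in a literal triangle at $C_j$ using a literal vertex that $\alpha$ leaves uncovered, i.e.\ a literal of $C_j$ satisfied by $\alpha$. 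Hence a deletion $S$ of at most $k$ selectors breaks every $\ell$-sized packing iff the subformula $\{C_j : c_j \in S\}$ is unsatisfiable, which is equivalent to $(\phi,k)$ being a Yes-instance of \ThreeCNFSUS.

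Tripartiteness is witnessed by the $3$-coloring that assigns color $1$ to every apex vertex (including all literal vertices) and every clause selector, colors $2$ and $3$ alternately to the cycle vertices $u^i_j$, and colors $2$ and $3$ to the two partner vertices $a_j, b_j$; each triangle of the three types then contains exactly one vertex of each color. The main technical hurdle is ensuring that the variable gadget admits \emph{only} the two stated maximum packings $P^i_\true, P^i_\false$: this reduces to showing that any maximum independent set in the ``triangle-conflict'' graph of the gadget -- a cycle $C_{2d_i}$ of even length -- must be one of the two alternating sets, which follows from a short exchange/parity argument.
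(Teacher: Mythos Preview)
Your proposal is correct and follows essentially the same approach as the paper's proof. Both reduce from \ThreeCNFSUS via clause triangles $\{a_j,b_j,c_j\}$ with the single deletable selector $c_j$, variable gadgets built as even cycles of triangles admitting exactly two maximum packings (encoding $\true$/$\false$), and literal triangles on $\{a_j,b_j,L\}$ that can replace a deleted clause triangle precisely when the corresponding literal is satisfied; the threshold $\ell$ is the total packing number $m+\sum_i d_i$ (in the paper, $m+mn$). The only difference is cosmetic: the paper sizes every variable gadget uniformly at $2m$ triangles (a dedicated slot $v_i^j,\lneg v_i^j$ per clause index $j$), while you use $2d_i$ triangles per variable (one apex per occurrence, with dummies for the unused side).
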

\begin{proof}
    We reduce from \SUS by mapping any given pair $(\phi,k)$,
    where
    $\phi=C_1\land\dots\land C_m$ is a formula with $m$ clauses on the
    variable set $\{x_1,\dots,x_n\}$, to an instance $(G,U,k,\ell)$ of \TriUndelHitPack
    as described
    below. The entire construction is
    shown in \cref{fig:triangle-hitpack} for a simple example.

    \subparagraph*{Construction.}
    The graph $G$ contains, for every variable $x_i$, a \emph{triangle
    cycle} of $2m$
    triangles, each conjoined to the next at one vertex.
    Formally, we have four sets of $m$ vertices each,
    \begin{equation*}
            V_i = \{      v_i^1, \dots,       v_i^m \}, \quad
        \lneg V_i = \{\lneg v_i^1, \dots, \lneg v_i^m \}, \quad
            W_i = \{      w_i^1, \dots,       w_i^m \}, \quad \text{ and } \quad
        \lneg W_i = \{\lneg w_i^1, \dots, \lneg w_i^m \},
    \end{equation*}
    and the edges of the triangles $\{w_i^j,v_i^j,\lneg w_i^j\}$ and
    $\{\lneg w_i^j,\lneg v_i^j,w_i^{j+1}\}$
    for every $j\in\numb{m}$%
    , identifying $w_i^{m+1}$ with $w_i^1$ for convenience.

    For every clause $C_j$, we add a \emph{clause triangle}
    $\{a_j,b_j,c_j\}$.
    Moreover,
    for every literal in this clause, we add two \emph{literal edges} from $a_j$
    and
    $b_j$ to a \emph{literal vertex}, namely $v_i^j$ if the literal is
    $x_i$ and $\lneg v_i^j$ if the
    literal is $\lneg x_i$ for some $i \in \numb{n}$.
    We call the vertices of this clause gadget the \emph{clause vertices}.

    Finally, we choose
    $\ell=mn+m$ as the number of triangles that should not be possible to pack
    and
    $U=V(G)\setminus\{c_1,\dots,c_m\}$ as the set of undeletable
    vertices.
    The number of vertices we are allowed to delete is $k$.
    This finishes the construction of the instance $(G, U, k, \ell)$.

    \subparagraph*{Correctness.}
    We now prove the correctness of the reduction.
    A maximum triangle packing of
    the graph $G$ consists, without loss of
    generality, of the $m$ clause triangles and every second triangle of
    each triangle cycle; these are $m+n\cdot m$ triangles in total.
    These triangles cover all vertices except, for each $i$, either $V_i$
    or $\lneg V_i$. We interpret $V_i$ being uncovered as assigning
    $1$ to $x_i$
    and $\lneg V_i$ being uncovered as assigning $0$ to $x_i$.

    Deleting $c_j$ means that we cannot keep the $j$th clause triangle in
    a maximum packing. But we can keep the edge $\{a_j,b_j\}$ and replace
    $c_j$ by any of its literal vertices unless they are already covered.
    Since either $V_i$ or $\lneg V_i$ is always covered, it follows
    that we can still pack $mn+m$ triangles into the graph from which a
    set $D$ of vertices has been deleted
    if and only if the \emph{corresponding subformula} consisting of the
    clauses $\{C_j\mid c_j\in D\}$ is satisfiable;
    otherwise, it is impossible to pack this many triangles.
    It follows immediately that from $G$ we can obtain a graph without a
    triangle packing of size $k=mn+m$ by deleting $k$ deletable
    vertices
    if and only if $\phi$ has an unsatisfiable subformula of size
    at most $k$.
\end{proof}

\begin{figure}[t]
    \centering
    \resizebox{\textwidth}{!}{%
    \begin{tikzpicture}[x=0.8cm,y=0.8cm,%
        packing/.style = {draw, ultra thick, fill=blue!20},
        withBorder/.style={draw=white,double=black},
        vertex/.style={draw,very
        thick,circle,
        fill=white,
        inner sep=.0pt, minimum
        size=0.62cm},font=\large,scale=1.03]
        \begin{scope}[xshift=-5cm,yshift=1.1cm,rotate=-30]
            \node[vertex] (w11) at (-2,0) {$w_1^1$};
            \node[vertex] (v11) at (-1.5,1) {$v_1^1$};
            \node[vertex] (ow11) at (-1,0) {$\lneg w_1^1$};
            \node[vertex,thin] (ov11) at (-.5,1) {$\lneg v_1^1$};
            \node[vertex] (w12) at (0,0) {$w_1^2$};
            \node[vertex] (v12) at (0.5,1) {$v_1^2$};
            \node[vertex] (ow12) at (1,0) {$\lneg w_1^2$};
            \node[vertex,thin] (ov12) at (1.5,1) {$\lneg v_1^2$};
            \node[vertex] (w13) at (2,0) {$w_1^3$};
            \draw
            (w11)--(v11)--(ow11)--(ov11)--(w12)--(v12)--(ow12)--(ov12)--(w13);
            \draw (w11)--(ow11)--(w12)--(ow12)--(w13);
            \draw (w11)--(ow11)--(w12)--(ow12)--(w13);
            \draw[dashed, rounded corners] (-2.5, -.5) rectangle
            (2.5,1.6) {};
            \node at (0,-1) {$x_1$};
        \end{scope}
        \begin{scope}[xshift=0cm,yshift=0cm,]
            \node[vertex] (w21) at (-2,0) {$w_2^1$};
            \node[vertex] (v21) at (-1.5,1) {$v_2^1$};
            \node[vertex] (ow21) at (-1,0) {$\lneg w_2^1$};
            \node[vertex] (ov21) at (-.5,1) {$\lneg v_2^1$};
            \node[vertex] (w22) at (0,0) {$w_2^2$};
            \node[vertex,thin] (v22) at (0.5,1) {$v_2^2$};
            \node[vertex] (ow22) at (1,0) {$\lneg w_2^2$};
            \node[vertex] (ov22) at (1.5,1) {$\lneg v_2^2$};
            \node[vertex] (w23) at (2,0) {$w_2^3$};
            \draw
            (w21)--(v21)--(ow21)--(ov21)--(w22)--(v22)--(ow22)--(ov22)--(w23);
            \draw (w21)--(ow21)--(w22)--(ow22)--(w23);
            \draw[dashed, rounded corners] (-2.5, -.5) rectangle
            (2.5,1.6) {};
            \node at (0,-1) {$x_2$};
        \end{scope}
        \begin{scope}[xshift=5cm,yshift=1.1cm,rotate=30]
            \node[vertex] (w31) at (-2,0) {$w_3^1$};
            \node[vertex,thin] (v31) at (-1.5,1) {$v_3^1$};
            \node[vertex] (ow31) at (-1,0) {$\lneg w_3^1$};
            \node[vertex] (ov31) at (-.5,1) {$\lneg v_3^1$};
            \node[vertex] (w32) at (0,0) {$w_3^2$};
            \node[vertex] (v32) at (0.5,1) {$v_3^2$};
            \node[vertex] (ow32) at (1,0) {$\lneg w_3^2$};
            \node[vertex] (ov32) at (1.5,1) {$\lneg v_3^2$};
            \node[vertex] (w33) at (2,0) {$w_3^3$};
            \draw
            (w31)--(v31)--(ow31)--(ov31)--(w32)--(v32)--(ow32)--(ov32)--(w33);
            \draw (w31)--(ow31)--(w32)--(ow32)--(w33);
            \draw[dashed, rounded corners] (-2.5, -.5) rectangle
            (2.5,1.6) {};
            \node at (0,-1) {$x_3$};
        \end{scope}
        \begin{scope}[xshift=-2.4cm,yshift=1cm,rotate=-0]
            \node[vertex] (a1) at (-.5,3) {$a_1$};
            \node[vertex] (b1) at (.5,3) {$b_1$};
            \node[vertex,thin] (c1) at (0,4) {$c_1$};
            \draw (a1) -- (b1) -- (c1) -- (a1);

            \draw[dashed, rounded corners] (-1.1, 2.4) rectangle
            (1.1,4.6) {};
            \node at (-1.6,4) {$C_1$};
        \end{scope}
        \draw (a1)edge[bend right=0,withBorder](v11);
        \draw (b1)edge[bend left=5,out=20,withBorder](v11);
        \draw (a1)edge[bend right=5,out=-20,withBorder](ov31);
        \draw (b1)edge[bend left=0,withBorder](ov31);
        \begin{scope}[xshift=2.4cm,yshift=1cm,rotate=0]
            \node[vertex] (a2) at (-.5,3) {$a_2$};
            \node[vertex] (b2) at (.5,3) {$b_2$};
            \node[vertex,thin,dashed] (c2) at (0,4) {$c_2$};
            \draw (a2) -- (b2) -- (c2) -- (a2);
            \draw (a2)edge[bend right=0,withBorder](ov12);
            \draw (b2)edge[bend left=10,withBorder](ov12);
            \draw (a2)edge[bend right=0,withBorder](v22);
            \draw (b2)edge[bend left=0,withBorder](v22);
            \draw[dashed, rounded corners] (-1.1, 2.4) rectangle
            (1.1,4.6) {};
            \node at (1.6,4) {$C_2$};
        \end{scope}
        \begin{scope}[on background layer]
            \draw[packing] (w11.center)--(v11.center)--(ow11.center)--cycle;
            \draw[packing] (w12.center)--(v12.center)--(ow12.center)--cycle;

            \draw[packing] (ow21.center)--(ov21.center)--(w22.center)--cycle;
            \draw[packing] (ow22.center)--(ov22.center)--(w23.center)--cycle;
            \draw[packing] (ow31.center)--(ov31.center)--(w32.center)--cycle;
            \draw[packing] (ow32.center)--(ov32.center)--(w33.center)--cycle;
            \draw[packing] (a1.center)--(b1.center)--(v21.center)--(a1.center);
            \draw[packing]
            (a2.center)--(b2.center)--(v32.center)edge[bend left=10,packing](a2.center);
        \end{scope}
    \end{tikzpicture}
    }
    \caption{Construction of $G(\phi)$ for the simple instance
    $(\phi,k)$ with $k=1$ and
    $\phi=(x_1
    \vee x_2 \vee \lneg x_3)\wedge(\lneg x_1 \vee x_2 \vee
    x_3)$. Note that within the variable gadgets, $w_1^3$, $w_2^3$,
    and
    $w_3^3$ are identified
    with
    $w_1^1$ $w_2^1$, and $w_3^1$, respectively, creating one
    \emph{triangle cycle} for each variable.
	Each clause has its own
    gadget consisting of a triangle and additional \emph{literal
    edges} connecting it to the variable gadgets.
    \\
	The dashed vertex $c_2$ is deleted, which corresponds to
	activating the second clause of the formula. The bold lines show
	a maximum triangle packing of the remaining graph. The remaining
	packed triangle correspond to the satisfying
	assignment $x_1\mapsto 0, x_2\mapsto 1, x_3\mapsto 1$.
	The maximum packing has a total size of $mn+m=8$, showing that
	this is a no-instance, which is trivial in this case since all
	subformulas of $\phi$ are satisfiable.
    }
    \label{fig:triangle-hitpack}
\end{figure}
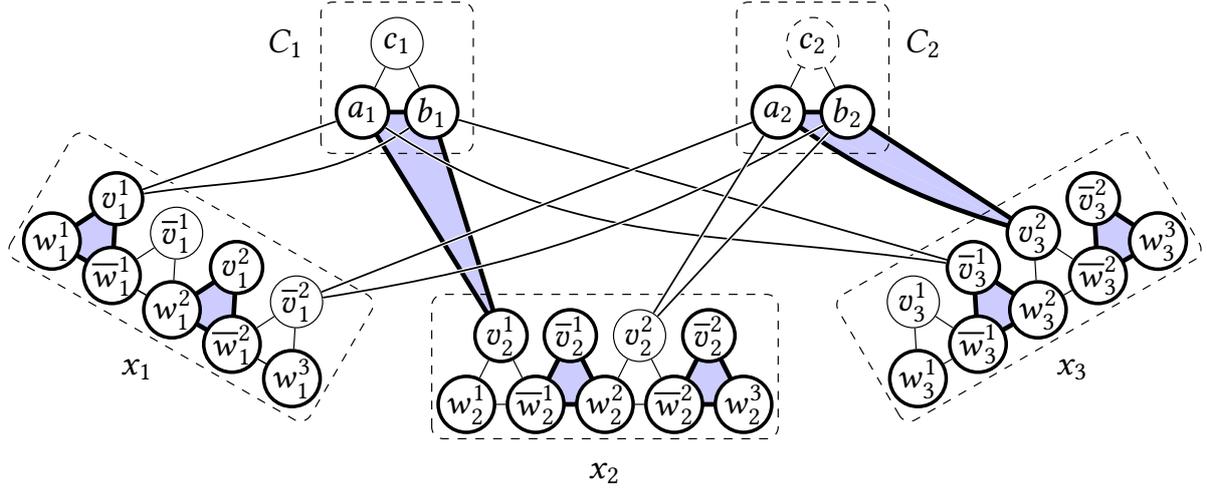

As a next step we extend the previous result to \CycleUndelHitPack.
For this we make use of the problem \TriHitPart
which is defined as follows.
\begin{definition}[\TriHitPart]

    The problem \TriHitPart asks,
    given a graph $G$, a vertex set $U\subseteq V(G)$, and an integer $k$,
    whether it is possible to delete \emph{exactly} $k$
    vertices from the
    graph $G$ such that the vertices of the remaining graph cannot be
    partitioned into triangles.
\end{definition}

Note that we require the deletion of \emph{exactly} $k$ vertices for
this partition version of the problem since allowing for the deletion
of less than $k$ vertices would render the problem trivial for any
positive $k$: If the number of vertices in the graph is divisible by
$3$, deleting one vertex; otherwise, leave it as it is,
gives a graph with no triangle-partition.
We could also require the deletion of exactly $k$ vertices for the
packing version of the problem. All of our results still hold because
deleting more vertices can never increase the packing number of the
remaining graph.

We will prove the following two theorems at once. 

\thmcyclepwLBSigP*\label\thisthm%
\begin{theorem}\label{thm:sig2p:hardness:tripart}
    The problem \TriHitPart is \SigTwoP-hard on tripartite graphs.
\end{theorem}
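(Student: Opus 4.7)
The plan is to reduce \ThreeCNFSUS (which is \SigTwoP-complete by \cref{thm:3-sus-complete}) to both \TriHitPart on tripartite graphs and \CycleUndelHitPack via a single unified construction, giving the two hardness statements at once. Membership of \CycleUndelHitPack in \SigTwoP is already provided by \cref{lem:sig2p:membership}, so the bulk of the work is establishing hardness for both problems; for \TriHitPart membership, which is not claimed in the statement but used as a sanity check, the same non-deterministic template applies (guess $S$ and query an \NP-oracle for the existence of a triangle partition of $G-S$).

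I will extend the construction from the proof of \cref{thm:sig2p:hardness:triangle} with two augmentations. First, to each pair $(i,j)$ I attach a \emph{pendant} consisting of two fresh vertices $p_j^i, q_j^i$ together with the two triangles $\{v_i^j, p_j^i, q_j^i\}$ and $\{\lneg v_i^j, p_j^i, q_j^i\}$, which share the edge $p_j^i q_j^i$ so that at most one of them can appear in any triangle packing. This raises the vertex count to $6mn+3m$, a multiple of three, and makes the target value $\ell = 2mn+m$ equal to $|V(G)|/3$, so that maximum triangle packings are precisely triangle partitions. Second, to each literal occurrence $\lambda$ in each clause $C_j$ I attach a small constant-size \emph{repair} sub-gadget wired to the clause gadget and to the pendants, whose purpose is to let the clause triangle $\{a_j, b_j, c_j\}$ be replaced by a triangle through the literal after deletion of $c_j$ while absorbing the accompanying pendant swap locally without disturbing the bracelet. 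The undeletable set $U$ is everything except $c_1, \dots, c_m$, and the deletion budget is $k$.

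Correctness proceeds in two directions. Given a satisfying assignment $x$ of the subformula $C_{j_1}\wedge\dots\wedge C_{j_k}$, I build a triangle partition of $G-\{c_{j_1},\dots,c_{j_k}\}$ by using in each variable gadget the bracelet-plus-pendant partition associated with $x$, and for each deleted $c_{j_t}$ routing the clause through a literal $\lambda$ satisfied by $x$, with the repair sub-gadget rebalancing the pendants locally. Conversely, any triangle partition of $G-S$ forces, inside each variable gadget, one of the two natural bracelet/pendant partitions and therefore induces a truth assignment, and inspecting the clause-and-repair region of the partition then certifies that every deleted clause has a true literal under this assignment. Tripartiteness is preserved because the $3$-coloring from \cref{thm:sig2p:hardness:triangle} extends canonically to the pendant and repair vertices, with a short case check confirming that every triangle uses one vertex from each color class. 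For the cycle-packing version, the same instance works with $\ell$ reinterpreted as a cycle-packing bound, since the constructed graph has the structural property that every maximum cycle packing consists only of triangles: any longer cycle inside a bracelet is strictly dominated by packing the two involved triangles, and any longer cycle traversing a pendant, repair, or clause region can be shortened to triangles without decreasing the packing size.

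The main technical obstacle will be the design of the repair sub-gadget. In the naive pendant-only construction, feeding a deleted clause through $v_i^j$ forces the pendant at $(i,j)$ to be swapped, which in turn forces the bracelet triangle containing $\lneg v_i^j$ to be swapped, triggering a cascade around the triangle cycle of the variable gadget that cannot be closed because $v_i^j$ is now occupied by the clause triangle. The repair sub-gadget must short-circuit this cascade so that the swap stays within the clause-and-repair region, using its own auxiliary vertices to re-cover $p_j^i$ and $q_j^i$ without ever involving $\lneg v_i^j$ or a cycle vertex; verifying that this sub-gadget has exactly the intended two local partitions (one used when $c_j$ is undeleted and one used when $c_j$ is deleted and the literal $\lambda$ is true in $x$) is the case-work-heavy core of the proof.
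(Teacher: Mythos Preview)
Your proposal has a genuine gap: you identify the cascade problem caused by local pendants but do not actually design the repair sub-gadget, deferring it as ``the case-work-heavy core of the proof.'' This is not a detail that can be safely postponed. Once a clause triangle consumes $v_i^j$, the pendant at $(i,j)$ can no longer use $v_i^j$, and any repair gadget that re-covers $p_j^i,q_j^i$ without touching $\lneg v_i^j$ must introduce fresh vertices that themselves have to be perfectly partitioned both in the ``$c_j$ present'' state and in the ``$c_j$ deleted'' state, across all literals of $C_j$ simultaneously, without creating spurious triangles or longer cycles. You have not exhibited such a gadget, and it is far from obvious that a constant-size tripartite gadget with exactly these two local partitions (and no unintended third one) exists.

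The paper sidesteps this obstacle entirely with a much simpler idea: instead of attaching pendants locally, observe that in the base construction from \cref{thm:sig2p:hardness:triangle}, after deleting exactly $k$ of the $c_j$'s, every maximum triangle packing of size $mn+m$ leaves uncovered precisely $mn-k$ vertices, all of which lie in $V\cup\lneg V$. So one adds $mn-k$ fresh \emph{global} vertex pairs $\{y_r,z_r\}$, each joined to \emph{every} vertex of $V\cup\lneg V$. Each such pair forms a triangle with an arbitrary vertex of $V\cup\lneg V$, so the $mn-k$ pairs can soak up exactly whichever $mn-k$ literal vertices happen to be uncovered, turning the packing into a partition. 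No cascade ever arises because the fillers are not tied to any specific position. Tripartiteness is preserved since all of $V\cup\lneg V$ share one color class in the original $3$-coloring, so $y_r,z_r$ can take the other two colors. The same construction yields the \CycleUndelHitPack hardness because a triangle partition is the densest possible cycle packing.
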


\begin{proof}[Proof of 
\cref{thm:sig2p:completeness:cycles,thm:sig2p:hardness:tripart}]
    We extend the construction used in the proof of
    \cref{thm:sig2p:hardness:triangle} as follows.
    We have already observed that for a no-instance all subgraphs of $G$
    resulting from
    deleting up to $k$ deletable vertices have a triangle packing of
    size $mn+m$
    where all uncovered vertices are from $V\cup \lneg V$.
    For $k$ deleted vertices, exactly
    $mn-k$ among these vertices remain uncovered.
    We extend $G$ to a new graph $G'$
    by adding, for each $r\in\numb{mn-k}$, a vertex pair
    $\{y_r,z_r\}$
    and, for each vertex $v\in V\cup \lneg V$, the edges of the
    triangle
    $\{v,y_r,z_r\}$.

    Each new vertex pair $\{y_r,z_r\}$ can be used for packing exactly one triangle
    that covers
    exactly one arbitrary vertex from $V\cup \lneg V$. This enables
    us to turn
    any of the previously considered packings of size $mn+m$ of $G$ after
    the deletion of $k$ vertices into a triangle partition of size
    $mn+m+(mn-k)=2mn+m-k$ of $G'$ after the deletion of the same vertices.
    Conversely, for any triangle partition of $G'$, which necessarily has
    size $2mn+m-k$, we obtain a packing
    of size exactly $mn+m$ for $G$ by removing the $mn-k$ triangles
    containing any of the new vertex
    pairs.

    This provides a direct reduction from \SUS both to
    \TriHitPart and, because a triangle partition is the
    most efficient way of packing as many cycles as possible into a
    graph, to
    \CycleUndelHitPack.
\end{proof}

\subsection{\texorpdfstring%
{\boldmath\SigTwoP-Hardness for \UndelHitPack{H}}
{Sigma2P-Hardness for H-HitPack}}
\label{sec:sig2p:hardness:H}

Now, we focus on the proof of the following statement.

\thmHSigTwoPLower*\label\thisthm

The containment in $\SigTwoP$ follows from~\cref{lem:sig2p:membership}.
The hardness result follows immediately from the hardness result
for \TriUndelHitPack, 
\cref{thm:sig2p:hardness:triangle},
together with a polynomial-time reduction from \TriUndelHitPack to \UndelHitPack{H},
formally stated in~\cref{lem:tw:triangleToAnyGraph}.
We actually prove a stronger result and additionally show
that this reduction is \emph{pathwidth-preserving},
by which we mean that for an input graph $G$ of \TriUndelHitPack,
the \UndelHitPack{H} oracle is called by the algorithm only on input graphs whose pathwidth exceeds the pathwidth of $G$ by at most some additive constant.
While we do not need this stronger version in the current section,
we use it in \cref{sec:pw:lower}.

\begin{lemma}
	\label{lem:tw:triangleToAnyGraph}
	Let $H$ be a fixed connected graph with at least three vertices.
	There is a polynomial-time pathwidth-preserving reduction from \TriUndelHitPack
	to \UndelHitPack{H}.
\end{lemma}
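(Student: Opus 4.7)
The plan is to extend the construction of Kirkpatrick and Hell~\cite{DBLP:journals/siamcomp/KirkpatrickH83}, which reduces triangle packing to $H$-packing for connected $H$ with $|V(H)| \geq 3$, to our combined hitting-and-packing setting. Given an instance $(G, U, k, \ell)$ of \TriUndelHitPack, we construct an instance $(G', U', k', \ell')$ of \UndelHitPack{H} as follows. Fix three distinct vertices $\alpha, \beta, \gamma$ of $H$. For every $v \in V(G)$, introduce a copy $\tilde v$ in $G'$, declaring $\tilde v \in U'$ if and only if $v \in U$. For every triangle $T = \{u, v, w\}$ of $G$, attach a \emph{triangle gadget} $G_T$ consisting of $|V(H)| - 3$ fresh (undeletable) vertices so that, together with $\{\tilde u, \tilde v, \tilde w\}$, they induce a copy of $H$ via the bijection $(\alpha, \beta, \gamma) \mapsto (\tilde u, \tilde v, \tilde w)$. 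The original edges of $G$ are not retained except through gadgets. Tentatively set $k' = k$ and $\ell' = \ell$ (to be adjusted if padding is used; see below).

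The intended correspondence is: a deletion $S \subseteq V(G) \setminus U$ in $G$ corresponds to $\tilde S = \{\tilde v : v \in S\}$ in $G'$, and a triangle packing $\{T_1, \dots, T_t\}$ of $G - S$ corresponds to the $H$-packing $\{G_{T_1}, \dots, G_{T_t}\}$ of $G' - \tilde S$. The forward implication is immediate, since each $\tilde v$ appears in at most one packed gadget. For the converse we aim to show that every $H$-packing of $G' - \tilde S$ can be transformed, without decreasing its size, into a \emph{canonical} packing whose copies are exactly gadget copies $G_T$; such a canonical packing then induces a triangle packing of $G - S$ of the same size because each $\tilde v$ participates in at most one canonical copy.

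The central obstacle is controlling \emph{rogue} copies of $H$: copies in $G'$ that span several gadgets or exploit the shared vertices $\tilde u, \tilde v, \tilde w$ in an unintended fashion. Following the rigidifying spirit of Kirkpatrick--Hell, we reinforce each gadget with a small auxiliary attachment whose only purpose is to make the canonical placement of $H$ inside $G_T$ the unique (or the uniquely optimal) way of packing $H$ using that gadget's interior. If this reinforcement introduces a fixed number $r$ of forced copies of $H$ that must appear in every maximum packing (for example by padding with isolated disjoint $H$-gadgets), we adjust to $\ell' = \ell + r$, so the offset is absorbed consistently on both sides of the equivalence. The technical crux of the proof will be a \emph{local-replacement lemma}: given any copy of $H$ in an $H$-packing of $G' - \tilde S$ that is not a canonical $G_T$, we replace it with a canonical copy without disturbing the rest of the packing and without increasing the set of anchor vertices $\tilde v$ used. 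Iterating this yields a canonical packing of the same size, completing the backward direction.

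For the pathwidth-preservation claim, each gadget $G_T$ has $O(|V(H)|)$ vertices and is anchored at exactly the three vertices $\tilde u, \tilde v, \tilde w$ with $\{u, v, w\}$ a triangle of $G$. Since the three subpaths of the path decomposition of $G$ corresponding to $u$, $v$, and $w$ pairwise intersect (there is a bag containing each of the edges $uv$, $vw$, $uw$), the Helly property for subpaths of a path gives a common bag containing all three of $u, v, w$. We attach the gadget by inserting $O(|V(H)|)$ consecutive bags near this bag that introduce and forget the interior vertices of $G_T$ while retaining $\tilde u, \tilde v, \tilde w$; doing this for every triangle increases each bag by at most $|V(H)|$ vertices. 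Hence $\pw(G') \leq \pw(G) + O(1)$, which is the desired pathwidth-preserving guarantee.
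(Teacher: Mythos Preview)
Your proposal identifies the right difficulty (rogue copies of $H$ spanning several gadgets) but does not actually resolve it, and the construction you sketch is precisely the ``naive idea'' that the paper explicitly rejects in its intuition paragraph. The Kirkpatrick--Hell reduction you invoke is for the \emph{perfect} packing (partition) problem, where a global counting argument forces every copy to be canonical; this argument does not survive the passage to maximum packing, as the paper notes. Your fix is to postulate an unspecified ``reinforcement'' and a ``local-replacement lemma'' without constructing the former or proving the latter. This is the crux of the entire lemma, so leaving it at the level of ``we aim to show'' is a genuine gap, not a routine detail. Concretely, with one copy of $H$ per triangle glued at three anchors, a single rogue copy can cover interior vertices of one $G_T$ together with an anchor $\tilde v$ shared by another $G_{T'}$; replacing it by a canonical $G_T$ may then conflict with a canonical $G_{T'}$ already in the packing, so local replacement can fail.

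The paper's proof supplies exactly the missing idea: instead of attaching a single copy of $H$ per triangle, it inserts between each triangle vertex and the designated copy $H_T$ a long chain of \emph{diamond gadgets} (two copies of $H$ glued along a diameter endpoint). Each chain admits a ``forward'' and a ``backward'' maximum packing of the same size $10n_H$; the point is that (i) the diameter choice forces any copy of $H$ meeting a diamond's interior to stay inside it, so chains cannot host rogue copies, and (ii) $H_T$ can be packed on top only when all three chains are in forward mode, which uses up the three triangle vertices. The chains thus decouple the anchors and turn the global accounting into $\ell' = 30\,n_H\,|\mathcal T| + \ell$: the $30\,n_H\,|\mathcal T|$ part is guaranteed regardless, and the surplus counts exactly the triangles packed. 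Your pathwidth argument via the Helly property is fine and matches the paper's in spirit, but note that the gadget you must attach per triangle is of size $\Theta(|V(H)|^2)$ (chain length $10n_H$ times diamond size $n_H{+}1$, plus $H_T$), still $O(1)$ for fixed $H$; otherwise your bag-insertion scheme is correct.
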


Towards proving~\cref{lem:tw:triangleToAnyGraph}, let us first give some intuition and introduce some definitions.

\subparagraph*{Intuition.}
The naive idea is to introduce, for each triangle $T$ of the given instance,
a copy of the graph $H$, say $H_T$,
and to identify three (arbitrary) vertices of $H_T$
with the vertices of the triangle $T$.
However, in this construction there is no guarantee that, in an $H$-packing of this modified instance,
$H_T$ is covered only by one copy of $H$.
Instead, it could be the case that one part of $H_T$
is covered by one copy of $H$, while another part of $H_T$ is covered by a different copy.
To avoid this situation, rather than identifying the vertices of the triangle $T$ with corresponding vertices in $H_T$ directly,
we introduce long chains (of gadgets)
between these vertices.
We create, for each vertex $v$ of a triangle $T$,
a chain $C_{T,v}$ consisting of modified copies of $H$
(which we later formally introduce as diamond gadgets)
that allow us to propagate the information whether $T$ was covered.
Again, as this propagation is not enforced,
there might be packings that do not properly propagate this information.
However, the construction of these diamond gadgets
and the length of the chain ensure that these undesired packings
have size far from the maximum. To make the construction with the chains work, we require that $H$ is connected, and in order to have distinct vertices in $H_T$ corresponding to the vertices of the triangle $T$, we need the requirement that $H$ have size at least $3$.

\subparagraph*{Diamond Gadgets.}
Inspired by Kirkpatrick and Hell~\cite{DBLP:journals/siamcomp/KirkpatrickH83},
we define the \emph{diamond gadget} as follows.
Let $H$ be a connected graph with at least two vertices.
Let $u$ and $u'$ be vertices with maximum distance in $H$. %
Since $H$ is connected, this distance is finite, and since $H$ has at least two vertices, $u$ and $u'$ are distinct.
We introduce a new vertex $\bar u$ that is a copy of $u$ in the sense that it is incident precisely to the neighbors of $u$.
The graph $H$ together with this new vertex $\bar u$
and the corresponding new edges forms the \emph{diamond gadget} (of $H$),
which we abbreviate by the symbol $\dimond$.
We say that $u$ and $\bar u$ are the \emph{connecting vertices} of $\dimond$.

We point out two choices of packing $H$ into $\dimond$ (there might be more):
(1) the packing that covers the original copy of $H$ (with the corresponding vertices), in particular, it covers $u$.
We say that this is the \emph{forward packing} for $\dimond$,
or (2) the packing that is identical to the forward packing, except that it covers $\bar u$ instead of $u$,
we say this is the \emph{backward packing} for $\dimond$.

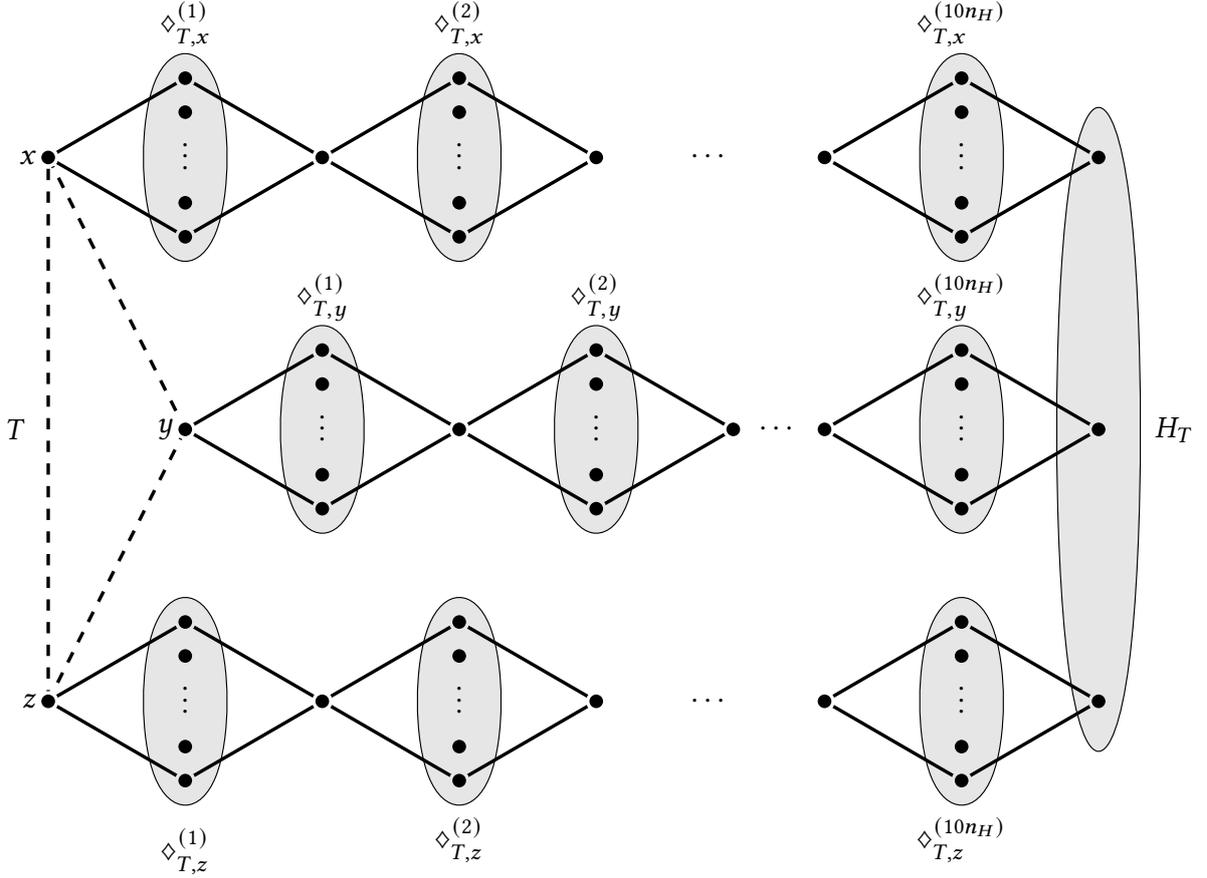
\begin{figure}[t]
	\centering
  \resizebox{\textwidth}{!}{%
  \tikzstyle{ipe stylesheet} = [
  ipe import,
  even odd rule,
  line join=round,
  line cap=butt,
  ipe pen normal/.style={line width=0.4},
  ipe pen heavier/.style={line width=0.8},
  ipe pen fat/.style={line width=1.2},
  ipe pen ultrafat/.style={line width=2},
  ipe pen normal,
  ipe mark normal/.style={ipe mark scale=3},
  ipe mark large/.style={ipe mark scale=5},
  ipe mark small/.style={ipe mark scale=2},
  ipe mark tiny/.style={ipe mark scale=1.1},
  ipe mark normal,
  /pgf/arrow keys/.cd,
  ipe arrow normal/.style={scale=7},
  ipe arrow large/.style={scale=10},
  ipe arrow small/.style={scale=5},
  ipe arrow tiny/.style={scale=3},
  ipe arrow normal,
  /tikz/.cd,
  ipe arrows, %
  <->/.tip = ipe normal,
  ipe dash normal/.style={dash pattern=},
  ipe dash dotted/.style={dash pattern=on 1bp off 3bp},
  ipe dash dashed/.style={dash pattern=on 4bp off 4bp},
  ipe dash dash dotted/.style={dash pattern=on 4bp off 2bp on 1bp off 2bp},
  ipe dash dash dot dotted/.style={dash pattern=on 4bp off 2bp on 1bp off 2bp on 1bp off 2bp},
  ipe dash normal,
  ipe node/.append style={font=\normalsize},
  ipe stretch normal/.style={ipe node stretch=1},
  ipe stretch normal,
  ipe opacity 10/.style={opacity=0.1},
  ipe opacity 30/.style={opacity=0.3},
  ipe opacity 50/.style={opacity=0.5},
  ipe opacity 75/.style={opacity=0.75},
  ipe opacity opaque/.style={opacity=1},
  ipe opacity opaque,
]
\definecolor{black}{rgb}{0,0,0}
\begin{tikzpicture}[ipe stylesheet]
  \filldraw[rgb color={fill=0.9 0.9 0.9}]
    (453.3333, 706.6667)
     .. controls (448, 665.3333) and (448, 582.6667) .. (453.3333, 541.3333)
     .. controls (458.6667, 500) and (469.3333, 500) .. (474.6667, 541.3333)
     .. controls (480, 582.6667) and (480, 665.3333) .. (474.6667, 706.6667)
     .. controls (469.3333, 748) and (458.6667, 748) .. cycle;
  \draw[ipe pen fat, ipe dash dashed]
    (96, 720)
     -- (144, 624);
  \draw[ipe pen fat, ipe dash dashed]
    (144, 624)
     -- (96, 528);
  \draw[ipe pen fat, ipe dash dashed]
    (96, 528)
     -- (96, 720);
  \filldraw[rgb color={fill=0.9 0.9 0.9}]
    (133.3333, 746.6667)
     .. controls (128, 733.3333) and (128, 706.6667) .. (133.3333, 693.3333)
     .. controls (138.6667, 680) and (149.3333, 680) .. (154.6667, 693.3333)
     .. controls (160, 706.6667) and (160, 733.3333) .. (154.6667, 746.6667)
     .. controls (149.3333, 760) and (138.6667, 760) .. cycle;
  \draw[ipe pen fat]
    (144, 748)
     -- (96, 720);
  \draw[ipe pen fat]
    (96, 720)
     -- (144, 692);
  \draw[ipe pen fat]
    (144, 692)
     -- (192, 720);
  \draw[ipe pen fat]
    (192, 720)
     -- (144, 748);
  \pic[ipe mark scale=6.0, draw=white, fill=black]
     at (96, 720) {ipe fdisk};
  \filldraw[rgb color={fill=0.9 0.9 0.9}]
    (229.3333, 746.6667)
     .. controls (224, 733.3333) and (224, 706.6667) .. (229.3333, 693.3333)
     .. controls (234.6667, 680) and (245.3333, 680) .. (250.6667, 693.3333)
     .. controls (256, 706.6667) and (256, 733.3333) .. (250.6667, 746.6667)
     .. controls (245.3333, 760) and (234.6667, 760) .. cycle;
  \draw[ipe pen fat]
    (240, 748)
     -- (192, 720);
  \draw[ipe pen fat]
    (192, 720)
     -- (240, 692);
  \draw[ipe pen fat]
    (240, 692)
     -- (288, 720);
  \draw[ipe pen fat]
    (288, 720)
     -- (240, 748);
  \filldraw[rgb color={fill=0.9 0.9 0.9}]
    (405.3333, 746.6667)
     .. controls (400, 733.3333) and (400, 706.6667) .. (405.3333, 693.3333)
     .. controls (410.6667, 680) and (421.3333, 680) .. (426.6667, 693.3333)
     .. controls (432, 706.6667) and (432, 733.3333) .. (426.6667, 746.6667)
     .. controls (421.3333, 760) and (410.6667, 760) .. cycle;
  \draw[ipe pen fat]
    (416, 748)
     -- (368, 720);
  \draw[ipe pen fat]
    (368, 720)
     -- (416, 692);
  \draw[ipe pen fat]
    (416, 692)
     -- (464, 720);
  \draw[ipe pen fat]
    (464, 720)
     -- (416, 748);
  \pic[ipe mark scale=6.0, rgb color={draw=0.9 0.9 0.9}, fill=black]
     at (464, 720) {ipe fdisk};
  \node[ipe node, anchor=base]
     at (144, 764) {$\diamondsuit^{(1)}_{T,x}$};
  \node[ipe node, anchor=base]
     at (240, 764) {$\diamondsuit^{(2)}_{T,x}$};
  \node[ipe node, anchor=base]
     at (416, 764) {$\diamondsuit^{(10 n_H)}_{T,x}$};
  \pic[ipe mark scale=6.0, draw=white, fill=black]
     at (368, 720) {ipe fdisk};
  \filldraw[rgb color={fill=0.9 0.9 0.9}]
    (133.3333, 554.6667)
     .. controls (128, 541.3333) and (128, 514.6667) .. (133.3333, 501.3333)
     .. controls (138.6667, 488) and (149.3333, 488) .. (154.6667, 501.3333)
     .. controls (160, 514.6667) and (160, 541.3333) .. (154.6667, 554.6667)
     .. controls (149.3333, 568) and (138.6667, 568) .. cycle;
  \draw[ipe pen fat]
    (144, 556)
     -- (96, 528);
  \draw[ipe pen fat]
    (96, 528)
     -- (144, 500);
  \draw[ipe pen fat]
    (144, 500)
     -- (192, 528);
  \draw[ipe pen fat]
    (192, 528)
     -- (144, 556);
  \pic[ipe mark scale=6.0, draw=white, fill=black]
     at (96, 528) {ipe fdisk};
  \filldraw[rgb color={fill=0.9 0.9 0.9}]
    (229.3333, 554.6667)
     .. controls (224, 541.3333) and (224, 514.6667) .. (229.3333, 501.3333)
     .. controls (234.6667, 488) and (245.3333, 488) .. (250.6667, 501.3333)
     .. controls (256, 514.6667) and (256, 541.3333) .. (250.6667, 554.6667)
     .. controls (245.3333, 568) and (234.6667, 568) .. cycle;
  \draw[ipe pen fat]
    (240, 556)
     -- (192, 528);
  \draw[ipe pen fat]
    (192, 528)
     -- (240, 500);
  \draw[ipe pen fat]
    (240, 500)
     -- (288, 528);
  \draw[ipe pen fat]
    (288, 528)
     -- (240, 556);
  \filldraw[rgb color={fill=0.9 0.9 0.9}]
    (405.3333, 554.6667)
     .. controls (400, 541.3333) and (400, 514.6667) .. (405.3333, 501.3333)
     .. controls (410.6667, 488) and (421.3333, 488) .. (426.6667, 501.3333)
     .. controls (432, 514.6667) and (432, 541.3333) .. (426.6667, 554.6667)
     .. controls (421.3333, 568) and (410.6667, 568) .. cycle;
  \draw[ipe pen fat]
    (416, 556)
     -- (368, 528);
  \draw[ipe pen fat]
    (368, 528)
     -- (416, 500);
  \draw[ipe pen fat]
    (416, 500)
     -- (464, 528);
  \draw[ipe pen fat]
    (464, 528)
     -- (416, 556);
  \pic[ipe mark scale=6.0, rgb color={draw=0.9 0.9 0.9}, fill=black]
     at (464, 528) {ipe fdisk};
  \node[ipe node, anchor=base]
     at (144, 472) {$\diamondsuit^{(1)}_{T,z}$};
  \node[ipe node, anchor=base]
     at (240, 476) {$\diamondsuit^{(2)}_{T,z}$};
  \node[ipe node, anchor=base]
     at (416, 476) {$\diamondsuit^{(10 n_H)}_{T,z}$};
  \pic[ipe mark scale=6.0, draw=white, fill=black]
     at (368, 528) {ipe fdisk};
  \filldraw[rgb color={fill=0.9 0.9 0.9}]
    (181.3333, 650.6667)
     .. controls (176, 637.3333) and (176, 610.6667) .. (181.3333, 597.3333)
     .. controls (186.6667, 584) and (197.3333, 584) .. (202.6667, 597.3333)
     .. controls (208, 610.6667) and (208, 637.3333) .. (202.6667, 650.6667)
     .. controls (197.3333, 664) and (186.6667, 664) .. cycle;
  \draw[ipe pen fat]
    (192, 652)
     -- (144, 624);
  \draw[ipe pen fat]
    (144, 624)
     -- (192, 596);
  \draw[ipe pen fat]
    (192, 596)
     -- (240, 624);
  \draw[ipe pen fat]
    (240, 624)
     -- (192, 652);
  \filldraw[rgb color={fill=0.9 0.9 0.9}]
    (277.3333, 650.6667)
     .. controls (272, 637.3333) and (272, 610.6667) .. (277.3333, 597.3333)
     .. controls (282.6667, 584) and (293.3333, 584) .. (298.6667, 597.3333)
     .. controls (304, 610.6667) and (304, 637.3333) .. (298.6667, 650.6667)
     .. controls (293.3333, 664) and (282.6667, 664) .. cycle;
  \draw[ipe pen fat]
    (288, 652)
     -- (240, 624);
  \draw[ipe pen fat]
    (240, 624)
     -- (288, 596);
  \draw[ipe pen fat]
    (288, 596)
     -- (336, 624);
  \draw[ipe pen fat]
    (336, 624)
     -- (288, 652);
  \pic[ipe mark scale=6.0, draw=white, fill=black]
     at (336, 624) {ipe fdisk};
  \filldraw[rgb color={fill=0.9 0.9 0.9}]
    (405.3333, 650.6667)
     .. controls (400, 637.3333) and (400, 610.6667) .. (405.3333, 597.3333)
     .. controls (410.6667, 584) and (421.3333, 584) .. (426.6667, 597.3333)
     .. controls (432, 610.6667) and (432, 637.3333) .. (426.6667, 650.6667)
     .. controls (421.3333, 664) and (410.6667, 664) .. cycle;
  \draw[ipe pen fat]
    (416, 652)
     -- (368, 624);
  \draw[ipe pen fat]
    (368, 624)
     -- (416, 596);
  \draw[ipe pen fat]
    (416, 596)
     -- (464, 624);
  \draw[ipe pen fat]
    (464, 624)
     -- (416, 652);
  \pic[ipe mark scale=6.0, rgb color={draw=0.9 0.9 0.9}, fill=black]
     at (464, 624) {ipe fdisk};
  \node[ipe node, anchor=base]
     at (192, 668) {$\diamondsuit^{(1)}_{T,y}$};
  \node[ipe node, anchor=base]
     at (288, 668) {$\diamondsuit^{(2)}_{T,y}$};
  \node[ipe node, anchor=base]
     at (416, 668) {$\diamondsuit^{(10 n_H)}_{T,y}$};
  \pic[ipe mark scale=6.0, draw=white, fill=black]
     at (368, 624) {ipe fdisk};
  \pic[ipe mark scale=6.0, rgb color={draw=0.9 0.9 0.9}, fill=black]
     at (416, 736) {ipe fdisk};
  \pic[ipe mark scale=6.0, rgb color={draw=0.9 0.9 0.9}, fill=black]
     at (416, 704) {ipe fdisk};
  \node[ipe node, anchor=base]
     at (416, 716) {$\vdots$};
  \pic[ipe mark scale=6.0, rgb color={draw=0.9 0.9 0.9}, fill=black]
     at (416, 556) {ipe fdisk};
  \pic[ipe mark scale=6.0, rgb color={draw=0.9 0.9 0.9}, fill=black]
     at (416, 544) {ipe fdisk};
  \pic[ipe mark scale=6.0, rgb color={draw=0.9 0.9 0.9}, fill=black]
     at (416, 512) {ipe fdisk};
  \pic[ipe mark scale=6.0, rgb color={draw=0.9 0.9 0.9}, fill=black]
     at (416, 500) {ipe fdisk};
  \node[ipe node, anchor=base]
     at (416, 524) {$\vdots$};
  \pic[ipe mark scale=6.0, rgb color={draw=0.9 0.9 0.9}, fill=black]
     at (416, 652) {ipe fdisk};
  \pic[ipe mark scale=6.0, rgb color={draw=0.9 0.9 0.9}, fill=black]
     at (416, 640) {ipe fdisk};
  \pic[ipe mark scale=6.0, rgb color={draw=0.9 0.9 0.9}, fill=black]
     at (416, 608) {ipe fdisk};
  \pic[ipe mark scale=6.0, rgb color={draw=0.9 0.9 0.9}, fill=black]
     at (416, 596) {ipe fdisk};
  \node[ipe node, anchor=base]
     at (416, 620) {$\vdots$};
  \pic[ipe mark scale=6.0, rgb color={draw=0.9 0.9 0.9}, fill=black]
     at (416, 748) {ipe fdisk};
  \pic[ipe mark scale=6.0, rgb color={draw=0.9 0.9 0.9}, fill=black]
     at (416, 692) {ipe fdisk};
  \pic[ipe mark scale=6.0, rgb color={draw=0.9 0.9 0.9}, fill=black]
     at (240, 748) {ipe fdisk};
  \pic[ipe mark scale=6.0, rgb color={draw=0.9 0.9 0.9}, fill=black]
     at (240, 736) {ipe fdisk};
  \pic[ipe mark scale=6.0, rgb color={draw=0.9 0.9 0.9}, fill=black]
     at (240, 704) {ipe fdisk};
  \pic[ipe mark scale=6.0, rgb color={draw=0.9 0.9 0.9}, fill=black]
     at (240, 692) {ipe fdisk};
  \node[ipe node, anchor=base]
     at (240, 716) {$\vdots$};
  \pic[ipe mark scale=6.0, rgb color={draw=0.9 0.9 0.9}, fill=black]
     at (240, 556) {ipe fdisk};
  \pic[ipe mark scale=6.0, rgb color={draw=0.9 0.9 0.9}, fill=black]
     at (240, 544) {ipe fdisk};
  \pic[ipe mark scale=6.0, rgb color={draw=0.9 0.9 0.9}, fill=black]
     at (240, 512) {ipe fdisk};
  \pic[ipe mark scale=6.0, rgb color={draw=0.9 0.9 0.9}, fill=black]
     at (240, 500) {ipe fdisk};
  \node[ipe node, anchor=base]
     at (240, 524) {$\vdots$};
  \pic[ipe mark scale=6.0, draw=white, fill=black]
     at (240, 624) {ipe fdisk};
  \pic[ipe mark scale=6.0, draw=white, fill=black]
     at (288, 720) {ipe fdisk};
  \pic[ipe mark scale=6.0, draw=white, fill=black]
     at (288, 528) {ipe fdisk};
  \pic[ipe mark scale=6.0, rgb color={draw=0.9 0.9 0.9}, fill=black]
     at (288, 652) {ipe fdisk};
  \pic[ipe mark scale=6.0, rgb color={draw=0.9 0.9 0.9}, fill=black]
     at (288, 640) {ipe fdisk};
  \pic[ipe mark scale=6.0, rgb color={draw=0.9 0.9 0.9}, fill=black]
     at (288, 608) {ipe fdisk};
  \pic[ipe mark scale=6.0, rgb color={draw=0.9 0.9 0.9}, fill=black]
     at (288, 596) {ipe fdisk};
  \node[ipe node, anchor=base]
     at (288, 620) {$\vdots$};
  \pic[ipe mark scale=6.0, rgb color={draw=0.9 0.9 0.9}, fill=black]
     at (144, 748) {ipe fdisk};
  \pic[ipe mark scale=6.0, rgb color={draw=0.9 0.9 0.9}, fill=black]
     at (144, 736) {ipe fdisk};
  \pic[ipe mark scale=6.0, rgb color={draw=0.9 0.9 0.9}, fill=black]
     at (144, 704) {ipe fdisk};
  \pic[ipe mark scale=6.0, rgb color={draw=0.9 0.9 0.9}, fill=black]
     at (144, 692) {ipe fdisk};
  \node[ipe node, anchor=base]
     at (144, 716) {$\vdots$};
  \pic[ipe mark scale=6.0, rgb color={draw=0.9 0.9 0.9}, fill=black]
     at (144, 556) {ipe fdisk};
  \pic[ipe mark scale=6.0, rgb color={draw=0.9 0.9 0.9}, fill=black]
     at (144, 544) {ipe fdisk};
  \pic[ipe mark scale=6.0, rgb color={draw=0.9 0.9 0.9}, fill=black]
     at (144, 512) {ipe fdisk};
  \pic[ipe mark scale=6.0, rgb color={draw=0.9 0.9 0.9}, fill=black]
     at (144, 500) {ipe fdisk};
  \node[ipe node, anchor=base]
     at (144, 524) {$\vdots$};
  \pic[ipe mark scale=6.0, draw=white, fill=black]
     at (144, 624) {ipe fdisk};
  \pic[ipe mark scale=6.0, draw=white, fill=black]
     at (192, 720) {ipe fdisk};
  \pic[ipe mark scale=6.0, draw=white, fill=black]
     at (192, 528) {ipe fdisk};
  \pic[ipe mark scale=6.0, rgb color={draw=0.9 0.9 0.9}, fill=black]
     at (192, 652) {ipe fdisk};
  \pic[ipe mark scale=6.0, rgb color={draw=0.9 0.9 0.9}, fill=black]
     at (192, 640) {ipe fdisk};
  \pic[ipe mark scale=6.0, rgb color={draw=0.9 0.9 0.9}, fill=black]
     at (192, 608) {ipe fdisk};
  \pic[ipe mark scale=6.0, rgb color={draw=0.9 0.9 0.9}, fill=black]
     at (192, 596) {ipe fdisk};
  \node[ipe node, anchor=base]
     at (192, 620) {$\vdots$};
  \node[ipe node, anchor=east]
     at (92, 720) {$x$};
  \node[ipe node, anchor=east]
     at (140, 624) {$y$};
  \node[ipe node, anchor=east]
     at (92, 528) {$z$};
  \node[ipe node, anchor=west]
     at (484, 624) {$H_T$};
  \node[ipe node, anchor=east]
     at (88, 624) {$T$};
  \node[ipe node, anchor=base]
     at (328, 720) {$\dots$};
  \node[ipe node, anchor=base]
     at (352, 624) {$\dots$};
  \node[ipe node, anchor=base]
     at (328, 528) {$\dots$};
\end{tikzpicture}
  }
	\caption{%
  An illustration of $H_T$ for $T=xyz$.
  The three rows of diamond gadgets correspond to
  the chains $C_{T,x}$, $C_{T,y}$, and $C_{T,z}$, respectively.
  The dashed edges of the triangle are deleted in the construction.
  }

	\label{fig:tw-lower-bound:general-H}
\end{figure}

\begin{proof}[Proof of \Cref{lem:tw:triangleToAnyGraph}]
  \newcommand{\vertH}{n_H}
  \newcommand{\Tris}{\mathcal T}

  Let $H$ be an arbitrary connected graph with at least three vertices.
  Given an instance $I_\triangle=(G, U, k, \ell)$ of \TriUndelHitPack,
  we define an instance $I_H$ of \UndelHitPack{H} as follows.
  We set $\vertH \deff |V(H)|$,
  and we denote by $\Tris$ the set of all triangles in $G$.

  For each triangle $T \in \Tris$ and each vertex $v$ of $T$
  we introduce a gadget $C_{T,v}$ (cf.~\cref{fig:tw-lower-bound:general-H}) as follows:
  \begin{itemize}
    \item
    For all $i\in \numb{10 \vertH}$,
    we create a copy of the gadget $\dimond$,
    denoted by $\dimond^{(i)}_{T,v}$.
    \item
    For all $i \in \numb{10\vertH-1}$,
    we identify vertex $\bar u$ of $\dimond^{(i)}_{T,v}$
    with vertex $u$ of $\dimond^{(i+1)}_{T,v}$.
    \item
    The gadget $C_{T,v}$ consists of the union of the diamond gadgets
    $\dimond^{(1)}_{T,v},\dots,\dimond^{(10\vertH)}_{T,v}$.
    \item
    If, for all $i \in \numb{10\vertH}$,
    we choose the forward packing for $\dimond^{(i)}_{T,v}$,
    then we say this is the \emph{forward packing} for $C_{T,v}$.
    \item
    If, for all $i \in \numb{10\vertH}$,
    we choose the backward packing for $\dimond^{(i)}_{T,v}$,
    then we say this is the \emph{backward packing} for $C_{T,v}$.
  \end{itemize}

  For each triangle $T \in \Tris$ and for every vertex $v \in T$,
  we identify $v$ with the connecting vertex $u$ of $\dimond^{(1)}_{T,v}$.
  For every triangle $T=xyz$ in $G$,
  we introduce a copy of $H$, denoted by $H_{T}$.
  We identify the connecting vertices $\bar u$ of
  $\dimond^{(10\vertH)}_{T,x}, \dimond^{(10\vertH)}_{T,y},
  \dimond^{(10\vertH)}_{T,z}$ with three (arbitrary) distinct vertices
  $v_p,v_q,v_r$ of $H_{T}$, respectively.
  In order to avoid the situation that the edges in the triangles of $G$ are used for packing $H$, we delete all edges which are originally in $G$. The construction for some triangle $T$ is illustrated in \cref{fig:tw-lower-bound:general-H}.

  Let $G'$ be the resulting graph.
  We define $U' \deff U \cup V(G') \setminus V(G)$
  as the set of undeletable vertices of $G'$,
  that is, the deletable vertices are precisely the deletable vertices from $G$.
  We set $k'\deff k$
  and $\ell' \deff 30\vertH \cdot \abs{\Tris} + \ell$.
  Let $I_H=(G', U', k', \ell')$ be the instance of \UndelHitPack{H}.
  This finishes the construction.

  Now we prove the correctness of this reduction.
  \begin{claim}
    \label{clm:tw-general:lower:reduction:soundness}
    If $I_\triangle$ is a \yes-instance of \TriUndelHitPack,
    then $I_H$ is a \yes-instance of \UndelHitPack{H}.
  \end{claim}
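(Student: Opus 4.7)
The plan is to take $S':=S$ as the solution for the constructed instance, noting $|S'|=|S|\le k=k'$ and $S'\subseteq V(G)\setminus U\subseteq V(G')\setminus U'$, so $S'$ is a valid deletion set for $I_H$. It remains to show $G'-S'$ contains no $H$-packing of size $\ge\ell'$, which I will do by the contrapositive: given an $H$-packing $\mathcal{Q}$ in $G'-S'$ of size $\ge\ell'=30n_H|\mathcal{T}|+\ell$, I will extract a vertex-disjoint triangle packing in $G-S$ of size $\ge\ell$, contradicting the assumption on $S$.

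First I will assign each copy $C\in\mathcal{Q}$ to a triangle $T(C)$ whose gadget (the union of $C_{T,x},C_{T,y},C_{T,z},H_T$) contains all, or at least a maximal share of, the vertices of $C$. Since all edges of $G$ were removed, different $T$-gadgets share only the triangle vertices $\{x,y,z\}$, and any $H$-copy either lies wholly in one gadget or bridges two gadgets through a single shared triangle vertex it uses. The assignment can be made so that at most $30n_H+1$ copies are charged to each $T$. This per-gadget bound comes from the vertex count $|V(\text{$T$-gadget})|=3(10n_H^2+1)+n_H-3=30n_H^2+n_H$ together with the vertex-disjointness of $\mathcal{Q}$ and the fact that each copy uses exactly $n_H$ vertices.

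Next I will define $\mathcal{P}:=\{T\in\mathcal{T}:|\mathcal{Q}_T|=30n_H+1\}$, the set of \emph{activated} triangles. The per-gadget bound combined with the hypothesis $|\mathcal{Q}|\ge 30n_H|\mathcal{T}|+\ell$ immediately forces $|\mathcal{P}|\ge\ell$. The crucial step is then to show that $\mathcal{P}$ is a vertex-disjoint triangle packing of $G-S$. For an activated triangle $T=xyz$, achieving the maximum $30n_H+1$ copies in the $T$-gadget requires packing a copy of $H$ entirely inside $H_T$, which in turn requires the three endpoints $v_p,v_q,v_r$ of $H_T$ to be free of chain-packings. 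A rigid structural property of the diamond chain then forces each chain $C_{T,v}$ to be packed in its ``forward'' orientation throughout its $10n_H$ diamonds, covering the triangle vertex $v$. Consequently each of $x,y,z$ is covered by some copy in $\mathcal{Q}_T$, and vertex-disjointness of $\mathcal{Q}$ prevents two activated triangles from sharing a vertex; since these covered vertices lie in $V(G')\setminus S'=V(G)\setminus S$, the set $\mathcal{P}$ is a triangle packing of $G-S$ of size $\ge\ell$.

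The main obstacle is the ``forward-propagation'' property for the diamond chain: any packing of $C_{T,v}$ that attains the maximum $10n_H$ copies and leaves the $H_T$-endpoint uncovered must cover the triangle-vertex endpoint $v$. When $H$ is $2$-connected this is essentially immediate, as every $H$-copy is confined to a single diamond and the choice at each diamond is literally forward or backward. When $H$ has cut vertices, however, $H$-copies may straddle two consecutive diamonds, and one needs an induction along the chain that rules out ``mixed'' packings and handles the possibility of fewer-than-maximum packings still covering the chain; the length $10n_H$ is chosen precisely to absorb such boundary effects. This is the same phenomenon exploited by Kirkpatrick and Hell~\cite{DBLP:journals/siamcomp/KirkpatrickH83}, adapted to our setting where not every vertex of $G'$ must be covered by the packing.
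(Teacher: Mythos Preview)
Your strategy---assume a large $H$-packing $\mathcal{Q}$ in $G'-S$ and extract a large triangle packing in $G-S$---matches the paper's, but your counting step has a genuine gap. You cap each triangle gadget's charge at $30n_H+1$ via the vertex count $n_H(30n_H+1)$, yet this cap holds only for copies lying entirely inside that gadget. Bridging copies---those containing a triangle vertex $v$ and extending simultaneously into the first diamonds of several chains $C_{T_1,v},C_{T_2,v},\ldots$---exist for general $H$ (already for $H=P_3$) and can touch more than two gadgets, contrary to what you assert. A bridging copy charged to $T$ may place most of its $n_H$ vertices in \emph{other} gadgets, so nothing you state keeps the charge at $T$ from exceeding $30n_H+1$; without that cap the inequality $|\mathcal{Q}|\le 30n_H|\mathcal{T}|+|\mathcal{P}|$ underlying $|\mathcal{P}|\ge\ell$ fails. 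The number of bridging copies can be as large as $|V(G)|\gg\ell$, so the slack cannot be absorbed.

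The paper avoids this by \emph{normalizing first, then counting}: it shows that, without decreasing $|\mathcal{Q}|$, one may assume every diamond $\diamondsuit^{(i)}_{T,v}$ is covered by its forward or backward packing, so every chain is uniformly packed. This is precisely your ``forward propagation,'' but applied globally to \emph{all} chains at the outset rather than only to chains of activated triangles after the fact. The decisive step is that if no $u'$ along a chain is covered then at most $10(n_H-1)+3$ copies touch it, strictly fewer than the $10n_H$ of the backward packing, so a maximum packing already has every chain normalized. After normalization there are no bridging copies at all; every copy is a diamond packing or exactly one $H_T$, the counting becomes trivial, and the at least $\ell$ copies on $H_T$'s force forward packings and hence coverage of the triangle vertices. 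In short, you invoke the right structural lemma but too late and too locally; running it before the counting closes the gap.
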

  \begin{claimproof}
    Let $S$ be some solution of $I_\triangle=(G, U, k, \ell)$.
    Since the deletable vertices of $G$ are also deletable in $G'$
    and by definition of $I_H$,
    we get $S \subseteq V(G') \setminus U'$ and $\abs{S} \le k=k'$.
    In the remainder we prove that all $H$-packings of  $G'\setminus S$
    have size less than $\ell'$.

    Assume, for the sake of contradiction that there is an $H$-packing $\mathcal P'$ for $G'-S$
    containing at least $\ell'$ copies of $H$.
    In the next steps we show that we can assume
    that $\mathcal P'$ is either the forward packing or the backward packing on all
    gadgets $C_{T,v}$. A priori, this is not clear as, while we delete the edges
    of the original graph $G$, each original vertex can be contained in multiple
    triangles of $G$, and therefore might now be connected to multiple gadgets
    $C_{T,v}$, $C_{T',v}$, and so on. So, a copy of $H$ in some packing could
    cover some part of $C_{T,v}$ and some part of $C_{T',v}$ (other than just
    $v$). We argue that without loss of generality, we can assume that this is not the case.

    \begin{enumerate}
	\item Consider %
	some diamond gadget $\dimond_{T,v}^{(i)}$. 
	Since $H$ is connected and has at least two vertices, this gadget has precisely two vertices that are connected to the rest of $G"$: the connecting vertices $u$ and $\bar u$. 
	Consequently, as $H$ is connected and $\dimond_{T,v}^{(i)}- \{u,\bar u\}$ only has $|V(H)|-1$ vertices, a copy of $H$ in the packing $\mathcal P'$ that covers some of the vertices of $\dimond_{T,v}^{(i)}$ has to cover at least one of $u$ or $\bar u$. Thus, there are at most two copies of $H$ that cover some vertices of $\dimond_{T,v}^{(i)}$. 
    \item
    Now suppose that in $\mathcal P'$ there is a copy of $H$
    that covers both connecting vertices of some diamond gadget. Then no other copy of $H$ covers vertices of this diamond gadget.
    Then we can replace this copy of $H$
    by the forward or backward packing for this diamond gadget
    without changing the size (or feasibility) of $\mathcal P'$.
    Thus, we can assume that in the $H$-packing $\mathcal P'$ there is no copy of $H$
    covering both connecting vertices of some diamond gadget.
    \item 
	Consider some diamond gadget $\dimond^{(i)}_{T,v}$ and the corresponding distinguished vertices $u$, $\bar u$, and $u'$.
	Suppose some copy of $H$ in $\mathcal P'$ covers some vertex $w$ outside of $\dimond^{(i)}_{T,v}$. We argue that this copy of $H$ cannot cover $u'$:
	First, as $H$ is connected, the vertices $u'$ and $w$ have to be connected in $G$. Consider a shortest path between $u'$ and $w$ in $G$. It has to go through one of the connecting vertices $u$ or $\bar u$, say through $u$. This implies that the distance between $u'$ and $w$ is strictly larger than the distance between $u'$ and $u$. Since the distance between $u'$ and $u$ equals the diameter of $H$, the graph $H$ cannot cover both $w$ and $u'$.
	So we have shown that a copy of $H$ that covers $u'$ covers no vertex outside of the corresponding gadget $\dimond^{(i)}_{T,v}$. By the previous deductions, we know that such a copy covers precisely one of $u$ or $\bar u$, and actually due to the fact that $\dimond^{(i)}_{T,v}$ has precisely $|V(H)|+1$ vertices, such a copy of $H$ then covers all vertices of $\dimond^{(i)}_{T,v}$ apart from one of $\bar u$ or $u$. So, summarizing, if $u'$ is covered, then we can assume that  $\dimond^{(i)}_{T,v}$ is covered according to the forward packing or backward packing, respectively.

    \item
    Suppose that
    for some $T$ and $v$,
    there is (at least) one gadget $\dimond^{(i)}_{T,v}$
    for which the corresponding vertex $u'$ is covered by $\mathcal P$.
    Let $i_{\min}\in \numb{10\vertH}$ be the smallest such index.
	From the previous point, we can assume that the copy of $H$ that covers $u'$ covers $\dimond^{(i_{\min})}_{T,v}$
    according to either the forward or the backward packing. 
    Suppose it is the backward packing --- the other case is analogous.
    Then $\bar u$ of $\dimond^{(i_{\min}+1)}_{T,v}$ is already covered by the backward packing of $\dimond^{(i_{\min})}_{T,v}$ and consequently there is at most one other copy of $H$ in the packing $\mathcal P$ that covers vertices of $\dimond^{(i_{\min}+1)}_{T,v}$. This copy then has to cover the vertex corresponding to $u$ (in $\dimond^{(i_{\min}+1)}_{T,v}$). Therefore, we can replace this copy of $H$ by the backward packing of $\dimond^{(i_{\min}+1)}_{T,v}$ without violating the feasibility of the packing. Using the argument iteratively, we can assume that for all $i\ge i_{\min}$ the corresponding gadget is covered according to the backward packing.
    Now let us look at $\dimond^{(i_{\min}-1)}_{T,v}$. By assumption, the corresponding vertex $u'$ is not covered. Let us consider a copy of $H$ that covers some vertices from $\dimond^{(i_{\min}-1)}_{T,v}$. We know that it covers precisely one of $u$ or $\bar u$ from $\dimond^{(i_{\min}-1)}_{T,v}$. Since it does not covers both, but it also does not cover $u'$, from the number of vertices of $H$, it follows that it has to cover some vertices outside of $\dimond^{(i_{\min}-1)}_{T,v}$. Thus, it cannot cover $\bar u$ since then it would have to cover other vertices in $\dimond^{(i_{\min})}_{T,v}$ --- however, all of these vertices are already covered. It follows that such a copy of $H$ would have to cover $u$, and importantly this means that there is only one such copy in $\mathcal P$ that covers vertices of $\dimond^{(i_{\min}-1)}_{T,v}$. Hence, we can replace it by the backward packing for $\dimond^{(i_{\min}-1)}_{T,v}$ without violating the feasibility of the packing. Again, we can use the argument iteratively for all $i<i_{\min}$.
    So we can assume the backward packing on $C_{T,v}$.
    Similarly we could assume the forward packing on $C_{T,v}$ if $\dimond^{(i_{\min})}_{T,v}$ were covered according to the forward packing.
    
    \item
    Finally, suppose there are $T$ and $v$ such that, for all $i\in \numb{10\vertH}$,
    the vertex $u'$ from $\dimond^{(i)}_{T,v}$ is not covered by the packing $\mathcal P'$.
    In this case the packing covers at most
    $10 (\vertH-1) \cdot \vertH$ vertices of $C_{T,v}$. Note also, that since $H$ is connected, each copy of $H$ in the packing that covers vertices within $C_{T,v}$ as well as outside, has to cover one of the three vertices of the triangle $T$. Consequently, there can be at most $3$ such copies of $H$ in the packing. Using this fact, together with the fact that at most $10 (\vertH-1) \cdot \vertH$ vertices of $C_{T,v}$ are covered, we conclude that at most $10(\vertH-1)+3$
    copies of $H$ cover vertices of $C_{T,v}$.
    Hence we can replace these copies of $H$ with
    the backward packing for $C_{T,v}$ and obtain an overall packing that is
    strictly larger
    as we now pack $10\vertH$ copies of $H$ to cover $C_{T,v}$.
    \end{enumerate}

    So we have shown that in $\mathcal P'$, every copy of $H$ either corresponds to a
    backward or forward packing for a diamond gadget
    or covers precisely the vertices of some $H_T$ (if it were to cover some vertices of $H_T$ and some vertices outside of $H_T$ then these outside vertices are part of some diamond gadget and this would contradict the fact that this diamond gadget is covered either by the forward or backward packing).

    Recall that we assumed for contradiction that $\abs{\mathcal P'} \ge \ell'=30\vertH \cdot \abs{\Tris} + \ell$.
    By the construction of the graph,
    the packing $\mathcal P''$ contains already $30\vertH \cdot \abs{\Tris}$
    copies of $H$ from the chains $C_{T,v}$.
    Then it contains at least an additional $\ell$ copies of $H$
    that cover $H_T$ for some triangle $T=xyz$.
    However, then the corresponding gadgets $C_{T,x}$, $C_{T,x}$, and $C_{T,x}$
    have to be covered according to the \emph{forward} packing
    as otherwise the vertices at the intersection of $H_T$
    and these chains would be covered twice.
    It follows that the vertices of the triangle $T$ are covered.
    So this would imply that there is a triangle-packing
    of size at least $\ell$ in $G\setminus S$,
    contradicting that $S$ is a solution for $I$.

    We conclude that
    $I_H=(G',U',k',\ell')$ is a \yes-instance of \UndelHitPack{H}
    which finishes the proof of the claim.
  \end{claimproof}

  Now we prove the reverse direction.

  \begin{claim}
    \label{clm:tw-general:lower:reduction:completeness}
    If $I_H=(G',U',k',\ell')$ is a \yes-instance of \UndelHitPack{H},
    then $I_\triangle=(G, U, k, \ell)$ is a \yes-instance of \TriUndelHitPack.
  \end{claim}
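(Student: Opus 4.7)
Assume $I_H = (G', U', k', \ell')$ is a \yes-instance, witnessed by a set $S \subseteq V(G') \setminus U'$ with $|S| \le k'$ such that $G' - S$ contains no $H$-packing of size $\ell'$. Since $V(G') \setminus U' = V(G) \setminus U$ by construction and $k' = k$, the set $S$ is a valid candidate solution for $I_\triangle$. The plan is to prove the contrapositive: assuming a triangle packing $\mathcal{T}$ of size at least $\ell$ exists in $G - S$, construct an $H$-packing of size at least $\ell' = 30\vertH \cdot |\Tris| + \ell$ in $G' - S$, contradicting that $S$ is a solution for $I_H$.

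The construction will proceed triangle-by-triangle in $G$. For each triangle $T = xyz \in \Tris$, we choose a mode and pack accordingly:
\begin{itemize}
    \item If $T \in \mathcal{T}$, use the \emph{forward} packing on each of the three chains $C_{T,x}, C_{T,y}, C_{T,z}$, which yields $10\vertH$ copies of $H$ per chain (and crucially covers $x, y, z$ while leaving the three vertices of $H_T$ at the far ends uncovered). Then pack an additional copy of $H$ on $H_T$. This contributes $30\vertH + 1$ copies of $H$ for $T$.
    \item If $T \notin \mathcal{T}$, use the \emph{backward} packing on all three chains $C_{T,x}, C_{T,y}, C_{T,z}$. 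This yields $30\vertH$ copies of $H$ per triangle while leaving $x, y, z$ uncovered by the chains (and covering the three corresponding vertices of $H_T$).
\end{itemize}

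The key verification — and the step that requires some care — is that the resulting collection is actually a valid (vertex-disjoint) $H$-packing of $G' - S$. The chains $C_{T,v}$ and $C_{T',v'}$ only share vertices when $v = v'$, in which case they share exactly the vertex $v$ (identified with $u$ of $\dimond^{(1)}_{T,v}$ and $u$ of $\dimond^{(1)}_{T',v}$). Since $\mathcal{T}$ is a triangle packing, each vertex $v$ lies in at most one $T \in \mathcal{T}$; so forward packing (which covers $v$) is selected for at most one chain incident to $v$, and every other chain at $v$ uses the backward packing (which does not touch $v$). For each $T \in \mathcal{T}$, the three "far" vertices of $H_T$ are left free precisely because we chose forward on the three chains, so the additional copy on $H_T$ does not conflict with any chain packing. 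If $v \in S$, then $v$ does not belong to any $T \in \mathcal{T}$, so only backward packings (which do not use $v$) are applied at $v$, and these remain valid in $G' - S$ since all interior vertices of chains and all vertices of every $H_T$ are undeletable.

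A straightforward count then gives
\[
    (30\vertH + 1) \cdot |\mathcal{T}| + 30\vertH \cdot (|\Tris| - |\mathcal{T}|)
    = 30\vertH \cdot |\Tris| + |\mathcal{T}| \ge 30\vertH \cdot |\Tris| + \ell = \ell',
\]
contradicting the assumption that no $H$-packing of size $\ell'$ exists in $G' - S$. The main obstacle is the case analysis ensuring disjointness at shared vertices of chains and at the attachment points of each $H_T$; I expect this to be the only nontrivial verification.
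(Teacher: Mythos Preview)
Your proposal is correct and follows essentially the same approach as the paper: take the solution $S$ for $I_H$ as the candidate solution for $I_\triangle$, assume a triangle packing of size $\ell$ survives, and build an $H$-packing of size $\ell'$ by choosing forward chain packings plus $H_T$ for packed triangles and backward chain packings otherwise. Your disjointness analysis at shared vertices is in fact more detailed than the paper's, which simply asserts feasibility as straightforward.
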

  \begin{claimproof}
    Let $S'$ be some solution for $I_H$.
    Since $I_H$ and $I_\triangle$ have the same set of deletable vertices,
    we have $S' \subseteq V(G) \setminus U$.
    Moreover, since $k' = k$,
    it also holds that $\abs{S'} \le k$.
    To show that $S'$ is a solution for $I_\triangle$,
    it remains to prove that $G-S'$ has no $\triangle$-packing of size at least $\ell$.

    Suppose for contradiction's sake that there is a $\triangle$-packing
    $\mathcal P$ of size at least $\ell$ in $G-S$.
    We define a corresponding  $H$-packing $\mathcal P'$ in $G'\setminus S'$ as follows.
    \begin{itemize}
      \item
      For every triangle $T \in \mathcal P$ and every vertex $v \in T$,
      we add to $\mathcal P'$ the forward packing for $C_{T,v}$
      (containing $10\vertH$ copies of $H$),
      and a copy of $H$ that covers precisely $H_T$.  Note that this is feasible as the forward packing does not cover any vertices from $H_T$.
      This gives a total of  $30\vertH+1$ copies of $H$ in the packing $\mathcal P'$ per triangle in $\mathcal P$.

      \item
      For every triangle $T \in \Tris\setminus \mathcal P$ and every vertex $v \in T$,
      we add the backward packing for $C_{T,v}$
      to $\mathcal P'$. This gives a total of $30\vertH$ copies of $H$ per triangle outside of $\mathcal P$.
    \end{itemize}
    It is straightforward to see that the resulting packing $\mathcal P'$ is indeed feasible, and it has size at least $30\vertH \abs{\Tris} + \ell=\ell'$.
    This finishes the proof of the claim.
  \end{claimproof}

  It remains to show that the pathwidth of $G'$ exceeds that of $G$ by at most some additive constant.
  \begin{claim}
    \label{clm:tw-general:lower:reduction:size}
    If $G$ has pathwidth $\pw$,
    then $G'$ has pathwidth $\pw+\Oh(1)$.
  \end{claim}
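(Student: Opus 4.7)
The plan is to start from an optimal path decomposition $\mathcal{P}$ of $G$ of width $\pw$ and modify it into a path decomposition $\mathcal{P}'$ of $G'$ whose width exceeds that of $\mathcal{P}$ by at most a constant depending only on $|V(H)|$. The crucial structural observation is that every triangle of $G$ is a clique of size three, and therefore, by the standard clique-containment property of tree/path decompositions, for every $T \in \Tris$ there exists a bag $B_T$ of $\mathcal{P}$ that contains all three vertices of $T$. Fix such a bag for each triangle arbitrarily.

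Next, I process the triangles one at a time, in the order in which their chosen bags appear along the path. For each triangle $T = xyz$, I insert a new block of bags immediately after $B_T$, all of which retain the original contents of $B_T$ and additionally contain some gadget vertices. Since $B_T$ already contains $x$, $y$, and $z$, I can first process the chain $C_{T,x}$ diamond by diamond: for the $i$-th diamond I add all of its at most $|V(H)|+1$ vertices into the current bag, introduce all of its edges, and then forget every vertex of this diamond except the connecting vertex shared with the $(i{+}1)$-th diamond. Doing so for all $10 n_H$ diamonds of $C_{T,x}$, then similarly for $C_{T,y}$ and $C_{T,z}$, and finally adding the at most $|V(H)|$ vertices of $H_T$ (three of which are already present as the endpoints of the chains) before forgetting them, completes the insertion for $T$. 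After the entire block, the bag returns to exactly $B_T$, and the original decomposition resumes undisturbed.

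Throughout the inserted block, the bag consists of $B_T$ together with (i) the three last-seen connecting vertices of the chains currently under construction, and (ii) the at most $|V(H)|+1$ vertices of the diamond or the copy of $H_T$ being introduced. Hence every inserted bag has size at most $|B_T| + \Oh(|V(H)|) = \pw + 1 + \Oh(|V(H)|)$, and since $H$ is a fixed graph this is $\pw + \Oh(1)$. The decomposition $\mathcal{P}'$ is valid: the bags of every original vertex still form a contiguous subpath (unchanged except for possibly stretching across the inserted blocks at its chosen-bag position), the bags of every gadget vertex form a contiguous subpath within the block of its triangle, and every edge of $G'$ (either internal to a diamond, internal to $H_T$, or connecting the triangle vertices to the first diamond of the corresponding chain) is contained in at least one of the inserted bags. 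Note that we do not need to cover any edges of the original graph $G$ because the construction explicitly deletes them.

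The only mild subtlety is that several triangles may share the same chosen bag $B_T$; this is handled by inserting their blocks consecutively in the chosen triangle ordering, which does not affect the per-block size bound. Thus the width of $\mathcal{P}'$ is at most $\pw + \Oh(1)$, proving the claim.
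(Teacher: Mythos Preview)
Your proof is correct and follows the same high-level approach as the paper: use the clique-containment property of path decompositions to find, for each triangle $T$, a bag $B_T$ containing all three vertices of $T$, and then insert the gadget vertices for $T$ there. The paper's argument is considerably simpler, though: since the entire gadget structure attached to a single triangle (all three chains $C_{T,x},C_{T,y},C_{T,z}$ together with $H_T$) has size $\Oh(n_H^2)$, which is a constant once $H$ is fixed, the paper simply replaces $B_T$ by one bag containing $B_T$ plus \emph{all} of these gadget vertices at once. Your diamond-by-diamond threading achieves a tighter additive constant of $\Oh(n_H)$ rather than $\Oh(n_H^2)$, but since both are $\Oh(1)$ for fixed $H$, the extra care is not needed here.
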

  \begin{claimproof}

    Suppose there is a path decomposition for $G$ of width $\pw$.
    For every triangle $T=xyz$ of $G$,
    there is at least one bag that contains all vertices of $T$.
    We replace it with a bag $X_T$ that contains all vertices of $C_{T,x}$, $C_{T,y}$, $C_{T,z}$, and of $H_T$.
    This gives a valid path decomposition of $G'$.
    Since the number of vertices in $C_{T,x}$, $C_{T,y}$, $C_{T,z}$, and $H_T$
    depends only on $H$ (which is fixed),
    the pathwidth of $G'$ is $\pw+\Oh(1)$.
  \end{claimproof}
  This finishes the proof of \cref{lem:tw:triangleToAnyGraph}.
\end{proof}

\section{Double-Exponential Lower Bounds Parameterized by Pathwidth}\label{sec:pw:lower}

In this section, we prove \cref{thm:tw:lower:H,thm:tw:lower:cycles}, which state double exponential lower bounds for \UndelHitPack{H} and \CycleUndelHitPack in terms of the pathwidth of the input graph. Note that this directly implies corresponding lower bounds in terms of treewidth.
Let us first consider \cref{thm:tw:lower:H} and restate it for convenience.
We consider \cref{thm:tw:lower:cycles} later in \cref{sec:pw:lower:cycles}.
\thmHpwLB*\label\thisthm

Often such lower bounds are stated in their strengthened version,
in which it is assumed that a path decomposition of width $\pw$
is given as part of the input.
However, this is no strengthening in our case
since the runtime lower bound is double exponential in $\pw$
and it is well-known that a path decomposition of $G$
can be computed in time exponential in $\pw^2$ and linear in $n$
\cite{KorhonenL23}.

Again, we first give a hardness result for \TriUndelHitPack.

\begin{lemma}
  \label{lem:tw:lower:reduction}
  Let $\phi$ be a 3-SAT instance with $n$ variables and $m$ clauses.
  In time $\Oh(n + m \log m)$,
  we can construct an instance $I=(G, U, k, \ell)$ of \TriUndelHitPack
  such that
  \begin{itemize}
    \item
    $\phi$ is satisfiable if and only if $I$ has a solution,
    \item
    $G$ has pathwidth at most $\Oh(\log m)$,
    and
    \item
    $G$ is of size $\Oh(m \log m)$.
  \end{itemize}
\end{lemma}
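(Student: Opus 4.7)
The plan is to construct an instance $I = (G, U, k, \ell)$ of \TriUndelHitPack in three parts: a right part that encodes the assignment by deletion, a left part that selects a clause by its maximum packing, and a middle part with only $\Oh(\log m)$ vertices that mediates the interaction. Setting $s = \lceil \log m \rceil$, I would use $s$ selector gadgets on the left, each a small constant-size triangle gadget admitting exactly two maximum packings encoding bits $0$ and $1$. Together, the $2^s$ maximum packings of the left part are in bijection with clause indices $j \in \numb{m}$, each encoding $j$ in binary. For the right part, for each variable $x_i$ I would reuse the triangle-cycle gadget from the $\SigTwoP$-hardness construction of \cref{thm:sig2p:hardness:triangle}, so that each gadget has exactly two maximum packings and the set of deletable vertices uniquely corresponds to a truth value for $x_i$; the overall deletion budget $k$ is then the sum over variables of the number of deletable vertices per assignment.

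The middle consists of three pairs of vertices per selector, one pair per literal position, giving $6s = \Oh(\log m)$ vertices in total. For each selector, the chosen bit value determines which vertex of each pair the left packing covers. This yields three distinguished uncovered middle vertices, one per literal slot, which jointly encode exactly the three literals of the chosen clause $C_j$. The middle vertices are linked on the right to the corresponding variable gadgets via literal edges (analogously to the clause-literal edges in the proof of \cref{thm:sig2p:hardness:triangle}) such that if all three literals of $C_j$ are falsified by the chosen assignment, the "free side" of each variable gadget allows packing one additional triangle via the middle, whereas a single satisfied literal blocks this extension. Choosing $\ell$ to be one more than the sum of the maximum packings of all gadgets in isolation, we obtain the desired equivalence: $\phi$ is satisfiable iff some deletion brings every possible left-packing's total size below $\ell$.

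The correctness proof splits into two directions. For the forward direction, I would take a satisfying assignment, delete the corresponding vertices of the variable gadgets, and verify that every maximum left packing (encoding some clause $C_j$) fails to enable the extra triangle because at least one literal of $C_j$ is satisfied. For the reverse direction, I would argue that any feasible deletion of size at most $k$ must spend the entire budget inside the variable gadgets in the canonical per-variable pattern (any deviation leaves some gadget over-packable, producing a packing of size $\ge \ell$ already), and then note that if the induced assignment falsified some clause $C_j$, the left-packing encoding $j$ would assemble a packing of size exactly $\ell$, contradicting the assumption. The pathwidth bound is obtained by a path decomposition that keeps the entire middle (of size $\Oh(\log m)$) in every bag and sweeps through the constantly-many vertices of each selector on the left and each variable gadget on the right; each gadget contributes only $\Oh(1)$ additional vertices to the bag at any time, giving pathwidth $\Oh(\log m)$. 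The size bound $\Oh(m \log m)$ follows because the selectors and middle together have $\Oh(\log m)$ vertices, and each of the $m$ clauses contributes $\Oh(\log m)$ literal edges into the variable gadgets, while the variable gadgets themselves have $\Oh(nm) = \Oh(m \log m)$ size after scaling (assuming $n = \Oh(m)$, or otherwise we shrink the trivially-redundant variables).

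The main obstacle will be designing the middle/right interface so that a maximum left packing truly singles out the three literals of exactly one clause, and not a forbidden "mix-and-match" combination that could cheat the threshold. In particular, I must ensure that (i) every maximum left packing corresponds to a valid binary encoding of some clause index (so the bits across selectors are forced to agree on a single $j$), and (ii) the "extra triangle" on the right exists if and only if all three associated literals are falsified, with no interference between different literal positions. The first issue I would address by tying the three literal-position channels to the same selector-choice through shared middle vertices; the second by making the literal-blocking interactions local and independent within each variable gadget, reusing the structure from \cref{thm:sig2p:hardness:triangle} which already cleanly implements the "one literal satisfies the clause" logic.
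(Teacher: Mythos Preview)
Your high-level plan (selector gadgets encoding a clause index in binary, a middle of $\Oh(\log m)$ vertices, and a right part encoding the assignment via deletion) matches the paper, but the middle-to-right interface has a real gap. You propose one triangle-cycle gadget \emph{per variable} with literal edges from middle vertices into these gadgets. The trouble is that a single middle vertex at position $p$ and bit-index $\gamma$ is left uncovered by the selector packing for \emph{every} clause index $j$ whose $\gamma$th bit has a fixed value---that is, for $m/2$ different clauses---and the $p$th literal varies across those clauses. Literal edges out of that vertex cannot simultaneously point to the correct variable gadget for all of them, so nothing prevents the right side from harvesting an ``extra triangle'' by combining bits drawn from three different clauses. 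This is precisely the mix-and-match problem you flag, and ``tying the three literal-position channels to the same selector-choice through shared middle vertices'' does not resolve it: the middle only encodes an index, not three literals, and nothing in your right part is indexed by clause.

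The paper's fix is to abandon per-variable gadgets. Instead, for every \emph{literal occurrence} $(\lambda,j,p)$ it introduces a separate gadget $\LitGadget_{\lambda,j,p}$ of size $\Theta(c)$---a cycle of $\Theta(c)$ triangles with $c{+}1$ distinguished vertices---where the first $c$ distinguished vertices are identified with the middle vertices complementary to the binary encoding of $j$ at position $p$, and the last is a single per-literal vertex $d_\lambda$ shared across all occurrences of $\lambda$. This gadget has a unique maximum packing of size $3(c{+}1)$ covering all distinguished vertices, versus $3c{+}2$ otherwise. When the selectors encode clause $j$, exactly the three gadgets with clause-index $j$ have all their middle-side vertices free; each reaches its maximum iff its $d_\lambda$ is undeleted, i.e., iff that literal is falsified. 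The only deletable vertices are the $2n$ vertices $d_\lambda$, with $k=n$. This yields $3m$ gadgets of size $\Theta(\log m)$ each, hence $|G|=\Theta(m\log m)$; note also that your size computation is off, since $\Oh(nm)$ under $n=\Oh(m)$ is $\Oh(m^2)$, not $\Oh(m\log m)$.
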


The proof of \cref{lem:tw:lower:reduction} is given in \cref{sec:tw:lower:triangle}.
As a second step, we then apply the pathwidth-preserving reduction from \TriUndelHitPack
to \UndelHitPack{H} (for any fixed, connected graph $H$), as stated in~\cref{lem:tw:triangleToAnyGraph}.
Here are the details.

\begin{proof}[Proof of \Cref{thm:tw:lower:H}]
  Assume that \UndelHitPack{H} could be solved in time $2^{2^{o(\pw)}} n^{\Oh(1)}$.
  Let $\phi$ be an instance of 3-SAT with $n$ variables and $m$ clauses.
  We use \cref{lem:tw:lower:reduction},
  to obtain an instance $I_\phi$ of \TriUndelHitPack
  with pathwidth $\Oh(\log(m))$ that has a solution if and only if $\phi$ has a solution. Now we can execute our hypothetical algorithm for \UndelHitPack{H} instead of each oracle call of the reduction from~\cref{lem:tw:triangleToAnyGraph} on input
  $I_\phi$. Note that this
  procedure solves a given instance $\phi$ of 3-SAT.
  Moreover, in total the procedure takes time
  \[
    2^{2^{o(\pw(I_\triangle) + \Oh(1))}} \cdot \abs{I_\triangle}^{\Oh(1)}
    \le 2^{2^{o(\log m)}} \cdot (n+m)^{\Oh(1)}
    \le 2^{o(n+m)}.
  \]
  This gives a contradiction, as the ETH
  (\cite{eth} combined with the Sparsification Lemma~\cite{ImpagliazzoPZ01})
  implies that $3$-SAT cannot be solved in
  $2^{o(n+m)}$ time~\cite[Theorem 14.4]{book1}.
\end{proof}

\subsection{\texorpdfstring%
{Lower Bound for \TriUndelHitPack}
{Lower Bound for Triangle-HitPack}}
\label{sec:tw:lower:triangle}

Before proving \cref{lem:tw:lower:reduction},
i.e., the lower bound for \TriUndelHitPack,
we first provide some intuition for the construction.
Let $\phi$ be the 3-SAT formula.
The graph of the instance $I$ of \TriUndelHitPack contains
a vertex set $Z$ (of size $\Oh(\log m)$)
that separates the graph into two halves,
referred to as the left and right half.
We use the vertices in $Z$ to represent (the \emph{identifier} of) some clause.
The right half encodes the assignment to the variables of $\phi$
by deleting certain vertices,
while the left half is used to select a clause
for which we want to check if it is satisfied by the assignment.
Then each clause will correspond to some maximal triangle packing in the constructed graph.

In order to ensure that each clause is satisfied, we want to \emph{rule out} the possibility that all literals of some clause are
\emph{unsatisfied}. This is done in the following way:
If the set of vertex deletions in the right half of the graph
specifies an assignment of variables that leaves all literals of some clause unsatisfied,
then the maximal triangle packing (after the deletions) that corresponds to that particular clause will have size at least $\ell$.
This way a variable assignment is not a solution if and only if
the corresponding deletion set is also not a solution
(as it allows too large of a packing).

To ensure that three unsatisfied literals of some clause yield a large triangle packing, we construct gadgets on the right half
that contribute a large number of triangles to the packing
whenever the literal is not satisfied by the clause.
Conversely, if the clause is satisfied by the literal,
the gadget contributes only a small number of triangles to the packing.
In terms of quantifiers and negations, we will check that \emph{for all} clauses it holds that it is \emph{not} the case that they contain three \emph{unsatisfied} literals (or equivalently that each clause has at least one satisfied literal).

\begin{proof}[Proof of \Cref{lem:tw:lower:reduction}]
  Let $\phi'$ be the given 3-SAT formula with $n'$ variables $x_1,\dots,x_{n'}$
  and $m'$ clauses.

  \subparagraph*{Preprocessing the Formula.}
  We assume in the following that no clause of $\phi'$ contains the same literal multiple times.
  Moreover, there is no clause that contains only one literal.
  Otherwise we could greedily set the value of this literal
  and simplify the formula accordingly.
  Additionally, we replace clauses containing exactly two literals.
  For two literals $\lambda_1$ and $\lambda_2$,
  let $C = (\lambda_1 \lor \lambda_2)$ be such a clause of $\phi'$
  and let $y$ be a fresh variable.
  We replace this clause $C$ by the two clauses
  \[
    (\lambda_1 \lor \lambda_2 \lor       y ) \land
    (\lambda_1 \lor \lambda_2 \lor \lneg{y})
    .
  \]
  After doing this replacement for all clauses with only two literals,
  let $\phi''$ be the resulting 3-SAT formula
  with $n'' \le n'+m'$ variables $x_1, \ldots, x_{n''}$ and $m'' \le 2m'$ clauses.
  It is easy to verify that $\phi'$ is satisfiable
  if and only if $\phi''$ is satisfiable.

  For a technical reason, which becomes clear later,
  we modify the formula $\phi''$ further.
  For each $i\in \numb{n''}$,
  we add the following four new clauses to the formula
  \[
    (x_i \lor \lneg{x}_i \lor       x_{i+1} ) \land
    (x_i \lor \lneg{x}_i \lor \lneg{x}_{i+1}) \land
    (x_i \lor \lneg{x}_i \lor       x_{i+2} ) \land
    (x_i \lor \lneg{x}_i \lor \lneg{x}_{i+2}),
  \]
  where we set $x_{n''+1} = x_1$ and $x_{n''+2} = x_2$
  to keep notation simple.

  Let $\phi$ be the resulting formula.
  Again, it can be easily verified that
  $\phi$ is satisfiable if and only if $\phi''$ is satisfiable.
  By duplicating clauses, we can also achieve that $\phi$
  contains exactly $2^c$ clauses for some appropriate $c \ge 4$.
  Hence, in the following we set $m=2^c \in \Oh(m')$
  as the number of clauses of $\phi$
  and $n = n'' \in \Oh(n'+m')$ as the number of variables.

\subparagraph*{Gadgets.}
For the construction of the gadgets we use an idea
which we refer to as a cycle of triangles.
To formalize this idea, we define the following gadget
which we later use to define the other gadgets.

\begin{claim}[Auxiliary Gadget]
  \label{clm:tw:lower:auxiliaryGadget}
  For all integers $r\geq 2$, there is a graph $\AuxGadget_r$ with $2r$
  distinguished vertices $v_1,\dots,v_r$ and $\bar v_1,\dots,\bar v_r$ such
  that:
  \begin{enumerate}[itemsep=0.1em]
    \item
    This graph has exactly two maximum triangle packings
    $P$ and $\bar P$ of size $r$.
    \item
    Packing $P$ covers $v_1, \dots, v_r$
    and none of the vertices $\bar v_1, \dots, \bar v_r$.
    \item
    Packing $\bar P$ covers $\bar v_1,\dots, \bar v_r$
    and none of the vertices $v_{1}, \dots, v_r$.
  \end{enumerate}
\end{claim}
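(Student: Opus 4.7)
The plan is to realize $\AuxGadget_r$ as a cycle of $2r$ triangles glued at consecutive vertices, so that the two maximum triangle packings correspond to the two ways of alternately picking every other triangle around the cycle. Concretely, I would take $2r$ ``backbone'' vertices $w_0,\dots,w_{2r-1}$ forming a cycle $C_{2r}$ (edges $w_j w_{j+1}$ with indices mod $2r$), and attach, for each $j$, a fresh ``apex'' vertex $u_j$ adjacent only to $w_j$ and $w_{j+1}$, so that $T_j\deff\{w_j,u_j,w_{j+1}\}$ is a triangle for each $j\in\{0,\dots,2r-1\}$. I would then set the distinguished vertices as $v_i\deff u_{2(i-1)}$ and $\bar v_i\deff u_{2i-1}$ for $i\in[r]$, so that the odd-indexed apex vertices are exactly the $\bar v_i$ and the even-indexed ones are exactly the $v_i$.

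The first key step will be to classify all triangles of $\AuxGadget_r$. Each apex $u_j$ has only the two neighbors $w_j$ and $w_{j+1}$, so any triangle through $u_j$ must be $T_j$; and for triangles avoiding every apex vertex, the induced backbone is the cycle $C_{2r}$, which is triangle-free since $2r\ge 4$. Hence the $T_j$'s are the only triangles of $\AuxGadget_r$. Moreover, $T_j$ and $T_{j'}$ share a vertex precisely when $j\equiv j'\pm 1\pmod{2r}$, so a vertex-disjoint triangle packing of $\AuxGadget_r$ corresponds bijectively to an independent set in $C_{2r}$ under $T_j\leftrightarrow j$, with maximum packings corresponding to maximum independent sets.

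The next step is to invoke the elementary fact that $C_{2r}$ has independence number $r$, realized by exactly the two alternating sets $\{0,2,\dots,2r-2\}$ and $\{1,3,\dots,2r-1\}$. A quick way to see the uniqueness is that if $|I|=r$ in $C_{2r}$, then $V\setminus I$ is a vertex cover of size $r$ and each of its vertices covers at most two of the $2r$ cycle edges; equality forces both neighbors of every vertex of $V\setminus I$ to lie in $I$, i.e.\ a strict alternation around the cycle, of which there are only two. Translating back, the two maximum triangle packings of $\AuxGadget_r$ are exactly $P\deff\{T_0,T_2,\dots,T_{2r-2}\}$ and $\bar P\deff\{T_1,T_3,\dots,T_{2r-1}\}$, both of size $r$.

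Finally, reading off the construction, $P$ covers precisely the apex vertices $u_0,u_2,\dots,u_{2r-2}$, which are the $v_i$, and none of the $\bar v_i$; symmetrically, $\bar P$ covers precisely the $\bar v_i$ and none of the $v_i$. I do not anticipate any real obstacle beyond this; the only mildly subtle point is the uniqueness of the two maximum independent sets in $C_{2r}$, handled by the vertex-cover counting argument sketched above, and the need to take $r\ge 2$ so that the backbone cycle is long enough to be triangle-free.
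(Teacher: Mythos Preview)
Your construction and argument are correct and essentially identical to the paper's: the paper also builds $\AuxGadget_r$ as a backbone cycle on $2r$ vertices with one apex per edge, takes the apex vertices as the distinguished $v_i,\bar v_i$, observes that the only triangles are the $2r$ attached ones (using $r\ge 2$), and concludes that the two alternating selections are the only maximum packings. Your phrasing via the bijection with maximum independent sets of $C_{2r}$ is a slightly cleaner way to state the uniqueness step, but it is the same proof.
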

\begin{proof}
  Intuitively, the gadget $\AuxGadget_r$ consists of $2r$ triangles arranged
  in a cycle such that two triangles share one endpoint.

  Formally, the graph $\AuxGadget_r$ consists of a cycle with $2r$ vertices
  $u_0,\dots,u_{2r-1}$ where, for all $i\in \numb{2r-1}$,
  $u_{i-1}$ is adjacent to $u_{i}$, $u_{2r-1}$ is connected to $u_0$,
  and no other edges exist between these vertices.
  Moreover, there are vertices $v_1,\dots,v_{r}$ and $\bar v_1,\dots,\bar v_{r}$
  such that, for all $i\in\numb{r}$,
  $v_i$ is connected to $u_{2i-2}$ and $u_{2i-1}$,
  and $\bar v_i$ is connected to $u_{2i-1}$ and $u_{2i\bmod 2r}$.

  We define the two packings $P$ and $\bar P$ as follows:
  \begin{displaymath}
      P \deff \{ (u_{2i-1}, v_i, u_{2i \bmod 2r}) \mid i \in \numb{r} \}
      \qquad
      \text{and}
      \qquad
      \bar P \deff \{ (u_{2i-2}, \bar v_i, u_{2i-1}) \mid i \in \numb{r} \}
  \end{displaymath}
  It is easy to check that these two sets actually define a packing
  of $r$ triangles with the properties (2) and (3).
  It remains to show that there is no other packing of $r$ (or more) triangles.
  But this follows directly from the fact that
  the gadget $\AuxGadget_r$ contains only triangles with vertex set
  $\{v_i,u_{2i-2},u_{2i-1}\}$ or $\{\bar v_i,u_{2i-1},u_{2i \bmod 2r}\}$
  for some $i\in \numb{r}$,
  since we assumed that $r \geq 2$.
  Hence, whenever $r$ (or more) triangles are contained in a packing
  that is different from $\bar P$ and $P$,
  then all of these triangles cannot be pairwise vertex disjoint
  as the neighboring triangles intersect.
\end{proof}

Based on this gadget $\AuxGadget$,
we define the following gadget, which we use to ``generate'' the
clause number
(see~\cref{fig:tw:gadget}).
Although this gadget is just a special case of the gadget $\AuxGadget$,
we provide it with a separate name to keep notation simple in the later proof.

\begin{claim}[Selector Gadget]
    \label{clm:tw:lower:reduction:gadget1}
    For all integers $r\ge 2$,
    there is a graph $\SelGadget_r$
    with $2r$ distinguished vertices
    $v_1,\dots,v_r$ and $\bar v_1,\dots,\bar v_r \in V$
    such that
    \begin{enumerate}
      \item
      there are exactly two packings of $3r$ triangles
      denoted by $P_0$ and $P_1$,
      \item
      $P_0$ covers $v_1, \dots, v_r$
      and none of the vertices $\bar v_1, \dots, \bar v_r$,
      \item
      $P_1$ covers $\bar v_1,\dots, \bar v_r$
      and none of the vertices $v_{1}, \dots, v_r$,
      \item
      there is no packing of more than $3r$ triangles,
      and
    \end{enumerate}
  \end{claim}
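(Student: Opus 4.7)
The plan is to exploit the preceding remark that $\SelGadget_r$ is ``just a special case of the gadget $\AuxGadget$''. Concretely, I would define $\SelGadget_r \deff \AuxGadget_{3r}$, taking the gadget from \cref{clm:tw:lower:auxiliaryGadget} with parameter $3r$. That auxiliary gadget comes equipped with $6r$ distinguished vertices; to avoid a notational collision with the present claim, I would rename them $w_1,\dots,w_{3r}$ and $\bar w_1,\dots,\bar w_{3r}$. I would then pick an arbitrary $r$-subset of the $w_j$'s and relabel it as $v_1,\dots,v_r$, and analogously pick an arbitrary $r$-subset of the $\bar w_j$'s and relabel it as $\bar v_1,\dots,\bar v_r$; these $2r$ selected vertices play the role of the distinguished vertices of $\SelGadget_r$ that are exposed to the rest of the construction.

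All four listed properties then follow directly from \cref{clm:tw:lower:auxiliaryGadget} instantiated with parameter $3r$. That claim produces exactly two maximum triangle packings $P$ and $\bar P$ of $\AuxGadget_{3r}$, each of size $3r$, where $P$ covers every $w_j$ and avoids every $\bar w_j$, while $\bar P$ does the symmetric thing. Setting $P_0 \deff P$ and $P_1 \deff \bar P$, property~1 holds because any triangle packing of size at least $3r$ is necessarily of maximum size, and \cref{clm:tw:lower:auxiliaryGadget} exhibits exactly two maximum packings; properties~2 and~3 are immediate since each distinguished $v_i$ is one of the $w_j$'s (hence covered by $P_0$) and each $\bar v_i$ is one of the $\bar w_j$'s (hence avoided by $P_0$), with the roles swapped for $P_1$; property~4 is simply the statement that $3r$ is the maximum triangle-packing size of $\AuxGadget_{3r}$.

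I do not foresee any real obstacle, because the argument is a direct relabeling that appeals to the already-proven auxiliary-gadget claim. The one subtle point worth flagging in a write-up is that the non-distinguished vertices $w_j, \bar w_j$ of $\AuxGadget_{3r}$ remain as internal vertices of $\SelGadget_r$ and are deliberately not exposed to the rest of the reduction; this does not create any new maximum packings, since \cref{clm:tw:lower:auxiliaryGadget} already pins down the full set of maximum packings to $\{P,\bar P\}$. Should the visibly truncated enumeration include a further property (for instance, that the two packings are the only \emph{maximal} packings achieving $3r$, or a bound on the gadget's size or pathwidth), it would again be verifiable as an immediate consequence of the corresponding structural feature of $\AuxGadget_{3r}$.
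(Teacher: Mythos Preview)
Your proposal is correct and matches the paper's approach: define $\SelGadget_r$ as $\AuxGadget_{3r}$, expose only $r$ of the $3r$ vertices on each side as distinguished, and read off all four properties directly from \cref{clm:tw:lower:auxiliaryGadget}. The one difference is that the paper does not pick the exposed vertices arbitrarily but spaces them out (taking $v_i \deff v'_{3i}$ and $\bar v_i \deff \bar v'_{3i-2}$); this spacing is irrelevant for the claim as stated, but it is what guarantees, in the later correctness argument, that when one $v$-vertex and one $\bar v$-vertex are both blocked externally the gadget still admits a packing of size $3r-1$ rather than something smaller---so if you want your gadget to be a drop-in replacement in the full reduction, you should adopt the spaced choice rather than an arbitrary one.
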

  \begin{claimproof}
    We use the graph $\AuxGadget_{3r}$ as the graph $\SelGadget_r$
    but only pick a third of the distinguished vertices of $\AuxGadget{3r}$
    as the distinguished vertices of $\SelGadget_r$
    (see~\cref{clm:tw:lower:auxiliaryGadget}).
    Let $v'_1, \dots, v'_{3r}$ and $\bar v'_1,\dots,\bar v'_{3r}$
    be the distinguished vertices of $\UniGadget_{3r}$.
    For all $i \in \numb{r}$, we choose $v_i \deff v'_{3i}$
    and $\bar v_i \deff \bar v'_{3i-2}$
    as the distinguished vertices of $\SelGadget_r$.
    The %
    properties of the gadget
    follow immediately from \cref{clm:tw:lower:reduction:gadget1}.
  \end{claimproof}

\begin{figure}[t]
    \centering
    \begin{subfigure}[1]{0.45\textwidth}

\begin{tikzpicture}[%
  scale=.8,
  ]
  \definecolor{selColor}{RGB}{164,210,225}
\definecolor{freeColor}{RGB}{243,180,48}

\tikzset{%
vertex/.style={
  draw=white,
  fill=black,
  label={below:{\large\ensuremath{#1}}},
  circle,
  line width = 1.5pt,
  inner sep = 2.25pt,
  outer sep = 0pt,
},
vertex/.default=\null,
vertexDist/.style={
  vertex = #1,
  diamond,
},
vertexDist/.default=\null,
edge/.style={
  line width=0.5pt,
  draw=white,
  double distance = 1.5pt,
  double = black,
  rounded corners,
},
selVtx/.style={
  draw = #1,
},
selVtx/.default=selColor,
freeVtx/.style={
  selVtx = freeColor,
},
selEdg/.style={
  line width = 2pt,
  double = black,
  double distance = 2pt,
  draw = #1,
  rounded corners,
},
selEdg/.default=selColor,
freeEdg/.style={
  selEdg = freeColor,
},
}

  \def\dist{1}
  \def\diam{3.5}
  \def\angle{60}
  \def\shift{\angle/6}
  \def\offset{150+\shift}

  \foreach \i in {1,...,6} {
    \coordinate (v\i0) at (\offset-\angle*\i:\diam);
    \coordinate (v\i1) at (\offset-\angle*\i-2*\shift:\diam);
    \coordinate (v\i2) at (\offset-\angle*\i-4*\shift:\diam);
    \coordinate (v\i)  at (\offset-\angle*\i-\shift:\diam+\dist);
    \coordinate (v\i3) at (\offset-\angle*\i-3*\shift:\diam-\dist);
    \coordinate (v\i4) at (\offset-\angle*\i-5*\shift:\diam-\dist);
    };

  \foreach \i in {1,3,5} {
    \node[vertex,selVtx] (v\i)  at (160-60*\i-10:\diam+\dist) {};
    \node[vertex] (v\i3) at (160-60*\i-30:\diam-\dist) {};
    \node[vertex,selVtx] (v\i4) at (160-60*\i-50:\diam-\dist) {};
    };
    \foreach \i in {2,4,6} {
      \node[vertex] (v\i)  at (160-60*\i-10:\diam+\dist) {};
      \node[vertex,selVtx] (v\i3) at (160-60*\i-30:\diam-\dist) {};
      \node[vertex] (v\i4) at (160-60*\i-50:\diam-\dist) {};

    };
  \node [above = .05 of v1] {$\bar v_1$};
  \node [right = .0  of v2] {$v_1$};
  \node [right = .05 of v3] {$\bar v_2$};
  \node [below = .0  of v4] {$v_2$};
  \node [left  = .05 of v5] {$\bar v_3$};
  \node [left  = .0  of v6] {$v_3$};

  \foreach \i in {2,4,6} {
    \ifthenelse{\i = 6}{
      \pgfmathsetmacro{\next}{1}
    }{%
      \pgfmathsetmacro{\next}{int(\i+1)}
    }
    \draw[edge] (v\i0) -- (v\i1) -- (v\i) -- (v\i0);
    \draw[edge,selEdg] (v\i1) -- (v\i2) -- (v\i3) -- (v\i1);
    \draw[edge] (v\i2) -- (v\i4) -- (v\next0) -- (v\i2);
  };
  \foreach \i in {1,3,5} {
    \pgfmathsetmacro{\next}{int(\i+1)}
    \draw[edge,selEdg] (v\i0) -- (v\i1) -- (v\i) -- (v\i0);
    \draw[edge] (v\i1) -- (v\i2) -- (v\i3) -- (v\i1);
    \draw[edge,selEdg] (v\i2) -- (v\i4) -- (v\next0) -- (v\i2);
  };
  \foreach \i in {1,...,6} {
    \node[vertex,selVtx] at (v\i0) {};
    \node[vertex,selVtx] at (v\i1) {};
    \node[vertex,selVtx] at (v\i2) {};
  };
  \foreach \i in {1,3,5} {
    \node[vertex,selVtx] at (v\i) {};
    \node[vertex] at (v\i3) {};
    \node[vertex,selVtx] at (v\i4) {};
  };
  \foreach \i in {2,4,6} {
    \node[vertex] at (v\i) {};
    \node[vertex,selVtx] at (v\i3) {};
    \node[vertex] at (v\i4) {};
  };

\end{tikzpicture}
    \end{subfigure}
    \hspace{1.5cm}
    \begin{subfigure}[1]{0.4\textwidth}

\begin{tikzpicture}[%
  scale=.8,
  ]
  \definecolor{selColor}{RGB}{164,210,225}
\definecolor{freeColor}{RGB}{243,180,48}

\tikzset{%
vertex/.style={
  draw=white,
  fill=black,
  label={below:{\large\ensuremath{#1}}},
  circle,
  line width = 1.5pt,
  inner sep = 2.25pt,
  outer sep = 0pt,
},
vertex/.default=\null,
vertexDist/.style={
  vertex = #1,
  diamond,
},
vertexDist/.default=\null,
edge/.style={
  line width=0.5pt,
  draw=white,
  double distance = 1.5pt,
  double = black,
  rounded corners,
},
selVtx/.style={
  draw = #1,
},
selVtx/.default=selColor,
freeVtx/.style={
  selVtx = freeColor,
},
selEdg/.style={
  line width = 2pt,
  double = black,
  double distance = 2pt,
  draw = #1,
  rounded corners,
},
selEdg/.default=selColor,
freeEdg/.style={
  selEdg = freeColor,
},
}

  \def\dist{1}
  \def\diam{3.5}
  \def\angle{45}
  \def\shift{\angle/6}
  \def\offset{90+\shift}

  \foreach \i in {1,...,8} {
    \coordinate (v\i0) at (\offset-\angle*\i:\diam);
    \coordinate (v\i1) at (\offset-\angle*\i-2*\shift:\diam);
    \coordinate (v\i2) at (\offset-\angle*\i-4*\shift:\diam);
  };
  \foreach \i in {1,3,5,7} {
    \coordinate (v\i)  at (\offset-\angle*\i-\shift:\diam+\dist);
    \coordinate (v\i3) at (\offset-\angle*\i-3*\shift:\diam-\dist);
    \coordinate (v\i4) at (\offset-\angle*\i-5*\shift:\diam-\dist);
  };
  \foreach \i in {2,4,6,8} {
    \coordinate (v\i3) at (\offset-\angle*\i-3*\shift:\diam-\dist);
    \coordinate (v\i4) at (\offset-\angle*\i-5*\shift:\diam-\dist);
  };
  \coordinate (v2)  at (\offset-\angle*2-\shift:\diam-\dist) {};
  \coordinate (v4)  at (\offset-\angle*4-\shift:0) {};
  \coordinate (v6)  at (\offset-\angle*6-\shift:\diam-\dist) {};
  \coordinate (v8)  at (\offset-\angle*8-\shift:0) {};

  \node [above right = .0  of v1] {$v_1$};
  \node [below right = .0  of v3] {$v_2$};
  \node [below left = .0  of v5] {$v_3$};
  \node [above left = .0  of v7] {$v_4$};

  \foreach \i in {2,4,6,8} {
    \ifthenelse{\i = 8}{
      \pgfmathsetmacro{\next}{1}
    }{%
      \pgfmathsetmacro{\next}{int(\i+1)}
    }
    \draw[edge] (v\i0) -- (v\i1) -- (v\i) -- (v\i0);
    \draw[edge,selEdg] (v\i1) -- (v\i2) -- (v\i3) -- (v\i1);
    \draw[edge] (v\i2) -- (v\next0) -- (v\i4) -- (v\i2);
  }
  \foreach \i in {1,3,5,7} {
    \pgfmathsetmacro{\next}{int(\i+1)}
    \draw[edge,selEdg] (v\i0) -- (v\i1) -- (v\i) -- (v\i0);
    \draw[edge] (v\i1) -- (v\i2) -- (v\i3) -- (v\i1);
    \draw[edge,selEdg] (v\i2) -- (v\i4) -- (v\next0) -- (v\i2);
  }

  \foreach \i in {1,...,8} {
    \node[vertex,selVtx] at (v\i0) {};
    \node[vertex,selVtx] at (v\i1) {};
    \node[vertex,selVtx] at (v\i2) {};
  };
  \foreach \i in {1,3,5,7} {
    \node[vertex,selVtx] at (v\i) {};
    \node[vertex] at (v\i3) {};
    \node[vertex,selVtx] at (v\i4) {};
  };
  \foreach \i in {2,4,6,8} {
    \node[vertex] at (v\i) {};
    \node[vertex,selVtx] at (v\i3) {};
    \node[vertex] at (v\i4) {};
  };

\end{tikzpicture}
    \end{subfigure}

    \caption{Left figure presents a $\SelGadget_4$ gadget.
    The highlighted triangles depict the triangle packing $\Ptrue$.
    Right figure presents $\LitGadget_4$ along with the
unique maximum triangle packing $\Pfalse$.}
    \label{fig:tw:gadget}
\end{figure}

  With the construction of $\SelGadget_r$ at hand,
  we have everything ready to define the gadgets for encoding the literals.
  For this gadget the size of the two largest solutions differs (see~\cref{fig:tw:gadget}).
  \begin{claim}[Literal Gadget]
    \label{clm:tw:lower:reduction:gadget2}
    For all integers $r\ge 4$,
    there is a graph $\LitGadget_r$
    with $r$ distinguished vertices $v_1,\dots,v_r \in V$
    such that
    \begin{enumerate}
      \item
      there is a packing $\Ptrue$ of $3r-1$ triangles
      that covers \emph{none} of $v_1,\dots,v_r$,
      \item
      there is exactly one packing $\Pfalse$ of $3r$ triangles
      that covers $v_1,\dots,v_r$, and
      \item
      there is no triangle packing with more than $3r$ triangles.
    \end{enumerate}
  \end{claim}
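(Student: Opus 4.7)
The plan is to construct $\LitGadget_r$ by starting from $\AuxGadget_{3r}$ and breaking its symmetry with a single vertex deletion, so that one of its two maximum packings survives intact while the other loses exactly one triangle. This way, we inherit all structural properties of $\AuxGadget$ from \cref{clm:tw:lower:auxiliaryGadget} and only need to account for the local perturbation.

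Concretely, I would proceed as follows. First, instantiate $\AuxGadget_{3r}$ with its $6r$ distinguished vertices $v'_1,\dots,v'_{3r}$ and $\bar v'_1,\dots,\bar v'_{3r}$, and recall from \cref{clm:tw:lower:auxiliaryGadget} that its only two maximum triangle packings are $P$ (covering the $v'_i$ and avoiding the $\bar v'_i$) and $\bar P$ (covering the $\bar v'_i$ and avoiding the $v'_i$), each of size $3r$. Second, obtain $\LitGadget_r$ from $\AuxGadget_{3r}$ by deleting the single vertex $\bar v'_1$; since $\bar v'_1$ lies in exactly one triangle of $\AuxGadget_{3r}$, this removes precisely one triangle from the graph (and in particular from $\bar P$). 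Third, designate $v_i \deff v'_{3i-2}$ for $i \in \numb r$ as the distinguished vertices of $\LitGadget_r$.

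With this construction, each of the three required properties falls out almost directly. Property~3 (no packing of size exceeding $3r$) is immediate: any triangle packing of $\LitGadget_r$ is a triangle packing of $\AuxGadget_{3r}$, whose packing number is $3r$ by \cref{clm:tw:lower:auxiliaryGadget}. For Property~2, the packing $P$ remains valid in $\LitGadget_r$ (it never used $\bar v'_1$) and still has size $3r$, and it covers each $v_i = v'_{3i-2}$ by construction; we set $\Pfalse \deff P$. Uniqueness of $\Pfalse$ follows because $P$ and $\bar P$ were the only two packings of size $3r$ in $\AuxGadget_{3r}$, and $\bar P$ is no longer a packing after deleting $\bar v'_1$. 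Finally, for Property~1, let $\Ptrue$ be $\bar P$ with its unique triangle containing $\bar v'_1$ removed; this is a valid triangle packing of $\LitGadget_r$ of size $3r-1$, and since $\bar P$ avoided all $v'_i$, it in particular avoids the selected distinguished vertices $v_1,\dots,v_r$.

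The one point that requires a sentence of care, rather than a computation, is the claim that $P$ genuinely remains the unique packing of size $3r$; this is where the assumption $r \ge 4$ (so $3r \ge 2$ in \cref{clm:tw:lower:auxiliaryGadget}) matters, and where we invoke the exhaustiveness part of \cref{clm:tw:lower:auxiliaryGadget} to exclude exotic large packings beyond $P$ and $\bar P$. No other obstacle arises: the construction is purely local, the properties are verified one by one against \cref{clm:tw:lower:auxiliaryGadget}, and the gadget can be drawn exactly as suggested by the right-hand side of \cref{fig:tw:gadget}.
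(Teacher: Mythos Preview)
Your proof is correct, but the construction differs from the paper's. The paper builds $\LitGadget_r$ from $\SelGadget_r$ (equivalently, from $\AuxGadget_{3r}$) by \emph{identifying} two of the barred distinguished vertices, namely $\bar v_1$ and $\bar v_3$; you instead \emph{delete} one barred vertex $\bar v'_1$. Both modifications destroy exactly one triangle of $\bar P$ while leaving $P$ intact, so both yield a gadget satisfying the three properties. Your variant is arguably cleaner on one point: since deletion produces a subgraph, the fact that no new triangles appear is automatic, whereas the paper must argue separately that identifying $\bar v_1$ with $\bar v_3$ creates no new triangles. On the other hand, the paper's identification keeps the vertex count and the cycle-of-triangles picture symmetric, which is what \cref{fig:tw:gadget} actually depicts---so your closing remark that your gadget ``can be drawn exactly as suggested by the right-hand side of \cref{fig:tw:gadget}'' is not quite right; your gadget has one fewer apex vertex. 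Finally, your invocation of $r\ge 4$ is a red herring for your own construction: vertex deletion works already for $r\ge 1$ (any $3r\ge 2$ suffices for \cref{clm:tw:lower:auxiliaryGadget}); the bound $r\ge 4$ in the statement reflects constraints of the paper's identification approach, not yours.
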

  \begin{claimproof}
    We define $\LitGadget_r$ based on the gadget $\SelGadget_r$ as follows.
    The vertex and edge set of $\LitGadget_r$
    is the same as that of $\SelGadget_r$
    with the only modification that we identify the vertices
    $\bar v_1$ and $\bar v_3$ with each other.
    We use the vertices $v_1, \dots, v_r$
    as the distinguished vertices of $\LitGadget_r$.

    First observe that identifying $\bar v_1$ with $\bar v_3$
    does not create any new triangles. %
    This directly implies the third property of the gadget.
    By this choice it directly follows that
    there is exactly one packing of $3r$ triangles
    which is precisely the packing $P_1$ for $\SelGadget_r$.
    Moreover, the only other packing of $3r$ triangles for $\SelGadget_r$, i.e., $P_0$,
    contains $\bar v_1$ and $\bar v_3$.
    As we identified these two vertices,
    this reduces the number of triangles by one. (There are now two choices for the packing $\Ptrue$.)
    Hence, the first and second property of the gadget follow.
  \end{claimproof}

\subparagraph*{Construction of the Instance.}

\begin{figure}[t]
    \centering
    \includegraphics[width=0.2\textwidth]{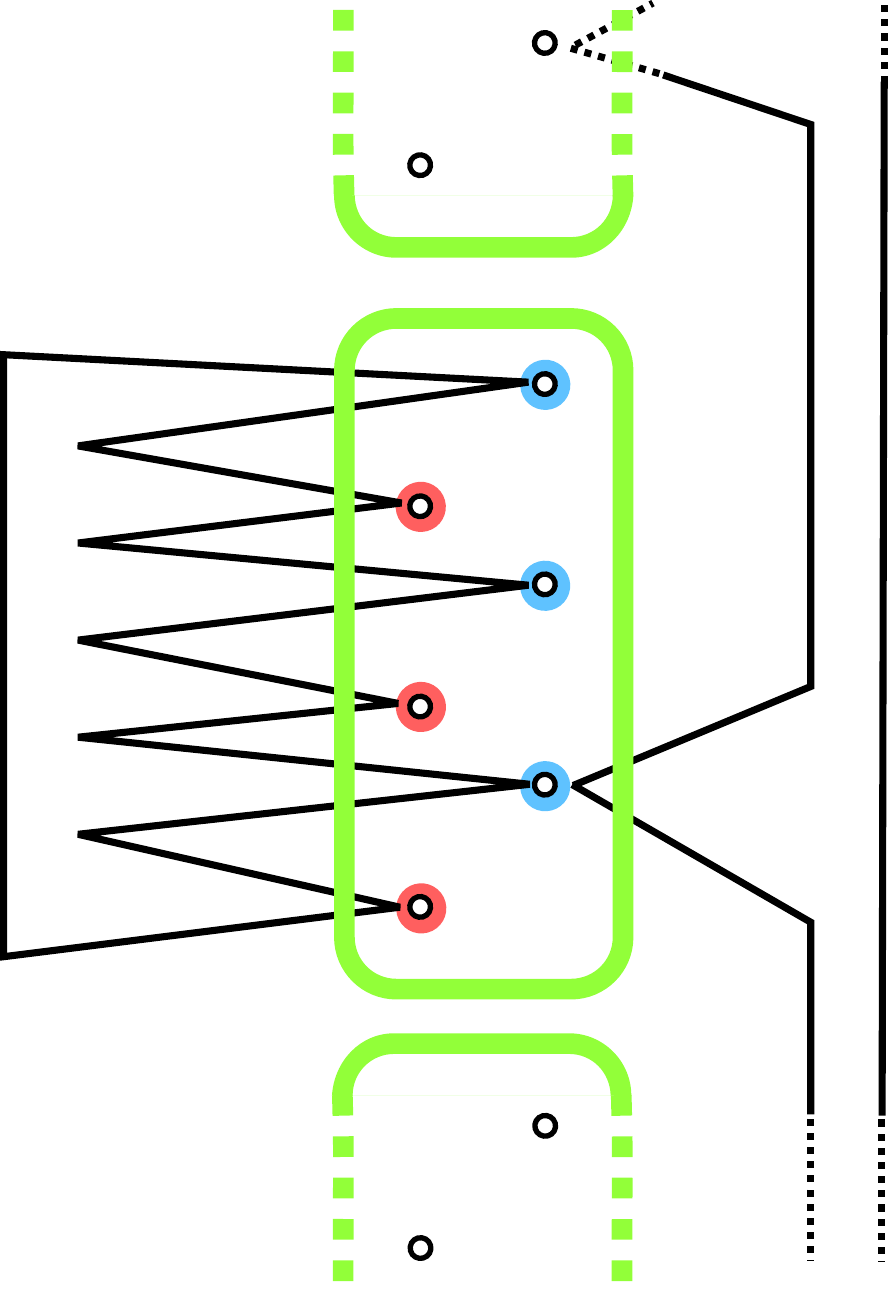}
    \caption{An overview of the lower-bound construction. On
        the left, we have the \emph{selector gadget}
        that can cover either blue vertices (on the right) or
        red vertices (on the left) in the green group.
        The middle layer consists of $\Oh(\log(m))$
        groups of $6$ vertices. Each group corresponds to the bit of the clause index. The right part represents the \emph{literal gadget},
        ensuring satisfaction of the corresponding literal in the clause. Importantly,
        the middle layer consists of only $\Oh(\log(m))$ vertices. Both the literal and
        selector gadgets have a constant treewidth, resulting in a created instance
    with a treewidth of $\Oh(\log(m))$.}
    \label{fig:overview-construction}
\end{figure}

  We present the overview of the construction in~\cref{fig:overview-construction}.
  Before defining the graph $G$,
  we introduce some more notation.
  For an integer $0 < N \le 2^c$,
  we define $\bin{N}$ as the binary encoding of $N-1$ with $c$ bits.
  For all $j \in \numb{c}$,
  we denote by $\bin{N}\pos{j}$
  the $j$th bit of the binary encoding of $N-1$.

  With this notation,
  we can now formally define the graph $G$ of the instance $I$.
  For all $\gamma\in\numb{c}$, $b\in\{0,1\}$, and $p \in \{1,2,3\}$,
  we create a vertex $z^{(\gamma)}_{p,b}$
  and define $Z \deff \{ z^{(\gamma)}_{p,b}\}_{\gamma,b,p}$
  as the set of these $6c$ vertices.

  The left side of the graph is defined as follows.
  For all $\gamma\in\numb{c}$,
  we create a copy $\SelGadget^{(\gamma)}$ of the gadget $\SelGadget_3$
  and identify the copies of the vertices $v_1,v_2,v_3$ with the vertices
  $z^{(\gamma)}_{1,0}, z^{(\gamma)}_{2,0}, z^{(\gamma)}_{3,0}$
  and identify the copies of the vertices $\bar v_1,\bar v_2,\bar v_3$
  with the vertices
  $z^{(\gamma)}_{1,1}, z^{(\gamma)}_{2,1}, z^{(\gamma)}_{3,1}$.
  There are no other vertices in the left side of the instance.

  For the right side we let $\Lambda = \{ x_1, \lneg{x}_1, \dots, x_n,
  \lneg{x}_n \}$ be the set of all literals
  corresponding to the variables of $\phi$.
  Note that from the preprocessing it follows that
  all of them appear in some clause.
  For each
  literal $\lambda \in \Lambda$, we define a set $C_\lambda \subseteq \numb{m}
  \times \{1,2,3\}$ such that $(j,p) \in C_\lambda$ if and only if literal
  $\lambda$ appears in the $j$th clause at position $p$.

  For all $\lambda \in \Lambda$ and all $(j,p)\in C_\lambda$,
  we create a vertex $d_\lambda$ and
  introduce a copy of the gadget $\LitGadget_{c+1}$
  denoted by $\LitGadget_{\lambda,j,p}$.
  For each such gadget $\LitGadget_{\lambda,j,p}$ and all $\gamma\in \numb{c}$,
  we identify the copy of the vertex $v_\gamma$ from $\LitGadget_{c+1}$
  with the vertex $z^{(\gamma)}_{p,1-\bin{j}\pos{\gamma}}$.
  The copy of the vertex $v_{c+1}$ from $\LitGadget_{c+1}$
  is identified with $d_\lambda$. Intuitively, the vertex $d_\lambda$ synchronizes the ``behavior'' of all gadget copies that belong to the same literal.
  This completes the definition of the graph $G$.

  Let $V$ be the set of vertices of $G$, then the set of undeletable vertices $U \deff V \setminus \{d_\lambda \mid \lambda \in \Lambda\}$, that is, only the $d_\lambda$'s can be deleted.
  To conclude the construction,
  we set $k \deff n$
  as the bound on the number of vertices we are allowed to delete
  and set $\ell \deff 3m (3c +2) + 9c + 3$
  as the strict upper bound on the number of triangles that we are allowed to pack after the vertex deletions.
  The resulting \TriUndelHitPack instance is $I=(G, U, k, \ell)$.

  \subparagraph*{Correctness.}
  In the next two steps we show that the construction is correct.
  We do this by handling both directions individually.
  \begin{claim}
    \label{clm:tw:lower:reduction:correct}
    If $\phi$ is satisfiable,
    then $I$ has a solution.
  \end{claim}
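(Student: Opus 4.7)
The plan is to construct the deletion set from the satisfying assignment and then bound the maximum triangle packing that can survive. Given a satisfying assignment $\alpha$ of $\phi$, I would set $S \deff \{d_\lambda \mid \lambda \text{ is true under } \alpha\}$. Since exactly one of $x_i, \lneg x_i$ is true for each variable, $|S| = n = k$, and since every $d_\lambda$ is deletable by construction, $S \subseteq V(G) \setminus U$. It then remains to prove that every triangle packing $\mathcal P$ of $G - S$ satisfies $|\mathcal P| \le \ell - 1 = 3m(3c+2) + 9c + 2$.

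For that bound, I would first observe that two distinct gadgets share only isolated vertices (the $z^{(\gamma)}_{p,b}$'s and the $d_\lambda$'s, with no edges between different gadgets), so every triangle of $\mathcal P$ lies inside a single gadget. Decompose $\mathcal P$ into selector parts $\mathcal P_\gamma$ and literal-gadget parts $\mathcal P_{\lambda,j,p}$. By \cref{clm:tw:lower:reduction:gadget1,clm:tw:lower:reduction:gadget2}, $|\mathcal P_\gamma| \le 9$ and $|\mathcal P_{\lambda,j,p}| \le 3c+3$, with the latter bound attained only when $\mathcal P_{\lambda,j,p}$ equals the unique packing $\Pfalse$ (which covers $d_\lambda$ together with every $z^{(\gamma)}_{p,\,1-\bin{j}\pos{\gamma}}$). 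Letting $q$ count the literal gadgets using $\Pfalse$ and $\delta \deff \sum_\gamma (9 - |\mathcal P_\gamma|)$ denote the total selector deficit, this yields $|\mathcal P| \le 3m(3c+2) + 9c + q - \delta$, so the task reduces to proving $q - \delta \le 2$.

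The core of the argument is a trade-off between $\Pfalse$-usage and selector deficit. In the matched case $\delta = 0$, each selector uses one of its two maximum packings $P_{b_\gamma}$, which covers $z^{(\gamma)}_{p,b_\gamma}$ for every $p$. Vertex-disjointness of the selector with any active $\Pfalse$-gadget therefore forces $b_\gamma = \bin{j}\pos{\gamma}$ in every coordinate $\gamma$, so all active $\Pfalse$-gadgets share a single clause index $j^\star$ determined by the selector bits. These gadgets then correspond to the three literal occurrences in $C_{j^\star}$; since $\alpha$ satisfies $C_{j^\star}$, at least one of those literals is true, its $d_\lambda$ lies in $S$, and the corresponding $\Pfalse$ is blocked, giving $q \le 2$.

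I expect the main obstacle to be the deficit case $\delta \ge 1$, where a sub-maximum selector packing can free some $z^{(\gamma)}_{p,b}$ vertex and thereby license $\Pfalse$-gadgets whose clause indices disagree with $j^\star$ on the $\gamma$-th bit. The plan here is a local charging argument exploiting the cyclic conflict structure of $\SelGadget^{(\gamma)} = \AuxGadget_9$ from \cref{clm:tw:lower:auxiliaryGadget}: every triangle of $\AuxGadget_9$ covers exactly one distinguished vertex, so freeing any selector-distinguished $z^{(\gamma)}_{p,b}$ vertex costs at least one unit of $\delta_\gamma$, and a short case analysis of how such freed vertices interact with the three position classes $p \in \{1,2,3\}$ shows that each such unit enables at most one additional $\Pfalse$-gadget beyond the matched-case count. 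Aggregating yields $q \le 2 + \delta$ and hence $|\mathcal P| < \ell$, completing the proof.
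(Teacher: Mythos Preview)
Your setup is correct and the reformulation is clean: triangles do lie entirely inside single gadgets (the gadgets share only vertices, no edges), the bound $|\mathcal P| \le 3m(3c+2) + 9c + q - \delta$ holds, and your $\delta = 0$ case is handled exactly right—all $\Pfalse$-gadgets are forced onto a common clause index $j^\star$, and satisfiability then kills one of the three, giving $q \le 2$.

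The $\delta \ge 1$ case is where your plan is both harder than needed and, as written, not obviously sound. The ``aggregation'' you propose does not go through directly: a $\Pfalse$-gadget needs a free $z$-vertex at \emph{every} selector, so a per-selector charge (one freed $z$-vertex per unit of $\delta_\gamma$) does not sum to a bound on $q$ against $\delta = \sum_\gamma \delta_\gamma$, and the baseline $2$ you are charging against relied on $j^\star$ being determined, which it no longer is once some selector is sub-maximal. There is a much shorter argument that sidesteps any structural analysis of $\AuxGadget_9$: if \emph{some} selector $\gamma$ has $\delta_\gamma = 0$, its maximum packing leaves exactly three $z$-vertices free (one per position $p$), and each $\Pfalse$-gadget consumes one of them, so $q \le 3$; together with $\delta \ge 1$ this already gives $q - \delta \le 2$. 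If instead \emph{every} $\delta_\gamma \ge 1$, then $\delta \ge c \ge 4$ (the preprocessing guarantees $c \ge 4$), while $q \le 6$ because the six vertices $\{z^{(\gamma)}_{p,b}\}_{p,b}$ at any fixed $\gamma$ can host at most six disjoint $\Pfalse$-gadgets; hence $q - \delta \le 6 - 4 = 2$.

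The paper reaches the same conclusion by casing on $q$ rather than introducing $\delta$: first $q \le 6$ from the six $z$-vertices at one selector; then $q \ge 4$ forces two $\Pfalse$-gadgets to share a position $p$, which occupies both $z^{(\gamma)}_{p,0}$ and $z^{(\gamma)}_{p,1}$ at every $\gamma$ and drops each selector to at most $8$ triangles (i.e.\ $\delta \ge c$); then $q = 3$ with differing clause indices costs one selector a triangle, while $q = 3$ with a common index is killed by satisfiability. The arithmetic is the same as in your $\delta$-framing; your packaging is arguably more uniform, but you should replace the charging sketch by the two-line case split above.
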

  \begin{claimproof}
    Let $\pi$ be an assignment for the variables $x_1,\dots,x_n$
    such that $\phi$ is satisfied under $\pi$.
    Based on $\pi$,
    we define the solution $S$ for $I$ as
    \[
    S \deff \{ d_{x_i}        \mid i \in\numb{n} \text{ and } \pi(x_i) = \true \}
    \cup \{ d_{\lneg{x}_i} \mid i \in\numb{n} \text{ and } \pi(x_i) = \false \}
      .
    \]
    By the definition of $S$, we get that $\abs{S}=n=k$.
    It remains to show that $S$ is indeed a solution for $I$.

    For the sake of a contradiction, assume that $S$ is not a solution
    and let $\mathcal{P}$ be a maximum size packing of triangles of size at least $\ell$.
    Let us analyze this packing $\mathcal{P}$.
    Let $q$ be the number of literal gadgets that contribute the maximum possible $3c+3$ triangles to $\mathcal P$. (These are those gadgets that contribute a packing according to $\Pfalse$.)
    From \cref{clm:tw:lower:reduction:gadget2} it follows that this triangle packing $\Pfalse$ covers precisely $c$ vertices from $Z$ (one for each $\gamma\in [c]$).
    As $Z$ has size $6c$ it follows that $q\le 6$, as otherwise some vertex from $Z$ would be covered twice.

    We strengthen this bound and argue that $q\le 3$.
    Assume otherwise and let $(\lambda_i,j_i,p_i)$ for $i\in\numb{4}$
    be four gadgets $\LitGadget_{\lambda_i,j_i,p_i}$, each contributing $3c+3$ triangles.
    As $p_i \in \numb{3}$,
    there must be distinct $i_1,i_2\in\numb{4}$ such that $p_{i_1}=p_{i_2}$.
    For ease of notation assume that $i_1=1,i_2=2$ and $p_1=1$.
    By \cref{clm:tw:lower:reduction:gadget2},
    the two gadgets cover $c$ vertices in $\{z^{(\gamma)}_{1,b}\}_{\gamma,b}$.
    As each vertex can only be covered by one triangle,
    we get that all vertices in $\{z^{(\gamma)}_{1,b}\}_{\gamma,b}$ are covered.
    However, if that is the case, then from the properties of \cref{clm:tw:lower:reduction:gadget1},
    it follows that, for all $\gamma\in\numb{c}$, the gadget $\SelGadget^{(\gamma)}$
    can contribute at most $8$ triangles to the packing.
    Now note that every literal gadget that does not contribute $3c+3$ triangles to the packing $\mathcal P$ contributes at most $3c+2$ triangles, according to \cref{clm:tw:lower:reduction:gadget2}.

    Hence, the size of the packing $\mathcal P$ is at most $3m(3c+2) + 8c + q$.
    Therefore, $\mathcal P$ can have size at least $\ell = 3m (3c+2) + 9c + 3$
    if and only if $q \ge c + 3$.
    Since we know that $q \le 6$, we get $c \le 3$
    which contradicts the assumption that $c \ge 4$.

    Combining the two above results, we have that $q\le3$.
    First suppose that $q\le 2$. Then the size of $\mathcal P$ is at most $3m(3c+2) + 9c + 2$, which contradicts the fact that its size should be at least $\ell$.
    Thus, it remains to consider the case $q=3$. Let $\lambda_i,j_i$ for $i\in\numb{3}$ such that
    $\LitGadget_{\lambda_i,j_i,p_i}$ contributes the maximum possible $3c+3$ triangles to $\mathcal{P}$
    for all $i\in\numb{3}$, i.e., these literal gadgets contribute triangles according to the packing $\Pfalse$.
    We claim that $j_1=j_2=j_3$.
    For the sake of a contradiction, assume otherwise
    and assume without loss of generality that $j_1\neq j_2$.
    Thus, there is some $\gamma\in\numb{c}$ such that
    $\bin{j_1}\pos{\gamma}\neq\bin{j_2}\pos{\gamma}$. This means that exactly one of $\LitGadget_{\lambda_1,j_1,p_1}$ or $\LitGadget_{\lambda_2,j_2,p_2}$ covers a vertex in  $z^{(\gamma)}_{1,0}, z^{(\gamma)}_{2,0}, z^{(\gamma)}_{3,0}$, whereas the other covers a vertex in $z^{(\gamma)}_{1,1}, z^{(\gamma)}_{2,1}, z^{(\gamma)}_{3,1}$.
    Thus, by the properties from \cref{clm:tw:lower:reduction:gadget1},
    $\SelGadget^{(\gamma)}$ contributes at most $8$ triangles to the packing.
    Hence, the size of the packing is bounded by $3m \cdot (3c+2) + (c-1) 9 + 8 + 3$
    which contradicts the assumption about the size of $\mathcal{P}$,
    i.e., $\abs{\mathcal{P}} \ge \ell$.
    Hence, we actually get that $j_1=j_2=j_3$.

    So, for $j=j_1=j_2=j_3$ and for all $i\in\numb{3}$,
    the gadgets $\LitGadget_{\lambda_i,j,i}$ contribute $3c+3$ triangles to $\mathcal{P}$. Then
    \cref{clm:tw:lower:reduction:gadget2} implies that $d_{\lambda_i}$
    is not deleted (since it is part of the packing $\Pfalse$),
    i.e., $d_{\lambda_i} \notin S$.
    Hence, by our definition of $S$ it must hold that $\pi(\lambda_i)=\false$
    for all $i\in\numb{3}$
    which contradicts the assumption that $\pi$ satisfies $\phi$.
  \end{claimproof}

  Now we prove the reverse direction.
  \begin{claim}
    \label{clm:tw:lower:reduction:complete}
    If $I$ has a solution,
    then $\phi$ is satisfiable.
  \end{claim}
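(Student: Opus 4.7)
The plan is to extract a satisfying assignment $\pi$ for $\phi$ from any valid solution $S$ of $I$. Since $U = V \setminus \{d_\lambda : \lambda \in \Lambda\}$, we have $S \subseteq \{d_\lambda : \lambda \in \Lambda\}$ with $|S| \le k = n$. I will set $\pi(x_i) = \true$ iff $d_{x_i} \in S$, and show $\pi$ satisfies $\phi$.

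The first and most important step is a structural claim: for every clause $C_j$ of $\phi$ (original or auxiliary), at least one literal $\lambda \in C_j$ satisfies $d_\lambda \in S$. Suppose otherwise; then for $r \in \{1,2,3\}$ the gadget $\LitGadget_{\lambda_r, j, r}$ still contains its $\Pfalse$ packing of $3c+3$ triangles, and these three packings are pairwise vertex-disjoint because they cover vertices $z^{(\gamma)}_{r, 1-\bin{j}[\gamma]}$ with distinct $r$-coordinates. Crucially, within each selector slot $\gamma$, all three covered vertices lie on the same bit $1-\bin{j}[\gamma]$, so by \cref{clm:tw:lower:reduction:gadget1} the selector $\SelGadget^{(\gamma)}$ can still realize its full packing of $9$ triangles (using $P_0$ or $P_1$ according to $\bin{j}[\gamma]$). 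Every other literal gadget contributes $3c+2$ triangles via $\Ptrue$, disjoint from $Z$. The total is exactly $3(3c+3) + (3m-3)(3c+2) + 9c = 3m(3c+2) + 9c + 3 = \ell$, contradicting that $S$ is a solution.

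The second step — which I expect to be the main obstacle — uses the auxiliary clauses added during preprocessing to force $\pi$ to behave consistently across positive and negative occurrences. Partition $[n]$ into $A = \{i : d_{x_i}, d_{\lneg x_i} \in S\}$, $B = \{i : \text{exactly one of } d_{x_i}, d_{\lneg x_i} \in S\}$, and $C = \{i : \text{neither} \in S\}$. The budget condition $|S| = 2|A| + |B| \le n = |A|+|B|+|C|$ gives $|A| \le |C|$. For the reverse inequality, apply the first step to each $i \in C$: the four auxiliary clauses $(x_i \lor \lneg x_i \lor x_{i+1})$, $(x_i \lor \lneg x_i \lor \lneg x_{i+1})$, $(x_i \lor \lneg x_i \lor x_{i+2})$, $(x_i \lor \lneg x_i \lor \lneg x_{i+2})$ all have $d_{x_i}, d_{\lneg x_i} \notin S$, so by the structural claim we must have $d_{x_{i+1}}, d_{\lneg x_{i+1}}, d_{x_{i+2}}, d_{\lneg x_{i+2}} \in S$, i.e., $i+1, i+2 \in A$ (indices mod $n$, using the identifications $x_{n+1} = x_1$, $x_{n+2} = x_2$). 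Hence $i \mapsto i+1 \bmod n$ injects $C$ into $A$, yielding $|C| \le |A|$. Combined with $|A| \le |C|$, both $i \mapsto i+1$ and $i \mapsto i+2$ are bijections from $C$ onto $A$, so $A = C + 1 = C + 2 \pmod n$. Therefore $C$ is invariant under a shift by $1$ on $\mathbb Z/n\mathbb Z$, so $C \in \{\emptyset, \mathbb Z/n\mathbb Z\}$. The latter case would force $S = \emptyset$, which violates the structural claim on any clause; thus $C = A = \emptyset$.

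The final step then reads off the assignment: for any clause $C_j$ of $\phi$, pick $\lambda \in C_j$ with $d_\lambda \in S$ via the structural claim. If $\lambda = x_i$, then $\pi(x_i) = \true$ and the clause is satisfied. If $\lambda = \lneg x_i$, then $d_{\lneg x_i} \in S$, and since $A = \emptyset$ we get $d_{x_i} \notin S$, so $\pi(x_i) = \false$ and $\lambda$ is satisfied. Hence every clause of $\phi$ is satisfied by $\pi$, completing the reduction. The payoff of the auxiliary-clause preprocessing is precisely Step~2: it converts the global budget $|S| \le n$ into local information forcing the assignment $\pi$ to be consistent.
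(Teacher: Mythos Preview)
Your proof is correct. Both you and the paper extract the assignment $\pi$ from $S$ and certify it by building a packing of size $\ell$ from any ``bad'' clause, but your organization and the goodness argument differ.

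You front-load the packing construction as a reusable structural claim (``for every clause $C_j$, some $d_\lambda$ with $\lambda\in C_j$ lies in $S$'') and then invoke it twice: once on the auxiliary clauses to force $A=C=\emptyset$, and once on all clauses to verify $\pi$. The paper instead first argues directly that no $S(i)=2$ via a counting argument (partitioning indices into sets $N_0,N_1,N_2$ and computing $k=n+|N_0|+|N_2|$), and only afterwards builds the packing once to certify $\pi$. Your shift-invariance step ($C+1=A=C+2\Rightarrow C=C+1\Rightarrow C\in\{\emptyset,\mathbb Z/n\mathbb Z\}$, the full case being ruled out because $A$ and $C$ are disjoint) is a cleaner replacement for the paper's $N_0,N_1,N_2$ bookkeeping, and it delivers the slightly stronger conclusion $S(i)=1$ for all $i$ rather than just $S(i)\le 1$. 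The paper's route has the virtue of separating the combinatorial budget argument from the packing construction entirely; yours has the virtue of proving the packing lemma once and reusing it, which also makes the role of the auxiliary clauses more transparent.
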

  \begin{claimproof}
    Let $S \subseteq V\setminus U$ be a solution for $I$.
    The idea is to define a satisfying assignment $\pi$ for $\phi$ based on $S$.
    To show that this is possible, we first introduce some notation.
    We define, for all $i\in \numb{n}$,
    $S(i) \deff \abs{\{ d_{x_i}, d_{\lneg{x}_i}\} \cap S }$
    as the number of vertices selected for each variable.
    We first claim that, for all $i \in \numb{n}$,
    the solution $S$ satisfies $S(i) < 2$.

    \subparagraph*{The Solution is Good.}
    We say that a solution $S$ is \emph{good} if for each $i\in [n]$ we have $S(i)=1$. To simplify notation,
    we write $S(i + n) \deff S(i)$
    and similarly for the indices of the variables.
    Now assume for contradiction that there is some $i \in \numb{n}$
    such that $S(i) = 2$.
    Since we defined $k=n$,
    this implies that there must be some $b \in \numb{n}$
    such that $S(b) = 0$.
    Moreover, we claim that there must be some $b \in \numb{n}$
    such that $S(b) = 0$ and, additionally, $S(b+1) + S(b+2) < 4$.
    Again for contradiction assume that, for all $b \in \numb{n}$
    with $S(b) = 0$, it holds that $S(b+1) + S(b+2) = 4$.
    We define the following sets
    \begin{align*}
      N_0 & \deff \{ i \in \numb{n} \mid S(i) = 0 \land S(i+1) + S(i+2) = 4 \}\\
      N_1 & \deff \{ i \in \numb{n} \mid S(i) = 1 \}\\
      N_2 & \deff \{ i \in \numb{n} \mid S(i) = 2 \land i-1,i-2 \notin N_0 \}
      .
    \end{align*}
    Observe that each $i \in N_0$ contributes four selected vertices to $S$,
    each $i \in N_1$ contributes one vertex,
    and each $i \in N_2$ contributes two vertices to $S$.
    Moreover, this covers all vertices in $S$,
    as there is no index $i \in \numb{n}$
    with $S(i)=0$ and $S(i+1)+S(i+2) < 4$.
    Hence, we get that
    \[
      k
      = \sum_{i\in\numb{n}} S(i)
      = 4 \abs{N_0} + \abs{N_1} + 2 \abs{N_2}
      = n + \abs{N_0} + \abs{N_2}
      .
    \]
    However, since we assumed that $N_0 \neq \emptyset$,
    we arrive at a contradiction as $k=n$
    and conclude that there is some $b \in \numb{n}$
    such that $S(b)=0$ and $S(b+1)+S(b+2) < 4$.

    In the next step we show that this also leads to a contradiction, thereby disproving
    that there is some $i \in \numb{n}$ with $S(i)=2$.
    We do this by constructing a large packing
    which would imply that $S$ is not a solution.
    By our modification of $\phi$,
    for all $\lambda \in \{x_{b+1}, \lneg{x}_{b+1},
    x_{b+2}, \lneg{x}_{b+2} \}$,
    there is a clause $(x_b \lor \lneg{x}_b \lor \lambda)$ in $\phi$.
    By the above reasoning we know
    that there is at least one choice of $\lambda$
    such that $d_\lambda \notin S$.
    Assume without loss of generality that $\lambda = x_{b+1}$,
    the other cases follow analogously.
    Let $q \in \numb{m}$ be the number
    of the clause $(x_b \lor \lneg{x}_b \lor x_{b+1})$.
    We set $Q = \{(x_b, q, 1), (\lneg{x}_b, q, 2), (x_{b+1}, q, 3)\}$
    to simplify notation.

    For all $\lambda \in \Lambda$ and $(j, p) \in C_\lambda$
    where $(\lambda, j, p) \notin Q$,
    we can use the packing $\Ptrue$ from \cref{clm:tw:lower:reduction:gadget2}
    to find a packing of $3c+2$ triangles for the gadget $\LitGadget_{\lambda, j, p}$
    such that no vertex from $Z \cup \{ d_\lambda\}$ is covered.
    (Recall that $d_\lambda$ corresponds to the vertex $v_{c+1}$
    in the respective gadget, and the vertices in $Z$ are identified
    with vertices of the form $v_\gamma$.)
    This contributes $(3m -3) \cdot (3c+2)$ triangles.

    Also by \cref{clm:tw:lower:reduction:gadget2}, for every $(\lambda, j, p) \in Q$,
    we can find a packing of $3c+3$ triangles for $\LitGadget_{\lambda,j,p}$
    since $d_\lambda=v_{c+1}$ is not deleted in these gadgets.
    Observe that the vertex sets of these packings are disjoint
    because the position $p$ is different for each choice (and thus we consider different copies of the gadget $\LitGadget_{c+1}$).
    This contributes $3 (3c+3)$ additional triangles to the packing.

    Note that, by construction of $G$,
    the triangle packings of the three gadgets
    $\LitGadget_{x_b,j,1}$, $\LitGadget_{\lneg{x}_b,j,2}$, and $\LitGadget_{x_{b+1},j,3}$
    cover, for each $\gamma\in [c]$, exactly one of the sets
    $\{z_{1,0}^\gamma, z_{2,0}^\gamma, z_{3,0}^\gamma\}$ or
    $\{z_{1,1}^\gamma, z_{2,1}^\gamma, z_{3,1}^\gamma\}$,
    and they cover none of the vertices of the respective other set.
    Thus, for every $\gamma \in \numb{c}$,
    we can find a packing of $9$ triangles for the gadget $\SelGadget^{(\gamma)}$.
    This contributes $9c$ additional triangles to the packing.

    Hence, the final packing consists of
    \[
      (3m - 3) \cdot (3c + 2) + 3 (3c + 3) + 9c
      = 3m (3c + 2) + 9c + 3
    \]
    triangles which contradicts our assumption that
    $S$ is a solution because we defined $\ell = 3m\cdot (3c+2) + 9c + 3$.
    We conclude that it cannot happen that $S(i)=2$ for any $i \in \numb{n}$.

    \subparagraph*{Constructing and Verifying the Assignment.}
    Now we can define the assignment $\pi$ for $\phi$.
    For all $i\in \numb{n}$,
    we set
    \[
      \pi(x_i) \deff
        \begin{cases}
          \true, & \text{if } d_{x_i}\in S, \\
          \false, & \text{otherwise.}
        \end{cases}
    \]
    By the fact that the solution $S$ is good,
    this assignment is well-defined
    and each variable is assigned some truth-value.

    Now we argue that $\pi$ satisfies $\phi$.
    For the sake of a contradiction, assume that $\pi$ does not satisfy $\phi$.
    Then there is at least one clause that is not satisfied by $\pi$.
    Fix some arbitrary index $j \in \numb{m}$ of such an unsatisfied clause.
    Next we construct a packing of at least $\ell$ triangles in $G-S$.

    For all $\gamma\in\numb{c}$,
    there is, by \cref{clm:tw:lower:reduction:gadget1},
    some triangle packing $P^{(\gamma)}$ for the gadget $\SelGadget^{(\gamma)}$
    such that the vertices $z^{(\gamma)}_{1,\bin{j}\pos{\gamma}}, z^{(\gamma)}_{2,\bin{j}\pos{\gamma}}, z^{(\gamma)}_{3,\bin{j}\pos{\gamma}}$
    are covered by this packing (this is the packing $P_{\bin{j}\pos{\gamma}}$ of the corresponding gadget).
    By \cref{clm:tw:lower:reduction:gadget1},
    we know that each packing $P^{(\gamma)}$
    contains exactly $9$ triangles.

    Let $\lambda_1$, $\lambda_2$, and $\lambda_3$ be the literals
    such that the $j$th clause is $(\lambda_1 \lor \lambda_2 \lor \lambda_3)$.
    \begin{itemize}
    	\item For all literals
    	$\lambda \in \Lambda \setminus \{\lambda_1,\lambda_2,\lambda_3\}$, and for all $(j',p) \in C_\lambda$,
    	we use \cref{clm:tw:lower:reduction:gadget2}
    	to obtain a triangle packing $P_\lambda$ of size $3c+2$
    	for the gadget $\LitGadget_{\lambda,j,p}$.
    	As argued before, we can choose $P_\lambda=\Ptrue$ such that
    	it does not cover any vertex from $Z \cup \{ d_\lambda \}$.
    	Hence, we do not have to check whether the vertex $d_\lambda$
    	is contained in $S$, that is, to check whether it is deleted.
    	\item For the literals $\lambda_p$ with $p\in [3]$, and for all $(j',p)\in C_{\lambda_p}$ with $j'\neq j$, we also select a triangle packing of size $3c+2$
    	for the gadget $\LitGadget_{\lambda_p,j', p}$.
    	Again, this triangle packing does not use the vertex $d_{\lambda_p}$ or any vertex from $Z$.
    	\item Finally, for the literals $\lambda_p$ with $p\in [3]$, consider the gadget $\LitGadget_{\lambda_p,j,p}$.
    	Recall that $\pi(\lambda_p)=\false$
    	as we assume that the clause is unsatisfied.
    	Hence, $d_{\lambda_p} \notin S$.
    	Thus, by \cref{clm:tw:lower:reduction:gadget2},
    	we can choose the triangle packing that contains $d_{\lambda_p}$ for $\LitGadget_{\lambda_p,j,p}$, that is, the packing of size $3c+3$ that contains all of the vertices of the form $v_\gamma=z^{(\gamma)}_{p,1-\bin{j}\pos{\gamma}}$.
    \end{itemize}

    Let $P_{\lambda_p}$ be the combined packing
    for the gadgets $\LitGadget_{\lambda_p,j',p}$ for all $(j',p)\in C_{\lambda_p}$.
    We first argue why the final packing is actually a packing,
    that is, why the triangles from the right are disjoint
    from the triangle packing for the left side of the graph.
    For this fix some $\gamma\in\numb{c}$.
    We know that the gadget $\SelGadget^{(\gamma)}$ covers the vertices
    $z^{(\gamma)}_{p,\bin{j}\pos{\gamma}}$ for all $p\in\numb{3}$.
    By our choice above,
    for all $p\in\numb{3}$,
    the gadget $\LitGadget_{\lambda_p,j,p}$
    covers the vertices $z^{(\gamma)}_{p,1-\bin{j}\pos{\gamma}}$
    for all $\gamma\in\numb{c}$.
    Hence, the two vertex sets are actually disjoint
    which implies that the triangles from the packing are disjoint.

    It remains to analyze the size of the resulting packing.
    We know that $\abs{P^{(\gamma)}}=9$,
    $\abs{P_\lambda}=3c+2$
    for all $\lambda\in\Lambda\setminus\{\lambda_1,\lambda_2,\lambda_3\}$,
    and that
    $\abs{P_{\lambda_p}} = (\abs{C_{\lambda_p}}-1)\cdot (3c+2) + (3c+3)$
    for all $p\in\numb{3}$.
    Hence, the entire packing has size
    \begin{align*}
      &\ c \cdot 9
      + \sum_{\lambda \in \Lambda\setminus\{\lambda_1,\lambda_2,\lambda_3\}}
        \abs{C_\lambda} \cdot (3c+2)
        \\&
      + (3c+2) (\abs{C_{\lambda_1}}-1)
      + (3c+2) (\abs{C_{\lambda_2}}-1)
      + (3c+2) (\abs{C_{\lambda_3}}-1)
      + 3(3c+3)
        \\
      =&\ 9c
      + \sum_{\lambda \in \Lambda} (3c+2) \abs{C_\lambda}
      + 3\\
      =&\ 9c + 3m \cdot (3c+2) + 3
    \end{align*}
    which is equal to $\ell$.
    Since this is a contradiction to our assumption
    that we are given a solution for the \TriUndelHitPack instance,
    we conclude that $\pi$ actually satisfies the formula $\phi$.
  \end{claimproof}

  It remains to prove that the reduction
  also satisfies the required properties about the size and the pathwidth of $G$.
  \begin{claim}
    \label{clm:tw:lower:reduction:size}
    Graph $G$ has pathwidth at most $\Oh(\log m)$
    and the size of $G$ if bounded by $\Oh(m \log m)$.
  \end{claim}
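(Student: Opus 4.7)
The size bound is obtained by a direct counting argument. The separator set $Z$ has exactly $6c = \Oh(\log m)$ vertices. The left side contributes $c$ selector gadgets $\SelGadget^{(\gamma)}$, each a copy of $\SelGadget_3$ of constant size, accounting for $\Oh(\log m)$ vertices total. On the right side, since every clause has exactly three literals, the total number of pairs $(j,p) \in C_\lambda$ summed over all $\lambda \in \Lambda$ is exactly $3m$; each such pair yields one copy of $\LitGadget_{c+1}$, which has $\Oh(c) = \Oh(\log m)$ vertices. Finally, the number of vertices $d_\lambda$ is at most $2n$, and from the preprocessing we may assume (after discarding variables not occurring in any clause) that $n = \Oh(m)$. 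Summing everything gives $\Oh(m \log m)$ vertices, and the edge count is of the same order since each gadget has linearly many edges in its vertex count.

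For the pathwidth bound, the plan is to exhibit an explicit path decomposition whose every bag contains the entire set $Z$ (of size $6c$) together with a constant number of additional ``work'' vertices. The core observation is that both gadget families have constant pathwidth on their own: $\AuxGadget_r$ is essentially a cycle with small appendages, so a standard left-to-right sweep around the cycle yields a path decomposition of constant width; the same holds for $\SelGadget_r$ (a subgraph of $\AuxGadget_{3r}$) and for $\LitGadget_r$ (the gadget $\SelGadget_r$ with two distinguished vertices identified). I will glue these constant-width decompositions into one global decomposition by adding $Z$ to every bag.

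Concretely, the path decomposition processes gadgets sequentially. First, for each $\gamma \in \numb{c}$, introduce and then forget the internal vertices of $\SelGadget^{(\gamma)}$; its distinguished vertices live in $Z$ and are therefore already present. Second, for each literal $\lambda \in \Lambda$, introduce $d_\lambda$, then sweep through all gadgets $\LitGadget_{\lambda, j, p}$ for $(j,p) \in C_\lambda$ one by one using their internal constant-width decomposition (the only external vertices touched are $d_\lambda$ and a subset of $Z$, both already in the bag), and finally forget $d_\lambda$. Throughout this process, each bag contains $Z$ (of size $6c$), at most one vertex $d_\lambda$, and a constant number of internal gadget vertices, giving total width $6c + \Oh(1) = \Oh(\log m)$.

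The only place requiring mild care is the fact that a single vertex $d_\lambda$ is shared by all the $\LitGadget_{\lambda,j,p}$'s for a fixed $\lambda$; this is exactly why we keep $d_\lambda$ in the bag across all gadgets of that literal (and forget it only once all of them have been processed). Verifying the three conditions of a path decomposition — that every vertex and every edge appears in some bag and that the bags containing each vertex form a contiguous subpath — is routine from the construction above.
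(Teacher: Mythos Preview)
Your proof is correct and follows essentially the same approach as the paper: bound the size by counting gadgets, and bound the pathwidth by putting the separator $Z$ in every bag and sweeping through the gadgets, each of which has constant pathwidth. In fact your argument is slightly more careful than the paper's, which asserts that deleting $Z$ yields $c+3m$ components (overlooking that the gadgets $\LitGadget_{\lambda,j,p}$ for a fixed $\lambda$ remain connected through the shared vertex $d_\lambda$); your explicit treatment of keeping $d_\lambda$ in the bag while processing all of its literal gadgets is exactly what closes that small gap.
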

  \begin{claimproof}
    The number of $\SelGadget^{(\gamma)}$ gadgets is precisely $c$
    and each gadget is of constant size.
    Each gadget $\LitGadget_{\lambda,p,j}$ is of size $\Oh(c)$ (by \cref{clm:tw:lower:reduction:gadget2}),
    and there are $3m$ such gadgets in total.
    As we already accounted for the vertices in $Z$ by the other gadgets,
    the size of the graph can be bounded above by
    \[
      \Oh(c) + 3m \cdot \Oh(c) = \Oh(m \log m).
    \]

    For the bound on the pathwidth we
    first observe that if we delete $Z$,
    then the graph consists of $c + 3m$ disjoint components.
    Moreover, each such component either corresponds
    to a gadget $\SelGadget^{(\gamma)}$, which has constant size,
    or to a gadget $\LitGadget_{\lambda, p, j}$,
    which has size $\Oh(c) = \Oh(\log m)$.
    Thus, the pathwidth of the entire graph can be bounded by $\Oh(\log m)$.
  \end{claimproof}

  We finish the proof by recalling that $\phi$ is satisfiable
  if and only if the original formula $\phi'$ is satisfiable.
  Moreover, based on our initial modifications to $\phi$,
  we get $m \in \Oh(m')$ which concludes the proof.
\end{proof}

\subsection{\texorpdfstring%
{\boldmath Lower Bound for \SquareUndelHitPack}%
{Lower Bound for Four-Cycle-HitPack}}
\label{sec:tw:lower:C4}

In this section we consider the problem \SquareUndelHitPack
and provide a lower bound matching the running time
of the general algorithm from \cref{sec:twUpper}
as stated in \cref{thm:twUpper:arbitrary}.
Formally, we prove \cref{thm:tw:lower:C4},
which we restate here for convenience.

\thmTwLowerCFour*\label\thisthm

Our reduction starts from 3-SAT and follows roughly the same outline
as the reduction for the general case.
However, to obtain this stronger bound with the logarithmic factor,
we have to construct an instance of \SquareUndelHitPack with smaller pathwidth.
More precisely, if we can find a reduction such that the pathwidth
is bounded by $\Oh(\log m / \log \log m)$,
then the claimed lower bound would follow immediately.

To understand how this can be achieved,
recall that the encoding of the clause number
was the limiting factor for the bound on the pathwidth in previous reduction.
By encoding the index of the clauses in binary,
the pathwidth of the graph became $\Oh(\log m)$.
When now considering \SquareUndelHitPack,
we can actually find a better way of representing the clause number.
Recall that in the previous construction,
each gadget (independent of its type) covered vertices
from their respective half and one additional vertex from the middle only.
In the following we allow that a cycle can go from the left side
to the right half of the graph.
Then the cycle can actually be described by pair of vertices form the middle.

We use this idea as follows while again designing a graph with two halves.
The left half contains gadgets corresponding to selecting the clause
and the gadgets on the right side correspond to setting the variables
and checking whether the literals are satisfied.
For the vertices in the middle,
we now have two groups (high and low) of $\Oh(t)$ vertices each
where $t$ is chosen such that $t! = m$
(this is for example possible by setting $t = \Oh(\log m / \log \log m)$).
The clause number is then encoded by a perfect matching
from the vertices of the high group to the vertices of the low group.
This gives us $t!$ possible perfect matchings
which then clearly suffices to represent all clauses.

We define the gadgets on the left such that
they connect each vertex from the high group to a vertex from the low group
by a path of length two (i.e., a half of a $C_4$)
which then induces a perfect matching.
The gadgets on the right side then have to complete these cycles
by another path of length two to form a $C_4$.
We define these literal gadgets such that this is only possible
if the corresponding literal was not satisfied.
The interpretation is again that a large $C_4$-packign corresponds to an unsatisfied clause and thus, also formula
while a small packing indicates a satisfying assignment.

We note that for this construction to work
we crucially rely on the fact that we can transfer information
from one half to the other half
for which we exploit that we are packing cycles of length four.

In the following we give the formal details.

\begin{proof}[Proof of \Cref{thm:tw:lower:C4}]

Let $\phi'$ be the given 3-SAT formula with
$n'$ variables $x_1,\dots,x_{n'}$ and $m'$ clauses. We preprocess the formula in
a similar way to that in \cref{sec:tw:lower:triangle}, and we get a formula
$\phi$ with $n$ variables and $m$ clauses. Let $t$ be the minimum integer such
that $t!$ is at least the number of clauses of the formula. Next, we duplicate
arbitrarily selected clause to guarantee that the final formula
has exactly $t!$~clauses.

\subparagraph*{Gadgets.}
As a next step we define two types of gadgets,
one for the literals and one for selecting the clause.

\begin{claim}[Literal Gadget for $C_4$]
	\label{clm:tw:lower:C4:gadget1}
	For all integers $r\ge 2$,
	there is a graph $\clit_r$
	with $2r$ distinguished vertices
	$v_1,\dots,v_r$ and $\bar v_1,\dots,\bar v_r \in V$
	such that
	\begin{enumerate}
		\item
		there are exactly two packings of $C_4$'s of size $3r$
		denoted by $P_0$ and $P_1$,
		\item
		$P_0$ covers $v_1, \dots, v_r$
		and none of the vertices $\bar v_1, \dots, \bar v_r$,
		\item
		$P_1$ covers $\bar v_1,\dots, \bar v_r$
		and none of the vertices $v_{1}, \dots, v_r$,
		\item
            there is no packing of $C_4$'s of size greater than $3r$.
	\end{enumerate}
\end{claim}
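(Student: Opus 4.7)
The plan is to build $\clit_r$ as a ``cycle of $C_4$'s'', the natural $C_4$-analog of the triangle gadget $\AuxGadget_{3r}$ from \cref{clm:tw:lower:auxiliaryGadget}, with the distinguished vertices $v_i, \bar v_i$ placed inside alternating $C_4$'s of the cycle. Concretely, I would introduce $6r$ backbone vertices $a_0, a_1, \ldots, a_{6r-1}$ (all indices modulo $6r$) and, for each $i \in \numb{6r}$, two interior vertices $b_i, c_i$ together with the four edges $a_{i-1}b_i$, $b_i a_i$, $a_i c_i$, $c_i a_{i-1}$. This yields a $C_4$, which I denote $C^i$, on the vertex set $\{a_{i-1}, b_i, a_i, c_i\}$ for every $i \in \numb{6r}$; two consecutive $C_4$'s $C^i, C^{i+1}$ share exactly the vertex $a_i$, while any two $C_4$'s whose indices differ by at least $2$ modulo $6r$ are vertex-disjoint.

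The key step is to establish that the $C^i$'s are the \emph{only} $C_4$'s in this graph, which is the main (mild) obstacle since one must rule out spurious $C_4$'s arising from the cyclic structure. To see this, I would observe that every backbone vertex $a_j$ has exactly the four neighbors $\{b_j, c_j, b_{j+1}, c_{j+1}\}$, that every interior vertex has exactly two neighbors (both in the backbone), and that no two interior vertices are adjacent. Since each vertex of a hypothetical $C_4$ contributes two edges and no interior-interior edges exist, a direct double-counting of edges inside the $C_4$ forces it to contain exactly two interior and two backbone vertices. Each of the two interior vertices then has both of its neighbors inside the $C_4$, so they must share the same pair of backbone neighbors; this is only possible when the two interiors are $b_i$ and $c_i$ for the same $i$, in which case the $C_4$ is precisely $C^i$.

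Once this structural lemma is in place, any $C_4$-packing corresponds to an independent set in the cycle graph $C_{6r}$ on the indices of the $C^i$, so its size is at most $3r$, with equality attained by exactly the two ``alternating'' packings $P_0 = \{C^i : i \in \numb{6r}, i \text{ odd}\}$ and $P_1 = \{C^i : i \in \numb{6r}, i \text{ even}\}$; this yields properties $(1)$ and $(4)$. To finish, I would designate the distinguished vertices by setting $v_i \deff b_{6i-5}$ and $\bar v_i \deff b_{6i-4}$ for every $i \in \numb{r}$. Since each $b_j$ lies in the unique $C_4$ $C^j$ of the gadget, and $6i-5$ is always odd while $6i-4$ is always even, $P_0$ covers all of the $v_i$ but none of the $\bar v_i$, while $P_1$ covers all of the $\bar v_i$ but none of the $v_i$, which establishes properties $(2)$ and $(3)$.
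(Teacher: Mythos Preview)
Your proposal is correct and follows essentially the same construction as the paper: a cycle of $6r$ four-cycles, each sharing one vertex with its successor, so that the $C_4$'s of the gadget are exactly the $C^i$ and maximum packings correspond to the two maximum independent sets in $C_{6r}$. The only cosmetic differences are that the paper asserts the structural lemma (``any $C_4$ in the graph is of the form $a_ib_ic_id_i$'') without your detailed bipartite double-counting argument, and that it places the distinguished vertices at indices $6i-5$ and $6i-2$ rather than your $6i-5$ and $6i-4$; both placements satisfy the claim.
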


  \begin{claimproof}
      Intuitively, our literal gadget $\clit_r$ is the graph obtained by replacing every triangle in $\UniGadget_r$ with a $C_4$.

      First, we create $6r$ copies of $C_4$, which we denote as
      $a_1b_1c_1d_1, \dots, a_{6r}b_{6r}c_{6r}d_{6r}$. We identify $d_i$ with
      $b_{i+1 \bmod 6r}$ for every $i \in [6r]$. For every $i \in [r]$, we name
      $a_{6i-5}$ as $v_i$. Next, for every $i \in [r]$ we name vertex
      $a_{6i-2}$ as $\bar{v}_i$. This concludes the construction of graph
      $\clit_r$.

      Let $P_0=\{a_{2i+1}b_{2i+1}c_{2i+1}d_{2i+1} \mid i\in [3r]\}$ and
      $P_1=\{a_{2i}b_{2i}c_{2i}d_{2i} \mid i\in [3r]\}$
      be the two distinguished packings.
	 We can easily verify that properties 2 and 3 of this claim are true.

     Observe, that any $C_4$ in the graph is of the form $a_ib_ic_id_i$ for some
     $i\in [6r]$.
     This means that an arbitrary $C_4$ in any packing in the graph must cover
     both vertices of $\{b_id_i\}$ for some $i\in [6r]$.
     It follows that properties 1 and 4 are true as
     $\{b_{2i+1},d_{2i+1}\} \mid i\in [3r]\}
     =\{\{b_{2i},d_{2i}\} \mid i\in [3r]\}$,
     every $C_4$ in a packing must contain $b_i,d_i$ for some $i$
     and both sets have size $3r$ only.
  \end{claimproof}

We also introduce the selector gadget $\csel$. Intuitively, $\csel$ is a \emph{chain}
of five $4$-cycles. Formally, $\csel$ is constructed from $5$ copies of $C_4$, denoted as
$a_1b_1c_1d_1, \dots, a_5b_5c_5d_5$.
Then, we identify vertices $d_i$ with
$b_{i+1}$ for every $i\in \{1,2,3,4\}$. This concludes the construction of
$\csel$.

\subparagraph*{Construction of the Instance.}
Now we describe how to construct the instance of \UndelHitPack{C_4} from a processed instance of 3-SAT.
For all $\gamma\in\numb{t}$, $g\in\{0,1\}$, and $p \in \{1,2,3\}$,
we create a vertex $z^{(\gamma)}_{p,g}$
and define $Z \deff \{ z^{(\gamma)}_{p,g}\}_{\gamma,g,p}$
as the set of vertices in the middle.

The left side of the graph is as follows.
For each pair $(i,j)\in [t]\times [t]$, we create a copy of the selector gadget, denoted by $\csel^{(i,j)}$.
For each gadget $\csel^{(i,j)}$, we connect $a_1$, $a_3$, and $a_5$
(where $a_i$ is the vertex in the $i$th $C_4$ of $\csel^{(i,j)}$)
as follows to six vertices in $Z$:
\begin{itemize}
    \item
    We make $a_1$ adjacent to the two vertices $z^{i}_{1,0}$ and $z^{j}_{1,1}$.
    \item
    We make $a_3$ adjacent to the two vertices $z^{i}_{2,0}$ and $z^{j}_{2,1}$.
    \item
    We make $a_5$ adjacent to the two vertices $z^{i}_{3,0}$ and $z^{j}_{3,1}$.
\end{itemize}

The right side of the graph is as follows.
Let $\Lambda = \{ x_1, \lneg{x}_1, \dots, x_n,
\lneg{x}_n \}$ be the set of all literals
corresponding to the variables of $\phi$.
Note that from the preprocessing it follows that
all of them appear in some clause.
For each
literal $\lambda \in \Lambda$, we define a set $\CID_\lambda \subseteq \numb{m}
\times \{1,2,3\}$ such that $(j,p) \in \CID_\lambda$ if and only if literal
$\lambda$ appears in the $j$th clause at position $p$.

We create two vertices $d_{i}^{(\true)}$ and $d_{i}^{(\false)}$ for each $i\in [n]$.
For all $\lambda \in \Lambda$ and all $(j,p)\in \CID_\lambda$,
we introduce a copy of the gadget $\clit_{t+1}$
denoted by $\clit_{\lambda,j,p}$.
Depending on the literal $\lambda$ we consider two cases.
\begin{itemize}
    \item
    The literal $\lambda$ is a positive literal,
    that is, $\lambda = x_i$ for some variable $x_i$:

    Then, we identify $v_{t+1}$ of the gadget $\clit_{\lambda,j,p}$ with $d_{i}^{(\true)}$.
    Moreover,
    we identify $\bar v_{t+1}$ of the gadget $\clit_{\lambda,j,p}$ with $d_{i}^{(\false)}$.

    \item
    The literal $\lambda$ is a negative literal,
    that is, $\lambda = \lneg x_i$ for some variable $x_i$:

    Then we identify $v_{t+1}$ of the gadget $\clit_{\lambda,j,p}$ with $d_{i}^{(\false)}$.
    Moreover,
    we identify $\bar v_{t+1}$ of the gadget $\clit_{\lambda,j,p}$ with $d_{i}^{(\true)}$.
\end{itemize}

Let $\sigma_1, \sigma_2,\ldots,\sigma_m$ be $m$ different permutation functions
of $t$ elements, chosen in arbitrary order.
Note that there are $t!$ different permutation functions of $t$ elements and ${t!}=m$.
For each gadget $\clit_{\lambda,j,p}$ and each $\gamma \in [t]$, we connect $\bar v_{\gamma}$ and $z^{(\gamma)}_{p,0}$ with an edge,
and connect $\bar v_{\gamma}$ and $z^{(\sigma_j(\gamma))}_{p,1}$ with an edge.
This concludes the construction of the graph $G$.

Let $V$ be the set of vertices of $G$,
then the set of undeletable vertices is
$U \deff V \setminus \{d_i^{(\true)},d_i^{(\false)} \mid i\in [n]\}$.
We set $k \deff n$ as the bound on the number of vertices we are allowed to
delete and set $\ell \deff 3t^2 + 9m(t+1) + 2t$.
The constructed \UndelHitPack{C_4} instance is $I=(G, U, k, \ell)$.
This concludes the description of the construction. Clearly, the runtime of the
reduction is polynomial. It remains to prove its correctness and analyze the
pathwidth of $G$.

\subparagraph*{Correctness.}
Now, we show that the construction is correct.
We do this by handling both directions individually.
  \begin{claim}
	\label{clm:tw:lower:C4:correct}
	If $\phi$ is satisfiable, then $I$ has a solution.
\end{claim}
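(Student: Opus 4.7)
The plan is to set $S \deff \{d_i^{(\pi(x_i))} : i \in [n]\}$, so that $|S|=n=k$ and $S \subseteq V\setminus U$, and then to argue that every $C_4$-packing of $G - S$ has size strictly less than $\ell$. By our identification rules, the vertex $v_{t+1}$ of the literal gadget $\clit_{\lambda,j,p}$ is $d_i^{(\true)}$ when $\lambda=x_i$ and $d_i^{(\false)}$ when $\lambda=\lneg x_i$, so $v_{t+1}\in S$ precisely when $\lambda$ is satisfied by $\pi$, and symmetrically $\bar v_{t+1}\in S$ precisely when $\lambda$ is unsatisfied. Hence by \cref{clm:tw:lower:C4:gadget1}, in every satisfied literal gadget the packing $P_0$ is blocked and any maximum-size internal packing must be $P_1$, which covers all $\bar v_\gamma$ for $\gamma\in[t+1]$; in every unsatisfied literal gadget only $P_0$ is viable at maximum size, leaving $\bar v_1,\dots,\bar v_t$ free.

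I would classify every $C_4$ of $G$ as selector-internal, literal-internal, or \emph{external} (of the form $a_\ast\text{-}z^{(\gamma)}_{p,0}\text{-}\bar v_\gamma\text{-}z^{(\sigma_{j'}(\gamma))}_{p,1}\text{-}a_\ast$ with $a_\ast$ in $\csel^{(\gamma,\sigma_{j'}(\gamma))}$ and $\bar v_\gamma$ in $\clit_{\lambda',j',p}$), and establish two local bounds for any packing $\mathcal{P}$. First, for each selector in which $e\in\{0,1,2,3\}$ of its three distinguished vertices $a_1,a_3,a_5$ appear in external $C_4$'s of $\mathcal{P}$, the total contribution (its internal cycles plus those $e$ externals) is at most $3+\max(0,e-1)$, since blocking a single $a_\ast$ already restricts the internal packing to at most two cycles, while all three externals can coexist with the two internal ``filler'' cycles $a_2b_2c_2d_2$ and $a_4b_4c_4d_4$. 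Second, for each satisfied literal gadget, its internal cycles plus the externals passing through its $\bar v_\gamma$-vertices contribute at most $3(t+1)$, because replacing any cycle of $P_1$ that covers some $\bar v_\gamma$ with an external through that same $\bar v_\gamma$ is a neutral one-for-one exchange, and discarding a cycle of $P_1$ that does not cover a $\bar v_\gamma$ strictly decreases the contribution.

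Combining these, $|\mathcal{P}| \leq 3t^2 + 9m(t+1) + B$, where the bonus $B = \sum_s \max(0,e_s-1)$ aggregates gains over all selectors. Reaching $B=2t$ requires a configuration in which $t$ distinct selectors each use all three of $a_1,a_3,a_5$ externally; each such external at position $p$ of $\csel^{(\gamma,\delta)}$ consumes a free $\bar v_\gamma$ in some $\clit_{\lambda',j',p}$ with $\sigma_{j'}(\gamma)=\delta$. By the satisfied-literal bound, any external through a satisfied literal gadget is charged against its internal $P_1$ count and yields no net gain, so every bonus-producing external must be routed through an \emph{unsatisfied} literal gadget. A tracking argument over the three positions $p=1,2,3$ of a fixed selector $\csel^{(\gamma,\delta)}$, combined with the middle-vertex disjointness forcing the externals at each position to form a partial matching on $\{z^{(\cdot)}_{p,0}\}\cup\{z^{(\cdot)}_{p,1}\}$, shows that achieving $B=2t$ forces the existence of a common clause $j^\star$ for which all three literals are simultaneously unsatisfied by $\pi$. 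Since $\pi$ is satisfying, no such $j^\star$ exists, hence $B<2t$ and $|\mathcal{P}|<\ell$, so $S$ is a solution.

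The main obstacle will be ruling out \emph{mixed} configurations reaching $B=2t$ without concentrating the externals on a single unsatisfied clause: for a fixed $\gamma$ and $\delta$, there are up to $(t-1)!$ clauses $j'$ with $\sigma_{j'}(\gamma)=\delta$, so a priori the three externals at positions $p=1,2,3$ of the selector $\csel^{(\gamma,\delta)}$ might correspond to three different clauses, each contributing one unsatisfied literal at a different position. Verifying that the disjointness of middle vertices at each of the three positions, pooled across all $t$ selectors used for the bonus, actually forces a single common clause $j^\star$ will require a careful matching-style argument exploiting the fact that the $m=t!$ permutations $\sigma_1,\dots,\sigma_m$ are all distinct.
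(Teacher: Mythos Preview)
Your overall strategy coincides with the paper's: the same deletion set $S=\{d_i^{(\pi(x_i))}:i\in[n]\}$, the same three-way split into selector-internal, literal-internal, and ``middle'' (external) $C_4$'s, and the same chain of deductions showing that a packing of size $\ge\ell$ must have exactly $3t$ external $C_4$'s concentrated on exactly $t$ selectors with $e_s=3$ each, which then determine a permutation $\sigma_h$. Your bonus term $B=\sum_s\max(0,e_s-1)$ is just a clean repackaging of the paper's count; the paper phrases the same bound as ``at most $3t+3(t^2-q)+2q$'' for $q$ touched selectors.

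The step you single out as ``the main obstacle'' is precisely the point the paper treats by assertion. After forcing every literal gadget to contribute the full $3(t{+}1)$ internal cycles (so satisfied-literal gadgets are locked into $P_1$, which covers every $\bar v_\gamma$, and hence cannot host any external), the paper writes: ``the only possible situation is that each of the three literal gadgets corresponding to the $h$th clause $C_h$ are incident with $t$ many $4$-cycles of $\mathcal{P}$ in the middle part.'' No further justification is given. In particular, the paper does not discuss the mixed configurations you worry about, where the external at selector $\csel^{(\gamma,\sigma_h(\gamma))}$ and position $p$ routes through some unsatisfied $\clit_{\lambda',j',p}$ with $\sigma_{j'}(\gamma)=\sigma_h(\gamma)$ but $j'\ne h$. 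So the ``careful matching-style argument'' you anticipate needing is not supplied by the paper; it simply claims the conclusion. Your write-up is thus a faithful (indeed more explicit) reconstruction of the paper's argument, with the same step left unjustified.

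One small gap in your case analysis: not every $C_4$ touching $Z$ has the shape $a_\ast\text{--}z^{(\gamma)}_{p,0}\text{--}\bar v_\gamma\text{--}z^{(\sigma_{j'}(\gamma))}_{p,1}$. For $t\ge 3$ there are also $C_4$'s of the form $\bar v_\gamma\text{--}z^{(\gamma)}_{p,0}\text{--}\bar v'_\gamma\text{--}z^{(\delta)}_{p,1}$ between two literal gadgets whose permutations agree at $\gamma$ (i.e.\ $\sigma_{j'}(\gamma)=\sigma_{j''}(\gamma)=\delta$). These touch no selector, so including one consumes two middle vertices without increasing any $e_s$; hence they only decrease $B$ and do not threaten your upper bound, but the classification as stated is incomplete.
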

\begin{claimproof}
	Let $\pi$ be an assignment for the variables $x_1,\dots,x_n$
	such that $\phi$ is satisfied under $\pi$.
	Based on $\pi$,
	we define the solution for $I$ as
	    \[
	S \deff \{ d_{i}^{(\true)} \mid i \in\numb{n} \text{ and } \pi(x_i) = \true \}
	\cup \{ d_{i}^{(\false)} \mid i \in\numb{n} \text{ and } \pi(x_i) = \false \}
	.
	\]
    It remains to show that $S$ is indeed a solution for $I$, i.e., there is no
    packing of $\ell$ vertex-disjoint $4$-cycles in $G\setminus S$.

	For the sake of contradiction, assume that $\mathcal{P}$ is a maximum-size
    packing of $4$-cycles of size at least $\ell$.
	Let us analyze this packing $\mathcal{P}$.
	First we check the right side of the graph $G$, which is the set of all literal gadgets.
    Each of $\clit_{\lambda,j,p}$ can contribute at most $3(t+1)$ many $4$-cycles to
    $\mathcal{P}$ (if we ignore the $4$-cycles of $\mathcal{P}$ in the middle that are incident with vertices of the literal gadgets).
	There are $3m$ literal gadgets and thus the literal gadgets can contribute
    at most $3m(3t+3)$ many $4$-cycles in total.

    It remains to check the left part and the middle part of $G$.  Since there
    are $6t$ vertices in the middle, the middle part can contribute at most $3t$
    many $4$-cycles to $\mathcal{P}$.  We show that the middle part contributes
    exactly $3t$ many $4$-cycles to $\mathcal{P}$.  Otherwise, the size of
    $\mathcal{P}$ is strictly less than $3t+3(t^2-t)+2t+3m(3t+3)=\ell$, a
    contradiction to our assumption.

    Next, we argue that there are exactly $t$ selector gadgets which are
    incident with at least one $C_4$ of $\mathcal{P}$ contributed by the middle
    part.  Suppose that there are $q>t$ selector gadgets which are incident with
    at least one $C_4$ of $\mathcal{P}$ contributed by the middle part.  Then
    the size of $\mathcal{P}$ is at most $3t+3(t^2-q)+2q+3m(3t+3)<\ell$, a
    contradiction to our assumption.  Thus it holds that there are exactly $t$
    selector gadgets which are incident with at least one $C_4$ of $\mathcal{P}$
    contributed by the middle part.
    Since there are $6t$ vertices in the
    middle part,  it follows that there are exactly $t$ selector gadgets each of
    which is incident with three $4$-cycles of $\mathcal{P}$ contributed by the
    middle part.  Thus the left part can contribute at most $2t+(t^2-t)\cdot
    3=3t^2-t$ many $4$-cycles to $\mathcal{P}$.  By the construction, these $t$
    selector gadgets actually define a permutation of $t$ elements, say
    $\sigma_{h}$ ($h\in [m]$), according to the  vertices of $Z$ to which they
    are adjacent.

    Note that to pack $3t$ many $4$-cycles in the middle part and $3m(3t+3)$ many
    $4$-cycles on the right part (to make sure that $|\mathcal{P}|\geq \ell$),
    the only possible situation is that each of the three literal gadgets
    corresponding to the $h$th clause $C_h$ are incident with $t$ many $4$-cycles
    of $\mathcal{P}$ in the middle part.  Moreover, each of the three literal
    gadgets also contributes $3(t+1)$ many $4$-cycles to $\mathcal{P}$.  By the
    definition of the solution $S$, all three literals of $C_h$ are
    set to $\false$ by the assignment $\pi$.
    Thus, the clause $C_h$ is not
    satisfied by $\pi$, contradicting that $\pi$ is a satisfying assignment.

	As a result, there is no packing of $\ell$ vertex-disjoint $4$-cycles in $G\setminus S$ and $S$ is a solution to $I$.
\end{claimproof}

Now we prove the reverse direction.
\begin{claim}
	\label{clm:tw:lower:C4:complete}
	If $I$ has a solution, then $\phi$ is satisfiable.
\end{claim}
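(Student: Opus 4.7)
The plan is to mirror the correctness argument from \cref{clm:tw:lower:reduction:complete} for the triangle case, but with the packing accounting adapted to the $C_4$-gadgets defined above. Given a solution $S \subseteq V \setminus U$ of size at most $k = n$, the first step is to argue that $S$ is a \emph{good} solution, meaning that $|S \cap \{d_i^{(\true)}, d_i^{(\false)}\}| = 1$ for every $i \in [n]$. The proof of this step proceeds exactly as in the triangle case: the preprocessing of $\phi$ (which is the same as in \cref{sec:tw:lower:triangle}) introduced auxiliary clauses of the form $(x_i \lor \lneg x_i \lor x_{i+1})$, etc., and if some variable had $S(i) = 2$ (which by a counting argument forces some $S(b) = 0$ with $S(b+1) + S(b+2) < 4$), one could explicitly construct a $C_4$-packing of size at least $\ell$, contradicting the assumption on $S$. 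Once $S$ is good, define the assignment $\pi$ by $\pi(x_i) = \true$ iff $d_i^{(\true)} \in S$, and assume for contradiction that $\pi$ does not satisfy some clause $C_h = (\lambda_1 \lor \lambda_2 \lor \lambda_3)$.

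The main step is then to construct a $C_4$-packing of size exactly $\ell$ in $G \setminus S$, contradicting the assumption that $S$ is a solution. The construction uses the permutation $\sigma_h$ to pick out the ``active'' selector gadgets. Specifically, for each $\gamma \in [t]$ and $p \in \{1,2,3\}$, form a middle $C_4$ through the four vertices $a_{2p-1}$ of $\csel^{(\gamma, \sigma_h(\gamma))}$, $z^{(\gamma)}_{p,0}$, $\bar v_\gamma$ in $\clit_{\lambda_p, h, p}$, and $z^{(\sigma_h(\gamma))}_{p,1}$; this is a valid $C_4$ by the construction of the edges incident to $a_{2p-1}$ and $\bar v_\gamma$. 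A routine check (using that $\sigma_h$ is a permutation) shows that these $3t$ middle $C_4$'s are pairwise vertex-disjoint and together cover all $6t$ middle vertices. Inside each of the $t$ active selector gadgets, the vertices $a_1, a_3, a_5$ are used externally, so we pack only the two internal $4$-cycles $a_2 b_2 c_2 d_2$ and $a_4 b_4 c_4 d_4$; inside each of the remaining $t^2 - t$ inactive selector gadgets, we pack the three internal cycles with odd index. For each of the three literal gadgets $\clit_{\lambda_p, h, p}$ for the unsatisfied clause, we use the packing $P_0$ of size $3(t+1)$, which covers $v_1, \dots, v_{t+1}$ but none of the $\bar v_\gamma$'s; this is feasible because $\lambda_p$ is false under $\pi$, so the $d$-vertex identified with $v_{t+1}$ is not in $S$. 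For every other literal gadget, we use whichever of $P_0$ or $P_1$ keeps its $v_{t+1}$ (or $\bar v_{t+1}$) vertex not in $S$, which exists because $S$ is good.

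The expected obstacle is purely bookkeeping: one must verify that the resulting collection of $C_4$'s is vertex-disjoint across all gadget boundaries and that its total size is exactly $\ell$. Summing, selectors contribute $2t + 3(t^2 - t) = 3t^2 - t$ internal $C_4$'s; the middle contributes $3t$ cycles; and literal gadgets contribute $3m \cdot 3(t+1) = 9m(t+1)$ cycles. The total is
\[
  3t^2 - t + 3t + 9m(t+1) = 3t^2 + 2t + 9m(t+1) = \ell,
\]
contradicting the fact that $G \setminus S$ contains no $C_4$-packing of size $\ell$. Hence $\pi$ satisfies every clause of $\phi$, and $\phi$ (and therefore the original $\phi'$) is satisfiable.
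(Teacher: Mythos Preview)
Your proof plan is correct and follows essentially the same approach as the paper: show that $S$ is good via the counting-and-packing argument from the triangle case, define $\pi$ from $S$, and for an unsatisfied clause $C_h$ assemble the same contradicting $C_4$-packing (the $3t$ middle cycles through the selectors $\csel^{(\gamma,\sigma_h(\gamma))}$, two versus three internal cycles per selector, and the maximum packings $P_0$/$P_1$ inside the literal gadgets), arriving at the identical total $3t^2+2t+9m(t+1)=\ell$. Your write-up is in fact slightly more explicit than the paper's about which internal selector cycles are chosen and why the middle cycles are pairwise disjoint (via $\sigma_h$ being a permutation).
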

\begin{claimproof}
	Let $S \subseteq V\setminus U$ be a solution for $I$.
	By a very similar analysis to that in
    \cref{clm:tw:lower:reduction:complete}, we can show that the solution $S$ is
    \emph{good},
    that is, for no variable $x_i$,
    both vertices $d_i^{(\true)}$ and $d_i^{(\false)}$ are deleted
    (see~\cref{sec:tw:lower:triangle} for more details).
	We define the assignment $\pi$ for $\phi$ as follows.
	For all $i\in \numb{n}$,
	we set
	\[
	\pi(x_i) \deff
	\begin{cases}
		\true, & \text{if } d_{i}^{(\true)}\in S, \\
		\false, & \text{otherwise.}
	\end{cases}
	\]
	By the fact that the solution $S$ is good,
	this assignment is well-defined
	and each variable is assigned some boolean value.

    Now we argue that $\pi$ satisfies $\phi$.  For the sake of a contradiction,
    assume that $\pi$ does not satisfy $\phi$.
    Hence, there is at least one
    clause that is not satisfied by $\pi$ and let $j \in \numb{m}$ be the index
    of such an unsatisfied clause.
    Next we construct a packing $\mathcal{P}$ of at
    least $\ell$ many $4$-cycles in $G-S$.

	Let $C_{j}=(\lambda_1 \lor \lambda_2 \lor \lambda_3)$
    be this specific clause.
	Since $\pi(\lambda_p)=\false$ for each $p\in [3]$,
    vertex $v_{t+1}$ is not deleted in each of $\clit_{\lambda_p,j,p}$ by the construction.
	By \cref{clm:tw:lower:C4:gadget1},
	we can add $t+1$ many $4$-cycles (the packing $P_0$ covering $v_1,\ldots,v_{t+1}$) for the literal gadget $\clit_{\lambda_p,j,p}$ to $\mathcal{P}$ for each $p\in \numb{3}$.
	Consider a vertex $\bar v_{\gamma}$ (where $\gamma \in [t]$) of the literal
    gadget $\clit_{\lambda_p,j,p}$ (where $p\in \numb{3}$).
	Recall that $\bar v_{\gamma}$ is adjacent to $z^{(\gamma)}_{p,0}$ and  $z^{(\sigma_j(\gamma))}_{p,1}$.
	Let $a_{2p-1}$ be the vertex of $\csel^{(\gamma,\sigma_j(\gamma))}$ which is adjacent to both $z^{(\gamma)}_{p,0}$ and  $z^{(\sigma_j(\gamma))}_{p,1}$.
    Then, we get the $C_4$ that covers $\bar v_{\gamma}$, $a_{2p-1}$, $z^{(\gamma)}_{p,0}$ and $z^{(\sigma_j(\gamma))}_{p,1}$.
    Thus we get a packing $\mathcal{P}_{\text{mid}}$ of $3t$ many $4$-cycles of
    this form and by our construction these cycles are vertex-disjoint.
    We add all of the $4$-cycles in $\mathcal{P}_{\text{mid}}$ to $\mathcal{P}$.

	For any literal gadget $\clit_{\lambda,j',p}$ such that $j'\neq j$,
	by \cref{clm:tw:lower:C4:gadget1}, we can add $t+1$ many $4$-cycles to $\mathcal{P}$
	as only one vertex of $v_{t+1}$ and $\bar v_{t+1}$ is deleted in the gadget.
	For a selector gadget $\csel^{(i,j)}$, note that it is incident with either
    three or none of the $4$-cycles of $\mathcal{P}_{\text{mid}}$ by our construction.
    If it is incident with three $4$-cycles of $\mathcal{P}_{\text{mid}}$,
	then we add two $4$-cycles of $\csel^{(i,j)}$ which does not cover any vertex
    of $\mathcal{P}_{\text{mid}}$ to $\mathcal{P}$.
	Otherwise we add three $4$-cycles of $\csel^{(i,j)}$ to $\mathcal{P}$.

	We can verify that the $4$-cycles in $\mathcal{P}$ are pairwise vertex-disjoint.
	The size of $\mathcal{P}$ is $3m(3t+3)+3t+2t+3(t^2-t)=\ell$.
	This contradicts the fact that $S$ is a solution for $I$.
	Thus we can conclude that the constructed assignment $\pi$ satisfies $\phi$.
\end{claimproof}

It remains to prove that the reduction also satisfies the required properties about the size and the pathwidth of $G$.
\begin{claim}
	\label{clm:tw:C4:reduction:size}
    Graph $G$ has pathwidth at most $\Oh(t)$.
\end{claim}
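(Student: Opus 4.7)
\begin{claimproof}[Proof proposal]
The plan is to exhibit $Z$ as a small separator and show that $G\setminus Z$ has pathwidth $\Oh(1)$; the bound $\pw(G)\le\pw(G\setminus Z)+|Z|$ (obtained by adding $Z$ to every bag of a path decomposition of $G\setminus Z$) together with $|Z|=6t$ will then give the claim. First I would verify the separator property: by inspecting the construction, every edge of $G$ is either internal to a single selector gadget $\csel^{(i,j)}$, internal to a single literal gadget $\clit_{\lambda,j,p}$, joins a selector or literal gadget to $Z$, or is incident to one of the shared vertices $d_i^{(\true)},d_i^{(\false)}$ (which lie inside some literal gadget). Consequently, after removing $Z$ the graph decomposes into the disjoint union of (i) the $t^2$ selector gadgets $\csel^{(i,j)}$, and (ii) for each variable $x_i$ a single \emph{cluster} consisting of all literal gadgets $\clit_{\lambda,j,p}$ with $\lambda\in\{x_i,\lneg x_i\}$, glued together at the two common vertices $d_i^{(\true)},d_i^{(\false)}$ and at no other vertex.

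Each selector gadget is a constant-size concatenation of five $C_4$'s and so has constant pathwidth. For a single literal gadget $\clit_{t+1}$, I would observe that it is a cyclic chain of $6(t+1)$ copies of $C_4$ sharing consecutive vertices $b_i=d_{i-1}$. A direct path decomposition that sweeps around this cycle, keeping in every bag the two current boundary $b$-vertices together with the two private $a,c$-vertices of the $C_4$ being introduced, plus one fixed anchor vertex to close the cycle, gives a pathwidth of $\Oh(1)$ independent of $t$.

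Second, I would extend this to the clusters. A cluster for $x_i$ is obtained by gluing $|\CID_{x_i}|+|\CID_{\lneg x_i}|$ copies of $\clit_{t+1}$ at the two vertices $\{d_i^{(\true)},d_i^{(\false)}\}$. Concatenating the individual path decompositions while keeping both $d_i^{(\true)}$ and $d_i^{(\false)}$ in every bag yields a valid path decomposition of the cluster of width $\Oh(1)+2=\Oh(1)$. Since pathwidth of a disjoint union is the maximum pathwidth of the components, we conclude $\pw(G\setminus Z)=\Oh(1)$.

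Finally, adding all $6t$ vertices of $Z$ to every bag of a width-$\Oh(1)$ path decomposition of $G\setminus Z$ produces a valid path decomposition of $G$ of width $\Oh(1)+6t=\Oh(t)$. The only mildly delicate step is the constant-pathwidth decomposition of the cyclic $C_4$-chain $\clit_{t+1}$, but this is a routine fact about cycle-like graphs and the rest of the argument is purely structural.
\end{claimproof}
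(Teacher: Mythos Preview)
Your proposal is correct and in fact more careful than the paper's own argument. The paper simply asserts that after deleting $Z$ the graph breaks into $t^2+3m$ disjoint components, each of size $\Oh(t)$ (the selector gadgets being constant-size and each $\clit_{\lambda,j,p}$ having $\Oh(t)$ vertices), and concludes $\pw(G)=\Oh(t)$ from that. This glosses over exactly the point you handle explicitly: the literal gadgets are \emph{not} pairwise disjoint after removing $Z$, since all gadgets $\clit_{\lambda,j,p}$ with $\lambda\in\{x_i,\lneg x_i\}$ share the two vertices $d_i^{(\true)},d_i^{(\false)}$ and hence form a single cluster per variable.

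Your remedy---observing that each $\clit_{t+1}$ is a cyclic $C_4$-chain of constant pathwidth, and that concatenating the decompositions of the gadgets in a cluster while keeping the two anchor vertices $d_i^{(\true)},d_i^{(\false)}$ in every bag still gives constant width---is the right fix and yields the stronger intermediate statement $\pw(G\setminus Z)=\Oh(1)$. The paper's coarser size bound would, once the clustering is taken into account, no longer suffice on its own (a cluster can have $\Theta(mt)$ vertices), so your refinement is genuinely needed to make the argument watertight.
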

\begin{claimproof}
	Observe that if we delete $Z$,
	then the graph consists of $t^2 + 3m$ disjoint components.
	Moreover, each such component either corresponds
	to a gadget $\csel^{(i,j)}$, which has constant size,
	or to a gadget $\clit_{\lambda, j, p}$,
	which has size $\Oh(t) = \Oh(t)$.
	Thus, the pathwidth of the entire graph can be bounded by $\Oh(t)$.
\end{claimproof}

Recall that ${t!} = m$, which means that $m = 2^{\Oh(t \log{t})} = 2^{\Oh(\pw
\log{\pw})}$.
Observe that this concludes the proof of~\cref{thm:tw:lower:C4} because a $2^{2^{o(\pw
\log{\pw})}} \cdot |V(G)|^{\Oh(1)}$ algorithm for \UndelHitPack{C_4} would yield
a
$2^{o(m)}$ algorithm for 3-SAT with $m$ clauses.
\end{proof}

\subsection{Lower Bound for \CycleUndelHitPack}\label{sec:pw:lower:cycles}
In this section we prove \cref{thm:tw:lower:cycles}, which we restate for convenience.

\thmcyclepwLB*\label\thisthm

We note that this result does not follow from the statement of
\cref{thm:tw:lower:C4} although a $C_4$ is obviously a cycle.
The main reason is
that there might be a large cycle packing despite the fact that there is no
large $C_4$ packing (e.g., the graph itself is a long cycle).

However,
in the following we argue
that our reduction from \cref{thm:tw:lower:C4}
was already stated such that it works for \CycleUndelHitPack.
The crucial piece is the following lemma.

\begin{lemma}
  \label{lem:tw:lower:propertyCycles}
  Let $I=(G, U, k, \ell)$ be the \UndelHitPack{C_4} instance
  from \cref{thm:tw:lower:C4}.
  Let $S \subseteq V(G) \setminus U$ be a set of vertices
  and $\packs$ be an arbitrary cycle-packing for $G-S$
  containing at least one cycle that is not a $C_4$.
  Then, there is a $C_4$-packing $\packs'$ for $G-S$
  such that $\abs{\packs'} \ge \abs{\packs}$.
\end{lemma}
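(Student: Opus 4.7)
The plan is to prove the lemma by induction on the number of non-$C_4$ cycles in $\packs$, showing at each step that a non-$C_4$ cycle $C\in\packs$ can be replaced by one or more $C_4$'s without decreasing the total size and without touching the other cycles of $\packs$. Iterating this rewriting procedure then yields the desired $C_4$-packing $\packs'$ with $\abs{\packs'}\ge\abs{\packs}$.

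The first step is to analyze the cycle structure of $G$. Since $Z$ is an independent set and the only bridge between the selector side and the literal side, every cycle $C$ in $G$ either lies entirely inside a single gadget or visits some number $k\ge 2$ of $z$-vertices and, between consecutive $z$'s (in the cyclic order along $C$), traces a path inside a single gadget between two of its portal vertices ($a_1,a_3,a_5$ for a selector, or $\bar v_\gamma$ for a literal). A selector gadget $\csel$ is a chain of five $C_4$'s glued at single vertices, so it contains no cycle of length $\ge 5$; thus a non-$C_4$ cycle contained in a single gadget can only live in a literal gadget $\clit_r$, which is a cyclic necklace of $6r$ $C_4$'s glued at single $b$-vertices. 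Its only non-$C_4$ cycles are the length-$12r$ ``full necklace'' cycles that traverse every $b_i$ once and pick exactly one of $\{a_i,c_i\}$ as intermediate between $b_i$ and $b_{i+1}$. Such a cycle can be rewritten as $3r$ internal $C_4$'s obtained by alternately selecting the small $C_4$'s $b_{2i-1}a_{2i-1}b_{2i}c_{2i-1}$ for $i\in[3r]$; these $C_4$'s cover all $b$'s and both intermediates of every second segment. Since the intermediate $a$- and $c$-vertices have no neighbors outside the gadget, they are not used by any cycle in $\packs\setminus\{C\}$, so this replacement is valid and strictly increases the count.

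The remaining work is the case where $C$ crosses $Z$, i.e., $k\ge 2$. The plan is to do a case analysis on $k$ and on the lengths $\ell_i\ge 1$ of the gadget-internal paths between consecutive $z_i,z_{i+1}$. When $k=2$ and $|C|\ge 5$, one of the two internal paths has $\ell_i\ge 2$ and therefore enters the interior of a gadget; using the fact that every such path from one portal to another in a selector or literal gadget passes through one of the internal $C_4$'s, we can shorten the excursion to a length-one detour through a shared portal and extract a $C_4$ using vertices of $C$ together with a free interior vertex of the gadget. When $k\ge 3$, I look at the cyclic sequence of portal pairs in $Z$: by the construction of $G$, two of the $z$-vertices in $C$ share a common portal $p$ (either an $a$-vertex of a selector or a $\bar v_\gamma$ of a literal), because the neighborhood of each $z$-vertex consists of a large family of such portals and the parity/position constraints are easily met on a long $Z$-crossing cycle. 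This common portal produces a $C_4$ on four vertices of $V(C)$, allowing the cycle $C$ to be spliced into a shorter cycle plus a standalone $C_4$; iterating eventually decomposes $C$ entirely into $C_4$'s.

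The hard part will be the $Z$-crossing case, in particular verifying that one can always locate a shortcut $C_4$ or extract a replacement $C_4$ using vertices that are either in $V(C)$ or demonstrably unused by $\packs\setminus\{C\}$ (because their only neighbors lie in $V(C)$). The main delicate point is to make this argument uniform: I intend to exploit two facts about $G$ that were already used in the correctness proof, namely that each $z$-vertex has large portal neighborhood (so shortcut $C_4$'s through $Z$ always exist near $C$) and that interior vertices of a gadget are only reachable through that gadget's portals (so they are free once we commit to replace $C$). Together with the literal-gadget case handled in the first step, this gives a single-cycle replacement that never decreases $\abs{\packs}$, which is exactly what the inductive step requires.
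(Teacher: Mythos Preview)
Your dichotomy in the second paragraph is incomplete. You assert that every cycle of $G$ either lies inside a single gadget or meets $Z$ in at least two vertices, with the portal vertices of a literal gadget being the $\bar v_\gamma$'s. This ignores the identification of $v_{t+1}$ and $\bar v_{t+1}$ in each literal gadget with the shared vertices $d_i^{(\true)},d_i^{(\false)}$. Distinct literal gadgets for the same variable $x_i$ are glued at one of these $d$-vertices, so a long cycle can pass from one literal gadget into another through such a shared vertex without ever touching $Z$. Your case analysis never treats this, and your $Z$-crossing argument does not apply to it.

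Separately, the $k\ge 3$ branch is not justified: you claim that two of the $z$-vertices on $C$ must be adjacent to a common portal, but the appeal to ``parity/position constraints'' does not establish this, and it is not obviously true for an arbitrary long $Z$-crossing cycle.

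Both gaps disappear if you take the simpler line the paper uses. Do not track how many times $C$ meets $Z$. Instead, observe that if $C$ is not contained in a single gadget then it enters some gadget $F$ at one distinguished vertex and leaves at another, where ``distinguished'' means any vertex of $F$ that has a neighbor outside $F$ (i.e., $a_1,a_3,a_5$ in a selector, or one of $\bar v_\gamma,\,v_{t+1},\,\bar v_{t+1}$ in a literal gadget). By the spacing of the distinguished vertices in both gadget types, any such path in $F$ traverses the $b$-vertices of at least one of the small $C_4$'s whose $a$- and $c$-vertices are \emph{non}-distinguished; those two degree-$2$ vertices then have all their neighbors on $C$, hence are unused by $\packs\setminus\{C\}$. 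Replacing $C$ by that single $C_4$ keeps the packing size and strictly decreases the number of non-$C_4$ cycles. This handles the gadget-crossing case uniformly, including the $d_i$-crossings you missed, and your treatment of the ``full necklace'' case inside a literal gadget already matches the paper's.
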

\begin{proof}

Recall that the constructed graph is based on two different types of gadgets.
The gadgets of the form $\csel$ were used to select some clause number and the
gadgets of the form $\clit_{t+1}$ encoded the literals appearing in the formula.

Consider some cycle $C$ of length at least $5$ in the packing $\packs$.
This cycle $C$ cannot be contained in the gadget $\csel$
since such a gadget does not contain any other cycles
besides the five $4$-cycles which are arranged as a path by construction.
Suppose $C$
is entirely contained in some gadget $\clit_{t+1}$.  Let us denote
this gadget by $F$ for ease of notation.
Then, by the structure of these gadgets, $C$ intersects all
$4$-cycles in $F$.
In particular, there is no other cycle in $\packs$ that is
entirely contained in $F$.  Also, as a \emph{non}-distinguished vertex of $F$ is
not part of any edge outside of $F$, such a vertex cannot be contained in any
cycle of $\packs$, other than possibly $C$.
Hence, removing the cycle $C$ and then
adding some maximal packing of $C_4$ in $F$ that does \emph{not} contain any
distinguished vertex gives another feasible packing of size at least that of
$\packs$.

Now consider some cycle $C$ of length at least $5$ in the packing $\packs$ that
is \emph{not} entirely contained in some gadget.  In this case we know that
$C$ enters some gadget $F$ at some distinguished vertex and leaves $F$ at
some other distinguished vertex. However, from the construction of the
gadgets it is straightforward to see that this path that is the intersection
of $C$ with the gadget $F$ goes through at least two $C_4$'s in $F$ all of
whose vertices are non-distinguished vertices.
Therefore, removing $C$ from the packing $\packs$ and adding one of these
$C_4$ instead gives another feasible packing of size at least that of
$\packs$ (and fewer cycles of length at least $5$).
\end{proof}

Now we can directly prove \cref{thm:tw:lower:cycles}.
\thmcyclepwLB*
\begin{proof}[Proof of \Cref{thm:tw:lower:cycles}]
  It clearly suffices to prove that the construction
  gives a reduction from 3-SAT to \CycleUndelHitPack
  with the properties from \cref{lem:tw:lower:reduction}.
  Then, the result follows by the same arguments as \cref{thm:tw:lower}.

  Since we did not modify our construction,
  it suffices to prove the first property,
  that is, the 3-SAT formula $\phi$ and the \CycleUndelHitPack instance $I$
  are equivalent.
  If $\phi$ is satisfiable,
  we know that $I$ has a solution for \TriUndelHitPack,
  that is, after removing at most $k$ vertices
  there is no large packing of $C_4$.
  \Cref{lem:tw:lower:propertyCycles} implies
  that there is also no large packing of cycles
  as otherwise there would also be a large packing of $C_4$.

  For the converse direction assume that at most $k$ vertices of $I$
  can be deleted such that there is no large cycle-packing.
  Since every $C_4$-packing is also a cycle-packing,
  this implies that there is no large $C_4$ packing.
  Hence, formula $\phi$ must be satisfiable.
\end{proof}

\clearpage
\phantomsection
\addcontentsline{toc}{section}{References}
\bibliographystyle{plainurl}
\bibliography{bib}

\begin{thebibliography}{10}

\bibitem{DBLP:conf/soda/AdlerGK08}
Isolde Adler, Martin Grohe, and Stephan Kreutzer.
\newblock Computing excluded minors.
\newblock In {\em Proceedings of the Nineteenth Annual {ACM-SIAM} Symposium on
  Discrete Algorithms, {SODA} 2008}, pages 641--650. {SIAM}, 2008.
\newblock URL: \url{http://dl.acm.org/citation.cfm?id=1347082.1347153}.

\bibitem{DBLP:journals/talg/AgrawalLMSZ19}
Akanksha Agrawal, Daniel Lokshtanov, Pranabendu Misra, Saket Saurabh, and
  Meirav Zehavi.
\newblock {Feedback Vertex Set Inspired Kernel for Chordal Vertex Deletion}.
\newblock {\em {ACM} Trans. Algorithms}, 15(1):11:1--11:28, 2019.
\newblock \href {https://doi.org/10.1145/3284356} {\path{doi:10.1145/3284356}}.

\bibitem{DBLP:journals/jacm/AlonYZ95}
Noga Alon, Raphael Yuster, and Uri Zwick.
\newblock {Color-Coding}.
\newblock {\em J. {ACM}}, 42(4):844--856, 1995.
\newblock \href {https://doi.org/10.1145/210332.210337}
  {\path{doi:10.1145/210332.210337}}.

\bibitem{DBLP:journals/algorithmica/Bang-JensenMS16}
J{\o}rgen Bang{-}Jensen, Alessandro Maddaloni, and Saket Saurabh.
\newblock {Algorithms and Kernels for Feedback Set Problems in Generalizations
  of Tournaments}.
\newblock {\em Algorithmica}, 76(2):320--343, 2016.
\newblock \href {https://doi.org/10.1007/s00453-015-0038-2}
  {\path{doi:10.1007/s00453-015-0038-2}}.

\bibitem{berge1957two}
Claude Berge.
\newblock Two theorems in graph theory.
\newblock {\em Proceedings of the National Academy of Sciences},
  43(9):842--844, 1957.

\bibitem{DBLP:journals/algorithmica/BessyBKSSTZ21}
St{\'{e}}phane Bessy, Marin Bougeret, R.~Krithika, Abhishek Sahu, Saket
  Saurabh, Jocelyn Thiebaut, and Meirav Zehavi.
\newblock {Packing Arc-Disjoint Cycles in Tournaments}.
\newblock {\em Algorithmica}, 83(5):1393--1420, 2021.
\newblock \href {https://doi.org/10.1007/s00453-020-00788-2}
  {\path{doi:10.1007/s00453-020-00788-2}}.

\bibitem{DBLP:journals/jcss/BessyFGPPST11}
St{\'{e}}phane Bessy, Fedor~V. Fomin, Serge Gaspers, Christophe Paul, Anthony
  Perez, Saket Saurabh, and St{\'{e}}phan Thomass{\'{e}}.
\newblock {Kernels for feedback arc set in tournaments}.
\newblock {\em J. Comput. Syst. Sci.}, 77(6):1071--1078, 2011.
\newblock \href {https://doi.org/10.1016/j.jcss.2010.10.001}
  {\path{doi:10.1016/j.jcss.2010.10.001}}.

\bibitem{Bodlaender96}
Hans~L. Bodlaender.
\newblock {A Linear-Time Algorithm for Finding Tree-Decompositions of Small
  Treewidth}.
\newblock {\em {SIAM} J. Comput.}, 25(6):1305--1317, 1996.
\newblock \href {https://doi.org/10.1137/S0097539793251219}
  {\path{doi:10.1137/S0097539793251219}}.

\bibitem{DBLP:journals/iandc/BodlaenderCKN15}
Hans~L. Bodlaender, Marek Cygan, Stefan Kratsch, and Jesper Nederlof.
\newblock {Deterministic single exponential time algorithms for connectivity
  problems parameterized by treewidth}.
\newblock {\em Inf. Comput.}, 243:86--111, 2015.
\newblock \href {https://doi.org/10.1016/j.ic.2014.12.008}
  {\path{doi:10.1016/j.ic.2014.12.008}}.

\bibitem{DBLP:conf/iwpec/BorradaileL16}
Glencora Borradaile and Hung Le.
\newblock {Optimal Dynamic Program for $r$-Domination Problems over Tree
  Decompositions}.
\newblock In {\em 11th International Symposium on Parameterized and Exact
  Computation, {IPEC} 2016}, volume~63 of {\em LIPIcs}, pages 8:1--8:23.
  Schloss Dagstuhl - Leibniz-Zentrum f{\"{u}}r Informatik, 2016.
\newblock \href {https://doi.org/10.4230/LIPIcs.IPEC.2016.8}
  {\path{doi:10.4230/LIPIcs.IPEC.2016.8}}.

\bibitem{DBLP:conf/soda/Cao18}
Yixin Cao.
\newblock {A Naive Algorithm for Feedback Vertex Set}.
\newblock In {\em 1st Symposium on Simplicity in Algorithms, {SOSA} 2018,
  January 7-10, 2018}, volume~61 of {\em OASIcs}, pages 1:1--1:9. Schloss
  Dagstuhl - Leibniz-Zentrum f{\"{u}}r Informatik, 2018.
\newblock \href {https://doi.org/10.4230/OASIcs.SOSA.2018.1}
  {\path{doi:10.4230/OASIcs.SOSA.2018.1}}.

\bibitem{DBLP:journals/algorithmica/CaoM16}
Yixin Cao and D{\'{a}}niel Marx.
\newblock {Chordal Editing is Fixed-Parameter Tractable}.
\newblock {\em Algorithmica}, 75(1):118--137, 2016.
\newblock \href {https://doi.org/10.1007/s00453-015-0014-x}
  {\path{doi:10.1007/s00453-015-0014-x}}.

\bibitem{DBLP:journals/jct/ChudnovskyFS12}
Maria Chudnovsky, Alexandra~Ovetsky Fradkin, and Paul~D. Seymour.
\newblock {Tournament immersion and cutwidth}.
\newblock {\em J. Comb. Theory, Ser. {B}}, 102(1):93--101, 2012.
\newblock \href {https://doi.org/10.1016/j.jctb.2011.05.001}
  {\path{doi:10.1016/j.jctb.2011.05.001}}.

\bibitem{DBLP:journals/jct/ChudnovskySS19}
Maria Chudnovsky, Alex Scott, and Paul~D. Seymour.
\newblock {Disjoint paths in unions of tournaments}.
\newblock {\em J. Comb. Theory, Ser. {B}}, 135:238--255, 2019.
\newblock \href {https://doi.org/10.1016/j.jctb.2018.08.007}
  {\path{doi:10.1016/j.jctb.2018.08.007}}.

\bibitem{DBLP:conf/soda/CurticapeanLN18}
Radu Curticapean, Nathan Lindzey, and Jesper Nederlof.
\newblock {A Tight Lower Bound for Counting {Hamiltonian} Cycles via Matrix
  Rank}.
\newblock In {\em Proceedings of the Twenty-Ninth Annual {ACM-SIAM} Symposium
  on Discrete Algorithms, {SODA} 2018}, pages 1080--1099. {SIAM}, 2018.
\newblock \href {https://doi.org/10.1137/1.9781611975031.70}
  {\path{doi:10.1137/1.9781611975031.70}}.

\bibitem{DBLP:conf/soda/CurticapeanM16}
Radu Curticapean and D{\'{a}}niel Marx.
\newblock {Tight conditional lower bounds for counting perfect matchings on
  graphs of bounded treewidth, cliquewidth, and genus}.
\newblock In {\em Proceedings of the Twenty-Seventh Annual {ACM-SIAM} Symposium
  on Discrete Algorithms, {SODA} 2016}, pages 1650--1669. {SIAM}, 2016.
\newblock \href {https://doi.org/10.1137/1.9781611974331.ch113}
  {\path{doi:10.1137/1.9781611974331.ch113}}.

\bibitem{book1}
Marek Cygan, Fedor~V. Fomin, \L{}ukasz Kowalik, Daniel Lokshtanov, D{\'{a}}niel
  Marx, Marcin Pilipczuk, Micha\l{} Pilipczuk, and Saket Saurabh.
\newblock {\em {Parameterized Algorithms}}.
\newblock Springer, 2015.
\newblock \href {https://doi.org/10.1007/978-3-319-21275-3}
  {\path{doi:10.1007/978-3-319-21275-3}}.

\bibitem{cut-and-count}
Marek Cygan, Jesper Nederlof, Marcin Pilipczuk, Micha\l{} Pilipczuk, Johan
  M.~M. van Rooij, and Jakub~Onufry Wojtaszczyk.
\newblock {S}olving {C}onnectivity {P}roblems {P}arameterized by {T}reewidth in
  {S}ingle {E}xponential {T}ime.
\newblock {\em {ACM} Trans. Algorithms}, 18(2):17:1--17:31, 2022.
\newblock \href {https://doi.org/10.1145/3506707} {\path{doi:10.1145/3506707}}.

\bibitem{DBLP:journals/mst/DehneFLRS07}
Frank K. H.~A. Dehne, Michael~R. Fellows, Michael~A. Langston, Frances~A.
  Rosamond, and Kim Stevens.
\newblock {An {$2^{O(k)}n^3$} {FPT} Algorithm for the Undirected Feedback
  Vertex Set Problem}.
\newblock {\em Theory Comput. Syst.}, 41(3):479--492, 2007.
\newblock \href {https://doi.org/10.1007/s00224-007-1345-z}
  {\path{doi:10.1007/s00224-007-1345-z}}.

\bibitem{diestel2005graph}
Reinhard Diestel.
\newblock {Graph theory 3rd ed}.
\newblock {\em Graduate texts in mathematics}, 173(33):12, 2005.

\bibitem{DBLP:journals/jda/DomGHNT10}
Michael Dom, Jiong Guo, Falk H{\"{u}}ffner, Rolf Niedermeier, and Anke
  Tru{\ss}.
\newblock {Fixed-parameter tractability results for feedback set problems in
  tournaments}.
\newblock {\em J. Discrete Algorithms}, 8(1):76--86, 2010.
\newblock \href {https://doi.org/10.1016/j.jda.2009.08.001}
  {\path{doi:10.1016/j.jda.2009.08.001}}.

\bibitem{edmonds1965paths}
Jack Edmonds.
\newblock Paths, trees, and flowers.
\newblock {\em Canadian Journal of mathematics}, 17:449--467, 1965.

\bibitem{DBLP:conf/stacs/EgriMR18}
L{\'{a}}szl{\'{o}} Egri, D{\'{a}}niel Marx, and Pawe\l{} Rz\k{a}\.zewski.
\newblock {Finding List Homomorphisms from Bounded-treewidth Graphs to
  Reflexive Graphs: a Complete Complexity Characterization}.
\newblock In {\em 35th Symposium on Theoretical Aspects of Computer Science,
  {STACS} 2018}, volume~96 of {\em LIPIcs}, pages 27:1--27:15. Schloss Dagstuhl
  - Leibniz-Zentrum f{\"{u}}r Informatik, 2018.
\newblock \href {https://doi.org/10.4230/LIPIcs.STACS.2018.27}
  {\path{doi:10.4230/LIPIcs.STACS.2018.27}}.

\bibitem{MR0175810}
P.~Erd{\H{o}}s and L.~P{\'o}sa.
\newblock {On independent circuits contained in a graph}.
\newblock {\em Canad. J. Math.}, 17:347--352, 1965.

\bibitem{DBLP:conf/stoc/FellowsL89}
Michael~R. Fellows and Michael~A. Langston.
\newblock On search, decision and the efficiency of polynomial-time algorithms
  (extended abstract).
\newblock In {\em Proceedings of the 21st Annual {ACM} Symposium on Theory of
  Computing, May 14-17, 1989}, pages 501--512. {ACM}, 1989.
\newblock \href {https://doi.org/10.1145/73007.73055}
  {\path{doi:10.1145/73007.73055}}.

\bibitem{DBLP:conf/soda/FockeMINSSW23}
Jacob Focke, D{\'{a}}niel Marx, Fionn~Mc Inerney, Daniel Neuen, Govind~S.
  Sankar, Philipp Schepper, and Philip Wellnitz.
\newblock {Tight Complexity Bounds for Counting Generalized Dominating Sets in
  Bounded-Treewidth Graphs}.
\newblock In {\em Proceedings of the 2023 {ACM-SIAM} Symposium on Discrete
  Algorithms, {SODA} 2023}, pages 3664--3683. {SIAM}, 2023.
\newblock \href {https://doi.org/10.1137/1.9781611977554.ch140}
  {\path{doi:10.1137/1.9781611977554.ch140}}.

\bibitem{DBLP:conf/soda/FockeMR22}
Jacob Focke, D{\'{a}}niel Marx, and Pawe\l{} Rz\k{a}\.zewski.
\newblock {Counting list homomorphisms from graphs of bounded treewidth: tight
  complexity bounds}.
\newblock In {\em Proceedings of the 2022 {ACM-SIAM} Symposium on Discrete
  Algorithms, {SODA} 2022}, pages 431--458. {SIAM}, 2022.
\newblock \href {https://doi.org/10.1137/1.9781611977073.22}
  {\path{doi:10.1137/1.9781611977073.22}}.

\bibitem{DBLP:journals/jacm/FominLPS16}
Fedor~V. Fomin, Daniel Lokshtanov, Fahad Panolan, and Saket Saurabh.
\newblock {Efficient Computation of Representative Families with Applications
  in Parameterized and Exact Algorithms}.
\newblock {\em J. {ACM}}, 63(4):29:1--29:60, 2016.
\newblock \href {https://doi.org/10.1145/2886094} {\path{doi:10.1145/2886094}}.

\bibitem{DBLP:conf/stoc/FominLP0Z20}
Fedor~V. Fomin, Daniel Lokshtanov, Fahad Panolan, Saket Saurabh, and Meirav
  Zehavi.
\newblock Hitting topological minors is {FPT}.
\newblock In {\em Proceedings of the 52nd Annual {ACM} {SIGACT} Symposium on
  Theory of Computing, {STOC} 2020}, pages 1317--1326. {ACM}, 2020.
\newblock \href {https://doi.org/10.1145/3357713.3384318}
  {\path{doi:10.1145/3357713.3384318}}.

\bibitem{DBLP:conf/aaai/FominLRS10}
Fedor~V. Fomin, Daniel Lokshtanov, Venkatesh Raman, and Saket Saurabh.
\newblock {Fast Local Search Algorithm for Weighted Feedback Arc Set in
  Tournaments}.
\newblock In {\em Proceedings of the Twenty-Fourth {AAAI} Conference on
  Artificial Intelligence, {AAAI} 2010}. {AAAI} Press, 2010.
\newblock URL:
  \url{http://www.aaai.org/ocs/index.php/AAAI/AAAI10/paper/view/1829}.

\bibitem{FHR22}
Fabian Frei, Edith Hemaspaandra, and J{\"{o}}rg Rothe.
\newblock Complexity of stability.
\newblock {\em J. Comput. Syst. Sci.}, 123:103--121, 2022.
\newblock URL: \url{https://doi.org/10.1016/j.jcss.2021.07.001}, \href
  {https://doi.org/10.1016/J.JCSS.2021.07.001}
  {\path{doi:10.1016/J.JCSS.2021.07.001}}.

\bibitem{galby-talk}
Esther Galby.
\newblock Double-exponential lower bounds for parameterisations by treewidth.
\newblock Invited talk at Graph Width Parameters: from Structure to Algorithms
  (GWP 2023), Satellite Workshop of ICALP 2023, 2023.

\bibitem{garey-johnson-guide}
M.~Garey and D.~Johnson.
\newblock {\em Computers and Intractability: A Guide to the Theory of
  NP-Completeness}.
\newblock {W. H. Freeman and Company}, 1979.

\bibitem{DBLP:journals/jcss/GuoGHNW06}
Jiong Guo, Jens Gramm, Falk H{\"{u}}ffner, Rolf Niedermeier, and Sebastian
  Wernicke.
\newblock {Compression-based fixed-parameter algorithms for feedback vertex set
  and edge bipartization}.
\newblock {\em J. Comput. Syst. Sci.}, 72(8):1386--1396, 2006.
\newblock \href {https://doi.org/10.1016/j.jcss.2006.02.001}
  {\path{doi:10.1016/j.jcss.2006.02.001}}.

\bibitem{DBLP:journals/networks/GuoN05}
Jiong Guo and Rolf Niedermeier.
\newblock {Fixed-parameter tractability and data reduction for multicut in
  trees}.
\newblock {\em Networks}, 46(3):124--135, 2005.
\newblock \href {https://doi.org/10.1002/net.20081}
  {\path{doi:10.1002/net.20081}}.

\bibitem{eth}
Russell Impagliazzo and Ramamohan Paturi.
\newblock {On the Complexity of k-SAT}.
\newblock {\em J. Comput. Syst. Sci.}, 62(2):367--375, 2001.
\newblock \href {https://doi.org/10.1006/jcss.2000.1727}
  {\path{doi:10.1006/jcss.2000.1727}}.

\bibitem{ImpagliazzoPZ01}
Russell Impagliazzo, Ramamohan Paturi, and Francis Zane.
\newblock Which problems have strongly exponential complexity?
\newblock {\em J. Comput. Syst. Sci.}, 63(4):512--530, 2001.
\newblock URL: \url{https://doi.org/10.1006/jcss.2001.1774}, \href
  {https://doi.org/10.1006/JCSS.2001.1774} {\path{doi:10.1006/JCSS.2001.1774}}.

\bibitem{DBLP:journals/siamdm/JansenP18}
Bart M.~P. Jansen and Marcin Pilipczuk.
\newblock {Approximation and Kernelization for Chordal Vertex Deletion}.
\newblock {\em {SIAM} J. Discret. Math.}, 32(3):2258--2301, 2018.
\newblock \href {https://doi.org/10.1137/17M112035X}
  {\path{doi:10.1137/17M112035X}}.

\bibitem{DBLP:journals/jct/KakimuraKM11}
Naonori Kakimura, Ken{-}ichi Kawarabayashi, and D{\'{a}}niel Marx.
\newblock {Packing cycles through prescribed vertices}.
\newblock {\em J. Comb. Theory, Ser. {B}}, 101(5):378--381, 2011.
\newblock \href {https://doi.org/10.1016/j.jctb.2011.03.004}
  {\path{doi:10.1016/j.jctb.2011.03.004}}.

\bibitem{DBLP:journals/dam/KatsikarelisLP19}
Ioannis Katsikarelis, Michael Lampis, and Vangelis~Th. Paschos.
\newblock {Structural parameters, tight bounds, and approximation for $(k,
  r)$-center}.
\newblock {\em Discret. Appl. Math.}, 264:90--117, 2019.
\newblock \href {https://doi.org/10.1016/j.dam.2018.11.002}
  {\path{doi:10.1016/j.dam.2018.11.002}}.

\bibitem{DBLP:conf/stoc/KawarabayashiR10}
Ken{-}ichi Kawarabayashi and Bruce~A. Reed.
\newblock {Odd cycle packing}.
\newblock In Leonard~J. Schulman, editor, {\em Proceedings of the 42nd {ACM}
  Symposium on Theory of Computing, {STOC} 2010, Cambridge, Massachusetts, USA,
  5-8 June 2010}, pages 695--704. {ACM}, 2010.
\newblock \href {https://doi.org/10.1145/1806689.1806785}
  {\path{doi:10.1145/1806689.1806785}}.

\bibitem{DBLP:conf/focs/KawarabayashiRW11}
Ken{-}ichi Kawarabayashi, Bruce~A. Reed, and Paul Wollan.
\newblock {The Graph Minor Algorithm with Parity Conditions}.
\newblock In Rafail Ostrovsky, editor, {\em {IEEE} 52nd Annual Symposium on
  Foundations of Computer Science, {FOCS} 2011, Palm Springs, CA, USA, October
  22-25, 2011}, pages 27--36. {IEEE} Computer Society, 2011.
\newblock \href {https://doi.org/10.1109/FOCS.2011.52}
  {\path{doi:10.1109/FOCS.2011.52}}.

\bibitem{DBLP:journals/siamcomp/KirkpatrickH83}
David~G. Kirkpatrick and Pavol Hell.
\newblock {On the Complexity of General Graph Factor Problems}.
\newblock {\em {SIAM} J. Comput.}, 12(3):601--609, 1983.
\newblock \href {https://doi.org/10.1137/0212040} {\path{doi:10.1137/0212040}}.

\bibitem{DBLP:journals/ipl/KociumakaP14}
Tomasz Kociumaka and Marcin Pilipczuk.
\newblock {Faster deterministic Feedback Vertex Set}.
\newblock {\em Inf. Process. Lett.}, 114(10):556--560, 2014.
\newblock \href {https://doi.org/10.1016/j.ipl.2014.05.001}
  {\path{doi:10.1016/j.ipl.2014.05.001}}.

\bibitem{KorhonenL23}
Tuukka Korhonen and Daniel Lokshtanov.
\newblock {An Improved Parameterized Algorithm for Treewidth}.
\newblock In Barna Saha and Rocco~A. Servedio, editors, {\em Proceedings of the
  55th Annual {ACM} Symposium on Theory of Computing, {STOC} 2023}, pages
  528--541. {ACM}, 2023.
\newblock \href {https://doi.org/10.1145/3564246.3585245}
  {\path{doi:10.1145/3564246.3585245}}.

\bibitem{DBLP:conf/stacs/KumarL16}
Mithilesh Kumar and Daniel Lokshtanov.
\newblock {Faster Exact and Parameterized Algorithm for Feedback Vertex Set in
  Tournaments}.
\newblock In {\em 33rd Symposium on Theoretical Aspects of Computer Science,
  {STACS} 2016}, volume~47 of {\em LIPIcs}, pages 49:1--49:13. Schloss Dagstuhl
  - Leibniz-Zentrum f{\"{u}}r Informatik, 2016.
\newblock \href {https://doi.org/10.4230/LIPIcs.STACS.2016.49}
  {\path{doi:10.4230/LIPIcs.STACS.2016.49}}.

\bibitem{DBLP:journals/corr/abs-1302-4266}
Michael Lampis.
\newblock {Model Checking Lower Bounds for Simple Graphs}.
\newblock {\em Log. Methods Comput. Sci.}, 10(1), 2014.
\newblock \href {https://doi.org/10.2168/LMCS-10(1:18)2014}
  {\path{doi:10.2168/LMCS-10(1:18)2014}}.

\bibitem{DBLP:conf/sat/LampisMM18}
Michael Lampis, Stefan Mengel, and Valia Mitsou.
\newblock {QBF} as an alternative to courcelle's theorem.
\newblock In {\em Theory and Applications of Satisfiability Testing - {SAT}
  2018 - 21st International Conference, {SAT} 2018, Held as Part of the
  Federated Logic Conference, FloC 2018}, volume 10929 of {\em Lecture Notes in
  Computer Science}, pages 235--252. Springer, 2018.
\newblock \href {https://doi.org/10.1007/978-3-319-94144-8\_15}
  {\path{doi:10.1007/978-3-319-94144-8\_15}}.

\bibitem{DBLP:conf/iwpec/LampisM17}
Michael Lampis and Valia Mitsou.
\newblock {Treewidth with a Quantifier Alternation Revisited}.
\newblock In {\em 12th International Symposium on Parameterized and Exact
  Computation, {IPEC} 2017}, volume~89 of {\em LIPIcs}, pages 26:1--26:12.
  Schloss Dagstuhl - Leibniz-Zentrum f{\"{u}}r Informatik, 2017.
\newblock \href {https://doi.org/10.4230/LIPIcs.IPEC.2017.26}
  {\path{doi:10.4230/LIPIcs.IPEC.2017.26}}.

\bibitem{DBLP:journals/jcss/LewisY80}
John~M. Lewis and Mihalis Yannakakis.
\newblock {The Node-Deletion Problem for Hereditary Properties is NP-Complete}.
\newblock {\em J. Comput. Syst. Sci.}, 20(2):219--230, 1980.
\newblock \href {https://doi.org/10.1016/0022-0000(80)90060-4}
  {\path{doi:10.1016/0022-0000(80)90060-4}}.

\bibitem{Liberatore05}
Paolo Liberatore.
\newblock Redundancy in logic {I:} {CNF} propositional formulae.
\newblock {\em Artif. Intell.}, 163(2):203--232, 2005.
\newblock URL: \url{https://doi.org/10.1016/j.artint.2004.11.002}, \href
  {https://doi.org/10.1016/J.ARTINT.2004.11.002}
  {\path{doi:10.1016/J.ARTINT.2004.11.002}}.

\bibitem{LokshtanovMS18}
Daniel Lokshtanov, D{\'{a}}niel Marx, and Saket Saurabh.
\newblock Known algorithms on graphs of bounded treewidth are probably optimal.
\newblock {\em {ACM} Trans. Algorithms}, 14(2):13:1--13:30, 2018.
\newblock \href {https://doi.org/10.1145/3170442} {\path{doi:10.1145/3170442}}.

\bibitem{DBLP:journals/siamcomp/LokshtanovMS18}
Daniel Lokshtanov, D{\'{a}}niel Marx, and Saket Saurabh.
\newblock {Slightly Superexponential Parameterized Problems}.
\newblock {\em {SIAM} J. Comput.}, 47(3):675--702, 2018.
\newblock \href {https://doi.org/10.1137/16M1104834}
  {\path{doi:10.1137/16M1104834}}.

\bibitem{DBLP:journals/talg/LokshtanovMMPPS21}
Daniel Lokshtanov, Pranabendu Misra, Joydeep Mukherjee, Fahad Panolan,
  Geevarghese Philip, and Saket Saurabh.
\newblock {2-Approximating Feedback Vertex Set in Tournaments}.
\newblock {\em {ACM} Trans. Algorithms}, 17(2):11:1--11:14, 2021.
\newblock \href {https://doi.org/10.1145/3446969} {\path{doi:10.1145/3446969}}.

\bibitem{DBLP:journals/siamdm/LokshtanovMSZ19}
Daniel Lokshtanov, Amer~E. Mouawad, Saket Saurabh, and Meirav Zehavi.
\newblock {Packing Cycles Faster Than Erd{\H{o}}s P{\'o}sa}.
\newblock {\em {SIAM} J. Discret. Math.}, 33(3):1194--1215, 2019.
\newblock \href {https://doi.org/10.1137/17M1150037}
  {\path{doi:10.1137/17M1150037}}.

\bibitem{DBLP:journals/talg/LokshtanovNRRS14}
Daniel Lokshtanov, N.~S. Narayanaswamy, Venkatesh Raman, M.~S. Ramanujan, and
  Saket Saurabh.
\newblock {Faster Parameterized Algorithms Using Linear Programming}.
\newblock {\em {ACM} Trans. Algorithms}, 11(2):15:1--15:31, 2014.
\newblock \href {https://doi.org/10.1145/2566616} {\path{doi:10.1145/2566616}}.

\bibitem{DBLP:journals/algorithmica/Marx10}
D{\'{a}}niel Marx.
\newblock {Chordal Deletion is Fixed-Parameter Tractable}.
\newblock {\em Algorithmica}, 57(4):747--768, 2010.
\newblock \href {https://doi.org/10.1007/s00453-008-9233-8}
  {\path{doi:10.1007/s00453-008-9233-8}}.

\bibitem{DBLP:conf/esa/Marx20}
D{\'{a}}niel Marx.
\newblock {Chordless Cycle Packing Is Fixed-Parameter Tractable}.
\newblock In {\em 28th Annual European Symposium on Algorithms, {ESA} 2020,
  September 7-9, 2020}, volume 173 of {\em LIPIcs}, pages 71:1--71:19. Schloss
  Dagstuhl - Leibniz-Zentrum f{\"{u}}r Informatik, 2020.
\newblock \href {https://doi.org/10.4230/LIPIcs.ESA.2020.71}
  {\path{doi:10.4230/LIPIcs.ESA.2020.71}}.

\bibitem{DBLP:conf/icalp/MarxM16}
D{\'{a}}niel Marx and Valia Mitsou.
\newblock {Double-Exponential and Triple-Exponential Bounds for Choosability
  Problems Parameterized by Treewidth}.
\newblock In Ioannis Chatzigiannakis, Michael Mitzenmacher, Yuval Rabani, and
  Davide Sangiorgi, editors, {\em 43rd International Colloquium on Automata,
  Languages, and Programming, {ICALP} 2016, July 11-15, 2016, Rome, Italy},
  volume~55 of {\em LIPIcs}, pages 28:1--28:15. Schloss Dagstuhl -
  Leibniz-Zentrum f{\"{u}}r Informatik, 2016.
\newblock \href {https://doi.org/10.4230/LIPIcs.ICALP.2016.28}
  {\path{doi:10.4230/LIPIcs.ICALP.2016.28}}.

\bibitem{DBLP:conf/icalp/MarxSS21}
D{\'{a}}niel Marx, Govind~S. Sankar, and Philipp Schepper.
\newblock {Degrees and Gaps: Tight Complexity Results of General Factor
  Problems Parameterized by Treewidth and Cutwidth}.
\newblock In {\em 48th International Colloquium on Automata, Languages, and
  Programming, {ICALP} 2021}, volume 198 of {\em LIPIcs}, pages 95:1--95:20.
  Schloss Dagstuhl - Leibniz-Zentrum f{\"{u}}r Informatik, 2021.
\newblock \href {https://doi.org/10.4230/LIPIcs.ICALP.2021.95}
  {\path{doi:10.4230/LIPIcs.ICALP.2021.95}}.

\bibitem{DBLP:conf/iwpec/MarxSS22}
D{\'{a}}niel Marx, Govind~S. Sankar, and Philipp Schepper.
\newblock {Anti-Factor Is {FPT} Parameterized by Treewidth and List Size (But
  Counting Is Hard)}.
\newblock In {\em 17th International Symposium on Parameterized and Exact
  Computation, {IPEC} 2022}, volume 249 of {\em LIPIcs}, pages 22:1--22:23.
  Schloss Dagstuhl - Leibniz-Zentrum f{\"{u}}r Informatik, 2022.
\newblock \href {https://doi.org/10.4230/LIPIcs.IPEC.2022.22}
  {\path{doi:10.4230/LIPIcs.IPEC.2022.22}}.

\bibitem{DBLP:conf/focs/MicaliV80}
Silvio Micali and Vijay~V. Vazirani.
\newblock An {$O(\sqrt{|V|} |E|)$} algorithm for finding maximum matching in
  general graphs.
\newblock In {\em 21st Annual Symposium on Foundations of Computer Science,
  Syracuse, New York, USA, 13-15 October 1980}, pages 17--27. {IEEE} Computer
  Society, 1980.
\newblock \href {https://doi.org/10.1109/SFCS.1980.12}
  {\path{doi:10.1109/SFCS.1980.12}}.

\bibitem{DBLP:journals/mst/MisraRRS13}
Pranabendu Misra, Venkatesh Raman, M.~S. Ramanujan, and Saket Saurabh.
\newblock {A Polynomial Kernel for Feedback Arc Set on Bipartite Tournaments}.
\newblock {\em Theory Comput. Syst.}, 53(4):609--620, 2013.
\newblock \href {https://doi.org/10.1007/s00224-013-9453-4}
  {\path{doi:10.1007/s00224-013-9453-4}}.

\bibitem{DBLP:conf/esa/OkrasaPR20}
Karolina Okrasa, Marta Piecyk, and Pawe\l{} Rz\k{a}\.zewski.
\newblock {Full Complexity Classification of the List Homomorphism Problem for
  Bounded-Treewidth Graphs}.
\newblock In {\em 28th Annual European Symposium on Algorithms, {ESA} 2020},
  volume 173 of {\em LIPIcs}, pages 74:1--74:24. Schloss Dagstuhl -
  Leibniz-Zentrum f{\"{u}}r Informatik, 2020.
\newblock \href {https://doi.org/10.4230/LIPIcs.ESA.2020.74}
  {\path{doi:10.4230/LIPIcs.ESA.2020.74}}.

\bibitem{DBLP:journals/siamcomp/OkrasaR21}
Karolina Okrasa and Pawe\l{} Rz\k{a}\.zewski.
\newblock {Fine-Grained Complexity of the Graph Homomorphism Problem for
  Bounded-Treewidth Graphs}.
\newblock {\em {SIAM} J. Comput.}, 50(2):487--508, 2021.
\newblock \href {https://doi.org/10.1137/20M1320146}
  {\path{doi:10.1137/20M1320146}}.

\bibitem{DBLP:conf/lics/PanV06}
Guoqiang Pan and Moshe~Y. Vardi.
\newblock {Fixed-Parameter Hierarchies inside {PSPACE}}.
\newblock In {\em 21th {IEEE} Symposium on Logic in Computer Science {(LICS}
  2006), 12-15 August 2006, Seattle, WA, USA, Proceedings}, pages 27--36.
  {IEEE} Computer Society, 2006.
\newblock \href {https://doi.org/10.1109/LICS.2006.25}
  {\path{doi:10.1109/LICS.2006.25}}.

\bibitem{DBLP:journals/jct/PontecorviW12}
M.~Pontecorvi and Paul Wollan.
\newblock {Disjoint cycles intersecting a set of vertices}.
\newblock {\em J. Comb. Theory, Ser. {B}}, 102(5):1134--1141, 2012.
\newblock \href {https://doi.org/10.1016/j.jctb.2012.05.004}
  {\path{doi:10.1016/j.jctb.2012.05.004}}.

\bibitem{DBLP:journals/tcs/RamanS06}
Venkatesh Raman and Saket Saurabh.
\newblock {Parameterized algorithms for feedback set problems and their duals
  in tournaments}.
\newblock {\em Theor. Comput. Sci.}, 351(3):446--458, 2006.
\newblock \href {https://doi.org/10.1016/j.tcs.2005.10.010}
  {\path{doi:10.1016/j.tcs.2005.10.010}}.

\bibitem{DBLP:journals/talg/RamanujanS17}
M.~S. Ramanujan and Saket Saurabh.
\newblock {Linear-Time Parameterized Algorithms via Skew-Symmetric Multicuts}.
\newblock {\em {ACM} Trans. Algorithms}, 13(4):46:1--46:25, 2017.
\newblock \href {https://doi.org/10.1145/3128600} {\path{doi:10.1145/3128600}}.

\bibitem{DBLP:journals/combinatorica/Reed99}
Bruce~A. Reed.
\newblock {Mangoes and Blueberries}.
\newblock {\em Combinatorica}, 19(2):267--296, 1999.
\newblock \href {https://doi.org/10.1007/s004930050056}
  {\path{doi:10.1007/s004930050056}}.

\bibitem{DBLP:journals/combinatorica/ReedRST96}
Bruce~A. Reed, Neil Robertson, Paul~D. Seymour, and Robin Thomas.
\newblock {Packing Directed Circuits}.
\newblock {\em Combinatorica}, 16(4):535--554, 1996.
\newblock \href {https://doi.org/10.1007/BF01271272}
  {\path{doi:10.1007/BF01271272}}.

\bibitem{DBLP:journals/orl/ReedSV04}
Bruce~A. Reed, Kaleigh Smith, and Adrian Vetta.
\newblock {Finding odd cycle transversals}.
\newblock {\em Oper. Res. Lett.}, 32(4):299--301, 2004.
\newblock \href {https://doi.org/10.1016/j.orl.2003.10.009}
  {\path{doi:10.1016/j.orl.2003.10.009}}.

\bibitem{sigact-column-37}
Marcus Schaefer and Christopher Umans.
\newblock {SIGACT} news complexity theory column 37. {Guest} column:
  {Completeness} in the polynomial-time hierarchy: {Part I}: {A} compendium.
\newblock {\em {SIGACT} News}, 33(3):32--49, 2002.
\newblock \href {https://doi.org/10.1145/582475.582484}
  {\path{doi:10.1145/582475.582484}}.

\bibitem{polynomialhierarchy}
Larry~J. Stockmeyer.
\newblock {The Polynomial-Time Hierarchy}.
\newblock {\em Theor. Comput. Sci.}, 3(1):1--22, 1976.
\newblock \href {https://doi.org/10.1016/0304-3975(76)90061-X}
  {\path{doi:10.1016/0304-3975(76)90061-X}}.

\bibitem{Toda1991}
S.~Toda.
\newblock {PP} is as hard as the polynomial-time hierarchy.
\newblock {\em SIAM J. Comput.}, 20(5):865--877, 1991.

\bibitem{Umans99}
Christopher Umans.
\newblock {Hardness of Approximating Sigma2P Minimization Problems}.
\newblock In {\em 40th Annual Symposium on Foundations of Computer Science,
  {FOCS} 99}, pages 465--474. {IEEE} Computer Society, 1999.
\newblock \href {https://doi.org/10.1109/SFFCS.1999.814619}
  {\path{doi:10.1109/SFFCS.1999.814619}}.

\bibitem{DBLP:journals/corr/abs-2306-03605}
Magnus Wahlstr{\"o}m.
\newblock Representative set statements for delta-matroids and the mader
  delta-matroid.
\newblock In {\em Proceedings of the 2024 Annual ACM-SIAM Symposium on Discrete
  Algorithms (SODA)}, pages 780--810. SIAM, 2024.

\bibitem{DBLP:journals/algorithmica/XiaoG15}
Mingyu Xiao and Jiong Guo.
\newblock {A Quadratic Vertex Kernel for Feedback Arc Set in Bipartite
  Tournaments}.
\newblock {\em Algorithmica}, 71(1):87--97, 2015.
\newblock \href {https://doi.org/10.1007/s00453-013-9783-2}
  {\path{doi:10.1007/s00453-013-9783-2}}.

\end{thebibliography}

\clearpage
\tableofcontents
\label{toc}
\end{document}